\ifdefined\synctex\synctex=1\fi
\RequirePackage{fix-cm}
\documentclass[11pt,a4paper]{amsart}
\usepackage[foot]{amsaddr}
\usepackage[T1]{fontenc}

\usepackage{ifxetex}
\ifxetex
  \usepackage[no-math]{fontspec}
\fi
\usepackage{amsmath}
\usepackage{amsfonts}
\usepackage{amssymb}
\usepackage{amsthm}
\usepackage{enumitem}
\usepackage{fullpage}
\usepackage{microtype}
\usepackage{comment}
\RequirePackage{silence}
\ifxetex
  \usepackage[libertine]{newtxmath}
\else
  \usepackage{newtxmath}
\fi
\usepackage[tt=false]{libertine}
\usepackage{xcolor}
\usepackage{tikz}
\usetikzlibrary{arrows.meta,positioning,fit,calc}
\usepackage{hyperref}
\hypersetup{colorlinks=true,citecolor=blue, linkcolor=blue, urlcolor=blue,
  pdftitle={Zero-Freeness for Random Regular Graphs up to the Replica Symmetry Breaking Threshold}}
\usepackage[numbers]{natbib}
\usepackage{xifthen}
\usepackage{prettyref}
\makeatletter
\renewcommand\paragraph{\@startsection{paragraph}{4}%
 \z@{-\medskipamount}{-\fontdimen2\font}{\normalfont\bfseries}}
\makeatother

\renewcommand{\epsilon}{\varepsilon}

\newtheorem{theorem}{Theorem}[section]
\newtheorem{lemma}[theorem]{Lemma}
\newtheorem{proposition}[theorem]{Proposition}
\newtheorem{corollary}[theorem]{Corollary}
\newtheorem{fact}[theorem]{Fact}
\theoremstyle{definition}
\newtheorem{definition}[theorem]{Definition}
\newtheorem{condition}{Condition}
\theoremstyle{remark}
\newtheorem{remark}[theorem]{Remark}
\theoremstyle{plain}

\newrefformat{thm}{Theorem~\ref{#1}}
\newrefformat{lem}{Lemma~\ref{#1}}
\newrefformat{fact}{Fact~\ref{#1}}
\newrefformat{prop}{Proposition~\ref{#1}}
\newrefformat{cor}{Corollary~\ref{#1}}
\newrefformat{rem}{Remark~\ref{#1}}
\newrefformat{def}{Definition~\ref{#1}}
\newrefformat{cond}{Condition~\ref{#1}}
\newrefformat{sec}{Section~\ref{#1}}
\newrefformat{subsec}{Section~\ref{#1}}
\newrefformat{fig}{Figure~\ref{#1}}
\newrefformat{app}{Appendix~\ref{#1}}

\let\pref\prettyref
\newcommand{\Cref}[1]{\pref{#1}}

\newcommand{\dif}{\,\mathrm{d}}

\newcommand{\NP}{\ensuremath{\mathbf{NP}}}

\newcommand{\RP}{\ensuremath{\mathbf{RP}}}
\newcommand{\one}[1]{\mathbf{1}_{\{#1\}}}
\newcommand{\allone}{\mathbf{1}}
\newcommand{\norm}[1]{\left\lVert #1 \right\rVert}
\newcommand{\trans}{\intercal}
\newcommand{\diag}{\mathrm{diag}}
\newcommand{\e}{\mathrm{e}}

\newcommand{\DTV}[2]{\norm{#1-#2}_{\mathrm{TV}}}
\newcommand{\DKL}{\mathrm{D}_{\mathrm{KL}}}
\newcommand{\tp}[1]{\left(#1\right)}
\newcommand{\ol}{\overline}
\newcommand{\abs}[1]{\left\vert#1\right\vert}
\newcommand{\defeq}{\triangleq}

\newcommand{\eps}{\varepsilon}
\def\*#1{\boldsymbol{#1}}
\def\+#1{\mathcal{#1}}
\def\=#1{\mathbb{#1}}

\def\oPr{\mathop{\mathrm{Pr}}}
\renewcommand{\Pr}[2][]{ \ifthenelse{\isempty{#1}}
  {\oPr\left[#2\right]}
  {\oPr_{#1}\left[#2\right]} } % Use \Pr[a]{b} for \mathrm{Pr}_a[b], \Pr{b} for \mathrm{Pr}[b]

\def\oE{\mathop{\=E}}
\newcommand{\E}[2][]{ \ifthenelse{\isempty{#1}}
  {\oE\left[#2\right]}
  {\oE_{#1}\left[#2\right]} }

\newbool{doubleblind}
\setbool{doubleblind}{false}

\title[Zero-Freeness for Random Regular Graphs]{Zero-Freeness for Random Regular Graphs\\ up to the Replica Symmetry Breaking Threshold}

\ifdoubleblind
\author{Author(s)}
\else
\author{Zejia Chen, Chunyang Wang, Xinyuan Zhang}

\address[Zejia Chen, Xinyuan Zhang]
{School of Computer Science, Georgia Institute of Technology, Atlanta, GA, USA.
\textnormal{Emails:
\texttt{zchen3091@gatech.edu,normalvra@gmail.com}}}

\address[Chunyang Wang]
{National Institute of Informatics, Tokyo, Japan.
\textnormal{Email: \texttt{c\_wang@nii.ac.jp}}}

\fi

\begin{document}
\raggedbottom

\begin{abstract}
\vspace*{-5pt}
We study antiferromagnetic spin systems on random regular graphs through the lens of the phase transitions characterized by the complex zeros of the partition function.
Our motivating examples include the hardcore model and the antiferromagnetic Potts model (including proper $q$-colorings at {interaction parameter} $\beta=0$).
For both models on random $d$-regular graphs, we establish, with high probability over the graph, zero-free regions throughout the interior of the replica-symmetric regime. Asymptotically, this regime extends up to tree occupation density
\[
  \alpha < \alpha_{\mathrm{RSB}}(d)=\frac{2\log d-2\log\log d+2-2\log 2+o_d(1)}{d}
\]
for the hardcore model and up to degree
\[
  d < d_{\mathrm{RSB}}(q)=(2q-1)\log q-2\log2+o_q(1)
\]
for proper $q$-colorings.

Leveraging the standard algorithmic implications of zero-freeness, we obtain deterministic polynomial-time
algorithms that multiplicatively approximate the corresponding partition functions at parameters throughout this regime, with high probability over the graph.
Our algorithmic guarantees significantly
improve upon prior work and extend beyond the parameter ranges reached by previous search algorithms.

As a key intermediate step, we chart the replica-symmetric phase on random $d$-regular graphs by establishing the pointwise equivalence of three notions of replica symmetry away from the threshold:
\begin{enumerate}
    \item[(1)] \emph{Geometric replica symmetry}: averaged pairwise correlation decay under the Gibbs measure;
    \item[(2)] \emph{Thermodynamic replica symmetry}: convergence of the free-energy density to the Bethe prediction;
    \item[(3)] \emph{Variational replica symmetry}: nonpositivity of the Bethe functional.
\end{enumerate}

To establish zero-freeness across this regime, we combine this variational characterization with two technical ingredients. First, we develop \emph{reinforced small subgraph conditioning}, which resolves short-cycle fluctuations in the complex plane by lifting classical small subgraph conditioning on random regular graphs to random holomorphic functions. Second, to overcome the exponential blowup of the bare second moment inside the replica-symmetric phase, we use variational replica symmetry to upgrade planted overlap concentration to an \emph{exponentially small} bound, enabling a typical-graph truncation that controls complex second moments up to the replica symmetry breaking threshold.
\end{abstract}

\hypersetup{pageanchor=false}
\maketitle
\vspace*{-35pt}
\setcounter{tocdepth}{1}
\tableofcontents
\thispagestyle{empty}
\clearpage
\setcounter{page}{1}
\hypersetup{pageanchor=true}

% BEGIN INCLUDED FILE: chapters/01-introduction-v2.tex
\section{Introduction}
Diluted mean-field models are spin systems whose geometry of interactions is induced by a sparse random graph or
hypergraph. These models are extensively studied in statistical physics~\cite{monasson1998otpimization,mezard2001bethe,mezard2003cavity} in order to better understand disordered physical systems such as glasses or spin glasses. Unlike models on the complete graph, such as the Curie--Weiss or the Sherrington--Kirkpatrick model, they retain finite local connectivity~\cite{mezard1987spin,mezard2009information}. Moreover, the study of such models also plays a fundamental role in many other areas, including computer science through random $k$-SAT~\cite{mezard2002analytic,achlioptas2005rigorous,florent2007gibbs}, statistics through the stochastic block model~\cite{decelle2011asymptotic,Zdeborova2016statistical}, coding theory through low-density parity check codes~\cite{montanari2001glassy,richardson2008modern}, and probabilistic combinatorics, which studies random graphs themselves~\cite{dembo2010gibbs,dembo2013factor}.

\paragraph{Phase transitions and replica symmetry.}
A central focus of statistical physics research on diluted mean-field models is the study of \emph{phase transitions}, namely qualitative shifts in a system's macroscopic behavior as its parameters vary in the thermodynamic limit.
Conventionally, a (thermodynamic) phase transition occurs at a parameter where the limiting free energy fails to be real analytic~\cite{mezard2009information}; Ehrenfest's classical classification distinguishes transitions by the order of a singular derivative~\cite{ehrenfest1933phasenumwandlungen}.

To analyze the phase transitions of spin systems, statistical physicists developed the powerful yet non-rigorous ``cavity method''~\cite{mezard1986SK,mezard2001bethe}. This technique has been extensively applied to the aforementioned models and numerous others, driving an ongoing line of research that has generated profound predictions of phase transitions across a wide array of computational and physical problems~\cite{mezard2002analytic,florent2007gibbs,Zdeborova2016statistical}. There has also been substantial research effort devoted to putting the cavity method on a rigorous foundation, with successes including~\cite{dembo2013factor,bapst2014condensation,dembo2014replica,coja2014asymptotic,coja2017information,coja2019spin,ding2022satisfiability,gheissari2025fixed,jenssen2026replica}.

One of the most important predictions of the cavity method is that the Gibbs measures induced by such sparse random graph models undergo a \emph{replica symmetry breaking (RSB)} transition~\cite{mezard2001bethe,florent2007gibbs}. Given a random graph $G$ on $[n]=\{1,2,\dots,n\}$ and its Gibbs measure $\mu_G$, \emph{replica symmetry} is characterized by the absence of long-range correlations, such that two uniformly chosen vertices become asymptotically independent under the Gibbs measure of the sampled graph~\cite{coja2016belief}:

\begin{definition}[Replica symmetry]\label{def:replica-symmetry}
At a fixed model parameter and for a sequence of random graphs $G=G_n$ on $[n]$, we say that the model (described by $\mu_G$) is \emph{replica symmetric (RS)} if
    \begin{equation}\label{eq:def-replica-symmetry}
        \lim\limits_{n\to \infty}\frac{1}{n^2}\sum\limits_{1\leq u<v\leq n}\E[G]{\DTV{\mu_{G,u,v}}{\mu_{G,u}\otimes \mu_{G,v}}}=0,
    \end{equation}
    where $\mu_{G,u}$ and $\mu_{G,u,v}$ denote the marginal distributions of $\sigma_u$ and $(\sigma_u,\sigma_v)$ under $\sigma\sim\mu_G$, and $\DTV{\cdot}{\cdot}$ denotes total variation distance (\Cref{subsec:preliminaries}). For proper colorings, we define $\mu_G$ arbitrarily on uncolorable graphs; geometric replica symmetry additionally requires colorability with high probability.
\end{definition}

% As the model parameters vary, the cavity method predicts a sharp transition from a replica-symmetric regime to one in which \eqref{eq:def-replica-symmetry} fails, known as replica symmetry breaking.
Replica symmetry has deep connections with fundamental phase transitions in statistical physics. In the one-step-RSB (one-step replica symmetry breaking, or condensation) scenario predicted by the cavity method~\cite{mezard2001bethe,florent2007gibbs}, the breakdown of replica symmetry is predicted to coincide with the \emph{condensation phase transition}, beyond which the Gibbs measure concentrates on a sub-exponential number of dominant clusters with macroscopic weights; {key variational free-energy and overlap-characterization aspects of this transition have been rigorously established in a broad class of diluted spin systems~\cite{bapst2014condensation,coja2017information,coja2018charting,coja2019spin,coja2020inference}}. In statistical physics, this is also predicted to correspond to the \emph{Kauzmann transition}~\cite{kauzmann1948nature,florent2007gibbs} in mean-field glass theory, characterizing the threshold where the equilibrium configurational entropy density vanishes. Within this scenario, this phase transition also precisely delineates the regime in which the limiting free energy is governed by the replica-symmetric Bethe formula; beyond this critical point, the RS description breaks down~\cite{coja2016belief,coja2018charting,coja2020inference}.

Intuitively, within the replica-symmetric phase, the absence of long-range correlations supports a relatively simple local description of the equilibrium behavior of an exponentially large configuration space. This description underlies cavity predictions and analyses of local message-passing algorithms~\cite{coja2017information,coja2018charting,coja2022belief}.

%We use \emph{free energy} for $n^{-1}\log Z$. For $Z>0$, its quenched
%and annealed versions are $n^{-1}\E{\log Z}$ and
%$n^{-1}\log\E{Z}$, respectively. The homogeneous tree measure predicts
%their common limit in the replica-symmetric region. Local convergence
%to a tree alone does not establish this prediction.

%For hardcore at large degree and colorings with many colors, the cavity
%method predicts distinct clustering and condensation transitions
%\cite{barbier2013hardcore,zdeborova2007phase}. At clustering, most Gibbs
%weight lies in many groups separated by large Hamming distances. At
%condensation, a subexponential number of these groups carries most of
%the Gibbs weight. In these fixed-degree models, the prediction is that,
%beyond condensation, the typical
%normalized log partition function is strictly smaller than its limiting
%first-moment rate, when both limits exist
%\cite{barbier2013hardcore,zdeborova2007phase}.
%These boundaries of the replica-symmetric free-energy formula are not in
%general the clustering threshold or the local Kesten--Stigum stability
%threshold. The latter tests the growth of the sum of squared
%correlations on the tree.

\paragraph{Lee--Yang theory of phase transitions.}
In their seminal work~\cite{lee1952statistical,lee1952lattice}, Lee and Yang pioneered the study of phase transitions through the lens of complex partition function zeros. Their framework explicitly connects thermodynamic singularities to the asymptotic distribution of these zeros. In particular, establishing uniform zero-free regions under standard thermodynamic-limit assumptions guarantees the complex analyticity of the limiting free-energy density, thereby ensuring real analyticity on the physical axis and the absence of phase transitions.  While the original Lee--Yang theory was motivated by analyzing spontaneous magnetization of the Ising model, it has been subsequently developed~\cite{asano1970lee,ruelle1971extension} and extended for studying ferromagnets with single-spin measures satisfying the Lee--Yang property~\cite{newman1974zeros,lieb1981general}, monomer-dimer systems~\cite{heilmann1972theory}, weighted subgraphs~\cite{wagner2009weighted}, and recently for mean-field spin glasses with Gaussian interactions~\cite{bencs2026zeros}.

Apart from providing evidence for the absence of phase transitions, zero-free regions also have direct probabilistic and algorithmic consequences. One particular application includes deterministic approximate counting algorithms through Barvinok's interpolation method~\cite{barvinok2016combinatorics,patel2017deterministic,liu2025correlation,bencs2026zeros}, thereby offering a bridge connecting phase transitions in statistical physics with \emph{computational phase transitions}. Other applications include establishing concentration bounds~\cite{pemantle2014concentration,kyng2018matrix}, central limit theorems~\cite{lebowitz2016central, vishesh2022approximate,marcus2024central}, and asymptotic normality~\cite{kahn2000normal}. Real stability and the resulting negative dependence have also led to improved approximation algorithms~\cite{shayan2011randomized,karlin2021improved}. Under suitable self-reducibility and interpolation assumptions, zero-freeness has been connected to an asymptotic form of strong spatial mixing~\cite{gamarnik2023correlation}. Under additional graph-family and pinning hypotheses, it can imply the usual uniform strong spatial mixing property~\cite{regts2023absence}. Related multivariate stability and independence properties have also been used to prove rapid mixing of Markov chains~\cite{alimohammadi2021fraction,chen2022spectral,vishesh2023optimal,aiya2025fast}. Our univariate high-probability zero-free regions do not by themselves supply these additional hypotheses.

\paragraph{Challenges in understanding diluted mean-field spin systems.}
Establishing rigorous zero-free regions remains a challenge for disordered mean-field spin systems. Very recently, Bencs et al.~\cite{bencs2026zeros} made significant progress by establishing zero-free regions in the complex inverse-temperature plane for finite Gaussian mixtures of $p$-spin interactions with $p\ge2$, throughout the entire second-moment regime. Notably, their result implies zero-free regions for the zero-field Sherrington--Kirkpatrick model throughout the interior of the replica-symmetric phase.
For diluted mean-field spin systems, related investigations include an exactly solvable infinite-coupling limit~\cite{laumann2008griffiths} and numerical cavity calculations on the interior of a Cayley tree~\cite{matsuda2010distribution}, yet these do not establish a fixed zero-free neighborhood throughout the replica-symmetric phase of the random regular models considered here.

A similar gap between physical predictions and rigorous results persists on the algorithmic side: known efficient algorithms for approximate counting and sampling cover parameter ranges well short of the predicted thresholds.
For example, consider the hardcore model on random $d$-regular graphs with fugacity $\lambda$. On the infinite $d$-regular tree, the \emph{tree uniqueness threshold} $\lambda_c(d)=(d-1)^{d-1}/(d-2)^d=\Theta(1/d)$ marks the onset of multiple Gibbs measures (worst-case influence of worst-case boundary conditions), whereas the \emph{reconstruction (extremality) threshold} $\lambda_{\mathrm{rec}}(d)$ marks the onset of non-vanishing mutual information between the root and typical distant boundary configurations~\cite{mezard2006reconstruction,Bhatnagar2016decay}. Rigorous bounds for $\lambda_{\mathrm{rec}}(d)$~\cite{Bhatnagar2016decay} and the cavity prediction for condensation (replica symmetry breaking)~\cite{barbier2013hardcore} give the respective asymptotics:
\[
\lambda_{\mathrm{rec}}(d) = (\log d)^{2+{o_d(1)}}, \quad \lambda_{\mathrm{cond}}(d) = \left(\frac{\e^2}{2}+{o_d(1)}\right)\frac{d}{\log d}.
\]
Yet, it was only recently that Chen et al.~\cite{chen2025rapidrandom} achieved algorithmic progress beyond the tree uniqueness threshold $\lambda_c(d)$ for sufficiently large $d$, demonstrating rapid mixing of the Glauber dynamics for fugacities $\lambda < 1/(2\sqrt{d-1}-1)$---a bound that remains drastically lower than either $\lambda_{\mathrm{rec}}(d)$ or $\lambda_{\mathrm{cond}}(d)$.

The dual lack of analytical and algorithmic understanding above motivates the following central question, answering which promises twofold benefits:

\begin{center}
    \emph{Can we establish zero-free regions for random regular graphs within the
replica-symmetric phase?}
\end{center}

\subsection{Our Results}\label{subsec:our-results}
In this work, we answer the above question affirmatively by establishing zero-free regions for two important antiferromagnetic spin systems on random regular graphs: the hardcore model and the antiferromagnetic Potts model, throughout the interior of the replica-symmetric regime, thereby greatly advancing our understanding of phase transitions for such models. Throughout, for integers $d \geq 3$ and $n \ge d+1$ with $d n$ even (which we call \emph{admissible} $n$), let $\+G_{n,d}$ denote the uniform distribution over all simple $d$-regular graphs on $[n]=\{1,\dots,n\}$.

Through Barvinok's interpolation method~\cite{barvinok2016combinatorics,patel2017deterministic}, we directly obtain polynomial-time deterministic approximation algorithms for the partition function of each model within the replica-symmetric regime. Our results substantially extend the previous algorithmic thresholds, and provide polynomial-time deterministic algorithms for random regular graphs in regimes beyond worst-case results.
% A somewhat surprising finding is that our regimes work even beyond where efficient local search algorithms can exist.
Notably, our algorithms apply even in regimes beyond the scope of currently known algorithms for finding a (typical) configuration.

\subsubsection{Zero-freeness and approximate counting}
\paragraph{Hardcore model.}

The hardcore model is a fundamental model in statistical mechanics for analyzing the equilibrium of lattice gases, serving as a canonical discrete analogue for the behavior of hard spheres. In computer science, the partition function of the hardcore model corresponds exactly to the weighted sum of independent sets in a graph~\cite{domb1958some,burley1960lattice}. Given a graph $G=(V,E)$, let $\+I(G)$ be its collection of independent sets (which we identify with indicator configurations $\sigma=(\one{v\in I})_{v\in V}\in\{0,1\}^V$ satisfying $\sigma_u\sigma_v=0$ for all $\{u,v\}\in E$). Given a fugacity parameter $\lambda>0$, the hardcore \emph{Gibbs distribution} $\mu=\mu_{G,\mathrm{fug},\lambda}$ over $\+I(G)$ and \emph{partition function} $Z={Z_{G,\mathrm{fug}}(\lambda)}$ are given by
\[
 \mu(I)\defeq \frac{\lambda^{|I|}}{Z}\quad (I\in \+I(G)),\qquad
 Z={Z_{G,\mathrm{fug}}(\lambda)}\defeq \sum_{I\in \+I(G)}\lambda^{|I|}.
\]

{We parameterize the hardcore model by the tree occupation density $\alpha\in(0,1/2)$, defined as the occupation probability at the root under the translation-invariant Markov Gibbs measure on the infinite $d$-regular tree. In the interior of the replica-symmetric regime, the expected occupied fraction on a random $d$-regular graph converges to this parameter. The fugacity is related to $\alpha$ by~\cite{Bhatnagar2016decay}}
\[
  \lambda(\alpha)\defeq \frac{\alpha(1-\alpha)^{d-1}}{(1-2\alpha)^d}.
\]
From now on, we write $\mu_{G,\alpha}\defeq\mu_{G,\mathrm{fug},\lambda(\alpha)}$ and $Z_{G}(\alpha){\defeq Z_{G,\mathrm{fug}}(\lambda(\alpha))}$ for the Gibbs distribution and the hardcore partition function at fugacity $\lambda(\alpha)$, respectively.

For the hardcore model on $G\sim\+G_{n,d}$ with any fixed integer $d\ge3$, there exists a threshold $\alpha_{\mathrm{RSB}}(d)\in(0,1/2)$ separating the replica-symmetric regime $\alpha<\alpha_{\mathrm{RSB}}(d)$ from the replica-symmetry-broken regime $\alpha>\alpha_{\mathrm{RSB}}(d)$ (established in \Cref{thm:hardcore-rs-transition} informally and \Cref{thm:hardcore-three-characterizations} formally). Furthermore, $\alpha_{\mathrm{RSB}}(d)$ has the following asymptotic expansion, in agreement with the cavity method prediction~\cite{barbier2013hardcore}:
\begin{equation}\label{eq:replica-symmetry-breaking-hardcore-asymptotic}
  \alpha_{\mathrm{RSB}}(d)=\frac{2\log d-2\log\log d+2-2\log 2+o_d(1)}{d}.
\end{equation}

 For a subset $S\subseteq\=C$, let $D(S,\eta)$ denote the open $\eta$-neighborhood of $S$ in $\=C$. We are now ready to state our first main theorem, which is a zero-free strip for the hardcore model on random regular graphs up to the replica symmetry breaking threshold.
% Our zero-free region for the hardcore model is established for $Z_{G}(\alpha) \defeq Z_G(\lambda(\alpha))$ as a function of $\alpha$.
% % (therefore also as the function of fugacity $\lambda$ by \eqref{eq:hardcore-density-parameter}).
% Complex zeros with respect to the fugacity parameter are often referred to as \emph{Lee-Yang zeros}. Here and below, $D(S,r)=\{z:\operatorname{dist}(z,S)<r\}$ for a nonempty
% set $S\subset\=C$ and $r>0$, and $D(t,r)=D(\{t\},r)$.

\begin{theorem}[Zero-free region for the hardcore model]\label{thm:hardcore-zero}
  Fix an integer $d\ge3$ and let $0<\delta<\alpha_{\mathrm{RSB}}(d)$. There exists a constant
  {$\eta=\eta(d,\delta)>0$} such that with probability $1-o_n(1)$ over $G\sim\+G_{n,d}$, the partition function {$Z_G$} is zero-free on {$D([0,\alpha_{\mathrm{RSB}}(d)-\delta],\eta)$}.
\end{theorem}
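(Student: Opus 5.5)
Fix a compact set $K=[0,\alpha_{\mathrm{RSB}}(d)-\delta]$ and a small complex neighborhood $U=D(K,\eta_0)$ on which $\lambda(\alpha)$ is holomorphic (we stay away from $\alpha=1/2$). The plan is to show that the random holomorphic function $f_n(z)\defeq Z_G(z)/\E{Z_G(z)}$, suitably corrected for short cycles, is close to a deterministic zero-free holomorphic function uniformly on $U$ with high probability. Hurwitz/Rouché then forces $Z_G$ to be zero-free on a slightly smaller neighborhood $D(K,\eta)$.

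Here $\E{Z_G(z)}$ is the annealed complex first moment, computed by the configuration model. At real $\alpha$ it equals $\exp(n\Phi_{\mathrm{Bethe}}(\alpha)+O(1))$ by a Laplace-method computation whose saddle point is the tree density $\alpha$. Since the saddle is nondegenerate, this extends holomorphically to $U$ with an explicit nonvanishing prefactor.

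The core is a complex second-moment argument combined with small subgraph conditioning.

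\textbf{Step 1 (short-cycle limit).} Let $X_k$ be the number of $k$-cycles, asymptotically independent Poisson$(\frac{(d-1)^k}{2k})$. Compute the planted cycle biases $\E{Z_GX_k}/(\E{Z_G}\E{X_k})\to 1+\delta_k(z)$, where $\delta_k(z)$ is a power of the nontrivial eigenvalue of the (complexified) tree transfer matrix. The candidate limit is
\[
W(z)=\prod_k (1+\delta_k(z))^{X_k}\e^{-\lambda_k\delta_k(z)}.
\]
For $z$ near real $\alpha$ in the RS regime we have $|\delta_k(z)|\le (1-\epsilon)^k/(d-1)^{k/2}$, which gives convergence of the product and nonvanishing of $W$.

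\textbf{Step 2 (bound the variance).} Show that $\E{|f_n(z)-W_n(z)|^2}\to0$ uniformly on $U$, where $W_n$ is the truncated product. This amounts to the complex second moment $\E{|Z_G(z)|^2}/|\E{Z_G(z)}|^2\to\exp(\sum_k\lambda_k|\delta_k|^2)$, which is the "reinforced small subgraph conditioning". Holomorphy then upgrades the pointwise $L^2$ convergence to uniform convergence on compacts via the Cauchy integral formula and a mean-value inequality for $|f_n-W_n|^2$. Since $W$ is a.s. zero-free and bounded away from $0$ on $U$ with probability $1-\epsilon'$, Rouché concludes.

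\textbf{Step 3 (main obstacle).} The bare second moment $\E{|Z|^2}$ is exponentially larger than $|\E{Z}|^2$ well inside the RS regime. Its overlap functional is maximized away from the independent overlap before $\alpha_{\mathrm{RSB}}$, at roughly the Kesten--Stigum threshold. To handle this, truncate to typical graphs. Define $\tilde Z_G=Z_G\cdot\mathbf 1\{G\in\mathcal T\}$, where $\mathcal T$ is the set of graphs on which the real Gibbs measure has planted overlap concentration. More precisely, $\mathcal T$ consists of graphs for which the mass of configuration pairs with overlap off the product value is exponentially small, $\le \e^{-cn}\,Z_G(\Re z)^2$.

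Variational RS, i.e. nonpositivity of the Bethe functional (assumed from the paper's characterization, \Cref{thm:hardcore-three-characterizations}), together with the planted model and contiguity should give $\Pr{G\notin\mathcal T}\le \e^{-c'n}$ in the planted measure. This exponential smallness is exactly what is needed to kill the exponential blowup of the bare second moment. On $\mathcal T$, bound $|Z_G(z)|^2$ by a sum over pairs $(\sigma,\tau)$ of $|\lambda(z)|^{|\sigma|+|\tau|}$. Split pairs by overlap: the near-product overlap window is treated by Laplace's method with complex phase, and the rest by the truncation.

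Finally, check that removing $\mathcal T^c$ changes the first moment and the cycle biases only by $o(1)$, then run Steps 1–2 with $\tilde Z$. Since $\tilde Z_G=Z_G$ whp, the conclusion transfers to $Z_G$.
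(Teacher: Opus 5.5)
Your architecture matches the paper's: a holomorphic lift of small subgraph conditioning, complex Laplace asymptotics on a near-product overlap window, and a typical-graph truncation to defeat the exponential blowup of the bare second moment. Your $L^2$-approximation-plus-Rouch\'e version of Step~2 is a workable substitute for the paper's tightness-plus-identity-theorem argument. It needs the sharp complex second moment and complex mixed cycle moments; the latter can be obtained from the real line by Vitali's theorem.

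\textbf{The main gap.} The step that carries the whole proof is only asserted: that variational replica symmetry ``together with the planted model and contiguity'' yields planted overlap concentration at rate $e^{-c'n}$. Contiguity and small subgraph conditioning transfer events of vanishing probability and carry no rate. The standard pinning argument gives only polynomial rates. Yet exponential smallness is exactly what must absorb the $e^{O(n|z-x|)}$ cost of leaving the real axis.

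Nonpositivity of $\mathcal B$ is also insufficient. You need $\delta_{\boldsymbol\pi}$ to be the \emph{unique} maximizer, and you need strict Kesten--Stigum stability, which your bound on $\delta_k$ presupposes. The paper extracts both from the slack $\delta$: strict monotonicity of $\mathcal B$ in $\alpha$ along a natural coupling of trial laws, and a fourth-order expansion of $\mathcal B$ at symmetric two-point laws.

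The mechanism you are missing is quantitative random pinning. Reveal a Bernoulli$(s)$ set $U$ of planted spins and consider the pinned log-likelihood $\ell_s=\log\mu_G(\sigma_U)-\sum_{v\in U}\log\pi_{\sigma_v}$. One shows two things:
\begin{enumerate}
\item $\limsup_n n^{-1}\mathbb E_{\mathbb Q_n}[\ell_s]\le\max_{\nu}\{\mathcal B_{d,\alpha}(\nu)+ds\,\mathbb E_\nu[\sum_a\pi_a((M\boldsymbol\mu)_a-1)^2]\}=o(s)$. This is a cavity interpolation over partial pairings, and it is where uniqueness of the maximizer is used.
\item $\ell_s$ concentrates exponentially, by edit-Lipschitz concentration under the planted profile law.
\end{enumerate}
Then the revealed spin pairs of two replicas have density at most $e^{bn/2}$ with respect to $(\boldsymbol\pi\otimes\boldsymbol\pi)^{\otimes U}$ outside an exponentially rare event. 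Bernstein's inequality, applied both to that product measure and to the random choice of $U$, gives $\mathbb E_{\mathbb Q_n}[\mu_G^{\otimes2}(\text{atypical overlap})]\le e^{-cn}$.

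\textbf{The truncation as formulated.} First, $\mathcal T$ is defined through $Z_G(\Re z)$, so $Z_G(z)\mathbf 1\{G\in\mathcal T\}$ is not holomorphic and Steps~1--2 cannot be run on it. The event must be fixed at a single real base point $x$. Termwise comparison only gives $|w_z(\sigma)|\le w_x(\sigma)e^{O(n|z-x|)}$, so the exponential gain survives only on a disk of radius $O(c)$ about $x$. You must therefore cover $K$ by finitely many such disks, each with its own truncation, and take a union bound, rather than work on a single neighborhood $U$.

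Second, your overlap-defined event does control the off-window pairs: on $\mathcal T$ one has $\mathcal Z_G^2\le\mathcal Z^{(2)}_{G,B}/(1-e^{-cn})$, and Markov's inequality makes $\mathbb Q_n(\mathcal T^c)$ exponentially small. But the complex second moment also requires $\mathbb E[\mathcal Z^{(2)}_{G,B}(x,x)\mathbf 1\{G\notin\mathcal T\}]\le e^{-cn}a_n(x)^2$, i.e., removing $\mathcal T^c$ must not deplete the near-product window. You only check the first moment and the cycle biases. This expectation equals $a_n^2\,\mathbb E_{\mathbb Q_n}[(\mathcal Z_G/a_n)\,\mu_G^{\otimes2}(B)\,\mathbf 1\{G\notin\mathcal T\}]$, so it needs control of $\mathcal Z_G/a_n$ on $\mathcal T^c$, which the single-replica overlap bound does not provide.

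The paper sidesteps this by truncating with an upper cutoff $E_n=\{\mathcal Z_G(x)\le a_n(x)e^{\varepsilon n}\}$, with $\varepsilon$ half the overlap exponent. Writing $\eta_n$ for the planted atypical-overlap mass:
\begin{enumerate}
\item the off-window error is at most $e^{\varepsilon n}\eta_n$;
\item $\mathbb Q_n(E_n^c)\le\eta_n+O(e^{-\varepsilon n})$;
\item the window loss is exponentially small by Paley--Zygmund and Lipschitz concentration of $\log\mathcal Z_G$ under the paired planted law.
\end{enumerate}

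\textbf{Smaller points.} $\alpha=0$ lies in $K$, but the overlap and Laplace machinery degenerates there ($\boldsymbol\pi_0=(1,0)$). That endpoint needs a separate argument, such as Shearer's bound. Also, contrary to your aside, strict Kesten--Stigum stability holds throughout the replica-symmetric regime. The bare second moment blows up because of distant maxima of the paired rate function, not a local instability at the Kesten--Stigum threshold.
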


We remark that the parameter range in \Cref{thm:hardcore-zero} is optimal in the sense that the zero-free region cannot be extended across $\alpha_{\mathrm{RSB}}(d)$ (see \Cref{rem:sharpness-zero-free} in \Cref{sec:rs-identification-local}).
Also, to the best of our knowledge, \Cref{thm:hardcore-zero} is the first zero-free result for the hardcore model on random regular graphs that extends beyond the tree uniqueness threshold.

% We remark that the real-parameter range in \Cref{thm:hardcore-zero} is optimal in the sense that it cannot be extended across $\alpha_{\mathrm{RSB}}(d)$. Indeed, for $\alpha<\alpha_{\mathrm{RSB}}(d)$, the limiting quenched free energy (\Cref{def:free-energies}) equals the real-analytic Bethe prediction $\phi_d(\alpha)$ (\Cref{def:bethe-canonical-form}), whereas for every $\alpha\in(\alpha_{\mathrm{RSB}}(d),1/2)$, the limit superior of the normalized quenched free energy is strictly below $\phi_d(\alpha)$; see \Cref{thm:hc-rs-rsb-free-energy}. As shown in \Cref{sec:rs-identification-local}, this nonanalyticity prevents extending the zero-free guarantee across $\alpha_{\mathrm{RSB}}(d)$.

Barvinok's interpolation method~\cite{barvinok2016combinatorics,patel2017deterministic}
then gives a deterministic polynomial-time approximation algorithm for the hardcore partition function throughout the interior of the replica-symmetric regime, whose guarantee holds {with high probability over $G\sim\+G_{n,d}$}. In the counting corollaries below, the model parameter $\alpha$ or $\beta$ is a fixed rational number, and the algorithm receives only $G$ and the desired accuracy as inputs.

\begin{corollary}[Approximate counting for random regular hardcore]\label{cor:hardcore-counting}
     Fix an integer $d\ge3$. Let $0<\delta<\alpha_{\mathrm{RSB}}(d)$ and fix a rational $\alpha\in (0,\alpha_{\mathrm{RSB}}(d)-\delta]$. There is a deterministic algorithm that, given as input a $d$-regular graph $G$ with $n$ vertices and an accuracy parameter $0<\varepsilon<1$, runs in time $(n/\varepsilon)^{O_{d,\delta}(1)}$ and outputs an estimate $\hat{Z}$ satisfying
     \[
     (1-\eps)Z_{G}(\alpha)\leq \hat{Z}\leq (1+\eps)Z_{G}(\alpha)
     \]
     with probability $1-o_n(1)$ over $G\sim\+G_{n,d}$.
\end{corollary}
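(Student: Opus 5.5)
The plan is to derive the corollary from \Cref{thm:hardcore-zero} through Barvinok's interpolation method, in the bounded-degree form of Patel--Regts~\cite{patel2017deterministic}. Everything model-specific is already in the zero-free theorem; what remains is to set up an interpolating polynomial and check the standard hypotheses.

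\textbf{Step 1: the interpolating polynomial.} Set $\lambda_0=\lambda(\alpha)$ and $p_G(z)\defeq Z_{G,\mathrm{fug}}(z\lambda_0)=\sum_{I\in\+I(G)}\lambda_0^{|I|}z^{|I|}$. This is a polynomial of degree at most $n$ with $p_G(0)=1$ and $p_G(1)=Z_G(\alpha)$. The map $\alpha'\mapsto\lambda(\alpha')$ is holomorphic near $[0,1/2)$, and $\lambda'(0)=1\neq0$. It is also real and strictly increasing on $[0,\alpha_{\mathrm{RSB}}(d)-\delta]$, because $\lambda(\alpha')=\alpha'(1-\alpha')^{d-1}(1-2\alpha')^{-d}$ has positive logarithmic derivative there. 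Hence its derivative does not vanish on this compact segment. By the inverse function theorem together with compactness, $\lambda$ maps $D([0,\alpha_{\mathrm{RSB}}(d)-\delta],\eta)$ onto a set containing $D([0,\lambda_0],\eta')$ for some $\eta'=\eta'(d,\delta)>0$. On the event of \Cref{thm:hardcore-zero}, the fugacity partition function is therefore zero-free on $D([0,\lambda_0],\eta')$. Consequently $p_G$ is zero-free on the tube $U\defeq D([0,1],\eta'')$ with $\eta''=\eta'/\lambda_0$.

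\textbf{Step 2: Barvinok's method.} Choose a polynomial $\phi$ of degree $K=K(\eta'')$ with $\phi(0)=0$, $\phi(1)=1$ that maps the disk $D(0,1+\rho)$ into $U$, for some $\rho(\eta'')>0$; the standard construction of Barvinok provides one. Then $g(z)=\log p_G(\phi(z))$ is analytic on $D(0,1+\rho)$. Its Taylor series truncated at order $m=O_{\rho}(\log(n/\varepsilon))$ approximates $g(1)=\log Z_G(\alpha)$ within additive error $\varepsilon$. The reason is that $p_G\circ\phi$ has degree at most $Kn$ and no zeros in $D(0,1+\rho)$, so the coefficients of $\log$ are bounded by $Kn(1+\rho)^{-k}/k$. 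This gives the multiplicative guarantee $(1\pm\varepsilon)$ after rescaling $\varepsilon$.

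\textbf{Step 3: computing the coefficients.} The first $m$ Taylor coefficients of $\log p_G\circ\phi$ follow from the first $Km$ coefficients of $p_G$ via Newton's identities. The $k$-th coefficient of $p_G$ is $\lambda_0^k$ times the number of independent sets of size $k$. These numbers are not directly computable for $k=O(\log n)$. Patel--Regts instead express the power sums of inverse roots as linear combinations of induced-subgraph counts of connected graphs with at most $k$ vertices. For bounded-degree $G$ those counts can be enumerated in time $n\cdot\e^{O_d(k)}$. With $k=O_{d,\delta}(\log(n/\varepsilon))$ this gives total time $(n/\varepsilon)^{O_{d,\delta}(1)}$. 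Since $\alpha$ is a fixed rational, so is $\lambda_0$, and all arithmetic is exact or has controlled precision.

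The algorithm itself is deterministic and runs on any $d$-regular input. Its output is correct whenever $p_G$ is zero-free on $U$, which by Step 1 holds with probability $1-o_n(1)$. The main obstacle is Step 3: the Patel--Regts induced-subgraph-counting machinery for bounded-degree graphs has to be imported with the correct time bound. Step 1 needs some care only in checking that zero-freeness in the $\alpha$-variable transfers to a neighbourhood of the fugacity segment $[0,\lambda_0]$, including near $0$, where $\lambda'(0)=1$ makes this work.
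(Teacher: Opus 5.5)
Your proposal is correct and follows the paper's route: zero-freeness from \Cref{thm:hardcore-zero}, then Barvinok's interpolation with Patel--Regts enumeration of the low-order Taylor coefficients. The only difference is the interpolation variable. The paper stays in $\alpha$ and applies the method to the polynomial $(1-2z)^{dn}Z_G(z)$, after shrinking the neighborhood to exclude $z=1/2$, so it needs no open-mapping transfer to fugacity. Your choice $Z_{G,\mathrm{fug}}(z\lambda_0)$ instead lets Patel--Regts be quoted verbatim for the independence polynomial. One small point to make explicit: $\eta'/\lambda_0\ge\eta'/\lambda(\alpha_{\mathrm{RSB}}(d)-\delta)$, so $\phi$, $K$ and $\rho$ can be fixed depending only on $(d,\delta)$, which is what the $O_{d,\delta}(1)$ exponent requires.
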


% TODO: zero-freeness within uniqueness.
To build intuition, it is instructive to situate these results alongside prior work and other critical thresholds. For approximate counting and sampling in the hardcore model, Weitz~\cite{weitz06counting} provided the first polynomial-time algorithm throughout the region below the tree uniqueness threshold $\alpha\sim \frac{1}{d}$ for graphs of maximum degree $d$. Beyond this critical threshold, these tasks are computationally intractable on worst-case graphs of maximum degree $d$ unless $\NP=\RP$~\cite{sly2014computational}.

For random regular graphs, Chen et al.~\cite{chen2025rapidrandom} showed that, for sufficiently large $d$, the Glauber dynamics mix rapidly with high probability beyond the uniqueness regime, functioning up to the density $\alpha\sim \frac{\log d}{2d}$. Subsequently, G\"{o}bel et al.~\cite{gobel2026simple} provided an alternative, significantly simpler proof of this rapid-mixing behavior using Bochner's identity. Meanwhile, it is a classic result that the normalized independence number (i.e., the maximum independent set size scaled by $1/n$) of a random $d$-regular graph concentrates around $\alpha_{\max}(d) \sim \frac{2\log d}{d}$ with high probability for large $d$~{\cite{frieze1992independence}}. The reconstruction threshold has density $\alpha_{\mathrm{rec}}(d)\sim \frac{\log d}{d}$~\cite{Bhatnagar2016decay}; its relation to Glauber dynamics on trees depends on the boundary conditions~\cite{martinelli2004glauber,martinelli2007fast,sly2017glauber}. This leading-order density also matches the best-known efficient algorithms for finding large independent sets in random $d$-regular graphs. For every fixed $\eps>0$ and sufficiently large $d$, factor-of-i.i.d.\ local algorithms cannot find independent sets of density $(1+\eps)\log d/d$~\cite{gamarnik2017limits,rahman2017local}.

In contrast, our zero-free region in \Cref{thm:hardcore-zero} and the deterministic approximate counting algorithm in \Cref{cor:hardcore-counting} work at every fixed parameter strictly below the replica symmetry breaking threshold $\alpha_{\mathrm{RSB}}(d)\sim \frac{2\log d}{d}$ in~\eqref{eq:replica-symmetry-breaking-hardcore-asymptotic}.
Therefore, our leading-order bound matches the normalized independence number, substantially improves over previous algorithmic guarantees, and extends beyond the densities reached by the best-known efficient algorithms for finding large independent sets.

%Note that the asymptotic in~\eqref{eq:replica-symmetry-breaking-asymptotic} was previous obtained by physics prediction, for which we also rigorously prove later in this paper.

\paragraph{Antiferromagnetic Potts model.}

The Potts model is another fundamental spin system widely studied in statistical physics and theoretical computer science, serving as a canonical framework for modeling interacting multistate spins and proper $q$-colorings~\cite{Potts1952generalized}.  Fix an integer $q \ge 2$, representing the number of available spin states (or colors). The Potts model is parameterized here by an edge weight (or interaction parameter) $\beta \geq 0$. For a graph $G = (V, E)$, configurations of the Potts model are all possible assignments of states to the vertices of the graph. Each assignment $\sigma : V \to [q]$ has a weight $w(\sigma)=\beta^{m_G(\sigma)}$ which is determined by the number $m_G(\sigma)$ of monochromatic edges under $\sigma$.
{The partition function and, whenever $Z_{G,q}(\beta)>0$, the Gibbs distribution $\mu=\mu_{G,q,\beta}$ are given by}
\[
 \mu_{G,q,\beta}(\sigma)\defeq\frac{\beta^{m_G(\sigma)}}{Z_{G,q}(\beta)},
 \qquad Z_{G,q}(\beta)\defeq\sum_{\sigma:V(G)\to[q]}\beta^{m_G(\sigma)}.
\]

The behavior of the Potts model differs fundamentally depending on whether $\beta$ is less than or greater than $1$. When $\beta < 1$, neighboring vertices prefer to adopt different states, and the model is called \emph{antiferromagnetic}. Conversely, when $\beta > 1$, adjacent vertices favor the same state, and the model is called \emph{ferromagnetic}.

Furthermore, the Potts model serves as a natural generalization of two foundational models: the Ising model, which corresponds precisely to the special case $q=2$; and the $q$-coloring model, which emerges in the ``zero-temperature'' antiferromagnetic limit ($\beta = 0$) as we use $0^0=1$. {Whenever $G$ is $q$-colorable, the Gibbs distribution at $\beta=0$ is uniform on its proper $q$-colorings.}

We focus on the antiferromagnetic regime $\beta\in[0,1]$. For the Potts model on $G\sim\+G_{n,d}$ with fixed integers $d\ge3$ and $q\ge2$, there exists a threshold $\beta_{\mathrm{RSB}}(d,q)\in[0,1)$ separating the replica-symmetric regime $\beta>\beta_{\mathrm{RSB}}(d,q)$ from the replica-symmetry-broken regime $0<\beta<\beta_{\mathrm{RSB}}(d,q)$.
For each integer $q\ge3$, proper $q$-colorings ($\beta=0$) similarly admit a degree threshold $d_{\mathrm{RSB}}(q)\in[3,\infty)$, below which replica symmetry holds and above which the Bethe free-energy prediction fails. These phase characterizations are established informally in \Cref{thm:hardcore-rs-transition}, and formally in \Cref{thm:potts-three-characterizations} and  \Cref{thm:coloring-three-characterizations}. Furthermore, $d_{\mathrm{RSB}}(q)$ has the following asymptotic expansion, in agreement with the cavity method prediction~\cite{zdeborova2007phase,florent2007gibbs}:
\begin{equation}\label{eq:replica-symmetry-breaking-potts-asymptotic}
  d_{\mathrm{RSB}}(q)=(2q-1)\log q-2\log2+o_q(1).
\end{equation}
Our zero-free regions for $Z_{G,q}(\beta)$ are established in the complex interaction parameter $\beta$, where zeros are known as \emph{Fisher zeros}~\cite{fisher1965nature}.

\begin{theorem}[Zero-free region for random regular antiferromagnetic Potts and $q$-colorings]\label{thm:potts-zero}
  Fix integers $d\ge3$ and $q\ge2$.
  \begin{enumerate}
    \item \emph{Antiferromagnetic Potts ($\beta>0$):} For $0<\delta<1-\beta_{\mathrm{RSB}}(d,q)$, there is a constant $\eta=\eta(d,q,\delta)>0$ such that, with probability $1-o_n(1)$ over $G\sim\+G_{n,d}$, $Z_{G,q}$ is zero-free on $D([\beta_{\mathrm{RSB}}(d,q)+\delta,1],\eta)$.
    \item {\emph{Proper $q$-colorings ($\beta=0$):} If $q\ge3$ and $d<d_{\mathrm{RSB}}(q)$, there is a constant $\eta=\eta(d,q)>0$ such that, with probability $1-o_n(1)$ over $G\sim\+G_{n,d}$, $Z_{G,q}$ is zero-free on $D([0,1],\eta)$.}
  \end{enumerate}
\end{theorem}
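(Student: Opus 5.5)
The plan mirrors the abstract's strategy, specialized to the Potts partition function $Z_{G,q}(\beta)$ viewed as a random polynomial in $\beta$. Fix a compact set $K=[\beta_{\mathrm{RSB}}(d,q)+\delta,1]$ (or $K=[0,1]$ for colorings with $d<d_{\mathrm{RSB}}(q)$). The complex-parameter analogue of the real phase diagram is the normalized function
\[
  f_G(\beta)\defeq \frac{Z_{G,q}(\beta)}{\E{Z_{G,q}(\beta)}},
\]
which is holomorphic on a neighborhood of $K$, since $\E{Z_{G,q}(\beta)}=q^n\,\E{\beta^{m_G(\sigma)}}$ is explicit and nonvanishing near $K$. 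It therefore suffices to show that, with probability $1-o_n(1)$, $f_G$ converges uniformly on $D(K,\eta)$ to a random holomorphic limit that is almost surely zero-free.

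\textbf{First-moment computation.} Compute $\E{Z_{G,q}(\beta)}$ on the configuration model via a balanced-color saddle point, for complex $\beta$ near $K$. Variational replica symmetry (nonpositivity of the Bethe functional, from \Cref{thm:potts-three-characterizations} and \Cref{thm:coloring-three-characterizations}) ensures that the balanced point is the unique dominant maximizer for real $\beta\in K$. This dominance persists to a thin complex neighborhood by continuity of the exponent.

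\textbf{Reinforced small subgraph conditioning.} Let $C_k$ be the number of $k$-cycles. The planted model tilts $C_k$ to Poisson with mean $\lambda_k(1+\delta_k(\beta))$, where $\delta_k(\beta)=\theta(\beta)^k$ and $\theta$ is the nontrivial eigenvalue of the normalized edge interaction matrix. This $\delta_k$ is holomorphic in $\beta$, and $\sum\lambda_k|\delta_k|^2<\infty$ in the regime considered. Define
\[
  W(\beta)=\prod_k(1+\delta_k(\beta))^{C_k}e^{-\lambda_k\delta_k(\beta)}.
\]
For each fixed $\beta$, $W(\beta)$ is an a.s. nonzero holomorphic random function, being a convergent infinite product of nonzero factors since $|\delta_k|<1$ on $D(K,\eta)$. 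The aim is
\[
  \E{\,\sup_{\beta\in D(K,\eta)}|f_G(\beta)-W_G(\beta)|^2}\to0,
\]
where $W_G$ is the truncated product computed from $G$. This would be proved by bounding a pointwise complex second moment $\E{|Z(\beta)|^2}/|\E{Z(\beta)}|^2\to\exp(\sum\lambda_k|\delta_k|^2)$, which matches $\E{|W|^2}$. Mean-value and Cauchy estimates for holomorphic functions then upgrade the pointwise $L^2$ bound to a uniform one on a slightly smaller disk. Given uniform convergence, Hurwitz's theorem plus tightness of $\inf_{D(K,\eta)}|W|$ (from Poisson convergence of cycle counts) rules out zeros with probability $1-o(1)$. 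In fact one needs $|f_G|$ bounded below on $D(K,\eta)$ with high probability, which follows because $|W|$ is bounded away from $0$ with probability $1-\epsilon$ for each $\epsilon$.

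\textbf{Main obstacle: the complex second moment.} The difficulty is that inside the RS phase but above the Kesten--Stigum or second-moment threshold, the bare second moment $\E{|Z|^2}$ is exponentially larger than $|\E{Z}|^2$. The pair overlap functional has non-flat maxima away from the product overlap. I would first truncate to typical graphs. Using variational replica symmetry together with the planted model (via contiguity from the quenched-equals-annealed statement), I would show that under the planted measure the overlap of a second sample with the planted one lies outside an $\epsilon$-ball around the uniform overlap with probability $e^{-cn}$, not merely $o(1)$. This gives a set $\mathcal{T}$ of graphs of probability $1-o(1)$ on which the Gibbs mass with atypical overlap is exponentially small. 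I would then restrict $Z$ to configurations whose joint overlaps with a planted reference stay near uniform, defining $\tilde Z$. On the truncated second moment only a neighborhood of the uniform overlap contributes. There the Hessian is nondegenerate for real $\beta\in K$, giving $\E{|\tilde Z|^2}\sim|\E{\tilde Z}|^2\exp(\sum\lambda_k|\delta_k|^2)$, which stays stable under small complex perturbation of $\beta$.

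Separately, one must show that $Z-\tilde Z$ is uniformly negligible on $D(K,\eta)$. For this I would bound $|Z-\tilde Z|(\beta)\le Z_{\mathrm{atyp}}(|\beta|)$ by the real atypical partition function at $|\beta|$, and control it using the exponential bound. This transfer across complex $\beta$, ensuring the atypical real weight at $|\beta|$ is still exponentially small relative to $|\E{Z(\beta)}|$, is where $\eta$ must be chosen small, and it is the step I expect to be most delicate. For colorings at $\beta=0$ the same argument applies, with colorability holding with high probability when $d<d_{\mathrm{RSB}}(q)$.
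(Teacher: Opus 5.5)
Your outline has the right overall shape: an exponentially small planted-overlap bound, a truncation, complex second moments, and cycle conditioning. But the step that converts variational replica symmetry into a usable complex second moment is missing, and as written it would not go through.

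\textbf{The planted overlap bound.} The $e^{-cn}$ bound cannot come from contiguity or from quenched $=$ annealed, since those transfer only $o(1)$-probability statements. The paper gets it by random pinning, in three steps.
\begin{enumerate}
\item Reveal a Bernoulli-$s$ set of planted spins, and show the pinned log-likelihood $\ell_s$ has an exponentially concentrated upper tail.
\item A cavity computation gives $\limsup n^{-1}\mathbb E_{\mathbb Q_n}[\ell_s]\le\max_\nu\{\mathcal B_{d,t}(\nu)+ds\,\mathbb E_\nu[\sum_a\pi_a((M\boldsymbol\mu)_a-1)^2]\}$. This is $o(s)$ only under the \emph{strict} unique-maximizer condition, which you must extract from the slack $\beta\ge\beta_{\mathrm{RSB}}+\delta$. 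The same slack, via the fourth-order expansion of $\mathcal B$, gives strict Kesten--Stigum $(d-1)\kappa_\beta^2<1$; that inequality is what actually justifies your nondegenerate Hessian and $\sum_k\lambda_k|\delta_k|^2<\infty$.
\item Compare revealed overlaps with the product law by Bernstein.
\end{enumerate}

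\textbf{The truncation.} Your $\tilde Z$ is not well defined on a quenched graph, since there is no planted reference. Restricting $\sigma$ and $\sigma'$ separately against a fixed reference also leaves near-diagonal pairs such as $\sigma'=\sigma$ in $\mathbb E|\tilde Z|^2$, so it does not confine the second moment to the central box $B$. What works is a graph cutoff $E_n=\{\mathcal Z_G(x)\le a_n(x)e^{\varepsilon n}\}$. It turns the atypical-pair second moment into a planted first moment: $\mathbb E[(\mathcal Z^2-\mathcal Z^{(2)}_B)\mathbf 1_{E_n}]\le a_n^2e^{\varepsilon n}\,\mathbb E_{\mathbb Q_n}[\mu_G^{\otimes2}(\boldsymbol\rho\notin B)]$.

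Complex weights can only be compared to real ones at a cost $e^{O(n\eta)}$. So three real-parameter errors must be exponentially small:
\begin{itemize}
\item the lost first moment $\mathbb E[\mathcal Z\mathbf 1_{E_n^c}]$;
\item the kept atypical-pair weight;
\item the lost central pair weight $\mathbb E[\mathcal Z^{(2)}_B\mathbf 1_{E_n^c}]$.
\end{itemize}
The last requires $E_n$ to be exponentially typical under the pair-tilted law $\propto\mathcal Z^{(2)}_B$, which your sketch never addresses. The paper obtains this from Paley--Zygmund under that law together with concentration of the edit-Lipschitz observable $\log\mathcal Z_G$ on planted pair profiles.

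\textbf{The coloring case.} Part (2) is not ``the same argument.'' As $\beta\downarrow0$, the edit-Lipschitz constants of $\log\mathcal Z_G$ and $\ell_s$ grow like $\log(1/\beta)$, and at $\beta=0$ the partition function can vanish. So the exponential concentration is unavailable at the endpoint. Meanwhile your bound $|Z-\tilde Z|(\beta)\le Z_{\mathrm{atyp}}(|\beta|)$ needs atypical decay at parameters $|\beta|\to0$. The paper builds $E_n$ at a fixed small $x>0$ and shows the truncation exponent satisfies $c(x)\ge c_2(1+\log(1/x))^{-6}$. Since $c(x)\gg x$, this beats the $e^{O(xn)}$ cost of continuing to $|z|\le x$. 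It also uses real-degree slack ($\mathcal B_{r,0}\le0$ for some $r>d$) for the strict conditions at $\beta=0$, and $\beta_{\mathrm{RSB}}(d,q)=0$ (Potts contraction with $\beta'=0$) to cover $(0,1)$.

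\textbf{Smaller points.}
\begin{itemize}
\item $\delta_k(\beta)=(q-1)\kappa_\beta^k$, not $\theta^k$, and $1+\delta_1(0)=0$. So nonvanishing of $W$ comes from conditioning on simplicity ($C_1=C_2=0$), not from $|\delta_k|<1$.
\item Showing $\mathbb E|f_G-W_G|^2\to0$ needs the cross term $\mathbb E[f_G\overline{W_G}]$, i.e.\ mixed cycle moments at complex $\beta$. The paper needs these only on a real interval, and identifies the complex limit by tightness in $H(D)$ plus the identity theorem.
\end{itemize}
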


We remark that the parameter range in \Cref{thm:potts-zero} is also optimal. When $\beta_{\mathrm{RSB}}(d,q)>0$, the zero-free region in \Cref{thm:potts-zero}(1) cannot extend across $\beta_{\mathrm{RSB}}(d,q)$ (see \Cref{rem:sharpness-zero-free} in \Cref{sec:rs-identification-local}).

Via Barvinok's interpolation method, \Cref{thm:potts-zero} yields a deterministic polynomial-time approximation algorithm for both the antiferromagnetic Potts partition function and the number of proper $q$-colorings $Z_{G,q}(0)$, whose guarantee holds {with high probability over $G\sim\+G_{n,d}$}.

\begin{corollary}[Approximate counting for random regular antiferromagnetic Potts and $q$-colorings]\label{cor:potts-counting}
  Fix integers $d\ge3$ and $q\ge2$ and a rational parameter $\beta$, and assume either:
  \begin{enumerate}
    \item $\beta\in[\beta_{\mathrm{RSB}}(d,q)+\delta,1]$ for some $0<\delta<1-\beta_{\mathrm{RSB}}(d,q)$, or
    \item {$q\ge3$, $d<d_{\mathrm{RSB}}(q)$, and $\beta\in[0,1]$ (including proper $q$-colorings at $\beta=0$).}
  \end{enumerate}
  There is a deterministic algorithm that, given as input a $d$-regular graph $G$ with $n$ vertices and an accuracy parameter $0<\varepsilon<1$, runs in time $(n/\varepsilon)^{O_{d,q,\delta}(1)}$ ({or $(n/\varepsilon)^{O_{d,q}(1)}$ in case~2}) and outputs an estimate $\hat{Z}$ satisfying
  \[
    (1-\varepsilon)Z_{G,q}(\beta)\le\hat Z\le(1+\varepsilon)Z_{G,q}(\beta)
  \]
  with probability $1-o_n(1)$ over $G\sim\+G_{n,d}$.
\end{corollary}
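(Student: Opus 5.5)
The plan is to derive \Cref{cor:potts-counting} from \Cref{thm:potts-zero} using Barvinok's interpolation method in the form of Patel and Regts~\cite{barvinok2016combinatorics,patel2017deterministic}. The algorithm itself does not depend on the randomness of $G$: it runs a fixed procedure on any input $d$-regular graph. The probabilistic guarantee is inherited from the event of \Cref{thm:potts-zero}, which has probability $1-o_n(1)$ and on which $Z_{G,q}$ is zero-free on a fixed neighborhood $U=D(I,\eta)$. In case~1 we take $I=[\beta_{\mathrm{RSB}}(d,q)+\delta,1]$; in case~2 we take $I=[0,1]$. In both cases $\eta$ depends only on $(d,q,\delta)$ or $(d,q)$.

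\textbf{Interpolation path.} We interpolate from the point $\beta_0=1$. There $Z_{G,q}(1)=q^n$ is known exactly, and $1\in I$ in both cases. We view $Z_{G,q}$ as a polynomial in $\beta$ of degree at most $|E|=dn/2$. We then construct a polynomial map $\phi$ of constant degree, depending only on $\eta$ and $I$, with the following properties:
\begin{itemize}
\item $\phi$ sends the closed disk $\{|z|\le 1+\kappa\}$ into $U$, for some constant $\kappa>0$;
\item $\phi(0)=1$;
\item $\phi(1)=\beta$, where $\beta$ is the target rational parameter.
\end{itemize}
This is the standard Barvinok construction: a strip neighborhood of a segment contains the image of a disk of radius greater than $1$ under a polynomial of degree $O_{\eta}(1)$.

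\textbf{Approximating the logarithm.} Set $g(z)=Z_{G,q}(\phi(z))$. On the good event, $g$ has no zeros in the disk of radius $1+\kappa$. Hence $\log g$ is analytic there, and its Taylor series truncated at order $m=O_{\kappa}(\log(n/\varepsilon))$ approximates $\log g(1)$ to additive error $\varepsilon$. This uses the usual bound: the error is at most $\deg g\cdot(1+\kappa)^{-m}/\kappa$, and $\deg g\le O(n)$.

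\textbf{Computing the coefficients.} The first $m$ Taylor coefficients of $\log g$ follow via Newton's identities from the first $m$ coefficients of $g$. Those in turn follow from the first $m$ coefficients of $Z_{G,q}$ expanded around $\beta=1$. We write $\beta=1+t$, so that
\[
Z_{G,q}(1+t)=\sum_{\sigma}\prod_{e\text{ mono}}(1+t)=\sum_{F\subseteq E} t^{|F|}\,\#\{\sigma:F\text{ monochromatic}\}.
\]
The coefficient of $t^k$ is a sum over edge sets of size $k$ of $q^{n-\operatorname{rank}(F)}$. This is exactly the random-cluster expansion, and it is expressible through connected induced-subgraph counts. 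For bounded-degree graphs, the Patel--Regts framework computes these low-order coefficients in time $(n/\varepsilon)^{O_{d,q,\kappa}(1)}$, by enumerating connected subgraphs of size $O(\log(n/\varepsilon))$. The resulting estimate of $\log Z_{G,q}(\beta)$ has error at most $\varepsilon/2$ in the real part. Exponentiating its real part, which is enough because $Z_{G,q}(\beta)>0$ for real $\beta\in(0,1]$ and colorable $G$, gives the multiplicative $(1\pm\varepsilon)$ guarantee.

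\textbf{Main obstacle.} The main obstacle is the case $\beta=0$ in case~2. There we need $Z_{G,q}(0)>0$, and we need the zero-free disk to reach the endpoint $0$. Both are provided by \Cref{thm:potts-zero}(2), since the neighborhood $D([0,1],\eta)$ contains $0$ and zero-freeness there implies colorability. A secondary technical point is the coefficient computation: we must verify that, after the change of variables, the coefficient sums fit the Patel--Regts class of bounded-degree graph polynomials with multiplicative structure over disjoint unions. This holds here because the expansion factors over connected components of $F$.
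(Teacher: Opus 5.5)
Your proposal is correct and follows essentially the same route as the paper. The paper interpolates from the exactly known point $\beta=1$ via the normalized polynomial $q^{-n}Z_{G,q}(1-z)$, uses a fixed polynomial change of variable (scaled uniformly to the target) that maps a disk of radius greater than one into the zero-free neighborhood from \Cref{thm:potts-zero}, and obtains the $O(\log((n+1)/\varepsilon))$ Taylor coefficients from Patel--Regts. One cosmetic difference: Patel--Regts yields the Taylor coefficients of $\log(q^{-n}Z_{G,q}(1-z))$ directly, so your detour through the coefficients of $Z_{G,q}$ and Newton's identities is unnecessary, though harmless.
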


Previously, the characterization of Fisher zero-free regions for $q$-colorings was established for general graphs of maximum degree $d$.
For $q \geq 2d$, Liu, Sinclair, and Srivastava~\cite{liu2025correlation} established a zero-free neighborhood around the real interval $[0,1]$ on the complex plane. This threshold was subsequently improved to $q \geq 1.998d$ by Bencs, Berrekkal, and Regts~\cite{bencs2026deterministic}, and very recently pushed to $q \geq 1.809d$ for $d\ge125$ by Shao and Shi~\cite{shao2026coupling}. For these degrees, this latest bound matches the state-of-the-art threshold for {almost uniform sampling of} $q$-colorings on bounded-degree graphs~\cite{carlson2025flip}. Through Barvinok's polynomial interpolation method, all of these zero-free regions directly yield fully polynomial-time approximation schemes (FPTAS) for counting the number of valid $q$-colorings.

From an algorithmic perspective (approximate sampling and counting), prior work has also predominantly focused on graphs with bounded maximum degree. Jerrum~\cite{jerrum1995simple} established the rapid mixing of Glauber dynamics for $q>2d$ by a coupling argument. Vigoda~\cite{vigoda1999improved} subsequently improved this threshold to $q>\frac{11}{6}d$ by analyzing flip dynamics. Subsequent work improved the constant below $11/6$~\cite{chen2019improved}, reaching $q\ge1.809d$ for $d\ge125$~\cite{carlson2025flip}.

{For random $d$-regular graphs, $q\ge3$, and positive interaction $\beta\in(0,1)$, Blanca et al.~\cite{blanca2020sampling} gave a polynomial-time sampling algorithm whenever the model has uniqueness on the $(d-1)$-ary tree and $\beta\ne(d-q)/d$.} Their sampler has total-variation error $O(n^{-\delta})$ for a constant $\delta>0$, with high probability over the graph. Very recently, Chen and Wang~\cite{chen2026high} proved rapid mixing of Glauber dynamics for fixed $q$, any fixed $\zeta\in(0,1/2)$, and sufficiently large $d$ when $1-\beta\le(1/2-\zeta)\frac{\log d}{d}$, extending beyond tree uniqueness. Neither result includes proper $q$-colorings at $\beta=0$.

It is also useful to compare with other thresholds for $q$-colorings on random regular graphs. A line of work~\cite{frieze1992independence,achlioptas2004chromatic,kemkes2010chromatic,coja2016chromatic} shows that there is a nonnegative sequence $\varepsilon_q\to0$ such that, for all sufficiently large $q$ and each fixed integer $d\ge3$, the following implications hold as $n\to\infty$ along admissible values:
\begin{equation}\label{eq:colorability-bounds}
\begin{aligned}
d&\le(2q-1)\log q-2\log2-\varepsilon_q
&&\Longrightarrow\quad G\sim \+G_{n,d}\text{ is \(q\)-colorable with high probability},\\
d&\ge(2q-1)\log q-1+\varepsilon_q
&&\Longrightarrow\quad G\sim \+G_{n,d}\text{ is not \(q\)-colorable with high probability}.
\end{aligned}
\end{equation}

For every fixed $\eps>0$ and sufficiently large $q$, algorithms are known to find a proper $q$-coloring in random $d$-regular graphs with high probability when $d\le (1-\eps)q\log q$; this follows from the locally sparse coloring result of~\cite{AIS19}. In contrast, factor-of-i.i.d.\ local algorithms face a barrier at the scale $d=(1+\eps)q\log q$: a color class would give a local independent set of density at least $1/q$, contradicting the independent-set bounds in~\cite{rahman2017local,gamarnik2017limits}. Finding colorings beyond this scale by general efficient algorithms remains an open challenge~\cite{molloy2019list}.

{In contrast, our zero-free region in \Cref{thm:potts-zero} and the deterministic approximate counting algorithm in \Cref{cor:potts-counting} extend to proper colorings whenever $d<d_{\mathrm{RSB}}(q)$. By~\eqref{eq:replica-symmetry-breaking-potts-asymptotic}, for every fixed $\delta>0$ and sufficiently large $q$, this includes $d\le(2q-1)\log q-2\log2-\delta$.}
The counting regime reaches the asymptotic sufficient colorability bound in~\eqref{eq:colorability-bounds}, substantially improves upon previous algorithmic bounds, and extends beyond the known guarantees for finding a proper coloring.

\paragraph{Antiferromagnetic Ising model.}

We also highlight the (zero-field) antiferromagnetic Ising model,
which is the special case $q=2$ of the antiferromagnetic Potts
model. Its replica symmetry breaking transition on random
regular graphs admits a particularly explicit description:
the threshold is known in closed form for every fixed $d\ge3$ and coincides with the Kesten--Stigum threshold~\cite{coja2022ising}:
\begin{equation}\label{eq:ising-rsb-threshold}
 \beta_{\mathrm{RSB}}(d,2)
 =\frac{\sqrt{d-1}-1}{\sqrt{d-1}+1}.
\end{equation}
Furthermore, the limiting
fluctuations of the partition function have been characterized
throughout the interior of the replica-symmetric
regime~\cite{fabian2021ising}.

Specializing \Cref{thm:potts-zero} and \Cref{cor:potts-counting}
to $q=2$ gives the following results.

\begin{corollary}[Zero-free region for random regular antiferromagnetic Ising]\label{cor:ising-zero}
 Fix an integer $d\ge3$, and let $0<\delta<\frac{2}{\sqrt{d-1}+1}$. There is a constant {$\eta=\eta(d,\delta)>0$} such that, with probability $1-o_n(1)$ over $G\sim\+G_{n,d}$, $Z_{G,2}$ is zero-free on {$D\bigl(\bigl[\frac{\sqrt{d-1}-1}{\sqrt{d-1}+1}+\delta,1\bigr],\eta\bigr)$}.
\end{corollary}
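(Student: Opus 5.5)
This corollary is the specialization $q=2$ of \Cref{thm:potts-zero}(1), combined with the closed-form value of the Ising replica symmetry breaking threshold in \eqref{eq:ising-rsb-threshold}. No new analytic input is needed; the plan is only to check that the parameters translate correctly.

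First, I invoke \eqref{eq:ising-rsb-threshold}, which is imported from~\cite{coja2022ising}. It identifies the threshold $\beta_{\mathrm{RSB}}(d,2)$ with the Kesten--Stigum value $\frac{\sqrt{d-1}-1}{\sqrt{d-1}+1}$. This identification is needed in the form used by the paper: the threshold separating the replica-symmetric regime from the broken regime in the formal characterization \Cref{thm:potts-three-characterizations} must be the same quantity that enters \Cref{thm:potts-zero}. I will therefore check that the definition of $\beta_{\mathrm{RSB}}(d,q)$ used in \Cref{thm:potts-zero}, specialized to $q=2$, agrees with the threshold of~\cite{coja2022ising}. This is the only step with content. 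If the paper defines $\beta_{\mathrm{RSB}}$ variationally, via nonpositivity of the Bethe functional, I will match it to the result of~\cite{coja2022ising} through the equivalence of thermodynamic and variational replica symmetry.

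Second, I translate the parameter range. We have
\[
1-\beta_{\mathrm{RSB}}(d,2)=1-\frac{\sqrt{d-1}-1}{\sqrt{d-1}+1}=\frac{2}{\sqrt{d-1}+1}.
\]
Hence the hypothesis $0<\delta<\frac{2}{\sqrt{d-1}+1}$ is exactly the hypothesis $0<\delta<1-\beta_{\mathrm{RSB}}(d,2)$ of \Cref{thm:potts-zero}(1). Since $d\ge3$, the threshold is positive, so the interval $[\beta_{\mathrm{RSB}}(d,2)+\delta,1]$ is nonempty and lies inside $(0,1]$.

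Finally, I apply \Cref{thm:potts-zero}(1) with $q=2$. This yields $\eta=\eta(d,2,\delta)$, which I rename $\eta(d,\delta)$, such that with probability $1-o_n(1)$ over $G\sim\+G_{n,d}$ the function $Z_{G,2}$ has no zeros on $D\bigl(\bigl[\frac{\sqrt{d-1}-1}{\sqrt{d-1}+1}+\delta,1\bigr],\eta\bigr)$. I expect no real obstacle here; the only care needed is the consistency of the threshold definitions in the first step.
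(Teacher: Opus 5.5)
Your proposal is correct and follows the paper's route exactly: the corollary is \Cref{thm:potts-zero}(1) at $q=2$, and \eqref{eq:ising-rsb-threshold} turns the hypothesis $0<\delta<1-\beta_{\mathrm{RSB}}(d,2)$ into $0<\delta<\frac{2}{\sqrt{d-1}+1}$. The threshold-consistency check you flag is left to the citation in the paper, and your plan for it works: below the Kesten--Stigum value, the fourth-order expansion \eqref{eq:unified-fourth-order} gives a positive Bethe witness; above it, the free-energy identity of~\cite{coja2022ising} together with \Cref{cor:thermodynamic-implies-variational} gives variational replica symmetry, so the variational threshold is exactly $\frac{\sqrt{d-1}-1}{\sqrt{d-1}+1}$.
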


\begin{corollary}[Approximate counting for random regular antiferromagnetic Ising]\label{cor:ising-counting}
 Fix an integer $d\ge3$, let $0<\delta<\frac{2}{\sqrt{d-1}+1}$, and fix a rational $\beta\in\bigl[\frac{\sqrt{d-1}-1}{\sqrt{d-1}+1}+\delta,1\bigr]$. There is a deterministic algorithm that, given a $d$-regular graph $G$ with $n$ vertices and an accuracy parameter $0<\varepsilon<1$, runs in time $(n/\varepsilon)^{O_{d,\delta}(1)}$ and outputs an estimate $\hat Z$ satisfying
 \[
  (1-\varepsilon)Z_{G,2}(\beta)\le\hat Z\le(1+\varepsilon)Z_{G,2}(\beta)
 \]
 with probability $1-o_n(1)$ over $G\sim\+G_{n,d}$.
\end{corollary}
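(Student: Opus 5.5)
This follows from \Cref{thm:potts-zero}(1) and the Barvinok interpolation scheme already used in \Cref{cor:potts-counting}. The plan is therefore to specialize to $q=2$ and then apply \Cref{cor:potts-counting}(1) at that specialization.

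First, identify the threshold. By \eqref{eq:ising-rsb-threshold}, which is the closed-form Kesten--Stigum characterization of~\cite{coja2022ising}, we have
\[
\beta_{\mathrm{RSB}}(d,2)=\frac{\sqrt{d-1}-1}{\sqrt{d-1}+1}.
\]
I would check that this is the same $\beta_{\mathrm{RSB}}(d,2)$ defined by the three equivalent characterizations in \Cref{thm:potts-three-characterizations} at $q=2$. One way is to note that the RS/RSB boundary of the antiferromagnetic Ising model on $\+G_{n,d}$ is exactly the point where the Bethe free energy stops matching the quenched free energy, which~\cite{coja2022ising} identifies with Kesten--Stigum. Since $1-\beta_{\mathrm{RSB}}(d,2)=\frac{2}{\sqrt{d-1}+1}$, the hypothesis $0<\delta<\frac{2}{\sqrt{d-1}+1}$ is precisely the condition $0<\delta<1-\beta_{\mathrm{RSB}}(d,2)$ required in \Cref{thm:potts-zero}(1).

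Second, invoke \Cref{thm:potts-zero}(1) with $q=2$ and this $\delta$. This gives $\eta=\eta(d,2,\delta)=\eta(d,\delta)>0$ such that, with probability $1-o_n(1)$, $Z_{G,2}$ has no zeros on $D([\beta_{\mathrm{RSB}}(d,2)+\delta,1],\eta)$. That is exactly \Cref{cor:ising-zero}.

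For \Cref{cor:ising-counting}, apply \Cref{cor:potts-counting}(1) with $q=2$. The underlying argument is standard. The function $Z_{G,2}(\beta)$ is a polynomial of degree $|E|=dn/2$. Let $U=D([\beta_0,1],\eta)$ with $\beta_0=\beta_{\mathrm{RSB}}(d,2)+\delta$. At the base point $\beta=1$ we have $Z_{G,2}(1)=2^n\neq0$, and $U$ contains a neighborhood of the segment from $1$ to the target $\beta$. Barvinok's method then uses a conformal map from a disk into $U$ and truncates the Taylor series of $\log Z$ to order $O_{d,\delta}(\log(n/\varepsilon))$. The Taylor coefficients are computed in time $(n/\varepsilon)^{O_{d,\delta}(1)}$ via the Patel--Regts technique, since $d$ is bounded and the coefficients are counts of bounded-size induced subgraphs. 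The algorithm itself is deterministic and runs for every input graph; the randomness enters only through the fact that its output is correct whenever $G$ lies in the zero-free event, which has probability $1-o_n(1)$.

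The only point needing any care is confirming that the explicit formula \eqref{eq:ising-rsb-threshold} agrees with the paper's definition of $\beta_{\mathrm{RSB}}(d,2)$. I would settle this by citing~\cite{coja2022ising} together with the thermodynamic characterization in \Cref{thm:potts-three-characterizations}. Everything else is direct specialization.
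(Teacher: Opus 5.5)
Your proposal is correct and follows the paper's proof, which specializes the Potts zero-free region and the Barvinok/Patel--Regts interpolation to $q=2$ using the closed-form threshold \eqref{eq:ising-rsb-threshold}; the paper applies that interpolation to $q^{-n}Z_{G,q}(1-z)$ from the base point $\beta=1$, exactly as you describe. One minor correction to your wording: the Taylor coefficients up to order $m=O(\log(n/\varepsilon))$ involve connected subgraphs of logarithmic rather than bounded size, and Patel--Regts still enumerates these in time $n^{O(1)}\e^{O_d(m)}=(n/\varepsilon)^{O(1)}$.
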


For graphs of maximum degree $d$, Fisher zero-free regions and
deterministic approximate counting are known throughout the
tree uniqueness regime $\beta>(d-2)/d$
\cite{sinclair2014approximation,liu2025correlation}. On random $d$-regular graphs, {the spectral-diameter algorithms of} Koehler, Lee, and Risteski~\cite{koehler2022sampling} {yield} polynomial-time
algorithms for both sampling and randomized approximate counting
when $1-\beta\le c/\sqrt d$, for a sufficiently small absolute
constant $c>0$.
In contrast, our results provide deterministic approximate counting on random regular graphs
throughout the interior of the replica-symmetric regime.
%A recent talk by Liu~\cite{liu2026disorderedtalk} also announces Fisher zero-freeness and an FPTAS for the antiferromagnetic Ising model on random regular graphs up to RSB, jointly with Bencs, Huang, Lee, and Regts. This announced result concerns the Ising special case, rather than the hardcore or general Potts models.

Finally, we emphasize that our algorithms are restricted to \emph{approximate counting} and do not directly yield an efficient algorithm for \emph{approximate sampling from the Gibbs measure}.
Conceptually, approximate counting can be viewed as the quantitative analogue of the decision (or existence) problem, whereas approximate sampling is the probabilistic counterpart of the search problem.
In worst-case complexity, randomized approximate counting and almost-uniform generation are polynomial-time interreducible for suitable efficiently self-reducible counting problems~\cite{jerrum1986random}; on random regular graphs, however, the standard vertex-pinning self-reduction leaves the $d$-regular ensemble.
Whether efficient approximate sampling also extends up to the replica symmetry breaking threshold{---}or whether counting and sampling exhibit an average-case computational gap analogous to that between existence thresholds and known search algorithms{---}remains an open question.

\subsubsection{Charting the replica-symmetric phase}
{
As a key intermediate step of our proofs, we give a characterization of the replica-symmetric phase on random $d$-regular graphs by rigorously establishing the equivalence of three notions of replica symmetry.

Note that we haven't formally defined the replica symmetry breaking thresholds $\alpha_{\mathrm{RSB}}(d),\beta_{\mathrm{RSB}}(d,q)$ and $d_{\mathrm{RSB}}(q)$. Historically, such a threshold has been conceptualized and defined through three distinct lenses across statistical physics, probability theory, and theoretical computer science\footnote{We remark that the naming choices, \emph{geometric}, \emph{variational} and \emph{thermodynamic} are not canonical in the literature, and are proposed by the authors for a clear comparison.}:
\begin{enumerate}
  \item \emph{(Geometric) replica symmetry:} 
  In Parisi's foundational theory of spin glasses \cite{parisi1979infinite,parisi1983order,mezard1987spin,
  talagrand2010mean}, replica symmetry is governed by the concentration of the empirical overlap between two independent Gibbs samples (replicas) around a deterministic matrix, which also implies averaged pairwise decorrelation (\Cref{def:replica-symmetry}) we use here:
  \[
   \lim\limits_{n\to \infty}\frac{1}{n^2}\sum\limits_{1\leq u<v\leq n}\E[G]{\DTV{\mu_{G,u,v}}{\mu_{G,u}\otimes \mu_{G,v}}}=0,
  \]
  i.e., two uniformly chosen vertices become asymptotically independent under the Gibbs measure~\cite{dembo2010gibbs,coja2016belief,coja2018charting}.
  \item \emph{Thermodynamic replica symmetry:} 
    The Bethe free-energy approximation has its origins in the Bethe--Peierls approach to lattice models \cite{bethe1935statistical}. It plays a central role in the cavity method for sparse disordered systems \cite{mezard2001bethe,mezard2003cavity,mezard2009information}. Here we call a model thermodynamically replica symmetric if its random free-energy density $n^{-1}\log Z_G$ converges in probability to the replica-symmetric Bethe prediction $\phi$ (formally defined in~\Cref{def:bethe-canonical-form} and~\Cref{def:free-energies}):
    \[
    n^{-1}\log Z_G\xrightarrow{\mathrm p} \phi.
    \]
    Rigorous Bethe free-energy results have been obtained in a variety of settings, including tree-uniqueness regimes on locally tree-like graphs \cite{dembo2013factor}.
    \item \emph{Variational replica symmetry:} 
  Motivated by the one-step replica-symmetry-breaking cavity picture \cite{monasson1998otpimization,mezard2001bethe,florent2007gibbs}, we consider the \emph{Bethe functional} $\mathcal{B}(\nu)$ on the space $\mathcal{P}_{\*\pi}$ of trial cavity message distributions with mean ${\*\pi}$ (formally defined in \Cref{def:rsb-bethe-functional}). Physically, instead of fixing cavity boundary marginals to the homogeneous tree marginal ${\*\pi}$, $\mathcal{B}(\nu)$ evaluates the local vertex-and-edge cavity functional when boundary marginals fluctuate across clusters according to $\nu$, relative to the homogeneous baseline $\mathcal{B}(\delta_{\*\pi})=0$. Mathematically, $\mathcal{B}(\nu)$ is the $m=1^-$ derivative of a normalized fractional-moment trial functional: any witness with $\mathcal{B}(\nu)>0$ yields a fractional-moment upper bound \cite{lelarge2018replica,ayre2022lower} strictly below $\phi$, forcing a macroscopic free-energy gap and signaling condensation. Here we call a model \emph{variationally replica symmetric} if
  \[\mathcal{B}(\nu)\le 0 \quad \text{for all } \nu\in\mathcal{P}_{\*\pi},\] so that $\delta_{\*\pi}$ globally maximizes $\mathcal{B}$; related variational criteria have been established across several sparse random models \cite{bapst2014condensation,coja2017information,coja2020inference}.
\end{enumerate}
}

In the one-step replica-symmetry-breaking condensation scenario analyzed for these models (within the satisfiable regime for constraint satisfaction problems), the cavity picture associates the condensation transition with the simultaneous onset of a subexponential family of dominant states, the loss of averaged pairwise decorrelation, and the failure of the replica-symmetric prediction $\phi$~\cite{florent2007gibbs,zdeborova2007phase,florent2008Potts,barbier2013hardcore}. This physical picture also motivates the variational criterion $\mathcal{B}\le 0$ used here~\cite{mezard2001bethe}, suggesting a common threshold for all three perspectives.

Rigorous work has established substantial connections among these notions, depending on the model and ensemble: on {Erd\H{o}s--R\'enyi} graphs, the Poisson Bethe criterion {identifies the condensation threshold separating replica-symmetric free-energy behavior from a regime where that prediction fails} for $q$-colorings and the Potts antiferromagnet \cite{bapst2014condensation,coja2017information}, and {below this threshold, averaged pairwise decorrelation and overlap concentration have also been established} \cite{coja2018charting,coja2020replica}. On random regular graphs, prior work relates geometric replica symmetry to the Bethe free energy {under additional assumptions} \cite{coja2016belief,coja2019spin}, equates ${\mathcal{B}}\le 0$ with the replica-symmetric quenched free-energy {identity} for positive-weight Potts \cite{coja2020inference}, establishes a hardcore variational formula over kernel distributions \cite{coja2019spin} and fractional-moment upper bounds \cite{lelarge2018replica,ayre2022lower}, and determines the antiferromagnetic Ising ($q=2$) free-energy transition \cite{coja2022ising}.

In our work, we establish an equivalence of geometric, variational, and thermodynamic replica symmetry on $\mathcal{G}_{n,d}$ across all three models {at fixed parameters away from their respective critical thresholds}. We state a unified
informal version here and defer the model-by-model formal statements {to \Cref{sec:rs-identification-local} (Theorems~\ref{thm:hardcore-three-characterizations}, \ref{thm:potts-three-characterizations}, and~\ref{thm:coloring-three-characterizations}}).

%Building on these tools, we establish a unified, pointwise equivalence of all three perspectives on $\mathcal{G}_{n,d}$ away from the critical thresholds across all three models (\Cref{thm:hardcore-three-characterizations,thm:potts-three-characterizations,thm:coloring-three-characterizations} in \Cref{sec:charting-rs}).

%Building on these tools, we establish a unified, pointwise equivalence of all three perspectives on $\mathcal{G}_{n,d}$ away from the critical thresholds across all three models (\Cref{thm:hardcore-three-characterizations,thm:potts-three-characterizations,thm:coloring-three-characterizations} in \Cref{sec:charting-rs}), including the hardcore model over $\mathcal{P}_{\pi_\alpha}$ and the hard-coloring endpoint $\beta=0$ ($d,q\ge 3$).

%The theorem below shows that these three perspectives identify the same phase transition, using a variational threshold in the spirit of the cavity condensation criterion~\cite{florent2007gibbs,bapst2014condensation,coja2017information}. 

\begin{theorem}[Charting the replica-symmetric phase, informal]\label{thm:hardcore-rs-transition}\label{thm:potts-rs-transition}\label{thm:coloring-rs-transition}
  For random regular graphs $G\sim\+G_{n,d}$, the three forms of replica symmetry (geometric, thermodynamic, variational) above have the following thresholds:
  \begin{enumerate}
    \item \textbf{Hardcore model.} For every integer $d\ge3$, there exists a threshold $\alpha_{\mathrm{RSB}}(d)\in(0,1/2)$ satisfying~\eqref{eq:replica-symmetry-breaking-hardcore-asymptotic}. All three forms of replica symmetry hold for $0<\alpha<\alpha_{\mathrm{RSB}}(d)$ and fail for $\alpha_{\mathrm{RSB}}(d)<\alpha<1/2$.
    \item \textbf{Antiferromagnetic Potts model.} For every pair of integers $d\ge3$ and $q\ge2$, there exists a threshold $\beta_{\mathrm{RSB}}(d,q)\in[0,1)$. All three forms of replica symmetry hold for $\beta_{\mathrm{RSB}}(d,q)<\beta\le1$ and fail for $0<\beta<\beta_{\mathrm{RSB}}(d,q)$.
    \item \textbf{Proper $q$-colorings.} For every integer $q\ge3$, there exists a (real-valued) degree threshold $d_{\mathrm{RSB}}(q)\in[3,\infty)$ satisfying~\eqref{eq:replica-symmetry-breaking-potts-asymptotic}. For every integer $d\ge3$, all three forms of replica symmetry hold if $d<d_{\mathrm{RSB}}(q)$ and fail if $d>d_{\mathrm{RSB}}(q)$. In the former case, $G$ is $q$-colorable with high probability.
  \end{enumerate}
\end{theorem}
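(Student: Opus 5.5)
The proof will go through the variational notion, which is the only one of the three with a definite threshold built in. For each model I will \emph{define} the threshold variationally: $\alpha_{\mathrm{RSB}}(d)$ is the supremum of those $\alpha$ for which $\sup_{\nu\in\mathcal P_{\*\pi_\alpha}}\mathcal B(\nu)\le0$, and $\beta_{\mathrm{RSB}}(d,q)$ and $d_{\mathrm{RSB}}(q)$ are defined the same way. For this to give a single transition point I need a monotonicity statement: the set of parameters where $\mathcal B\le0$ on all of $\mathcal P_{\*\pi}$ is an interval. The plan is to obtain this by showing that $\sup_\nu\mathcal B(\nu)$ is monotone in the parameter. For the hardcore model, I would write $\mathcal B$ as the $m\to1^-$ derivative of the fractional-moment functional, differentiate in $\lambda$, and compare with the free-energy derivative $\alpha$. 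Alternatively, one can avoid monotonicity altogether and simply take the threshold to be the infimum of the parameters where some $\nu$ has $\mathcal B(\nu)>0$, proving the equivalences pointwise.

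\textbf{Step 1: variational RS fails implies thermodynamic RS fails.} Take a witness $\nu$ with $\mathcal B(\nu)>0$. Through the fractional-moment (interpolation) upper bounds of \cite{lelarge2018replica,ayre2022lower}, the normalized fractional moment $n^{-1}\log\mathbb E[Z_G^m]/m$ drops strictly below $\phi$ for $m$ slightly less than $1$. Markov's inequality then gives $\limsup n^{-1}\log Z_G<\phi$ with high probability.

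\textbf{Step 2: thermodynamic RS fails implies geometric RS fails.} This is the contrapositive of "geometric RS implies the Bethe free energy." I would use the Aizenman--Sims--Starr / cavity interpolation in $n$, in the style of \cite{coja2016belief,coja2019spin}. Averaged pairwise decorrelation makes the Gibbs marginals around a few added vertices asymptotically product. The change $\log Z_{n+1}-\log Z_n$ then equals the Bethe vertex term minus the edge term evaluated at the empirical marginal distribution. The remaining task is to show that this marginal distribution concentrates at $\delta_{\*\pi}$. My plan is to argue that the marginals form an approximate belief-propagation fixed point with mean $\*\pi$, since the occupation density is pinned by the derivative of the annealed free energy. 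Combining this with $\mathcal B\le0$ would show that any other fixed point gives a value of at most $\phi$, so it cannot be the limit.

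\textbf{Step 3: variational RS implies geometric and thermodynamic RS.} This is the main obstacle. I would run the planted-model second-moment method. Under $\mathcal B\le0$ (strictly, away from the threshold), the first step is to show that the planted overlap concentrates at $\*\pi\otimes\*\pi$ with exponentially small deviation probability. This comes from writing the second-moment exponent over overlap matrices and identifying its maximizers with cavity distributions $\nu$ through the $m=1$ derivative. The second step is small subgraph conditioning, to handle the short-cycle fluctuations. Together these give $Z_G=\Theta(\mathbb E Z_G)$ with high probability after a typical-graph truncation. This yields thermodynamic RS, since $n^{-1}\log\mathbb E Z_G\to\phi$. Contiguity between the planted and null models then transfers overlap concentration to the null model, and overlap concentration gives averaged pairwise decorrelation, i.e.\ geometric RS. For colorings, the same argument applied to $Z_{G,q}(0)$ also yields colorability with high probability.

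\textbf{Asymptotics.} Finally, the expansions \eqref{eq:replica-symmetry-breaking-hardcore-asymptotic} and \eqref{eq:replica-symmetry-breaking-potts-asymptotic} come from explicit large-$d$ (respectively large-$q$) evaluation of $\mathcal B$. For the lower bound, I would use the frozen-cluster witness $\nu$, which puts mass on near-deterministic messages; its sign change gives the constant. For the upper bound, I would use a bound on $\mathcal B$ that is uniform over $\nu$, via concavity estimates in the spirit of \cite{coja2016chromatic}.
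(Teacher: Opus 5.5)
Your Step 1 is essentially the paper's argument. One correction: the interpolation bound controls $\E{\log Z_G}$ rather than $\E{Z_G^m}$, so you finish with concentration rather than Markov. Closing Step 3 with a truncation, small subgraph conditioning and contiguity is also a legitimate alternative to the paper's change of measure. The logical skeleton, however, has two real holes.

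First, Step 2 is circular exactly where it is used. You rule out nontrivial BP fixed points by invoking $\mathcal B\le0$. But ``thermodynamic RS fails $\Rightarrow$ geometric RS fails'' is needed only above the threshold, where $\mathcal B$ has a positive witness, so you never show that geometric RS fails there. The supporting claims also fail:
\begin{itemize}
\item The message mean is not pinned by the annealed free energy. The occupation density is the derivative of the \emph{quenched} free energy, which is the unknown.
\item The cavity limit is the $m=0$ functional $\E{\log V^{\mathrm{BP}}}-\frac d2\E{\log E^{\mathrm{BP}}}$. $\mathcal B$ is an $m=1^-$ quantity and does not control it.
\item Showing that other fixed points have value at most $\phi$ points the wrong way. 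The danger is precisely a nontrivial minimizer with value below $\phi$.
\end{itemize}
The paper's implication uses no variational input. It matches the cavity lower bound $b_\star=\min_{T\eta=\eta}\mathcal F_0(\eta)$ with the fractional upper bounds $\mathcal F_m(\eta)$, $m\in(0,1)$. A Hoeffding decomposition, together with the BP identity $1-RQ_d=(1-R)V^{\mathrm{BP}}$, shows that every non-Dirac fixed point has $\partial_m\mathcal F_m(\eta)|_{m=0^+}=-\frac14\sum_{k\ge3}(k-2)a_k<0$, where $a_k$ is the degree-$k$ Hoeffding variance of $\log V^{\mathrm{BP}}$. Hence the minimizer is $\delta_{\theta_\alpha}$. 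For Potts, color symmetry forces uniform messages. For colorings, a separate relative-entropy argument on products of symmetric groups replaces the cavity method.

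Second, monotonicity cannot be skipped. If the threshold is $\inf S_d$, nothing yields failure at every larger parameter unless $S_d$ is upward closed. It is also \emph{strict} monotonicity that supplies the slack your Step 3 needs: a unique maximizer of $\mathcal B$, and, via the fourth-order expansion around $\delta_{\*\pi}$, strict Kesten--Stigum. Without strict Kesten--Stigum the cycle series in small subgraph conditioning diverges. ``Differentiate in $\lambda$'' does not do this, since $\mathcal P_{\*\pi}$ moves with the parameter. The paper couples these spaces explicitly:
\begin{itemize}
\item For hardcore, $h=(1+\theta)u-\theta$ makes the edge factor $1-h_1h_2$ independent of $\theta$, while $\E{V\log V}$ strictly increases by size-biasing and supermodularity.
\item For Potts, contracting messages toward $\*U_q$ gives $M_{\beta'}\*\mu'=M_\beta\*\mu$, and convexity of $x\log x$ handles the edge term.
\item For colorings, $k\mapsto\mathcal B_{k,0}$ is convex by mutual-information increments.
\end{itemize}

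The analytic core of Step 3 is also missing. The exponential planted-overlap bound cannot come from the second-moment exponent over overlap matrices. That exponent is positive away from $\*\pi\otimes\*\pi$ in part of the RS phase: the bare second moment blows up there, and even the balanced-coloring second moment only reaches about $(2q-2)\log q$. Moreover, for an atypical-overlap event $A$, the planted overlap probability $\E[\mathbb Q_n]{\mu_G^{\otimes2}(A)}=a_n^{-1}\E{\mathcal Z_G\,\mu_G^{\otimes2}(A)}$ carries a factor $\mathcal Z_G^{-1}$ relative to the paired sum, so it is not a second moment at all.

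The paper's mechanism is random pinning:
\begin{itemize}
\item Reveal planted spins on a Bernoulli-$s$ set.
\item Bound the pinned log-likelihood by a cavity computation on partial pairings: $\frac1n\E[\mathbb Q_n]{\ell_s}\le\max_\nu\{\mathcal B(\nu)+ds\,\E[\nu]{\sum_a\pi_a((M\*\mu)_a-1)^2}\}+o(1)$. Under unique maximization this is $o(s)$.
\item Prove an exponential upper tail for $\ell_s$ from edit-Lipschitz bounds and profile concentration.
\item Compare with Bernstein bounds for the revealed overlap under the product law.
\end{itemize}
At $\beta=0$ this must be run at a small positive $\beta$ with the exponent tracked, since the Lipschitz constants degrade like $\log(1/\beta)$.

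Finally, in the asymptotics the hard (frozen or one-hot) witness only bounds the threshold from above. You plan to get $\sup_\nu\mathcal B\le0$ near $2\log d/d$ or $(2q-1)\log q$ directly from concavity estimates, but you give no such estimate. The paper does not prove this uniform bound directly; it derives it from free-energy inputs. For hardcore, the boosted second moment of Gerencs\'er--Harangi gives $\frac1n\E{\log Z_G(\alpha)}\to\phi_d(\alpha)$ at an explicit $\alpha$, and the strict-gap lemma converts this into $\mathcal B_{d,\alpha}\le0$. For colorings, the Poisson-degree bound of Coja-Oghlan, Krzakala, Perkins and Zdeborov\'a is transferred to fixed degree by degree convexity and Jensen.
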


\subsection{Related Work}

\paragraph{Fully connected mean-field spin glasses.}
Alongside diluted systems, \emph{fully connected} spin glasses constitute the other principal family of disordered mean-field models. In these systems, random interactions couple every pair of spins---or, in the case of $p$-body interactions, every $p$-tuple. Canonical examples include the Sherrington--Kirkpatrick (SK) model~\cite{kirkpatrick1975solvable,kirpatrick1978infinite}, the Ising and spherical $p$-spin models~\cite{gross1984simplest,gardner1985spin,crisanti1992spherical}, and fully connected vector and Potts spin glasses~\cite{deAlmeida1978infinite,gabay1981coexistence,gross1985mean}.

The most prominent example of fully connected mean-field spin glasses is the SK model, which is central to the development of the modern cavity method and theory of replica symmetry breaking~\cite{parisi1979infinite,mezard1986SK}. The SK model is a fully connected mean-field model of disordered magnets, in which every pair of Ising spins interacts through Gaussian random couplings that can favor either alignment or opposition.  In the absence of an external field, the SK model
undergoes a replica symmetry breaking phase transition at inverse temperature $\beta = 1$ under the standard variance-$1/n$ normalization of the couplings~\cite{talagrand2010mean,panchenko2013Sherrington}.

Substantial recent work has developed efficient sampling and counting algorithms throughout the replica-symmetric phase of the zero-field SK model. El Alaoui, Montanari, and Sellke gave polynomial-time sampling algorithms throughout $\beta < 1$ with vanishing normalized $2$-Wasserstein distance error, based on algorithmic stochastic localization~\cite{celetano2024sudakov,alaoui2025sampling}. Bencs et al. proved zero-freeness throughout $\beta < 1$ and obtained deterministic quasipolynomial-time approximation algorithms~\cite{bencs2026zeros}, whose running time was recently improved to deterministic polynomial time~\cite{schmidhuber2026polynomial,harbuzova2026color}.

There has also been a long line of work focusing on efficient sampling algorithms with vanishing {total-variation} error, with the tractable regime sequentially improved from $\beta<1/4$~\cite{bauerschmidt2019simple,eldan2022spectral,anari2022entropic,CE22} to $\beta<\beta^*\approx0.295$~\cite{anari2024trickle}, $\beta<1/2$~\cite{davies2026potential,davies2026weak,wang2026optimal}, $\beta<1/2+c$ for some small constant $c>0$~\cite{boban2026rank}, and finally very recently to $\beta<1$~\cite{lee2026potentialhessianascentiv,hou2026simulated,du2026sampling,OpenAISKCutoff26,OpenAISKSpectralGap26}. Recent manuscripts also claim polynomial-time mixing at the critical point
$\beta=1$~\cite{OpenAISKCriticalMixing26}, together with complementary
lower bounds on critical slowing down~\cite{OpenAISKCriticalSlowing26}.

\paragraph{Erd\H{o}s--R\'enyi random graphs.} Another prominent model of sparse random graphs, which also serves as a standard interaction geometry for diluted mean-field models in statistical physics, is the \emph{Erd\H{o}s--R\'enyi random graph} $G(n,d/n)$ with a fixed average degree $d$. This geometry models fluctuating connectivity, characterized by asymptotically Poisson-distributed degrees. It provides the classical setting for Poisson-diluted spin-glass models, including the well-known Viana--Bray model~\cite{viana1985phase}.

Cavity analyses of the hardcore and antiferromagnetic Potts models predict analogous replica-symmetry-breaking phenomena on both Erd\H{o}s--R\'enyi and random regular graphs, albeit with differing phase transition thresholds~{\cite{florent2008Potts,zhang2009stability,barbier2013hardcore}}. Rigorous results include establishing Bethe free-energy formulas for the hardcore model within uniqueness regimes~\cite{dembo2013factor}, characterizing the variational condensation threshold for proper $q$-colorings and the antiferromagnetic Potts model on Erd\H{o}s--R\'enyi graphs~\cite{coja2017information}, {and determining critical Ising magnetization and free-energy fluctuations~\cite{prodromidis2026distribution}}. For broader context, we also direct the reader to results concerning existence thresholds~\cite{frieze1990independence,achlioptas2005two,coja2013chasing}, {solution-space geometry and algorithmic barriers~\cite{achlioptas2008algorithmic},} search algorithms~\cite{mcdiarmid1984colouring,achlioptas1997analysis}, and approximate counting and sampling~{\cite{sinclair2017spatial,yin2016sampling,efthymiou2022sampling,efthymiou2023mixing,li2026rapid}}.

\paragraph{Random regular bipartite graphs.}
The study of spin systems has also been extended to \emph{random regular bipartite graphs}, a setting that has garnered substantial attention from the theoretical computer science community. On these structures, antiferromagnetic spin systems such as the hardcore and antiferromagnetic Potts models exhibit \emph{phase coexistence} in certain regimes, where the Gibbs distribution supports multiple macroscopically distinct ordered states. In certain regimes, this leads to exponentially slow mixing of single-site Glauber dynamics~{\cite{dyer2002counting,mossel2007hardness,li2026counting}} and supplies gadgets for approximate-counting hardness reductions~\cite{sly2014computational,galanis2015inapproximability,cai2016hardness}.

Consequently, this phase coexistence yields a distinctly different computational landscape on random regular bipartite graphs. To bypass the failure of local dynamics at high fugacities, Jenssen, Keevash, and Perkins~\cite{jenssen2020algorithms} developed an FPTAS and an efficient sampling algorithm for $\lambda > 50(\log d)^2/d$ (assuming sufficiently large $d$). Chen et al.~\cite{chen2022sampling} subsequently extended the range to $\lambda > 100\log d/d$, obtaining a fully polynomial-time randomized approximation scheme (FPRAS) and an efficient sampler and noting that an FPTAS also follows by interpolation. These guarantees hold with high probability over the random regular bipartite graph. More recently, bridging the gap from the low-fugacity regime, Kocurek, Oveis Gharan, and Tjowasi~\cite{kocurek2026sampling} obtained an FPRAS for $0 < \lambda \le c/\sqrt{d}$ for an absolute constant $c > 0$, thereby successfully covering all positive fugacities $\lambda > 0$ for sufficiently large $d$.  We refer the reader to~{\cite{liao2019counting,galanis2021fast,li2026counting,li2026fpras}} for some other related developments.

\paragraph{Lifting probabilistic arguments to the complex plane.}
As we will soon see in the next section, the zero-freeness results established in this work fall into the broader paradigm of \emph{lifting probabilistic arguments to the complex plane}.

There has been a recent surge of results adopting this methodology. For instance, Liu, Wang, Yin, and Yu~\cite{liu2025phase} lifted the information-percolation framework originally designed to prove the rapid mixing of Glauber dynamics to establish zero-freeness for hypergraph independent sets, an approach subsequently extended to hypergraph colorings~\cite{liu2026zero}.

{In random matrix theory, complex second-moment bounds on characteristic polynomials have been combined with tightness and trace limits to study the spectral radius of matrices with i.i.d.\ entries~\cite{bordenave2022convergence}, and with Jensen's formula to obtain non-asymptotic eigenvalue bounds for random matrices, including random regular graphs~\cite{mohanty2025eigenvalue}.} Additionally, Bencs et al.~\cite{bencs2026zeros} combined reweighted second-moment estimates with Jensen's formula to obtain zero-free regions for finite Gaussian mixtures of $p$-spin interactions with $p\ge2$ and Ising or spherical spins.

Further complementing this line of research, several recent works have successfully extended real-valued contraction arguments to the complex plane~{\cite{shao2020contraction,liu2025correlation,bencs2026deterministic,chen2026connective,shao2026coupling}}.

% END INCLUDED FILE: chapters/01-introduction-v2.tex
% BEGIN INCLUDED FILE: chapters/02-technical-overview.tex
\section{Technical Overview and Organization}
We now give a technical overview of our proof strategy and outline the organization of the paper. We focus on how we establish the zero-free regions for the hardcore and antiferromagnetic Potts models (\Cref{thm:hardcore-zero} and \Cref{thm:potts-zero}), as well as {the replica-symmetry characterizations (\Cref{thm:hardcore-rs-transition}; see \Cref{thm:hardcore-three-characterizations}, \Cref{thm:potts-three-characterizations}, and \Cref{thm:coloring-three-characterizations})} established along the way. General graph and probabilistic notation used below (including the $d$-regular configuration model $\mathrm{CM}=\mathrm{CM}_{n,d}$ and the normalized partition function $\mathcal Z_G=Z_G/\e^{n\phi}$) is collected in \Cref{subsec:preliminaries}.

At a high level, our approach combines the complex-analytic perspective of Bencs et al.~\cite{bencs2026zeros}---who established zero-freeness for the Sherrington--Kirkpatrick (SK) model up to its RSB threshold via complex second moments---with the variational cavity and planted overlap truncation framework developed for sparse random graphs~\cite{coja2017information,coja2018charting,coja2019spin,coja2020inference,coja2022ising}.
To establish complex zero-freeness throughout the replica-symmetric phase on random regular graphs in the presence of hard constraints (the hardcore model and proper $q$-colorings at $\beta=0$), we organize the proof around three components (see \Cref{fig:proof-structure}):
\begin{enumerate}
 \item \emph{Reinforced small subgraph conditioning (\Cref{subsec:overview-rssc}):}
  In the SK setting, Bencs et al.~\cite{bencs2026zeros} use an analytic random reweighting together with Jensen's formula and complex moment estimates to handle partition-function fluctuations. For the random regular graph models and parameter regimes considered here, the relevant fluctuations are governed by discrete short-cycle counts. We formulate \emph{reinforced small subgraph conditioning (RSSC)}, which lifts the classical Robinson--Wormald small subgraph conditioning method~\cite{robinson1992almost,robinson1994almost,janson1995random} to random holomorphic functions and deduces complex zero-freeness from real-line cycle conditioning and locally uniform complex second-moment bounds.
  \item \emph{Charting the replica-symmetric phase (\Cref{subsec:charting-rs}):} To connect our zero-free regions to the replica-symmetric phase, we show that averaged pairwise decorrelation (geometric replica symmetry), convergence of the free-energy density to the Bethe prediction (thermodynamic replica symmetry), and the Bethe variational criterion (variational replica symmetry) identify a common replica symmetry breaking threshold in each model (\Cref{thm:hardcore-rs-transition}). The proof combines cavity arguments and fractional-moment bounds with entropy estimates and planted-overlap concentration. We obtain this three-way phase characterization for the hardcore and antiferromagnetic Potts models and extend it to proper colorings, linking the variational condition used in our zero-freeness proof to the geometric and thermodynamic descriptions of the phase.
  \item \emph{Planted overlap concentration and graph truncation under variational replica symmetry (\Cref{subsec:overview-variational}):}
  On sparse random graphs, the unrestricted second moment can grow exponentially relative to the square of the first moment before RSB~\cite{coja2013chasing,coja2016chromatic}. Following the cavity-method framework of~\cite{coja2017information,coja2018charting,coja2020inference,coja2022ising}, we replace the bare second-moment criterion by the variational replica symmetry condition, i.e., the nonpositivity of the \emph{Bethe functional}, and {establish exponential concentration of planted overlaps (\Cref{lem:exponential-overlap}) and an exponentially accurate typical-graph truncation (\Cref{lem:graph-truncation}), accommodating hard constraints to obtain complex zero-freeness throughout the interior of the variational replica-symmetric regime (\Cref{thm:variational-zero-free}).}
\end{enumerate}

We now discuss each of these three steps in greater detail.

\subsection{Second Moment Method and Reinforced Small Subgraph Conditioning}\label{subsec:overview-rssc}
The second moment method is a classical strategy for proving that a real, nonnegative partition function satisfies $Z > 0$ with high probability: if one establishes the sharp second-moment bound
\[
\E{Z^2}=(1+o_n(1))\E{Z}^2,
\]
then the Paley--Zygmund inequality immediately yields
\[
  \Pr{Z>0}\geq \frac{\E{Z}^2}{\E{Z^2}}=1-o_n(1).
\]
On random $d$-regular graphs, one performs moment calculations in Bollob\'as's \emph{configuration model} $\mathrm{CM}$~\cite{bollobas1980probabilistic} (formally defined in \Cref{subsec:preliminaries}). However, a naive calculation in $\mathrm{CM}$ typically yields only the weaker bound
\[
  \E{Z^2}=O(1)\cdot \E{Z}^2,
\]
because the partition function suffers from $O(1)$ multiplicative fluctuations driven by the counts $X_{\ell,n}$ of \emph{short cycles} of length $\ell$ in the underlying graph {(\Cref{def:short-cycle-factors})}.

This obstacle was famously resolved on the real line by Robinson and Wormald via the \emph{small subgraph conditioning method}~\cite{robinson1992almost,robinson1994almost,janson1995random}, originally introduced to prove that random regular graphs are asymptotically almost surely Hamiltonian. Their key insight is that the $O(1)$ multiplicative fluctuations of $Z/\E{Z}$ are completely explained by the counts of short cycles in the graph: under the full small subgraph conditioning hypotheses---when the short-cycle counts are asymptotically independent Poisson random variables, {the Poisson-weighted squared short-cycle factors are summable}, and the mixed factorial moments and second moment match the resulting cycle-factor predictions---$Z/\E{Z}$ converges in distribution to a random infinite product of per-cycle factors, which is strictly positive almost surely as long as no individual cycle factor vanishes.

To establish zero-free regions in the complex plane, we introduce \emph{reinforced small subgraph conditioning (RSSC)} (\Cref{thm:reinforced-ssc} in \Cref{sec:reinforced-ssc}), which lifts small subgraph conditioning from a single real parameter to random elements of the space $H(D)$ of holomorphic functions on a complex domain $D\subseteq\=C$, equipped with the topology of locally uniform convergence {(\Cref{subsec:preliminaries})}.

The idea is to combine classical small subgraph conditioning along a real interval $I\subset D\cap\=R$ with a complex second-moment bound $\E{|Y_n(z)/\E{Y_n(z)}|^2}=O(1)$ uniform on each compact subset of $D$, where $Y_n$ is the typical-graph truncation used below.
By Montel's theorem and the identity theorem, the complex second-moment bound provides tightness while the real-line limit uniquely pins down the limiting random holomorphic function across all of $D$ as a holomorphic infinite product of cycle factors.

Conditioning on the absence of self-loops and parallel edges (to pass from $\mathrm{CM}$ to simple $d$-regular graphs) removes the $\ell=1,2$ power factors, so under locally uniform summability $\sum_{\ell\ge1}\lambda_\ell|\delta_\ell(z)|^2<\infty$ of the Poisson means $\lambda_\ell=(d-1)^\ell/(2\ell)$ and short-cycle factors {$\delta_\ell(z)$ (\Cref{def:short-cycle-factors})} and provided the remaining {per-cycle} factors $1+\delta_\ell(z)$ ($\ell\ge3$) are nowhere zero on $D$, this limiting function is almost surely nowhere zero on $D$. Continuity of the minimum modulus on each fixed nonempty compact subset $K\subset D$ then implies that $Y_n$ is zero-free on $K$ with high probability. This forces the typical-graph event, on which $Y_n$ and the full partition function have the same zeros.

\begin{figure}[t]
\centering
\begin{tikzpicture}[
  >=Latex,
  box/.style={
    draw=black!75,
    rounded corners=3pt,
    fill=gray!6,
    inner sep=5pt,
    align=center,
    font=\small
  },
  corebox/.style={
    draw=blue!65!black,
    line width=0.7pt,
    rounded corners=3pt,
    fill=blue!4,
    inner sep=5pt,
    align=center,
    font=\small
  },
  goalbox/.style={
    draw=black!85,
    line width=0.8pt,
    rounded corners=3pt,
    fill=gray!14,
    inner sep=5.5pt,
    align=center,
    font=\small
  },
  lbl/.style={
    font=\scriptsize,
    text=black,
    align=center
  }
]
  % Top row: interiors of the geometric and variational replica-symmetric regimes
  \node[box, text width=5.0cm] (rs) {
    \textbf{{Interior of the Geometric\\Replica-Symmetric Regime}}\\[1pt]
    {\Cref{def:replica-symmetry}}
  };
  \node[corebox, text width=5.6cm, right=32mm of rs] (cond1) {
    \textbf{{Interior of the Variational\\Replica-Symmetric Regime}}\\[1pt]
    {$\mathcal B(\nu)\le0$ (\Cref{def:rsb-bethe-functional})}
  };
  \draw[<->, thick] (rs) --
    node[above=1pt, lbl] {{\Cref{thm:hardcore-rs-transition} (\Cref{sec:rs-identification-local})}}
    node[below=1pt, lbl] {{see \Cref{fig:charting-rs}}}
    (cond1);

  % Second row: aligned with left edge of rs and right edge of cond1
  \node[box, text width=6.4cm, anchor=north west] (cond2) at ([yshift=-18.5mm]rs.south west) {
    \textbf{{Unique Maximizer of Bethe Functional}}\\[1pt]
    \Cref{cond:unique-variational-maximizer}: $\mathcal B(\nu)<0$ for $\nu\ne\delta_{\*\pi}$
  };
  \node[box, text width=6.4cm, anchor=north east] (cond3) at ([yshift=-18.5mm]cond1.south east) {
    \textbf{Strict Kesten--Stigum Stability}\\[1pt]
    \Cref{cond:strict-KS}: $(d-1)\max_{a\ge2}|\kappa_a|^2<1$
  };

  \draw[->, thick] ([xshift=-10mm]cond1.south) -- ++(0,-9mm) -| node[pos=0.32, above=1pt, lbl] {Parameter/degree monotonicity\\(\Cref{lem:model-variational-conditions}, \Cref{lem:coloring-variational-conditions})} (cond2.north);
  \draw[->, thick] (cond1.south) -- (cond1.south |- cond3.north) node[midway, right=2pt, lbl] {4th-order expansion\\(\Cref{lem:model-variational-conditions}, \Cref{lem:coloring-variational-conditions})};

  % Third row: Global overlap bound & Local moments
  \node[box, text width=6.4cm, below=7.5mm of cond2] (global) {
    \textbf{Exponential Planted Overlap Bound}\\[1pt]
    Atypical overlap $\le \e^{-\gamma n}$ under $\mathbb Q_n$\\
    (\Cref{lem:exponential-overlap})
  };
  \node[box, text width=6.4cm, below=7.5mm of cond3] (local) {
    \textbf{Complex Local Moments \& Cycles}\\[1pt]
    Gaussian Hessian \& $\sum_\ell\lambda_\ell|\delta_\ell(z)|^2<\infty$\\
    (\Cref{lem:local-profile-moments})
  };

  \draw[->, thick] (cond2) -- node[right=2pt, lbl] {Random pinning (\Cref{lem:revealed-entropy-bound})} (global);
  \draw[->, thick] (cond3) -- node[right=2pt, lbl] {Complex Laplace} (local);

  % Fourth row: RSSC -> Zero-Freeness
  \coordinate (mid34) at ($(global.south)!0.5!(local.south)$);
  \node[corebox, text width=13.8cm, anchor=north] (rssc) at ([yshift=-7.5mm]mid34) {
    \textbf{Reinforced Small Subgraph Conditioning}\\[1pt]
    $Z_G$ is zero-free near $t$ with probability $1-o_n(1)$\\ (\Cref{thm:variational-zero-free} via \Cref{lem:graph-truncation} and \Cref{thm:reinforced-ssc})
  };

  \draw[->, thick] (global.south) -- (global.south |- rssc.north);
  \draw[->, thick] (local.south) -- (local.south |- rssc.north);

  % Bottom row: Main Zero-Free Theorems & Counting Algorithms
  \node[goalbox, text width=13.8cm, below=6.5mm of rssc] (main) {
    \textbf{Main Zero-Free Regions} (\Cref{thm:hardcore-zero}, \Cref{thm:potts-zero}, \Cref{cor:ising-zero})
    $\;\xrightarrow{\text{Barvinok's interpolation~\cite{barvinok2016combinatorics,patel2017deterministic}}}\;$
    \textbf{FPTAS} (\Cref{cor:hardcore-counting}, \Cref{cor:potts-counting}, \Cref{cor:ising-counting})
  };

  \draw[->, thick] (rssc) -- (main);
\end{tikzpicture}
\caption{Overview of the proof architecture across \Cref{sec:reinforced-ssc}, {\Cref{sec:rs-identification-local}, and \Cref{sec:variational-rs}}. The implications to the strict conditions apply at nontrivial parameters; the independent-spin endpoints are handled directly. The counting guarantees hold with high probability.}
\label{fig:proof-structure}
\end{figure}

\subsection{Charting the Replica-Symmetric Phase}\label{subsec:charting-rs}
{The second component of our framework bridges geometric replica symmetry, thermodynamic replica symmetry, and variational replica symmetry.
While reinforced small subgraph conditioning reduces zero-freeness to second-moment control, verifying complex second-moment bounds on sparse random graphs encounters a fundamental obstruction: rare graph fluctuations cause the bare second moment to blow up exponentially inside the replica-symmetric phase.
As discussed in \Cref{subsec:overview-variational}, overcoming this barrier requires truncating the graph ensemble via the exponential concentration of replica overlaps, a property governed technically by the \emph{variational} condition $\mathcal B(\nu)\le0$.
However, the natural physical and algorithmic boundaries of the replica-symmetric phase are defined by the decay of spatial correlations (\emph{geometric replica symmetry}) and the convergence of the free-energy density to the Bethe formula (\emph{thermodynamic replica symmetry}).
To ensure that our zero-free regions extend up to the true replica symmetry breaking threshold, we prove that these three distinct perspectives identify a common critical threshold (\Cref{thm:hardcore-rs-transition}; formally \Cref{thm:hardcore-three-characterizations} and \Cref{thm:potts-three-characterizations}).}

To formalize this equivalence, we consider replica symmetry through three distinct physical and probabilistic lenses:
\begin{itemize}[leftmargin=1.8em,itemsep=0.25em,topsep=0.25em]
  \item \textbf{Geometric replica symmetry (\Cref{def:replica-symmetry}):}
  At the level of spatial correlations, spins at two uniformly chosen vertices become asymptotically independent under the Gibbs measure of the sampled graph, in the sense of averaged pairwise decorrelation.
  \item \textbf{Thermodynamic replica symmetry (\Cref{def:free-energies}):} At the analytic level, the free-energy density converges in probability to the real-analytic Bethe prediction: $n^{-1}\log Z_G \xrightarrow{\mathrm p} \phi$ (\Cref{def:bethe-canonical-form}).
  \item \textbf{Variational replica symmetry (\Cref{def:rsb-bethe-functional}):} At the cavity-field level, the Bethe functional $\mathcal B(\nu)$---originating in the one-step replica-symmetry-breaking cavity method~\cite{bethe1935statistical,mezard2001bethe,coja2019spin} as the expected free-energy shift under local graph increments with boundary messages drawn from $\nu\in\mathcal P_{\*\pi}$---satisfies $\mathcal B(\nu)\le0$ for all trial distributions $\nu$, meaning that the replica-symmetric fixed point is a global maximizer.
\end{itemize}

\begin{figure}[t]
\centering
\begin{tikzpicture}[
  >=Latex,
  box/.style={
    draw=black!75,
    rounded corners=3pt,
    fill=gray!6,
    inner sep=4pt,
    align=center,
    font=\small
  },
  corebox/.style={
    draw=blue!65!black,
    line width=0.7pt,
    rounded corners=3pt,
    fill=blue!4,
    inner sep=4pt,
    align=center,
    font=\small
  },
  lbl/.style={
    font=\scriptsize,
    text=black,
    align=center
  }
]
  \node[box, text width=3.2cm, minimum height=14.5mm] (geom) {
    \textbf{{Geometric RS}}\\[1pt]
    {\footnotesize\Cref{def:replica-symmetry}:\\[-1pt]correlation decay}
  };
  \node[box, text width=3.2cm, minimum height=14.5mm, right=26mm of geom] (thermo) {
    \textbf{{Thermodynamic RS}}\\[1pt]
    {\footnotesize\Cref{def:free-energies}:\\[-1pt]$\frac1n\log Z_G\xrightarrow{\mathrm p}\phi$}
  };
  \node[corebox, text width=3.2cm, minimum height=14.5mm, right=26mm of thermo] (var) {
    \textbf{{Variational RS}}\\[1pt]
    {\footnotesize\Cref{def:rsb-bethe-functional}:\\[-1pt]$\mathcal B(\nu)\le0$}
  };

  \draw[->, thick, draw=blue!70!black] (geom.east) --
    node[above=1.5pt, lbl] {Cavity arguments\\[-1pt](\Cref{subsec:local-rs-free-energy})}
    (thermo.west);

  \draw[->, thick, draw=blue!70!black] (thermo.east) --
    node[above=1.5pt, lbl] {Free-energy gaps\\[-1pt](\Cref{subsec:local-bethe-consequences})}
    (var.west);
  \draw[->, thick, draw=blue!70!black] (var.south) --
    ([yshift=-6mm]var.south) --
    node[below=1.5pt, lbl] {Entropy bound and truncation\\[-1pt](\Cref{subsec:local-variational-geometric})}
    ([yshift=-6mm]geom.south) -- (geom.south);
\end{tikzpicture}
\caption{{Proof structure for charting the replica-symmetric phase in \Cref{sec:rs-identification-local}.}}
\label{fig:charting-rs}
\end{figure}

{
We highlight the key ideas and technical innovations in the proof outlined in \Cref{fig:charting-rs}.

\paragraph{From geometric to thermodynamic replica symmetry (\Cref{subsec:local-rs-free-energy}).}
Under geometric replica symmetry, spins at any fixed number of uniformly sampled vertices asymptotically factorize. This supports cavity calculations of free-energy increments under local graph modifications~\cite[Section~1.2]{coja2019spin}.
Specifically, following the cavity approach of~\cite{coja2019spin}, we use a \emph{two-root coupling} for the hardcore model to relate free-energy increments to the BP Bethe functional. The limiting cavity-message distributions satisfy the distributional BP equation, yielding a lower bound on the quenched free energy by the minimum of this functional over BP fixed points. Together with a fractional-moment interpolation upper bound, this identifies the limiting free energy with that functional minimum. A further fractional-moment argument shows that every minimizing message law is a point mass at the homogeneous fixed point, yielding $n^{-1}\E{\log Z_G}\to\phi_d(\alpha)$; concentration then gives convergence in probability.

\paragraph{From thermodynamic to variational replica symmetry (\Cref{subsec:local-bethe-consequences}).}
We establish the second implication by contraposition: if variational replica symmetry fails, there exists a witness distribution $\nu\in\mathcal P_{\*\pi}$ such that $\mathcal B(\nu)>0$, representing a trial 1RSB phase separation.
To transfer this witness to the macroscopic free energy, we connect the fractional Bethe functional $\mathcal F_m(\eta_m)$, evaluated at a carefully constructed family of message distributions $\eta_m$, to $\mathcal B(\nu)$ by showing that its left derivative at $m=1$ equals $\mathcal B(\nu)$. Consequently, the presence of a positive witness drives the fractional moment strictly below the Bethe prediction, forcing a \emph{strict free-energy gap}: $\limsup n^{-1}\E{\log Z_G}\le\phi-\varepsilon$ for some $\varepsilon>0$. Thus, convergence to $\phi$ forces $\mathcal B\le0$ everywhere.

\paragraph{From variational to geometric replica symmetry (\Cref{subsec:local-variational-geometric}).}
At positive parameters, we close the implication cycle by deducing geometric replica symmetry from concentration of the overlap of two Gibbs replicas.
We first analyze the overlap under the \emph{planted graph distribution}, which reweights the random graph measure by the partition function.
In the interior of the variational replica-symmetric regime, the Bethe functional is uniquely maximized at the homogeneous message law; an entropy bound for revealed spins converts this uniqueness into exponential concentration of the planted overlap.
To return to the quenched ensemble via change of measure, we combine this entropy bound with the concentration of $\log Z_G$ to obtain exponential lower-tail control of the partition function.
This control ensures that the planted overlap bound survives the change of measure, yielding overlap concentration in the original ensemble and establishing geometric replica symmetry.
At the hard-coloring endpoint, graph truncation and classical small subgraph conditioning supply the lower-tail estimate needed to obtain overlap concentration.}

\subsection{Zero-Freeness under {Variational Replica Symmetry}}\label{subsec:overview-variational}
With the phase characterization established, we return to the complex second-moment obstruction described above. Rare graphs with correlated replica clusters can inflate $\E[\mathrm{CM}]{Z_G^2}$ by an exponential factor even strictly inside the replica-symmetric phase~\cite{achlioptas2004chromatic,achlioptas2005two,kemkes2010chromatic,coja2013chasing,coja2016chromatic,coja2017information}.

\paragraph{Truncating graphs via overlap concentration.}
{We overcome this barrier in \Cref{sec:variational-rs} by restricting RSSC to a high-probability event $E_n$ that places an upper bound on the normalized partition function at a real parameter. We set $Y_n(z)=\mathcal Z_G(z)$ on $E_n$ and $Y_n(z)=0$ otherwise, where $\mathcal Z_G$ is defined in \Cref{def:bethe-canonical}. To control the second moment, we study the \emph{overlap} (\Cref{def:local-second-moment}) of two independent configurations drawn from the Gibbs measure $\mu_G$~\cite{coja2017information,coja2018charting,coja2020inference}.}
For the fixed-degree model $\mathrm{CM}_{n,d}$ used in the zero-free proof, this concentration is analyzed under the \emph{planted distribution}, which reweights $G$ proportionally to $Z_G$.

While an $o_n(1)$ bound on atypical overlaps under the planted distribution suffices at real parameters, extending $Z_G(z)$ to a complex neighborhood of a positive parameter can change its magnitude by an exponential factor in $n$, which would overwhelm an $o_n(1)$ truncation error.
We therefore strengthen the random-pinning method of~\cite{coja2020inference} to prove that atypical overlaps occur with \emph{exponentially small} probability under the planted distribution (\Cref{lem:exponential-overlap}).

{At positive parameters, combining this planted overlap bound with the upper cutoff gives an exponentially small total contribution from atypical pairs on $E_n$; it does not require every graph in $E_n$ to have a small atypical-overlap weight. At the coloring endpoint, the truncation estimate is obtained by continuation from a sufficiently small fixed positive parameter. The resulting truncation $Y_n(z)$ (\Cref{lem:graph-truncation}) has a controlled second moment across a small complex disk, so RSSC applies.}

\paragraph{From {variational replica symmetry} to zero-freeness and approximate counting.}
{Our unified zero-freeness criterion (\Cref{thm:variational-zero-free}, stated and proved in \Cref{sec:variational-rs}) assumes the following two strict conditions at a nontrivial parameter $t$; the independent-spin endpoints $\alpha=0$ and $\beta=1$ are handled directly. On the replica-symmetric side of the respective thresholds, the conditions follow from parameter slack at positive interaction and real-degree slack at the coloring endpoint (\Cref{lem:model-variational-conditions} and \Cref{lem:coloring-variational-conditions}):}
\begin{enumerate}
  \item \emph{Unique maximizer of the Bethe functional (\Cref{cond:unique-variational-maximizer}: $\mathcal B(\nu)<0$ for $\nu\ne\delta_{\*\pi}$):}
  Obtained via strict parameter or degree monotonicity.
  Unlike bare second-moment bounds~\cite{kemkes2010chromatic,coja2016chromatic,Bhatnagar2016decay}, which require {independent replicas around $\*\pi\otimes\*\pi$ to dominate the unrestricted second moment}, \Cref{cond:unique-variational-maximizer} only requires $\delta_{\*\pi}$ to uniquely maximize {the Bethe functional} $\mathcal B(\nu)$; via random pinning (\Cref{lem:revealed-entropy-bound}), this suffices to establish exponential planted overlap concentration (\Cref{lem:exponential-overlap}) and construct $E_n$.
  \item \emph{Strict Kesten--Stigum stability (\Cref{cond:strict-KS}: $(d-1)\max_{a\ge2}|\kappa_a|^2<1$):}
  Obtained via a fourth-order Taylor expansion around $\delta_{\*\pi}$~\cite{coja2018charting}, where $(\kappa_a)_{a\ge2}$ are the nonconstant eigenvalues of the tree broadcasting operator (\Cref{def:bethe-canonical}). On the central overlap box around $\*\pi\otimes\*\pi$, \Cref{cond:strict-KS} ensures a negative-definite local second-moment Hessian and a convergent short-cycle series $\sum_{\ell\ge1}\lambda_\ell|\delta_\ell(z)|^2<\infty$, while the explicit hardcore and Potts eigenvalues ensure $1+\delta_\ell(z)\ne0$ for $\ell\ge3$ ({\Cref{def:short-cycle-factors}, \Cref{subsec:pointwise-conditions}, \Cref{subsec:common-moments}, and} \Cref{lem:local-profile-moments}).
\end{enumerate}
{Finally, in \Cref{subsec:main-zero-free-counting-proofs}, combining \Cref{thm:variational-zero-free} across compact parameter intervals with the phase characterizations of \Cref{sec:rs-identification-local} (\Cref{thm:hardcore-three-characterizations}, \Cref{thm:potts-three-characterizations}, and \Cref{thm:coloring-three-characterizations}) and invoking Barvinok's Taylor polynomial interpolation~\cite{barvinok2016combinatorics} and Patel--Regts connected-subgraph enumeration~\cite{patel2017deterministic} (\Cref{lem:barvinok-interpolation} and \Cref{lem:patel-regts}) yields \Cref{thm:hardcore-zero}, \Cref{thm:potts-zero}, \Cref{cor:ising-zero}, and the deterministic counting algorithms in \Cref{cor:hardcore-counting}, \Cref{cor:potts-counting}, and \Cref{cor:ising-counting}.}

\subsection{Preliminaries}\label{subsec:preliminaries}

\paragraph{Notations.} We first present some notations that will be used throughout the paper.
\begin{itemize}[leftmargin=1.5em,itemsep=0.25em,topsep=0.25em,parsep=0pt]
  \item {\emph{Sets, combinatorics, and asymptotics.}}
  {For an integer $m\ge1$, we write $[m]\defeq\{1,\ldots,m\}$ (with $[0]\defeq\varnothing$).} We write $\one{A}$ for the indicator of $A$, and for $\*x,\*y\in S^n$, {$d_{\mathrm{Ham}}(\*x,\*y)\defeq\sum_{v=1}^n\one{x_v\ne y_v}$ for their Hamming distance. For integers $m\ge1$ and $k\ge0$, we write $m!!\defeq m(m-2)\cdots 1$ for odd $m$, $m!!\defeq m(m-2)\cdots 2$ for even $m$ (with $0!=0!!=(-1)!!=1$), and $(x)_k\defeq x(x-1)\cdots(x-k+1)$ (with $(x)_0\defeq 1$) for the falling factorial.} All logarithms are natural, with $0\log0=0$, $\log0=-\infty$, and $\exp(-\infty)=0$. {We write $f\asymp g$ when $f=\Theta(g)$, and subscripts on asymptotic notation such as $O_\theta(\cdot)$ indicate parameters on which the implicit constants may depend.}

  \item {\emph{Vectors, matrices, and norms.}}
  We write {$\allone=\allone_m$ and $\*0=\*0_m$ for the all-ones and all-zeros vectors,} $\*U_m\defeq m^{-1}\allone_m$ for the uniform distribution on $[m]$, {and $\*e_a\in\=R^m$ ($a\in[m]$) for the $a$th standard basis vector.} For arrays and matrices, $\|\cdot\|_p$ {($p\in[1,\infty]$)} denotes the entrywise $\ell_p$ norm {(including $\|\cdot\|_\infty$ for the entrywise maximum norm)}, {$\|\cdot\|_{\mathrm{op}}$ and $\|\cdot\|_{\mathrm F}$ denote the operator ($\ell_2\to\ell_2$) and Frobenius norms, $\operatorname{tr}(\cdot)$ denotes the matrix trace, $I=I_m$ denotes the identity matrix,} $\diag(\*v)$ denotes the diagonal matrix with diagonal $\*v$, and inequalities such as $P\ge0$ or $\*\pi>0$ on vectors or matrices are understood entrywise.

  \item {\emph{Graphs and multigraphs.}}
  We use $d\ge3$ for an integer regular degree. All $d$-regular graph limits are taken along \emph{admissible} $n$ ($n\ge d+1$ with $dn$ even), as in \Cref{subsec:our-results}. Real $r\ge0$ is used only to interpolate the Bethe functional in \Cref{def:rsb-bethe-functional}. {For a multigraph $G$, write $N_G(u)$ for the neighborhood of $u\in V(G)$, $\operatorname{dist}(u,v)=\operatorname{dist}_G(u,v)$ for the shortest-path distance between $u,v\in V(G)$, $G-v$ and $G-S$ for the subgraphs obtained by deleting a vertex $v$ or vertex subset $S$, $G+F$ for the multigraph obtained by adding a multiset of edges $F$ to $G$, and $\operatorname{mult}_G(e)$ for the multiplicity of an edge $e$.}

  \item {\emph{Probability and information measures.}}
  We write {$\Delta_\Omega$ for the probability simplex on a finite set $\Omega$, $\operatorname{Mult}(S,\*p)$ for the multinomial distribution of type counts in $S\ge1$ independent draws from $\*p\in\Delta_\Omega$, $\mathcal P(S)$ for the space of probability distributions on $S$ (with support $\operatorname{supp}(\cdot)$), $C(S)$ for the space of continuous real-valued functions on $S$,} $\otimes$ for tensor products, {$X_n\xrightarrow{\mathrm p}X$ for convergence in probability,} $X_n\xrightarrow{d}X$ for convergence in distribution, and $X\overset{d}{=}Y$ for equality in distribution. {For deterministic $a_n>0$, $X_n=o_{\mathrm p}(a_n)$ means $X_n/a_n\xrightarrow{\mathrm p}0$.}

  {For a probability measure $\mu$ on $\Omega^V$ and vertices $v_1,\ldots,v_k\in V$, we write $\mu_{v_1,\ldots,v_k}\in\Delta_{\Omega^k}$ (and $\mu_{v_i}\in\Delta_\Omega$) for the marginal distribution of $(\sigma_{v_1},\ldots,\sigma_{v_k})$ under $\sigma\sim\mu$. For configurations $\sigma,\sigma'\in\Omega^V$ on a finite vertex set $V$ ($n=|V|$), we write $\*\rho(\sigma)\in\Delta_\Omega$ and $\*\rho(\sigma,\sigma')\in\Delta_{\Omega\times\Omega}$ for the \emph{empirical vertex marginal} and \emph{empirical overlap} with entries $\rho_a(\sigma)\defeq\frac1n\sum_{v\in V}\one{\sigma_v=a}$ and $\rho_{ab}(\sigma,\sigma')\defeq\frac1n\sum_{v\in V}\one{\sigma_v=a,\,\sigma'_v=b}$.} {A symmetric distribution $Q=(Q_{ab})_{a,b\in\Omega}\in\Delta_{\Omega\times\Omega}$ (called an \emph{edge marginal}, viewed as an $|\Omega|\times|\Omega|$ matrix) has \emph{vertex marginal} $\*\rho\in\Delta_\Omega$ if $Q\allone=\*\rho$.}

  For probability laws $\*p,\*q$, total variation distance and KL divergence are defined by
  \[
    \DTV{\*p}{\*q}\defeq\frac12\sum_x|p_x-q_x|,\qquad
    \DKL(\*p\Vert\*q)\defeq\sum_x p_x\log\frac{p_x}{q_x},
  \]
  with $0\log(0/q)=0$ also at $q=0$ and $p\log(p/0)=+\infty$ for $p>0$.
\end{itemize}

\paragraph{Configuration model and random regular graphs.}
On multigraphs, we adopt the convention that a self-loop forbids its vertex from an independent set, and each edge occurrence contributes to the monochromatic edge count ${m_G(\sigma)\defeq|\{uv\in E(G):\sigma_u=\sigma_v\}|}$.

\begin{definition}[Configuration model]\label{def:config-model}
  For a deterministic vector $\*d=(d_v)_{v\in[n]}\in\=Z_{\ge0}^n$, let $\mathrm{CM}_{\*d}$ be the law of the multigraph induced by a uniformly chosen maximum pairing of the $\sum_{v\in[n]}d_v$ labeled half-edges. If the total is odd, one half-edge remains unmatched.

  For an integer $d\ge3$, write $\mathrm{CM}=\mathrm{CM}_{n,d}\defeq\mathrm{CM}_{d\allone_n}$. Along admissible $n$, this is Bollob\'as's $d$-regular configuration model~\cite{bollobas1980probabilistic}.
\end{definition}
The maximum-pairing convention also permits intermediate graph sizes with odd $dn$ in the cavity arguments.

\begin{definition}[Random regular graphs]\label{def:random-regular-graphs}
  For an integer $d\ge3$ and admissible $n$, let $\+G_{n,d}$ be the uniform law on simple $d$-regular graphs on $[n]$. Equivalently, it is the law of $G\sim\mathrm{CM}_{n,d}$ conditioned on $G$ being simple.
\end{definition}

\begin{lemma}[Simplicity probability~\cite{bollobas1980probabilistic,janson2014simple}]\label{lem:simplicity-prob}
  For every integer $d\ge3$, there exists $c_d>0$ such that for all sufficiently large admissible $n$,
  \[
    \Pr[\mathrm{CM}_{n,d}]{G\text{ is simple}}\ge c_d.
  \]
\end{lemma}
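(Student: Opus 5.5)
The natural route is the classical switching/Poisson argument for the configuration model. Let $X_1$ and $X_2$ denote the numbers of self-loops and of pairs of parallel edges in $G\sim\mathrm{CM}_{n,d}$. Then $G$ is simple iff $X_1=X_2=0$. I will show that, for fixed $d$ and along admissible $n$, $(X_1,X_2)$ converges in distribution to independent Poisson variables with means $\lambda_1=(d-1)/2$ and $\lambda_2=(d-1)^2/4$. This gives
\[
  \Pr[\mathrm{CM}_{n,d}]{G\text{ is simple}}\to \exp\!\bigl(-(d^2-1)/4\bigr)>0,
\]
and then any $c_d$ slightly below this limit works for all sufficiently large admissible $n$.

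The convergence is proved by the method of factorial moments. For fixed $k_1,k_2\ge0$, I write $\E{(X_1)_{k_1}(X_2)_{k_2}}$ as a sum over ordered tuples of $k_1$ distinct potential loops and $k_2$ distinct potential double edges, each specified by its half-edges.

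First I treat the tuples whose half-edges are pairwise disjoint, i.e.\ the structures are vertex-disjoint. A tuple prescribing $m$ pairs is present with probability
\[
  \frac{(dn-2m-1)!!}{(dn-1)!!}=(1+o(1))(dn)^{-m}.
\]
The number of such tuples is $(1+o(1))\bigl(n\binom d2\bigr)^{k_1}\bigl(\binom n2\,2\binom d2^2\bigr)^{k_2}$. The factor $2\binom d2^2$ counts the two matchings of a chosen pair of half-edges at $u$ with a chosen pair at $v$. Multiplying out, the disjoint tuples contribute $\lambda_1^{k_1}\lambda_2^{k_2}(1+o(1))$, with $\lambda_1=\frac{d(d-1)/2}{d}=\frac{d-1}{2}$ and $\lambda_2=\frac{n^2 d^2(d-1)^2/4}{d^2n^2}=\frac{(d-1)^2}{4}$.

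Next I treat the overlapping tuples, which share a vertex. Each such configuration spans fewer vertices relative to its number of prescribed pairs. A direct count then shows its total contribution is $O(1/n)$ for fixed $k_1,k_2$.

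The standard joint factorial-moment criterion for Poisson convergence then gives the claim.

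The only delicate step is the bookkeeping of overlapping configurations, and this is routine. Alternatively, the statement is exactly the classical result of~\cite{bollobas1980probabilistic} (with the uniform-in-$d$ version in~\cite{janson2014simple}). For fixed $d$ I would simply cite it, noting that only the positivity of the limit is needed.
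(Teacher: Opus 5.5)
Your argument is correct, and it is the classical Bollob\'as factorial-moment proof that the paper cites rather than reproves. The paper's own proof of \Cref{lem:local-profile-moments}(3) runs the same cycle-candidate count for every length $\ell$; the cases $\ell=1,2$ are loops and double edges, which gives the limit $\e^{-\lambda_1-\lambda_2}=\e^{-(d^2-1)/4}$ used in the proof of \Cref{thm:reinforced-ssc}, and your counts for $\lambda_1,\lambda_2$ agree with it. One small wording fix: half-edge-disjoint tuples need not be vertex-disjoint (for example, two loops at one vertex when $d\ge4$), so the split should be by vertex-disjointness, which is how you already define the overlapping case.
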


\paragraph{Interaction kernels and normalized partition function.}
To analyze the hardcore and antiferromagnetic Potts models in a unified framework, we use the vertex marginal $\*\pi(t)$ and symmetric edge marginal matrix $P(t)$ of the homogeneous Markov Gibbs measure on the infinite $d$-regular tree $\=T_d$~\cite{mezard2001bethe,mezard2009information,dembo2013factor}, and their holomorphic extensions near $t$. Reweighting edges by the ratio between $P_{ij}$ and the product marginal $\pi_i\pi_j$ yields the normalized interaction kernel, tree broadcasting operator, and normalized partition function used throughout our moment calculations.

\begin{definition}[Interaction kernels and normalized partition function]\label{def:bethe-canonical}
  Fix {a finite spin space $\Omega$ ($|\Omega|\ge2$) and a real parameter $t$.} Let $P(z)$ be a symmetric $|\Omega|\times|\Omega|$ holomorphic matrix near $t$ with $\sum_{i,j\in\Omega}P_{ij}(z)=1$, $P(t)\allone>0$, and $P(s)\ge0$ on a nonempty open real interval whose closure contains $t$. On a neighborhood of $t$ where the \emph{vertex marginal} $\*\pi(z)\defeq P(z)\allone$ has nonzero entries, define the \emph{normalized interaction kernel} $M(z)$ and the \emph{tree broadcasting operator} $K_z$ by
  \[
    M_{ij}(z)\defeq\frac{P_{ij}(z)}{\pi_i(z)\pi_j(z)},\qquad
    K_z\defeq \diag(\*\pi(z))^{-1}P(z)=M(z)\diag(\*\pi(z)).
  \]
  On a multigraph $G$, the \emph{normalized partition function} is defined by
  \[
    \mathcal Z_G(z)
    \defeq\sum_{\sigma:V(G)\to\Omega}\prod_{v\in V(G)}\pi_{\sigma_v}(z)\prod_{uv\in E(G)}M_{\sigma_u,\sigma_v}(z).
  \]
\end{definition}

{We write $\*\pi_z=\*\pi(z)$, $P_z=P(z)$, $M_z=M(z)$, and $K(z)=K_z$ interchangeably when convenient, and define the configuration-model first moment
\[
  a_n(z)\defeq\E[\mathrm{CM}_{n,d}]{\mathcal Z_G(z)}.
\]}
Note that $M(z)\*\pi(z)=\allone$, $K_z\allone=\allone$, {and $\*\pi(z)^\trans K_z=\*\pi(z)^\trans$, so $K_z$ has trivial eigenvalue $\kappa_1(z)=1$ on $\mathrm{span}(\allone)$ and leaves $\{f\in\=C^\Omega:\*\pi(z)^\trans f=0\}$ invariant; we denote its eigenvalues on this subspace (counted with algebraic multiplicity) by $\kappa_2(z),\ldots,\kappa_{|\Omega|}(z)$ (the \emph{nonconstant eigenvalues}, which are real at $z=t$ by reversibility~\cite{kesten1966additional,mossel2003information}). When $\mathcal Z_G(t)>0$ at a real parameter $t$ with $M(t)\ge0$, the corresponding \emph{canonical Gibbs distribution} on $\Omega^{V(G)}$ is
\[
  \mu^{\mathrm{can}}_{G,t}(\sigma)\defeq\mathcal Z_G(t)^{-1}\prod_{v\in V(G)}\pi_{\sigma_v}(t)\prod_{uv\in E(G)}M_{\sigma_u,\sigma_v}(t).
\]} For our two models, $(\*\pi,P)$ {and the nonconstant eigenvalues of $K$} are given explicitly by:
\begin{enumerate}
  \item[(i)] \emph{Hardcore model} ($\Omega=\{0,1\}$, $t=\alpha\in(0,1/2)$):
  \begin{equation}\label{eq:hardcore_kappa}
    \*\pi_\alpha=\begin{pmatrix}1-\alpha\\\alpha\end{pmatrix},\qquad
    P_\alpha=\begin{pmatrix}1-2\alpha&\alpha\\\alpha&0\end{pmatrix},\qquad
    {\kappa_2(\alpha)=-\theta_\alpha\defeq-\frac{\alpha}{1-\alpha}.}
  \end{equation}
  \item[(ii)] \emph{Antiferromagnetic Potts model} ($\Omega=[q]$, $t=\beta\in[0,1]$, $B_\beta\defeq\allone_q\allone_q^\trans+(\beta-1)I_q$):
  \begin{equation}\label{eq:Potts_kappa}
    \*\pi=\*U_q=\frac1q\allone_q,\qquad
    P_\beta=\frac{B_\beta}{q(q-1+\beta)},\qquad
    {\kappa_2(\beta)=\cdots=\kappa_q(\beta)=\kappa_\beta\defeq-\frac{1-\beta}{q-1+\beta},}
  \end{equation}
  so that $\mathcal Z_G(\beta)=q^{-n}((q-1+\beta)/q)^{-|E(G)|}Z_{G,q}(\beta)$ on any multigraph $G$.
\end{enumerate}
For hardcore at fixed $\lambda=\lambda(\alpha)$, {we use $\mu_{G,\mathrm{fug},\lambda}$ for the ordinary hardcore law on any graph $G$}. On $d$-regular graphs it equals $\mu^{\mathrm{can}}_{G,\alpha}$, but these laws generally differ on irregular graphs. For Potts the canonical and ordinary laws agree on every graph with positive partition function. In unified notation, $Z_G(z)$ denotes either partition function and $\mu_G$ denotes $\mu^{\mathrm{can}}_{G,t}$, with the model and Potts spin count $q$ fixed by context {(in particular, $q$ is suppressed in $\mathcal Z_G$)}; the hardcore cavity arguments explicitly use ${\mu_{G,\mathrm{fug},\lambda}}$. {By a \emph{nontrivial parameter} $t$ we mean $t=\alpha\in(0,1/2)$ for the hardcore model or $t=\beta\in[0,1)$ for the antiferromagnetic Potts model (excluding the trivial independent-spin endpoints $\alpha=0$ and $\beta=1$).}

\paragraph{Bethe free-energy density and Bethe canonical form.}
On $d$-regular graphs, the unnormalized partition function $Z_G(z)$ differs from the normalized partition function $\mathcal Z_G(z)$ only by an exponential factor $C(z)^n=\e^{n\phi(z)}$ governed by the (replica-symmetric) Bethe free-energy density $\phi(z)$~\cite{mezard2001bethe,mezard2009information,dembo2013factor}.

\begin{definition}[Bethe free-energy density and Bethe canonical form]\label{def:bethe-canonical-form}
  Given a holomorphic \emph{(replica-symmetric) Bethe free-energy density} $\phi(z)$ near $t$ and scaling factor
  \[
    C(z)\defeq\e^{\phi(z)},
  \]
  the partition function on an $n$-vertex $d$-regular graph $G$ in \emph{Bethe canonical form} is
  \begin{equation}\label{eq:finite-spin-partition}
    Z_G(z)=C(z)^n\sum_{\sigma:V(G)\to\Omega}
         \frac{\prod_{uv\in E(G)}P_{\sigma_u,\sigma_v}(z)}
              {\prod_{v\in V(G)}\pi_{\sigma_v}(z)^{d-1}}
    =C(z)^n\mathcal Z_G(z).
  \end{equation}
\end{definition}

{Notice from~\eqref{eq:finite-spin-partition} that scaling by $C(z)^n$ leaves $Z_G(z)/\E{Z_G(z)}$ (and $Z_G(z)\one{E_n}/\E{Z_G(z)\one{E_n}}=Y_n(z)/\E{Y_n(z)}$ {for any graph truncation $Y_n=\mathcal Z_G\one{E_n}$}) unchanged. For our two models, $\phi$ and $C$ are given explicitly by}
\[
  \phi_d(\alpha)\defeq(d-1)\log(1-\alpha)-\frac d2\log(1-2\alpha),\qquad
  C_d(\alpha)\defeq\e^{\phi_d(\alpha)}=(1-\alpha)^{d-1}(1-2\alpha)^{-d/2},
\]
\[
  \phi_{d,q}(\beta)\defeq\log q+\frac d2\log\frac{q-1+\beta}{q},\qquad
  {C_{d,q}(\beta)\defeq\e^{\phi_{d,q}(\beta)}=q\left(\frac{q-1+\beta}{q}\right)^{d/2}.}
\]

\paragraph{Quenched and annealed free energies.}
\begin{definition}[Quenched, annealed, and thermodynamic replica-symmetric free energies]\label{def:free-energies}
  For a random graph $G$ on $n$ vertices at a real parameter $t$, the \emph{free-energy density} (with $\log0=-\infty$ when $Z_G(t)=0$), \emph{quenched free energy} (when $Z_G(t)>0$ almost surely), and \emph{annealed free energy} are defined by
  \[
    \frac1n\log Z_G(t),\qquad
    \frac1n\E[G]{\log Z_G(t)},\qquad
    \frac1n\log\E[G]{Z_G(t)}.
  \]
  {We say that the model satisfies \emph{thermodynamic replica symmetry} at $t$ if $\frac1n\log Z_G(t)\xrightarrow{\mathrm p}\phi(t)$; for $G\sim\+G_{n,d}$ (or $\mathrm{CM}_{n,d}$) and $t>0$, this is equivalent to $\lim_{n\to\infty}\frac1n\E[G]{\log Z_G(t)}=\phi(t)$ by the concentration bound~\eqref{eq:regular-logz-concentration} below.}
\end{definition}

The quenched free energy averages the log partition function over the graph disorder. {At a fixed positive parameter, concentration of the normalized log partition function identifies the quenched free energy with the typical exponential growth rate of $Z_G(t)$.} The annealed free energy takes the logarithm of the first moment and is at least the quenched free energy by Jensen's inequality. {Under $\mathrm{CM}_{n,d}$ at any fixed positive model parameter, there exists $c>0$ depending only on the parameters such that
\begin{equation}\label{eq:regular-logz-concentration}
 \Pr[\mathrm{CM}_{n,d}]{|\log Z_G-\E[\mathrm{CM}_{n,d}]{\log Z_G}|>u}\le2\e^{-cu^2/n}
 \quad(u>0),\qquad
 \operatorname{Var}_{\mathrm{CM}_{n,d}}(\log Z_G)=O(n).
\end{equation}
Indeed, adding or deleting an edge changes $\log Z_G$ by at most $\log(1+\lambda)$ for hardcore and $|\log\beta|$ for Potts. A pairing switch replaces two edges, requiring at most two deletions and two insertions, so the increments of the pairing-exposure martingale are bounded by a parameter-dependent constant. Azuma--Hoeffding gives the tail bound and integration gives the variance bound. By Cauchy--Schwarz and \Cref{lem:simplicity-prob}, conditioning on simplicity changes $\E{\log Z_G}$ by $O(\sqrt n)$ (so $|\E[\+G_{n,d}]{\log Z_G}-\E[\mathrm{CM}_{n,d}]{\log Z_G}|=O(\sqrt n)$) and preserves the tail bound in~\eqref{eq:regular-logz-concentration} up to the constant factor $c_d^{-1}$.}

\paragraph{{Cavity message distributions and Bethe functional}.}
\begin{definition}[{Cavity message distributions, Bethe functional, and variational replica symmetry}]\label{def:rsb-bethe-functional}
  Fix a real parameter $t$. The space of {\emph{cavity message distributions}} with mean $\*\pi(t)$ is defined by
  \[
    \mathcal P_{\*\pi(t)}\defeq\Bigl\{\nu\in\mathcal P(\Delta_\Omega):\E[\*\mu\sim\nu]{\*\mu}=\*\pi(t)\Bigr\},
  \]
  and $\delta_{\*\pi(t)}\in\mathcal P_{\*\pi(t)}$ denotes the point mass at the constant message $\*\pi(t)$.
  For independent $\*\mu_1,\*\mu_2,\ldots\sim\nu\in\mathcal P_{\*\pi(t)}$ and every integer $k\ge0$, the \emph{cavity vertex and edge factors} and the {\emph{Bethe functional}} are defined by
  \begin{equation}\label{eq:marked-functional}
    {\begin{gathered}
    V_{k,t}\defeq\sum_{a\in\Omega}\pi_a(t)\prod_{i=1}^k(M(t)\*\mu_i)_a,\qquad
    E_t\defeq\*\mu_1^\trans M(t)\*\mu_2,\\
    \mathcal B_{k,t}(\nu)\defeq\E{V_{k,t}\log V_{k,t}}-\frac k2\E{E_t\log E_t}.
    \end{gathered}}
  \end{equation}
  with empty products equal to $1$, and extended to real $r\ge0$ by deterministic piecewise-linear interpolation:
  \[
    \mathcal B_{r,t}(\nu)\defeq(1-r+\lfloor r\rfloor)\mathcal B_{\lfloor r\rfloor,t}(\nu)+(r-\lfloor r\rfloor)\mathcal B_{\lfloor r\rfloor+1,t}(\nu).
  \]
  {We say that the model satisfies \emph{variational replica symmetry} at $(r,t)$ if $\mathcal B_{r,t}(\nu)\le0$ for all $\nu\in\mathcal P_{\*\pi(t)}$.}
\end{definition}
{For Potts, the dependence on $q$ is suppressed in $\mathcal B_{r,t}$.}
{We write $\mathcal P_{\*\pi}=\mathcal P_{\*\pi(t)}$, $\delta_{\*\pi}=\delta_{\*\pi(t)}$, $V_k=V_{k,t}$, $E=E_t$, and $\mathcal B=\mathcal B_{r,t}$ when $(r,t)$ are clear from context. Since $M\*\pi=\allone$, the constant message distribution satisfies $V_k=E=1$ almost surely, so $\mathcal B(\delta_{\*\pi})=0$.}

{
\paragraph{Complex analysis basics.}
{For a subset $S\subseteq\=C$, we write $\ol S$, $\mathrm{int}(S)$, and $\partial S$ for its closure, interior, and boundary (and $\ol z$ for the complex conjugate of $z\in\=C$), and $D(S,\eta)\defeq\{z\in\=C:\inf_{w\in S}|z-w|<\eta\}$ (with $D(z_0,\eta)=D(\{z_0\},\eta)$).} Let $H(D)$ denote the space of holomorphic functions on an open connected region $D\subseteq\=C$, equipped with the topology of locally uniform convergence: a sequence $f_n\in H(D)$ converges to $f\in H(D)$ if $\norm{f_n-f}_K\defeq\sup_{z\in K}\abs{f_n(z)-f(z)}\to0$ for every compact set $K\subset D$.

We first record two standard properties of holomorphic functions in $H(D)$.

\begin{proposition}[Identity theorem]\label{prop:identity}
  Let $f,g\in H(D)$ for some connected region $D$. If $f(z)=g(z)$ for all $z$ in some set $S\subset D$ with an accumulation point in $D$, then $f\equiv g$.
\end{proposition}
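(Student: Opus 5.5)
The plan is to reduce the identity theorem to the local statement that a nonzero holomorphic function has isolated zeros, and then propagate the vanishing of $f-g$ across $D$ by connectedness. Set $h\defeq f-g\in H(D)$ and let $z_0\in D$ be an accumulation point of $S$. By hypothesis $h$ vanishes on $S$, and $h(z_0)=0$ by continuity.

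The first step is local. Choose a disk $D(z_0,r)\subseteq D$, on which $h$ equals its convergent power series $h(z)=\sum_{k\ge0}c_k(z-z_0)^k$. Suppose not all $c_k$ vanish, and let $m$ be the least index with $c_m\ne0$. Then $h(z)=(z-z_0)^m g_1(z)$ with $g_1$ holomorphic on the disk and $g_1(z_0)=c_m\ne0$. By continuity, $g_1$ is nonzero on a smaller disk around $z_0$, so $z_0$ is an isolated zero of $h$. This contradicts the fact that points of $S\setminus\{z_0\}$ accumulate at $z_0$. Hence every $c_k=0$, and $h\equiv0$ near $z_0$.

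The second step is a connectedness argument, which is the only place the hypothesis on $D$ enters. Define
\[
U\defeq\{w\in D: h^{(k)}(w)=0\text{ for all }k\ge0\}.
\]
The set $U$ is nonempty, since it contains $z_0$. It is open: if $w\in U$, the Taylor series of $h$ at $w$ is identically zero, so $h$ vanishes on a disk around $w$, and so do all its derivatives there. It is relatively closed in $D$: each $h^{(k)}$ is continuous, so $U$ is an intersection of closed sets. Since $D$ is connected, $U=D$, and therefore $f\equiv g$.

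I do not expect a real obstacle here; this is a standard result. The one point needing care is the local factorization $h=(z-z_0)^m g_1$, which uses the fact that holomorphic functions are analytic, i.e.\ have convergent power series. That fact I will cite from standard complex analysis via the Cauchy integral formula rather than reprove.
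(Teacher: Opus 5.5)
Your argument is correct. It is the standard textbook proof: first, a nonzero power series centered at the accumulation point has an isolated zero there; second, the set $U$ of points where all derivatives of $f-g$ vanish is nonempty, open and closed, hence equal to $D$ by connectedness. The paper does not prove this proposition but records it as a standard fact of complex analysis, so there is no different route in the paper to compare against.
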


\begin{proposition}[Montel's theorem]\label{prop:montel}
  Let $D\subseteq\=C$ be a connected region. A closed subset $\+F\subseteq H(D)$ is compact if $\sup_{f\in\+F}\norm{f}_K<\infty$ for every compact set $K\subset D$.
\end{proposition}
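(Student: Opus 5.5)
The plan is to reduce this version of Montel's theorem to the Arzelà--Ascoli theorem, using Cauchy's integral formula to convert local uniform boundedness into local equicontinuity. Since the statement is classical, I would either cite a standard reference (e.g.\ Conway or Ahlfors) or give the following self-contained outline.

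\emph{Metrizability.} Fix an exhaustion of $D$ by compact sets $K_1\subset\mathrm{int}(K_2)\subset K_2\subset\cdots$ with $\bigcup_j K_j=D$; for instance $K_j=\{z\in D:|z|\le j,\ \inf_{w\notin D}|z-w|\ge 1/j\}$. The metric
\[
\rho(f,g)\defeq\sum_{j\ge1}2^{-j}\min\{1,\norm{f-g}_{K_j}\}
\]
induces the topology of locally uniform convergence on $H(D)$. Hence compactness of the closed set $\+F$ is equivalent to sequential compactness. It therefore suffices to show that every sequence $(f_n)\subseteq\+F$ has a subsequence converging locally uniformly to some $f\in H(D)$; closedness of $\+F$ then gives $f\in\+F$.

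\emph{Equicontinuity via Cauchy estimates.} Fix $j$ and let $r_j>0$ satisfy $2r_j<\inf\{|z-w|:z\in K_j,\ w\notin\mathrm{int}(K_{j+1})\}$, so that closed disks of radius $2r_j$ centred in $K_j$ lie in $K_{j+1}$. Let $S_{j+1}\defeq\sup_{f\in\+F}\norm{f}_{K_{j+1}}<\infty$. Cauchy's formula on the circle of radius $2r_j$ gives
\[
|f'(z)|\le S_{j+1}/(2r_j)\qquad\text{for all } z\in K_j \text{ and } f\in\+F.
\]
Now take $z,w\in K_j$ with $|z-w|<r_j$. The segment $[z,w]$ lies in the closed $r_j$-neighbourhood of $K_j$, on which the same Cauchy estimate holds with radius $r_j$ (since $r_j+r_j=2r_j$ keeps the disks inside $K_{j+1}$). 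Integrating $f'$ along the segment yields
\[
|f(z)-f(w)|\le S_{j+1}|z-w|/r_j .
\]
Thus $\+F$ is uniformly bounded and equicontinuous on each $K_j$.

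\emph{Diagonal extraction and holomorphy of the limit.} On each $K_j$, Arzelà--Ascoli produces a subsequence converging uniformly. Extracting successively on $K_1,K_2,\ldots$ and taking the diagonal subsequence gives $f_{n_k}\to f$ uniformly on every $K_j$. Every compact $K\subset D$ lies in $\mathrm{int}(K_j)$ for some $j$, by compactness and the open cover $\{\mathrm{int}(K_j)\}_j$, so the convergence is locally uniform. The limit $f$ is continuous, and for any triangle $T$ in a disk contained in $D$ we have
\[
\oint_{\partial T}f=\lim_{k}\oint_{\partial T}f_{n_k}=0
\]
by uniform convergence and Goursat's theorem. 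Morera's theorem then gives $f\in H(D)$.

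\emph{Main difficulty.} The only step requiring care is the equicontinuity estimate near the boundary of $D$. This is handled by working on the nested exhaustion and spending one level, from $K_j$ to $K_{j+1}$, in the Cauchy estimate. Everything else is routine.
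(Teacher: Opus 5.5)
Your proof is correct: the compact exhaustion, the Cauchy estimate that spends one level going from $K_j$ to $K_{j+1}$ to get equicontinuity, Arzel\`a--Ascoli with a diagonal extraction, and Morera for the holomorphy of the limit together form the standard proof of Montel's theorem. The paper gives no proof of this proposition and simply records it as a classical fact about $H(D)$, used only to obtain compactness of the sets $\mathcal{C}_\varepsilon$ in the tightness lemma, so citing a textbook or writing out your argument is consistent with its treatment (connectedness of $D$ is not needed).
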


For a sequence of random elements $F_n\in H(D)$, we say that $F_n$ converges \emph{almost surely} to $F$ in $H(D)$ if $\Pr{\norm{F_n-F}_K\to0\text{ for all compact }K\subset D}=1$, and converges \emph{in distribution} ($F_n\xrightarrow{d}F$) if $\E{\varphi(F_n)}\to\E{\varphi(F)}$ for every bounded continuous $\varphi:H(D)\to\=R$.

To establish convergence in distribution in $H(D)$ from second-moment bounds, we will need Prokhorov's theorem and Tonelli's theorem{; we write $\dif A$ for planar area measure on $\=C$ and $\operatorname{area}(K)\defeq\int_K\dif A(z)$ for a measurable set $K\subset\=C$}.

\begin{proposition}[Prokhorov's theorem~{\cite[Theorem~5.1]{billingsley2013convergence}}]\label{prop:prokhorov}
  A sequence of random elements $\{F_n\}$ in {a Polish space $\mathcal X$} is \emph{tight} if for every $\varepsilon>0$, there exists a compact set $K_\varepsilon\subseteq{\mathcal X}$ such that $\inf_n\Pr{F_n\in K_\varepsilon}\ge1-\varepsilon$. Every tight sequence in {$\mathcal X$} has a subsequence converging in distribution.
\end{proposition}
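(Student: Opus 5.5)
The plan is the classical route: build a limiting measure on a $\sigma$-compact subset, then check weak convergence with the portmanteau theorem. Only the direction stated is needed (tightness $\Rightarrow$ sequential compactness in distribution). Let $\mu_n$ denote the law of $F_n$ on $\mathcal X$. Tightness gives compact sets $K_1\subseteq K_2\subseteq\cdots$ with $\inf_n\mu_n(K_m)\ge1-1/m$; we may take them increasing by replacing $K_m$ with $K_1\cup\cdots\cup K_m$. The set $S=\bigcup_m K_m$ is $\sigma$-compact, hence separable, and every $\mu_n$ gives $S$ full mass.

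\emph{Step 1: constructing the limit.} Pick a countable family $\mathcal A$ of sets that is closed under finite unions and generates the topology on $S$ in a compact-approximable way. Concretely, take finite unions of sets $\ol{B(x,r)}\cap K_m$ with $x$ in a countable dense subset of $S$, $r$ rational and $m\ge1$. Each member of $\mathcal A$ is compact. By a diagonal argument, extract a subsequence along which $\alpha(A)\defeq\lim_k\mu_{n_k}(A)$ exists for every $A\in\mathcal A$. From $\alpha$, define:
\begin{itemize}
\item the inner content $\beta(U)\defeq\sup\{\alpha(A):A\in\mathcal A,\ A\subseteq U\}$ for open $U$;
\item the outer measure $\gamma(E)\defeq\inf\{\beta(U):U\supseteq E\text{ open}\}$.
\end{itemize}
The standard verifications follow: $\beta$ is finitely subadditive and countably subadditive on open sets, using compactness of the members of $\mathcal A$ to reduce countable covers to finite ones; $\gamma$ is an outer measure; and closed sets are $\gamma$-measurable. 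Carathéodory's theorem then yields a Borel measure $\mu$ with $\mu(U)=\beta(U)$ on open sets. Finally, $\mu(\mathcal X)\ge\sup_m\alpha(K_m)\ge\sup_m(1-1/m)=1$, so $\mu$ is a probability measure. This total-mass step is the only place tightness is used, and it is what rules out escape of mass.

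\emph{Step 2: weak convergence.} For open $U$ and any $A\in\mathcal A$ with $A\subseteq U$, we have
\[
  \alpha(A)=\lim_k\mu_{n_k}(A)\le\liminf_k\mu_{n_k}(U).
\]
Taking the supremum over such $A$ gives $\mu(U)\le\liminf_k\mu_{n_k}(U)$. The portmanteau theorem then gives $\mu_{n_k}\Rightarrow\mu$, i.e.\ $F_{n_k}\xrightarrow{d}F$ with $F\sim\mu$. Separability and completeness of $\mathcal X$ guarantee that the Borel $\sigma$-algebra is generated as needed and that $\mu$ is a bona fide Borel law on $\mathcal X$.

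\emph{Main obstacle.} The delicate step is the measurability argument in Step 1: showing that closed sets are $\gamma$-measurable, i.e.\ $\beta(U)\ge\gamma(U\cap F)+\gamma(U\setminus F)$. I would handle it by approximating from inside. Choose $A_1\in\mathcal A$ with $A_1\subseteq U\setminus F$, then $A_2\in\mathcal A$ with $A_2\subseteq U\setminus A_1$. Superadditivity of $\alpha$ on disjoint members of $\mathcal A$ gives $\alpha(A_1)+\alpha(A_2)\le\alpha(A_1\cup A_2)\le\beta(U)$. Taking suprema over $A_2$ and then $A_1$ yields the inequality. Alternatively, since $H(D)$ is Polish, one could embed $\mathcal X$ homeomorphically into a compact metric space, use Riesz representation on $C$ of that space (via Banach--Alaoglu) to get a subsequential limit, and use tightness to show the limit charges $\mathcal X$. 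Either route suffices.
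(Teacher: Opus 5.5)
Your proposal is correct and is essentially the proof of Billingsley's Theorem~5.1, the result the paper cites without proof: the countable class of compact sets $\overline{B(x,r)}\cap K_m$, the diagonal limit $\alpha$, the inner content $\beta$, the outer measure $\gamma$, Carath\'eodory's theorem, and the open-set portmanteau bound all match, including your disjoint-pair argument for measurability of closed sets. One recalibration: the genuinely delicate verification is finite subadditivity of $\beta$, not measurability. It needs the squeezing fact that a closed $F\subseteq A\in\mathcal A$ with $F\subseteq U$ open admits some $A_0\in\mathcal A$ with $F\subseteq A_0\subseteq U$, which is exactly why you intersect closures of small balls with the $K_m$; also, separability and completeness of $\mathcal X$ are never actually used in this direction, since the construction already yields a Borel measure on any metric space.
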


\begin{proposition}[Tonelli's theorem]\label{prop:tonelli}
  For any jointly measurable nonnegative random function $g(z)\ge0$ on a compact set $K\subset\=C$ and any nonnegative random variables {$\zeta_\ell\ge0$},
  \[
    \E{\int_K g(z)\dif A(z)}=\int_K\E{g(z)}\dif A(z)
    \qquad\text{and}\qquad
    \E{\sum_{\ell\ge1}{\zeta_\ell}}=\sum_{\ell\ge1}\E{{\zeta_\ell}}.
  \]
\end{proposition}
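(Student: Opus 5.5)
Both identities are instances of the classical Fubini--Tonelli theorem for nonnegative functions on a product of $\sigma$-finite measure spaces, so the plan is to identify the product structure precisely and then invoke monotone convergence. Let $(\Xi,\mathcal F,\mathbb P)$ be the underlying probability space. For the first identity, the phrase ``jointly measurable random function'' means that $(\omega,z)\mapsto g_\omega(z)$ is $\mathcal F\otimes\mathcal B(K)$-measurable on $\Xi\times K$. Planar Lebesgue measure restricted to the compact set $K$ is finite, since $\operatorname{area}(K)<\infty$, and $\mathbb P$ is a probability measure. Both factors are therefore finite, hence $\sigma$-finite, and the product measure $\mathbb P\otimes A|_K$ is well defined and unique.

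I would then prove the interchange in the standard bootstrap order. \emph{Step 1: indicators of rectangles.} For $g=\one{(\omega,z)\in F\times B}$ with $F\in\mathcal F$ and $B\in\mathcal B(K)$, both sides equal $\mathbb P(F)\operatorname{area}(B)$. \emph{Step 2: general measurable sets.} The class of sets $S\in\mathcal F\otimes\mathcal B(K)$ for which $\omega\mapsto\int_K\one{(\omega,z)\in S}\dif A(z)$ is measurable and the two iterated integrals agree contains the rectangles, which form a $\pi$-system. This class is closed under proper differences, because the measures are finite so the subtractions are legitimate, and under increasing unions by monotone convergence. Dynkin's $\pi$--$\lambda$ theorem therefore extends the identity to all of $\mathcal F\otimes\mathcal B(K)$. \emph{Step 3: nonnegative simple functions.} The identity extends by linearity. \emph{Step 4: general $g\ge0$.} Approximate $g$ by an increasing sequence of simple functions $g_m\uparrow g$, for instance $g_m=\min(m,2^{-m}\lfloor2^m g\rfloor)$. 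Apply the monotone convergence theorem three times: inside the area integral, inside the expectation, and to the outer integrals. Measurability of the sections and of $\omega\mapsto\int_K g\dif A$ is inherited from the simple-function stage as a monotone limit. The values may be $+\infty$, and monotone convergence handles this without change.

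The second identity is the special case of counting measure on $\mathbb Z_{\ge1}$, which is $\sigma$-finite. More directly, the partial sums $S_L=\sum_{\ell\le L}\zeta_\ell$ increase to $\sum_{\ell\ge1}\zeta_\ell$ because $\zeta_\ell\ge0$. Linearity of expectation for finite sums, followed by monotone convergence, gives $\E{\sum_\ell\zeta_\ell}=\lim_L\sum_{\ell\le L}\E{\zeta_\ell}$, and this limit is the series $\sum_{\ell\ge1}\E{\zeta_\ell}$.

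The only step I expect to need care is Step 2, the measurability of the sections and of the partial integral in $\omega$ for general product-measurable sets. Everything else is routine monotone convergence. The $\pi$--$\lambda$ argument above resolves Step 2, and it uses the finiteness of both measures, which is exactly why the statement is restricted to compact $K$. No integrability assumption is needed, because nonnegativity lets all quantities take values in $[0,\infty]$.
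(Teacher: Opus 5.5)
Your proof is correct and follows the standard textbook route for the classical Fubini--Tonelli theorem, which is how the paper treats this proposition: it records it as a known tool without proof, and your bootstrap (rectangles, then a $\pi$--$\lambda$ argument, then simple functions, then monotone convergence, plus monotone convergence of partial sums for the series) is what that citation stands for. One point to adjust: compactness of $K$ is a convenience rather than the reason the statement holds, since Tonelli holds for any $\sigma$-finite measures, and in Step~2 your Dynkin class should also explicitly require measurability of $z\mapsto\mathbb P(\{\omega:(\omega,z)\in S\})$, so that both iterated integrals are defined.
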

}

% END INCLUDED FILE: chapters/02-technical-overview.tex
% BEGIN INCLUDED FILE: chapters/03-reinforced-ssc.tex
\section{Reinforced Small Subgraph Conditioning}\label{sec:reinforced-ssc}

In this section, we establish our reinforced small subgraph conditioning theorem (\Cref{thm:reinforced-ssc}) by combining Janson's real-line small subgraph conditioning~\cite{robinson1992almost,robinson1994almost,janson1995random} with the \emph{tightness method} in the space $H(D)$ of holomorphic functions on an open connected set $D\subseteq\=C$ (recall from \Cref{subsec:preliminaries} that $H(D)$ is equipped with the topology of locally uniform convergence).

{In our application to the $d$-regular configuration model $\mathrm{CM}_{n,d}$ (\Cref{def:config-model}), the multiplicative fluctuations of the normalized partition function $\mathcal Z_G(z)$ (\Cref{def:bethe-canonical}) are governed by the counts and relative weights of short cycles:

\begin{definition}[Short-cycle counts, Poisson means, and short-cycle factors]\label{def:short-cycle-factors}
  For a $d$-regular multigraph $G$ on $[n]$ and $\ell\ge1$, let $X_{\ell,n}$ denote the number of $\ell$-cycles in $G$ (including self-loops for $\ell=1$ and parallel-edge pairs for $\ell=2$). In the setup of \Cref{def:bethe-canonical}, let $C_\ell$ denote a single cycle of length $\ell$; the \emph{Poisson means} $\lambda_\ell$ and \emph{short-cycle factors} $\delta_\ell(z)$ are defined by
  \[
    \lambda_\ell\defeq\frac{(d-1)^\ell}{2\ell},\qquad
    \delta_\ell(z)\defeq\mathcal Z_{C_\ell}(z)-1
    =\sum_{a_1,\ldots,a_\ell\in\Omega}\prod_{r=1}^\ell\pi_{a_r}(z)M_{a_r,a_{r+1}}(z)-1\quad(a_{\ell+1}\defeq a_1).
  \]
\end{definition}

We state our reinforced small subgraph conditioning theorem for general random holomorphic functions $Y_n\in H(D)$, cycle counts $X_{\ell,n}\in\=Z_{\ge0}$, means $\lambda_\ell>0$, and weights $\delta_\ell\in H(D)$. Set $A_n(z)\defeq\E{Y_n(z)}$ and $F_n\defeq Y_n/A_n$ whenever $A_n$ is finite and nowhere zero on $D$, and for independent $Z_\ell\sim\mathrm{Poisson}(\lambda_\ell)$, define the formal product
\[
  W(z)\defeq\prod_{\ell\ge1}\bigl(1+\delta_\ell(z)\bigr)^{Z_\ell}\exp\{-\lambda_\ell\delta_\ell(z)\}
  \qquad(0^0\defeq1).
\]}

\begin{theorem}[Reinforced small subgraph conditioning (RSSC)]\label{thm:reinforced-ssc}
Let $D\subseteq\=C$ be an open connected set, $I\subseteq D\cap\=R$ be a nonempty open interval, $Y_n$ be a random element of $H(D)$, $X_{\ell,n}\in\=Z_{\ge0}$ ($\ell\ge1$) be random variables, $\lambda_\ell>0$, and $\delta_\ell\in H(D)$. Assume:
\begin{enumerate}
\item[(B0)] $A_n$ is finite, holomorphic, and nowhere zero on $D$ for all sufficiently large $n$, and $Y_n(t)\ge0$ almost surely for every $t\in I$;
\item[(B1)] For each fixed $m\ge1$, $(X_{1,n},\ldots,X_{m,n})\xrightarrow{d}(Z_1,\ldots,Z_{m})$;
\item[(B2)] For fixed $m\ge1$, $j_1,\ldots,j_m\in\=Z_{\ge0}$, and $t\in I$,
  \[
    \E{F_n(t)\prod_{\ell=1}^m(X_{\ell,n})_{j_\ell}}
      \longrightarrow\prod_{\ell=1}^m[\lambda_\ell(1+\delta_\ell(t))]^{j_\ell};
  \]
\item[(B3)] $\sum_{\ell\ge1}\lambda_\ell\sup_{z\in K}|\delta_\ell(z)|^2<\infty$ for every compact $K\subset D$;
\item[(B4)] For every $t\in I$, $\displaystyle\limsup_{n\to\infty}\E{F_n(t)^2}\le\exp\Bigl(\sum_{\ell\ge1}\lambda_\ell\delta_\ell(t)^2\Bigr)$;
\item[(B5)] $\limsup_{n\to\infty}\sup_{z\in K}\E{|F_n(z)|^2}<\infty$ for every compact $K\subset D$.
\end{enumerate}
Then $W$ converges almost surely in $H(D)$, and $(F_n,(X_{\ell,n})_{\ell\ge1})\xrightarrow{d}(W,(Z_\ell)_{\ell\ge1})$ in $H(D)\times\=Z_{\ge0}^{\=Z_{\ge1}}$.
In particular, if $1+\delta_\ell$ is nowhere zero on $D$ for every $\ell\ge3$, then conditional on $X_{1,n}=X_{2,n}=0$, $Y_n$ is zero-free on each fixed compact $K\subset D$ with probability $1-o_n(1)$.
\end{theorem}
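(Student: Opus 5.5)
Almost-sure convergence of $W$ comes first. For a compact $K\subset D$, write $\log$ of the $\ell$-th factor as $Z_\ell\log(1+\delta_\ell)-\lambda_\ell\delta_\ell$. By (B3), $\sup_K|\delta_\ell|\to0$ along $\ell$ weighted by $\lambda_\ell$; since $\lambda_\ell\to\infty$, we get $\sup_K|\delta_\ell|\to0$, so for $\ell\ge\ell_0(K)$ the logarithm is holomorphic. It equals $(Z_\ell-\lambda_\ell)\delta_\ell+O(Z_\ell|\delta_\ell|^2)$. The first series is a sum of independent mean-zero terms with summable variances $\lambda_\ell\sup_K|\delta_\ell|^2$. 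Apply Kolmogorov's two-series theorem in the Banach space $C(K)$, or more simply the $L^2$ maximal inequality combined with Cauchy estimates on a slightly larger compact set, to get uniform a.s. convergence. The second series converges since $\E{\sum Z_\ell|\delta_\ell|^2}<\infty$. Finitely many initial factors are entire in the sense of being holomorphic on $D$. Hence $W\in H(D)$ a.s., and $W$ is nowhere zero on $D$ except through factors with $1+\delta_\ell(z)=0$ and $Z_\ell\ge1$.

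Tightness and identification of limits come next. From (B5), $\sup_n\sup_K\E{|F_n|^2}<\infty$. By subharmonicity of $|F_n|^2$ (mean value over small disks) and Tonelli (\Cref{prop:tonelli}), $\E{\sup_{K}|F_n|^2}\le C\int_{K'}\E{|F_n|^2}\dif A$ for a slightly larger compact $K'$. Markov's inequality and Montel (\Cref{prop:montel}) then give tightness in $H(D)$. Jointly with the $\mathbb Z_{\ge0}$-valued $X$'s (tight by (B1)), Prokhorov (\Cref{prop:prokhorov}) yields subsequential limits $(F,(Z_\ell))$. Janson's real small subgraph conditioning theorem, applied at each $t\in I$ using (B0)--(B4), gives $(F_n(t),X)\to(W(t),Z)$ in distribution, jointly over finitely many $t$. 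Since the $\sigma$-algebra on $H(D)$ is generated by evaluations at a countable dense subset of $I$, and since $F\mapsto (F(t_i))$ determines $F$ by the identity theorem (\Cref{prop:identity}), the finite-dimensional laws pin down the law of $(F,Z)$ as that of $(W,Z)$. The main subtlety is making this joint finite-dimensional convergence rigorous. My first attempt is to rerun Janson's proof with the linear combination $\sum_i c_iF_n(t_i)$, which satisfies analogues of (B2) and (B4) by polarization. Alternatively, I would use that Janson's proof gives $L^2$ convergence of $F_n(t)$ to $W_n(t)$ after conditioning on the cycle counts, which is automatically joint in $t$. Either route gives full convergence $(F_n,X)\to(W,Z)$.

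Finally comes zero-freeness. Condition on $X_{1,n}=X_{2,n}=0$; this event has probability bounded below by (B1). The limit is then $W$ conditioned on $Z_1=Z_2=0$, under which factors $\ell=1,2$ become $\exp(-\lambda_\ell\delta_\ell)\ne0$. The remaining factors are nonvanishing by hypothesis, so $W$ is nowhere zero a.s. The map $f\mapsto\min_K|f|$ is continuous on $H(D)$, and $\Pr{\min_K|W|>0}=1$. Portmanteau applied to the open set $\{\min_K|f|>\epsilon\}$, with $\epsilon\downarrow0$, gives $\Pr{\min_K|F_n|>0\mid X_{1,n}=X_{2,n}=0}\to1$. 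Since $A_n\ne0$, $Y_n$ is zero-free on $K$ as well.
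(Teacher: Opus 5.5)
Your outline follows the paper's architecture: construct $W$ in $H(D)$; get tightness from (B5) via the area mean-value inequality, Tonelli, Markov and Montel; take subsequential limits with Prokhorov; identify them on $I$ using Janson's theorem and the identity theorem; and finish with Portmanteau on the simplicity event. Two steps do not go through as written, although both can be repaired.

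\emph{Construction of $W$.} You deduce $\sup_K|\delta_\ell|\to0$ from $\lambda_\ell\to\infty$. The theorem, however, allows arbitrary $\lambda_\ell>0$; the growth holds only in the application $\lambda_\ell=(d-1)^\ell/(2\ell)$. Hypothesis (B3) alone gives only $\lambda_\ell\sup_K|\delta_\ell|^2\to0$. Along indices where $\lambda_\ell$ is small, $\sup_K|\delta_\ell|$ need not be small. Then neither the principal logarithm nor the expansion $(Z_\ell-\lambda_\ell)\delta_\ell+O(Z_\ell|\delta_\ell|^2)$ is available on any tail. The paper handles this as follows. Let $J=\{\ell:\sup_L|\delta_\ell|>1/2\}$. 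Then $\sum_{\ell\in J}\lambda_\ell\le4\sum_\ell\lambda_\ell\sup_L|\delta_\ell|^2<\infty$, so by Borel--Cantelli only finitely many $Z_\ell$ with $\ell\in J$ are nonzero. Also $\sum_{\ell\in J}\lambda_\ell\delta_\ell$ converges uniformly on $L$. Logarithms are then needed only for $\ell\notin J$. Separately, variance summability alone does not give a.s.\ convergence in $C(K)$, which is not of type $2$. Your fallback is the right tool and is what the paper uses: an $L^2$ maximal inequality in the Hilbert space $L^2(L,\dif A)$ plus the mean-value estimate.

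\emph{Identification of the limit.} Your first fix fails. (B4) bounds $\mathbb{E}[F_n(t)^2]$ for each single $t$ but says nothing about $\mathbb{E}[F_n(s)F_n(t)]$. So polarization gives no second-moment hypothesis for $\sum_ic_iF_n(t_i)$. Also, Janson's theorem concerns nonnegative variables whose limit has the product form $\prod_\ell(1+\delta_\ell)^{Z_\ell}\e^{-\lambda_\ell\delta_\ell}$, which $\sum_ic_iW(t_i)$ lacks.

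Your second fix is sound. Janson's argument gives $\lim_{m\to\infty}\limsup_{n\to\infty}\mathbb{E}\bigl[(F_n(t)-\mathbb{E}[F_n(t)\mid X_{1,n},\ldots,X_{m,n}])^2\bigr]=0$, and these conditional expectations are functions of the counts converging pointwise. But this requires reopening Janson's proof.

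The paper avoids joint finite-dimensional convergence altogether. The single-$t$ statement is already joint with all cycle counts, so any subsequential limit satisfies $(F(t),Z)\overset{d}{=}(W(t),Z)$. Since $W(t)$ is a measurable function of $Z$, this forces $F(t)=W(t)$ almost surely. Doing this at countably many $t_k\in I$ with an accumulation point, and then applying the identity theorem pathwise, gives $F=W$ almost surely. This also removes the need to know that evaluations on a countable set generate the Borel $\sigma$-algebra of $H(D)$.
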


{Classical small subgraph conditioning~\cite{robinson1992almost,robinson1994almost,janson1995random} applies at each real parameter. Here \textup{(B2)} and \textup{(B4)} are needed only on $I$: \textup{(B3)} constructs the holomorphic product $W$, while \textup{(B5)} supplies tightness in $H(D)$. Real-line small subgraph conditioning and the identity theorem (\Cref{prop:identity}) then identify the complex limit; compare the functional tightness approach of~\cite{bordenave2022convergence}.}

To bridge {locally uniform} second-moment bounds and the locally uniform topology of $H(D)$, we begin with the standard area mean-value inequality, writing $\dif A$ for planar area measure (and recalling from \Cref{subsec:preliminaries} that $\norm{h}_K=\sup_{z\in K}|h(z)|$ for a compact set $K\subset D$ and $h\in H(D)$).

\begin{lemma}[Mean-value inequality]\label{lem:mean-value}
  Let $K$ and $L$ be compact sets with $K\subset\mathrm{int}(L)\subset L\subset D$. There exists a constant $C_{K,L}>0$ such that for any $h\in H(D)$,
  \begin{equation}\label{eq:mean-value}
    \norm{h}_K^2 \le C_{K,L}\int_L\abs{h(z)}^2\dif A(z).
  \end{equation}
\end{lemma}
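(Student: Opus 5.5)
The plan is to use the sub-mean-value property of $|h|^2$ together with a compactness (covering) argument. Because $K$ is compact and contained in the open set $\mathrm{int}(L)$, the distance $r\defeq\operatorname{dist}(K,\partial L)$, or equivalently $\operatorname{dist}(K,\=C\setminus\mathrm{int}(L))$, is strictly positive; if $\mathrm{int}(L)=\=C$ one may simply take $r=1$. Set $\rho\defeq r/2$. Then for every $z_0\in K$ the closed disk $\ol{D(z_0,\rho)}$ lies inside $\mathrm{int}(L)\subset L\subset D$, so $h$ is holomorphic on a neighborhood of it.

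The key step is the area mean-value identity. For $h$ holomorphic on a neighborhood of $\ol{D(z_0,\rho)}$, averaging the Cauchy mean-value formula $h(z_0)=\frac{1}{2\pi}\int_0^{2\pi}h(z_0+s\e^{i\theta})\dif\theta$ against $s\dif s$ over $s\in[0,\rho]$ gives
\[
  h(z_0)=\frac{1}{\pi\rho^2}\int_{D(z_0,\rho)}h(z)\dif A(z).
\]
Applying Cauchy--Schwarz to this identity yields
\[
  |h(z_0)|^2\le\frac{1}{\pi\rho^2}\int_{D(z_0,\rho)}|h(z)|^2\dif A(z)\le\frac{1}{\pi\rho^2}\int_L|h(z)|^2\dif A(z),
\]
where the last step uses $D(z_0,\rho)\subset L$ and $|h|^2\ge0$. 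Alternatively, one can invoke subharmonicity of $|h|^2$ directly.

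Taking the supremum over $z_0\in K$ then gives \eqref{eq:mean-value} with $C_{K,L}=1/(\pi\rho^2)=4/(\pi r^2)$. This constant depends only on $K$ and $L$, and not on $h$. No covering is actually needed, since the radius $\rho$ is uniform over $K$.

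The only point needing care is the positivity of $r$. It follows because $z\mapsto\operatorname{dist}(z,\=C\setminus\mathrm{int}(L))$ is continuous and positive on $K$, and so attains a positive minimum on the compact set $K$. There is no substantial obstacle beyond this. One should also remark that $h$ is continuous on $L$, so the right-hand side of \eqref{eq:mean-value} is finite.
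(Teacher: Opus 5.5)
Your proposal is correct and follows the paper's proof: a uniform radius coming from $\operatorname{dist}(K,\partial L)>0$, the area mean-value identity on each disk centered in $K$, Cauchy--Schwarz, and then the supremum over $K$. The only difference is that you halve the radius so the closed disk sits in $\mathrm{int}(L)$, which changes the constant to $4/(\pi r^2)$ but not the argument.
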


{
\begin{proof}
  Let ${\eta}\defeq\operatorname{dist}(K,\partial L)>0$. For every $h\in H(D)$ and $z_0\in K$, the disk ${D(z_0,\eta)}$ is contained in $L$.
  Therefore, by the area mean-value identity and Cauchy--Schwarz, we have that
  $$
    |h(z_0)|^2 = \abs{\frac 1{\pi {\eta^2}}\int_{{D(z_0,\eta)}}h(z)\dif A(z)}^2 \le \frac{1}{\pi {\eta^2}}\int_{{D(z_0,\eta)}}|h(z)|^2\dif A(z) \le \frac{1}{\pi {\eta^2}}\int_L|h(z)|^2\dif A(z).
  $$
  Taking the supremum over $z_0\in K$ proves the claim.
\end{proof}}

Write $W_N\defeq\prod_{\ell=1}^N(1+\delta_\ell)^{Z_\ell}\exp\{-\lambda_\ell\delta_\ell\}$ for the partial products of $W$; the event $Z_1=Z_2=0$ corresponds to simplicity in the configuration-model application.

\begin{lemma}[Well-definedness and nonvanishing of $W$]\label{lem:W-well-defined}
  Under~$(B3)$, $W_N\to W$ almost surely in $H(D)$ as $N\to\infty$. Moreover, if $1+\delta_\ell$ is nowhere zero on $D$ for every $\ell\ge3$, then conditional on $Z_1=Z_2=0$, $W$ is nowhere zero on $D$ almost surely.
\end{lemma}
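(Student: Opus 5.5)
Fix a compact $K\subset D$ and a compact $L\subset D$ with $K\subset\mathrm{int}(L)$, and set $s_\ell\defeq\sup_{z\in L}|\delta_\ell(z)|$, so that $\sum_\ell\lambda_\ell s_\ell^2<\infty$ by (B3). Since $\lambda_\ell>0$, this forces $\lambda_\ell s_\ell^2\to0$. In particular $s_\ell<1/2$ for all $\ell\ge\ell_0(L)$.

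For these large $\ell$ the principal branch gives $\log(1+\delta_\ell)$ holomorphic on a neighbourhood of $L$, and the $\ell$-th factor of $W$ can be written $\exp\{Z_\ell\log(1+\delta_\ell)-\lambda_\ell\delta_\ell\}$. The plan is to show that the exponents
\[
  T_\ell(z)\defeq Z_\ell\log(1+\delta_\ell(z))-\lambda_\ell\delta_\ell(z)
\]
have $\sum_{\ell\ge\ell_0}\sup_K|T_\ell|<\infty$ almost surely. Decompose
\[
  T_\ell=(Z_\ell-\lambda_\ell)\delta_\ell+Z_\ell\bigl(\log(1+\delta_\ell)-\delta_\ell\bigr),
\]
and use $|\log(1+w)-w|\le|w|^2$ for $|w|\le1/2$.

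The second term is dominated by $Z_\ell s_\ell^2$. Its expectation is $\lambda_\ell s_\ell^2$, which is summable, so by Tonelli (\Cref{prop:tonelli}) the sum $\sum_\ell Z_\ell s_\ell^2$ is finite almost surely.

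The first term is the main obstacle. A supremum of a random holomorphic function cannot be controlled by a pointwise variance bound alone. I would handle it with \Cref{lem:mean-value}: the function $h_\ell\defeq(Z_\ell-\lambda_\ell)\delta_\ell$ satisfies
\[
  \E{\|h_\ell\|_K^2}\le C_{K,L}\int_L\E{|h_\ell(z)|^2}\dif A(z)\le C_{K,L}\operatorname{area}(L)\lambda_\ell s_\ell^2 .
\]
These bounds are summable in $\ell$, but square-summability does not give absolute summability. So instead of summing suprema I would treat $S_N\defeq\sum_{\ell_0\le\ell\le N}h_\ell$ as a sum of independent, mean-zero random elements of $H(\mathrm{int}\,L)$. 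Their $L^2(L,\dif A)$ norms have summable second moments, and orthogonality follows from independence and zero mean. Hence $S_N$ is Cauchy in $L^2(\Omega;L^2(L))$. Because the $h_\ell$ are independent, the martingale convergence theorem (or Kolmogorov's two-series theorem in the Hilbert space $L^2(L)\cap H$) gives almost sure convergence in $L^2(L,\dif A)$. Applying \Cref{lem:mean-value} to $S_N-S_M$ then upgrades this to uniform convergence on $K$ almost surely.

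Combining the two terms, $\sum_{\ell\ge\ell_0}T_\ell$ converges uniformly on $K$ almost surely. Exponentiating, and noting that the finitely many factors with $\ell<\ell_0$ are entire in the $Z_\ell$ and holomorphic on $D$, shows that $W_N\to W$ uniformly on $K$. To pass to all compact sets at once, exhaust $D$ by countably many compacts $K_j$ and intersect the corresponding probability-one events. This gives $W_N\to W$ almost surely in $H(D)$.

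For nonvanishing, note that $\exp(\sum_{\ell\ge\ell_0}T_\ell)$ never vanishes. So on each $K$ the zeros of $W$ can only come from the finitely many factors $(1+\delta_\ell)^{Z_\ell}$ with $\ell<\ell_0$. On the event $Z_1=Z_2=0$, the factors with $\ell=1,2$ are identically $\exp\{-\lambda_\ell\delta_\ell\}\ne0$. For $\ell\ge3$ the factors are nonzero by hypothesis. Since $Z_1,Z_2$ are independent of the other $Z_\ell$, conditioning on $Z_1=Z_2=0$ (an event of positive probability) preserves the almost sure convergence. Hence $W$ is nowhere zero on every $K_j$, and therefore on $D$, almost surely under the conditioning.
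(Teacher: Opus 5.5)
\textbf{There is a genuine gap.} The failing step is your claim that $\lambda_\ell s_\ell^2\to0$ forces $s_\ell<1/2$ for all $\ell\ge\ell_0(L)$.

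That inference needs $\inf_\ell\lambda_\ell>0$. The lemma, like \textup{(B3)} and the theorem it serves, allows arbitrary $\lambda_\ell>0$. For example, $\lambda_\ell=2^{-\ell}$ with $\delta_\ell\equiv5$, or with $\delta_\ell\equiv-1$, satisfies \textup{(B3)}, yet $s_\ell\ge1/2$ for every $\ell$. In that case:
\begin{itemize}
\item no $\ell_0$ exists;
\item the principal logarithm $\log(1+\delta_\ell)$ is unavailable (for $\delta_\ell\equiv-1$ it is not even defined);
\item in the nonvanishing step, the zeros of $W$ are no longer confined to finitely many deterministic factors.
\end{itemize}
As written, your argument proves the lemma only when $\lambda_\ell$ is bounded below. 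That covers the application, where $\lambda_\ell=(d-1)^\ell/(2\ell)\ge1$, but not the stated generality.

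\textbf{The fix} is to handle the indices with large $\delta_\ell$ probabilistically rather than analytically. Put $J\defeq\{\ell:s_\ell>1/2\}$. On $J$ we have $1<4s_\ell^2$ and $s_\ell<2s_\ell^2$, so
\[
 \sum_{\ell\in J}\Pr{Z_\ell\ne0}\le\sum_{\ell\in J}\lambda_\ell\le4\sum_{\ell}\lambda_\ell s_\ell^2<\infty,
 \qquad
 \sum_{\ell\in J}\lambda_\ell s_\ell\le2\sum_{\ell}\lambda_\ell s_\ell^2<\infty .
\]
By Borel--Cantelli, almost surely only finitely many $\ell\in J$ have $Z_\ell>0$. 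Hence the $J$-part of $W_N$ is eventually a fixed finite product of factors $(1+\delta_\ell)^{Z_\ell}$, times $\exp\{-\sum_{\ell\in J,\,\ell\le N}\lambda_\ell\delta_\ell\}$, and the latter converges uniformly on $L$.

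Your remaining steps then apply verbatim on the complement of $J$, where $s_\ell\le1/2$: the logarithmic decomposition, the Tonelli bound on the quadratic remainder, and the Hilbert-space martingale convergence in $L^2(L,\dif A)$ upgraded by the mean-value inequality. For nonvanishing, the only possible zeros come from the almost surely finitely many $\ell\in J$ with $Z_\ell>0$. These factors are nonzero for $\ell\ge3$ by hypothesis, and they do not occur for $\ell=1,2$ on $\{Z_1=Z_2=0\}$. With this split, your proof coincides with the paper's.
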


\begin{proof}
  Fix compact subsets $K$ and $L$ with $K\subset\mathrm{int}(L)\subset L\subset D$, and define $r_\ell\defeq\norm{\delta_\ell}_L$.
  By~$(B3)$, $\sum_{\ell\ge1}\lambda_\ell r_\ell^2<\infty$.
  On the index set $J\defeq\{\ell\ge1:r_\ell>1/2\}$, we have $1<4r_\ell^2$ and $r_\ell<2r_\ell^2$; therefore,
  $$
    \sum_{\ell\in J}\Pr{Z_\ell\ne0}\le 4\sum_{\ell\in J}\lambda_\ell r_\ell^2<\infty\qquad\text{and}\qquad\sum_{\ell\in J}\lambda_\ell r_\ell\le 2\sum_{\ell\in J}\lambda_\ell r_\ell^2<\infty.
  $$
  By Borel--Cantelli and uniform convergence of $\sum_{\ell\in J}\lambda_\ell\delta_\ell(z)$ on $L$, the factor over $\ell\in J$ almost surely reduces to
  \[
    \exp\!\left(-\sum_{\ell\in J}\lambda_\ell\delta_\ell(z)\right)
    \prod_{\substack{\ell\in J\\Z_\ell>0}}(1+\delta_\ell(z))^{Z_\ell},
  \]
  where the remaining product has only finitely many factors on $\mathrm{int}(L)$.

  For $\ell\notin J$, we have $|\delta_\ell(z)|\le r_\ell\le1/2$ on $L$, and we decompose the principal branch of each logarithmic factor into its centered linear part and quadratic Taylor remainder:
  \[
    Z_\ell\log(1+\delta_\ell(z))-\lambda_\ell\delta_\ell(z)
    = \xi_\ell(z)+\eta_\ell(z),
  \]
  where $\xi_\ell(z)\defeq(Z_\ell-\lambda_\ell)\delta_\ell(z)$ and $\eta_\ell(z)\defeq Z_\ell\bigl[\log(1+\delta_\ell(z))-\delta_\ell(z)\bigr]$.
  Since $|\log(1+w)-w|\le2|w|^2$ for $|w|\le1/2$, Tonelli's theorem (\Cref{prop:tonelli}) and~$(B3)$ give $\E{\sum_{\ell\notin J}\norm{\eta_\ell}_L}\le2\sum_{\ell\notin J}\lambda_\ell r_\ell^2<\infty$, so $\sum_{\ell\notin J}\eta_\ell(z)$ converges absolutely and uniformly on $L$ almost surely.
  Meanwhile, the centered linear terms $\{\xi_\ell=(Z_\ell-\lambda_\ell)\delta_\ell\}_{\ell\notin J}$ are independent, mean-zero random elements in the separable Hilbert space $L^2(L,\dif A)$ with finite total variance:
  \[
    \sum_{\ell\notin J}\E{\norm{\xi_\ell}_{L^2(L)}^2}
    =\sum_{\ell\notin J}\operatorname{Var}(Z_\ell)\int_L|\delta_\ell(z)|^2\dif A(z)
    \le\operatorname{area}(L)\sum_{\ell\notin J}\lambda_\ell r_\ell^2<\infty.
  \]
  {
    By the $L^2$-martingale convergence theorem in Hilbert space, the
partial sums
$S_N\defeq\sum_{\ell\le N,\,\ell\notin J}\xi_\ell$
converge almost surely in $L^2(L,\dif A)$. By~\Cref{lem:mean-value},
$$
\|S_N-S_M\|_K^2
\le C_{K,L}\int_L|S_N(z)-S_M(z)|^2\,\dif A(z)
\longrightarrow0
\qquad(N,M\to\infty),
$$
which, together with the uniform convergence of $\sum_{\ell\notin J}\eta_\ell$, implies that $\sum_{\ell\notin J}(\xi_\ell+\eta_\ell)$ converges uniformly on $K$.
Exponentiating and combining the two factors gives
$$
W_N(z)=
\left[
\prod_{\substack{\ell\le N\\\ell\in J}}
(1+\delta_\ell(z))^{Z_\ell}\exp\tp{-\lambda_\ell\delta_\ell(z)}
\right]
\exp\left\{
\sum_{\substack{\ell\le N\\\ell\notin J}}
(\xi_\ell(z)+\eta_\ell(z))
\right\}
\longrightarrow W(z)
$$
uniformly on $K$ almost surely.
Exhausting $D$ by compact sets shows that $W_N\to W$ almost surely in $H(D)$.
  }

  Finally, on $\{Z_1=Z_2=0\}$, the $\ell=1,2$ factors $(1+\delta_\ell(z))^{Z_\ell}$ equal $1$, and on each compact subset $K\subset D$ the above decomposition represents $W$ as a finite product of nonzero factors times an exponential, since $1+\delta_\ell$ is nowhere zero for $\ell\ge3$.
  {Since the series in the exponent converges uniformly on $K$, the exponent is finite and the exponential factor is nonzero.}
  Thus $W$ is nowhere zero on $D$ almost surely conditional on $Z_1=Z_2=0$.
\end{proof}

\begin{lemma}[Tightness]\label{lem:step-2}
  Under conditions~$(B0)$ and~$(B5)$, {the tail $\{F_n\}_{n\ge n_0}$ is tight in $H(D)$ for every sufficiently large $n_0$}.
\end{lemma}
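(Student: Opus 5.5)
We plan to prove tightness directly from Prokhorov's theorem (\Cref{prop:prokhorov}), using Montel's theorem (\Cref{prop:montel}) to produce compact sets in $H(D)$. The space $H(D)$ with the locally uniform topology is Polish, being a closed subspace of $C(D)$ with the compact-open topology. So it suffices to show that for every $\varepsilon>0$ there is a compact $\mathcal K_\varepsilon\subseteq H(D)$ with $\Pr{F_n\in\mathcal K_\varepsilon}\ge1-\varepsilon$ for all $n\ge n_0$.

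First fix $n_0$ so that, for $n\ge n_0$, $A_n$ is finite, holomorphic and nowhere zero on $D$ by~(B0); then $F_n=Y_n/A_n$ is a random element of $H(D)$. Next choose an exhaustion of $D$ by compact sets $K_1\subset\mathrm{int}(K_2)\subset K_2\subset\mathrm{int}(K_3)\subset\cdots$ with $\bigcup_j K_j=D$. Set $L_j\defeq K_{j+1}$, so that $K_j\subset\mathrm{int}(L_j)$.

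By~(B5), after enlarging $n_0$ if needed (only finitely many $j$ matter at a time, but we need uniformity in $n\ge n_0$), we have
\[
  M_j\defeq\sup_{n\ge n_0}\sup_{z\in L_j}\E{|F_n(z)|^2}<\infty .
\]
The $\limsup$ in~(B5) only bounds large $n$; for the finitely many remaining $n\ge n_0$ we use that each $\sup_{z\in L_j}\E{|F_n(z)|^2}$ is finite. If this finiteness is not directly available, we instead take $n_0$ large depending on $j$ and handle the finitely many exceptional $n$ by the separate tightness of each single random element, which holds in any Polish space. This bookkeeping is the only delicate point, and it is why the statement says ``every sufficiently large $n_0$''.

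Now apply \Cref{lem:mean-value} with $K=K_j$ and $L=L_j$, and then Tonelli (\Cref{prop:tonelli}); joint measurability holds by continuity of $F_n$ in $z$. This gives
\[
  \E{\norm{F_n}_{K_j}^2}\le C_{K_j,L_j}\int_{L_j}\E{|F_n(z)|^2}\dif A(z)\le C_{K_j,L_j}\,\operatorname{area}(L_j)\,M_j\eqqcolon B_j .
\]
By Markov's inequality, $\Pr{\norm{F_n}_{K_j}>R_j}\le B_j/R_j^2$. Choose $R_j\defeq(2^jB_j/\varepsilon)^{1/2}$, so the sum over $j$ of these probabilities is at most $\varepsilon$.

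Let $\mathcal K_\varepsilon\defeq\{f\in H(D):\norm{f}_{K_j}\le R_j\ \forall j\}$. This set is closed in $H(D)$. Every compact $K\subset D$ lies in some $K_j$, because the interiors $\mathrm{int}(K_{j+1})$ cover $D$, so $\mathcal K_\varepsilon$ is locally bounded. By \Cref{prop:montel}, $\mathcal K_\varepsilon$ is therefore compact. A union bound gives $\Pr{F_n\notin\mathcal K_\varepsilon}\le\varepsilon$ for all $n\ge n_0$, which is exactly tightness.
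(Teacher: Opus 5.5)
Your argument is correct and matches the paper's proof: the mean-value inequality and Tonelli bound $\E{\norm{F_n}_{K_j}^2}$ via $(B5)$, Markov's inequality with radii $R_j$ makes the failure probabilities sum to $\varepsilon$, and Montel gives compactness of $\{f:\norm{f}_{K_j}\le R_j\ \forall j\}$. One point to tighten is that your fallback is required, not optional: $(B5)$ gives no control of $\sup_{z}\E{|F_n(z)|^2}$ for finitely many small $n$. So, as the paper does, fix $n_0$ by $(B0)$ alone, let the $(B5)$ threshold $n_1(j)\ge n_0$ depend on $j$, and absorb the finitely many indices $n_0\le n<n_1(j)$ into $R_j$ using almost-sure finiteness of $\norm{F_n}_{K_j}$ (the ``sufficiently large $n_0$'' in the statement comes from $(B0)$, not from this bookkeeping).
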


\begin{proof}
  Fix $n_0$ so that $A_n$ is nowhere zero on $D$ and $F_n\in H(D)$ for all $n\ge n_0$.
  Let $K_1\subset\mathrm{int}(K_2)\subset K_2\subset\cdots$ be a compact exhaustion of $D$.
{
For every fixed $j\ge1$, we have that
\begin{align*}
\limsup_{n\to\infty}\E{\|F_n\|_{K_j}^2}
&\le C_{K_j,K_{j+1}}\limsup_{n\to\infty}\,
\E{\int_{K_{j+1}}|F_n(z)|^2\,\dif A(z)}&& (\text{\Cref{lem:mean-value}})\\
&= C_{K_j,K_{j+1}}
\limsup_{n\to\infty}\int_{K_{j+1}}\E{|F_n(z)|^2}\,\dif A(z)&&(\text{Tonelli's theorem})\\
&\le C_{K_j,K_{j+1}}\operatorname{area}(K_{j+1})
\limsup_{n\to\infty}\sup_{z\in K_{j+1}}\E{|F_n(z)|^2} < \infty. &&(B5)
\end{align*}
}
  Thus there exists $n_1(j)\ge n_0$ such that $\sup_{n\ge n_1(j)}\E{\norm{F_n}_{K_j}^2}<\infty$.
  Since $\norm{F_n}_{K_j}<\infty$ almost surely for each of the finitely many indices $n_0\le n<n_1(j)$, Markov's inequality for $n\ge n_1(j)$ together with tightness of those finitely many random variables provides, for each $\eps>0$ and $j\ge1$, a threshold $R_j(\eps)<\infty$ such that
  \[
    \sup_{n\ge n_0}\Pr{\norm{F_n}_{K_j}>R_j(\eps)}\le\eps 2^{-j}.
  \]
  The set $\+C_\eps\defeq\{f\in H(D):\norm{f}_{K_j}\le R_j(\eps)\text{ for all }j\ge1\}$ is closed and locally uniformly bounded, and hence compact in $H(D)$ by Montel's theorem (\Cref{prop:montel}).
  A union bound gives
  \[\sup_{n\ge n_0}\Pr{F_n\notin\+C_\eps}\le\eps,\]
  so $\{F_n\}_{n\ge n_0}$ is tight in $H(D)$.
\end{proof}

{We identify the limit on the real interval $I$ using Janson's theorem. For $t\in I$, nonnegativity of $Y_n(t)$ and $X_{\ell,n}$ with \textup{(B0)} and \textup{(B2)} (for $j_\ell=1$) implies $1+\delta_\ell(t)\ge0$, so $\delta_\ell(t)$ is real and at least $-1$.}

\begin{theorem}[Real-line small subgraph conditioning~{\cite[Theorem~1 and Lemma~1]{janson1995random}}]\label{thm:janson-ssc}
  {In the notation of \Cref{thm:reinforced-ssc}, fix $t\in I$ and suppose $Y_n(t)\ge0$ almost surely, $0<A_n(t)=\E{Y_n(t)}<\infty$ for all sufficiently large $n$, conditions~\textup{(B1)}--\textup{(B2)} and \textup{(B4)} hold at $t$, and $\sum_{\ell\ge1}\lambda_\ell\delta_\ell(t)^2<\infty$. Then, for every $m\ge1$,
  \[
    \bigl(F_n(t),X_{1,n},\ldots,X_{m,n}\bigr)
    \xrightarrow{d}\bigl(W(t),Z_1,\ldots,Z_m\bigr).
  \]}
\end{theorem}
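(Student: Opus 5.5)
The statement is a specialization of Janson's theorem, so the plan is mainly to check that our hypotheses at the fixed real point $t$ are exactly Janson's hypotheses (A1)--(A4) in \cite[Theorem~1]{janson1995random}. The proof outline is recalled only to make the matching transparent.

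\textbf{Matching the hypotheses.} Fix $t\in I$ and write $Y=Y_n(t)\ge0$, $F=F_n(t)=Y/\E{Y}$ and $\delta_\ell=\delta_\ell(t)$.
\begin{enumerate}
\item[(A1)] This is (B1): joint convergence of the cycle counts to independent Poissons.
\item[(A2)] This is (B2): $\E{F\prod_\ell(X_{\ell,n})_{j_\ell}}\to\prod_\ell[\lambda_\ell(1+\delta_\ell)]^{j_\ell}$, i.e.\ $\mu_\ell=\lambda_\ell(1+\delta_\ell)$ in Janson's notation.
\item[(A3)] This is the assumed summability $\sum_\ell\lambda_\ell\delta_\ell^2<\infty$.
\item[(A4)] This is (B4), the sharp second-moment bound.
\end{enumerate}
As noted before the statement, (B0) and (B2) with $j_\ell=1$ force $\delta_\ell\ge-1$. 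So $\mu_\ell\ge0$, and the convention $0^0=1$ covers factors with $\delta_\ell=-1$, where Janson's $W$ vanishes off $\{Z_\ell=0\}$. The conclusion is then Janson's statement that $(F,X_{1,n},\dots,X_{m,n})\to(W,Z_1,\dots,Z_m)$ in distribution; Lemma~1 of \cite{janson1995random} gives the joint form.

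\textbf{Outline of Janson's argument.} Fix $m$ and let $\mathcal F_m=\sigma(X_{1,n},\dots,X_{m,n})$.
\begin{itemize}
\item By (A1)--(A2), $\E{F\mid\mathcal F_m}$ behaves asymptotically like $W_m(X_{1,n},\dots,X_{m,n})=\prod_{\ell\le m}(1+\delta_\ell)^{X_{\ell,n}}e^{-\lambda_\ell\delta_\ell}$. This is the Radon--Nikodym derivative of $\mathrm{Poisson}(\mu_\ell)$ with respect to $\mathrm{Poisson}(\lambda_\ell)$, and its factorial moments match (A2).
\item Orthogonality then gives
\[
\limsup_n\E{(F-\E{F\mid\mathcal F_m})^2}\le\exp\Bigl(\sum_{\ell}\lambda_\ell\delta_\ell^2\Bigr)-\exp\Bigl(\sum_{\ell\le m}\lambda_\ell\delta_\ell^2\Bigr).
\]
Here (A4) bounds $\E{F^2}$, and Fatou-type lower bounds give $\liminf_n\E{\E{F\mid\mathcal F_m}^2}\ge\E{W_m^2}=\exp(\sum_{\ell\le m}\lambda_\ell\delta_\ell^2)$.
\item By (A3) the right-hand side tends to $0$ as $m\to\infty$. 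Since $W_m\to W$ in $L^2$, a standard approximation argument yields the joint distributional convergence.
\end{itemize}

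\textbf{Main obstacle.} The delicate step, handled in \cite{janson1995random}, is making the first bullet rigorous: converting convergence of the $F$-weighted factorial moments into convergence in distribution of the counts under the tilted measure $F\,\mathrm d\Pr{}$. The plan is to use that Poisson laws are determined by their moments, so the size-biased counts converge to $\mathrm{Poisson}(\mu_\ell)$. The lower bound on $\E{\E{F\mid\mathcal F_m}^2}$ then follows by truncating the counts to a finite set and applying Cauchy--Schwarz. Since all of this is contained in the cited result, here we only need the hypothesis-checking of the first paragraph.
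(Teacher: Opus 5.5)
Your proposal is correct and follows the paper's approach: the paper proves this statement only by citing Janson's theorem, after noting just before it that (B0) and (B2) with $j_\ell=1$ force $\delta_\ell(t)\ge-1$, so that (B1), (B2), the summability assumption and (B4) give Janson's hypotheses at the fixed real point $t$. One point deserves to be made explicit: (B4) is only a one-sided $\limsup$ bound, not a two-sided limit. As your second bullet already shows, the reverse inequality $\liminf_{n}\E{F_n(t)^2}\ge\exp\bigl(\sum_{\ell\ge1}\lambda_\ell\delta_\ell(t)^2\bigr)$ follows from (A1)--(A3) alone (let $m\to\infty$ in $\E{F_n(t)^2}\ge\E{\E{F_n(t)\mid\mathcal F_m}^2}$), so the upper bound is all that is needed.
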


\begin{proof}[Proof of~\Cref{thm:reinforced-ssc}]
  By~\Cref{lem:W-well-defined}, the partial products $W_N$ converge almost surely to $W$ in $H(D)$.
  By~\Cref{lem:step-2} and~$(B1)$, $\{(F_n,(X_{\ell,n})_{\ell\ge1})\}_{n\ge n_0}$ is tight in $H(D)\times\=Z_{\ge0}^{\=Z_{\ge1}}$.
  By Prokhorov's theorem (\Cref{prop:prokhorov}), let $(F,Z)$ be any subsequential limit in distribution of $(F_n,(X_{\ell,n})_{\ell\ge1})$, where $Z=(Z_\ell)_{\ell\ge1}$ consists of independent $\mathrm{Poisson}(\lambda_\ell)$ variables by~$(B1)$, and form $W$ from $Z$.
  For each $t\in I$, \Cref{thm:janson-ssc} gives \[(F_n(t),X_{1,n},\ldots,X_{m,n})\xrightarrow{d}(W(t),Z_1,\ldots,Z_m)\]
  for every $m\ge1$, whence $(F(t),Z)\overset{d}{=}(W(t),Z)$. Since $W(t)$ is a measurable function of $Z$, this equality of joint laws forces $F(t)=W(t)$ almost surely.
  Applying this simultaneously to a countable sequence $\{t_k\}_{k\ge1}\subset I$ with an accumulation point in $I$, the identity theorem (\Cref{prop:identity}) on the connected domain $D$ forces $F=W$ almost surely in $H(D)$, which uniquely identifies every subsequential limit and proves $(F_n,(X_{\ell,n})_{\ell\ge1})\xrightarrow{d}(W,Z)$.

  Put $\+E_n^{\mathrm{simp}}\defeq\{X_{1,n}=X_{2,n}=0\}$ and $\+E^{\mathrm{simp}}\defeq\{Z_1=Z_2=0\}$, so $\Pr{\+E_n^{\mathrm{simp}}}\to\Pr{\+E^{\mathrm{simp}}}=\e^{-\lambda_1-\lambda_2}>0$.
  Assume $1+\delta_\ell$ is nowhere zero on $D$ for every $\ell\ge3$, and fix a nonempty compact $K\subset D$.

  Continuity of $f\mapsto\min_{z\in K}|f(z)|$ makes $\+U_K\defeq\{f\in H(D):\min_{z\in K}|f(z)|>0\}$ open in $H(D)$, so $\+U_K\times\{(0,0)\}$ is open in $H(D)\times\=Z_{\ge0}^2$.
  Since $\Pr{W\in\+U_K\mid \+E^{\mathrm{simp}}}=1$ by~\Cref{lem:W-well-defined} and $(F_n,X_{1,n},X_{2,n})\xrightarrow{d}(W,Z_1,Z_2)$, the Portmanteau theorem gives
  \[
    \liminf_{n\to\infty}\Pr{F_n\in\+U_K\mid \+E_n^{\mathrm{simp}}}
    =\frac{\liminf_{n\to\infty}\Pr{F_n\in\+U_K,\,\+E_n^{\mathrm{simp}}}}{\Pr{\+E^{\mathrm{simp}}}}
    \ge\Pr{W\in\+U_K\mid \+E^{\mathrm{simp}}}=1.
  \]
  Hence $F_n$, and therefore $Y_n$, is zero-free on $K$ with probability $1-o_n(1)$ conditional on $\+E_n^{\mathrm{simp}}$.
\end{proof}

% END INCLUDED FILE: chapters/03-reinforced-ssc.tex
% BEGIN INCLUDED FILE: chapters/sec5-deduplicated.tex
\section{Charting the Replica-Symmetric Phase}\label{sec:rs-identification-local}
In this section, we establish the formal counterparts of the phase characterizations announced in \Cref{thm:hardcore-rs-transition} (\Cref{thm:hardcore-three-characterizations}, \Cref{thm:potts-three-characterizations}, and \Cref{thm:coloring-three-characterizations}), showing that the geometric (\Cref{def:replica-symmetry}), variational (\Cref{def:rsb-bethe-functional}), and thermodynamic (\Cref{def:free-energies}) descriptions share a common replica symmetry breaking threshold in each model.

{
Recall from \Cref{subsec:preliminaries} the replica-symmetric Bethe free-energy predictions $\phi_d(\alpha)$ and $\phi_{d,q}(\beta)$ (\Cref{def:bethe-canonical-form}), the Bethe functional $\mathcal B_{d,t}$, and the space $\mathcal P_{\*\pi}$ of cavity message distributions on $\Delta_\Omega$ with mean $\*\pi$ (\Cref{def:rsb-bethe-functional}, where $\*\pi_\alpha=(1-\alpha,\alpha)^\trans$ for hardcore and $\*\pi=\*U_q$ for Potts). For proper $q$-colorings ($\beta=0$), we define $\mu_{G,q,0}$ arbitrarily on uncolorable graphs and take $\log 0=-\infty$; geometric replica symmetry for colorings additionally requires colorability with high probability. All limits in $n$ below are taken along admissible values.

Our three main theorems of this section is given as follows.

\begin{theorem}[Charting the replica-symmetric phase for the hardcore model]\label{thm:hardcore-three-characterizations}
  For every integer $d\ge3$, there exists a threshold $\alpha_{\mathrm{RSB}}(d)\in(0,1/2)$ such that, for every $\alpha\in(0,1/2)$, the following three properties hold if $\alpha<\alpha_{\mathrm{RSB}}(d)$ and fail if $\alpha>\alpha_{\mathrm{RSB}}(d)$:
  \begin{enumerate}
    \item \textbf{Geometric replica symmetry:} The hardcore model on $G\sim\+G_{n,d}$ is replica symmetric in the sense of \Cref{def:replica-symmetry}.
    \item \textbf{Variational replica symmetry:} For every $\nu\in\mathcal P_{\*\pi_\alpha}$,
    \[
      \mathcal B_{d,\alpha}(\nu)\le0.
    \]
    \item \textbf{Thermodynamic replica symmetry:} The free-energy density and limiting quenched free energy equal the Bethe prediction in \Cref{def:bethe-canonical-form}:
    \[
      \frac1n\log Z_G(\alpha)\xrightarrow{\mathrm p}\phi_d(\alpha)
      \qquad\text{and}\qquad
      \lim_{n\to\infty}\frac1n\E{\log Z_G(\alpha)}=\phi_d(\alpha).
    \]
  \end{enumerate}
  {Moreover, the threshold has the asymptotic expansion~\eqref{eq:replica-symmetry-breaking-hardcore-asymptotic} as $d\to\infty$.}
\end{theorem}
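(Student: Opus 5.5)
We will define the threshold variationally, $\alpha_{\mathrm{RSB}}(d)\defeq\sup\{\alpha\in(0,1/2):\mathcal B_{d,\alpha'}(\nu)\le0\text{ for all }\nu\in\mathcal P_{\*\pi_{\alpha'}},\ \alpha'\le\alpha\}$. We then prove the cycle of implications in \Cref{fig:charting-rs} at every $\alpha$ off the threshold. Two facts are needed first. (a) Monotonicity in $\alpha$: if variational replica symmetry fails at some $\alpha_0$, it fails at every $\alpha>\alpha_0$, and strictly inside the variational regime $\delta_{\*\pi}$ is the unique maximizer. We will get this by transporting a witness $\nu$ across parameters and differentiating $\mathcal B_{d,\alpha}$ in $\alpha$. (b) $0<\alpha_{\mathrm{RSB}}(d)<1/2$. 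Positivity comes from the tree-uniqueness regime, where known Bethe free-energy results~\cite{dembo2013factor} apply. The bound $\alpha_{\mathrm{RSB}}<1/2$ follows by exhibiting an explicit two-point witness $\nu$ with $\mathcal B>0$ at densities above the independence ratio, where the annealed first moment exceeds the truth. With these, each of (1)--(3) is shown to hold below and fail above the threshold.

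\emph{Geometric $\Rightarrow$ thermodynamic.} Use the Aizenman--Sims--Starr-type cavity decomposition of $\E{\log Z_{G_{n+1}}}-\E{\log Z_{G_n}}$ via local surgery: delete $d$ edges and add a vertex, or compare with adding $d/2$ edges, using two-root coupling in the configuration model. Geometric replica symmetry makes the cavity marginals at the $d$ (or $2\cdot d/2$) affected half-edges asymptotically independent. The increment is then the Bethe functional evaluated at the limiting message law, which satisfies the distributional BP fixed-point equation. This yields $\liminf n^{-1}\E{\log Z}\ge\inf$ over fixed points. The matching upper bound comes from the fractional-moment/interpolation bound~\cite{lelarge2018replica,ayre2022lower}. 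Next we show that any minimizing fixed point equals $\delta_{\*\pi_\alpha}$, since otherwise the fractional-moment bound at $m<1$ is strict. This gives $n^{-1}\E{\log Z}\to\phi_d(\alpha)$, and \eqref{eq:regular-logz-concentration} upgrades it to convergence in probability; the $O(\sqrt n)$ simplicity correction is harmless. \emph{Thermodynamic $\Rightarrow$ variational}, by contraposition: given $\nu$ with $\mathcal B_{d,\alpha}(\nu)>0$, build $\eta_m$ interpolating $\nu$ with $m$-tilting, and show that $\partial_m^-\mathcal F_m(\eta_m)|_{m=1}=\mathcal B_{d,\alpha}(\nu)$. Since $\mathcal F_1=\phi_d$, some $m<1$ gives $\limsup n^{-1}\E{\log Z}\le\phi_d-\varepsilon$.

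\emph{Variational $\Rightarrow$ geometric.} For $\alpha<\alpha_{\mathrm{RSB}}$, fact (a) gives uniqueness of the maximizer. Under the planted model (reweight $\mathrm{CM}$ by $\mathcal Z_G$), use random pinning of a vanishing fraction of spins: the revealed-entropy bound (\Cref{lem:revealed-entropy-bound}) converts uniqueness into exponential concentration of the overlap $\*\rho(\sigma,\sigma')$ at $\*\pi\otimes\*\pi$ (\Cref{lem:exponential-overlap}). To transfer back to the null model, we need contiguity-type control. Combine the exponential bound with Azuma concentration of $\log Z_G$ and the first-moment estimate $\E{\mathcal Z_G}=\Theta(1)$. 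This shows that $\mathcal Z_G$ is not exponentially small whp, so events of planted probability $\e^{-\gamma n}$ have null probability $o(1)$. Overlap concentration in the null model then implies~\eqref{eq:def-replica-symmetry} by the standard overlap-to-pairwise argument, and conditioning on simplicity costs a constant factor. Above the threshold, thermodynamic RS fails by the gap; geometric RS must fail too, since otherwise the first implication would give $\phi_d$. Thus all three fail.

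\emph{Asymptotics and main obstacle.} The expansion~\eqref{eq:replica-symmetry-breaking-hardcore-asymptotic} will come from evaluating $\sup_\nu\mathcal B_{d,\alpha}$ at large $d$. For the upper bound, use the frozen/cluster witness of~\cite{barbier2013hardcore}: messages are a mixture of near-deterministic $\*e_1$ and $\*\pi$-like. This matches the 1RSB complexity vanishing at $(2\log d-2\log\log d+2-2\log2)/d$. The matching lower bound must show $\mathcal B\le0$ for all $\nu$ below it, via a concavity/large-$d$ analysis reducing to the same two-parameter family. I expect the variational-to-geometric step to be the main obstacle, specifically making the overlap bound exponential rather than $o(1)$ and transferring it through the change of measure. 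The large-$d$ lower bound on $\mathcal B$ is a secondary difficulty.
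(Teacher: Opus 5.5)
Your threshold definition and the cycle geometric $\Rightarrow$ thermodynamic $\Rightarrow$ variational $\Rightarrow$ geometric follow the paper's route: two-root cavity coupling with the fractional-moment bound, the tilted laws $\eta_m$ whose left derivative at $m=1$ is $\mathcal B_{d,\alpha}(\nu)$, and planted pinning. One step there is asserted rather than proved: that a non-Dirac minimizing BP fixed point makes the fractional bound strict. This is the crux of the first implication. The paper computes $\partial_m\mathcal F_m(\eta)|_{m=0^+}=-\frac14\sum_{k\ge3}(k-2)a_k$, where $a_k$ is the squared norm of the order-$k$ Hoeffding components of $\log V^{\mathrm{BP}}$, using the fixed-point identity $1-RQ_d=(1-R)V^{\mathrm{BP}}$. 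Strictness holds because a non-Dirac fixed point has at least three support points.

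The genuine gap is the lower bound in~\eqref{eq:replica-symmetry-breaking-hardcore-asymptotic}. You propose to prove $\mathcal B_{d,\alpha}(\nu)\le0$ for every $\nu\in\mathcal P_{\*\pi_\alpha}$ via ``concavity'' and a reduction to a two-parameter family, but there is no mechanism behind this. Along the segment $(1-s)\delta_{\*\pi_\alpha}+s\nu$, the functional is a polynomial in $s$ vanishing to second order at $s=0$. Concavity there would already presuppose the conclusion. Nothing confines the supremum over the infinite-dimensional space $\mathcal P_{\*\pi_\alpha}$ to frozen/uniform mixtures.

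The paper never attacks this variational problem directly; it goes through thermodynamics instead.
\begin{itemize}
\item The boosted second moment of Gerencs\'er--Harangi (\Cref{lem:local-gerencser-second-moment}) gives $\E{N_n^2}/\E{N_n}^2=\e^{o(n)}$ for independent sets of size $\lfloor\alpha n\rfloor$ at $d\alpha=2\log d-2\log\log d+2-2\log2+o_d(1)$.
\item Paley--Zygmund gives $\Pr{N_n\ge\frac12\E{N_n}}\ge\e^{-o(n)}$. Since $Z_G\ge\lambda^{\lfloor\alpha n\rfloor}N_n$, concentration~\eqref{eq:regular-logz-concentration} forces $\liminf n^{-1}\E{\log Z_G}\ge\phi_d(\alpha)$.
\item This contradicts the strict gap that \Cref{lem:strict-free-energy-gaps}(i) attaches to any positive witness. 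So no witness exists at this $\alpha$, and $\alpha\le\alpha_{\mathrm{RSB}}(d)$.
\end{itemize}
For the upper bound no 1RSB mixture is needed. The fully frozen law $\nu_{\mathrm{hard}}=(1-\alpha)\delta_{(1,0)^\trans}+\alpha\delta_{(0,1)^\trans}$ satisfies $\mathcal B_{d,\alpha}(\nu_{\mathrm{hard}})\ge\alpha\log\lambda(\alpha)-\phi_d(\alpha)$. The right side is minus the first-moment entropy of independent sets of density $\alpha$. For $\alpha=t/d$ it equals $\alpha(\log\alpha-1+t/2+O(t^2/d))$. It also tends to $(d/2-1)\log2>0$ as $\alpha\uparrow1/2$, which is what gives $\alpha_{\mathrm{RSB}}(d)<1/2$.

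Second, in variational $\Rightarrow$ geometric, the transfer step does not close with the ingredients you list. Azuma plus $\E{\mathcal Z_G}=\Theta(1)$ bound $\E{\log\mathcal Z_G}$ only from above, by Jensen. Indeed $|\log a_n|=O(\log n)$ at every $\alpha\in(0,1/2)$, including past the threshold where $\mathcal Z_G$ is exponentially small with high probability. The planted overlap bound is a statement about $\mathbb Q_n$, which a priori could live on exponentially rare graphs, so it supplies no lower tail either.

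The missing link is the $s=0$ case of \Cref{lem:revealed-entropy-bound}. Unique maximization gives $\limsup n^{-1}\E[\mathbb Q_n]{\log\mathcal Z_G}\le\max_\nu\mathcal B_{d,\alpha}(\nu)=0$, i.e.\ $\DKL(\mathbb Q_n^G\Vert\mathrm{CM})=\E[\mathbb Q_n]{\log\mathcal Z_G}-\log a_n=o(n)$. Now suppose $\E{\log\mathcal Z_G}\le-\varepsilon n$ along a subsequence, and set $H_n=\{\log\mathcal Z_G>-\varepsilon n/2\}$. Azuma makes $H_n$ exponentially rare under $\mathrm{CM}$, while $\mathbb Q_n^G(H_n)=1-o(1)$. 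The log-sum inequality then forces $\DKL(\mathbb Q_n^G\Vert\mathrm{CM})\ge cn$, a contradiction. Only with $\E{\log\mathcal Z_G}=o(n)$ in hand can you finish. Split on $\{\log\mathcal Z_G\ge-c_\eta n/2\}$ and use $\dif\mathrm{CM}/\dif\mathbb Q_n^G=a_n/\mathcal Z_G$. This converts the $\e^{-c_\eta n}$ planted bound of \Cref{lem:exponential-overlap} into an $\e^{-\Omega(n)}$ bound on the expected atypical-overlap mass under $\mathrm{CM}$.
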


\begin{theorem}[Charting the replica-symmetric phase for the antiferromagnetic Potts model]\label{thm:potts-three-characterizations}
  For every pair of integers $d\ge3$ and $q\ge2$, there exists a threshold $\beta_{\mathrm{RSB}}(d,q)\in[0,1)$ such that, for every $\beta\in(0,1]$, the following three properties hold if $\beta>\beta_{\mathrm{RSB}}(d,q)$ and fail if $\beta<\beta_{\mathrm{RSB}}(d,q)$:
  \begin{enumerate}
    \item \textbf{Geometric replica symmetry:} The Potts model on $G\sim\+G_{n,d}$ is replica symmetric in the sense of \Cref{def:replica-symmetry}.
    \item \textbf{Variational replica symmetry:} For every $\nu\in\mathcal P_{\*U_q}$,
    \[
      \mathcal B_{d,\beta}(\nu)\le0.
    \]
    \item \textbf{Thermodynamic replica symmetry:} The free-energy density and limiting quenched free energy equal the Bethe prediction in \Cref{def:bethe-canonical-form}:
    \[
      \frac1n\log Z_{G,q}(\beta)\xrightarrow{\mathrm p}\phi_{d,q}(\beta)
      \qquad\text{and}\qquad
      \lim_{n\to\infty}\frac1n\E{\log Z_{G,q}(\beta)}=\phi_{d,q}(\beta).
    \]
  \end{enumerate}
\end{theorem}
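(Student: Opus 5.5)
The threshold is defined variationally, and I would prove the three implications in the cycle of \Cref{fig:charting-rs} pointwise at each $\beta\in(0,1]$. Set
\[
  \beta_{\mathrm{RSB}}(d,q)\defeq\inf\bigl\{\beta_0\in[0,1]:\ \mathcal B_{d,\beta}(\nu)\le0\text{ for all }\nu\in\mathcal P_{\*U_q}\text{ and all }\beta\in[\beta_0,1]\bigr\}.
\]
The first step is to show that this set is an interval ending at $1$ and that $\beta_{\mathrm{RSB}}<1$. At $\beta=1$ we have $M\equiv\allone\allone^\trans$ entrywise times $q^2$ up to normalization, so $V_k$ and $E$ are constant and $\mathcal B\equiv0$. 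For $\beta$ near $1$, the fourth-order expansion around $\delta_{\*U_q}$ together with the bound $(d-1)\kappa_\beta^2<1$ gives $\mathcal B<0$ off $\delta_{\*U_q}$ locally. Compactness of $\mathcal P(\Delta_{[q]})$ then controls $\nu$ far from $\delta_{\*U_q}$, since $\mathcal B$ is continuous in $(\beta,\nu)$ and tends uniformly to $0$ as $\beta\to1$; this requires a uniform-in-$\nu$ bound on the first-order term in $1-\beta$.

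The second step is monotonicity in $\beta$: I claim that if $\mathcal B_{d,\beta}\le0$ then $\mathcal B_{d,\beta'}<0$ off the point mass for all $\beta'>\beta$. This is the ``parameter monotonicity'' route of \Cref{lem:model-variational-conditions}. I would attempt it by writing $M_{\beta'}$ as a convex interpolation between $M_\beta$ and the trivial kernel, that is, a tree-channel composition with an erasure-type channel. I would then use convexity of $x\log x$ so that $\nu\mapsto\mathcal B$ evaluated along the pushed-forward messages is dominated. I regard this as the main obstacle. If it fails, I would redefine the threshold by thermodynamic replica symmetry and use the fact that the quenched-minus-Bethe gap is monotone, via interpolation in $\beta$ of $\frac{d}{d\beta}\E{\log Z}$ against $\phi'$.

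The third step is the equivalences away from the threshold.

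\emph{Thermodynamic $\Rightarrow$ variational.} This goes by contraposition, as in \Cref{subsec:local-bethe-consequences}. A witness $\nu$ with $\mathcal B(\nu)>0$ yields the family $\eta_m$, whose fractional Bethe functional $\mathcal F_m$ has left derivative $\mathcal B(\nu)$ at $m=1$. The interpolation fractional-moment bound then gives $\limsup n^{-1}\E{\log Z}\le\phi-\varepsilon$, and \eqref{eq:regular-logz-concentration} transfers this to convergence in probability.

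\emph{Variational (strict, for $\beta>\beta_{\mathrm{RSB}}$) $\Rightarrow$ geometric.} I would take the unique-maximizer condition and use random pinning (\Cref{lem:revealed-entropy-bound}) to get exponential overlap concentration under the planted model (\Cref{lem:exponential-overlap}). Azuma concentration of $\log Z$ combined with the planted entropy bound gives a lower tail that is good enough to move to the null model; this yields overlap concentration around $\*U_q\otimes\*U_q$, which implies averaged pairwise decorrelation.

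\emph{Geometric $\Rightarrow$ thermodynamic.} I would apply the Aizenman--Sims--Starr cavity computation of \Cref{subsec:local-rs-free-energy}. Decorrelation factorizes the marginals at the cavity roots, so the increments $\E{\log Z_{n+1}}-\E{\log Z_n}$ converge to the Bethe functional at a BP fixed point. This gives a lower bound, and the fractional-moment upper bound matches it. Uniqueness of the minimizer, which comes from the variational condition, forces the value $\phi_{d,q}(\beta)$.

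For $\beta<\beta_{\mathrm{RSB}}$, a witness exists by definition and monotonicity, so thermodynamic replica symmetry fails. Geometric replica symmetry must then fail as well, since otherwise the cavity step would force the Bethe value. This completes the dichotomy on each side of the threshold.
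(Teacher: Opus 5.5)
\textbf{Main gap: your geometric $\Rightarrow$ thermodynamic step is circular.} You close it by invoking ``uniqueness of the minimizer, which comes from the variational condition.'' But you need this implication exactly on the side $\beta<\beta_{\mathrm{RSB}}(d,q)$, to conclude that geometric replica symmetry fails there, and on that side the variational condition is false by construction.

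Without that appeal, geometric RS plus the cavity lower bound and the fractional-moment upper bound only identify $\lim_n n^{-1}\E{\log Z_{G,q}(\beta)}$ with the minimum of the Bethe functional over BP fixed points. That is fully consistent with the strict gap $\le\phi_{d,q}(\beta)-\varepsilon$ whenever some nontrivial fixed point has a smaller value. So no contradiction arises, and the ``fail'' half of property~(1) is not proved.

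The missing idea is to pin the cavity value to $\phi_{d,q}(\beta)$ using geometric RS alone. The paper does this in \Cref{lem:potts-pointwise-rs-free-energy} via color symmetry:
\begin{itemize}
\item every single-site marginal is exactly $\*U_q$;
\item in the Coja-Oghlan--Perkins Bethe-state decomposition, the state-conditional edge marginals are approximated by $R_{\*x,\*y}(s,t)\propto\beta^{\one{s=t}}x_sy_t$, built from the two directed cavity messages;
\item replica symmetry forces the state-conditional vertex marginals to $\*U_q$ on average;
\item $(\*U_q,\*U_q)$ is the only pair for which $R_{\*x,\*y}$ has two uniform marginals, so all cavity messages tend to $\*U_q$ and the Bethe-state free-energy formula evaluates to $\phi_{d,q}(\beta)$.
\end{itemize}
For hardcore, the analogous unconditional input is the Hoeffding-decomposition argument that every non-Dirac BP fixed point has strictly negative slope at $m=0^+$.

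\textbf{Smaller points.}

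In Step~1, continuity, compactness and uniform convergence $\mathcal B_{d,\beta}\to0$ cannot give strict negativity for $\nu$ far from $\delta_{\*U_q}$, since $\mathcal B_{d,1}\equiv0$. You need a comparison uniform in $\nu$. One option: $\E{V_{d,\beta}\log V_{d,\beta}}\le\E{(V_{d,\beta}-1)^2}=O(\kappa_\beta^2)\,\E{(E_\beta-1)^2}$, which holds because the multilinear pieces of $V_{d,\beta}-1$ are orthogonal and start at degree two; compare this with $\E{E_\beta\log E_\beta}\ge(\tfrac12-o(1))\E{(E_\beta-1)^2}$. Alternatively, follow the paper: Dobrushin uniqueness near $\beta=1$ gives geometric RS, hence thermodynamic RS, hence variational RS by \Cref{lem:strict-free-energy-gaps}(ii).

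In Step~2, the step you flag as the main obstacle works exactly as you sketch, and it is the paper's argument. Replace each message by $(1-c)\*U_q+c\*\mu$ with $c=\kappa_{\beta'}/\kappa_\beta\in(0,1)$. This leaves all vertex factors unchanged and turns the edge factor into $1-c+cE$. Convexity of $x\log x$ then gives $\mathcal B_{d,\beta'}(\nu)\le\mathcal B_{d,\beta}(\nu_c)-\frac d2(1-c)\E{E_{\beta'}\log E_{\beta'}}<0$ for $\nu\ne\delta_{\*U_q}$. Your thermodynamic fallback (monotonicity of the quenched-minus-Bethe gap) is unsupported and unnecessary.

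Finally, the $\eta_m$/$\mathcal F_m$ construction you cite for thermodynamic $\Rightarrow$ variational is hardcore-specific. For Potts you need a Potts fractional-moment bound written in canonical cavity variables; the paper instead uses the criterion in \Cref{lem:local-potts-free-energy-criterion}.
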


\begin{theorem}[Charting the replica-symmetric phase for proper $q$-colorings]\label{thm:coloring-three-characterizations}
  For every integer $q\ge3$, there exists a threshold $d_{\mathrm{RSB}}(q)\in[3,\infty)$ such that, for every integer $d\ge3$, the following three properties hold if $d<d_{\mathrm{RSB}}(q)$, and fail if $d>d_{\mathrm{RSB}}(q)$:
  \begin{enumerate}
    \item \textbf{Geometric replica symmetry:} With high probability $G\sim\+G_{n,d}$ is $q$-colorable and its coloring Gibbs measure is replica symmetric in the sense of \Cref{def:replica-symmetry}. We define this measure arbitrarily on uncolorable graphs.
    \item \textbf{Variational replica symmetry:} For every $\nu\in\mathcal P_{\*U_q}$,
    \[
      \mathcal B_{d,0}(\nu)\le0.
    \]
    \item \textbf{Thermodynamic replica symmetry:} The free-energy density equals the Bethe prediction in \Cref{def:bethe-canonical-form}:
    \[
      \frac1n\log Z_{G,q}(0)\xrightarrow{\mathrm p}\phi_{d,q}(0).
    \]
    % {\color{green!40!black}(which, when $d<d_{\mathrm{RSB}}(q)$, implies $\lim_{n\to\infty}\frac1n\E{\log Z_{G,q}(0)\mid Z_{G,q}(0)>0}=\phi_{d,q}(0)$).}
  \end{enumerate}
  {Moreover, the threshold has the asymptotic expansion~\eqref{eq:replica-symmetry-breaking-potts-asymptotic} as $q\to\infty$.}
\end{theorem}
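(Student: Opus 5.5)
The threshold is defined variationally. I would set $d_{\mathrm{RSB}}(q)\defeq\sup\{r\ge3:\mathcal B_{r,0}(\nu)\le0\text{ for all }\nu\in\mathcal P_{\*U_q}\}$, using the piecewise-linear interpolation of $\mathcal B_{r,0}$ in \Cref{def:rsb-bethe-functional}. A preliminary step is monotonicity in $r$. For each $\nu$, the increment $\mathcal B_{k+1,0}(\nu)-\mathcal B_{k,0}(\nu)$ is nondecreasing in $k$, since attaching one more message tilts $V_k$ toward larger entropy gain. It follows that $\{r:\sup_\nu\mathcal B_{r,0}\le0\}$ is an interval $[0,d_{\mathrm{RSB}}(q)]$, and that for $r$ strictly inside it the stronger condition of \Cref{cond:unique-variational-maximizer} holds. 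This is the degree-slack statement referred to as \Cref{lem:coloring-variational-conditions}. With this, property (2) holds for $d<d_{\mathrm{RSB}}(q)$ and fails for $d>d_{\mathrm{RSB}}(q)$ by definition. Finiteness of $d_{\mathrm{RSB}}(q)$ follows from a first-moment argument: for large $d$ the annealed free energy $\phi_{d,q}(0)$ becomes negative, so a witness with $\mathcal B>0$ exists. I would then prove the implication cycle (1)$\Rightarrow$(3)$\Rightarrow$(2)$\Rightarrow$(1) at the endpoint $\beta=0$, following \Cref{fig:charting-rs}.

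\textbf{(1)$\Rightarrow$(3).} On colorable graphs I would run the cavity argument of \Cref{subsec:local-rs-free-energy} with the Potts kernel at $\beta=0$. Averaged decorrelation makes the marginals of the $d$ cavity neighbours of a removed vertex asymptotically independent with law $\*U_q$. The Aizenman--Sims--Starr increments $\E{\log Z_{n+1}}-\E{\log Z_n}$ then converge to the Bethe increment, and the hard constraint is harmless because a proper colouring of the cavity always extends with positive weight for $q\ge3$ after local reshuffling. This yields a lower bound. The matching upper bound comes from the fractional-moment (Jensen) inequality $\frac1n\log Z\le\frac1n\log\E Z\to\phi_{d,q}(0)$, which holds conditional on colourability with probability $1-o(1)$. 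Because $\log Z=-\infty$ is possible, I would work in probability only, and I would replace Azuma concentration by the fact that $\log Z$ changes by $O(1)$ under edge switches within the colourable regime.

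\textbf{(3)$\Rightarrow$(2)} is by contraposition, as in \Cref{subsec:local-bethe-consequences}. Suppose $d>d_{\mathrm{RSB}}(q)$, so some $\nu$ has $\mathcal B_{d,0}(\nu)>0$. The fractional functional $\mathcal F_m(\eta_m)$ has left derivative $\mathcal B$ at $m=1$, so for $m$ slightly below $1$ the bound $\E{Z^m}^{1/m}\le\e^{n(\phi-\varepsilon)}$ holds. Markov's inequality then gives $\frac1n\log Z\le\phi-\varepsilon/2$ with high probability, contradicting (3). This step works at $\beta=0$ because the fractional bound is a bound on $\E{Z^m}$ and needs no positivity. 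The same gap also shows that (1) fails when $d>d_{\mathrm{RSB}}(q)$, since (1) implies (3).

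\textbf{(2)$\Rightarrow$(1)} is the main obstacle. For integer $d<d_{\mathrm{RSB}}(q)$, degree slack gives a unique maximiser, so \Cref{lem:exponential-overlap} yields exponential planted overlap concentration. Transferring this to the null model needs a lower-tail bound on $Z_{G,q}(0)$, and Azuma is unavailable at $\beta=0$. My first approach is to apply the graph truncation \Cref{lem:graph-truncation} together with classical small subgraph conditioning (\Cref{thm:janson-ssc} at real $t=0$, using (B4) from the truncated second moment under strict Kesten--Stigum, $(d-1)\kappa_0^2=(d-1)/(q-1)^2<1$). This gives $Z/\E Z\to W>0$ in distribution. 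From it follow colourability with high probability, property (3) directly, and contiguity between the planted and null models. Contiguity transfers overlap concentration, and overlap concentration implies averaged pairwise decorrelation. Finally, for the asymptotics \eqref{eq:replica-symmetry-breaking-potts-asymptotic}, I would quote the condensation computation, which matches the $1$RSB prediction $2q\log q-\log q-2\log2$ of the colourability lower bound. Concretely, I would evaluate $\mathcal B_{d,0}$ on frozen-cluster witnesses $\nu$ (atoms at $\*e_a$ mixed with $\*U_q$) to obtain the upper bound on $d_{\mathrm{RSB}}(q)$, and use the second-moment/planted analysis of \cite{coja2016chromatic,bapst2014condensation} to obtain the lower bound.
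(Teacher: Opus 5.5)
There is a genuine gap in (1)$\Rightarrow$(3). Your cavity argument rests on two claims that are false at $\beta=0$ once $d\ge q$, which is the relevant range since $d_{\mathrm{RSB}}(q)\sim2q\log q$. The first is that a proper colouring of the cavity always extends ``after local reshuffling''; the second is that $\log Z$ changes by $O(1)$ under an edge switch. Adding an edge $uv$ multiplies $Z$ by $\mu_G(\sigma_u\ne\sigma_v)$, and reinserting a vertex multiplies it by the expected number of colours missing from its $d$ neighbours. Neither factor has a lower bound, and both can vanish. Consequently $\E{\log Z_{G,q}(0)}=-\infty$ and the Aizenman--Sims--Starr increments are unbounded. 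Averaged decorrelation controls these ratios only off events of probability $o(1)$, which is useless against increments of order $n$ or infinite ones. Working ``in probability'' does not rescue this, because the telescoping is over expectations.

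As you observe, below the threshold you can bypass this step: truncation, the small-subgraph-conditioning lower tail, and Markov give (3) directly. Above the threshold, however, (1)$\Rightarrow$(3) is exactly what you use to conclude that geometric replica symmetry fails, so it cannot be skipped. The paper avoids cavity increments altogether. Let $\mathbb P_{\mathrm{col}}$ be the law of a uniform colourable simple pairing with a uniform proper colouring, and $\widehat{\mathbb Q}_n$ the planted pairing law. Then $\DKL(\mathbb P_{\mathrm{col}}\Vert\widehat{\mathbb Q}_n)=\log\E{Z}-\log c-\E[\mathbb P_{\mathrm{col}}]{\log Z}$, so it suffices to show this is $o(n)$. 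Replica symmetry enters only through the exact switch identity $Z(\mathcal M)\mu_{\mathcal M}(\sigma_u=\sigma_v=a)=Z(\mathcal M')\mu_{\mathcal M'}(\sigma_u=\sigma_v=a)$. This gives $Z(\mathcal M')/Z(\mathcal M)\to1$ in probability for a \emph{typical} switch, hence vanishing Dirichlet energy of $\sqrt{\dif\mathbb P_{\mathrm{col}}/\dif\widehat{\mathbb Q}_n}$. Edge-type concentration and hypercontractive entropy coercivity for global densities on products of symmetric groups (\Cref{lem:global-entropy-coercivity}) then upgrade this to $o(n)$ relative entropy. A typical-case mechanism of this kind is the idea your plan is missing.

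Several further steps need repair.
In (3)$\Rightarrow$(2), the fractional-moment interpolation bounds you invoke control the quenched free energy $\limsup_n\frac1n\E{\log Z_{G,q}(\beta)}$ of a positive-weight model (\Cref{lem:local-potts-fractional-upper}), not $\E{Z^m}$. Your assertion that the step ``needs no positivity'' is therefore unfounded. At $\beta=0$ you must apply the bound at a fixed small $\beta_0>0$, and use continuity in $\beta$ of its right-hand side, whose left $m$-derivative at $m=1$, $\beta=0$ equals $\mathcal B_{d,0}(\nu)>0$. You then combine Azuma concentration at $\beta_0$ with $Z_{G,q}(0)\le Z_{G,q}(\beta_0)$.

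For the lower bound in~\eqref{eq:replica-symmetry-breaking-potts-asymptotic}, integer-degree results such as \cite{coja2016chromatic}, even if converted into $\mathcal B_{d,0}\le0$, give nonpositivity only at integer $d$. That locates the real-valued threshold to within $O(1)$, not $o_q(1)$. The paper instead imports the Poisson-averaged bound $\max_\nu\E[K\sim\mathrm{Poisson}(r)]{\mathcal B_{K,0}(\nu)}=0$ at real $r$ from \cite{coja2017information}. It then uses degree convexity with Jensen: $\mathcal B_{r,0}(\nu)\le\E[K\sim\mathrm{Poisson}(r)]{\mathcal B_{K,0}(\nu)}$.

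Your convexity step also needs an actual argument. The paper writes $\E{V_{k,0}\log V_{k,0}}$ as a relative entropy whose increments are mutual informations, nondecreasing by exchangeability. You also need the anchors $\mathcal B_{0,0}(\nu)=0$ and $\mathcal B_{1,0}(\nu)<0$ for $\nu\ne\delta_{\*U_q}$, which are what yield strict uniqueness at $d<r$.

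Finally, \Cref{lem:exponential-overlap} is proved only at positive parameters. At $\beta=0$ the overlap control must come from \Cref{lem:graph-truncation}; with that substitution, your (2)$\Rightarrow$(1) argument matches the paper's.
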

}

\begin{remark}[Sharpness of the zero-free region]\label{rem:sharpness-zero-free}
An immediate consequence of Theorems~\ref{thm:hardcore-three-characterizations}--\ref{thm:coloring-three-characterizations} is the sharpness of our zero-free regions: the zero-free neighborhood cannot be extended across $\alpha_{\mathrm{RSB}}(d)$ for the hardcore model or across $\beta_{\mathrm{RSB}}(d,q)>0$ for the antiferromagnetic Potts model. Indeed, $n^{-1}\log Z_G$ converges in probability to the real-analytic Bethe prediction $\phi$ on the replica-symmetric side of the threshold, whereas on the replica-symmetry-breaking side it lies strictly below $\phi$ with high probability by \Cref{lem:strict-free-energy-gaps}(i)--(ii).

Suppose for contradiction that $Z_G$ were zero-free on a sufficiently small open disk $D_0$ centered at the threshold with probability bounded away from zero along a subsequence of admissible $n$. Fix a positive real basepoint $t_0\in D_0$ on the replica-symmetric side; on each zero-free graph $G$, the holomorphic branch $F_G(z)\defeq n^{-1}\log Z_G(z)$ normalized to be real at $t_0$ has $\Re F_G$ uniformly bounded above on $D_0$ and $F_G(t_0)$ uniformly bounded on both sides, so $\{F_G\}$ is locally uniformly bounded on $D_0$ by the Borel--Carath\'eodory theorem. Intersecting the zero-free event with the high-probability convergence $F_G(t)\to\phi(t)$ at a countable dense set of real points $t\in D_0$ on the replica-symmetric side and the high-probability strict gap $F_G(t_1)\le\phi(t_1)-\varepsilon$ at a real point $t_1\in D_0$ beyond the threshold, diagonalization selects a deterministic sequence of zero-free graphs on which both properties hold. By Montel's theorem and the identity theorem, any subsequential limit along this selected sequence must equal the analytic continuation of $\phi$ across $D_0$, contradicting the strict gap at $t_1$.
\end{remark}

\paragraph{Proof strategy and organization.}
The proofs of Theorems~\ref{thm:hardcore-three-characterizations}--\ref{thm:coloring-three-characterizations} establish the cyclic implications connecting geometric, thermodynamic, and variational replica symmetry depicted in \Cref{fig:charting-rs}.

In the cavity framework~\cite{mezard2001bethe}, limiting quenched free energies are formulated through variational principles over local messages. On random regular graphs, Coja-Oghlan and Perkins~\cite[Theorem~2.7]{coja2019spin} established such a principle for positive-weight factor-graph models satisfying their positivity condition \textup{\textbf{POS}}, including the positive-weight antiferromagnetic Potts model~\cite[Lemma~7.3 and Theorem~7.6]{coja2019spin}. However, this framework does not directly cover the hardcore model, where hard-core exclusion breaks strict positivity and requires a separate soft-constraint limit~\cite[Theorem~1.4 and Section~7.4]{coja2019spin}. Here we establish the hardcore free-energy dichotomy using a fractional-moment bound and a two-root cavity coupling (\Cref{lem:local-hc-fractional-upper} and \Cref{lem:local-hc-rs-cavity-lower}), and treat Potts using pointwise variational and cavity arguments. For proper $q$-colorings, the real-valued threshold $d_{\mathrm{RSB}}(q)$ is defined via the deterministic piecewise-linear interpolation of the integer-degree Bethe functionals (\Cref{def:rsb-bethe-functional}), while all random graphs below have integer regular degree~$d$.

The remainder of this section is organized around the cyclic implications of \Cref{fig:charting-rs}: in \Cref{subsec:local-rs-free-energy}, we deduce thermodynamic replica symmetry from geometric replica symmetry; in \Cref{subsec:local-bethe-consequences}, we prove that thermodynamic replica symmetry implies variational replica symmetry via strict free-energy gaps (\Cref{lem:strict-free-energy-gaps}); in \Cref{subsec:local-variational-geometric}, we complete the cycle by deducing geometric replica symmetry from strict variational conditions; in \Cref{subsec:local-coloring-phase}, we extend these implications to proper colorings; and in \Cref{subsec:rsb-threshold-asymptotics}, we establish the threshold asymptotics.

\subsection{{From Geometric to Thermodynamic Replica Symmetry}}\label{subsec:local-rs-free-energy}

We prove the replica-symmetric free-energy convergence in~{\Cref{thm:hardcore-three-characterizations}(3) and~\Cref{thm:potts-three-characterizations}(3)} via geometric replica symmetry by the cavity method, which computes free-energy increments through the local cost of adding vertices and edges~\cite[Section~1.2]{coja2019spin}.

We begin with the hardcore model at fugacity $\lambda=\lambda(\alpha)$ and, in the hardcore arguments below, write $\mu_H={\mu_{H,\mathrm{fug},\lambda}}$ for every graph $H$ (\Cref{def:bethe-canonical}). {Write $\bar F_n\defeq\E[G\sim\+G_{n,d}]{\log Z_G(\alpha)}$}. For {a directed} edge $u\to v$, the \emph{BP cavity message} {$m_{u \to v}=m_{u\to v,G}\defeq\mu_{G-v}(\sigma_u=1)$ is defined as
the probability that $u$ is occupied under the hardcore measure on $G-v$. We drop the subscript $G$ when the graph is clear from context.}
On a tree, the BP cavity messages satisfy the \emph{belief propagation (BP) recurrence} exactly:
\begin{align}\label{eq:BP}
  {m_{u\to v}
  =\frac{\lambda\prod_{w\in N_G(u)\setminus\{v\}}(1-m_{w\to u})}
         {1+\lambda\prod_{w\in N_G(u)\setminus\{v\}}(1-m_{w\to u})}.}
\end{align}
On a sparse random graph $G \sim \+G_{n,d}$, which is asymptotically locally tree-like, replica symmetry allows us to approximate the BP cavity messages using the fixed points of belief propagation.

The collective behavior of all BP cavity messages across $G$ is recorded by the empirical measure 
\[
  \eta_G\defeq{\frac1{dn}}\sum_{u\to v}{\delta_{m_{u\to v,G}}}\in\mathcal P(\mathcal I_\lambda),
\]
where $\mathcal I_\lambda\defeq[\lambda(1+\lambda)^{-d},\lambda/(1+\lambda)]$ provides the uniform bounds for BP cavity messages.
Analogously to~\eqref{eq:BP}, a self-consistent BP cavity message distribution $\eta$ is a fixed point of the distributional belief propagation operator $T : \mathcal P(\mathcal I_\lambda) \to \mathcal P(\mathcal I_\lambda)$, defined by
\[
  T\eta\defeq\mathcal L\bigl(\mathcal T(Q_1,\ldots,Q_{d-1})\bigr),\qquad
  \mathcal T(Q_1,\ldots,Q_{d-1})
  \defeq\frac{\lambda\prod_{i=1}^{d-1}(1-Q_i)}
                  {1+\lambda\prod_{i=1}^{d-1}(1-Q_i)},
\]
where $Q_1,\ldots,Q_{d-1}$ are independent with law $\eta\in\mathcal P(\mathcal I_\lambda)$ {and $\mathcal L(\cdot)$ denotes the law of a random variable}.

The BP cavity messages also predict the free-energy shifts resulting from local graph assembly. For independent random variables $Q_1,\ldots,Q_d\sim\eta$, we define the unnormalized BP vertex and edge factors:
\[
  V^{\mathrm{BP}}\defeq1+\lambda\prod_{i=1}^d(1-Q_i),\qquad E^{\mathrm{BP}}\defeq1-Q_1Q_2,
\]
which predict the ratios of partition functions after adding a single vertex or a single edge; these formulas are exact when the relevant boundary spins are independent.
For $\eta$ compactly supported in $[0,1)$, define the {\emph{BP Bethe functional}} $\mathcal F_0$. For $0<m\le1$, the \emph{fractional Bethe free-energy functional} $\mathcal F_m$ is defined for every probability law $\eta$ on $[0,1]$ with $\eta(\{1\})<1$:
\[
  {\mathcal F_0(\eta)}\defeq\E{\log V^{\mathrm{BP}}}-\frac d2\E{\log E^{\mathrm{BP}}},\qquad
  {\mathcal F_m(\eta)}\defeq\frac1m\log\E{(V^{\mathrm{BP}})^m}-\frac d{2m}\log\E{(E^{\mathrm{BP}})^m}.
\]
Furthermore, the variational minimum $b_\star$ is defined by
\[
  b_\star\defeq\min_{\eta\in\mathcal P(\mathcal I_\lambda):\,T\eta=\eta}{\mathcal F_0(\eta)}.
\]
{The BP fixed-point set is nonempty, since it contains a point mass at a fixed point of the scalar BP recurrence. It is compact because $T$ is continuous, and $\mathcal F_0$ is continuous on $\mathcal P(\mathcal I_\lambda)$, so the minimum defining $b_\star$ is attained. Under geometric replica symmetry, the following two lemmas identify the limiting quenched free energy with $b_\star$}{, where the upper bound in~\Cref{lem:local-hc-fractional-upper} adapts the fractional-moment interpolation bound of~\cite[Theorem~2 and Section~3.2]{lelarge2018replica}}.

\begin{lemma}\label{lem:local-hc-fractional-upper}
For every probability law $\eta$ compactly supported in $[0,1)$ and every $m\in(0,1)$,
\[
  \limsup_{n\to\infty} \frac{\bar F_n}{n}\le{\mathcal F_m(\eta)}.
\]
In particular, $\limsup_{n\to\infty} \bar F_n/n \le b_\star$.
\end{lemma}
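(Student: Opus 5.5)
The plan is to adapt the Franz--Leone/Panchenko--Talagrand fractional-moment interpolation, in the regular-graph form of~\cite[Theorem~2 and Section~3.2]{lelarge2018replica}, to the hardcore weights. We work throughout with the configuration model $\mathrm{CM}_{n,d}$. By the remark after~\eqref{eq:regular-logz-concentration}, $|\E[\+G_{n,d}]{\log Z_G}-\E[\mathrm{CM}_{n,d}]{\log Z_G}|=O(\sqrt n)$, so it suffices to prove $\limsup_n n^{-1}\E[\mathrm{CM}]{\log Z_G}\le\mathcal F_m(\eta)$. Self-loops only decrease $Z$ under our multigraph convention, so they cause no harm.

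\textbf{Step 1: the interpolating family.} Fix $m\in(0,1)$ and $\eta$ supported in $[0,1-\epsilon]$. For $s\in[0,1]$, match about $s\,dn/2$ edges uniformly among the half-edges. Attach each of the remaining free half-edges at a vertex $v$ to an independent external message $Q\sim\eta$, which contributes the factor $(1-Q)$ when $\sigma_v=1$. Let $Z_s$ be the resulting hardcore partition function. Define
\[
  \Phi(s)\defeq\frac1{nm}\E{\log \E'\bigl[Z_s^m\bigr]}-\frac{(1-s)d}{2m}\log\E{(E^{\mathrm{BP}})^m},
\]
where $\E'$ averages over the external messages $Q$ and the outer $\E$ averages over the graph. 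The second term is the compensating edge correction.

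\textbf{Step 2: the endpoints.} At $s=0$ the vertices decouple, and each has partition function $V^{\mathrm{BP}}$ with $d$ i.i.d.\ fields. Hence $\Phi(0)=\mathcal F_m(\eta)$ up to the normalization of the edge term, which is chosen to produce exactly $-\frac d{2m}\log\E{(E^{\mathrm{BP}})^m}$. At $s=1$ there are no fields, so $\Phi(1)=n^{-1}\E{\log Z_G}$. The claim therefore reduces to showing $\Phi(1)\le\Phi(0)+o(1)$.

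\textbf{Step 3: the sign of the derivative.} This is the main obstacle. One interpolation step converts two free half-edges at $u,v$, carrying fields $Q_1,Q_2$, into a genuine edge $uv$. We compare the change in $\frac1m\log\E'[Z^m]$ with $\frac1m\log\E{(1-Q_1Q_2)^m}$. Write every factor in the form $1-x$ with $x\in[0,1)$ (bounded away from $1$ thanks to $\epsilon$ and $\lambda$), and expand $\log(1-x)=-\sum_k x^k/k$. Then expand $(1-x)^m$ binomially under the $Z^m$-tilted measure.

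As in Franz--Leone, each $k$-th order term becomes a $k$-replica overlap quantity. Let $R_k\defeq\langle\sigma_u\rangle$-moments under the tilted Gibbs measure, and let $\bar q_k\defeq\E{Q^k}$. The increment then takes the form $-c_k\E{(R_k-\bar q_k)^2}\le0$ with $c_k\ge0$. The positivity of $c_k$ uses $0<m<1$, so that $x\mapsto(1-x)^m$ has the required sign pattern of Taylor coefficients (as in~\cite{lelarge2018replica}). It also uses that the hardcore edge factor depends on $\sigma_u\sigma_v$ only through the product of two $[0,1]$-valued quantities, so the ``symmetric square'' structure holds.

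The key difficulty is the fixed-degree constraint: the usual Poisson interpolation is not available. I will follow Lelarge--Oulamara and perform a discrete half-edge interpolation, $dn/2$ steps each changing $\Phi$ by $O(1/n)$. The randomness of which pair of free half-edges is matched provides the exchangeability needed for the square completion. Boundary effects from near-exhausted half-edges, and from possible multi-edges, contribute $o(1)$ in total. I will check these error terms, together with the absolute convergence of the $k$-series (which holds since all $x\le 1-\epsilon'$), carefully.

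\textbf{Step 4: the ``in particular''.} For $\eta$ a BP fixed point in $\mathcal P(\mathcal I_\lambda)$, we have $\operatorname{supp}\eta\subset\mathcal I_\lambda\subset[0,1)$ compactly. Moreover, $V^{\mathrm{BP}},E^{\mathrm{BP}}$ are bounded away from $0$ and $\infty$. Hence $\frac1m\log\E{X^m}\to\E{\log X}$ as $m\downarrow0$ for $X\in\{V^{\mathrm{BP}},E^{\mathrm{BP}}\}$, by dominated convergence applied to $X^m=1+m\log X+O(m^2)$. This gives $\mathcal F_m(\eta)\to\mathcal F_0(\eta)$. Letting $m\downarrow0$ and taking $\eta$ to be the minimizer attaining $b_\star$ yields $\limsup_n\bar F_n/n\le b_\star$.
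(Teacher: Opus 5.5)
Your architecture matches the paper's: a Lelarge--Oulamara fractional-moment interpolation under $\mathrm{CM}_{n,d}$, the $O(\sqrt n)$ transfer to $\mathcal G_{n,d}$, and then $m\downarrow0$. Your Step~4 is the paper's last paragraph. The gap is in how you obtain the interpolation bound itself.

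The paper never reruns the interpolation. Instead it does the following:
\begin{itemize}
\item It replaces the hard edge weight by $\psi_\varepsilon(x_u,x_v)=1-(1-\varepsilon)x_ux_v\in[\varepsilon,1]$, so that $Z_G\le Z_{G,\varepsilon}$.
\item It replaces $\eta$ by the law of $(1-\delta')Q+\delta'/2$, which is supported in a compact subset of $(0,1)$.
\item It applies the configuration-model bound of~\cite[Theorem~2]{lelarge2018replica}, as stated, to this regularized model.
\item It transfers the bound to $\mathcal G_{n,d}$, using that one edge edit changes $\log Z_{G,\varepsilon}$ by at most $\log(1/\varepsilon)$.
\item It lets $\varepsilon\downarrow0$ and then $\delta'\downarrow0$ by bounded convergence.
\end{itemize}
You instead run the interpolation directly on hard-core weights and on an $\eta$ that may charge $0$. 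That is outside the setting in which the cited bound is stated, so your Step~3 must stand on its own, and as written it does not.

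The problem is that your step deletes both fields at once. The quantity you subtract is therefore $\frac1m\mathbb E\log\widetilde{\mathbb E}[\langle(1-Q_1\sigma_u)(1-Q_2\sigma_v)\rangle^m]$, with the Gibbs average and tilt taken on the graph with both half-edges removed. This is not the sum of two one-field terms. The increment equals the Franz--Leone/Panchenko--Talagrand combination, which your expansion can make $\le0$, minus a correlation term. In the $m\to0$ picture that term is $\mathbb E\log\frac{\langle(1-Q_1\sigma_u)(1-Q_2\sigma_v)\rangle}{\langle1-Q_1\sigma_u\rangle\langle1-Q_2\sigma_v\rangle}$. It has no sign in general and need not vanish when the Gibbs measure has long-range correlations, which is exactly the regime where the lemma is later used (the strict free-energy gap).

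Independence of the matched pair $(u,v)$ turns the edge term into $\mathbb E[R^2]$, but it does not decouple this cross term. An increment of the form $-\sum_kc_k\mathbb E(R_k-\bar q_k)^2$ is the Poisson identity, where edges and fields arrive at independent times. Handling the coupled removal is precisely the extra content of the combinatorial interpolation you are citing. In addition, under the $m$-tilt the outer logarithm must also be expanded. The overlaps then run over several independent copies of the external messages and are centred at products $\prod_i\mathbb E[Q^{k_i}]$, not at $\mathbb E[Q^k]$.

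The hard constraint itself is probably not the obstacle: every Gibbs average $\langle\sigma_u\rangle$ is at most $\lambda/(1+\lambda)$, so your series converge. But you would still have to supply the fixed-degree step. The cheapest repair is the paper's: regularize, apply the theorem as stated, and pass to the limit.
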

\begin{proof}
Fix $\delta>0$ with $\operatorname{supp}(\eta)\subseteq[0,1-\delta]$. For $0<\varepsilon<1$, let $Z_{G,\varepsilon}$ have vertex weights $\lambda^{x_v}$ and {strictly positive} edge weights $\psi_\varepsilon(x_u,x_v)=1-(1-\varepsilon)x_ux_v$, for spins in $\{0,1\}$. Thus $Z_G\le Z_{G,\varepsilon}$. {For $0<\delta'<\delta/2$, let $\eta_{\delta'}$ be the law of $(1-\delta')Q+\delta'/2$ for $Q\sim\eta$. This law is supported in a compact subset of $(0,1)$. Specializing the configuration-model fractional-moment interpolation bound of~\cite[Theorem~2 and Section~3.2]{lelarge2018replica} to $Z_{G,\varepsilon}$ with trial marginal law $\eta_{\delta'}$ yields, under $\mathrm{CM}_{n,d}$,}
\[
 \limsup_{n\to\infty}\frac1n\E{\log Z_{G,\varepsilon}}
 \le\frac1m\log\E{\bigl(1+\lambda\prod_{i=1}^d(1-(1-\varepsilon)Q_i)\bigr)^m}
 -\frac d{2m}\log\E{\bigl(1-(1-\varepsilon)Q_1Q_2\bigr)^m},
\]
where the $Q_i$ are independent with law $\eta_{\delta'}$. The expectations on the right come from the vertex and edge terms, respectively.

{Since the softened edge weights $\psi_\varepsilon(x_u,x_v)$ lie in $[\varepsilon,1]$, replacing a single edge changes $\log Z_{G,\varepsilon}$ by at most $\log(1/\varepsilon)$, so the same configuration-model pairing martingale and conditioning on simplicity as in~\eqref{eq:regular-logz-concentration} (\Cref{lem:simplicity-prob}) give $|\E[\+G_{n,d}]{\log Z_{G,\varepsilon}}-\E[\mathrm{CM}_{n,d}]{\log Z_{G,\varepsilon}}|=O_{d,\varepsilon}(\sqrt n)$.}
Hence the displayed interpolation bound also bounds $\limsup \bar F_n/n$. Letting first $\varepsilon\downarrow0$ and then $\delta'\downarrow0$ yields the first assertion: the vertex factors are bounded between $1$ and $1+\lambda$, while the edge factors stay uniformly bounded away from zero, so bounded convergence applies.

Finally, for $\eta\in\mathcal P(\mathcal I_\lambda)$ both log factors are bounded and $\mathcal F_m(\eta)\to\mathcal F_0(\eta)$ as $m\downarrow0$. Taking the infimum over the BP fixed points proves the second assertion.
\end{proof}

\begin{lemma}\label{lem:local-hc-rs-cavity-lower}
Under replica symmetry for the hardcore model,
\[
  {\liminf_{n\to\infty}\frac{\bar F_n}{n} \ge b_\star.}
\]
\end{lemma}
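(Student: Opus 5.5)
The plan is the Aizenman--Sims--Starr cavity scheme, using the two-root coupling mentioned in the overview. I will write $\bar F_{n+2}-\bar F_n$ as a telescoping increment and show that, under replica symmetry, its $\liminf$ is at least $2\,\mathcal F_0(\eta)$ for some BP fixed point $\eta$. Since $\mathcal F_0(\eta)\ge b_\star$, averaging the increments (Ces\`aro) then gives the lower bound $\liminf\bar F_n/n\ge b_\star$. Because $dn$ must be even, I will step by $2$ when $d$ is odd; this is why there are two roots.

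\emph{Step 1: the coupling.} Take $G_{n+2}\sim\mathcal G_{n+2,d}$. Delete the two roots $x,y$ (or, when $x\sim y$, delete also the edge between them) and then perform a bounded number of edge switchings. This realizes a common core $G^-$ on $n$ vertices with $O(d)$ free half-edges $u_1,\dots,u_{k}$.
\begin{itemize}
\item Completing $G^-$ by pairing the free half-edges among themselves gives $G_n$, up to $o(1)$ total variation after the usual switching argument on the configuration model; the simplicity conditioning is controlled by \Cref{lem:simplicity-prob}.
\item Completing $G^-$ by attaching them to $x,y$ gives $G_{n+2}$.
\end{itemize}
The free half-edges sit at vertices that are uniformly random and pairwise at distance $\omega(1)$ with high probability. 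Consequently
\[
\log\frac{Z_{G_{n+2}}}{Z_{G_n}}=\log\frac{Z_{G_{n+2}}}{Z_{G^-}}-\log\frac{Z_{G_n}}{Z_{G^-}}.
\]
Each ratio is an expectation under $\mu_{G^-}$ of a bounded local function of $(\sigma_{u_1},\dots,\sigma_{u_k})$. For adding the root vertices, the function is $\prod_v\bigl(1+\lambda\prod_{i}(1-\sigma_{u_i})\bigr)$. For adding edges $u_iu_j$, it is $1-\sigma_{u_i}\sigma_{u_j}$.

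\emph{Step 2: factorization.} This is the main obstacle. I need the joint law of $(\sigma_{u_i})_{i\le k}$ under $\mu_{G^-}$ to be close to the product of its marginals, on average over the coupling.

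The difficulty is that \Cref{def:replica-symmetry} concerns $\mu_G$ for $G\sim\mathcal G_{n,d}$, not the cavity graph $G^-$. To bridge this, I will first lift the hypothesis from pairs to $k$-tuples, using the standard fact that pairwise $\varepsilon$-decorrelation implies $k$-wise decorrelation for uniformly random tuples. Then I will transfer it to $G^-$, since $G^-$ differs from a typical $\mathcal G_{n,d}$ graph by $O(1)$ edges at random locations. Removing $O(1)$ edges changes $\mu$ by a bounded Radon--Nikodym factor $\le(1+\lambda)^{O(1)}$ (hard constraints only shrink the support), and reweighting by a bounded local density preserves averaged decorrelation.

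Under factorization, both ratios become the expectations of $V^{\mathrm{BP}}$ and $E^{\mathrm{BP}}$ with $Q_i=m_{u_i\to\cdot}$. The cavity marginals on $G^-$ equal the BP messages because a free half-edge has no partner.

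\emph{Step 3: identifying a fixed point.} Let $\eta_n$ be the law of a uniformly random message of $G^-$, i.e.\ the averaged $\mathbb E\,\eta_G$. Take a subsequence along which $\liminf(\bar F_{n+2}-\bar F_n)$ is attained and, by compactness of $\mathcal P(\mathcal I_\lambda)$, $\eta_n\to\eta$ weakly. The increment then becomes
\[
2\,\mathbb E\log V^{\mathrm{BP}}-d\,\mathbb E\log E^{\mathrm{BP}}+o(1)=2\mathcal F_0(\eta_n)+o(1),
\]
where the $Q_i$ are i.i.d.\ $\eta_n$ and the counting is: two vertices and $d$ edges net. Independence across $i$ holds because distinct free half-edges are far apart.

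It remains to check that $\eta$ is a fixed point, $T\eta=\eta$. For a uniform directed edge $u\to v$ whose $(d-1)$ other neighbours $w$ are at distance $\ge 2$ from each other in $G-v$, the exact identity is
\[
m_{u\to v}=\mathbb E_{\mu_{G-v-u}}\bigl[\text{ratio of }\lambda\textstyle\prod_w(1-\sigma_w)\bigr].
\]
Replica symmetry, transferred to $G-u-v$ as in Step 2, makes $(\sigma_w)$ approximately independent with marginals $m_{w\to u}$, so \eqref{eq:BP} holds approximately for most edges. Moreover, the incoming messages are asymptotically i.i.d.\ $\eta_n$, by local tree-likeness together with the averaged decorrelation of message values, which I would get by a second-moment bound on $\eta_G$ over the random graph. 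Continuity of $T$ and $\mathcal F_0$ on $\mathcal P(\mathcal I_\lambda)$ then gives $T\eta=\eta$ and $\liminf(\bar F_{n+2}-\bar F_n)\ge 2\mathcal F_0(\eta)\ge 2b_\star$.

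The delicate points are the transfer of decorrelation to the modified graphs in Step 2 and the i.i.d.\ property of the incoming messages in Step 3.
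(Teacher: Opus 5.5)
There is a genuine gap in Step~3. You replace the random empirical message law $\eta_G$ by its deterministic average $\eta_n=\mathbb E\,\eta_G$. You then assert that the messages entering the roots are asymptotically i.i.d.\ with law $\eta_n$ because the free half-edges are far apart. Distance and replica symmetry decorrelate \emph{spins under $\mu_G$ for a fixed graph}. They say nothing about correlations of the \emph{message values} across the graph disorder, since every $m_{u\to v}$ depends on the whole graph.

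What the coupling actually delivers is conditional. Given the completed graph, the marked half-edges are, up to a density factor $1+O(n^{-1})$, independent uniform directed edges, so the $Q_i$ are approximately i.i.d.\ draws from $\eta_{G'}$. Unconditionally, their joint law is the mixture $\int\eta^{\otimes k}\,\Lambda_n(\mathrm d\eta)$, where $\Lambda_n$ is the law of $\eta_G$. Since $\mathcal F_0(\eta)$ is not affine in $\eta$ (it involves $\eta^{\otimes d}$ and $\eta^{\otimes 2}$) and $T$ is nonlinear, two things fail. The increment is $2\,\mathbb E[\mathcal F_0(\eta_G)]+o(1)$, not $2\mathcal F_0(\mathbb E\,\eta_G)+o(1)$. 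And the average $\int\eta\,\Lambda(\mathrm d\eta)$ of BP fixed points is in general not a fixed point. So neither your increment formula nor $T\eta=\eta$ for the weak limit of $\eta_n$ follows.

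The concentration of $\eta_G$ you propose to prove by a second-moment bound is not supplied by replica symmetry. A martingale or Efron--Stein bound over the $\Theta(n)$ pairing switches would require a single switch to move $\sum_v|p_{v,G}-p_{v,G'}|$ by only $o(\sqrt n)$. That is a correlation-decay estimate much stronger than the $o(n)$ that bounded-density tilts of a replica-symmetric measure give (\Cref{lem:rs-measure-properties}(ii)). Local tree-likeness does not help either. Even for a fixed graph, the messages $m_{w\to u}$ arriving at a uniform vertex $u$ from its other $d-1$ neighbours need not behave like independent draws from $\eta_G$. The paper gets this product structure only from the cavity coupling.

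The paper avoids concentration entirely, and you should restructure Step~3 the same way:
\begin{enumerate}
\item Keep $\eta_G$ random and extract subsequential limits $\Lambda$ of $\Lambda_n$ in $\mathcal P(\mathcal P(\mathcal I_\lambda))$.
\item Prove $\mathbb E[D_h(\eta_G)^2]=o(1)$ for $D_h(\nu)=\int h\,\mathrm d\nu-\int h\,\mathrm d(T\nu)$. This comes from comparing the message on a uniform directed edge $a\to b_1$ of $G$ with $\mathcal T(Q_2,\ldots,Q_d)$, where under $\mathbb R_{n-2}$ the $Q_i$ are conditionally i.i.d.\ from $\eta_{G'}$.
\item Conclude $T\eta=\eta$ for $\Lambda$-a.e.\ $\eta$, and identify the increment limit as $2\int\mathcal F_0(\eta)\,\Lambda(\mathrm d\eta)\ge 2b_\star$.
\end{enumerate}

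Two smaller points also need repair.
\begin{itemize}
\item In Step~1, an $o(1)$ total-variation coupling does not control $\mathbb E\log Z$ to $o(1)$, since $\log Z$ is of order $n$ with fluctuations of order $\sqrt n$. You need the density bound $1+O(n^{-1})$ of \Cref{lem:cavity-increment-comparison}, together with $\operatorname{Var}(\log Z)=O(n)$ and Cauchy--Schwarz.
\item In Step~2 the density runs the wrong way. $\mu_{G^-}$ charges configurations that violate the deleted edges, so it is not absolutely continuous with respect to $\mu_{G_n}$. Hence the bound $\mathrm d\mu_{G_n}/\mathrm d\mu_{G^-}\le(1+\lambda)^{O(1)}$ cannot transport decorrelation from $\mu_{G_n}$ to $\mu_{G^-}$. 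The usable comparison is with $\mu_{G_{n+2}}$ conditioned on the roots being vacant, as in \Cref{fact:cavity-switch-stability}.
\end{itemize}
Even with the correct comparison, averaged decorrelation at uniform vertices does not directly give factorization at the deficient vertices, which are determined by $G^-$. The paper obtains it in three moves. It factorizes across the marked edges of $G'$, expanding hardcore edge indicators into single-endpoint terms. It conditions on $\sigma_C=0$ to remove the matching. It then uses the second root $a'$ with the covariance bound to undo that conditioning.
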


{To prove~\Cref{lem:local-hc-rs-cavity-lower}, we} first introduce the two-root cavity coupling described in~\cite{coja2019spin}.
Given $G\sim\+G_{n,d}$, choose an ordered pair of distinct vertices $a,a'$ with $\operatorname{dist}(a,a') \ge 4$ uniformly at random, and independently order their neighbors uniformly as $B=(b_i)_{i\le d}$ and $C=(c_i)_{i\le d}$. 
Set
\begin{align}\label{eq:cavity-graph}
  H_0 \defeq G-\{a,a'\},\qquad
  G' \defeq H_0+\{b_ic_i:i\le d\},\qquad
  {\*e \defeq ((b_i,c_i))_{i\le d}.}
\end{align}

After uniformly relabeling the remaining vertices onto $[n-2]$, let $\mathbb M_n$ denote the law of the marked graph $(G',\*e)$. 
We compare $\mathbb M_n$ with the distribution $\mathbb R_{n-2}$ on pairs $(H,\*e)$, under which $H\sim\+G_{n-2,d}$ and, conditional on $H$, the $d$ {directed} edges in $\*e$ are sampled independently and uniformly. {Let $A_{n-2}$ be the set of pairs $(H,\*e)$ consisting of a simple $d$-regular graph $H$ on $[n-2]$ and an ordered tuple $\*e=((b_i,c_i))_{i\le d}$ of directed edges of $H$ whose $2d$ endpoints are pairwise distinct and satisfy $b_ic_j\notin E(H)$ for all $i\ne j$.}

\begin{lemma}\label{lem:cavity-increment-comparison}
{The law $\mathbb M_n$ is supported on $A_{n-2}$, $\mathbb R_{n-2}(A_{n-2})=1-O(n^{-1})$, and uniformly for $(H,\*e)\in A_{n-2}$,}
\begin{equation}\label{eq:marked-cavity-comparison}
  \frac{\dif\mathbb M_n}{\dif\mathbb R_{n-2}}(H,\*e) = 1+O(n^{-1}).
\end{equation} 
\end{lemma}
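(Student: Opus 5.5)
We will prove all three assertions by counting preimages of the map $\Phi:(G,a,a',B,C)\mapsto(G',\*e)$, after the uniform relabeling onto $[n-2]$.

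\emph{Support.} For a realization of the coupling we have $\operatorname{dist}_G(a,a')\ge4$, so $N(a)\cap N(a')=\varnothing$ and the $2d$ endpoints $b_i,c_j$ are pairwise distinct. Each $b_i$ and $c_i$ loses exactly one neighbour in $H_0$, and regains it through $b_ic_i$, so $G'$ is $d$-regular.

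To see that $G'$ is simple and that the marking lies in $A_{n-2}$, we check two adjacency conditions. First, $b_ic_j\notin E(G)$ for all $i,j$, since otherwise $\operatorname{dist}(a,a')\le3$. This shows $b_ic_i\notin E(H_0)$, so $G'$ is simple, and it shows $b_ic_j\notin E(G')$ for $i\ne j$. Hence $\mathbb M_n$ is supported on $A_{n-2}$.

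\emph{Inversion.} Conversely, take $(H,\*e)\in A_{n-2}$ and define $\Psi(H,\*e)$ as follows: delete the $d$ marked edges, add two new vertices $a$ adjacent to all $b_i$ and $a'$ adjacent to all $c_i$, and keep the orderings. Distinctness of endpoints makes the result simple and $d$-regular. Deleting the edge $b_ic_i$ destroys that path, and $b_ic_j\notin E(H)$ for $i\ne j$, so there is no $b$--$c$ edge. Hence $\operatorname{dist}(a,a')\ge4$.

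So $\Phi$ and $\Psi$ are mutually inverse bijections between $A_{n-2}$ and the set of tuples $(G,a,a',B,C)$ with $\operatorname{dist}(a,a')\ge4$, modulo the labeling of $a,a'$. The relabeling contributes a factor that is constant in $(H,\*e)$.

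\emph{Density.} Under the coupling, $G$ is uniform on the $N_{n,d}$ simple $d$-regular graphs. The pair $(a,a')$ is uniform among the $P(G)$ ordered pairs at distance at least $4$, and the orderings are uniform, with $(d!)^2$ choices. Hence each preimage has probability $1/(N_{n,d}P(G)(d!)^2)$.

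Deterministically, $n(n-1-1-d-d(d-1)-d(d-1)^2)\le P(G)\le n(n-1)$. So $P(G)=n^2(1+O(n^{-1}))$ uniformly in $G$.

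Therefore $\mathbb M_n(H,\*e)=c_n(1+O(n^{-1}))$ for a constant $c_n$ independent of $(H,\*e)\in A_{n-2}$. On the other hand, $\mathbb R_{n-2}(H,\*e)=1/(N_{n-2,d}(dm)^d)$ with $m=|E(H)|$ fixed, which is also constant.

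The density is thus $c_n'(1+O(n^{-1}))$ on $A_{n-2}$. Integrating over $A_{n-2}$ gives $1=\mathbb M_n(A_{n-2})=c_n'(1+O(n^{-1}))\,\mathbb R_{n-2}(A_{n-2})$. So once we show $\mathbb R_{n-2}(A_{n-2})=1-O(n^{-1})$, we get $c_n'=1+O(n^{-1})$, which is~\eqref{eq:marked-cavity-comparison}.

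\emph{Typicality of $A_{n-2}$.} Under $\mathbb R_{n-2}$ the $d$ directed edges are i.i.d. uniform among $O(n)$ edges, in a simple graph where each vertex has degree $d$. Fix a pair $i\ne j$.
\begin{itemize}
\item The edges $e_i,e_j$ share an endpoint with probability at most $O(d/n)$, since a fixed endpoint lies in only $d$ of the $dn/2$ edges.
\item Given disjoint $e_i,e_j$, the event $b_ic_j\in E(H)$ requires $c_j$ to be one of the $d$ neighbours of $b_i$. This selects $O(d)$ edges out of $\Theta(n)$, so it has probability $O(1/n)$.
\end{itemize}
A union bound over the $O(d^2)$ pairs gives $\mathbb R_{n-2}(A_{n-2}^c)=O(n^{-1})$. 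All bounds are uniform, and the only step needing care is the counting in the density step.
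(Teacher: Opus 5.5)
Your proposal is correct and follows essentially the same route as the paper. The inverse map $\Psi$ shows that every $(H,\*e)\in A_{n-2}$ has the same number of preimages (the labels of $a,a'$ together with the relabeling bijection). The bound $P(G)=n^2(1+O(n^{-1}))$ then makes $\mathbb M_n$ nearly uniform on $A_{n-2}$, while $\mathbb R_{n-2}$ is exactly uniform there. Together with the union bound $\mathbb R_{n-2}(A_{n-2})=1-O(n^{-1})$, normalizing gives the density estimate.

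One cosmetic slip: $H$ has $2|E(H)|=d(n-2)$ directed edges, so $\mathbb R_{n-2}(H,\*e)=1/\bigl(N_{n-2,d}\,(d(n-2))^d\bigr)$ rather than $1/\bigl(N_{n-2,d}\,(dm)^d\bigr)$. This is harmless, since you only use that this value does not depend on $(H,\*e)$.
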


\begin{proof}[Proof of~\Cref{lem:local-hc-rs-cavity-lower}]
We begin with expressing the free-energy increment $\bar F_n - \bar F_{n-2}$ using the two-root cavity coupling. Let $\mathbb M_n^\star$ denote the marginal distribution of the graph $G'$ defined in~\eqref{eq:cavity-graph}, and set $\rho_{n-2} \defeq \frac{\dif \mathbb M_n^\star}{\dif \+G_{n-2,d}}$. By~\Cref{lem:cavity-increment-comparison}, $\rho_{n-2} = 1 + O(n^{-1})$ uniformly. Combining this with the variance bound $\operatorname{Var}_{G\sim\+G_{n-2,d}}(\log Z_G) = O(n)$ {by~\eqref{eq:regular-logz-concentration} and~\Cref{lem:simplicity-prob}}, Cauchy--Schwarz yields
\begin{align*}
  \abs{\E[G' \sim \mathbb M_n^\star]{\log Z_{G'}} - \E[G \sim \+G_{n-2,d}]{\log Z_G}}
  &= \abs{\E[G \sim \+G_{n-2,d}]{\bigl(\rho_{n-2}(G) - 1\bigr)\log Z_G}} \\
  &\le \sqrt{\E[G \sim \+G_{n-2,d}]{\bigl(\rho_{n-2}(G) - 1\bigr)^2}} \sqrt{\operatorname{Var}_{G \sim \+G_{n-2,d}}(\log Z_G)} \\
  &= O(n^{-1/2}).
\end{align*}
Consequently, the free-energy increment satisfies
\begin{equation}\label{eq:diff-energy}
  \bar F_n - \bar F_{n-2} = \E[G \sim \+G_{n,d}]{\log \frac{Z_G}{Z_{G'}}} + o(1),
\end{equation}
where $G'$ is coupled to $G$ via~\eqref{eq:cavity-graph}.

Next, we express the ratio $Z_G / Z_{G'}$ in terms of local boundary marginals. {Since $\operatorname{dist}_G(a,a')\ge4$, the neighborhoods $B=N_G(a)$ and $C=N_G(a')$ are disjoint in $H_0=G-\{a,a'\}$. Any independent set of $G$ with $\sigma_a=0$ and $\sigma_{a'}=1$ restricts to an independent set of $H_0$ with $\sigma_C=0$, and conversely any independent set of $H_0$ with $\sigma_C=0$ extends uniquely to an independent set of $G$ with $\sigma_a=0,\sigma_{a'}=1$ and also satisfies the matching constraints $\{b_ic_i:i\le d\}$ of $G'$ because $\sigma_{c_i}=0$ for all $i\le d$. Thus both $Z_G \mu_G(\sigma_a=0, \sigma_{a'}=1)$ and $\lambda Z_{G'} \mu_{G'}(\sigma_C=0)$ equal $\lambda Z_{H_0-C}$.} Hence, we have the exact identity
\begin{equation}\label{eq:diff-energy-2}
  \frac{Z_G}{Z_{G'}} = \frac{\lambda \mu_{G'}(\sigma_C = 0)}{\mu_G(\sigma_a = 0, \sigma_{a'} = 1)}.
\end{equation}
Substituting this identity into~\eqref{eq:diff-energy} yields
\[
  \bar F_n - \bar F_{n-2} = \log \lambda + \E[G \sim \+G_{n,d}]{\log \mu_{G'}(\sigma_C = 0)} - \E[G \sim \+G_{n,d}]{\log \mu_G(\sigma_a = 0, \sigma_{a'}=1)} + o(1).
\]
To evaluate these boundary terms, write $p_{u,H}\defeq\mu_H(\sigma_u=1)$ and $p_u\defeq p_{u,G}$.
{Let $v\in V(G)$ be a uniformly chosen vertex, with expectations below averaging over $(G,a,a',B,C)$ in~\eqref{eq:cavity-graph}, the uniform relabeling of the remaining vertices, and $v$.} We use the following boundary estimates.
\begin{fact}\label{fact:cavity-marginal-estimates}
  Under replica symmetry and the coupling in~\eqref{eq:cavity-graph},
  \begin{align*}
    \E{\log\mu_{G'}(\sigma_C=0)}
      &=d\E{\log(1-p_v)}+o(1),\\
    \E{\log\mu_G(\sigma_a=0,\sigma_{a'}=1)}
      &=\E{\log(1-p_v)}+\E{\log p_v}+o(1).
  \end{align*}
\end{fact}

Applying~\Cref{fact:cavity-marginal-estimates}, we obtain
\begin{equation}\label{eq:cavity-marginal-increment}
  \bar F_n-\bar F_{n-2}
  =\log\lambda+(d-1)\E{\log(1-p_v)}-\E{\log p_v}+o(1).
\end{equation}

For $G\sim\+G_{n,d}$, {denote the law of the empirical BP cavity message distribution $\eta_G\in\mathcal P(\mathcal I_\lambda)$ by $\Lambda_n\in\mathcal P(\mathcal P(\mathcal I_\lambda))$}. Since $\mathcal I_\lambda$ is compact, so is $\mathcal P(\mathcal P(\mathcal I_\lambda))$ in the weak topology. Thus every subsequence has a further subsequence, denoted by $n_k$, along which $\Lambda_{n_k}\xrightarrow{d}\Lambda$ for some $\Lambda\in\mathcal P(\mathcal P(\mathcal I_\lambda))$.
{For $q_1,\ldots,q_d\in[0,1]$, set
\[
  \mathcal U(q_1,\ldots,q_d)\defeq\frac{\lambda\prod_{i=1}^d(1-q_i)}{1+\lambda\prod_{i=1}^d(1-q_i)}.
\]
Let $G_k\sim\+G_{n_k,d}$ with a uniformly chosen vertex $v_k\in V(G_k)$, and for $\eta\sim\Lambda$ and conditional i.i.d.\ draws $Q_1,\ldots,Q_d\sim\eta$, set $p_\star\defeq\mathcal U(Q_1,\ldots,Q_d)$.}
{
\begin{fact}\label{fact:cavity-prob}
  Assume replica symmetry. Along the subsequence for which $\Lambda_{n_k}\xrightarrow{d}\Lambda$, we have $(\eta_{G_k},p_{v_k})\xrightarrow{d}(\eta,p_\star)$.
\end{fact}}

Set $V^{\mathrm{BP}}=1+\lambda\prod_{i=1}^d(1-Q_i)$. Since $1-p_\star=(V^{\mathrm{BP}})^{-1}$ and $p_\star=\lambda\prod_i(1-Q_i)/V^{\mathrm{BP}}$, \Cref{fact:cavity-prob} gives
\[
\begin{aligned}
  \bar F_{n_k}-\bar F_{n_k-2}
  &\longrightarrow\log\lambda+(d-1)\E{\log(1-p_\star)}-\E{\log p_\star}\\
  &=(2-d)\E{\log V^{\mathrm{BP}}}-d\E{\log(1-Q_1)}.
\end{aligned}
\]
The limiting cavity message law also satisfies the BP fixed-point equation.
{
\begin{fact}\label{fact:fixed-point}
  Assume replica symmetry. Then $T\eta=\eta$ for $\Lambda$-almost every $\eta$.
\end{fact}}

{For $R\defeq\mathcal T(Q_1,\ldots,Q_{d-1})$, \Cref{fact:fixed-point} implies that, conditional on $\eta$, $R$ and $Q_d$ are independent with common law $\eta$. The definition of $\mathcal T$ gives the algebraic relation
\begin{equation}\label{eq:bp-vertex-edge-relation}
  1-RQ_d=\frac{V^{\mathrm{BP}}}{1+\lambda\prod_{i=1}^{d-1}(1-Q_i)}=(1-R)V^{\mathrm{BP}}.
\end{equation}
Taking logarithms, conditioning on $\eta$ (under which $(R,Q_d)\overset{d}{=}(Q_1,Q_2)$ for $\Lambda$-almost every $\eta$), and averaging over $\eta\sim\Lambda$ yields} $\E{\log V^{\mathrm{BP}}}=\E{\log(1-Q_1Q_2)}-\E{\log(1-Q_1)}$, and substitution gives
\[
  \bar F_{n_k}-\bar F_{n_k-2} \to 2\E{\log V^{\mathrm{BP}}}-d\E{\log(1-Q_1Q_2)}=2\int_{\mathcal P(\mathcal I_\lambda)}{\mathcal F_0(\eta)}\,\Lambda(\dif\eta)\ge2b_\star,
\]
where the last inequality uses {$\mathcal F_0(\eta)\ge b_\star$} on the BP fixed points.
{Since every subsequence of admissible $n$ has a further subsequence along which $\Lambda_{n_k}\xrightarrow{d}\Lambda$, we conclude that $\liminf_{n\to\infty}(\bar F_n-\bar F_{n-2})\ge 2b_\star$, and telescoping over $n$ in steps of $2$ gives} $\liminf_{n\to\infty} \bar F_n/n \ge b_\star$.
\end{proof}

The proofs of \Cref{lem:cavity-increment-comparison} and Facts~\ref{fact:cavity-marginal-estimates}--\ref{fact:fixed-point} are deferred to \Cref{subsec:app-hc-cavity}.

{{Combining~\Cref{lem:local-hc-fractional-upper} and~\Cref{lem:local-hc-rs-cavity-lower} identifies the limiting quenched free energy with $b_\star$. To obtain the following implication, we show that $b_\star=\phi_d(\alpha)$ and then apply concentration.}

\begin{lemma}\label{lem:hc-pointwise-rs-free-energy}
  Fix an integer $d\ge3$ and $\alpha\in(0,1/2)$. If the hardcore model on $G\sim\+G_{n,d}$ satisfies geometric replica symmetry (\Cref{def:replica-symmetry}) at $\alpha$, then it also satisfies thermodynamic replica symmetry (\Cref{def:free-energies}) at $\alpha$, i.e.,
  \[
    \frac1n\log Z_G(\alpha)\xrightarrow{\mathrm p}\phi_d(\alpha)
    \qquad\text{and}\qquad
    \lim_{n\to\infty}\frac1n\E{\log Z_G(\alpha)}=\phi_d(\alpha).
  \]
\end{lemma}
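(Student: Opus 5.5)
By \Cref{lem:local-hc-fractional-upper} and \Cref{lem:local-hc-rs-cavity-lower}, geometric replica symmetry already gives $\lim_n \bar F_n/n=b_\star$. The plan is therefore to show $b_\star=\phi_d(\alpha)$, and then to transfer the statement about $\bar F_n/n$ to convergence in probability using~\eqref{eq:regular-logz-concentration}.

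\textbf{Computing $b_\star$.} The scalar BP recurrence has the homogeneous fixed point $m^\star=\alpha/(1-\alpha)=\theta_\alpha$. Indeed, with $\lambda=\lambda(\alpha)$ one checks $\lambda(1-\theta_\alpha)^{d-1}=\lambda\bigl((1-2\alpha)/(1-\alpha)\bigr)^{d-1}=\alpha/(1-2\alpha)$, so $\mathcal T=\alpha/(1-\alpha)$. A direct evaluation then gives $\mathcal F_0(\delta_{m^\star})=\log(1+\alpha/(1-\alpha))-\frac d2\log(1-\theta_\alpha^2)$. Using $1-\theta_\alpha^2=(1-2\alpha)/(1-\alpha)^2$, this equals $-\log(1-\alpha)-\frac d2\log(1-2\alpha)+d\log(1-\alpha)=\phi_d(\alpha)$. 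Hence $b_\star\le\phi_d(\alpha)$.

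\textbf{Reverse inequality.} For every BP fixed point $\eta$ we need $\mathcal F_0(\eta)\ge\phi_d(\alpha)$. The plan is to use the upper bound of \Cref{lem:local-hc-fractional-upper} with $\eta=\delta_{m^\star}$, together with the lower bound of \Cref{lem:local-hc-rs-cavity-lower}. Since $\mathcal F_m(\delta_{m^\star})=\mathcal F_0(\delta_{m^\star})=\phi_d(\alpha)$ for every $m$ (point masses make the fractional moments trivial), we get
\[
b_\star\le\liminf_n \bar F_n/n\le\limsup_n\bar F_n/n\le\phi_d(\alpha).
\]
This only sandwiches $b_\star\le\phi_d(\alpha)$, so a genuine argument that fixed points other than $\delta_{m^\star}$ cannot give a smaller $\mathcal F_0$ is still needed.

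\textbf{Main obstacle.} I would handle it as the overview suggests, by a further fractional-moment argument applied to a minimizing fixed point $\eta^\ast$. Jensen's inequality gives $\mathcal F_m(\eta)\ge\mathcal F_0(\eta)$ termwise only in the wrong direction, so instead I would argue as follows. The limiting law $\Lambda$ produced in the proof of \Cref{lem:local-hc-rs-cavity-lower} is supported on minimizers. Geometric replica symmetry, through \Cref{fact:cavity-prob}, forces the empirical messages to satisfy the first-moment identity $\E{Q}=\theta_\alpha$, because the density of occupied vertices matches $\alpha$ by the first-moment/annealed computation $\frac1n\log\E{Z_G}\to\phi_d(\alpha)$ (after conditioning on simplicity). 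Combining the upper bound $\lim\bar F_n/n\le\frac1n\log\E{Z_G}\to\phi_d(\alpha)$ with the lower bound, a strict convexity or Jensen-gap argument on $m\mapsto m\mathcal F_m(\eta^\ast)$ near $m=0$ should then show that any non-degenerate minimizer would produce $\mathcal F_m(\eta^\ast)<\mathcal F_0(\eta^\ast)$ for small $m>0$. This contradicts the sandwich unless $\eta^\ast=\delta_{m^\star}$. Making this rigorous is the step I expect to be hardest; I would first try differentiating $m\mathcal F_m$ at $m=0$ to obtain a variance-type term that is negative unless $\eta^\ast$ is a point mass.

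\textbf{Conclusion.} Once $\lim_n\bar F_n/n=\phi_d(\alpha)$ is established, \eqref{eq:regular-logz-concentration} transferred to $\+G_{n,d}$ via \Cref{lem:simplicity-prob} gives $|\log Z_G-\bar F_n|=o_{\mathrm p}(n)$, hence $\frac1n\log Z_G(\alpha)\xrightarrow{\mathrm p}\phi_d(\alpha)$.
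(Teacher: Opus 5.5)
Your reduction to $b_\star=\phi_d(\alpha)$, your check that $\mathcal F_0(\delta_{\theta_\alpha})=\phi_d(\alpha)$, and your final concentration step are all fine. The proof has a genuine gap exactly where you flag it: nothing in the proposal shows that a minimizing BP fixed point is a point mass, and the ingredients you reach for would not supply this. The detour through $\E{Q}=\theta_\alpha$ is unjustified. The annealed first-moment computation describes the first-moment (planted) ensemble, not the typical Gibbs measure, unless you already know that quenched and annealed free energies agree, which is what you are proving. It is also insufficient, since a mean constraint cannot force $\eta$ to be a Dirac mass. You do not need the limit law $\Lambda$ at all. Take any minimizer $\eta_\star$ of $\mathcal F_0$ over the compact set of fixed points. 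If $\left.\partial_m\mathcal F_m(\eta_\star)\right|_{m=0^+}<0$, then \Cref{lem:local-hc-fractional-upper} gives $b_\star=\lim_n\bar F_n/n\le\mathcal F_m(\eta_\star)<\mathcal F_0(\eta_\star)=b_\star$ for small $m>0$, a contradiction.

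The missing idea is why that slope is negative. By the cumulant expansion it equals $\frac12\bigl(\operatorname{Var}(\log V^{\mathrm{BP}})-\frac d2\operatorname{Var}(\log E^{\mathrm{BP}})\bigr)$. This is not ``a variance-type term that is negative unless $\eta$ is a point mass'' for general laws; the sign comes from the fixed-point property.
\begin{itemize}
\item With $R\defeq\mathcal T(Q_1,\ldots,Q_{d-1})$, the condition $T\eta=\eta$ makes $R\sim\eta$ independent of $Q_d$. The algebraic identity $1-RQ_d=(1-R)V^{\mathrm{BP}}$, together with the symmetry of $\log(1-RQ_d)$ in $(R,Q_d)$, gives $\operatorname{Var}(\log E^{\mathrm{BP}})=\operatorname{Var}\bigl(\E{\log V^{\mathrm{BP}}\mid Q_d}\bigr)+\E{\operatorname{Var}(\log V^{\mathrm{BP}}\mid Q_1,\ldots,Q_{d-1})}$.
\item Write the Hoeffding decomposition $\log V^{\mathrm{BP}}=\sum_{S\subseteq[d]}A_S$ with $a_k\defeq\sum_{|S|=k}\E{A_S^2}$. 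The two terms above are $a_1/d$ and $\frac1d\sum_k ka_k$, so the slope is $-\frac14\sum_{k\ge3}(k-2)a_k\le0$.
\item Strictness needs one more argument. A non-Dirac fixed point has at least three support points, because $d-1\ge2$ and $\mathcal T$ is strictly monotone in each coordinate. If $a_k=0$ for all $k\ge3$, then $\log(1+cxyz)$ would be a sum of functions of at most two coordinates on the support. That would force the mixed difference $D(z)$, with $\exp D(z)=1+c(b-a)^2z/(1+cabz)^2$, to be constant on at least three points. This is impossible, since $\exp D$ takes each value at most twice.
\item You must also note that $m/(1-m)^d=\lambda$ has the unique root $\theta_\alpha$ in $(0,1)$. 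Only then is the Dirac minimizer $\delta_{\theta_\alpha}$, which gives $b_\star=\phi_d(\alpha)$.
\end{itemize}
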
}

\begin{proof}[Proof of~{\Cref{lem:hc-pointwise-rs-free-energy}}]
Combining~\Cref{lem:local-hc-fractional-upper} and~\Cref{lem:local-hc-rs-cavity-lower} establishes the existence of the limiting free energy {$f\defeq\lim_{n\to\infty}\bar F_n/n$} and identifies it with the {BP Bethe variational value}:
\[
  f=\lim_{n\to\infty}\frac{\bar F_n}{n}=b_\star.
\]
It therefore remains to identify the minimizers of the {BP Bethe functional} {$\mathcal F_0$} over the set of BP fixed points. We begin by showing that every non-Dirac fixed point $\eta = T\eta \in \mathcal P(\mathcal I_\lambda)$ satisfies the slope condition
\[
  {\left.\partial_m\mathcal F_m(\eta)\right|_{m=0^+}}<0.
\]
Fix $\eta=T\eta$ and let $Q_1,\ldots,Q_d$ be independent with law $\eta$.
Applying the cumulant expansion
\[
  \frac{1}{m}\log\E{X^m} = \E{\log X} + \frac{m}{2}\operatorname{Var}(\log X) + O(m^2)
\]
to the definition of {$\mathcal F_m(\eta)$}, the derivative at $m = 0^+$ satisfies
\begin{align}\label{eq:partial-Phi}
  {\left.\partial_m\mathcal F_m(\eta)\right|_{m=0^+}} = \frac{1}{2}\left(\operatorname{Var}(\log V^{\mathrm{BP}}) - \frac{d}{2}\operatorname{Var}(\log E^{\mathrm{BP}})\right).
\end{align}
We first apply the {Hoeffding orthogonal decomposition} to $\log V^{\mathrm{BP}}$. Let $\{A_S\}_{S \subseteq [d]}$ be functions {with $A_S$ depending only on $Q_S\defeq(Q_i)_{i\in S}$} such that for any ${I \subseteq [d]}$,
\[
  \sum_{S \subseteq {I}} A_S = \E{\log V^{\mathrm{BP}} \mid Q_{{I}}}
  \qquad{\Bigl(\text{so }A_S=\sum_{I\subseteq S}(-1)^{|S\setminus I|}\E{\log V^{\mathrm{BP}}\mid Q_I}\Bigr).}
\]
A simple calculation yields $\E{A_S A_{S'}} = 0$ for any $S \neq S'$.
Therefore, the variance of $\log V^{\mathrm{BP}}$ can be expressed as
\begin{align}\label{eq:var-log-V}
  \operatorname{Var}(\log V^{\mathrm{BP}}) = \sum_{\varnothing \neq S \subseteq [d]} \E{A^2_S}.
\end{align}

Write $R\defeq\mathcal T(Q_1,\ldots,Q_{d-1})$.
Since $T\eta=\eta$ and $Q_1,\ldots,Q_{d-1}$ are independent, $R\sim\eta$ independently of $Q_d$. 
Hence, we have $\log E^{\mathrm{BP}} \overset{d}{=} \log (1-R Q_d)$, {and~\eqref{eq:bp-vertex-edge-relation} gives} $\log V^{\mathrm{BP}}=-\log(1-R)+\log(1-RQ_d)$.
The variance of $\log E^{\mathrm{BP}}$ can be expressed as
\begin{align}\label{eq:var-log-E}
  \nonumber\operatorname{Var}(\log E^{\mathrm{BP}}) &= \operatorname{Var}(\log (1-R Q_d))\\
  \nonumber &=\operatorname{Var}(\E{\log (1-RQ_d) \mid R}) + \E{\operatorname{Var}(\log (1-R Q_d) \mid R)}\\
  \nonumber &=\operatorname{Var}(\E{\log (1-RQ_d) \mid Q_d}) + \E{\operatorname{Var}(\log (1-R Q_d) \mid R)}\\
  &=\operatorname{Var}(\E{\log V^{\mathrm{BP}} \mid Q_d}) + \E{\operatorname{Var}(\log V^{\mathrm{BP}} \mid Q_1,\ldots,Q_{d-1})}.
\end{align}
By orthogonal decomposition, $\E{\log V^{\mathrm{BP}} \mid Q_d} = A_{\varnothing} + A_{ \{d\}}$, implying 
\[
  \operatorname{Var}(\E{\log V^{\mathrm{BP}} \mid Q_d}) = \E{A_{\{d\}}^2} = \frac{1}{d} a_1,
\]
where $a_k \defeq \sum_{S:\abs{S} = k} \E{A^2_S}$.
Similarly, $\log V^{\mathrm{BP}} - \E{\log V^{\mathrm{BP}} \mid Q_1,Q_2,\ldots,Q_{d-1}} = \sum_{S:d \in S} A_S$, implying 
\[
  \E{\operatorname{Var}(\log V^{\mathrm{BP}} \mid Q_1,Q_2,\ldots,Q_{d-1})} = \sum_{S : d\in S} \E{A_S^2} = \frac{1}{d} \sum_{k=1}^d k a_k.
\]
Combining~\eqref{eq:var-log-E}, the variance of $\log E^{\mathrm{BP}}$ is $\frac{1}{d}\bigl(2 a_1+\sum_{k=2}^d ka_k\bigr)$.
Together with~\eqref{eq:var-log-V} and~\eqref{eq:partial-Phi}, this gives
\begin{equation}\label{eq:local-hc-fixed-point-slope}
\begin{aligned}
  {\left.\partial_m\mathcal F_m(\eta)\right|_{m=0^+}}
  &=\frac12\left(\operatorname{Var}(\log V^{\mathrm{BP}})-\frac d2\operatorname{Var}(\log E^{\mathrm{BP}})\right)\\
  &=\frac12\left(\sum_{k=1}^d a_k-\frac12\left(2a_1+\sum_{k=2}^d k a_k\right)\right)\\
  &=-\frac14\sum_{k=3}^d(k-2)a_k\le0.
\end{aligned}
\end{equation}

We next show that this inequality is strict whenever $\eta$ is non-Dirac. We first observe that the support of any non-Dirac fixed point contains at least three points: if $\operatorname{supp}(\eta)$ contains two distinct points {$q_1<q_2$}, then because $d-1\ge2$ and $\mathcal T$ is strictly monotone in each coordinate, evaluating $\mathcal T$ on tuples containing zero, one, or two copies of {$q_2$} produces three strictly distinct output values. Since $T\eta=\eta$, all three values belong to $\operatorname{supp}(\eta)$.

Fix $Q_4,\ldots,Q_{d} \sim \eta$. The function $\log V^{\mathrm{BP}}$ can be written as $g(1-Q_1,1-Q_2,1-Q_3)$, where
\[
  g(x,y,z)=\log(1+cxyz),\qquad {\text{with }}
  c=\lambda\prod_{i=4}^d (1-Q_i)>0.
\]
Suppose, for contradiction, that $a_k=0$ for all $k\ge3$. {{The Hoeffding orthogonal decomposition} then expresses $\log V^{\mathrm{BP}}$ as a sum of functions of at most two coordinates almost surely. Each component is a finite combination of integrals of the continuous function $\log V^{\mathrm{BP}}$ over the other coordinates, and hence is continuous. The identity therefore extends to the support of the product law. Fix distinct $q_1,q_2\in\operatorname{supp}(\eta)$, put $a=1-q_1$ and $b=1-q_2$, and fix values of $Q_4,\ldots,Q_d$ in $\operatorname{supp}(\eta)$. The resulting three-coordinate identity forces the difference}
\[
  D(z)\defeq g(b,b,z)-g(b,a,z)-g(a,b,z)+g(a,a,z)
\]
{to be constant in $z=1-q$ for $q\in\operatorname{supp}(\eta)$, for these fixed $a,b$.} However, direct calculation gives
\[
  \exp(D(z))
  =\frac{(1+cb^2z)(1+ca^2z)}{(1+cabz)^2}
  =1+\frac{c(b-a)^2z}{(1+cabz)^2}.
\]
For any constant $C>1$, the level equation $1+\frac{c(b-a)^2z}{(1+cabz)^2}=C$ has at most two solutions, violating $|\operatorname{supp}(\eta)| \ge 3$.

Let $\eta_\star$ be a minimizer of {$\mathcal F_0$} with $\eta_\star = T \eta_\star$. If $\eta_\star$ were non-Dirac, \Cref{lem:local-hc-fractional-upper} together with the above claim would imply, for all sufficiently small $m>0$,
\[
  b_\star=f\le{\mathcal F_m(\eta_\star)<\mathcal F_0(\eta_\star)}=b_\star,
\]
which is impossible. Hence the minimizer must be a Dirac mass {$\eta_\star=\delta_{m_\star}$}. For a point mass, the BP fixed-point equation {$T\delta_{m_\star}=\delta_{m_\star}$} reduces to the fixed point of the tree recursion:
\[
  {\frac{m_\star}{(1-m_\star)^d}}=\lambda,
\]
which admits a unique solution {$m_\star = \theta_\alpha = \frac{\alpha}{1-\alpha}$ in $(0,1)$ (\eqref{eq:hardcore_kappa})}. Therefore, the {BP Bethe functional} gives
\[
  f=b_\star={\mathcal F_0(\delta_{\theta_\alpha})}
  =\log\bigl(1+\lambda(1-\theta_\alpha)^d\bigr)-\frac d2\log(1-\theta_\alpha^2)
  =\phi_d(\alpha),
\]
{and concentration~\eqref{eq:regular-logz-concentration} upgrades $n^{-1}\bar F_n\to\phi_d(\alpha)$ to $n^{-1}\log Z_G(\alpha)\xrightarrow{\mathrm p}\phi_d(\alpha)$,} which completes the proof.
\end{proof}

{For the antiferromagnetic Potts model, we follow a similar proof strategy, using the Bethe-state decomposition of~\cite{coja2019spin} and the concentration bound~\eqref{eq:regular-logz-concentration}. The proof of \Cref{lem:potts-pointwise-rs-free-energy} is given in \Cref{subsec:app-potts-rs}.}

{
\begin{lemma}\label{lem:potts-pointwise-rs-free-energy}
  Fix integers $d\ge3$ and $q\ge2$, and let $\beta\in(0,1]$. If the Potts model on $G\sim\+G_{n,d}$ satisfies geometric replica symmetry (\Cref{def:replica-symmetry}) at $\beta$, then it also satisfies thermodynamic replica symmetry (\Cref{def:free-energies}) at $\beta$, i.e.,
  \[
    \frac1n\log Z_{G,q}(\beta)\xrightarrow{\mathrm p}\phi_{d,q}(\beta)
    \qquad\text{and}\qquad
    \lim_{n\to\infty}\frac1n\E{\log Z_{G,q}(\beta)}=\phi_{d,q}(\beta).
  \]
\end{lemma}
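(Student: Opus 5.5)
The plan is to mirror the hardcore argument of \Cref{lem:hc-pointwise-rs-free-energy}: obtain a matching upper and lower bound on $\lim_n n^{-1}\E{\log Z_{G,q}(\beta)}$ by a Potts BP Bethe variational value $b^{\mathrm{P}}_\star$, show that every minimizer is the uniform point mass $\delta_{\*U_q}$, evaluate it, and finish with concentration. Messages are now vectors $\*m_{u\to v}\in\Delta_{[q]}$, namely the marginal of $\sigma_u$ on $G-v$. Since $\beta>0$, every message lies in a compact set $\mathcal I_\beta\subset\Delta_{[q]}$ with entries bounded below by $c(\beta,d,q)>0$. The BP operator is
\[
\mathcal T(\*Q_1,\ldots,\*Q_{d-1})_a\propto\prod_{i}(1-(1-\beta)Q_{i,a}),
\]
and the vertex and edge factors are
\[
V^{\mathrm{BP}}=\sum_a\prod_{i\le d}\bigl(1-(1-\beta)Q_{i,a}\bigr),\qquad E^{\mathrm{BP}}=1-(1-\beta)\*Q_1^\trans\*Q_2 .
\]
These define $\mathcal F_0$, $\mathcal F_m$ and $b^{\mathrm P}_\star$ exactly as in the hardcore case. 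The case $\beta=1$ is trivial, since then $Z=q^n$.

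\textbf{Upper bound.} All Potts weights are already strictly positive, so the softening step is unnecessary. The fractional-moment interpolation bound of~\cite{lelarge2018replica} (equivalently~\cite[Theorem~2.7]{coja2019spin}, which applies since the positive-weight Potts model satisfies \textbf{POS}) gives $\limsup_n \bar F_n/n\le\mathcal F_m(\eta)$ for all $m\in(0,1)$ and all $\eta$. Letting $m\downarrow 0$ yields $\limsup\le b^{\mathrm P}_\star$. As before, the transfer from $\mathrm{CM}$ to $\+G_{n,d}$ costs only $O(\sqrt n)$.

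\textbf{Lower bound.} I would use the same two-root cavity coupling \eqref{eq:cavity-graph} together with \Cref{lem:cavity-increment-comparison}, which is purely graph-theoretic. The exact identity \eqref{eq:diff-energy-2} has no hard constraint to exploit, so it must be replaced. I would write $Z_G/Z_{H_0}$ and $Z_{G'}/Z_{H_0}$ as expectations under $\mu_{H_0}$ of the local factors $\sum_a\prod_i(1-(1-\beta)\one{\sigma_{b_i}=a})$ (and similarly for $C$) and $\prod_i(1-(1-\beta)\one{\sigma_{b_i}=\sigma_{c_i}})$. Geometric replica symmetry, transferred to $H_0$ and $G'$ through the $1+O(n^{-1})$ density and local tree-likeness, makes the $2d$ boundary spins asymptotically independent, with marginals equal to the corresponding cavity messages. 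This is the Bethe-state / factorization step of~\cite{coja2019spin}, and it gives
\[
\bar F_n-\bar F_{n-2}\to 2\E{\log V^{\mathrm{BP}}}-d\E{\log E^{\mathrm{BP}}}
\]
averaged over a limit point $\Lambda$ of the empirical message laws. The analogue of \Cref{fact:fixed-point}, that $\Lambda$-a.e.\ $\eta$ satisfies $T\eta=\eta$, follows as before. Hence the limit is at least $2b^{\mathrm P}_\star$, and telescoping gives $\liminf \bar F_n/n\ge b^{\mathrm P}_\star$.

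\textbf{Minimizer identification (main obstacle).} I expect this step to be the hardest. The hardcore argument used the scalar identity \eqref{eq:bp-vertex-edge-relation}. Its vector analogue is
\[
E^{\mathrm{BP}}(\*R,\*Q_d)=V^{\mathrm{BP}}\Big/\textstyle\sum_a\prod_{i<d}\bigl(1-(1-\beta)Q_{i,a}\bigr),
\]
so $\log V^{\mathrm{BP}}=\log E^{\mathrm{BP}}(\*R,\*Q_d)+(\text{function of }Q_{<d})$. This still holds, so the Hoeffding computation in \eqref{eq:local-hc-fixed-point-slope} transfers verbatim and gives slope $-\frac14\sum_{k\ge3}(k-2)a_k\le0$. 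Strictness requires showing that a non-Dirac fixed point forces $a_3>0$. I would attempt a mixed-difference argument in three coordinates. Take the function $g(\*x,\*y,\*z)=\log\sum_a c_a x_ay_az_a$, with $x_a=1-(1-\beta)Q_a$, and suppose its double difference in $(\*x,\*y)$ between two distinct support points were constant in $\*z$ over a support that is rich enough; the goal is to reach a contradiction from that. The support-richness argument (at least three points, generated by iterating $\mathcal T$) needs care in dimension $q$. If that fails, the fallback is to invoke the Potts Bethe-state uniqueness result of~\cite[Lemma~7.3, Theorem~7.6]{coja2019spin} directly.

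Once the minimizer is shown to be Dirac, say $\delta_{\*m}$, the fixed-point equation $T\delta_{\*m}=\delta_{\*m}$ must hold. Moreover, replica symmetry together with the color symmetry of $\+G_{n,d}$ forces the average message to be $\*U_q$. So $\*m=\*U_q$. Then
\[
\mathcal F_0(\delta_{\*U_q})=\log\Bigl(q\bigl(\tfrac{q-1+\beta}{q}\bigr)^d\Bigr)-\tfrac d2\log\tfrac{q-1+\beta}{q}=\phi_{d,q}(\beta).
\]
Finally, \eqref{eq:regular-logz-concentration} upgrades convergence of $n^{-1}\E{\log Z_{G,q}(\beta)}$ to convergence in probability.
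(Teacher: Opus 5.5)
Your proposal has two genuine gaps, and the idea that closes both of them is one you mention only in passing.

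\emph{The lower bound.} In the hardcore proof, asymptotic independence of the boundary spins did not come from replica symmetry plus the $1+O(n^{-1})$ density comparison alone. It used the hard constraint twice.
\begin{enumerate}
\item It rewrote every joint configuration on a marked edge $b_ic_i$ as a combination of single-endpoint indicators, so that vertex-level factorization (\Cref{lem:rs-measure-properties}(i)) applies to $\mu_{G'}$ across the $d$ marked edges.
\item It deleted the matching edges by conditioning on $\sigma_C=0$, which leads to \eqref{eq:fact-hc-product}.
\end{enumerate}
Neither trick exists for Potts. Averaged pairwise decorrelation of vertices does not by itself control correlations between random \emph{edges}. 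Nor does it give decoupling under $\mu_{H_0}$ of the two ends $b_i,c_i$ of a deleted edge, since these are the special degree-$(d-1)$ vertices and not a uniform pair. Local tree-likeness does not supply this either.

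This is exactly where the Bethe-state decomposition of~\cite{coja2019spin} is needed. However, that decomposition produces up to $L_n\to\infty$ states with state-conditional messages. The resulting cavity formula is a state-reweighted functional: logarithms of $\*z_{G,h}/\*z_G$-averages of vertex and edge factors. It is not $\mathcal F_0(\eta_G)$ for a single BP fixed-point law. So $\liminf(\bar F_n-\bar F_{n-2})\ge 2b^{\mathrm P}_\star$ does not follow as written.

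\emph{The minimizer step.} You flag this step yourself, and it is left unresolved. The fallback does not work as stated. As cited in this paper, \cite[Lemma~7.3]{coja2019spin} only verifies \textbf{POS}, and Theorem~7.6 there is the Potts case of their variational formula; neither is a uniqueness statement for BP fixed points. Separately, you identify the Dirac point via color symmetry of $\+G_{n,d}$, but that symmetry says nothing about an abstract minimizer $\eta_\star$. The conclusion is still true for a different reason: $m_a\propto(1-(1-\beta)m_a)^{d-1}$ forces $\*m=\*U_q$, because $x\mapsto x(1-(1-\beta)x)^{-(d-1)}$ is strictly increasing.

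\emph{The missing idea.} Use color symmetry as the main mechanism, not as a final step: $\mu_{G,v}=\*U_q$ exactly for every $v$. The paper proceeds as follows.
\begin{enumerate}
\item It applies the covariance bound \eqref{eq:state-partition-marginals} to the Bethe-state partition, with $L_n\to\infty$ slowly enough that $(L_n+1)\E{r_n(G)}\to0$. This shows the state-conditional marginals are $o(1)$-close to $\*U_q$ on average.
\item Since $(\*U_q,\*U_q)$ is the unique pair for which $R_{\*x,\*y}$ has two uniform marginals, all state-conditional cavity messages are near $\*U_q$ in the state-weighted directed-edge average.
\item It transfers this to the cavity boundary and inserts it into the free-energy formula of~\cite[Theorems~2.6 and~7.5]{coja2019spin}. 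This evaluates the limit directly as $\phi_{d,q}(\beta)$.
\end{enumerate}
No fractional-moment upper bound and no minimizer identification are needed.

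The same observation would also rescue your own scheme. Suppose you had a valid boundary factorization with the cavity messages as marginals. Then uniformity would force the limiting message law to be $\delta_{\*U_q}$, and your increment limit would already equal $2\phi_{d,q}(\beta)$ exactly. The step you expected to be hardest would then disappear.
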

}

{By Dobrushin uniqueness, the Potts model satisfies geometric replica symmetry on some interval $[\beta_0,1]$ with $\beta_0\in(0,1)$. Hence \Cref{lem:potts-pointwise-rs-free-energy} gives thermodynamic replica symmetry throughout this interval.}

\subsection{From Thermodynamic to Variational Replica Symmetry}\label{subsec:local-bethe-consequences}\label{subsec:local-strict-gap}

In this subsection, we prove that thermodynamic replica symmetry implies variational replica symmetry. Equivalently, we establish the following contrapositive statement.
%  failure of variational replica symmetry implies failure of thermodynamic replica symmetry.

\begin{lemma}\label{lem:thermodynamic-implies-variational}\label{lem:strict-free-energy-gaps}
Fix integers $d\ge3$ and $q\ge2$, and let $G\sim\+G_{n,d}$. The following holds.
\begin{enumerate}[label=(\roman*),font=\normalfont,leftmargin=2.5em,labelsep=0.6em,topsep=0.25em,itemsep=0.2em,parsep=0pt]
  \item \emph{Hardcore:} {If $\alpha\in(0,1/2)$ and $\mathcal B_{d,\alpha}(\nu)>0$ for some $\nu\in\mathcal P_{\*\pi_\alpha}$, then there exists $\varepsilon=\varepsilon(d,\alpha)>0$ such that}
  \[
    \lim_{n\to\infty}\Pr{\frac1n\log Z_G(\alpha)\le\phi_d(\alpha)-\varepsilon}=1
    \qquad\text{and}\qquad
    \limsup_{n\to\infty}\frac1n\E{\log Z_G(\alpha)}\le\phi_d(\alpha)-\varepsilon.
  \]
  \item \emph{Antiferromagnetic Potts:} {If $\beta\in(0,1)$ and $\mathcal B_{d,\beta}(\nu)>0$ for some $\nu\in\mathcal P_{\*U_q}$, then there exists $\varepsilon=\varepsilon(d,q,\beta)>0$ such that}
  \[
    \lim_{n\to\infty}\Pr{\frac1n\log Z_{G,q}(\beta)\le\phi_{d,q}(\beta)-\varepsilon}=1
    \quad\text{and}\quad
    \limsup_{n\to\infty}\frac1n\E{\log Z_{G,q}(\beta)}\le\phi_{d,q}(\beta)-\varepsilon.
  \]
\end{enumerate}
\end{lemma}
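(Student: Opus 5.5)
The plan is a fractional-moment interpolation bound evaluated at a family of trial message laws built from the positive witness $\nu$, followed by concentration. The overview already describes this route. For both models I will work with a trial law $\eta_m$ on cavity messages with the property that the normalized fractional functional $\mathcal F_m(\eta_m)-\phi$ equals $0$ at $m=1$ and has left derivative $-\mathcal B_{d,t}(\nu)$ at $m=1^-$, up to the sign convention of the paper. Concretely, I will take $\eta_m$ to be the size-biased reweighting of $\nu$, i.e.\ the message law with $\frac{\dif\eta_m}{\dif\nu}(\*\mu)\propto(\text{local partition weight})^{1-m}$. For $m=1$ the vertex/edge expectations then reduce, using $\E[\nu]{\*\mu}=\*\pi$ and $M\*\pi=\allone$, to $\E{V_d}=\E{E}=1$. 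Expanding $\frac1m\log\E{V^m}$ around $m=1$ produces the terms $\E{V\log V}$ and $\E{E\log E}$, which are exactly the pieces of $\mathcal B_{d,t}$ in \eqref{eq:marked-functional}. The first step is to write $\mathcal F_m$ in the normalized (Bethe canonical) coordinates of \Cref{def:bethe-canonical-form}, so that messages are elements of $\Delta_\Omega$ and the vertex and edge factors are $V_{d,t}$ and $E_t$. Writing $\Psi(m)\defeq\frac1m\log\E{V_d^m}-\frac d{2m}\log\E{E^m}$ under i.i.d.\ $\*\mu_i\sim\nu$, one gets $\Psi(1)=0$ and $\Psi'(1)=\mathcal B_{d,t}(\nu)$. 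The upper bound on the fractional moment then involves $\Psi(1)-(1-m)\Psi'(1)+O((1-m)^2)$. A careful sign bookkeeping, which I will do first, should give the statement that for $m$ slightly below $1$ the interpolation bound is at most $\phi-\frac{1-m}{2}\mathcal B_{d,t}(\nu)<\phi$.

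The second step is the interpolation inequality itself. This is the Lelarge--Oulamara style bound \cite[Theorem~2]{lelarge2018replica} already used in \Cref{lem:local-hc-fractional-upper}, now applied in its general finite-spin form for the Potts model. For the hardcore model I will again soften the hard constraint via $\psi_\varepsilon$ and send $\varepsilon\downarrow0$. This yields $\limsup_n n^{-1}\E[\mathrm{CM}]{\log Z_G}\le$ the $m$-dependent trial functional evaluated at the chosen message law. Two further ingredients enter here. First, boundary messages supported on faces of the simplex are handled by mixing $\nu$ with a little of $\delta_{\*\pi}$, which keeps the mean equal to $\*\pi$ and preserves $\mathcal B>0$ by continuity. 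Second, the transfer from $\mathrm{CM}_{n,d}$ to $\+G_{n,d}$ goes through \Cref{lem:simplicity-prob} and the $O(\sqrt n)$ comparison following \eqref{eq:regular-logz-concentration}. Choosing a fixed $m<1$ then gives $\limsup_n n^{-1}\E{\log Z_G}\le\phi-2\varepsilon$. The probability statement follows from the Azuma bound \eqref{eq:regular-logz-concentration}, which holds for $G\sim\+G_{n,d}$ up to the factor $c_d^{-1}$, once $\varepsilon$ is halved.

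The main obstacle is the identification in the first step. The interpolation bound is naturally stated in terms of BP messages $\eta$ with the fractional power $m$ applied to unnormalized factors $V^{\mathrm{BP}},E^{\mathrm{BP}}$. I therefore need a change of variables between BP messages and the canonical-coordinate messages $\*\mu$ with mean $\*\pi$, with the reweighting chosen so that the $m=1$ value is exactly $\phi$. If the $m$-dependent reweighting does not produce the derivative cleanly, I will instead fix $\eta$ equal to the pushforward of $\nu$ and verify directly that $\frac{\dif}{\dif m}\big|_{m=1^-}$ of the bound equals $-\mathcal B_{d,t}(\nu)$. The reason this should work is that, at $m=1$, the trial functional equals $\log\E{V}-\frac d2\log\E{E}=\phi$ by the mean constraint, independently of $\nu$. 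Hence only the first-order term matters, and the first-order term is computable by differentiating $\E{X^m}$ under bounded, positive factors.
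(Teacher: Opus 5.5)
For part (i) your architecture matches the paper's: \Cref{lem:local-hc-fractional-upper}, a trial family built from the witness $\nu$, the left derivative at $m=1$, then concentration. However, the step you flag as the main obstacle is a genuine gap, and both versions you describe fail for the hardcore model.

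The interpolation bound is stated for laws $\eta$ of BP messages $Q\in[0,1)$. Write $\*\mu=(1-u,u)^\trans$, $a(u)=1-\theta_\alpha+\theta_\alpha u$, and use the natural identification $Q=u/a(u)$. Then
\[
  V^{\mathrm{BP}}=(1-\alpha)^{d-1}\frac{V_{d,\alpha}}{\prod_{i=1}^d a(u_i)},\qquad
  E^{\mathrm{BP}}=(1-2\alpha)\frac{E_\alpha}{a(u_1)a(u_2)},
\]
so the canonical factors appear only after the non-constant denominators are absorbed.

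At $m=1$ the functional depends on $\eta$ only through its mean: $\mathcal F_1(\eta)=g(\E[\eta]{Q})$, where $g(x)=\log(1+\lambda(1-x)^d)-\frac d2\log(1-x^2)$. The unique critical point of $g$ in $[0,1)$ is the scalar BP fixed point $\lambda(1-x)^d=x$, i.e.\ $x=\theta_\alpha$. This point is the strict global minimum, and there $g=\phi_d(\alpha)$.

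Your tilt with exponent $1-m$ is trivial at $m=1$, and your fallback uses no tilt at all. In both cases $\eta_1$ is the plain pushforward of $\nu$. Since $u\mapsto u/a(u)$ is strictly concave with value $\theta_\alpha$ at $u=\alpha$, we get $\E[\nu]{u/a(u)}<\theta_\alpha$ whenever $\nu\ne\delta_{\*\pi_\alpha}$. Hence $\mathcal F_1(\eta_1)>\phi_d(\alpha)$, and the first-order argument at $m=1^-$ cannot give a bound below $\phi_d(\alpha)$.

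Your claim that the $m=1$ value is $\phi$ ``independently of $\nu$'' holds only in canonical coordinates, i.e.\ after the right change of measure. That change of measure is the missing idea (\Cref{prop:nu-eta-fractional-correspondence}). Tilt with exponent $m$, $\nu_m(\dif u)\propto a(u)^m\nu(\dif u)$, and let $\eta_m$ be the law of $u/a(u)$ under $\nu_m$. The tilt densities then cancel the denominators exactly in the $m$-th moments, so $\mathcal F_m(\eta_m)=\phi_d(\alpha)+\Psi(m)$ for every $m\in(0,1]$. In particular $\E[\eta_1]{Q}=\E[\nu]{u}/\E[\nu]{a(u)}=\theta_\alpha$, and your expansion of $\Psi$ finishes the proof.

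Relatedly, to make $\eta_m$ supported in $[0,1)$ you should push every message forward under $\*\mu\mapsto(1-\delta)\*\mu+\delta\*\pi_\alpha$, as the paper does. Mixing the law $\nu$ with a small atom at $\*\pi_\alpha$ leaves any atom at $u=1$ (e.g.\ that of the hard trial law) in place, so $Q=1$ is still charged.

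For part (ii) your route is correct but differs from the paper's. The paper deduces (ii) from the free-energy criterion of \cite{coja2020inference} on the bipartite factor-graph model (\Cref{lem:local-potts-free-energy-criterion}). It uses $\mathcal B_{\mathrm{bip}}(\nu)=\phi_{d,q}(\beta)+\mathcal B_{d,\beta}(\nu)$ and then transfers the gap to $\+G_{n,d}$.

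You instead apply the fractional-moment bound directly. This is the bound the paper records as \Cref{lem:local-potts-fractional-upper} and uses for the coloring gap in \Cref{lem:coloring-free-energy-gap}. It is stated there for $q\ge3$, so for $q=2$ you would invoke the same interpolation bound for the Ising antiferromagnet.

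For Potts no tilt is needed, because the BP factors are constant multiples of the canonical ones. With $\zeta_\beta=(q-1+\beta)/q$, one has $\sum_{\sigma\in[q]}\prod_{i=1}^d(1-(1-\beta)(\*\mu_i)_\sigma)=q\zeta_\beta^dV_{d,\beta}$ and $1-(1-\beta)\*\mu_1^\trans\*\mu_2=\zeta_\beta E_\beta$. So the trial functional is exactly $\phi_{d,q}(\beta)+\Psi(m)$ for untilted $\nu$, which is why your fallback works here but not for hardcore. Since $M_\beta>0$ entrywise for $\beta>0$, all factors are bounded away from zero and no regularization is needed.

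Your route is shorter and avoids the factor-graph model and its transfer. The paper's route additionally yields the equivalence between the Bethe prediction and $\max_\nu\mathcal B_{\mathrm{bip}}(\nu)=\phi_{d,q}(\beta)$.
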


{Taking the contrapositive of \Cref{lem:strict-free-energy-gaps} yields the desired implication at nontrivial parameters. At the Potts endpoint $\beta=1$, variational replica symmetry holds directly because $\mathcal B_{d,1}\equiv0$.}

\begin{corollary}\label{cor:thermodynamic-implies-variational}
Fix integers $d\ge3$ and $q\ge2$, and let $G\sim\+G_{n,d}$. If the hardcore model at $\alpha\in(0,1/2)$ or the antiferromagnetic Potts model at $\beta\in(0,1]$ satisfies thermodynamic replica symmetry (\Cref{def:free-energies}), then it also satisfies variational replica symmetry (\Cref{def:rsb-bethe-functional}) at the same parameter.
\end{corollary}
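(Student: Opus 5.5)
The corollary should follow from \Cref{lem:strict-free-energy-gaps} by contraposition, with the trivial Potts endpoint treated separately. I split into three cases according to the model and parameter.

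\emph{Hardcore, $\alpha\in(0,1/2)$.} Suppose thermodynamic replica symmetry holds, so $\frac1n\log Z_G(\alpha)\xrightarrow{\mathrm p}\phi_d(\alpha)$. Assume for contradiction that some $\nu\in\mathcal P_{\*\pi_\alpha}$ has $\mathcal B_{d,\alpha}(\nu)>0$. Then \Cref{lem:strict-free-energy-gaps}(i) supplies $\varepsilon>0$ such that
\[
\Pr{n^{-1}\log Z_G(\alpha)\le\phi_d(\alpha)-\varepsilon}\to1 .
\]
This contradicts the convergence in probability, which gives
\[
\Pr{|n^{-1}\log Z_G(\alpha)-\phi_d(\alpha)|<\varepsilon}\to1 .
\]
Hence $\mathcal B_{d,\alpha}\le0$ on all of $\mathcal P_{\*\pi_\alpha}$. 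The same contradiction can be obtained from the quenched statement: thermodynamic replica symmetry gives $n^{-1}\E{\log Z_G}\to\phi_d$ by \eqref{eq:regular-logz-concentration}, while the lemma gives $\limsup n^{-1}\E{\log Z_G}\le\phi_d-\varepsilon$. Either form suffices.

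\emph{Potts, $\beta\in(0,1)$.} This case is identical, using \Cref{lem:strict-free-energy-gaps}(ii) with $\phi_{d,q}(\beta)$ in place of $\phi_d(\alpha)$.

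\emph{Potts endpoint, $\beta=1$.} Here \Cref{lem:strict-free-energy-gaps} does not apply, so I verify $\mathcal B_{d,1}\equiv0$ directly. At $\beta=1$ we have $P_1=\*U_q\*U_q^\trans$, hence $M(1)=\allone\allone^\trans$. For any message $\*\mu\in\Delta_{[q]}$ this gives $M(1)\*\mu=\allone$. Consequently
\[
V_k=\sum_a\pi_a=1\quad\text{and}\quad E=\*\mu_1^\trans\allone=1
\]
almost surely, so every term $x\log x$ vanishes. Thus $\mathcal B_{k,1}(\nu)=0$ for every integer $k$, and by the piecewise-linear interpolation also for real $r=d$. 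Variational replica symmetry therefore holds without needing any hypothesis.

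There is no real obstacle here. All the substance lies in \Cref{lem:strict-free-energy-gaps}, which is taken as given, so the corollary needs only this bookkeeping. The one point to check is that the lemma's hypotheses exclude $\beta=1$, which is why the endpoint is handled by the explicit computation above.
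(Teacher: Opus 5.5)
Your proposal is correct and follows the paper's own argument. The paper takes the contrapositive of \Cref{lem:strict-free-energy-gaps} at the nontrivial parameters $\alpha\in(0,1/2)$ and $\beta\in(0,1)$, and handles the Potts endpoint by observing directly that $M(1)=\allone\allone^\trans$ forces $V_k=E=1$, so $\mathcal B_{d,1}\equiv0$.
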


We prove \Cref{lem:strict-free-energy-gaps}(i) below and defer the proof of part~(ii) to \Cref{subsec:app-variational-bridges}.
We first relate the fractional Bethe free-energy functional {$\mathcal F_m(\eta)$} to the Bethe functional $\mathcal B_{d,\alpha}(\nu)$ (\Cref{def:rsb-bethe-functional}). Fix $d\ge3$, $\alpha\in(0,1/2)$, and a {Bethe trial law} $\nu\in\mathcal P_{\*\pi_\alpha}$, identified with the law of $u\in[0,1]$ through $\*\mu=(1-u,u)^\trans$ with $\E{u}=\alpha$. For $0<m\le1$, put $\theta=\theta_\alpha=\alpha/(1-\alpha)$ and $a(u)\defeq 1-\theta+\theta u$, and define $\eta_m$ as the law of $Q\defeq u/a(u)$ under the tilted distribution $\nu_m(\dif u)\propto a(u)^m\nu(\dif u)$. Let $V_{d,\alpha},E_\alpha$ denote the cavity vertex and edge factors in~\eqref{eq:marked-functional} evaluated at independent messages with law $\nu$.
% Since $\E{u}=\alpha<1$ and $a(u)>0$, we have $\eta_m(\{1\})<1$, so $\mathcal F_m(\eta_m)$ is well-defined.

\begin{proposition}\label{prop:nu-eta-fractional-correspondence}
Every $0<m\le1$ satisfies
\begin{equation}\label{eq:Phi-normalized-fractional-correspondence}
  {\mathcal F_m(\eta_m)}
  =\phi_d(\alpha)
   +\frac1m\log\E{V_{d,\alpha}^m}
   -\frac d{2m}\log\E{E_\alpha^m}.
\end{equation}
\end{proposition}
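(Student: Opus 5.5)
The plan is a direct change of variables. I will express the BP vertex and edge factors $V^{\mathrm{BP}},E^{\mathrm{BP}}$, evaluated at $Q_i=u_i/a(u_i)$, as explicit multiples of the cavity factors $V_{d,\alpha},E_\alpha$, with multipliers given by products of the $a(u_i)$. The tilt $\nu_m\propto a^m\nu$ is designed so that the powers $a(u_i)^m$ produced when we raise to the $m$th power are absorbed by the change of measure. The normalizing constants $\E[\nu]{a(u)^m}$ should then cancel between the vertex and edge terms, because each vertex term carries $d$ of them and each of the $d/2$ edge terms carries $2$.

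First I compute the kernel entries from \eqref{eq:hardcore_kappa}: $M_{00}=(1-2\alpha)/(1-\alpha)^2$, $M_{01}=1/(1-\alpha)$ and $M_{11}=0$. For $\*\mu=(1-u,u)^\trans$ this gives $(M\*\mu)_0=a(u)/(1-\alpha)$ and $(M\*\mu)_1=(1-u)/(1-\alpha)$. Next I use $1-Q=(1-\theta)(1-u)/a(u)$ together with the identity $\lambda(\alpha)(1-\theta)^d=\theta$. These give the vertex identity
\[
  \prod_{i=1}^d a(u_i)\,V^{\mathrm{BP}}=\prod_{i=1}^d a(u_i)+\theta\prod_{i=1}^d(1-u_i)=(1-\alpha)^{d-1}V_{d,\alpha}.
\]
For the edge factor I claim
\[
  a(u_1)a(u_2)\,E^{\mathrm{BP}}=a(u_1)a(u_2)-u_1u_2=(1-2\alpha)E_\alpha .
\]
Both sides are affine in each of $u_1$ and $u_2$, so it suffices to check the claim at the four corners $u_1,u_2\in\{0,1\}$. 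At $(0,0)$ both sides equal $(1-\theta)^2$; at $(1,1)$ both vanish; at the mixed corners both equal $1-\theta$.

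Then I raise these identities to the power $m$ and take expectations. For $Q_i\sim\eta_m$ i.i.d. we have $\E{g(Q)}=\E[\nu]{a(u)^m g(u/a(u))}/\E[\nu]{a(u)^m}$. This gives
\[
  \E{(V^{\mathrm{BP}})^m}=\frac{(1-\alpha)^{m(d-1)}\E{V_{d,\alpha}^m}}{\E[\nu]{a^m}^d}
  \qquad\text{and}\qquad
  \E{(E^{\mathrm{BP}})^m}=\frac{(1-2\alpha)^m\E{E_\alpha^m}}{\E[\nu]{a^m}^2}.
\]
Substituting into $\mathcal F_m$, the terms $\log\E[\nu]{a^m}$ cancel as $-\frac dm+\frac d{2m}\cdot2=0$. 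The remaining constants are $(d-1)\log(1-\alpha)-\frac d2\log(1-2\alpha)=\phi_d(\alpha)$, which yields \eqref{eq:Phi-normalized-fractional-correspondence}.

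The only delicate point is the edge identity, namely finding the right constant $1-2\alpha$; the corner check above settles it. I also need well-definedness. Since $a(u)\ge1-\theta>0$, the tilt is valid. Moreover $Q=1$ occurs only at $u=1$, and $\E{u}=\alpha<1$ forces $\nu(\{1\})<1$, so $\eta_m(\{1\})<1$ as required for $\mathcal F_m$.
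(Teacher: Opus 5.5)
Your proposal is correct and follows essentially the same route as the paper: substitute $Q_i=u_i/a(u_i)$ to obtain the pointwise identities $\prod_i a(u_i)\,V^{\mathrm{BP}}=(1-\alpha)^{d-1}V_{d,\alpha}$ and $a(u_1)a(u_2)\,E^{\mathrm{BP}}=(1-2\alpha)E_\alpha$. The tilt $\nu_m\propto a^m\nu$ then absorbs the factors $a(u_i)^m$, and the normalizers $\E[\nu]{a(u)^m}$ cancel between the vertex and edge terms, leaving $\phi_d(\alpha)$. Your corner check for the edge identity (valid because both sides are affine in each $u_i$) replaces the paper's direct expansion of $E_\alpha$, and your well-definedness remarks ($a\ge1-\theta>0$, $\eta_m(\{1\})<1$) are a welcome addition.
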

{The algebraic verification of~\eqref{eq:Phi-normalized-fractional-correspondence} is given in \Cref{subsec:app-variational-bridges}.}

\begin{proof}[{Proof of~\Cref{lem:strict-free-energy-gaps}(i)}]
{Take a witness $\nu\in\mathcal P_{\*\pi_\alpha}$ with $\mathcal B_{d,\alpha}(\nu)>0$.}
By continuity, we may first replace each message $\*\mu$ by $(1-\delta)\*\mu+\delta\*\pi_\alpha$ for a sufficiently small fixed $\delta>0$, preserving this strict inequality and ensuring that the corresponding $\eta_m$ is compactly supported in $[0,1)$.

{For any bounded nonnegative $X$ with $\E{X}=1$,
\begin{equation}\label{eq:fractional-moment-derivative}
  \left.\frac{\dif}{\dif m}\left(\frac1m\log\E{X^m}\right)\right|_{m=1^-}
  =\E{X\log X}.
\end{equation}
Since $\E{V_{d,\alpha}}=\E{E_\alpha}=1$, applying~\eqref{eq:fractional-moment-derivative} to $V_{d,\alpha}$ and $E_\alpha$ in~\eqref{eq:Phi-normalized-fractional-correspondence} gives $\mathcal F_1(\eta_1)=\phi_d(\alpha)$ and identifies the left derivative of $\mathcal F_m(\eta_m)$ at $m=1$ with $\mathcal B_{d,\alpha}(\nu)$.} Consequently,
\[
  {\mathcal F_{1-m_0}(\eta_{1-m_0})}
  =\phi_d(\alpha)-m_0\mathcal B_{d,\alpha}(\nu)+o(m_0)
  <\phi_d(\alpha)
\]
for a sufficiently small fixed $m_0>0$, so~\Cref{lem:local-hc-fractional-upper} {and concentration~\eqref{eq:regular-logz-concentration} establish both the expected and high-probability strict free-energy gaps in~(i).}
\end{proof}

\subsection{From Variational to Geometric Replica Symmetry}\label{subsec:local-variational-geometric}\label{subsec:pointwise-conditions}

We establish geometric replica symmetry throughout the interior of the variational replica-symmetric regime. In this subsection, the interior is taken at fixed integer $d$ in the positive-parameter interval $(0,1/2)$ for hardcore or $(0,1)$ for Potts; the coloring endpoint $\beta=0$ is treated separately using real-degree slack.

\begingroup
\begin{lemma}\label{lem:variational-rs-converse}
Fix integers $d\ge3$ and $q\ge2$, and let $G\sim\+G_{n,d}$. Consider the hardcore model at $t=\alpha\in(0,1/2)$ or the antiferromagnetic Potts model at $t=\beta\in(0,1)$. If $t$ lies in the interior of the variational replica-symmetric regime, then the model satisfies geometric replica symmetry (\Cref{def:replica-symmetry}).
\end{lemma}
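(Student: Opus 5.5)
The plan follows the route sketched in \Cref{subsec:charting-rs}: convert strict variational replica symmetry into exponential overlap concentration under the planted distribution, then transfer it to the quenched ensemble via lower-tail control of $Z_G$. Since $t$ lies in the interior of the variational regime, monotonicity in the parameter (the slack used in \Cref{lem:model-variational-conditions}) should upgrade $\mathcal B_{d,t}\le0$ to the strict condition: $\mathcal B_{d,t}(\nu)<0$ for every $\nu\ne\delta_{\*\pi}$ (\Cref{cond:unique-variational-maximizer}), together with the strict Kesten--Stigum bound $(d-1)\max_{a\ge2}|\kappa_a|^2<1$ (\Cref{cond:strict-KS}), obtained from a fourth-order expansion of $\mathcal B$ around $\delta_{\*\pi}$.

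\textbf{Planted overlap bound.} Work in $\mathrm{CM}_{n,d}$ with the planted law $\mathbb Q_n(G)\propto\mathcal Z_G(t)\,\mathrm{CM}_{n,d}(G)$. Under $\mathbb Q_n$ I would sample $\sigma\sim\mu_G$ and pin a random set of $\theta n$ revealed spins. The entropy bound for revealed spins (\Cref{lem:revealed-entropy-bound}) expresses the decrease in the free energy through a Bethe functional evaluated at the law of the pinned cavity marginals. Uniqueness of the maximizer $\delta_{\*\pi}$ should then force these marginals to be close to $\*\pi$, except on an event of probability $\e^{-\gamma n}$. The output is \Cref{lem:exponential-overlap}: for each $\varepsilon>0$,
\[
\mathbb Q_n\otimes\mu_G^{\otimes2}\bigl(\|\*\rho(\sigma,\sigma')-\*\pi\otimes\*\pi\|_1>\varepsilon\bigr)\le\e^{-\gamma n}.
\]

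\textbf{Change of measure.} Let $B$ be the set of graphs with $\mu_G^{\otimes2}(\text{atypical overlap})>\e^{-\gamma n/2}$. Then $\mathbb Q_n(B)\le\e^{-\gamma n/2}$, and hence
\[
\Pr[\mathrm{CM}]{B,\ \mathcal Z_G\ge\e^{-\epsilon n}a_n}\le\e^{\epsilon n}\e^{-\gamma n/2},
\]
since $\mathbb Q_n(G)=\mathcal Z_G\,\mathrm{CM}(G)/a_n$. To conclude, I need $\mathcal Z_G\ge\e^{-\epsilon n}a_n$ with high probability for every small $\epsilon>0$, that is, $\liminf n^{-1}\E{\log Z_G}\ge\phi$. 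I would obtain this from the same entropy bound, which gives $\E{\log Z_G}\ge\log\E{Z_G}-o(n)$ whenever the planted overlap concentrates, combined with the concentration bound~\eqref{eq:regular-logz-concentration}. Transfer to $\+G_{n,d}$ then uses \Cref{lem:simplicity-prob}. Overlap concentration around $\*\pi\otimes\*\pi$ in expectation gives $\frac1{n^2}\sum_{u,v}\|\mu_{u,v}-\mu_u\otimes\mu_v\|\to0$ by the standard two-replica identity: average $(\mu_{uv}-\mu_u\mu_v)$ against test functions and use $\E{\|\*\rho(\sigma,\sigma')-\*\rho(\sigma)\otimes\*\rho(\sigma')\|^2}$-type bounds.

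\textbf{Main obstacle.} I expect the hardest step to be the random-pinning entropy estimate that yields an \emph{exponentially} small atypical-overlap probability under $\mathbb Q_n$, with uniformity in the overlap direction. A secondary difficulty is the hardcore model, where hard constraints prevent bounded edge weights; there I would either soften the constraint as in \Cref{lem:local-hc-fractional-upper} or use $\alpha<1/2$, so that $\pi$ has full support.
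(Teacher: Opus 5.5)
Your architecture is the paper's: the strict conditions from \Cref{lem:model-variational-conditions}, the exponential planted overlap bound of \Cref{lem:exponential-overlap}, a change of measure $\dif\mathrm{CM}/\dif\mathbb Q_n^G=a_n/\mathcal Z_G$ split along a lower cutoff for $\mathcal Z_G$, transfer to simple graphs, and \Cref{lem:rs-overlap-characterization}. The gap is the one step that is not a citation, namely the lower tail $\E[\mathrm{CM}]{\log\mathcal Z_G}\ge\log a_n-o(n)$. Your justification gives the entropy bound the wrong role. \Cref{lem:revealed-entropy-bound} bounds the pinned log-likelihood and the planted free energy; it is not an overlap statement, and it does not give a quenched lower bound by itself. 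At $s=0$, where $\ell_0\equiv0$, and using $\max_\nu\mathcal B_{d,t}(\nu)=0$, it gives $|\log a_n|=O(\log n)$ and an \emph{upper} bound on the planted free energy, i.e.
\[
  D_n\defeq\DKL(\mathbb Q_n^G\Vert\mathrm{CM})=\E[\mathbb Q_n^G]{\log\mathcal Z_G}-\log a_n=o(n).
\]
The missing idea is converting this into a quenched lower bound. Suppose $\E[\mathrm{CM}]{\log\mathcal Z_G}\le-\varepsilon n$ along a subsequence, and set $H_n=\{\log\mathcal Z_G>-\varepsilon n/2\}$. Concentration~\eqref{eq:regular-logz-concentration} gives $\mathrm{CM}(H_n)\le\e^{-c_\varepsilon n}$, and the planted density gives $\mathbb Q_n^G(H_n^c)\le a_n^{-1}\e^{-\varepsilon n/2}=o(1)$. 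The log-sum inequality on $\{H_n,H_n^c\}$ then yields $D_n\ge\mathbb Q_n^G(H_n)\log(1/\mathrm{CM}(H_n))-\log2\ge(1-o(1))c_\varepsilon n-\log2$, contradicting $D_n=o(n)$. Hence $n^{-1}\E[\mathrm{CM}]{\log\mathcal Z_G}\to0$, and concentration gives your cutoff event $\{\mathcal Z_G\ge\e^{-\epsilon n}a_n\}$ with probability $1-\e^{-\Omega(n)}$.

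You could instead get the lower tail from overlap concentration, as your wording suggests, but that needs an ingredient you did not mention: a truncated second moment. Take a small overlap box $B$ and $Y=\mathcal Z_G\one{\mu_G^{\otimes2}(\*\rho(\sigma,\sigma')\notin B)\le1/2}$. The planted bound gives $\E[\mathrm{CM}]{Y}\ge(1-o(1))a_n$. Also $\E[\mathrm{CM}]{Y^2}\le2\E{\mathcal Z_{G,B}^{(2)}(t,t)}=O(a_n^2)$ by \Cref{lem:local-profile-moments}. Paley--Zygmund plus concentration then finish, as in Step~1 of the proof of \Cref{lem:graph-truncation}. That route needs \Cref{cond:strict-KS} and the local Laplace asymptotics, whereas the KL argument uses only \Cref{cond:unique-variational-maximizer}.
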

\endgroup

{To prove \Cref{lem:variational-rs-converse}, we first establish the strict variational conditions in \Cref{lem:model-variational-conditions}: unique maximization of the Bethe functional and strict Kesten--Stigum stability. We then establish geometric replica symmetry using the overlap characterization in \Cref{lem:rs-overlap-characterization}, the entropy bound in \Cref{lem:revealed-entropy-bound}, and the planted-overlap bound in \Cref{lem:exponential-overlap}. These conditions also contribute to the zero-freeness proof in \Cref{sec:variational-rs}.}

\begin{condition}\label{cond:unique-variational-maximizer}
The Bethe functional satisfies $\mathcal B_{d,t}(\nu)<0$ for all $\nu\in\mathcal P_{\*\pi(t)}\setminus\{\delta_{\*\pi(t)}\}$, where $\mathcal B_{d,t}$ and the space $\mathcal P_{\*\pi(t)}$ of cavity message distributions are defined in \Cref{def:rsb-bethe-functional}.
\end{condition}

\begin{condition}\label{cond:strict-KS}
The inequality $(d-1)\max_{a\ge2}|\kappa_a(t)|^2<1$ holds, where $\kappa_a(t)$, $a\ge2$, are the nonconstant eigenvalues of the tree broadcasting operator $K_t$ from \Cref{def:bethe-canonical}.
\end{condition}

By~\eqref{eq:hardcore_kappa} and~\eqref{eq:Potts_kappa}, \Cref{cond:strict-KS} reads $(d-1)\theta_\alpha^2<1$ for hardcore and $(d-1)\kappa_\beta^2<1$ for Potts. {The next lemma shows that both conditions hold at nontrivial parameters in the interior of the {variational replica-symmetric regime},} building on~\cite[Section~5 and Lemma~5.4]{coja2018charting}.

\begin{lemma}\label{lem:model-variational-conditions}
Fix integers $d\ge3$ and $q\ge2$. Consider the hardcore model at $t=\alpha\in(0,1/2)$ or the antiferromagnetic Potts model at $t=\beta\in(0,1)$. If $t$ lies in the interior of the {variational replica-symmetric regime}, then both \Cref{cond:unique-variational-maximizer} and \Cref{cond:strict-KS} hold at $(d,t)$.

{Moreover, the property of variational replica symmetry is monotone. Specifically, it is preserved under decreasing $\alpha$ for hardcore and increasing $\beta$ for Potts.}
\end{lemma}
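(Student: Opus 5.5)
The plan is to prove the monotonicity statement first and then derive both strict conditions from it, using the slack that comes from lying in the interior of the regime.

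\textbf{Step 1: monotonicity.} For hardcore I would re-parametrize the trial laws so that $\mathcal P_{\*\pi_\alpha}$ can be compared across different $\alpha$. A natural choice is to write each message in terms of the normalized field $M(\alpha)\*\mu$, which has mean $\allone$. For Potts, the mean $\*U_q$ is fixed, so no re-parametrization is needed.

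The target is the following. Given $t'$ on the more constrained side and a witness $\nu$ with $\mathcal B_{d,t'}(\nu)>0$, construct a witness at $t$. A convenient route goes through the free energy. By \Cref{lem:strict-free-energy-gaps}, a positive witness at $t'$ gives a strict fractional-moment gap. I would then transport this gap by interpolation in the parameter. For Potts, $\beta\mapsto \frac1n\E{\log Z}-\phi_{d,q}(\beta)$ is monotone, since its derivative is $\frac1\beta\E{\text{monochromatic fraction}}$ minus the Bethe prediction, and that quantity has a sign. For hardcore the analogous monotone quantity is the occupied fraction.

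Rather than a fully variational argument, I would therefore combine \Cref{lem:strict-free-energy-gaps}, \Cref{cor:thermodynamic-implies-variational} and convexity of $\log Z$ in $\log\lambda$ (respectively $\log\beta$). Two facts drive this. First, $\phi$ is the annealed exponent, so $\frac1n\E{\log Z}\le\phi$ everywhere. Second, the quenched free energy is convex while the gap $\phi-\frac1n\E{\log Z}$ is nondecreasing in the direction of increasing constraint. Making this exact by identifying derivatives is the main obstacle. The fallback is to follow \cite[Section~5]{coja2018charting}, which shows that the tilted family $\nu\mapsto\mathcal B_{d,t}(\nu)$ has a sign-definite $t$-derivative along a coupling of messages.

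\textbf{Step 2: strict Kesten--Stigum stability.} Take $\nu_\epsilon$ to be a small perturbation of $\delta_{\*\pi}$, with messages $\*\pi+\epsilon\*\xi$ where $\*\xi$ is a random vector with mean zero orthogonal to $\allone$, aligned with an eigenvector of $K_t$. Expanding $x\log x$ to fourth order in $\epsilon$, the first- and second-order terms cancel, and the first nonzero term is quartic. Its sign is determined by $(d-1)\kappa^2-1$ (the two-point function on a path), following \cite[Lemma~5.4]{coja2018charting}.

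Hence if $(d-1)\kappa_a(t')^2>1$ at some $t'$, variational replica symmetry fails at $t'$. Since $t$ is interior, there is a slightly more constrained $t'$ that is still variationally replica symmetric. Because $|\kappa|$ is strictly monotone in the parameter ($\theta_\alpha$ is increasing, and $|\kappa_\beta|$ is decreasing in $\beta$), we get $(d-1)\kappa(t')^2\le1$ and therefore $(d-1)\kappa(t)^2<1$.

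\textbf{Step 3: unique maximizer.} Let $\nu\ne\delta_{\*\pi}$ with $\mathcal B_{d,t}(\nu)=0$. Transport $\nu$ to a slightly more constrained $t'$ inside the regime via the coupling from Step 1, and show that the inequality becomes strict: $\mathcal B_{d,t'}(\tilde\nu)>0$. This holds because the derivative in the parameter is strictly positive for non-Dirac $\nu$, which contradicts variational replica symmetry at $t'$.

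Compactness of $\mathcal P_{\*\pi}$ together with continuity of $\mathcal B$ handles the question of whether suprema are attained. The strictness of the parameter derivative for non-Dirac laws is the delicate point. I expect it to follow from a Jensen-type strict convexity argument like the one in the hardcore slope computation \eqref{eq:local-hc-fixed-point-slope}.
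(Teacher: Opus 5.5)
Your Step~2 is correct and is the paper's own argument: the two-point law along $\pm\varepsilon\diag(\*\pi)\*w$ has quartic coefficient $\frac d4\kappa_a^2((d-1)\kappa_a^2-1)$, and you then use slack at a more constrained parameter together with strict monotonicity of $\theta_\alpha$ and $|\kappa_\beta|$.

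The gap is in Steps~1 and~3. These carry all of the monotonicity and unique-maximizer content, but they rest on a ``coupling from Step~1'' that you never construct and on a strictly positive parameter derivative that you only expect to hold. The free-energy route you offer as a substitute does not work, for four reasons.
\begin{enumerate}
\item As written it runs backwards. You transport a witness from the more constrained $t'$ to $t$, but monotonicity needs a witness at the less constrained parameter to produce one at the more constrained parameter.
\item \Cref{lem:strict-free-energy-gaps} and \Cref{cor:thermodynamic-implies-variational} only give thermodynamic $\Rightarrow$ variational. Turning a free-energy gap back into a Bethe witness needs the converse, which you neither state nor prove. In the paper that converse appears only inside the proof of \Cref{lem:variational-rs-converse}, which begins by invoking the present lemma.
\item The claimed sign of $\frac1\beta\E{\text{monochromatic fraction}}$ minus its Bethe value is asserted without proof. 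Convexity of both $\frac1n\E{\log Z}$ and $\phi$ in $\log\beta$ says nothing about monotonicity of their difference.
\item No free-energy argument can give the strict inequality required in \Cref{cond:unique-variational-maximizer}. The Hoeffding computation \eqref{eq:local-hc-fixed-point-slope} you point to concerns the $m$-derivative of $\mathcal F_m$ at BP fixed points, not a parameter derivative of $\mathcal B$ at an arbitrary $\nu$.
\end{enumerate}

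The missing idea is an explicit parameter coupling that freezes one of the two terms of $\mathcal B$.

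\emph{Potts.} For $\beta'<\beta$, the identity $M_\beta\*\mu=\allone+q\kappa_\beta(\*\mu-\*U_q)$ shows that the contraction $\*\mu'=(1-c)\*U_q+c\*\mu$ with $c=\kappa_\beta/\kappa_{\beta'}\in(0,1)$ satisfies $M_{\beta'}\*\mu'=M_\beta\*\mu$. So every vertex factor is unchanged, while $E_{\beta'}=1-c+cE_\beta$. Convexity of $x\log x$ then gives $\mathcal B_{d,\beta}(\nu)\le\mathcal B_{d,\beta'}(\nu_c)-\frac d2(1-c)\E{E_\beta\log E_\beta}$. For $\nu\ne\delta_{\*U_q}$ we have $\E{E_\beta\log E_\beta}>0$, because $\operatorname{Var}(E_\beta)=q^2\kappa_\beta^2\operatorname{tr}(\operatorname{Cov}(\*\mu)^2)>0$.

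\emph{Hardcore.} Your suggestion of holding $M(\alpha)\*\mu$ fixed cannot work, since $\*\pi_\alpha^\trans M_\alpha\*\mu=1$ confines it to an $\alpha$-dependent hyperplane. Instead, fix the mean-zero field $h=(1+\theta)u-\theta$ and vary $\theta$. Then $M\*\mu=(1+\theta h,\,1-h)^\trans$, and the edge factor $1-h_1h_2$ does not depend on $\theta$. The vertex term $F(\theta)$ satisfies $(1+\theta)^2F'(\theta)>0$ by a size-biasing identity combined with supermodularity, $\partial_i\partial_j\log(1+\theta\prod_kw_k)\ge0$. This is strict on a positive-probability event when $\nu$ is non-Dirac.

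In both models the coupling yields $\mathcal B_{d,t}(\nu)<\mathcal B_{d,t'}(\nu')\le0$, which gives monotonicity and \Cref{cond:unique-variational-maximizer} simultaneously.
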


\begin{proof}[Proof of~\Cref{lem:model-variational-conditions}]
{
For hardcore, put $\theta=\theta_\alpha$ from~\eqref{eq:hardcore_kappa} and encode $\*\mu=(1-u,u)^\trans\in\Delta_{\{0,1\}}$ by $h=(1+\theta)u-\theta\in[-\theta,1]$, so $\*\mu\sim\nu\in\mathcal P_{\*\pi_\alpha}$ corresponds to a mean-zero law of $h$ on $[-\theta,1]$. Since $(M_\alpha\*\mu)_0=1+\theta h$ and $(M_\alpha\*\mu)_1=1-h$, the factors in~\eqref{eq:marked-functional} for i.i.d.\ $\*\mu_1,\ldots,\*\mu_d\sim\nu$ with $h_i=(1+\theta)u_i-\theta$ are
\begin{equation}\label{eq:hc-message-functional}
  V_{d,\alpha}=\frac{\prod_{i=1}^d(1+\theta h_i)+\theta\prod_{i=1}^d(1-h_i)}{1+\theta},\qquad
  E_\alpha=1-h_1h_2.
\end{equation}
}

\paragraph{Strict Kesten--Stigum stability (\Cref{cond:strict-KS}).}
In the setup of~\Cref{def:rsb-bethe-functional}, let ${\*w}\in\=R^\Omega$ be an eigenvector of $K_t$ with nonconstant eigenvalue $\kappa_a$, {normalized by $\sum_i\pi_i w_i^2=1$. Then $\sum_i\pi_i w_i=0$ and $M_t\diag(\*\pi)\*w=\kappa_a\*w$}. The symmetric two-point cavity law
\[
  \nu_\varepsilon\defeq\frac12\bigl(\delta_{\*\pi+\varepsilon\diag(\*\pi){\*w}}+\delta_{\*\pi-\varepsilon\diag(\*\pi){\*w}}\bigr)\in\mathcal P_{\*\pi}
\]
satisfies a fourth-order Taylor expansion around $\varepsilon=0$: averaging over i.i.d.\ uniform signs $\xi_i\in\{\pm1\}$ gives
\[
  \E{V_{k,t}\log V_{k,t}}=\frac12\binom k2\kappa_a^4\varepsilon^4+O(\varepsilon^6)\quad(k\ge0),
  \qquad
  \E{E_t\log E_t}=\frac12\kappa_a^2\varepsilon^4+O(\varepsilon^8),
\]
and hence
\begin{equation}\label{eq:unified-fourth-order}
  \mathcal B_{d,t}(\nu_\varepsilon)
  =\frac d4\kappa_a^2\bigl((d-1)\kappa_a^2-1\bigr)\varepsilon^4+O(\varepsilon^6).
\end{equation}

\smallskip
Thus nonpositivity of $\mathcal B_{d,t}$ implies $(d-1)\max_{a\ge2}|\kappa_a(t)|^2\le1$. Applying this bound at a parameter $\alpha'\in(\alpha,1/2)$ or $\beta'\in[0,\beta)$ available in the interior of the variational replica-symmetric regime makes the inequality strict at $t$, since $\theta_\alpha$ strictly increases with $\alpha$ and $|\kappa_\beta|$ strictly decreases with $\beta$. This proves \Cref{cond:strict-KS} for both models.
\paragraph{{Unique maximizer for the} hardcore model.}
{Suppose $\alpha\in(0,1/2)$ has $\alpha'\in(\alpha,1/2)$ with $\mathcal B_{d,\alpha'}\le0$,} with $\theta_0\defeq\theta_\alpha<\theta'\defeq\theta_{\alpha'}<1$, and take any $\nu\in\mathcal P_{\*\pi_\alpha}\setminus\{\delta_{\*\pi_\alpha}\}$. For $\*\mu=(1-u,u)^\trans\sim\nu$, the variable $h=(1+\theta_0)u-\theta_0$ has a nonconstant mean-zero law $\eta\ne\delta_0$ on $[-\theta_0,1]$. For each $\theta\in[\theta_0,1)$, set $\alpha(\theta)\defeq\theta/(1+\theta)$ and let $\nu_\theta\in\mathcal P_{\*\pi_{\alpha(\theta)}}$ be the law of $(1-u_\theta,u_\theta)^\trans$, where
\[
  u_\theta\defeq\frac{h+\theta}{1+\theta}\quad(h\sim\eta),
  \qquad
  F(\theta)\defeq\E[\nu_\theta^{\otimes d}]{V_{d,\alpha(\theta)}\log V_{d,\alpha(\theta)}},
\]
so that $\nu_{\theta_0}=\nu$ and $\nu'\defeq\nu_{\theta'}\in\mathcal P_{\*\pi_{\alpha'}}$.

\smallskip
Let $h_1,\ldots,h_d$ be i.i.d.\ under the tilted probability law $(1+\theta h)\eta(\dif h)$ and put $W_i\defeq(1-h_i)/(1+\theta h_i)\ge0$. These variables satisfy $\E{W_i}=1$. Couple independent pairs $(W_i,W_i^*)$ with $W_i^*\ge W_i$, where $W_i^*\sim w\Pr{W_i\in\dif w}$ is the \emph{size-biased} version of $W_i$. The size-biasing identity gives $\E{f(W_i^*)}=\E{W_i f(W_i)}$. For $g(w_1,\ldots,w_d)\defeq\log(1+\theta\prod_i w_i)$, differentiating under $\eta^{\otimes d}$, rewriting under the tilted law, and applying this identity give
\[
\begin{aligned}
 (1+\theta)^2F'(\theta)
 &=\E{\Bigl(\prod_{i=1}^dW_i-\sum_{i=1}^dW_i+d-1\Bigr)g(W_1,\ldots,W_d)}\\
 &=\E{g(W_1^*,\ldots,W_d^*)-\sum_{i=1}^d g(W_1,\ldots,W_i^*,\ldots,W_d)+(d-1)g(W_1,\ldots,W_d)}>0,
\end{aligned}
\]
since $\partial_i\partial_jg=\theta\prod_{k\ne i,j}w_k/(1+\theta\prod_k w_k)^2\ge0$, with strict inequality on the positive-probability event where $W_1^*>W_1$, $W_2^*>W_2$, and $W_3,\ldots,W_d>0$. Since the edge factor $E_{\alpha(\theta)}=1-h_1h_2$, {evaluated under the original law $\eta^{\otimes2}$}, is independent of $\theta$, the function $\theta\mapsto\mathcal B_{d,\alpha(\theta)}(\nu_\theta)$ strictly increases on $[\theta_0,1)$. Since $\mathcal B_{d,\alpha'}\le0$ at $\alpha'\in(\alpha,1/2)$,
\[
  \mathcal B_{d,\alpha}(\nu)<\mathcal B_{d,\alpha'}(\nu')\le0,
\]
proving \Cref{cond:unique-variational-maximizer}. {This comparison only assumes nonpositivity at $\alpha'$, so it also proves that variational replica symmetry is preserved under decreasing $\alpha$.}

\paragraph{{Unique maximizer for the} antiferromagnetic Potts model.}
Suppose $\beta\in(0,1)$ has $\beta'\in[0,\beta)$ with $\mathcal B_{d,\beta'}\le0$.
With $\kappa_\beta$ from~\eqref{eq:Potts_kappa} and the uniform marginal $\*U_q=q^{-1}\allone_q$ (\Cref{subsec:preliminaries}), the identity $M_\beta\*\mu=\allone+q\kappa_\beta(\*\mu-\*U_q)$ allows us to contract each cavity message toward $\*U_q$: set $c\defeq \kappa_\beta/\kappa_{\beta'}\in(0,1)$ and let $\nu_c\in\mathcal P_{\*U_q}$ be the law of $\*\mu'=(1-c)\*U_q+c\*\mu$ for $\*\mu\sim\nu$. Then $M_{\beta'}\*\mu'=M_\beta\*\mu$, so {for every $k\ge0$,}
\[
  V_{{k},\beta'}(\*\mu'_1,\ldots,\*\mu'_{k})=V_{{k},\beta}(\*\mu_1,\ldots,\*\mu_{k}),
  \qquad
  E_{\beta'}(\*\mu'_1,\*\mu'_2)=1-c+cE_\beta(\*\mu_1,\*\mu_2).
\]
Convexity of $x\mapsto x\log x$ gives
\begin{equation}\label{eq:potts-message-contraction}
  \mathcal B_{d,\beta}(\nu)
  \le \mathcal B_{d,\beta'}(\nu_c)
      -\frac{d}2(1-c)\E{E_\beta\log E_\beta}.
\end{equation}
For any $\nu\in\mathcal P_{\*U_q}\setminus\{\delta_{\*U_q}\}$, the covariance matrix $\operatorname{Cov}(\*\mu)\defeq\E{(\*\mu-\*U_q)(\*\mu-\*U_q)^\trans}$ is nonzero, so $\E{E_\beta}=1$ and $\operatorname{Var}(E_\beta)=q^2\kappa_\beta^2\operatorname{tr}(\operatorname{Cov}(\*\mu)^2)>0$; strict convexity therefore gives $\E{E_\beta\log E_\beta}>0$ and~\eqref{eq:potts-message-contraction} yields $\mathcal B_{d,\beta}(\nu)<0$, proving \Cref{cond:unique-variational-maximizer}. {Since the comparison only requires nonpositivity at $\beta'$, it also proves preservation of variational replica symmetry under increasing $\beta$; the endpoint $\beta=1$ follows from $\mathcal B_{d,1}\equiv0$.}
\end{proof}

{
We use the following characterization to pass from overlap concentration to geometric replica symmetry.
For random probability measures $\nu_n$ on $\Omega^n$, the \emph{pairwise-decorrelation} condition is
\begin{equation}\label{eq:measure-pairwise-decorrelation}
  \lim_{n\to\infty}\frac1{n^2}\sum_{1\le u<v\le n}\E{\DTV{\nu_{n,u,v}}{\nu_{n,u}\otimes\nu_{n,v}}}=0.
\end{equation}
In particular, replica-symmetric Gibbs measures satisfy this condition (\Cref{def:replica-symmetry}).
\begin{lemma}\label{lem:rs-overlap-characterization}
Let $\nu_n$ be random probability measures on $\Omega^n$ for a fixed finite spin space $\Omega$, and let $\Sigma_{\nu_n}$ be the covariance matrix of the indicator vector $(\one{\sigma_v=a})_{v\in[n],a\in\Omega}$ under $\sigma\sim\nu_n$.
The sequence $\nu_n$ satisfies~\eqref{eq:measure-pairwise-decorrelation} if and only if $\E{\|\Sigma_{\nu_n}\|_{\mathrm F}^2}/n^2\allowbreak=o(1)$, which implies $\E{\|\Sigma_{\nu_n}\|_{\mathrm{op}}}/n=o(1)$.
In particular, $\nu_n$ satisfies~\eqref{eq:measure-pairwise-decorrelation} whenever the overlap $\*\rho(\sigma,\sigma')$ under $(\sigma,\sigma')\sim\nu_n^{\otimes2}$ concentrates in probability around a deterministic matrix.
\end{lemma}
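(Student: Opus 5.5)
The plan is to compare both sides entrywise through the blocks of $\Sigma_{\nu_n}$. Index $\Sigma=\Sigma_{\nu_n}$ by pairs $(u,a)$ with $u\in[n]$, $a\in\Omega$, so that
\[
  \Sigma_{(u,a),(v,b)}=\nu_{n,u,v}(a,b)-\nu_{n,u}(a)\nu_{n,v}(b)
\]
for $u\ne v$. The $2n$ ordered pairs with $u=v$ contribute diagonal $|\Omega|\times|\Omega|$ blocks with entries bounded by $1$, hence $O(n)$ to $\|\Sigma\|_{\mathrm F}^2$; this is negligible after dividing by $n^2$. For $u\ne v$, write $x_{uv}\defeq\sum_{a,b}|\Sigma_{(u,a),(v,b)}|$. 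The defining formula of total variation gives $x_{uv}=2\DTV{\nu_{n,u,v}}{\nu_{n,u}\otimes\nu_{n,v}}$.

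\textbf{Step 1 (the equivalence).} For one direction, every entry has modulus at most $1$, so
\[
  \sum_{a,b}\Sigma_{(u,a),(v,b)}^2\le x_{uv}.
\]
Summing over ordered pairs gives $\|\Sigma\|_{\mathrm F}^2\le O(n)+4\sum_{u<v}\DTV{\cdot}{\cdot}$. Hence pairwise decorrelation~\eqref{eq:measure-pairwise-decorrelation} implies $\E{\|\Sigma\|_{\mathrm F}^2}/n^2=o(1)$. For the converse, Cauchy--Schwarz over the $|\Omega|^2$ entries of a block gives
\[
  x_{uv}\le|\Omega|\Bigl(\sum_{a,b}\Sigma_{(u,a),(v,b)}^2\Bigr)^{1/2}.
\]
A second application of Cauchy--Schwarz, over the $\binom n2$ pairs, followed by Jensen's inequality for the outer expectation, yields
\[
  \frac1{n^2}\sum_{u<v}\E{\DTV{\cdot}{\cdot}}\le\frac{|\Omega|}{2}\Bigl(\frac{\E{\|\Sigma\|_{\mathrm F}^2}}{n^2}\Bigr)^{1/2},
\]
which tends to $0$.

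\textbf{Step 2 (the operator norm).} The operator-norm claim is immediate from $\|\Sigma\|_{\mathrm{op}}\le\|\Sigma\|_{\mathrm F}$ together with $\E{\|\Sigma\|_{\mathrm F}}\le\E{\|\Sigma\|_{\mathrm F}^2}^{1/2}$.

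\textbf{Step 3 (overlap concentration).} I would use a replica identity for $\|\Sigma\|_{\mathrm F}^2$. Let $\sigma^1,\ldots,\sigma^4$ be i.i.d.\ samples from $\nu_n$. Put $X^i_v\defeq(\one{\sigma^i_v=a})_{a\in\Omega}$, $m_v\defeq\E{X_v}$ and $Y^i_v\defeq X^i_v-m_v$. Expanding the square and using independence of the two replicas gives
\[
  \E{\Bigl(\sum_vY^1_v\cdot Y^2_v\Bigr)^2}=\sum_{u,v}\sum_{a,b}\E{Y^1_{ua}Y^1_{vb}}\,\E{Y^2_{ua}Y^2_{vb}}=\|\Sigma\|_{\mathrm F}^2 .
\]
Next I rewrite the random sum in terms of overlaps. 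Let $R_{ij}\defeq\operatorname{tr}\*\rho(\sigma^i,\sigma^j)$ denote the agreement fraction. Using $X^1_v\cdot m_v=\E[\sigma^3]{\one{\sigma^1_v=\sigma^3_v}}$, and similarly for the other cross terms, we obtain
\[
  \sum_vY^1_v\cdot Y^2_v=n\,\E[\sigma^3,\sigma^4]{R_{12}-R_{13}-R_{42}+R_{34}}.
\]
Conditional Jensen then gives
\[
  \frac{\|\Sigma\|_{\mathrm F}^2}{n^2}\le\E{(R_{12}-R_{13}-R_{42}+R_{34})^2}.
\]
By hypothesis, each pair overlap concentrates around the same deterministic matrix $Q$, since all pairs of distinct replicas have the same joint law. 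So each $R_{ij}$ concentrates around $\operatorname{tr}Q$ in probability, jointly over the graph and the replicas, and the signed combination tends to $0$ in probability. Every $R_{ij}$ lies in $[0,1]$, so the combination is bounded by $2$, and bounded convergence gives $\E{\|\Sigma\|_{\mathrm F}^2}/n^2\to0$. Step 1 then yields~\eqref{eq:measure-pairwise-decorrelation}.

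\textbf{Main obstacle.} The main care point is Step 3: choosing the correct four-replica centering so that the second moment produces exactly $\|\Sigma\|_{\mathrm F}^2$. I also need to make sure that ``concentration'' is interpreted under the annealed joint law of (graph, replicas). That interpretation is what allows the bounded-convergence step to pass the outer expectation over $\nu_n$ through.
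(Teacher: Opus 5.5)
Your proposal is correct and follows essentially the paper's argument. Steps 1--2 are the same entrywise $\ell_1$/$\ell_2$ comparison; note there are $n$, not $2n$, diagonal pairs $u=v$, which is harmless.

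Your Step 3 is a replica realization of the paper's double-centering. With $X,X'$ the indicator vectors of two replicas and $p$ their mean, the paper observes that $D=\frac1n(X-p)^\trans(X'-p)$ is the orthogonal double-centered component of $T=\operatorname{tr}\*\rho(\sigma,\sigma')$. Hence $\|\Sigma\|_{\mathrm F}^2/n^2=\E{D^2\mid\nu}\le\E{(T-c)^2\mid\nu}$ for every constant $c$. That bound needs only two replicas and a single overlap pair. Your auxiliary replicas $\sigma^3,\sigma^4$ with conditional Jensen reach the same conclusion with a worse constant, which does not matter here.
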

The proof is given together with the auxiliary measure properties of \Cref{lem:rs-measure-properties} in \Cref{subsec:app-rs-measures}.

We next introduce the planted law and two estimates needed for the proof. Fix an integer $d\ge3$, write $\mathrm{CM}=\mathrm{CM}_{n,d}$, and set $a_n(z)=\E[\mathrm{CM}]{\mathcal Z_G(z)}$. Unless otherwise specified, expectations are under $\mathrm{CM}$. For the pinned log-likelihood bound below, take $t=\alpha\in(0,1/2)$ for hardcore or $t=\beta\in(0,1)$ for Potts, so that $\mathcal Z_G(t)>0$ on every multigraph. The planted law is also defined at the coloring endpoint whenever $a_n(0)>0$.

\begin{definition}\label{def:planted-overlap}
  {For a real parameter $t$ with $M=M(t)\ge0$ and $a_n(t)>0$,} the \emph{planted distribution} $\mathbb Q_n$ under $\mathrm{CM}$ is the joint law on $(G,\sigma)$ defined by
  \[
    \frac{\dif\mathbb Q_n}{\dif(\mathrm{CM}\otimes\*\pi^{\otimes n})}(G,\sigma)
    \defeq a_n^{-1}\prod_{uv\in E(G)}M_{\sigma_u,\sigma_v}.
  \]
  {Given $G$ with $\mathcal Z_G>0$, two independent samples $(\sigma,\sigma')\sim\mu_G^{\otimes2}$ are called \emph{Gibbs replicas}.}
\end{definition}
{The graph marginal of $\mathbb Q_n$ has density $\mathcal Z_G/a_n$ with respect to $\mathrm{CM}$, and the conditional law of $\sigma$ given $G$ is $\mu_G$. For functions of $G$ alone, expectations under $\mathbb Q_n$ refer to its graph marginal.}

{
{Fix $s\in[0,1]$ and form $U$ by including each vertex independently with probability $s$, independently of $(G,\sigma)$. Reveal the planted spins on $U$. For $(G,\sigma)\sim\mathbb Q_n$, define the \emph{pinned log-likelihood ratio}}
\[
 \ell_s=\ell_s(\sigma)\defeq\log\mu_G(\sigma_U)-\sum_{v\in U}\log\pi_{\sigma_v},
\]
{where $\mu_G(\sigma_U)\defeq\sum_{\sigma'\in\Omega^{V(G)}:\,\sigma'_U=\sigma_U}\mu_G(\sigma')$ is the marginal probability of $\sigma_U=(\sigma_v)_{v\in U}$ under $\mu_G$.}
The following lemma is adapted from~\cite[Section~14]{coja2020inference}; a proof for the present fixed-degree model is given in~\Cref{app:revealed-entropy}.

\begin{lemma}\label{lem:revealed-entropy-bound}
{$|\log a_n(t)|=O(\log(n+1))$. Moreover, for every $s\in[0,1]$ and $b>0$, there exists $c>0$ such that $\Pr[\mathbb Q_n]{\ell_s-\E[\mathbb Q_n]{\ell_s}>bn}\le\e^{-cn}$ for all large $n$, and}
\begin{equation}\label{eq:planted-free-energy-bound}
 \limsup_{n\to\infty}\frac1n\E[\mathbb Q_n]{\ell_s}
 \le\limsup_{n\to\infty}\frac1n\E[\mathbb Q_n]{\log\mathcal Z_G+\ell_s}
 \le\max_{\nu\in\mathcal P_{\*\pi}}
 \left\{\mathcal B_{d,t}(\nu)
       +d s\E[\nu]{\sum_{a\in\Omega}\pi_a((M\*\mu)_a-1)^2}\right\}.
\end{equation}
\end{lemma}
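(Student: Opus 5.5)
We prove the three assertions in order: the first-moment bound, then the upper tail of $\ell_s$, then the free-energy bound \eqref{eq:planted-free-energy-bound}. The last is the substantial one.

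\emph{First-moment bound.} We write $a_n(t)=\E[\mathrm{CM}]{\mathcal Z_G(t)}$ as a sum over vertex-colour profiles $\*\rho$ and edge profiles $Q$. Each pair contributes a multinomial times a pairing probability times $\prod_{ab}M_{ab}^{(dn/2)Q_{ab}}$. Stirling's formula gives an exponent
\[
n\Bigl(-\DKL(\*\rho\Vert\*\pi)-\tfrac d2\DKL(Q\Vert P)+d\,\DKL(\*\rho\Vert\*\pi)\Bigr)
\]
up to $O(\log n)$ corrections. This exponent vanishes at $(\*\pi,P)$. For the upper bound we use that $a_n=\sum_\sigma\prod_v\pi_{\sigma_v}\,\E{\prod_e M}$. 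The $(\*\rho,Q)$ entropy argument bounds each term by $\e^{O(\log n)}$, because the maximum of the rate over profiles is $0$. This is the standard fact that the annealed free energy equals the Bethe value at the homogeneous fixed point for these models, and it can be checked from the explicit $P$. The lower bound takes the single term $(\*\pi,P)$, rounded to the lattice, giving $a_n\ge n^{-O(1)}$. Together these give $|\log a_n|=O(\log n)$.

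\emph{Upper tail of $\ell_s$.} Under $\mathbb Q_n$ we expose $(U,\sigma_U)$ and the pairing by a Doob martingale. Changing one half-edge pairing changes $\log\mu_G(\sigma_U)$ by $O(1)$, since $M(t)$ has entries bounded away from $0$ when $t>0$; toggling one vertex's membership in $U$ or one spin changes it by $O(1)$ as well. The planted law is not a product measure, so we cannot apply Azuma directly. Instead we use the standard planted representation: sample $\sigma$ from its marginal (a tilted product), then $G$ as a configuration model conditioned on the edge profile. Azuma applied to that pairing gives $\e^{-cn}$ deviations conditional on $\sigma$. Concentration of $\E{\ell_s\mid\sigma}$ as a function of $\sigma$ follows from the same bounded-differences argument combined with the $O(\log n)$ control of $a_n$.

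\emph{Free-energy bound.} The first inequality in \eqref{eq:planted-free-energy-bound} holds because $\log\mathcal Z_G\ge0$ on average under the planted law: $\E[\mathbb Q_n]{\log\mathcal Z_G}\ge\log a_n$ by Jensen applied to the size-biased law, and $\log a_n=O(\log n)$. For the second inequality, note that $\log\mathcal Z_G+\ell_s=\log\bigl(\mathcal Z_G\,\mu_G(\sigma_U)/\prod_U\pi\bigr)$ is the log of the partition function of $G$ with spins pinned on $U$ and reweighted. We bound it by an Aizenman–Sims–Starr cavity computation under $\mathbb Q_n$, following \cite[Section~14]{coja2020inference}. Passing from $n$ to $n+1$ vertices, the increment is expressed through the vertex and edge factors $V_d,E$ evaluated at cavity messages. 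The law of these messages, averaged over the planted configuration, has mean $\*\pi$ by the Nishimori property, so it lies in $\mathcal P_{\*\pi}$.

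Under the planted law, the expected $\log$ turns into $\E{V\log V}-\frac d2\E{E\log E}=\mathcal B_{d,t}(\nu)$ after size-biasing by $V$ (respectively $E$). Pinning a neighbour with probability $s$ replaces its message by a point mass $\*e_{\sigma}$, whose law is also the planted law. The resulting correction is at most $ds\,\E[\nu]{\sum_a\pi_a((M\*\mu)_a-1)^2}$: to second order in the message fluctuation, this is the gain from revealing the spin, namely a $\chi^2$-type divergence. To bound it we use $x\log x\le x-1+(x-1)^2$. We then take $\limsup$ and the maximum over $\nu$.

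We expect the main obstacle to be this last step. The cavity increment must be made rigorous under $\mathbb Q_n$ on the fixed-degree configuration model, using the coupling of $\mathbb Q_{n}$ with $\mathbb Q_{n+1}$ as in \Cref{lem:cavity-increment-comparison}, and the pinning correction must be controlled uniformly.
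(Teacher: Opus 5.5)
The parts of your proposal that work are the first-moment bound, the Jensen argument for the first inequality, and the overall shape of the cavity step. That shape is: size-biasing turns increments into $\mathbb E[x\log x]$, the message law has mean $\boldsymbol\pi$, and $u\log u\le u-1+(u-1)^2$ bounds the reveal term. This is also the paper's route. Two load-bearing steps, however, are wrong or missing.

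\emph{Concentration.} Your Lipschitz claim rests on ``$M(t)$ has entries bounded away from $0$ for $t>0$''. That is false for the hardcore model, where $M_{11}(\alpha)=0$ for every $\alpha$. A single switch or spin flip can make the planted $\sigma$ infeasible and send $\log\mu_G(\sigma_U)$ to $-\infty$. So bounded differences over arbitrary edits is not available. What does hold is an $O(1)$ edit-Lipschitz bound between \emph{supported} realizations only. It is proved by forcing the unpinned vertices near the edit to a spin compatible with every spin ($a_0=0$ for hardcore). The martingale must then be built so that both coupled completions stay supported. The paper achieves this by conditioning on the full profile and revealing a uniform type-preserving matching, whose transpositions preserve the profile.

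Your two-stage scheme also misdescribes $\mathbb Q_n$. Given $\sigma$, the pairing is a \emph{tilted} configuration model whose edge profile is still random. The planted marginal of $\sigma$ is also not a product measure, so ``the same bounded-differences argument'' does not concentrate the conditional mean. Two further inputs are needed. One is a Gaussian profile tail under $\mathbb Q_n$, which comes from the independent-pair representation together with $\gamma\le1$ (\Cref{lem:rsb-planted-partial-profiles}). The other is a Lipschitz comparison $|m(p)-m(p')|\le CLn\|p-p'\|_1$ of conditional means across profiles, obtained by coupling type-preserving bijections (\Cref{lem:profile-concentration}).

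\emph{The cavity step.} This is where the proof actually lives, and your sketch assumes its key input. Expressing the increment through $V_d,E$ at \emph{independent} draws from a single law $\nu$ requires the $d$ cavity spins to be nearly independent under $\mu_G(\cdot\mid\sigma_U)$. Nothing supplies this; indeed the lemma is later used to \emph{prove} overlap concentration.

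In the paper, this is exactly what the random pinning is for. \Cref{lem:rsb-finite-pinning} with $|U|\sim\mathrm{Bin}(n,s)$ gives $k$-point factorization with error $O(n^{-1/4})$ for $0<s<1$. Already-pinned neighbours are simply absorbed into $\nu$ as point masses. The $s$-term comes from revealing the \emph{new} vertex, which turns $V_l$ into $T_a=\prod_i(M\boldsymbol\mu_i)_a$, with $\mathbb E[T_a\log T_a]=l\,\mathbb E[(M\boldsymbol\mu)_a\log(M\boldsymbol\mu)_a]$. At $s=0$ there is no pinning and hence no decorrelation. The endpoints therefore need the separate continuity bound $0\le\mathbb E[\ell_v]-\mathbb E[\ell_u]\le(v-u)n\log(1/\min_a\pi_a)$.

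Next, the size-biasing identity produces $\mathbb E[x\log x]+\operatorname{Cov}(\Psi,F)$, and you must show the covariance is $o(1)$. This uses $\operatorname{Var}(F)=O(n\log n)$. It also uses that $\mathbb E[\Psi\mid\text{old data}]-1=O(D^2+n^{-1})$ is \emph{second order} in the profile deviation $D$, because $M\boldsymbol\pi=\mathbf 1$. A first-order error would only give a covariance of order $O(\log n)$.

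Finally, on the fixed-degree model you cannot simply attach a degree-$d$ vertex. \Cref{lem:cavity-increment-comparison} compares \emph{unweighted} laws on $\mathcal G_{n,d}$ and says nothing about $\mathbb Q_n$. The paper instead runs the increments on planted \emph{partial} pairings at fraction $\theta<1$, binomially smoothed in the edge count. This yields $\mathbb E_{L_d\sim\mathrm{Bin}(d,\theta)}[\mathcal B_{L_d,t}(\nu)+sL_d(\cdots)]$. It then completes the pairing at cost $C(1-\theta)n$ via the profile-mean comparison and lets $\theta\uparrow1$.
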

}

For an overlap box $B$, let $\mathcal Z_{G,B}^{(2)}(z,w)$ denote the normalized total weight of pairs whose empirical overlap lies in $B$:
\begin{equation}\label{eq:local-second-moment-sum}
 \mathcal Z_{G,B}^{(2)}(z,w)=\sum_{\substack{\sigma,\sigma'\in\Omega^{[n]}\\\*\rho(\sigma,\sigma')\in B}}
 \prod_{v\in V(G)}\pi_{\sigma_v}(z)\pi_{\sigma'_v}(w)
 \prod_{uv\in E(G)}M_{\sigma_u,\sigma_v}(z)M_{\sigma'_u,\sigma'_v}(w).
\end{equation}
Here an overlap box is the preimage of an open axis-aligned box in $|\Omega|^2-1$ coordinates of $\Delta_{\Omega\times\Omega}$.
The following truncation estimate also applies at the coloring endpoint $t=0$.

{Because the local second-moment sum $\mathcal Z_{G,B}^{(2)}(z,w)$ restricts pairs of spin configurations rather than the graph $G$ itself, we next construct a typical-graph event $E_n$ such that the graph truncation} $Y_n(z)\defeq\mathcal Z_G(z)\one{E_n}$ {matches $\mathcal Z_G(z)$ in first moment and $\mathcal Z_{G,B}^{(2)}(z,w)$ in second moment up to exponentially small errors.}

\begin{lemma}\label{lem:graph-truncation}
  Assume \Cref{cond:unique-variational-maximizer} and \Cref{cond:strict-KS} hold at a nontrivial parameter $t$. For a sufficiently small overlap box $B$ about $\*\pi(t)\otimes\*\pi(t)$, there exist graph events $E_n$, an open disk ${D_0}$ about $t$, and $\gamma>0$ such that ${\E{(\mathcal Z_G(t)^2-\mathcal Z_{G,B}^{(2)}(t,t))\one{E_n}}=O(\e^{-\gamma n}a_n(t)^2)}$ and, uniformly for $z,w\in {D_0}$,
  \begin{equation}\label{eq:holomorphic-truncation}
   \begin{aligned}
    \E{Y_n(z)}&=a_n(z)\bigl(1+O(\e^{-\gamma n})\bigr),\\
    \E{Y_n(z)Y_n(w)}&=\E{\mathcal Z_{G,B}^{(2)}(z,w)}
      +O\bigl(\e^{-\gamma n}|a_n(z)a_n(w)|\bigr).
   \end{aligned}
  \end{equation}
\end{lemma}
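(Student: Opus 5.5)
The plan is to take $E_n$ to be an upper-cutoff event at the real parameter, say
\[
E_n\defeq\bigl\{\mathcal Z_G(t)\le \e^{\gamma_0 n}a_n(t)\bigr\}\cap\bigl\{\mathcal Z_G(t)^2-\mathcal Z^{(2)}_{G,B}(t,t)\le \e^{-\gamma_1 n}a_n(t)\mathcal Z_G(t)\bigr\},
\]
with small constants $0<\gamma_0<\gamma_1$ to be tuned. Complex estimates will then be obtained from real ones by domination, using $|\mathcal Z_G(z)|\le\mathcal Z_G^{\mathrm{abs}}(z)$, where the latter has weights $|\pi_a(z)||M_{ab}(z)|$. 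For $z$ in a small disk $D_0$ these weights are at most $\e^{\epsilon}$ times the real weights at $t$, with $\epsilon\to0$ as the radius shrinks. This needs care for zero entries: hardcore $M_{11}=0$ and coloring $M_{aa}=0$ stay identically zero, because $P_{11}\equiv0$ and $B_\beta$ has diagonal $\beta$. For colorings I would instead continue from a small positive parameter, as the overview indicates, or simply work with $t>0$ and handle $\beta$ near $0$ by the same domination, since the diagonal entry $\beta$ is small. The result is $|\mathcal Z_G(z)|\le \e^{\epsilon' n}\mathcal Z_G(t)$ pointwise in $G$ (the factor from $n$ vertices and $dn/2$ edges), and similarly for pair sums restricted to overlap regions.

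\textbf{Step 1 (planted overlap bound).} Invoke the exponential planted overlap bound: under \Cref{cond:unique-variational-maximizer}, $\mathbb Q_n$-probability that the overlap of the planted spin with an independent Gibbs replica falls outside $B$ is at most $\e^{-cn}$. This is \Cref{lem:exponential-overlap}, to be derived from \Cref{lem:revealed-entropy-bound}. At $s$ small, the right side of \eqref{eq:planted-free-energy-bound} is $o_s(1)\cdot n$ only near $\delta_{\*\pi}$ by uniqueness and compactness. Combined with the exponential tail for $\ell_s$, this gives that pinning a random $s$-fraction reveals little, which forces overlap concentration with exponential error. By Nishimori, $\E{\mathcal Z_G(\mathcal Z_G^2-\mathcal Z^{(2)}_{G,B})/\mathcal Z_G^{2}}=a_n\,\mathbb Q_n[\text{overlap}\notin B]\le \e^{-cn}a_n$. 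Markov then gives $\mathrm{CM}(E_n^c)$-type bounds for the second defining event. \Cref{lem:revealed-entropy-bound} also controls $\log a_n$. The first event has $\E{\mathcal Z_G\one{\mathcal Z_G>\e^{\gamma_0 n}a_n}}$ small via the planted lower tail; to get it exponentially small I would use $\Pr[\mathbb Q_n]{\mathcal Z_G>\e^{\gamma_0n}a_n}\le \e^{-c'n}$, from the entropy bound with $s=0$ and the concentration tail.

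\textbf{Step 2 (first moment).} Write $\E{Y_n(z)}=a_n(z)-\E{\mathcal Z_G(z)\one{E_n^c}}$. The error is bounded by $\e^{\epsilon' n}\E{\mathcal Z_G(t)\one{E_n^c}}=\e^{\epsilon' n}a_n(t)\mathbb Q_n(E_n^c)\le \e^{(\epsilon'-c)n}a_n(t)$. To compare with $|a_n(z)|$, I need $|a_n(z)|\ge \e^{-\epsilon'' n}a_n(t)$ on $D_0$. I would read this off the standard Laplace asymptotics of $a_n(z)$, which is $\Theta(1)$ times $\exp(n\cdot 0)$ after Bethe normalization. Indeed, $a_n(z)\to$ a nonzero holomorphic constant near $t$, which is part of the local-moment analysis (\Cref{lem:local-profile-moments}). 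Choosing the radius with $\epsilon'+\epsilon''<c$ gives the claim.

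\textbf{Step 3 (second moment).} Split $Y_n(z)Y_n(w)=\mathcal Z^{(2)}_{G,B}(z,w)\one{E_n}+R_G(z,w)\one{E_n}$, where $R_G$ is the sum over pairs with overlap outside $B$. The term $\E{\mathcal Z^{(2)}_{G,B}\one{E_n^c}}$ is dominated by $\e^{2\epsilon' n}\E{\mathcal Z^{(2)}_{G,B}(t,t)\one{E_n^c}}\le \e^{2\epsilon'n}\e^{\gamma_0n}a_n\,\mathbb Q_n(E_n^c)$, using the cutoff on $E_n$ in reverse only where needed. Actually, $\mathcal Z^{(2)}_{G,B}\le\mathcal Z_G^2$, and on $E_n^c$ I would bound $\mathcal Z_G^2\one{E_n^c}$ by splitting according to which event fails; the cutoff-failure part needs an exponential bound on $\E{\mathcal Z_G^2\one{\mathcal Z_G>\e^{\gamma_0n}a_n}}$ restricted to $B$. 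This is the local second moment on the central box, controlled by \Cref{cond:strict-KS}, which gives $\E{\mathcal Z^{(2)}_{G,B}}=O(a_n^2)$. The error, in turn, comes from a Cauchy--Schwarz/planted tail argument. Meanwhile $|R_G(z,w)|\one{E_n}\le \e^{2\epsilon'n}R_G(t,t)\one{E_n}\le \e^{(2\epsilon'-\gamma_1)n}a_n\mathcal Z_G(t)$, whose expectation is $\e^{(2\epsilon'-\gamma_1)n}a_n^2$. The main obstacle is precisely this bookkeeping: making the pointwise loss $\e^{\epsilon' n}$ from complexification beat the exponential planted gains, while ensuring that the remaining $B$-restricted term on $E_n^c$ is exponentially small without a bare second moment. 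For that term I would try the bound $\E{\mathcal Z^{(2)}_{G,B}\one{F}}\le \sqrt{\E{(\mathcal Z^{(2)}_{G,B})^{2}}}\sqrt{\Pr{F}}$ only as a fallback. First, I would express it as $a_n^2$ times a doubly planted probability of $F$ and reuse Step 1's tails.
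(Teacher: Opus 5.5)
Your architecture matches the paper's: an upper cutoff on $\mathcal Z_G$ at a real parameter, the identity $\E{\mathcal Z_G\,\mu_G^{\otimes2}(\*\rho\notin B)}=a_n\E[\mathbb Q_n]{\mu_G^{\otimes2}(\*\rho\notin B)}$ combined with \Cref{lem:exponential-overlap}, and termwise domination of complex weights by positive real ones, with $|a_n(z)|$ bounded below via \Cref{lem:local-profile-moments}. Your first and third error terms can be completed with the tools you name. Your extra overlap event makes the third term trivial. For the first term, you need the $\mathbb Q_n$-probability (not the $\mathrm{CM}$-probability) that this event fails, which you can get by splitting on $\{\mathcal Z_G\le\e^{un}a_n\}$. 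Your use of the $s=0$ case of \eqref{eq:planted-free-energy-bound} plus concentration is a valid alternative to the paper's overlap split.

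The genuine gap is the step you flag as the main obstacle. You need $\E{\mathcal Z^{(2)}_{G,B}(t,t)\one{\mathcal Z_G(t)>\e^{\gamma_0n}a_n}}\le\e^{-cn}a_n^2$, i.e.\ the $B$-restricted doubly planted law $\mathbb Q_B$, with density $\mathcal Z^{(2)}_{G,B}/\E{\mathcal Z^{(2)}_{G,B}}$, must rarely see a large $\mathcal Z_G$. Neither suggestion gives this. Transferring $\mathbb Q_n$-tails costs the factor $\dif\mathbb Q_B/\dif\mathbb Q_n=\mathcal Z_G\,\mu_G^{\otimes2}(\*\rho\in B)/(C_na_n)$, where $C_n=\E{\mathcal Z^{(2)}_{G,B}}/a_n^2$, and this factor is large exactly on the event in question. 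Against a Gaussian tail $\Pr[\mathbb Q_n]{\log(\mathcal Z_G/a_n)>vn}\le\e^{-cv^2n}$ it yields only $\int_{\gamma_0}^{\infty}\e^{n(v-cv^2)}\dif v$, which is exponentially large unless $\gamma_0>1/c$. Meanwhile, the piece of this error on $\{\mathcal Z_G\le\e^{\gamma_0n}a_n\}$ where your overlap event fails costs a factor $\e^{\gamma_0n}$ against the overlap exponent, which forces $\gamma_0$ to be small. The Cauchy--Schwarz fallback needs $\E{(\mathcal Z^{(2)}_{G,B})^2}\le\e^{o(n)}a_n^4$. That is a bare four-replica moment with unrestricted cross-overlaps, and it blows up exponentially in precisely the regime beyond the second-moment threshold that the lemma is designed for.

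The missing idea is to argue directly under $\mathbb Q_B$. Put $\eta_n=\E[\mathbb Q_n]{\mu_G^{\otimes2}(\*\rho\notin B)}$ and $V=a_n/\mathcal Z_G$. Then $\E[\mathbb Q_B]{V}=(1-\eta_n)/C_n$ and $\E[\mathbb Q_B]{V^2}\le1/C_n$. Since $C_n$ tends to a positive constant, Paley--Zygmund gives $\Pr[\mathbb Q_B]{\mathcal Z_G\le Ma_n}\ge c>0$. Next, the paired profile concentrates under $\mathbb Q_B$ by \Cref{prop:profile-moments}(1) and \Cref{lem:local-profile-moments}(1)--(2). So \Cref{lem:profile-concentration} on the alphabet $\Omega\times\Omega$ gives Gaussian concentration of the edit-Lipschitz observable $\log(\mathcal Z_G/a_n)$ under $\mathbb Q_B$ itself. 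Combined with the constant-probability bound, this forces $\E[\mathbb Q_B]{\log(\mathcal Z_G/a_n)}\le o(n)$, hence $\Pr[\mathbb Q_B]{\mathcal Z_G>\e^{\varepsilon n}a_n}\le\e^{-c''\varepsilon^2n}$. That is exactly the missing bound, and then a single cutoff $E_n=\{\mathcal Z_G\le\e^{\varepsilon n}a_n\}$ with $\varepsilon$ below the overlap exponent suffices.

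Separately, the coloring endpoint $t=0$ is not handled. \Cref{lem:exponential-overlap} and \Cref{lem:revealed-entropy-bound} are available only at positive parameters. Since $M_{aa}(0)=0\ne M_{aa}(z)$ for $z\ne0$, there is no domination by weights at $t$. You must build $E_n$ at a small $x>0$ and pay $\e^{O(xn)}$ for $|z|,|w|\le x$. You then need to check that the truncation exponent at $x$ still beats this cost, even though it degrades because the edit-Lipschitz constants grow like $1+\log(1/x)$. The paper shows $c(x)\ge c_2(1+\log(1/x))^{-6}$, with all other constants uniform in $x$.
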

{{The proof of \Cref{lem:graph-truncation} is given in \Cref{app:coloring-endpoint}, using the local moment estimates of \Cref{lem:local-profile-moments} and the planted overlap bound of \Cref{lem:exponential-overlap}. These estimates are proved from the stated variational conditions without using \Cref{lem:variational-rs-converse}.}}

We can now prove \Cref{lem:variational-rs-converse}, using the entropy bound in \Cref{lem:revealed-entropy-bound} and the planted-overlap bound in \Cref{lem:exponential-overlap}. We treat the hard-coloring endpoint separately in \Cref{subsec:local-coloring-phase}.
}

\begin{proof}[Proof of~\Cref{lem:variational-rs-converse}]
By \Cref{lem:model-variational-conditions}, both \Cref{cond:unique-variational-maximizer} and \Cref{cond:strict-KS} hold at $t$.
Write $\mathbb P_n=\mathrm{CM}_{n,d}$, $F_n=\log\mathcal Z_G(t)$, and $a_n=\E[\mathbb P_n]{\mathcal Z_G(t)}$. Define the planted graph law by $\dif\mathbb Q_n^G/\dif\mathbb P_n=\mathcal Z_G(t)/a_n$. At reveal density $s=0$, the pinned log-likelihood in \Cref{lem:revealed-entropy-bound} is zero. Since \Cref{cond:unique-variational-maximizer} gives $\max_\nu\mathcal B_{d,t}(\nu)=0$, that lemma yields
\[
 |\log a_n|=O(\log n),\qquad
 0\le D_n\defeq\DKL(\mathbb Q_n^G\Vert\mathbb P_n)
   =\E[\mathbb Q_n^G]{F_n}-\log a_n=o(n).
\]
Jensen's inequality gives $\E[\mathbb P_n]{F_n}\le\log a_n=o(n)$. For the reverse bound, suppose that $\E[\mathbb P_n]{F_n}\le-\varepsilon n$ along a subsequence, for some $\varepsilon>0$, and put $H_n=\{F_n> -\varepsilon n/2\}$. Concentration~\eqref{eq:regular-logz-concentration} gives $\mathbb P_n(H_n)\le\e^{-c_\varepsilon n}$, whereas the planted density gives
\[
 \mathbb Q_n^G(H_n^c)
 =a_n^{-1}\E[\mathbb P_n]{\e^{F_n}\one{H_n^c}}
 \le a_n^{-1}\e^{-\varepsilon n/2}=o(1).
\]
Applying the log-sum inequality to the partition $\{H_n,H_n^c\}$ yields
\[
 D_n\ge\mathbb Q_n^G(H_n)\log\frac1{\mathbb P_n(H_n)}-\log2
 \ge(1-o(1))c_\varepsilon n-\log2,
\]
contradicting $D_n=o(n)$. Thus $n^{-1}\E[\mathbb P_n]{F_n}\to0$, and~\eqref{eq:regular-logz-concentration} gives $F_n/n\xrightarrow{\mathrm p}0$ under $\mathbb P_n$.

Fix $\eta>0$ and let $A_\eta$ be the event that the overlap of two Gibbs replicas differs from $\*\pi(t)\otimes\*\pi(t)$ by more than $\eta$ in $\ell_\infty$ norm. By \Cref{lem:exponential-overlap}, $q_n\defeq\E[\mathbb Q_n^G]{\mu_G^{\otimes2}(A_\eta)}\le\e^{-c_\eta n}$ for some $c_\eta>0$ and all large $n$. At these positive parameters, $\mathcal Z_G(t)>0$ for every configuration-model graph, so $\dif\mathbb P_n/\dif\mathbb Q_n^G=a_n\e^{-F_n}$.

Set $L_n=\{F_n\ge-c_\eta n/2\}$. Since $\E[\mathbb P_n]{F_n}=o(n)$, the concentration bound~\eqref{eq:regular-logz-concentration} gives $\mathbb P_n(L_n^c)\le\e^{-c'_\eta n}$ for some $c'_\eta>0$ and all large $n$. Splitting according to $L_n$ and using the same conditional Gibbs law under both graph measures gives
\[
 \E[\mathbb P_n]{\mu_G^{\otimes2}(A_\eta)}
 \le\mathbb P_n(L_n^c)+a_n\e^{c_\eta n/2}q_n
 \le\e^{-c'_\eta n}+a_n\e^{-c_\eta n/2}
 =\e^{-\Omega(n)},
\]
where the last bound uses $|\log a_n|=O(\log n)$. Conditioning on simplicity preserves both the overlap and free-energy convergence, since its probability is bounded away from zero by \Cref{lem:simplicity-prob}. Thus $n^{-1}\log Z_G(t)\xrightarrow{\mathrm p}\phi(t)$ under $\+G_{n,d}$, and \Cref{lem:rs-overlap-characterization} turns the overlap concentration into geometric replica symmetry. The hard-coloring endpoint is handled separately in \Cref{subsec:local-coloring-phase}.
\end{proof}

{To define the replica symmetry breaking thresholds, introduce the positive-witness sets and their extremal boundaries:
\begin{equation}\label{eq:variational-thresholds-def}
\begin{aligned}
  S_d&\defeq\bigl\{\alpha\in(0,1/2):\exists\,\nu\in\mathcal P_{\*\pi_\alpha}\text{ with }\mathcal B_{d,\alpha}(\nu)>0\bigr\},
  &\alpha_{\mathrm{RSB}}(d)&\defeq\inf S_d,\\
  T_{d,q}&\defeq\bigl\{\beta\in(0,1):\exists\,\nu\in\mathcal P_{\*U_q}\text{ with }\mathcal B_{d,\beta}(\nu)>0\bigr\},
  &\beta_{\mathrm{RSB}}(d,q)&\defeq\sup T_{d,q},
\end{aligned}
\end{equation}
with $\inf\varnothing=1/2$ and $\sup\varnothing=0$, respectively. By \Cref{lem:model-variational-conditions}, $S_d$ is upward closed in $(0,1/2)$ and $T_{d,q}$ is downward closed in $(0,1)$. Thus $\mathcal B$ is nonpositive at every parameter strictly on the replica-symmetric side of the threshold and has a positive witness at every parameter strictly on the other side.}

\begin{proof}[Proof of \Cref{thm:hardcore-three-characterizations} and \Cref{thm:potts-three-characterizations}, except for the hardcore threshold bounds and~\eqref{eq:replica-symmetry-breaking-hardcore-asymptotic}]
For the thresholds defined in~\eqref{eq:variational-thresholds-def}, we make no assertion at the boundary.

The hardcore threshold bounds $0<\alpha_{\mathrm{RSB}}(d)<1/2$ and the asymptotic expansion~\eqref{eq:replica-symmetry-breaking-hardcore-asymptotic} are proved in \Cref{subsec:rsb-threshold-asymptotics}. For Potts, the high-temperature free-energy formula established in \Cref{subsec:local-rs-free-energy} and the implication proved in \Cref{subsec:local-bethe-consequences} show that the Bethe functional is nonpositive for $\beta$ sufficiently close to $1$. Thus $\beta_{\mathrm{RSB}}(d,q)<1$.

If $\alpha<\alpha_{\mathrm{RSB}}(d)$, choose $\alpha'\in(\alpha,\alpha_{\mathrm{RSB}}(d))$; if $\beta\in(\beta_{\mathrm{RSB}}(d,q),1)$, choose $\beta'\in(\beta_{\mathrm{RSB}}(d,q),\beta)$. Nonpositivity at these stronger parameters places the original parameter in the interior of the variational replica-symmetric regime, so \Cref{lem:model-variational-conditions} gives \Cref{cond:unique-variational-maximizer} and \Cref{cond:strict-KS}. Hence \Cref{lem:variational-rs-converse} gives geometric replica symmetry, and \Cref{lem:hc-pointwise-rs-free-energy} or \Cref{lem:potts-pointwise-rs-free-energy} then gives thermodynamic replica symmetry. At $\beta=1$, the Potts Gibbs measure is a product measure, $Z_{G,q}(1)=q^n$, and $\mathcal B_{d,1}\equiv0$, so all three properties hold directly.

On the other side of either threshold, a positive Bethe witness exists, so variational replica symmetry fails. By \Cref{lem:strict-free-energy-gaps}, thermodynamic replica symmetry fails as well. Geometric replica symmetry would imply thermodynamic replica symmetry by \Cref{lem:hc-pointwise-rs-free-energy} or \Cref{lem:potts-pointwise-rs-free-energy}, so it also fails. This completes both phase characterizations.
\end{proof}

\subsection{Extension to Proper Colorings}\label{subsec:local-coloring-phase}

We extend the three implications to proper $q$-colorings at the hard-constraint endpoint $\beta=0$. The converse from variational to geometric replica symmetry holds in the interior of the variational replica-symmetric regime, with slack in the interpolated degree.

\paragraph{From geometric to thermodynamic replica symmetry.}
Geometric replica symmetry identifies the limiting coloring free energy as follows.
\begin{lemma}\label{lem:coloring-pointwise-rs-free-energy}\label{lem:integer-hard-coloring-entropy}
Fix integers $d,q\ge3$. If $G\sim\+G_{n,d}$ is $q$-colorable with high probability and its uniform proper-coloring measure satisfies geometric replica symmetry (\Cref{def:replica-symmetry}), then it also satisfies thermodynamic replica symmetry (\Cref{def:free-energies}), i.e.,
\[
 \frac1n\log Z_{G,q}(0)\xrightarrow{\mathrm p}\phi_{d,q}(0).
\]

Moreover,
\[
 \lim_{n\to\infty}\frac1n\E{\log Z_{G,q}(0)\mid Z_{G,q}(0)>0}=\phi_{d,q}(0).
\]
\end{lemma}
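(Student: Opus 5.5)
The plan mirrors the hardcore argument of \Cref{subsec:local-rs-free-energy}, with the hard constraint handled by softening and by conditioning on colorability. We work on the high-probability event that $G$ is $q$-colorable and write $\bar F_n\defeq\E{\log Z_{G,q}(0)\mid Z_{G,q}(0)>0}$. The proof has three steps: an upper bound, a cavity lower bound, and identification of the minimizer.

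\textbf{Upper bound.} We use monotonicity $Z_{G,q}(0)\le Z_{G,q}(\beta)$ for every $\beta>0$, together with the fractional-moment interpolation bound of~\cite{lelarge2018replica} for the positive-weight Potts model. This is the Potts analogue of \Cref{lem:local-hc-fractional-upper}. For any BP message law $\eta$ on $\Delta_{[q]}$ supported away from the boundary, it gives $\limsup n^{-1}\E{\log Z_{G,q}(\beta)}\le\mathcal F^{\mathrm{Potts}}_{m,\beta}(\eta)$. Letting $\beta\downarrow0$ and then $m\downarrow0$ yields $\limsup\bar F_n/n\le b_\star$. Here $b_\star$ is the minimum of the BP Bethe functional over coloring BP fixed points. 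Two points need care:
\begin{itemize}
\item[(a)] Since $\log Z_{G,q}(0)=-\infty$ on uncolorable graphs, we first bound $\log Z_{G,q}(0)\one{Z>0}\le\log Z_{G,q}(\beta)$ on the colorable event.
\item[(b)] Concentration of $\log Z_{G,q}(\beta)$ from~\eqref{eq:regular-logz-concentration} transfers the expectation bound to the conditional expectation, up to an $o(n)$ error.
\end{itemize}

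\textbf{Lower bound.} We rerun the two-root cavity coupling~\eqref{eq:cavity-graph} with \Cref{lem:cavity-increment-comparison}, restricted to colorable $G,G'$. The analogue of~\eqref{eq:diff-energy-2} expresses $Z_G/Z_{G'}$ as a ratio of boundary probabilities. One probability is that $a,a'$ receive prescribed colors. The other is that the matching edges $b_ic_i$ of $G'$ are all bichromatic, which corresponds to removing them and reattaching $a,a'$. Geometric replica symmetry (\Cref{def:replica-symmetry}) lets the joint law of the $2d$ boundary spins factorize asymptotically, as in \Cref{fact:cavity-marginal-estimates}. Subsequential limits of the empirical cavity-message distribution are BP fixed points, as in \Cref{fact:cavity-prob} and \Cref{fact:fixed-point}. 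This gives $\liminf(\bar F_n-\bar F_{n-2})\ge2b_\star$.

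The new difficulty, which I expect to be the main obstacle, is integrability of the logarithms of boundary probabilities. These probabilities are no longer bounded below uniformly, since a $q$-coloring marginal can vanish. I would handle this in two ways:
\begin{itemize}
\item[(i)] Use a local lower bound: a proper coloring of the depth-$r$ neighborhood can always be recolored when $q\ge3$ and $d$ is fixed. This gives boundary probabilities at least $q^{-O(d^2)}$ on typical graphs.
\item[(ii)] Control the complementary rare event, where $G'$ is uncolorable although $G$ is colorable, by the colorability probability. The uniform bound $|\log Z_{G,q}(0)|\le n\log q$ then makes that contribution $o(1)$.
\end{itemize}
Telescoping the increments gives $\liminf\bar F_n/n\ge b_\star$.

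\textbf{Identifying $b_\star$.} I would adapt the Hoeffding-decomposition slope argument from \Cref{lem:hc-pointwise-rs-free-energy}. For Potts, the analogue of~\eqref{eq:bp-vertex-edge-relation} still holds, so~\eqref{eq:local-hc-fixed-point-slope} gives $\partial_m\mathcal F_m|_{0^+}\le0$. Strictness holds for non-Dirac fixed points via the same non-additivity argument for $\log V^{\mathrm{BP}}$. This forces the minimizer to be a Dirac fixed point. Among Dirac fixed points, the symmetric one $\delta_{\*U_q}$ must be selected, because asymmetric Dirac points have a larger Bethe value by permutation symmetry and averaging. Its value is $\phi_{d,q}(0)=\log q+\frac d2\log\frac{q-1}q$. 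This establishes the conditional-expectation limit. Convergence in probability then follows by concentration: the bounded-difference martingale of~\eqref{eq:regular-logz-concentration} applied at small $\beta$, combined with the sandwich from the upper and lower bounds, together with $\Pr{Z_{G,q}(0)>0}\to1$.
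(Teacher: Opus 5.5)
The gap is in your lower bound, the two-root cavity step. The hard constraint removes the local lower bounds on which every step of that argument rests.

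First, there is no coloring analogue of~\eqref{eq:diff-energy-2}. The hardcore identity works because $\sigma_{a'}=1$ forces $C$ vacant, which makes the matching constraints of $G'$ automatic. For colorings the correct expressions are $Z_G/Z_{H_0}=\E[\mu_{H_0}]{(q-|\sigma(B)|)(q-|\sigma(C)|)}$ and $Z_{G'}/Z_{H_0}=\mu_{H_0}(\sigma_{b_i}\ne\sigma_{c_i}\ \forall i)$, where $|\sigma(B)|$ is the number of distinct colors on $B$. So all boundary quantities live under $\mu_{H_0}$.

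In the hardcore proof, moving replica symmetry from $\mu_G$ to $\mu_{H_0},\mu_{G'},\mu_{G-a}$ uses the density bound $\dif\mu_{G'}/\dif\nu\le(1+\lambda)^{d+2}$ in \Cref{lem:rs-measure-properties}(ii). Keeping the logarithms in \Cref{fact:cavity-marginal-estimates} bounded and Lipschitz uses bounds such as $\mu_{G'}(\sigma_C=0)\ge(1+\lambda)^{-d}$. Your substitute (i) is false once $d\ge q$, which is the regime that matters since $d_{\mathrm{RSB}}(q)\approx2q\log q$. There the other $d-1$ neighbors of $b_i$ can use up all colors, nothing can be recolored locally, and $\mu_{H_0}(\sigma_{b_i}\ne\sigma_{c_i}\ \forall i)$ or $\mu_G(\sigma_a=x,\sigma_{a'}=y)$ can be exponentially small (or zero) on colorable graphs.

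Geometric replica symmetry only gives an $o(1)$ averaged total-variation error with no rate. Even granting its transfer to $\mu_{H_0}$, which you cannot do without the density bounds, it gives convergence in probability of these boundary probabilities. It does not give convergence of the expectations of their logarithms, which is what the increment formula needs.

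Your bookkeeping in (ii) is also off by a factor of $n$. An event of probability $o(1)$ on which $|\log Z|$ can be of order $n\log q$ contributes $o(n)$ to each increment, not $o(1)$, and $n/2$ such errors do not telescope to $o(n)$. The same problem affects the change of measure from $\mathbb M^\star_n$ to $\+G_{n-2,d}$. In the hardcore proof that step was Cauchy--Schwarz with $\operatorname{Var}(\log Z_G)=O(n)$ from~\eqref{eq:regular-logz-concentration}. At $\beta=0$ one edge can move $\log Z_{G,q}(0)$ from finite to $-\infty$, so no bounded-difference variance bound exists, and $\operatorname{Var}(\log Z)=o(n^2)$ is essentially the concentration you are trying to prove.

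There are smaller problems as well. Coloring BP fixed points are not a priori bounded away from the boundary of $\Delta_{[q]}$. With frozen one-hot messages, the coloring analogues of $\E{\log V^{\mathrm{BP}}}$ and $\E{\log E^{\mathrm{BP}}}$ can both be $-\infty$, so $\mathcal F_0$ and $b_\star$ need not be defined. The ``at least three support points'' step in the strictness argument relies on scalar monotone messages and does not carry over to vector messages. Your averaging remark is unnecessary: there are no asymmetric Dirac fixed points, since $x\mapsto x/(1-x)^{d-1}$ is strictly increasing. Finally, the lower tail of $n^{-1}\log Z_{G,q}(0)$ cannot come from concentration at $\beta>0$, because $Z_{G,q}(0)\le Z_{G,q}(\beta)$ only bounds it from above. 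It should come from the expectation limit, the Markov upper tail from the first moment, and $0\le n^{-1}\log Z\le\log q$ on $\{Z>0\}$.

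The paper avoids cavity increments altogether. By~\eqref{eq:hard-coloring-likelihood}, the conditional quenched free energy equals $\log\E{Z}-\log c-\DKL(\mathbb P_{\mathrm{col}}\Vert\widehat{\mathbb Q}_n)$, so it suffices to show that this relative entropy is $o(n)$. Replica symmetry enters only through the switch identity~\eqref{eq:hard-coloring-hellinger-switch}. That identity writes the Dirichlet energy of $\sqrt f$ under half-edge switches as Hellinger distances between the bounded quantities $p_a,p'_a\to q^{-2}$, so no lower bounds on probabilities or integrability of logarithms are needed. Edge-type concentration (\Cref{lem:hard-coloring-edge-type}) controls the color-profile part of the relative entropy. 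The entropy-coercivity bound on $\prod_a\mathfrak S_{dn_a}$ from global hypercontractivity (\Cref{lem:global-entropy-coercivity}) then turns vanishing Dirichlet energy into $o(n)$ conditional relative entropy.
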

The proof in \Cref{app:hard-coloring-entropy} compares the law obtained by choosing a uniform colorable graph and then a uniform proper coloring with the annealed planted law. Color-preserving half-edge switches and geometric replica symmetry show that the relative entropy of the former law with respect to the latter is $o(n)$, which identifies the conditional quenched free energy. The annealed upper bound then yields convergence in probability. This argument treats the hard constraint directly.

\paragraph{From thermodynamic to variational replica symmetry.}
The next implication follows by ruling out a positive Bethe witness whenever the free energy attains the Bethe prediction.
\begin{corollary}\label{cor:coloring-thermodynamic-implies-variational}
Fix integers $d,q\ge3$ and let $G\sim\+G_{n,d}$. If the proper-coloring model satisfies thermodynamic replica symmetry (\Cref{def:free-energies}), then it also satisfies variational replica symmetry (\Cref{def:rsb-bethe-functional}), i.e., $\mathcal B_{d,0}(\nu)\le0$ for every $\nu\in\mathcal P_{\*U_q}$.
\end{corollary}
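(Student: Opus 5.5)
We argue by contraposition, in parallel with \Cref{lem:strict-free-energy-gaps}. Suppose some $\nu\in\mathcal P_{\*U_q}$ has $\mathcal B_{d,0}(\nu)>0$. The aim is to show that $\frac1n\log Z_{G,q}(0)\le\phi_{d,q}(0)-\varepsilon$ with high probability. This contradicts thermodynamic replica symmetry, under which $\frac1n\log Z_{G,q}(0)\xrightarrow{\mathrm p}\phi_{d,q}(0)$.

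\textbf{Step 1: reduce to a small positive $\beta$.} Colorings are monotone in $\beta$: $Z_{G,q}(0)\le Z_{G,q}(\beta)$ for every $\beta\in(0,1]$. Meanwhile $\phi_{d,q}(\beta)\to\phi_{d,q}(0)$ as $\beta\downarrow0$. So it suffices to find some $\beta>0$ with
\[
\frac1n\log Z_{G,q}(\beta)\le\phi_{d,q}(\beta)-\varepsilon
\]
with high probability, where $\phi_{d,q}(\beta)-\phi_{d,q}(0)<\varepsilon/2$.

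\textbf{Step 2: transfer the witness.} We show $\mathcal B_{d,\beta}(\nu_\beta)>0$ for all sufficiently small $\beta>0$, with $\nu_\beta$ built from $\nu$. Given $\nu$, first mix it toward $\*U_q$, replacing each $\*\mu$ by $(1-\delta)\*\mu+\delta\*U_q$. This keeps the strict positivity by continuity and makes all the factors $V_{d,0}$ and $E_0$ bounded away from $0$ where needed. Then $\mathcal B_{d,\beta}(\nu)$ is continuous in $\beta$ near $0$, because $M_\beta$ depends continuously on $\beta$ and $x\log x$ is continuous and bounded on compacts. Hence $\mathcal B_{d,\beta}(\nu)>0$ for small $\beta>0$.

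\textbf{Step 3: apply the Potts gap.} With the witness from Step 2, \Cref{lem:strict-free-energy-gaps}(ii) gives a gap $\varepsilon(\beta)>0$ at each such $\beta$. The difficulty is that $\varepsilon(\beta)$ may shrink as $\beta\downarrow0$, while Step 1 needs the gap to beat $\phi_{d,q}(\beta)-\phi_{d,q}(0)=O(\beta)$.

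\textbf{Main obstacle: making the gap uniform in $\beta$.} To handle this, I would rerun the quantitative mechanism behind \Cref{lem:strict-free-energy-gaps}(ii) rather than use it as a black box. That mechanism is the fractional-moment upper bound: a Potts analogue of \Cref{prop:nu-eta-fractional-correspondence} together with an interpolation bound as in \Cref{lem:local-hc-fractional-upper}. Its $m$-derivative at $m=1^-$ equals $\mathcal B_{d,\beta}(\nu)$, so it gives
\[
\limsup_n \frac1n\E{\log Z_{G,q}(\beta)}\le \phi_{d,q}(\beta)-m_0\mathcal B_{d,\beta}(\nu)+o(m_0).
\]
Since $\mathcal B_{d,\beta}(\nu)\ge c>0$ uniformly for small $\beta$, and the $o(m_0)$ term is uniform (the relevant factors are bounded above and below after the $\delta$-mixing), we get a gap $\varepsilon$ independent of $\beta$. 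Choosing $\beta$ small then yields, with high probability under the concentration bound \eqref{eq:regular-logz-concentration} at $\beta>0$,
\[
\frac1n\log Z_{G,q}(0)\le\phi_{d,q}(0)-\varepsilon/2.
\]
This contradicts thermodynamic replica symmetry and proves the corollary.
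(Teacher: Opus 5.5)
Your argument is correct and is essentially the paper's. Like the paper, you combine the positive-$\beta$ fractional-moment upper bound (\Cref{lem:local-potts-fractional-upper}) with continuity in $\beta$, the monotonicity $Z_{G,q}(0)\le Z_{G,q}(\beta_0)$, and concentration at a single fixed $\beta_0>0$; the only difference is the order of limits. The paper fixes $m<1$ directly at $\beta=0$ using $\partial_m H_m(0)|_{m=1^-}=\mathcal B_{d,0}(\nu)>0$ and then moves to $\beta_0$ by continuity of $H_m$ in $\beta$, so your transfer of the witness to $\mathcal B_{d,\beta}$, the $\delta$-mixing, and the uniform-in-$\beta$ Taylor remainder are unnecessary, though harmless.
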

To prove this corollary, we establish the following strict-gap statement.
\begin{lemma}\label{lem:coloring-free-energy-gap}
Fix integers $d,q\ge3$ and let $G\sim\+G_{n,d}$. If $\mathcal B_{d,0}(\nu)>0$ for some $\nu\in\mathcal P_{\*U_q}$, then there exists $\varepsilon=\varepsilon(d,q)>0$ such that
\[
 \lim_{n\to\infty}\Pr{\frac1n\log Z_{G,q}(0)\le\phi_{d,q}(0)-\varepsilon}=1.
\]
\end{lemma}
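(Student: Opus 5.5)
We follow the route of \Cref{lem:strict-free-energy-gaps}(i): turn the positive witness into a fractional-moment bound strictly below $\phi_{d,q}(0)$. The new difficulty at $\beta=0$ is that the hard constraint prevents applying the interpolation bound of~\cite{lelarge2018replica} directly. We therefore work at a small positive $\beta$ and use monotonicity in $\beta$.

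The starting point is that, for every multigraph, $Z_{G,q}(0)\le Z_{G,q}(\beta)$ whenever $\beta\in(0,1]$. It therefore suffices to find a single $\beta_0>0$ with
\[
  \phi_{d,q}(\beta_0)-\varepsilon_0<\phi_{d,q}(0)-\varepsilon_0/2
\]
and to show that, with high probability, $n^{-1}\log Z_{G,q}(\beta_0)\le\phi_{d,q}(\beta_0)-\varepsilon_0$.

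Since $\phi_{d,q}$ is continuous at $0$, the first requirement holds once $\beta_0$ is small relative to $\varepsilon_0$. The whole point is then that $\varepsilon_0$ can be chosen uniformly in small $\beta_0$. To arrange this:

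\begin{enumerate}
\item \textbf{Interior witness.} Replace each message of the witness $\nu$ by $(1-\delta)\*\mu+\delta\*U_q$ for a small fixed $\delta>0$. This gives a witness $\nu_\delta$ with $\mathcal B_{d,0}(\nu_\delta)\ge\kappa>0$, supported in the interior of $\Delta_{[q]}$. Because $M_0\*\mu$ has entries $(1-\mu_a)q/(q-1)$, all vertex and edge factors for $\nu_\delta$ are bounded away from $0$ and $\infty$.

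\item \textbf{Continuity in $\beta$.} On this compactly supported witness, $\beta\mapsto\mathcal B_{d,\beta}(\nu_\delta)$ is continuous at $\beta=0$. Hence $\mathcal B_{d,\beta}(\nu_\delta)\ge\kappa/2$ for all $\beta\in[0,\beta_1]$.

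\item \textbf{Fractional-moment bound with uniform error.} Use the Potts analogue of \Cref{prop:nu-eta-fractional-correspondence} together with the fractional interpolation bound; these are presumably the ingredients of the proof of \Cref{lem:strict-free-energy-gaps}(ii) in \Cref{subsec:app-variational-bridges}, which I take as available. They give
\[
  \limsup_n\tfrac1n\E{\log Z_{G,q}(\beta)}\le\phi_{d,q}(\beta)-m_0\mathcal B_{d,\beta}(\nu_\delta)+R(m_0,\beta).
\]
The remainder $R$ is a second-order Taylor error in $m$ around $m=1$ for $\frac1m\log\E{X^m}$. It is controlled by bounds on $\log V$ and $\log E$, and the bounds from step 1 are uniform in $\beta\in[0,\beta_1]$. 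We can therefore fix $m_0$ so that $R\le m_0\kappa/8$ uniformly, which gives $\varepsilon_0=m_0\kappa/4$ independent of $\beta$.

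\item \textbf{Conclusion.} Choose $\beta_0\le\beta_1$ with $\phi_{d,q}(\beta_0)\le\phi_{d,q}(0)+\varepsilon_0/2$. Concentration~\eqref{eq:regular-logz-concentration} holds at the positive parameter $\beta_0$, with constants depending on $\beta_0$, which is fine because $\beta_0$ is now fixed. It upgrades the expectation bound to a high-probability bound for $Z_{G,q}(\beta_0)$ under $\+G_{n,d}$, after conditioning on simplicity via \Cref{lem:simplicity-prob}. The domination $Z_{G,q}(0)\le Z_{G,q}(\beta_0)$ then gives the claim with $\varepsilon=\varepsilon_0/2$, where $\log0=-\infty$ is harmless.
\end{enumerate}

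The main obstacle is step 3: the uniformity in $\beta$ of the fractional-moment upper bound and of its Taylor remainder. If the interpolation bound of~\cite{lelarge2018replica} requires strictly positive weights with constants degenerating as $\beta\to0$, the soft weights at $\beta_0$ still give a legitimate bound, since the fixed $\beta_0>0$ is all that is used. The only uniformity needed is in the explicit right-hand side, which depends continuously on $\beta$ for interior messages.

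An alternative for step 3 is to apply the annealed fractional bound $\E{Z^{m}}\le\E{Z}^{m}\cdots$ directly at $\beta=0$ in the configuration model, using the first-moment computation of the tilted planted measure, and then use Markov's inequality. This avoids interpolation altogether, but it requires redoing the moment computation, so I would try the domination route first.
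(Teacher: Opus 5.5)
Your proposal is correct and follows the paper's proof. Both arguments combine a fractional-moment upper bound at a fixed small $\beta_0>0$ with concentration at $\beta_0$ and the domination $Z_{G,q}(0)\le Z_{G,q}(\beta_0)$; the bound you need is \Cref{lem:local-potts-fractional-upper}, taken from~\cite{ayre2022lower}, and is not an ingredient of the proof of \Cref{lem:strict-free-energy-gaps}(ii). The paper orders the limits more economically: it first fixes $m<1$ so that the right-hand side $H_m(0)$ of~\eqref{eq:local-potts-fractional-upper} lies below $\phi_{d,q}(0)$, then uses continuity of $\beta\mapsto H_m(\beta)$ at that fixed $m$, which makes your interior smoothing of the witness and your uniform-in-$\beta$ Taylor remainder unnecessary.
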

We use the fractional-moment upper bound from~\cite[Corollary~2.11]{ayre2022lower} at a fixed positive Potts parameter and then compare its partition function with the number of proper colorings.
\begin{lemma}\label{lem:local-potts-fractional-upper}
Fix integers $d,q\ge3$, a real $\beta\in(0,1)$, and $\nu\in\mathcal P_{\*U_q}$. For every $m\in(0,1)$,
\begin{equation}\label{eq:local-potts-fractional-upper}
 \limsup_{n\to\infty}\frac1n\E[G\sim\+G_{n,d}]{\log Z_{G,q}(\beta)}
 \le\phi_{d,q}(\beta)
       +\frac1m\log\E{V_{d,\beta}^{\,m}}
       -\frac d{2m}\log\E{E_\beta^{\,m}}.
\end{equation}
\end{lemma}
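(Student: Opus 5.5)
The plan is to run the hardcore argument of \Cref{lem:local-hc-fractional-upper} for Potts at $\beta\in(0,1)$, where all edge weights are already strictly positive, so no softening step is needed. Invoke the configuration-model fractional-moment interpolation bound of~\cite[Corollary~2.11]{ayre2022lower} (equivalently~\cite[Theorem~2]{lelarge2018replica}) for the factor graph with vertex weights $1$ and edge weight $\psi_\beta(\sigma_u,\sigma_v)=\beta^{\one{\sigma_u=\sigma_v}}\in[\beta,1]$. Feed it a trial law of BP messages built from $\nu$. This gives
\[
 \limsup_{n\to\infty}\frac1n\E[\mathrm{CM}_{n,d}]{\log Z_{G,q}(\beta)}\le\frac1m\log\E{\Bigl(\sum_{a\in[q]}\prod_{i=1}^d(\psi_\beta\*\mu_i)_a\Bigr)^m}-\frac d{2m}\log\E{(\*\mu_1^\trans\psi_\beta\*\mu_2)^m},
\]
where $\*\mu_i\sim\nu$ are i.i.d. If the interpolation theorem requires trial messages supported in the interior of the simplex, I first replace $\*\mu$ by $(1-\delta)\*\mu+\delta\*U_q$. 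This keeps the mean $\*U_q$. At the end I let $\delta\downarrow0$ by bounded convergence, which applies because every factor lies in a compact subset of $(0,\infty)$ when $\beta>0$.

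The next step is algebraic normalization, the Potts analogue of \Cref{prop:nu-eta-fractional-correspondence}. By~\eqref{eq:Potts_kappa}, $B_\beta=q(q-1+\beta)P_\beta=(q-1+\beta)M_\beta/q$ and $\*\pi=\*U_q$. Hence
\[
 \sum_a\prod_{i=1}^d(B_\beta\*\mu_i)_a=q\Bigl(\frac{q-1+\beta}{q}\Bigr)^dV_{d,\beta},\qquad \*\mu_1^\trans B_\beta\*\mu_2=\frac{q-1+\beta}{q}E_\beta.
\]
Substituting these, the vertex term contributes $\log q+d\log\frac{q-1+\beta}{q}$ plus $\frac1m\log\E{V_{d,\beta}^m}$. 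The edge term contributes $-\frac d2\log\frac{q-1+\beta}{q}$ minus $\frac d{2m}\log\E{E_\beta^m}$. The constants add up to exactly $\phi_{d,q}(\beta)$, which yields~\eqref{eq:local-potts-fractional-upper} under $\mathrm{CM}_{n,d}$. One point needs care: the interpolation bound is usually stated with tilted (reweighted) trial laws, so I must check that the tilt factor $\prod_i(B_\beta\*\mu_i)$ matches the convention. If it does not, I reparametrize the trial law by a tilt, exactly as $\eta_m$ was built from $\nu$ via $\nu_m\propto a(u)^m\nu$ in the hardcore case. For Potts this is simpler, because the uniform mean makes $\*U_q^\trans B_\beta\*\mu$ constant.

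Finally, transfer from $\mathrm{CM}_{n,d}$ to $\+G_{n,d}$. Since $\beta>0$, each edge changes $\log Z_{G,q}(\beta)$ by at most $|\log\beta|$. Then~\eqref{eq:regular-logz-concentration} and \Cref{lem:simplicity-prob} with Cauchy--Schwarz give $|\E[\+G_{n,d}]{\log Z}-\E[\mathrm{CM}_{n,d}]{\log Z}|=O(\sqrt n)$, which vanishes after dividing by $n$.

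The main obstacle is matching the precise hypotheses and normalization of the cited interpolation theorem: the message space, the tilting convention, and whether it is stated for $\mathrm{CM}$ or for simple graphs. The limit arguments and the algebra are routine once that match is made.
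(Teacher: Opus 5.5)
Your proposal is correct and follows the paper's route. The paper obtains this lemma directly from the fractional-moment interpolation bound of \cite[Corollary~2.11]{ayre2022lower} at the fixed positive parameter $\beta$. Your identification $\psi_\beta=B_\beta=(q-1+\beta)M_\beta/q$ shows that no message tilt is needed (unlike the hardcore $\eta_m$ construction) and yields exactly the constant $\phi_{d,q}(\beta)$, while your $\mathrm{CM}_{n,d}\to\+G_{n,d}$ transfer via \eqref{eq:regular-logz-concentration} and \Cref{lem:simplicity-prob} supplies the routine details the paper leaves implicit.
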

\begin{proof}[Proof of~\Cref{lem:coloring-free-energy-gap}]
Fix a witness $\nu$ with $\mathcal B_{d,0}(\nu)>0$, and write $H_m(\beta)$ for the right-hand side of~\eqref{eq:local-potts-fractional-upper}. The cavity factors are uniformly bounded with mean~$1$, so $H_m$ extends continuously to $\beta=0$ for each fixed $m\in(0,1]$. Moreover, $H_1(0)=\phi_{d,q}(0)$ and~\eqref{eq:fractional-moment-derivative} gives
\[
 \left.\frac{\partial}{\partial m}H_m(0)\right|_{m=1^-}
 =\mathcal B_{d,0}(\nu)>0.
\]
Choose $m<1$ with $H_m(0)<\phi_{d,q}(0)$. By continuity, there are fixed $\beta_0\in(0,1)$ and $\varepsilon>0$ such that $H_m(\beta_0)<\phi_{d,q}(0)-3\varepsilon$. The fractional-moment bound gives
\[
 \frac1n\E{\log Z_{G,q}(\beta_0)}\le\phi_{d,q}(0)-2\varepsilon
\]
for all sufficiently large $n$. Concentration~\eqref{eq:regular-logz-concentration} at this fixed positive parameter and $Z_{G,q}(0)\le Z_{G,q}(\beta_0)$ now imply the claimed gap.
\end{proof}
\begin{proof}[Proof of~\Cref{cor:coloring-thermodynamic-implies-variational}]
A positive Bethe witness would give the strict gap in \Cref{lem:coloring-free-energy-gap}, contradicting convergence in probability to $\phi_{d,q}(0)$.
\end{proof}

\paragraph{From variational to geometric replica symmetry.}
For the converse, the strict variational conditions from \Cref{subsec:local-variational-geometric} give both high-probability colorability and geometric replica symmetry.
\begin{lemma}\label{lem:coloring-variational-rs-converse}
Fix integers $d,q\ge3$. If \Cref{cond:unique-variational-maximizer} and \Cref{cond:strict-KS} hold at $(d,0)$, then $G\sim\+G_{n,d}$ is $q$-colorable with high probability and its uniform proper-coloring measure satisfies geometric replica symmetry (\Cref{def:replica-symmetry}).
\end{lemma}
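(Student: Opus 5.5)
The plan follows the positive-parameter proof of \Cref{lem:variational-rs-converse}, with one change: at $t=0$ the normalized partition function $\mathcal Z_G(0)$ can vanish. So the change of measure from $\mathbb Q_n$ back to $\mathrm{CM}$ cannot use the Azuma concentration~\eqref{eq:regular-logz-concentration}. We replace that lower-tail control by the graph truncation of \Cref{lem:graph-truncation} together with classical small subgraph conditioning.

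\emph{Step 1: colorability and a lower tail for $\mathcal Z_G(0)$.} Under \Cref{cond:unique-variational-maximizer} and \Cref{cond:strict-KS} at $(d,0)$, \Cref{lem:graph-truncation} gives events $E_n$ and a box $B$ around $\*U_q\otimes\*U_q$ with two properties:
\[
\E{Y_n(0)}=a_n(0)(1+O(\e^{-\gamma n})),\qquad \E{Y_n(0)^2}=\E{\mathcal Z^{(2)}_{G,B}(0,0)}+O(\e^{-\gamma n}a_n(0)^2).
\]
The central-box local moment estimates (\Cref{lem:local-profile-moments}, using the negative-definite Hessian from \Cref{cond:strict-KS}) should give $\E{\mathcal Z^{(2)}_{G,B}(0,0)}\le(1+o(1))a_n(0)^2\exp(\sum_\ell\lambda_\ell\delta_\ell(0)^2)$. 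Here $\delta_\ell(0)=(q-1)\kappa_0^\ell$. Cycle counts and planted cycle moments give (B1)--(B2). We then apply \Cref{thm:janson-ssc} at $t=0$ to $Y_n(0)$.

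Since $\kappa_0=-1/(q-1)$, we have $1+\delta_\ell(0)=1+(q-1)(-1/(q-1))^\ell>0$ for $\ell\ge3$. Janson's theorem therefore yields $Y_n(0)/a_n(0)\xrightarrow{d}W(0)$, and $W(0)>0$ almost surely conditional on simplicity. In particular $\mathcal Z_G(0)\ge\epsilon a_n(0)$ with probability $1-o(1)$ for each small fixed $\epsilon$ (as $\epsilon\to0$) on $\+G_{n,d}$. This gives colorability with high probability. \Cref{lem:revealed-entropy-bound} cannot be used for $|\log a_n(0)|=O(\log n)$ at $t=0$, so we get it directly from the first-moment Laplace asymptotics in \Cref{lem:local-profile-moments}.

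\emph{Step 2: planted overlap concentration.} We want $q_n\defeq\E[\mathbb Q_n]{\mu_G^{\otimes2}(A_\eta)}\le\e^{-c_\eta n}$ at $t=0$. The proof of \Cref{lem:exponential-overlap} (random pinning plus the entropy bound) relies on $\mathcal Z_G>0$, so we would not rederive it. Instead we read it off the truncation estimate:
\[
\E{(\mathcal Z_G(0)^2-\mathcal Z^{(2)}_{G,B}(0,0))\one{E_n}}=O(\e^{-\gamma n}a_n(0)^2),
\]
which is exactly the unnormalized atypical-pair weight on $E_n$. Taking $B$ inside the $\eta$-box makes $A_\eta$-pairs a subset of the non-$B$ pairs.

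\emph{Step 3: change of measure.} We bound
\[
\E[\mathrm{CM}]{\mu_G^{\otimes2}(A_\eta)\one{\mathcal Z_G\ge\epsilon a_n,\,E_n}}\le(\epsilon a_n)^{-2}\,\E{\mathcal Z_G^2\mu_G^{\otimes2}(A_\eta)\one{E_n}}=O(\epsilon^{-2}\e^{-\gamma n}).
\]
The excluded events $\{\mathcal Z_G<\epsilon a_n\}$ and $E_n^c$ have probability $o(1)+o_\epsilon(1)$ after conditioning on simplicity, using Step 1 and $\Pr{E_n}=1-o(1)$. Letting $n\to\infty$ and then $\epsilon\to0$ gives overlap concentration in probability under $\+G_{n,d}$. \Cref{lem:rs-overlap-characterization} then yields geometric replica symmetry.

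\emph{Main obstacle.} I expect the main obstacle to be Step 1's second-moment input at the hard endpoint: showing that the central box carries exactly the cycle-predicted constant. This needs the Laplace computation and cycle moments to remain valid when $M(0)$ has zero entries. If direct analysis fails, I would obtain it by continuity from small $\beta>0$, as the paper does for the truncation.
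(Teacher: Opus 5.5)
Your proposal is correct and follows essentially the same route as the paper. You apply classical small subgraph conditioning to the truncation $Y_n$ at $\beta=0$ to show that $Y_n/a_n$ stays bounded away from zero on simple graphs, which gives colorability and $E_n$ with high probability. You then combine the truncation's atypical-pair bound $\E{(\mathcal Z_G^2-\mathcal Z^{(2)}_{G,B})\one{E_n}}=O(\e^{-\gamma n}a_n^2)$ with Markov's inequality on $\{Y_n\ge\varepsilon a_n\}$, and conclude via \Cref{lem:rs-overlap-characterization}.

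The differences are minor. The paper uses Wormald's lower-tail corollary with threshold $\e^{-\gamma n/4}$, where you use Janson's distributional limit and send a fixed $\varepsilon\downarrow0$. The central-box second moment you flag as the main obstacle is already supplied at $\beta=0$ by \Cref{lem:local-profile-moments}(3), since $\beta=0$ counts as a nontrivial parameter there.
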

To apply this lemma in the interior of the variational replica-symmetric regime, we first show that nonpositivity of the Bethe functional at a real degree $r>d$ supplies both strict conditions.

\begin{lemma}\label{lem:coloring-variational-conditions}
Fix integers $d,q\ge3$. If there exists a real $r>d$ such that $\mathcal B_{r,0}(\nu)\le0$ for every $\nu\in\mathcal P_{\*U_q}$, then both \Cref{cond:unique-variational-maximizer} and \Cref{cond:strict-KS} hold at $(d,0)$.

{Moreover, variational replica symmetry is preserved under decreasing the interpolated degree. For every $\nu\in\mathcal P_{\*U_q}\setminus\{\delta_{\*U_q}\}$, the function $s\mapsto\mathcal B_{s,0}(\nu)$ is convex on $[0,\infty)$, with $\mathcal B_{0,0}(\nu)=0$ and $\mathcal B_{1,0}(\nu)<0$.}
\end{lemma}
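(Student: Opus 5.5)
The plan is to prove the structural statement first: for each fixed $\nu\in\mathcal P_{\*U_q}\setminus\{\delta_{\*U_q}\}$, the map $s\mapsto\mathcal B_{s,0}(\nu)$ is convex, $\mathcal B_{0,0}(\nu)=0$, and $\mathcal B_{1,0}(\nu)<0$. The two strict conditions at $(d,0)$ and the monotonicity in the interpolated degree will then follow by elementary convexity arguments.

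\emph{Endpoint values.} For $k=0$ we have $V_{0,0}=1$ and the edge term carries the factor $k=0$, so $\mathcal B_{0,0}(\nu)=0$. For $k=1$, the identity $\*\pi^\trans M=\allone^\trans$ gives $V_{1,0}=\*\pi^\trans M\*\mu_1=1$. Hence $\mathcal B_{1,0}(\nu)=-\frac12\E{E_0\log E_0}$. Exactly as in the Potts part of the proof of \Cref{lem:model-variational-conditions}, we have $\E{E_0}=1$ and $\operatorname{Var}(E_0)=q^2\kappa_0^2\operatorname{tr}(\operatorname{Cov}(\*\mu)^2)>0$, because $\kappa_0=-1/(q-1)\ne0$. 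Strict convexity of $x\log x$ then gives $\mathcal B_{1,0}(\nu)<0$.

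\emph{Convexity.} Since $\mathcal B_{s,0}$ is the piecewise-linear interpolation of its integer values, and the edge term is linear in $k$, it suffices to show that $g(k)\defeq\E{V_{k,0}\log V_{k,0}}$ has nonnegative second differences. This is the step I expect to be the main obstacle.

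My first attempt is a replica computation. For integer $m\ge1$, exchange the $a$-sums with the i.i.d.\ messages:
\[
\E{V_{k,0}^m}=\E[\*\pi^{\otimes m}]{W_m(a^1,\ldots,a^m)^k},\qquad W_m(a^1,\ldots,a^m)\defeq\E[\nu]{\textstyle\prod_{j=1}^m(M\*\mu)_{a^j}}\ge0.
\]
This is a moment sequence in $k$, so it is log-convex and convex in $k$. To reach $g(k)=\partial_m\E{V_{k,0}^m}|_{m=1}$, I would argue in one of two ways. The first is to derive the second difference $g(k+2)-2g(k+1)+g(k)$ directly. Condition on $\*\mu_1,\ldots,\*\mu_k$ and tilt $a$ by $\pi_a\prod_{i\le k}(M\*\mu_i)_a/V_{k,0}$. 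Then write the second difference as
\[
\E{\Phi(V_kA(X,Y))-\Phi(V_kA(X,1))-\Phi(V_kA(1,Y))+\Phi(V_k)},
\]
where $\Phi(x)=x\log x$, $A$ is the bilinear tilted average, and $X,Y$ are two fresh independent messages with mean $\allone$ under $M$. Nonnegativity would come from a size-biased coupling, mirroring the hardcore monotonicity computation in \Cref{lem:model-variational-conditions}. The second way, as a fallback, is to use the interpolation $\frac{1}{m}\log\E{V^m}$ and exploit the convexity in $k$ of $\log\E{V_{k,0}^m}$ for integer $m$, passing to $m\to1$ through an analytic-in-$m$ representation of the $a$-averages.

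\emph{Deducing the conditions.} Suppose $\mathcal B_{r,0}\le0$ with $r>d\ge3$. Then $d\in(1,r)$, and convexity gives
\[
\mathcal B_{d,0}(\nu)\le\tfrac{r-d}{r-1}\mathcal B_{1,0}(\nu)+\tfrac{d-1}{r-1}\mathcal B_{r,0}(\nu)<0
\]
for every $\nu\ne\delta_{\*U_q}$. This is \Cref{cond:unique-variational-maximizer}.

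For \Cref{cond:strict-KS}, plug the symmetric two-point laws $\nu_\varepsilon$ into the fourth-order expansions preceding~\eqref{eq:unified-fourth-order}. Interpolating over $k$ gives $\mathcal B_{s,0}(\nu_\varepsilon)=c(s)\varepsilon^4+O(\varepsilon^6)$, where
\[
c(s)=\tfrac12 b(s)\kappa_0^4-\tfrac s4\kappa_0^2
\]
and $b$ is the piecewise-linear interpolation of $\binom k2$. The coefficient $c$ is convex with $c(0)=0$ and $c(1)=-\kappa_0^2/4<0$, and $c(r)\le0$ by hypothesis. Hence $c(d)<0$, which at the integer $d$ reads $\frac d4\kappa_0^2\bigl((d-1)\kappa_0^2-1\bigr)<0$, i.e.\ $(d-1)\kappa_0^2<1$.

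\emph{Monotonicity in the interpolated degree.} Suppose $\mathcal B_{r,0}\le0$ and $s\in[0,r]$. Convexity together with $\mathcal B_{0,0}=0$ gives $\mathcal B_{s,0}\le\frac sr\mathcal B_{r,0}\le0$. So variational replica symmetry is preserved under decreasing the interpolated degree.
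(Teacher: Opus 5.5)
Your endpoint values, the interpolation between degrees $1$ and $r$ that gives \Cref{cond:unique-variational-maximizer}, and the monotonicity in the interpolated degree all agree with the paper. Your route to \Cref{cond:strict-KS} through the convex coefficient $c(s)$ is a fine variant, and it does not even need the general convexity.

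The gap is the step you flagged yourself: convexity of $k\mapsto\E{V_{k,0}\log V_{k,0}}$ is never proved, and both \Cref{cond:unique-variational-maximizer} and the monotonicity rest on it. You correctly reduce the second difference to $\E{\Phi(V_kA(X,Y))-\Phi(V_kA(X,1))-\Phi(V_kA(1,Y))+\Phi(V_k)}$, but neither suggested route shows that this is nonnegative.
\begin{itemize}
\item A hardcore-style argument needs supermodularity along a monotone size-biased coupling. Here, however, $\partial_{x_a}\partial_{y_b}\log A=-w_aw_bx_by_a/A^2<0$ for $a\ne b$, so that mechanism is not available.
\item The fallback fails because $\E{V_{k,0}}=1$ for every $k$. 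To transfer convexity to $\partial_m\E{V_{k,0}^m}|_{m=1}$, you would need convexity of $k\mapsto\E{V_{k,0}^m}$ for real $m$ arbitrarily close to $1$. The moment-sequence representation gives this only at integer $m$, and convexity is not preserved by analytic interpolation in $m$.
\end{itemize}

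The missing idea is that $V_{k,0}$ is a likelihood ratio; this uses $M_0\ge0$ entrywise. Expand $\Phi(V_kA)=V_kA\log V_k+V_kA\log A$. Then use $\E{A(X,Y)\mid X}=A(X,1)$ and $\E{A(X,Y)\mid Y}=A(1,Y)$, which follow from $\E{M_0\*\mu}=\allone$ and independence. All $\log V_k$ terms and cross terms cancel, and your second difference equals
\[
  \E{V_kA(X,Y)\log\frac{A(X,Y)}{A(X,1)A(1,Y)}}\ge0 .
\]
The inequality holds because, given the first $k$ messages, $A(X,Y)$ is a probability density relative to $\nu\otimes\nu$ with marginal densities $A(X,1)$ and $A(1,Y)$. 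The inner expectation is therefore the relative entropy between a joint law and the product of its marginals.

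The paper phrases the same fact globally. Take a uniform hidden color $c\in[q]$ and observations $\xi_1,\xi_2,\ldots$ that are conditionally i.i.d.\ with density $\*\mu\mapsto(M_0\*\mu)_c=q(1-\mu_c)/(q-1)$ relative to $\nu$. The law $\nu^{(k)}$ of $(\xi_1,\ldots,\xi_k)$ then has density $V_{k,0}$ relative to $\nu^{\otimes k}$, so $\E{V_{k,0}\log V_{k,0}}=\DKL(\nu^{(k)}\Vert\nu^{\otimes k})$. Its increments are $I(\xi_{k+1};(\xi_1,\ldots,\xi_k))$, which are nondecreasing by the chain rule and exchangeability. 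Your second difference is exactly $I(\xi_{k+1};\xi_{k+2}\mid\xi_1,\ldots,\xi_k)$.
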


\begin{proof}
For each integer $k\ge0$ and $\nu\in\mathcal P_{\*U_q}$, set $\mathcal A_k\defeq\E{V_{k,0}\log V_{k,0}}$.
If $A$ is uniform on $[q]$ and, given $A=a$, the observations $\xi_1,\xi_2,\ldots$ are independent with density $\*\mu\mapsto q(1-\mu_a)/(q-1)$ relative to $\nu$, then each observation has marginal $\nu$ and the joint law $\nu^{(k)}$ of $(\xi_1,\ldots,\xi_k)$ has density $V_{k,0}$ relative to $\nu^{\otimes k}$. Writing $I(X;Z)\defeq\DKL(P_{X,Z}\Vert P_X\otimes P_Z)$ for the mutual information of $(X,Z)$ with joint law $P_{X,Z}$ and marginals $P_X,P_Z$, we have
\[
  \mathcal A_k=\DKL\bigl(\nu^{(k)}\big\Vert\nu^{\otimes k}\bigr),
  \qquad
  \mathcal A_{k+1}-\mathcal A_k=I\bigl(\xi_{k+1};(\xi_1,\ldots,\xi_k)\bigr).
\]

\smallskip
By the chain rule for relative entropy and exchangeability of $(\xi_1,\xi_2,\ldots)$, these increments satisfy $I(\xi_{k+2};(\xi_1,\ldots,\xi_{k+1}))\ge I(\xi_{k+2};(\xi_1,\ldots,\xi_k))=I(\xi_{k+1};(\xi_1,\ldots,\xi_k))$, so $(\mathcal A_k)_{k\ge0}$ is a convex sequence with $\mathcal A_0=\mathcal A_1=0$. Subtracting the linear edge term $\frac{r'}{2}\E{E_0\log E_0}$ from its piecewise-linear extension shows that $r'\mapsto\mathcal B_{r',0}(\nu)$ is convex on $[0,\infty)$ with $\mathcal B_{0,0}(\nu)=0$. For any $\nu\in\mathcal P_{\*U_q}\setminus\{\delta_{\*U_q}\}$, $\operatorname{Var}(E_0)>0$ gives $\mathcal B_{1,0}(\nu)=-\frac12\E{E_0\log E_0}<0$. Since $\mathcal B_{0,0}(\nu)=0$, convexity implies $\mathcal B_{s,0}(\nu)\le(s/r)\mathcal B_{r,0}(\nu)$ for $r>0$ and $0\le s\le r$, proving the claimed monotonicity. Now suppose that $r>d$ and $\mathcal B_{r,0}(\nu)\le0$ for all $\nu\in\mathcal P_{\*U_q}$. Interpolating between degrees $1$ and $r$ gives
\[
 \mathcal B_{d,0}(\nu)
 \le\frac{r-d}{r-1}\mathcal B_{1,0}(\nu)+\frac{d-1}{r-1}\mathcal B_{r,0}(\nu)
 \le\frac{r-d}{r-1}\mathcal B_{1,0}(\nu)<0,
\]
proving \Cref{cond:unique-variational-maximizer} at $(d,0)$.

It remains to prove \Cref{cond:strict-KS}, which at $\beta=0$ reads $(d-1)/(q-1)^2<1$. For the interpolated functional at degree $r$, the expansion~\eqref{eq:unified-fourth-order} and convexity of $k\mapsto k(k-1)$ give a fourth-order coefficient of at least $\frac r4\kappa_0^2((r-1)\kappa_0^2-1)$. Hence $\mathcal B_{r,0}\le0$ implies $(r-1)\kappa_0^2\le1$. Since $r>d$ and $\kappa_0=-1/(q-1)$, we obtain $(d-1)/(q-1)^2<1$.
\end{proof}

We now prove the geometric conclusion. The truncation estimate from \Cref{lem:graph-truncation} controls the weight of replica pairs away from the independent overlap. A standard lower-tail consequence of small subgraph conditioning supplies the denominator bound needed to turn this estimate into an overlap probability.
\begin{proof}[Proof of~\Cref{lem:coloring-variational-rs-converse}]
Let $A$ be the complement of a fixed neighborhood of $\*U_q\otimes\*U_q$, and choose the overlap box $B\subseteq A^c$ in \Cref{lem:graph-truncation}. Write
\[
 \mathcal Z=\mathcal Z_G(0),\qquad
 a_n=\E[\mathrm{CM}_{n,d}]{\mathcal Z},\qquad
 Y_n=\mathcal Z\one{E_n},\qquad
 A_n=\E[\mathrm{CM}_{n,d}]{Y_n}.
\]
By \Cref{lem:graph-truncation}, $A_n/a_n\to1$. The same truncation estimate, combined with a standard local second-moment calculation as in~\cite[Proposition~5.1]{coja2016chromatic} (see also \Cref{lem:local-profile-moments}), yields
\[
 \frac{\E[\mathrm{CM}_{n,d}]{Y_n^2}}{A_n^2}
 \longrightarrow\exp\!\left(\sum_{\ell\ge1}\lambda_\ell\delta_\ell(0)^2\right).
\]
The sum is finite by \Cref{cond:strict-KS}. The cycle counts have independent Poisson limits, and their mixed factorial moments with $Y_n$ agree asymptotically with those for $\mathcal Z$: every fixed product of cycle counts is polynomially bounded, whereas $\E[\mathrm{CM}_{n,d}]{\mathcal Z-Y_n}=O(\e^{-\gamma n}a_n)$. These are the hypotheses of classical small subgraph conditioning.

Since $\delta_1(0)=-1$ and $\delta_\ell(0)>-1$ for $\ell\ge2$, the standard lower-tail bound~\cite[Theorem~4.1 and Corollary~4.2]{wormald1999models} implies that $Y_n/a_n$ is bounded away from zero in probability conditional on $X_{1,n}=0$. Conditioning further on $X_{2,n}=0$ preserves this conclusion, since $\Pr[\mathrm{CM}_{n,d}]{X_{2,n}=0\mid X_{1,n}=0}\to\e^{-\lambda_2}>0$. Thus, for every deterministic sequence $\varepsilon_n\downarrow0$,
\[
 \Pr[\+G_{n,d}]{Y_n/a_n<\varepsilon_n}=o(1).
\]
In particular, $Y_n>0$ with high probability, so $G$ is $q$-colorable with high probability.

The overlap error in \Cref{lem:graph-truncation} also gives
\[
 \E[\mathrm{CM}_{n,d}]{Y_n^2\mu_G^{\otimes2}(A)}
 \le\E[\mathrm{CM}_{n,d}]{\bigl(\mathcal Z^2-\mathcal Z_{G,B}^{(2)}(0,0)\bigr)\one{E_n}}
 =O(\e^{-\gamma n}a_n^2).
\]
Conditioning on simplicity changes this bound by only a constant factor. Therefore
\[
 \E[\+G_{n,d}]{\mu_G^{\otimes2}(A)}
 \le\Pr[\+G_{n,d}]{Y_n/a_n<\e^{-\gamma n/4}}
      +O(\e^{-\gamma n/2})=o(1).
\]
Here $\mu_G$ may be defined arbitrarily on uncolorable graphs, whose probability is $o(1)$. The overlap characterization in \Cref{lem:rs-overlap-characterization} proves geometric replica symmetry.
\end{proof}

These three implications complete the coloring phase characterization. Define
\begin{equation}\label{eq:coloring-variational-threshold-def}
 \mathcal U_q\defeq\bigl\{r\in[3,\infty):\exists\,\nu\in\mathcal P_{\*U_q}\text{ with }\mathcal B_{r,0}(\nu)>0\bigr\},
 \qquad d_{\mathrm{RSB}}(q)\defeq\inf\mathcal U_q,
\end{equation}
with $\inf\varnothing=\infty$. By \Cref{lem:coloring-variational-conditions}, $\mathcal U_q$ is upward closed.
\begin{proof}[Phase characterization in \Cref{thm:coloring-three-characterizations}]
The one-hot trial in \Cref{subsec:rsb-threshold-asymptotics} gives a positive Bethe witness at sufficiently large integer degrees, so $d_{\mathrm{RSB}}(q)\in[3,\infty)$.

For an integer $d<d_{\mathrm{RSB}}(q)$, the definition of the threshold gives
\begin{equation}\label{eq:coloring-real-degree-nonpositivity}
 d<d_{\mathrm{RSB}}(q)\quad\Longrightarrow\quad
 \exists\,r\in(d,d_{\mathrm{RSB}}(q)):\quad
 \mathcal B_{r,0}(\nu)\le0\quad\forall\nu\in\mathcal P_{\*U_q}.
\end{equation}
Degree monotonicity gives variational replica symmetry at $d$, and \Cref{lem:coloring-variational-conditions} supplies the two strict conditions. Hence \Cref{lem:coloring-variational-rs-converse} gives high-probability colorability and geometric replica symmetry. By \Cref{lem:coloring-pointwise-rs-free-energy}, thermodynamic replica symmetry and the conditional quenched free-energy identity follow as well.

For an integer $d>d_{\mathrm{RSB}}(q)$, upward closure of $\mathcal U_q$ supplies a positive witness, so variational replica symmetry fails. The strict gap in \Cref{lem:coloring-free-energy-gap} rules out thermodynamic replica symmetry. High-probability colorability together with geometric replica symmetry would imply thermodynamic replica symmetry by \Cref{lem:coloring-pointwise-rs-free-energy}, so the geometric property also fails. No assertion is made when $d=d_{\mathrm{RSB}}(q)$.
\end{proof}
\subsection{Asymptotics of the Replica Symmetry Breaking Thresholds}\label{subsec:rsb-threshold-asymptotics}
In this subsection, we prove~\eqref{eq:replica-symmetry-breaking-hardcore-asymptotic} and~\eqref{eq:replica-symmetry-breaking-potts-asymptotic}. {We also establish $\alpha_{\mathrm{RSB}}(d)\in(0,1/2)$ for every $d\ge3$ and $d_{\mathrm{RSB}}(q)\in[3,\infty)$ for every $q\ge3$. These results complete the proofs of~\Cref{thm:hardcore-three-characterizations}, \Cref{thm:coloring-three-characterizations}, and~\Cref{thm:hardcore-rs-transition}.}
Let $N_n(\alpha)$ denote the number of independent sets of size $\lfloor\alpha n\rfloor$ in $G\sim\mathrm{CM}_{n,d}$. We record two consequences of moment and variational estimates.

The following second-moment estimate follows from~\cite[Theorems~1.3 and~11.1, Section~2.1, eqs.~(3) and~(7)--(9)]{gerencser2025boosted}.
\begin{lemma}\label{lem:local-gerencser-second-moment}
For all sufficiently large $d$, there exists $\alpha\in(0,1/2)$ with $d\alpha=2\log d-2\log\log d+2-2\log2+o_d(1)$ such that 
\[
  \frac{1}{n}\log\E{N_n(\alpha)}=\phi_d(\alpha)-\alpha\log\lambda(\alpha)+o_n(1) \qquad \text{and} \qquad \log\bigl(\E{N_n(\alpha)^2}/(\E{N_n(\alpha)})^2\bigr)=o(n).
\]
\end{lemma}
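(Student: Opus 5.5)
This lemma is essentially a transcription of results from \cite{gerencser2025boosted}, so the plan is to read off the two claimed estimates from their first- and second-moment computations for independent sets of fixed density in the configuration model, and then translate their normalization into ours. First I would fix the density $\alpha$ given there: their Theorem~1.3 identifies a density at which the second moment of the number of independent sets of size $\lfloor\alpha n\rfloor$ in $\mathrm{CM}_{n,d}$ (possibly after their ``boosting'' restriction, which I would check is not needed for the bare count at this density) is within a subexponential factor of the squared first moment. Their Section~2.1 gives $d\alpha = 2\log d-2\log\log d+2-2\log 2+o_d(1)$ for this density.

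For the first moment, the standard count gives $\E{N_n(\alpha)}=\binom{n}{\alpha n}\cdot\Pr{\text{no edge inside a fixed }\alpha n\text{-set}}$. Here the pairing probability is $\frac{(dn-2d\alpha n)!\,\ldots}{\ldots}$, with exponential rate $H(\alpha)+\frac d2\bigl[(1-2\alpha)\log(1-2\alpha)-2(1-\alpha)\log(1-\alpha)\bigr]$ up to $o_n(1)$. I would verify by direct algebra that this equals $\phi_d(\alpha)-\alpha\log\lambda(\alpha)$. Indeed, substituting $\log\lambda=\log\alpha+(d-1)\log(1-\alpha)-d\log(1-2\alpha)$ into $(d-1)\log(1-\alpha)-\frac d2\log(1-2\alpha)-\alpha\log\lambda$ and collecting the $\log\alpha$, $\log(1-\alpha)$, and $\log(1-2\alpha)$ coefficients reproduces the displayed rate. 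This is the routine step, matching their eq.~(3).

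For the second moment, their eqs.~(7)--(9) and Theorem~11.1 express $\frac1n\log\E{N_n^2}$ as the maximum over the overlap fraction $\omega$ of an explicit function $\Psi(\omega)$, and they show that the maximum is attained at the independent overlap $\omega=\alpha^2$ at this density, where $\Psi(\alpha^2)=2\cdot\frac1n\log\E{N_n}$. Combining these with the Laplace method (polynomially many overlap values) gives $\log(\E{N_n^2}/\E{N_n}^2)=o(n)$.

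The main obstacle is the bookkeeping: we must confirm that the density in their Theorem~1.3 is exactly the one with the stated expansion, including the constant $2-2\log2$, and that their second-moment bound concerns the plain count $N_n(\alpha)$ in $\mathrm{CM}_{n,d}$ rather than a modified or boosted statistic. If their theorem controls only a restricted count, I would instead invoke Theorem~11.1 directly, which I expect to assert global maximality of $\Psi$ at $\alpha^2$ for all densities up to that threshold, and choose $\alpha$ slightly below it, absorbing the shift into $o_d(1)$.
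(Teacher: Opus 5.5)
Your plan matches the paper, which gives no independent argument for this lemma and simply reads both estimates off \cite{gerencser2025boosted} (Theorems~1.3 and~11.1, Section~2.1, eqs.~(3) and~(7)--(9)), just as you propose. One slip: the pairing exponent has the opposite sign from what you wrote. It is $d(1-\alpha)\log(1-\alpha)-\frac d2(1-2\alpha)\log(1-2\alpha)\approx-\frac d2\alpha^2<0$, and only with this sign does adding $H(\alpha)$ reproduce $\phi_d(\alpha)-\alpha\log\lambda(\alpha)$.
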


For colorings, we use the following variational bound from~\cite[Theorem~1.2, eq.~(1.7), and Section~1.2]{coja2017information}.
\begin{lemma}\label{lem:local-poisson-coloring-bethe}
There is a sequence $\varepsilon_q\to0$ such that for all sufficiently large $q$ and real $3\le r<(2q-1)\log q-2\log2+\varepsilon_q$,
\[
  \max_{\nu\in\mathcal P_{\*U_q}}\E[K\sim\mathrm{Poisson}(r)]{\mathcal B_{K,0}(\nu)}=0.
\]
\end{lemma}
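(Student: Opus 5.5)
The plan is to deduce this bound from the condensation theorem for random $q$-colorings of Erd\H{o}s--R\'enyi graphs in~\cite[Theorem~1.2]{coja2017information}. The work consists of three steps: match their Bethe functional to ours, check that the maximum is attained, and verify that nonpositivity holds on the whole range $3\le r<d_{\mathrm{cond}}(q)$.

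\emph{Step 1: identify the functionals.} At $\beta=0$, \eqref{eq:Potts_kappa} gives $P_0=(\allone\allone^\trans-I)/(q(q-1))$ and $\*\pi=\*U_q$. Hence $M_0=\frac{q}{q-1}(\allone\allone^\trans-I)$ and $(M_0\*\mu)_a=\frac{q}{q-1}(1-\mu_a)$. Substituting into~\eqref{eq:marked-functional} gives
\[
V_{k,0}=\frac1q\sum_{a\in[q]}\prod_{i=1}^k\frac{1-\mu_{i,a}}{1-1/q},\qquad E_0=\frac{1-\sum_a\mu_{1,a}\mu_{2,a}}{1-1/q}.
\]
Thus $\E[K\sim\mathrm{Poisson}(r)]{\mathcal B_{K,0}(\nu)}$ is exactly the functional $\mathcal B_{q,\mathrm{cond}}(r,\nu)$ of~\cite[eq.~(1.7)]{coja2017information}: a Poisson($r$) average of $\Lambda(x)=x\log x$ applied to the normalized vertex factor, minus $\frac r2$ times $\Lambda$ of the normalized edge factor. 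It is evaluated over laws on $\Delta_{[q]}$ with mean $\*U_q$, which is our $\mathcal P_{\*U_q}$. I will check this normalization term by term, including that their convention for $\Lambda$ at $0$ agrees with $0\log0=0$, since factors can vanish at $\beta=0$.

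\emph{Step 2: existence of the maximum, and the lower bound.} The space $\mathcal P_{\*U_q}$ is weakly compact. For each $k$, $V_{k,0}\le(q/(q-1))^k$ and $E_0\le q/(q-1)$, so $\mathcal B_{k,0}$ is weakly continuous. The Poisson series converges uniformly, because $\sum_k \Pr{K=k}\,k\,(q/(q-1))^k\log(q/(q-1))<\infty$. Hence the supremum is a maximum, and it is at least $0$ because $\mathcal B_{K,0}(\delta_{\*U_q})=0$.

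\emph{Step 3: the upper bound on the whole range.} Theorem~1.2 of~\cite{coja2017information} defines $d_{\mathrm{cond}}(q)$ as the infimum of the degrees at which the functional has a positive witness. It also proves $d_{\mathrm{cond}}(q)=(2q-1)\log q-2\log2+\varepsilon_q$ with $\varepsilon_q\to0$. I expect the main obstacle to be confirming that nonpositivity holds at every $r<d_{\mathrm{cond}}$, not merely at the infimum, so I will prove downward closedness directly.

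For fixed $\nu$, \Cref{lem:coloring-variational-conditions} shows that $\mathcal A_k=\E{V_{k,0}\log V_{k,0}}$ is a convex sequence with $\mathcal A_0=0$. Writing $f(r)\defeq\E[K\sim\mathrm{Poisson}(r)]{\mathcal B_{K,0}(\nu)}$, the Poisson identity gives
\[
f'(r)=\E{\mathcal A_{K+1}-\mathcal A_K}-\tfrac12\E{E_0\log E_0}.
\]
This is nondecreasing in $r$, because the increments of a convex sequence are nondecreasing and Poisson laws are stochastically increasing. So $f$ is convex with $f(0)=0$, which yields $f(s)\le (s/r)f(r)$ for $0\le s\le r$. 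Consequently the witness set is upward closed. Every $r<d_{\mathrm{cond}}(q)$ therefore satisfies $\max_\nu f\le0$, and the lemma follows for all sufficiently large $q$.
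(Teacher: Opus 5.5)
Your proposal is correct and follows the paper, which obtains this lemma by importing the variational characterization of the Erd\H{o}s--R\'enyi coloring condensation threshold from \cite[Theorem~1.2, eq.~(1.7), and Section~1.2]{coja2017information}. Your Steps~1--2 make that import explicit. The identification in Step~1 holds only up to a shift, not literally: their functional equals $\E[K\sim\mathrm{Poisson}(r)]{\mathcal B_{K,0}(\nu)}+\log q+\frac r2\log(1-1/q)$, and their threshold compares the supremum against this annealed constant.

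Your Step~3 convexity argument is valid but not needed. Since $d_{\mathrm{cond}}(q)$ is defined as the infimum of the degrees admitting a positive witness, every $r<d_{\mathrm{cond}}(q)$ has no witness by definition. Upward closedness only matters for the converse direction, namely positivity above the threshold.
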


\begin{proof}[Proof of~\eqref{eq:replica-symmetry-breaking-hardcore-asymptotic}]
{{For every $d\ge3$, Weitz's strong spatial mixing bound~\cite[Corollary~2.6]{weitz06counting} holds throughout $0<\alpha<1/d$ and implies geometric replica symmetry. By \Cref{lem:hc-pointwise-rs-free-energy}, thermodynamic replica symmetry also holds there. The contrapositive of \Cref{lem:strict-free-energy-gaps}(i) therefore gives $\mathcal B_{d,\alpha}\le0$ on $(0,1/d)$, and~\eqref{eq:variational-thresholds-def} gives $\alpha_{\mathrm{RSB}}(d)\ge1/d>0$.}}
For the asymptotic lower bound, let $\alpha$ and $N_n=N_n(\alpha)$ be given by~\Cref{lem:local-gerencser-second-moment}. The Paley--Zygmund inequality and $Z_G(\alpha)\ge\lambda(\alpha)^{\lfloor\alpha n\rfloor}N_n$ give
\[
  \Pr[\mathrm{CM}_{n,d}]{\log Z_G(\alpha)\ge n\phi_d(\alpha)-o(n)}
  \ge\Pr[\mathrm{CM}_{n,d}]{N_n\ge\tfrac12\E{N_n}}\ge\exp\{-o(n)\}.
\]
The concentration and model-comparison bounds in~{\eqref{eq:regular-logz-concentration}} give $\liminf_n n^{-1}\E[\+G_{n,d}]{\log Z_G(\alpha)}\ge\phi_d(\alpha)$. Otherwise, the preceding event would have exponentially small probability. Applying~\Cref{lem:local-hc-fractional-upper} with $\eta=\delta_{\theta_\alpha}$ and $m\uparrow1$ (\Cref{prop:nu-eta-fractional-correspondence}) gives the matching upper bound, yielding
\[
  \lim_{n\to\infty}\frac1n\E[\+G_{n,d}]{\log Z_G(\alpha)}=\phi_d(\alpha).
\]
{A positive Bethe witness at $\alpha$ would contradict this equality by \Cref{lem:strict-free-energy-gaps}(i); thus~\eqref{eq:variational-thresholds-def} gives $\alpha\le\alpha_{\mathrm{RSB}}(d)$, proving the lower bound.}

For the upper bound, take the {hard cavity message distribution} $\nu_{\mathrm{hard}}\defeq(1-\alpha)\delta_{(1,0)^\trans}+\alpha\delta_{(0,1)^\trans}\in\mathcal P_{\*\pi_\alpha}$. Applying $(x+y)\log(x+y)\ge x\log x+y\log y$ to the two summands of $V_{d,\alpha}$ and using $M_\alpha\*\pi_\alpha=\allone$ gives
\begin{equation}\label{eq:local-hardcore-hard-trial}
  \mathcal B_{d,\alpha}(\nu_{\mathrm{hard}})
  \ge\sum_{a\in\{0,1\}}(\*\pi_\alpha)_a\log(\*\pi_\alpha)_a
    +\frac d2\sum_{a,b\in\{0,1\}}(P_\alpha)_{ab}\log(M_\alpha)_{ab}
  =\alpha\log\lambda(\alpha)-\phi_d(\alpha).
\end{equation}
{For every $d\ge3$, the right-hand side of~\eqref{eq:local-hardcore-hard-trial} tends to $(d/2-1)\log 2>0$ as $\alpha\uparrow1/2$, so {\eqref{eq:variational-thresholds-def} yields $\alpha_{\mathrm{RSB}}(d)<1/2$}.}
For $\alpha=t/d$ with $t=\Theta(\log d)$, the right-hand side expands as
\[
  \alpha\log\lambda(\alpha)-\phi_d(\alpha)
  =\alpha\Bigl(\log\alpha-1+\frac t2+O(t^2/d)\Bigr),
\]
which is strictly positive for $t=2\log d-2\log\log d+2-2\log2+{\varepsilon}$ and all sufficiently large $d$. {By~\eqref{eq:variational-thresholds-def}, this positivity forces $\alpha\ge\alpha_{\mathrm{RSB}}(d)$.} Since ${\varepsilon>0}$ is arbitrary, we obtain $\alpha_{\mathrm{RSB}}(d)\le(2\log d-2\log\log d+2-2\log2+o_d(1))/d$, matching the lower bound.
\end{proof}

\begin{proof}[Proof of~\eqref{eq:replica-symmetry-breaking-potts-asymptotic}]
{By~\eqref{eq:coloring-variational-threshold-def}, $\mathcal U_q\subseteq[3,\infty)$ gives $d_{\mathrm{RSB}}(q)\ge3$ for every $q\ge3$.} For sufficiently large $q$ and real $3\le r<(2q-1)\log q-2\log2+\varepsilon_q$, degree convexity {of $s\mapsto\mathcal B_{s,0}(\nu)$ on $[0,\infty)$ (\Cref{lem:coloring-variational-conditions})}, Jensen's inequality, and~\Cref{lem:local-poisson-coloring-bethe} give, for every $\nu\in\mathcal P_{\*U_q}$,
\[
  \mathcal B_{r,0}(\nu)
  \le\E[K\sim\mathrm{Poisson}(r)]{\mathcal B_{K,0}(\nu)}\le0.
\]
The expectation is finite: $V_{k,0}\le(q/(q-1))^k$ and $\E{V_{k,0}}=1$ give $0\le\E{V_{k,0}\log V_{k,0}}\le k\log(q/(q-1))$, while the edge term is linear in $k$. {By~\eqref{eq:coloring-variational-threshold-def}, this nonpositivity gives $d_{\mathrm{RSB}}(q)\ge(2q-1)\log q-2\log2+o_q(1)$.}

{
For the matching upper bound, fix $c>0$ and set $r_q=(2q-1)\log q-2\log2+c$, $c_q=q/(q-1)$, and $\nu_{\mathrm{hard}}=q^{-1}\sum_{a=1}^q\delta_{\*e_a}\in\mathcal P_{\*U_q}$. For $k$ independent messages, let $A_k$ count the absent colors. The vertex factor is $V_{k,0}=c_q^kA_k/q$, and the edge factor is $c_q$ for distinct labels and zero otherwise. By color symmetry, the expectation of $\log V_{k,0}$ under vertex-factor reweighting can be evaluated by conditioning a fixed color to be absent, giving
\[
 \mathcal B_{k,0}(\nu_{\mathrm{hard}})
 =\frac k2\log c_q-\log q+\E{\log(1+J_k)},
\]
where $J_k$ counts absent colors after $k$ independent uniform draws from $q-1$ colors. {In particular, since $J_k\ge0$, for every fixed $q\ge3$ we have $\mathcal B_{k,0}(\nu_{\mathrm{hard}})\ge\frac k2\log c_q-\log q>0$ for all sufficiently large $k$, so $d_{\mathrm{RSB}}(q)<\infty$ by~\eqref{eq:coloring-variational-threshold-def}.} Moreover, the factorial moments of $J_k$ are
\[
 \E{J_k}=(q-1)\left(1-\frac1{q-1}\right)^k,\qquad
 \E{J_k(J_k-1)}=(q-1)(q-2)\left(1-\frac2{q-1}\right)^k.
\]
Using $0\le j\log2-\log(1+j)\le(\log2)j(j-1)$ for integers $j\ge0$, these give, uniformly for integers $k=r_q+O(1)$,
\[
 \E{\log(1+J_k)}=\frac{\log2}{q}+O\!\left(\frac{\log q}{q^2}\right),\qquad
 \mathcal B_{k,0}(\nu_{\mathrm{hard}})
 =\frac{k-(2q-1)\log q+2\log2}{2q}+O\!\left(\frac{\log q}{q^2}\right).
\]
The piecewise-linear interpolation between $k=\lfloor r_q\rfloor$ and $k=\lceil r_q\rceil$ in \Cref{def:rsb-bethe-functional} gives
\[
 \mathcal B_{r_q,0}(\nu_{\mathrm{hard}})=\frac{c}{2q}+O\!\left(\frac{\log q}{q^2}\right)>0
\]
for sufficiently large $q$. {By~\eqref{eq:coloring-variational-threshold-def}, this positive witness gives $d_{\mathrm{RSB}}(q)\le r_q$; letting $c\downarrow0$ gives the matching bound.}
}
\end{proof}

% END INCLUDED FILE: chapters/sec5-deduplicated.tex
% BEGIN INCLUDED FILE: chapters/04-zero-freeness.tex
\section{\texorpdfstring{{Zero-Freeness and Counting in the Variational Replica-Symmetric Regime}}{Zero-Freeness and Counting in the Variational Replica-Symmetric Regime}}
\label{sec:variational-rs}

{
{In this section, we prove a general zero-freeness theorem under strict variational conditions (\Cref{thm:variational-zero-free}), and combine it with the threshold characterizations of \Cref{sec:rs-identification-local} to deduce our main zero-free and approximate counting results (\Cref{subsec:main-zero-free-counting-proofs}).}
}

\begingroup
\begin{theorem}[Zero-freeness under strict variational conditions]\label{thm:variational-zero-free}
  Fix an integer $d\ge3$ and a nontrivial parameter $t$ for the hardcore or antiferromagnetic Potts model (\Cref{subsec:preliminaries}). If \Cref{cond:unique-variational-maximizer} and \Cref{cond:strict-KS} hold at $(d,t)$, then $Z_G$ is zero-free on a fixed complex neighborhood of $t$ with probability $1-o_n(1)$ over $G\sim\+G_{n,d}$. The same conclusion holds at the independent-spin endpoints $\alpha=0$ and $\beta=1$.
\end{theorem}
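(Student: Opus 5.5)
The plan is to apply \Cref{thm:reinforced-ssc} to the graph truncation $Y_n(z)=\mathcal Z_G(z)\one{E_n}$ from \Cref{lem:graph-truncation}, with $G\sim\mathrm{CM}_{n,d}$. We take $D=D_0$, the disk from \Cref{lem:graph-truncation} (shrunk if necessary), $I=D_0\cap\=R$ intersected with the nontrivial parameter range, $X_{\ell,n}$ the cycle counts, $\lambda_\ell=(d-1)^\ell/(2\ell)$, and $\delta_\ell$ as in \Cref{def:short-cycle-factors}. Since $\mathcal Z_{C_\ell}(z)=\operatorname{tr}(K_z^\ell)=1+\sum_{a\ge2}\kappa_a(z)^\ell$, we have $\delta_\ell(z)=\sum_{a\ge2}\kappa_a(z)^\ell$. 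Once RSSC gives zero-freeness of $Y_n$ on a compact disk $K\ni t$ conditional on simplicity, zero-freeness of $Y_n$ forces $E_n$. On $E_n$ we have $Y_n=\mathcal Z_G$, and $Z_G=C(z)^n\mathcal Z_G$ with $C$ nonvanishing near $t$, so $Z_G$ is zero-free on $\mathrm{int}(K)$ with probability $1-o_n(1)$ under $\+G_{n,d}$. At the endpoints $\alpha=0$ and $\beta=1$, $Z_G\equiv 1$ and $Z_G\equiv q^n$ respectively, and continuity of the defining polynomial in $z$ does not suffice uniformly in $n$. Instead we use that $\kappa_a(t)=0$ there, so the cluster expansion (or the same RSSC with $E_n$ equal to the whole space, where the bare second moment is already locally Gaussian) handles a small disk. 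I would treat these endpoints by the direct route: show that $|\kappa_a(z)|$ is small near the endpoint and that the unrestricted complex second moment is controlled.

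The hypotheses would be verified in order. (B0): $\E{Y_n(z)}=a_n(z)(1+O(\e^{-\gamma n}))$, and $a_n(z)$ is nowhere zero near $t$ by the Laplace-method first-moment asymptotics in \Cref{lem:local-profile-moments}. Real nonnegativity on $I$ holds because $M(t)\ge0$ there. (B1) is the classical Poisson limit of cycle counts in $\mathrm{CM}_{n,d}$. (B3): by \Cref{cond:strict-KS} and continuity, $(d-1)\max_a|\kappa_a(z)|^2\le\rho<1$ on a small closed disk. Then $\lambda_\ell|\delta_\ell|^2\le C\rho^\ell/\ell$, which is summable uniformly. (B2) for $\mathcal Z_G$ is the standard planted-cycle computation: conditioning on a fixed set of cycles multiplies the first moment by $\prod(1+\delta_\ell)$. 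The truncation error $O(\e^{-\gamma n}a_n)$ absorbs the polynomially bounded factorial moments. (B4) and (B5) follow from the second line of \eqref{eq:holomorphic-truncation} with $w=\bar z$, which gives $\E{|Y_n(z)|^2}=\E{\mathcal Z^{(2)}_{G,B}(z,\bar z)}+O(\e^{-\gamma n}|a_n(z)|^2)$. The central-box second moment is evaluated by \Cref{lem:local-profile-moments} as $(1+o(1))|a_n(z)|^2\exp(\sum_\ell\lambda_\ell|\delta_\ell(z)|^2)$-type bounds, uniformly on the disk, with equality to $\exp(\sum\lambda_\ell\delta_\ell(t)^2)$ on $I$.

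The remaining ingredient is nonvanishing of $1+\delta_\ell$ for $\ell\ge3$. For hardcore, $1+\delta_\ell=1+(-\theta(z))^\ell$, where $|\theta(z)|<1$ near $t$ because $\theta_\alpha<1$. For Potts, $1+(q-1)\kappa(z)^\ell$ satisfies $(q-1)|\kappa|^\ell\le(q-1)|\kappa|^3<1$ for $\ell\ge3$, using $(d-1)\kappa^2<1$ with $d\ge3$ (so $|\kappa|<1/\sqrt2$) and $q\ge2$. The bound is tight when $(q-1)|\kappa|^3$ approaches $1$, so I would verify it via $|\kappa_\beta|\le1/(q-1)$ for $\beta\in[0,1]$, which gives $(q-1)|\kappa|^3\le(q-1)^{-2}<1$ and extends by continuity.

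The main obstacle is (B5): a uniform complex second-moment bound on a fixed disk. Off the real axis, $|\mathcal Z_G(z)|$ is not a positive measure, and the truncation must lose only $\e^{-\gamma n}|a_n(z)|^2$. This is exactly where \Cref{lem:graph-truncation} (via \Cref{lem:exponential-overlap}) and the negative-definite complex Hessian from \Cref{cond:strict-KS} are used. I would take both as given from the paper and only check that the disk $D_0$ can be chosen small enough that all the uniform estimates hold simultaneously.
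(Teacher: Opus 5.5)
At a nontrivial parameter your argument is the paper's proof. You apply RSSC to $Y_n=\mathcal Z_G\mathbf 1_{\{E_n\}}$ on a disk $D_1$ with $\overline{D_1}\subset D_0$. (B2) passes from $\mathcal Z_G$ to $Y_n$ because cycle-count products are polynomially bounded. (B4)--(B5) follow from \eqref{eq:holomorphic-truncation} at $w=\bar z$; the limit there is indeed $\exp(\sum_\ell\lambda_\ell|\delta_\ell(z)|^2)$, since $\delta_\ell(\bar z)=\overline{\delta_\ell(z)}$. Finally, zero-freeness of $Y_n$ forces $E_n$.

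Two small repairs are needed in that part:
\begin{enumerate}
\item $I$ must be open, so take $(0,r)$ when $t=\beta=0$.
\item In the Potts check of $1+\delta_\ell\neq0$, your bound $(q-1)|\kappa|^3\le(q-1)^{-2}$ is not $<1$ when $q=2$. Use instead $(q-1)|\kappa_\beta|\le1$ together with $|\kappa_\beta|<1$ from \Cref{cond:strict-KS}. This gives $(q-1)|\kappa_t|^\ell\le|\kappa_t|^{\ell-1}<1$ for $\ell\ge2$, as in the proof of \Cref{lem:local-profile-moments}(4). Your continuity argument for $(q-1)|\kappa(z)|^3$ is then uniform in $\ell\ge3$.
\end{enumerate}

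The endpoint route you commit to, the ``direct'' second-moment/RSSC route, does not work at $\alpha=0$.
\begin{itemize}
\item There $\boldsymbol\pi_0=(1,0)^\top$ lies on the boundary of the simplex.
\item The Laplace-method inputs (\Cref{lem:local-moments}, \Cref{lem:local-profile-moments}) require $\boldsymbol\pi(t_0)>0$.
\item The constrained Hessian $H=d(A+P)^{-1}-(d-1)A^{-1}$ is of order $1/\alpha$ on $\mathbf 1^\perp$.
\end{itemize}
So the Gaussian picture degenerates as $\alpha\to0$, and these tools give no second-moment bound on a fixed disk around $0$. You would need a separate Poisson-regime analysis.

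The paper avoids this entirely with deterministic zero-freeness on every graph of maximum degree at most $d$ (\Cref{lem:endpoint-zero-free}):
\begin{itemize}
\item For hardcore, Shearer's disk $|\lambda|<(d-1)^{d-1}/d^d$, into which $\lambda(\alpha)=\alpha(1-\alpha)^{d-1}/(1-2\alpha)^d$ maps a neighborhood of $\alpha=0$.
\item For Potts, the Barvinok--Sober\'on disk $|\beta-1|<0.35/d$.
\end{itemize}
This is the cluster-expansion option you mention in passing. It is uniform in $n$ and needs no probabilistic input, so use it at both endpoints. At $\beta=1$ a second-moment route is plausible, since $\boldsymbol\pi=\boldsymbol U_q>0$ and all nonconstant eigenvalues vanish, but it is unnecessary.
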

\endgroup

{
{The phase characterizations in \Cref{sec:rs-identification-local} identify the three replica-symmetry properties away from their respective thresholds. At nontrivial positive parameters, \Cref{lem:model-variational-conditions} supplies the strict conditions on the replica-symmetric side of the threshold. At the coloring endpoint, \Cref{lem:coloring-variational-conditions} supplies them when the interpolated Bethe functional is nonpositive at some real degree $r>d$; this degree slack holds when $d<d_{\mathrm{RSB}}(q)$ by~\eqref{eq:coloring-real-degree-nonpositivity}.} At the independent-spin endpoints $\alpha=0$ and $\beta=1$, zero-freeness holds deterministically on all graphs of maximum degree at most $d$ (\Cref{lem:endpoint-zero-free}).
}

\subsection{Local Moments and Cycle Factors}\label{subsec:common-moments}

Throughout this subsection, $d\ge3$ is a fixed integer, {$t$ is a nontrivial parameter}, and unsubscripted expectations and probabilities are taken under $\mathrm{CM}_{n,d}$. Oriented edge counts in a multigraph $G$ on $[n]$ include both orientations of each edge. Each self-loop contributes two orientations.

We begin by defining the {first-moment} empirical profile and Bethe rate function.

\begin{definition}[Empirical profile and Bethe rate function]\label{def:profiles-bethe-rates}
  Let $G$ be a $d$-regular multigraph on $V(G)=[n]$ and $\sigma\in\Omega^{[n]}$ be a spin assignment with $V_a\defeq\sigma^{-1}(a)$.
  \begin{itemize}
    \item The \emph{profile} $(\*\rho,Q)\in\Delta_\Omega\times\Delta_{\Omega\times\Omega}$ of $(G,\sigma)$ has empirical vertex marginal $\*\rho=\*\rho(\sigma)$ with $\rho_a(\sigma)\defeq|V_a|/n$ and $dnQ_{ab}$ equal to the number of oriented edges from $V_a$ to $V_b$. A pair $(\*\rho,Q)$ {of a vertex marginal $\*\rho$ and an edge marginal $Q$ with vertex marginal $\*\rho$ (\Cref{subsec:preliminaries})} is \emph{feasible} if $n\rho_a\in\=Z$, $dnQ_{ab}\in\=Z$ for $a\ne b$, and $dnQ_{aa}\in2\=Z$.
    \item For real $s$ with {$\*\pi(s)>0$} and $P(s)\ge0$, the \emph{Bethe rate function} on $\Delta_\Omega\times\Delta_{\Omega\times\Omega}$ is defined by
    \begin{equation}\label{eq:planted-profile-gap}
      \Phi_s(\*\rho,Q)
      \defeq (d-1)\DKL\bigl(\*\rho\big\Vert\*\pi(s)\bigr)-\frac d2\DKL\bigl(Q\big\Vert P(s)\bigr).
    \end{equation}
  \end{itemize}
\end{definition}

\begin{proposition}[First-moment Bethe upper bound]\label{prop:profile-moments}
  \leavevmode
  \begin{enumerate}
    \item {For real $s$ with $\*\pi(s)>0$ and $P(s)\ge0$,} uniformly over all feasible profiles $(\*\rho,Q)$ on $\Omega$, the expected contribution to $\mathcal Z_G(s)$ from assignments with profile $(\*\rho,Q)$ is $\exp\bigl\{n\Phi_s(\*\rho,Q)+O_{d,|\Omega|}(\log n)\bigr\}$ (equal to $0$ when $\Phi_s(\*\rho,Q)=-\infty$).
    \item If $\kappa_a(t)\le0$ for all $a\ge2$, then there exists $c>0$ such that for every real $s$ near $t$ with $P(s)\ge0$ and every {edge marginal} $Q$ with vertex marginal $\*\rho$,
    \[
      \Phi_s(\*\rho,Q)\le-c\bigl(\|\*\rho-\*\pi(s)\|_2^2+\|Q-P(s)\|_{\mathrm F}^2\bigr),
    \]
    and consequently $\log\E{\mathcal Z_G(s)}\le O(\log n)$.
  \end{enumerate}
\end{proposition}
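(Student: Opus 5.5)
For part (1) I will do the standard configuration-model count. Fix a feasible profile $(\*\rho,Q)$. The expected contribution to $\mathcal Z_G(s)$ from assignments with this profile is the number of assignments $\sigma$ with $|V_a|=n\rho_a$, times the probability that a uniform pairing produces exactly $dnQ_{ab}$ oriented edges between each pair of classes, times the weight $\prod_a\pi_a^{n\rho_a}\prod_{a,b}M_{ab}^{dnQ_{ab}/2}$. That weight is deterministic given the profile.

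\begin{itemize}
\item The number of assignments is the multinomial $\binom{n}{n\*\rho}=\exp\{nH(\*\rho)+O(\log n)\}$.
\item The pairing probability is a ratio of factorials: choose which half-edges at each class go to each target class, pair within blocks, and divide by $(dn-1)!!$. Stirling then gives
\[
  \exp\Bigl\{dn\bigl(\tfrac12H(Q)-H(\*\rho)\bigr)+O(\log n)\Bigr\}
  \;=\;\exp\Bigl\{-\tfrac{dn}{2}\DKL(Q\Vert\*\rho\otimes\*\rho)+O(\log n)\Bigr\}.
\]
\end{itemize}

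The error depends only on $d$ and $|\Omega|$, since there are $O(|\Omega|^2)$ factorials; zero blocks are handled by $0\log0=0$. The parity condition on diagonal entries only affects the double-factorial count by polynomial factors.

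Multiplying the three pieces and collecting terms uses $\log\pi_a$ weighted by $\rho_a$ and $\log M_{ab}=\log P_{ab}-\log\pi_a-\log\pi_b$ weighted by $\tfrac d2Q_{ab}$. With $Q\allone=\*\rho$, the exponent becomes
\[
  H(\*\rho)-\tfrac d2\DKL(Q\Vert\*\rho\otimes\*\rho)+\sum_a\rho_a\log\pi_a+\tfrac d2\sum_{ab}Q_{ab}\log P_{ab}-d\sum_a\rho_a\log\pi_a .
\]
A direct rearrangement shows this equals $(d-1)\DKL(\*\rho\Vert\*\pi)-\tfrac d2\DKL(Q\Vert P)$. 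If some $Q_{ab}>0$ with $P_{ab}=0$, the weight vanishes, which is consistent with $\Phi_s=-\infty$. This gives (1).

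For part (2), set $\Phi_s(\*\rho,Q)=-\bigl[\tfrac d2\DKL(Q\Vert P)-(d-1)\DKL(\*\rho\Vert\*\pi)\bigr]$. By the chain rule, $\DKL(Q\Vert P)\ge\DKL(\*\rho\Vert\*\pi)$, but this alone only yields a coefficient of $\tfrac d2-(d-1)<0$, so it is insufficient. Instead I decompose $\DKL(Q\Vert P)=\DKL(\*\rho\Vert\*\pi)+\sum_a\rho_a\DKL(Q_{a\cdot}/\rho_a\Vert K_{a\cdot})$. The conditional term must then control $(\tfrac d2-1)\DKL(\*\rho\Vert\*\pi)$ together with a quadratic remainder.

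Near $(\*\pi,P)$ I will work with the Hessian. Write $\*\rho=\*\pi+\diag(\*\pi)x$ with $\*\pi^\trans x=0$, and $Q=P+\diag(\*\pi)Y\diag(\*\pi)$ with $Y$ symmetric and $Y\*\pi=x$. The quadratic form of $-\Phi_s$ is then
\[
  \tfrac d4\,\|Y\|_{\pi\text{-weighted},P^{-1}}^2-\tfrac{d-1}{2}\,\|x\|_\pi^2 .
\]
Minimizing over symmetric $Y$ subject to $Y\*\pi=x$ (the projection in the $P$-weighted inner product) gives a minimum of the form $\tfrac d2\,x^\trans(\ldots)x$, which can be diagonalized in the eigenbasis of $K$. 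For each eigenvalue $\kappa$, the resulting coefficient is
\[
  \tfrac{d}{2(1+\kappa)}-\tfrac{d-1}{2}
\]
on that eigenspace, up to normalization. This coefficient is positive precisely when $\kappa<\tfrac1{d-1}$, and in particular whenever $\kappa\le0$. This is the step I expect to be the main obstacle: the constrained minimization over symmetric $Y$ must be done carefully, and the answer must be checked against the known first-moment Hessian computations (as in Coja-Oghlan et al.). Once it is done, it yields strict local negative definiteness at $s=t$, and by continuity also for $s$ near $t$.

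Globally, I will argue that $\Phi_s<0$ away from $(\*\pi(s),P(s))$ by antiferromagnetic concavity. With $\kappa\le0$ one checks that $\Phi_s$, maximized over $Q$ for fixed $\*\rho$, is strictly concave in $\*\rho$. The inner maximum over $Q$ is an entropy-type Sinkhorn problem, and its value is concave in $\*\rho$ when $M$ has no positive nontrivial eigenvalues. Together with compactness of the profile space and continuity in $s$, concavity upgrades the local quadratic bound to the global bound $\Phi_s\le-c(\|\*\rho-\*\pi\|^2+\|Q-P\|_{\mathrm F}^2)$. If the concavity argument fails, the fallback is to use $\Phi_s(\*\pi(s),P(s))=0$ as the unique maximizer together with compactness.

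Finally, summing the part (1) bound over the $n^{O(|\Omega|^2)}$ feasible profiles, with every $\Phi_s\le0$, gives $\log\E{\mathcal Z_G(s)}\le O(\log n)$.
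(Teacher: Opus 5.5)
Your part (1) is the paper's argument: assignment count times pairing-enumeration ratio times the deterministic weight, followed by Stirling. Your rearrangement of the exponent into $(d-1)\DKL(\*\rho\Vert\*\pi)-\frac d2\DKL(Q\Vert P)$ is correct, and the final summation over polynomially many profiles matches. Your local Hessian computation in part (2) is also right. Up to normalization, the coefficient $\frac{d}{2(1+\kappa)}-\frac{d-1}{2}$ is the eigenvalue $[1-(d-1)\kappa_a]/(1+\kappa_a)$ of $H=d(A+P)^{-1}-(d-1)A^{-1}$ in \Cref{lem:gradient}.

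\textbf{The gap is the global step.} Part (2) claims the quadratic bound for \emph{every} $\*\rho\in\Delta_\Omega$ and every edge marginal $Q$, with one constant $c$ for all real $s$ near $t$. You supply this only through two things. The first is an unproved assertion: ``one checks'' that $\max_Q\Phi_s(\*\rho,Q)$ is strictly concave in $\*\rho$. The second is a fallback that presupposes what must be shown, namely that $(\*\pi(s),P(s))$ is the unique maximizer of $\Phi_s$. Even the local part does not extend to $s$ near $t$ ``by continuity'' when $P(t)$ has vanishing entries (e.g.\ Potts at $t=0$), because the Hessian entries $1/P_{aa}(s)$ blow up. It also breaks down at $\kappa_2(t)=-1$ ($q=2$, $\beta=0$), which the hypothesis allows and where $I+K_t$ is singular.

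\textbf{The missing idea.} The chain rule you tried discards half the structure: $Q$ is symmetric, so \emph{both} of its marginals equal $\*\rho$. Writing $P_{ij}=\pi_i\pi_jM_{ij}$,
\[
 \DKL(Q\Vert P)=2\DKL(\*\rho\Vert\*\pi)+\sum_{i,j}Q_{ij}\log\frac{Q_{ij}}{\rho_i\rho_jM_{ij}}\ \ge\ 2\DKL(\*\rho\Vert\*\pi)-\log\bigl(\*\rho^\trans M\*\rho\bigr).
\]
Moreover $\*\rho^\trans M\*\rho=1+(\*\rho-\*\pi)^\trans M(\*\rho-\*\pi)\le1$, because $M\*\pi=\allone$ and the nonconstant eigenvalues of $K_s$ are nonpositive.
\begin{itemize}
\item This gives $\DKL(Q\Vert P)\ge2\DKL(\*\rho\Vert\*\pi)$. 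In your decomposition, it says the conditional term is at least $\DKL(\*\rho\Vert\*\pi)$.
\item It yields both $\Phi_s\le-\DKL(\*\rho\Vert\*\pi)$ and $\Phi_s\le-\frac12\DKL(Q\Vert P)$.
\item Averaging these and applying Pinsker gives the bound with $c=1/8$. It holds globally, uniformly in $s$, and covers zero entries of $P$ and $\kappa=-1$.
\end{itemize}
This is the paper's proof.

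Your concavity claim is in fact true. The entropy minimizer has the form $Q^*(\*\rho)=DMD$ with $D$ positive diagonal. By Sylvester's law of inertia, $\diag(\*\rho)^{-1}Q^*$ then has nonconstant eigenvalues $\le0$ at every interior $\*\rho$. Hence $(d-1)\diag(\*\rho)^{-1}-d(\diag(\*\rho)+Q^*)^{-1}$ is negative definite on $\allone^\perp$. Completing that route would still require handling the simplex boundary, where the scaling form degenerates, and the $s$-uniformity. It is much longer than the two-line inequality above.
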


{For the second moment, we consider pairs of spin configurations ${(\sigma,\sigma')}\in\Omega^{[n]}\times\Omega^{[n]}$, viewed as single configurations on the product alphabet $\Omega\times\Omega$.

\begin{definition}[{Empirical overlap, local second moment, and paired Bethe rate function}]\label{def:local-second-moment}
  Let $G$ be a $d$-regular multigraph on $V(G)=[n]$.
  \begin{itemize}
    \item We use the empirical overlap $\*\rho(\sigma,\sigma')$ and overlap boxes from \Cref{subsec:local-variational-geometric}, and the local second-moment sum $\mathcal Z_{G,B}^{(2)}(z,w)$ defined in~\eqref{eq:local-second-moment-sum}.
    \item For real $s,u$ with {$\*\pi(s),\*\pi(u)>0$} and $P(s),P(u)\ge0$, the \emph{paired Bethe rate function} on $\Delta_{\Omega\times\Omega}\times\Delta_{(\Omega\times\Omega)\times(\Omega\times\Omega)}$ is
    \[
      \Phi_{s,u}(\*\rho,Q)
      \defeq (d-1)\DKL\bigl(\*\rho\big\Vert\*\pi(s)\otimes\*\pi(u)\bigr)-\frac d2\DKL\bigl(Q\big\Vert P(s)\otimes P(u)\bigr).
    \]
  \end{itemize}
\end{definition}

Because the paired kernel $K_t\otimes K_t$ on $\Omega\times\Omega$ can have positive product eigenvalues $\kappa_a(t)\kappa_b(t)>0$, the paired rate function $\Phi_{t,t}$ can become positive away from $\*\pi(t)\otimes\*\pi(t)$. Restricting the second moment to a small overlap box $B$ around $\*\pi(t)\otimes\*\pi(t)$ ensures a negative-definite local Hessian under \Cref{cond:strict-KS}. Moreover, since $(K_z^\trans)_{ab}=\pi_a(z)M_{ab}(z)$ and $K_z$ has trivial eigenvalue $1$ and nonconstant eigenvalues $\kappa_2(z),\ldots,\kappa_{|\Omega|}(z)$, the short-cycle factors $\delta_\ell(z)$ from \Cref{def:short-cycle-factors} satisfy the trace identity
\begin{equation}\label{eq:cycle-factor-trace}
  \delta_\ell(z)=\mathcal Z_{C_\ell}(z)-1=\operatorname{tr}K_z^\ell-1=\sum_{a\ge2}\kappa_a(z)^\ell,
\end{equation}
so \Cref{cond:strict-KS} directly controls $\sum_{\ell\ge1}\lambda_\ell|\delta_\ell(z)|^2$ near $t$.
The following lemma supplies the resulting local moment and short-cycle inputs for \Cref{thm:reinforced-ssc}; {the proofs of \Cref{prop:profile-moments} and \Cref{lem:local-profile-moments}}, adapted from~\cite{bollobas1980probabilistic,robinson1992almost,janson1995random,coja2018charting}, are given in \Cref{app:appendix-moments}.

\begin{lemma}[Local moments and short-cycle estimates]\label{lem:local-profile-moments}
  Assume that {$t$ is nontrivial}, $\kappa_a(t)\le0$ for all $a\ge2$, and \Cref{cond:strict-KS} holds at $t$. For every sufficiently small open overlap box $B$ around $\*\pi(t)\otimes\*\pi(t)$, there exist $c>0$ and an open complex disk ${D_0}$ about $t$ such that:
  \begin{enumerate}
  \item $a_n(z)$ and $\E{\mathcal Z_{G,B}^{(2)}(z,w)}/[a_n(z)a_n(w)]$ converge locally uniformly on ${D_0}$ and ${D_0\times D_0}$ to holomorphic functions, with the limit {of $a_n$} nowhere zero on ${D_0}$;
  \item For all real $s,u\in {D_0}$ with $P(s),P(u)\ge0$ and every {edge marginal} $Q$ with vertex marginal $\*\rho\in B$,
  \[
    \Phi_{s,u}(\*\rho,Q)\le-c\bigl(\|\*\rho-\*\pi(s)\otimes\*\pi(u)\|_2^2
                      +\|Q-P(s)\otimes P(u)\|_{\mathrm F}^2\bigr);
  \]
  \item For every fixed $m\ge1$, $(X_{1,n},\ldots,X_{m,n})\xrightarrow{d}(Z_1,\ldots,Z_m)$ with independent $Z_\ell\sim\mathrm{Poisson}(\lambda_\ell)$ {(\Cref{def:short-cycle-factors})}, and for every real $s\in {D_0}$ with $P(s)\ge0$ and fixed $j_1,\ldots,j_m\in\=Z_{\ge0}$,
  \begin{equation}\label{eq:common-cycle-moments}
  \begin{aligned}
    \frac{\E{\mathcal Z_{G,B}^{(2)}(s,s)}}{a_n(s)^2}
      &\longrightarrow
      \exp\!\left(\sum_{\ell\ge1}\lambda_\ell\delta_\ell(s)^2\right),\\
    \frac{\E{\mathcal Z_G(s)
                  \prod_{\ell=1}^m(X_{\ell,n})_{j_\ell}}}{a_n(s)}
      &\longrightarrow
      \prod_{\ell=1}^m
          \bigl[\lambda_\ell\bigl(1+\delta_\ell(s)\bigr)\bigr]^{j_\ell};
  \end{aligned}
  \end{equation}
  \item $\sum_{\ell\ge1}\lambda_\ell\sup_{z\in {D_0}}|\delta_\ell(z)|^2<\infty$, and $1+\delta_\ell(z)\ne0$ for all $\ell\ge3$ and $z\in {D_0}$.
  \end{enumerate}
\end{lemma}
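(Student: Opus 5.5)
The plan is to run the classical configuration-model moment calculus of Bollob\'as and Robinson--Wormald on the product alphabet $\Omega\times\Omega$. Throughout, holomorphy in $(z,w)$ is kept track of by writing every quantity as a finite sum of terms of the form (polynomial in $n$) $\times\exp\{n\cdot\text{holomorphic}\}$.

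\textbf{Part~(1).} Expand $a_n(z)$ as a sum over feasible profiles $(\*\rho,Q)$. Each term is a product of multinomial counts, the pairing count, and the factors $\prod_a\pi_a(z)^{n\rho_a}\prod M_{ab}(z)^{dnQ_{ab}/2}$. For real $s$ this is exactly \Cref{prop:profile-moments}(1). By \Cref{prop:profile-moments}(2), $\Phi_s$ is strictly concave with maximum $0$ at $(\*\pi(s),P(s))$, using $\kappa_a(t)\le0$. Then:
\begin{itemize}
\item Restrict to a small neighbourhood of $(\*\pi(t),P(t))$. The complement contributes $\e^{-\Omega(n)}$ uniformly on a small disk $D_0$, since $|M_{ab}(z)|\le M_{ab}(\Re z)+O(|z-t|)$ and the exponential loss survives a small complex perturbation.
\item Inside the neighbourhood, apply a complex Laplace method. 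The exponent is holomorphic in $z$, has a nondegenerate critical point $(\*\pi(z),P(z))$, and has critical value $0$, because the normalization $M\*\pi=\allone$ holds identically.
\item Deform the lattice sum to a Gaussian integral. This gives $a_n(z)\to$ an explicit determinant factor that is nonzero near $t$, with the convergence locally uniform.
\end{itemize}

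\textbf{Part~(2) and the second-moment statement in part~(1).} For the second moment the same scheme runs with pairs $\sigma,\sigma'$ restricted to overlap $\*\rho\in B$, using $M(z)\otimes M(w)$. The key input is part~(2). At $(s,u)=(t,t)$:
\begin{itemize}
\item The Hessian of $\Phi_{t,t}$ at $\*\pi\otimes\*\pi$ is the block form of the first-moment Hessian. On nonconstant directions its eigenvalues are governed by $(d-1)\kappa_a\kappa_b$ versus $1$, as in the standard Kesten--Stigum computation of \cite{coja2018charting}.
\item \Cref{cond:strict-KS} gives $(d-1)|\kappa_a\kappa_b|<1$ for $a,b\ge2$. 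The mixed directions $\kappa_a\cdot1$ are handled by the first-moment negativity, which needs $\kappa_a\le0$.
\end{itemize}
Together these make the Hessian negative definite. Strict negative definiteness persists for $(s,u)$ near $(t,t)$ by continuity. Shrinking $B$ then makes the quadratic bound hold on all of $B$. With part~(2) in hand, the complex Laplace method on $D_0\times D_0$ yields the locally uniform holomorphic limit of $\E{\mathcal Z^{(2)}_{G,B}}/(a_na_n)$.

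\textbf{Part~(3).} Poisson convergence of cycle counts is Bollob\'as's classical result. For the planted cycle moments, fix the cycle vertices and reveal the pairings along the $j_\ell$ chosen cycles. Each $\ell$-cycle then contributes weight $\operatorname{tr}(K_s^\ell)=1+\delta_\ell(s)$ times $\lambda_\ell$, with a $1+o(1)$ correction for the residual configuration model; this is the standard computation of \cite{janson1995random}. The second-moment identity $\E{\mathcal Z^{(2)}_{G,B}}/a_n^2\to\exp(\sum_\ell\lambda_\ell\delta_\ell^2)$ then comes from comparing the two Gaussian determinants. The ratio factorizes over the eigenvalues $\kappa_a\kappa_b$ of $K\otimes K$ as
\[
\prod_{a,b\ge2}\bigl(1-(d-1)\kappa_a\kappa_b\bigr)^{-1/2}\,\e^{-\cdots}.
\]
Via $-\tfrac12\log(1-x)=\sum_\ell x^\ell/(2\ell)$ and \eqref{eq:cycle-factor-trace}, this equals $\exp(\sum_\ell\lambda_\ell\delta_\ell^2)$, after the $\ell=1,2$ corrections from loops and double edges are accounted for.

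\textbf{Part~(4).} By \eqref{eq:cycle-factor-trace}, $|\delta_\ell(z)|\le(|\Omega|-1)\max_a|\kappa_a(z)|^\ell$. Continuity of the eigenvalues and \Cref{cond:strict-KS} give $(d-1)\max_a|\kappa_a(z)|^2\le1-\epsilon$ on a small $D_0$, so the series converges geometrically. For nonvanishing of $1+\delta_\ell$, use the explicit eigenvalues:
\begin{itemize}
\item Hardcore: $1+\delta_\ell=1+(-\theta)^\ell$ with $|\theta|<1$ near $t$.
\item Potts: $1+(q-1)\kappa^\ell$ with $(q-1)|\kappa|^\ell\le(q-1)|\kappa|^3<1$ for $\ell\ge3$ near nontrivial $t$. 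This uses $(d-1)\kappa^2<1$, so $|\kappa|<1/\sqrt2$, together with the bound $|\kappa|\le1/(q-1)$.
\end{itemize}

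\textbf{Main obstacle.} I expect the hardest step to be the uniform complex Laplace asymptotics. The lattice-sum-to-integral comparison must be done uniformly in complex $(z,w)$ while the real quadratic decay controls the oscillation, including boundary profiles where some $Q_{ab}=0$, as in the hardcore exclusion $P_{11}=0$.
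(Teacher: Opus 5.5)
Your handling of the cycle counts and of part~(4) is fine, and the plan works for the hardcore model and for Potts at $\beta\in(0,1)$. However, the core step of part~(1) breaks down at the coloring endpoint $t=\beta=0$. That endpoint counts as a nontrivial parameter in this lemma, and it is exactly where the lemma is needed for proper colorings.

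You run a complex Laplace method jointly in $(\*\rho,Q)$ and assert that the exponent is holomorphic with a nondegenerate critical point at $(\*\pi(z),P(z))$. For Potts, $P_{aa}(z)=z/(q(q-1+z))$ vanishes at $z=0$ but not identically. So for $z$ near $0$:
\begin{itemize}
\item the critical value of the coordinate $Q_{aa}$ lies within distance $O(|z|)$ of the branch point of $Q_{aa}\log Q_{aa}$;
\item the curvature in that direction is of order $1/|P_{aa}(z)|$;
\item the expected number $\frac{dn}{2}P_{aa}(z)$ of monochromatic edges of color $a$ is $O(1)$ once $|z|=O(1/n)$. These counts are then Poisson-like rather than Gaussian, and Stirling for $(dnQ_{aa})!!$ fails.
\end{itemize}
Hence the lattice-sum-to-Gaussian comparison cannot be made uniform on any disk about $0$. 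You therefore do not obtain the locally uniform limits of $a_n(z)$ and of $\E{\mathcal Z^{(2)}_{G,B}(z,w)}/[a_n(z)a_n(w)]$ on $D_0$ and $D_0\times D_0$. These limits are what feed (B0), (B5) and the analytic continuation in \Cref{lem:graph-truncation}. Restricting to the face $\{Q_{aa}=0\}$, which is how the hardcore exclusion is handled (there $P_{11}\equiv0$), is not available here, because off-face profiles carry weight for every $z\ne0$.

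The paper avoids any Laplace expansion in the edge variables.
\begin{itemize}
\item At fixed vertex density, the sum over all edge profiles is computed exactly as the coefficient $[\*X^{S\*\rho}](\*X^\trans P\*X)^{S/2}$ (\Cref{lem:polynomial-representation}), evaluated by Cauchy's formula on tilted radii $\e^{s_i}$.
\item The resulting angular integral of $\chi_{Q^*}(\*\vartheta)^{S/2}$ is uniform in the parameters, because $\chi_{Q^*}$ is a trigonometric polynomial depending holomorphically on them. Away from $\*\vartheta=\*0$ one only needs $|\chi_{P(t)}|<1$. This follows because the support graph of $P(t)$ is connected and non-bipartite, a consequence of the local spectral condition $-1<\kappa<\frac1{d-1}$. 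At $\beta=0$ this graph is $K_q$ with $q\ge3$, and $K_q\times K_q$ for the paired model.
\item Only the $\*\rho$-sum then needs the complex discrete Laplace lemma, and its center $\*U_q$ (or $\*U_q\otimes\*U_q$) is far from the boundary.
\end{itemize}

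The same reduction repairs a smaller gap in your part~(2). Negative definiteness of the joint Hessian plus continuity controls only $(\*\rho,Q)$ near the center, whereas (2) is claimed for every $Q$ with vertex marginal in $B$. The decomposition $\Phi_\zeta(\*\rho,Q)=f(\*\rho,\zeta)-\frac d2\DKL(Q\Vert Q^*)$, with $Q^*$ the KL projection at fixed $\*\rho$, handles all $Q$ via Pinsker. It reduces everything to the $\*\rho$-Hessian of $f$, whose eigenvalues $[1-(d-1)\kappa]/(1+\kappa)$ are positive for all paired eigenvalues $\kappa\in\{\kappa_a,\kappa_b,\kappa_a\kappa_b\}$.

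Finally, drop the factor $\e^{-\cdots}$ and the ``$\ell=1,2$ corrections'' in part~(3). Under $\mathrm{CM}_{n,d}$ the determinant ratio is exactly $\prod_{a,b\ge2}(1-(d-1)\kappa_a\kappa_b)^{-1/2}=\exp\bigl(\sum_{\ell\ge1}\lambda_\ell\delta_\ell(s)^2\bigr)$, with loops and double edges included. The $\ell=1,2$ terms are removed only later, by conditioning on simplicity inside \Cref{thm:reinforced-ssc}.
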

}

\subsection{Exponential Concentration of the {Planted} Overlap}
\label{subsec:variational-overlap}

While \Cref{lem:local-profile-moments} controls replica pairs with overlap inside $B$, unique maximization of the Bethe functional gives exponential concentration under the planted distribution of \Cref{def:planted-overlap}. Throughout this subsection, $d\ge3$ is fixed and $t=\alpha\in(0,1/2)$ for hardcore or $t=\beta\in(0,1)$ for Potts. We use the notation $a_n(t)$, $\mathbb Q_n$, and $\ell_s$ introduced in \Cref{subsec:local-variational-geometric}.
\begin{lemma}[Exponential overlap bound]\label{lem:exponential-overlap}
Under \Cref{cond:unique-variational-maximizer} at $(d,t)$, for every $\eta>0$ there exists $c>0$ such that, for all sufficiently large $n$,
\[
 \E[\mathbb Q_n]{\mu_G^{\otimes2}
 \bigl(\|\*\rho({\sigma,\sigma'})-\*\pi\otimes\*\pi\|_\infty>\eta\bigr)}
 \le\e^{-cn}.
\]
\end{lemma}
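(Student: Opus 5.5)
We will prove \Cref{lem:exponential-overlap} by random pinning, in the style of \cite[Section~14]{coja2020inference}. The idea is that an atypical replica overlap forces the Gibbs marginals of a randomly revealed set to be far from product. That in turn makes the pinned log-likelihood $\ell_s$ linearly large, and \Cref{lem:revealed-entropy-bound} together with \Cref{cond:unique-variational-maximizer} forbids this except with exponentially small probability.

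\textbf{Step 1 (upper bound on $\E[\mathbb Q_n]{\ell_s}$).} Fix $\eta>0$. Under \Cref{cond:unique-variational-maximizer} we will show
\[
\limsup_n\tfrac1n\E[\mathbb Q_n]{\ell_s}\le o(s)\qquad(s\downarrow0).
\]
By \eqref{eq:planted-free-energy-bound}, the right-hand side is $\max_\nu\{\mathcal B_{d,t}(\nu)+ds\,\E[\nu]{\sum_a\pi_a((M\*\mu)_a-1)^2}\}$. Since $\mathcal P_{\*\pi}$ is compact and $\mathcal B_{d,t}$ is continuous with unique maximizer $\delta_{\*\pi}$ and value $0$, any maximizer $\nu_s$ tends to $\delta_{\*\pi}$ weakly as $s\to0$. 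The penalty term then also tends to $0$, so the maximum is $o(s)$. We therefore pick $s$ small, depending on $\eta$.

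\textbf{Step 2 (lower bound on $\ell_s$ on the overlap event).} Conditional on $G$, the overlap being $\eta$-far from $\*\pi\otimes\*\pi$ with non-negligible $\mu_G^{\otimes2}$-probability means either the empirical marginal $\*\rho(\sigma)$ is far from $\*\pi$, or the Gibbs covariance matrix $\Sigma_{\mu_G}$ has $\|\Sigma\|_{\mathrm F}^2\gtrsim \eta^2n^2$ (via \Cref{lem:rs-overlap-characterization} and its quantitative proof).

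The first case is exponentially unlikely under $\mathbb Q_n$. Indeed, the first-moment bound \Cref{prop:profile-moments}(2) with $\kappa_a\le0$ gives $\mathbb Q_n(\|\*\rho(\sigma)-\*\pi\|>\eta/3)\le \e^{-cn}/a_n$, and $|\log a_n|=O(\log n)$.

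In the second case we use the standard pinning argument. Revealing a random $s$-fraction $U$ reduces average covariance, and by the chain rule,
\[
\E_U\bigl[\log\mu_G(\sigma_U)-\sum_{v\in U}\log\mu_{G,v}(\sigma_v)\bigr]
\]
is the total correlation of the revealed spins. This quantity is at least $c(\eta)s^2n$ when $\|\Sigma\|_{\mathrm F}^2\ge\eta^2n^2$: pairs $u,v\in U$ occur with probability $s^2$, and Pinsker's inequality lower-bounds the pairwise mutual information. The remaining difference, between $\sum_{v\in U}\log\mu_{G,v}(\sigma_v)$ and $\sum_{v\in U}\log\pi_{\sigma_v}$, is nonnegative in $\mathbb Q_n$-expectation conditional on $G$, being a sum of KL divergences. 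Hence $\E[\mathbb Q_n]{\ell_s\mid G}\ge c(\eta)s^2n$ on graphs with large Gibbs covariance.

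\textbf{Step 3 (combine; main obstacle).} The obstacle is that the Step 2 gain is of order $s^2$, while Step 1 only gives $o(s)$ rather than $o(s^2)$. To fix this, we will sharpen Step 1 using \Cref{cond:unique-variational-maximizer} quantitatively. Near $\delta_{\*\pi}$, the fourth-order structure gives $\mathcal B(\nu)\le -c\,\E\|\*\mu-\*\pi\|^4$, by compactness away from $\delta_{\*\pi}$ plus the expansion \eqref{eq:unified-fourth-order}, with strict negativity as in \Cref{cond:strict-KS}. The penalty is $\asymp s\,\E\|\*\mu-\*\pi\|^2$. Optimizing over $x=\E\|\*\mu-\*\pi\|^2$, with $-cx^2+dsx\le Cs^2$, gives a maximum of order $C s^2$. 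This is the delicate point: it must be compared against $c(\eta)s^2$. If the constants do not close directly, we instead iterate pinning in rounds as in \cite{coja2020inference}: choose the reveal density $s$ at which the increment of $\E\ell_s$ exceeds its average, so that the Step 2 gain is measured against the derivative in $s$. This produces a contradiction unless the $\mathbb Q_n$-measure of bad graphs is small.

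Finally, to upgrade from "small" to "exponentially small" we use the concentration statement in \Cref{lem:revealed-entropy-bound}, $\Pr[\mathbb Q_n]{\ell_s-\E\ell_s>bn}\le \e^{-cn}$. If bad graphs had $\mathbb Q_n$-mass at least $\e^{-c'n}$ with $c'<c$, we could condition on them to obtain a tilted law for which the same argument runs. Concretely, the lower tail of $\ell_s$ on bad graphs, which is controlled by Azuma over the random set $U$ and the spins, would force $\ell_s-\E\ell_s>bn$ with probability exceeding $\e^{-cn}$, a contradiction. This yields $\E[\mathbb Q_n]{\mu_G^{\otimes2}(\cdot)}\le \e^{-cn}$.
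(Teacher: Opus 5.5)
There is a genuine gap, and the obstacle you single out in Step~3 is not the real one. The $s^2$ order in Step~2 comes from lossy counting. The chain rule along a uniformly random ordering of $U$ gives $\DKL\bigl(\mu_{G,U}\big\Vert\bigotimes_{v\in U}\mu_{G,v}\bigr)\ge\frac{2}{|U|}\sum_{\{u,v\}\subseteq U}I(\sigma_u;\sigma_v)$, which is already of order $s\eta^2n$ on your high-covariance graphs. Your proposed repair, a quartic coercivity $\mathcal B(\nu)\le-c\,(\E\|\*\mu-\*\pi\|^2)^2$, is not available: it leans on \Cref{cond:strict-KS}, which is not a hypothesis of the lemma, and \eqref{eq:unified-fourth-order} only covers two-point laws. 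Even granting it, the resulting $Cs^2$ has no reason to beat $c(\eta)s^2$.

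Even with the $s$-order fixed, the argument fails for two reasons.

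\emph{(a) No exponential bound on the averaged quantity.} A graph-by-graph covariance argument cannot bound $\E[\mathbb Q_n]{\mu_G^{\otimes2}(\|\*\rho(\sigma,\sigma')-\*\pi\otimes\*\pi\|_\infty>\eta)}$ by $\e^{-cn}$. A bound $\|\Sigma\|_{\mathrm F}^2\gtrsim\eta^2n^2$ needs the far-overlap event to carry non-negligible $\mu_G^{\otimes2}$-mass. Graphs on which that event has Gibbs mass, say, $n^{-1}$ or $\e^{-\sqrt n}$ are invisible to Step~2, yet they contribute far more than $\e^{-cn}$. Your last paragraph bounds only the $\mathbb Q_n$-mass of high-covariance graphs and then equates it with the averaged Gibbs probability. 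It also applies Azuma to spins drawn from the correlated measure $\mu_G$, which is unjustified precisely on those graphs.

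\emph{(b) The dichotomy in Step~2 is incomplete.} For hardcore the single-site marginals $\mu_{G,v}$ need not equal $\*\pi$. A near-product $\mu_G$ with heterogeneous marginals has $\*\rho(\sigma)\approx\*\pi$ and small covariance, yet its overlap concentrates near $\frac1n\sum_v\mu_{G,v}\otimes\mu_{G,v}\ne\*\pi\otimes\*\pi$. The term $s\sum_v\DKL(\mu_{G,v}\Vert\*\pi)$, which you discard as merely nonnegative, is what would detect this case.

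The missing idea is a change of measure on the revealed spins, taken under the joint law of $(G,\sigma,\sigma',U)$. This needs no per-graph dichotomy and gives exponential smallness directly.
\begin{itemize}
\item Conditional on $(G,U)$, the revealed replica pair $(\sigma_U,\sigma'_U)$ has density $\exp\{\ell_s(\sigma)+\ell_s(\sigma')\}$ with respect to $(\*\pi\otimes\*\pi)^{\otimes U}$.
\item Under this product reference, Bernstein's inequality shows that $\max_{a,c}\bigl|\sum_{v\in U}(\one{\sigma_v=a,\,\sigma'_v=c}-\pi_a\pi_c)\bigr|$ reaches $s\eta n/2$ with probability at most $2|\Omega|^2\e^{-bn}$. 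Here $b=s\eta^2/12$ is \emph{linear} in $s$.
\item Conversely, if the full overlap is $\eta$-far, Bernstein over $U$ shows the revealed overlap reaches $s\eta n/2$ except with probability $2\e^{-bn}$.
\item Since the bound in \eqref{eq:planted-free-energy-bound} is $o(s)$, you may take $s$ small enough that $\E[\mathbb Q_n]{\ell_s}\le bn/8$. The upper-tail estimate of \Cref{lem:revealed-entropy-bound} then gives $\ell_s(\sigma)+\ell_s(\sigma')\le bn/2$ outside probability $2\e^{-c_0n}$. So the density is at most $\e^{bn/2}$ on the relevant event.
\end{itemize}
Altogether the far-overlap probability is at most $2\e^{-bn}+2|\Omega|^2\e^{-bn/2}+2\e^{-c_0n}$.
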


\begin{proof}[Proof of~\Cref{lem:exponential-overlap}]
All probabilities and expectations in this proof average over the graph marginal of $\mathbb Q_n$, independent Gibbs replicas ${(\sigma,\sigma')}\sim\mu_G^{\otimes2}$ {(so $(G,\sigma)$ and $(G,\sigma')$ each have marginal law $\mathbb Q_n$)}, and the independent reveal set $U\sim\operatorname{Bernoulli}(s)^{\otimes V(G)}$. Under \Cref{cond:unique-variational-maximizer}, the upper bound in~\eqref{eq:planted-free-energy-bound} is $o(s)$ as $s\downarrow0$: by compactness of $\mathcal P_{\*\pi}$, continuity of the functional in~\eqref{eq:planted-free-energy-bound}, and unique maximization at $\delta_{\*\pi}$, any maximizer $\nu_s$ converges weakly to $\delta_{\*\pi}$, where $(M\*\pi)_a=1$, so
\[
  \lim_{s\downarrow0}\frac1s\limsup_{n\to\infty}\frac1n\E{\ell_s}=0.
\]
Fix $0<\eta\le1$, set $b\defeq s\eta^2/12$, and choose $s>0$ small enough that $\E{\ell_s}\le bn/8$ for all large $n$. By \Cref{lem:revealed-entropy-bound}, there exists $c_0>0$ such that
\[
  \Pr{\ell_s>bn/4}\le\e^{-c_0n}.
\]

\smallskip
Conditional on $(G,U)$, the revealed spin pairs ${(\sigma_U,\sigma'_U)}$ have density $\exp\{\ell_s(\sigma)+\ell_s({\sigma'})\}$ relative to the product measure $(\*\pi\otimes\*\pi)^{\otimes U}$, and its log-density $\ell_s(\sigma)+\ell_s({\sigma'})$ exceeds $bn/2$ with unconditional probability at most $2\e^{-c_0n}$. Define the revealed-overlap deviation event
\[
 H\defeq\left\{\max_{a,c\in\Omega}\left|\sum_{v\in U}
 \bigl(\one{\sigma_v=a,\,{\sigma'_v}=c}-\pi_a\pi_c\bigr)\right|\ge \frac{s\eta n}{2}\right\}.
\]

\smallskip
Conditional on $(G,\sigma,{\sigma'})$, the sum in $H$ for coordinate $(a,c)$ is a sum over $v\in V(G)$ of independent bounded signed random variables $\one{v\in U}\bigl(\one{\sigma_v=a,\,{\sigma'_v}=c}-\pi_a\pi_c\bigr)\in[-1,1]$ with total mean $sn(\rho_{ac}(\sigma,{\sigma'})-\pi_a\pi_c)$. Bernstein's inequality under $(G,\sigma,{\sigma'})$ and under $(\*\pi\otimes\*\pi)^{\otimes U}$ therefore gives
\[
  \Pr{H^c,\,\|\*\rho(\sigma,{\sigma'})-\*\pi\otimes\*\pi\|_\infty>\eta}\le2\e^{-bn},
  \qquad
  \E[U]{(\*\pi\otimes\*\pi)^{\otimes U}(H)}\le2|\Omega|^2\e^{-bn}.
\]
On $H\cap\{\ell_s(\sigma)+\ell_s({\sigma'})\le bn/2\}$, the conditional density relative to $(\*\pi\otimes\*\pi)^{\otimes U}$ is at most $\e^{bn/2}$. Decomposing $\{\|\*\rho(\sigma,{\sigma'})-\*\pi\otimes\*\pi\|_\infty>\eta\}$ into $H^c\cap\{\|\*\rho(\sigma,{\sigma'})-\*\pi\otimes\*\pi\|_\infty>\eta\}$, $H\cap\{\ell_s(\sigma)+\ell_s({\sigma'})\le bn/2\}$, and $\{\ell_s(\sigma)+\ell_s({\sigma'})>bn/2\}$ yields
\[
  \Pr{\|\*\rho(\sigma,{\sigma'})-\*\pi\otimes\*\pi\|_\infty>\eta}
  \le 2\e^{-bn}+2|\Omega|^2\e^{-bn/2}+2\e^{-c_0n},
\]
completing the proof.
\end{proof}

\subsection{Proof of \Cref{thm:variational-zero-free}}\label{subsec:variational-zero-free-proof}

Unless another law is specified, expectations and probabilities in this subsection are taken under $\mathrm{CM}=\mathrm{CM}_{n,d}$ for a fixed integer $d\ge3$.
We apply the graph truncation from \Cref{lem:graph-truncation} together with the local moment and cycle estimates of \Cref{lem:local-profile-moments}.

{
\begin{lemma}[Deterministic zero-freeness near trivial endpoints~{\cite{shearer85,scott2005repulsive,barvinok2017computing}}]\label{lem:endpoint-zero-free}
  For every simple graph $G$ of maximum degree at most $d$, the hardcore partition function $Z_{G,\mathrm{fug}}(\lambda)$ is zero-free in fugacity on $|\lambda|<(d-1)^{d-1}/d^d$, and the Potts partition function $Z_{G,q}(\beta)$ is zero-free on $|\beta-1|<0.35/d$ for every $q\ge2$.
\end{lemma}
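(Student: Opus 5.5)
The statement is \Cref{lem:endpoint-zero-free}, which is cited from~\cite{shearer85,scott2005repulsive,barvinok2017computing}. The plan is to verify that each of the two assertions reduces to a known deterministic criterion, and to confirm that the constants match the cited ones.

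\emph{Hardcore part.} I would invoke Shearer's bound in the form of Scott--Sokal~\cite{scott2005repulsive}. For a graph of maximum degree at most $d$, the independence polynomial has no zeros in the polydisk $|\lambda_v|<\lambda_s(d)\defeq(d-1)^{d-1}/d^d$; this holds uniformly over all such graphs. The cleanest self-contained route is the Dobrushin/Koteck\'y--Preiss-type inductive argument. Fix a vertex $v$ and set $\lambda_v=\lambda$. By induction on $|V|$, one shows
\[
  \Bigl|\frac{Z_{G-v}}{Z_G}\Bigr|\le\frac{1}{1-x}
  \quad\text{whenever } |\lambda|\le x(1-x)^{d-1},
\]
with the choice $x=1/d$. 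The recursion to use is
\[
  Z_G=Z_{G-v}+\lambda Z_{G-N[v]}.
\]
Expanding $Z_{G-N[v]}/Z_{G-v}$ as a telescoping product over at most $d$ neighbours, where each neighbour has at most $d-1$ further neighbours after $v$ is removed, gives the ratio bound. Hence $Z_G\ne0$. Maximizing $x(1-x)^{d-1}$ at $x=1/d$ gives exactly $(d-1)^{d-1}/d^d$.

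\emph{Potts part.} I would use the Fortuin--Kasteleyn/random-cluster expansion. Writing $\beta=1+y$, one has
\[
  Z_{G,q}(\beta)=\sum_{A\subseteq E}q^{k(A)}y^{|A|}
  =q^{n}\sum_{A}\prod_{\text{comp }C}\bigl(q^{1-|V(C)|}y^{|E(C)|}\bigr).
\]
This is a polymer gas whose polymers are connected edge sets with at least one edge, and two polymers are incompatible if they share a vertex. I would then apply the Koteck\'y--Preiss criterion with weight $|w(S)|\le|y|^{|S|}$ (using $q\ge1$) and test function $a(S)=c|S|$. The number of connected edge sets of size $k$ containing a given vertex is at most $(e d)^k$, which is the standard tree-counting bound. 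The required inequality then becomes a geometric series condition of the form $\sum_k k(ed|y|e^{c})^k\le c$. This holds for $|y|\le c'/d$ with an explicit absolute constant $c'$, which gives zero-freeness of $Z_{G,q}$ near $\beta=1$. For the stated constant $0.35$, I would cite Barvinok's sharper argument~\cite{barvinok2017computing} (see also the Fisher-zero bound in \cite{barvinok2016combinatorics}) rather than re-optimizing the cluster bound.

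\emph{Main obstacle.} The main obstacle is only bookkeeping of constants. The Potts constant $0.35/d$ is specific to the cited analysis, and the crude Koteck\'y--Preiss route gives a smaller constant. For the use in \Cref{thm:variational-zero-free}, however, any neighborhood of fixed radius $c/d$ suffices. The hardcore statement in fugacity must also be transferred to the density parameter $\alpha$ near $0$. Since $\lambda(\alpha)$ is holomorphic with $\lambda(0)=0$, a small disk around $\alpha=0$ maps into $|\lambda|<\lambda_s(d)$, so zero-freeness in $\alpha$ follows.
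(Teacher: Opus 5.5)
Your proposal is correct and matches the paper, which gives no argument beyond citing Shearer/Scott--Sokal for the hardcore disk and Barvinok--Sober\'on for the Potts disk $|\beta-1|<0.35/d$. The Potts bound is uniform in $q$ because $Z_{G,q}(\beta)$ is a homomorphism partition function whose entries lie within $|\beta-1|$ of $1$, and your transfer to the $\alpha$-parametrization is exactly how the lemma is used in the proof of \Cref{thm:variational-zero-free}.

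One fix is needed in your optional self-contained Shearer induction. The bound $|Z_{G-v}/Z_G|\le 1/(1-x)$ can only be asserted for vertices of degree at most $d-1$. It fails at a degree-$d$ root of a deep tree with $\lambda$ near $-(d-1)^{d-1}/d^d$, where the ratio approaches $(d-1)/(d-2)>d/(d-1)$. At an arbitrary vertex you should instead conclude only $|\lambda Z_{G-N[v]}/Z_{G-v}|\le x/(1-x)=1/(d-1)<1$, which still gives $Z_G\ne0$.
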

}

\begin{proof}[Proof of~\Cref{thm:variational-zero-free}]
  {Since $\lambda(\alpha)=\alpha(1-\alpha)^{d-1}/(1-2\alpha)^d$ maps a neighborhood of $\alpha=0$ into $|\lambda|<(d-1)^{d-1}/d^d$, \Cref{lem:endpoint-zero-free} handles the trivial endpoints $\alpha=0$ and $\beta=1$.}

  \smallskip
  At a nontrivial parameter $t$ satisfying \Cref{cond:unique-variational-maximizer} and \Cref{cond:strict-KS}, let $Y_n=\mathcal Z_G\one{E_n}$ and the open disk $D_0$ be supplied by \Cref{lem:local-profile-moments} and \Cref{lem:graph-truncation}, put $A_n=\E{Y_n}$, {choose a smaller open complex disk $D_1$ centered at $t\in\=R$ with $\ol{D_1}\subset D_0$, and choose a nonempty open real interval $I\subseteq D_1\cap\=R$ on which $P(s)\ge0$. Since the locally uniform limit of $a_n$ is nowhere zero on $\ol{D_1}$, $A_n$ is nowhere zero on $D_1$ for all large $n$, verifying \textup{(B0)}. The moment and cycle estimates of \Cref{lem:local-profile-moments} and \Cref{lem:graph-truncation} verify \textup{(B1)} and \textup{(B3)} on $D_1$ together with the real-line hypotheses \textup{(B2)} and \textup{(B4)} on $I$. The mixed cycle moments transfer to $Y_n$ because fixed cycle-count products are polynomially bounded. Finally, setting $w=\ol z$ in~\eqref{eq:holomorphic-truncation} and using $|Y_n(z)|^2=Y_n(z)Y_n(\ol z)$ verifies the locally uniform second-moment bound \textup{(B5)} on the complex neighborhood $D_1$.}

  \smallskip
  The simplicity event is $X_{1,n}=X_{2,n}=0$, and conditioning on this event gives $G\sim\+G_{n,d}$ by \Cref{def:random-regular-graphs}. Under this conditional law, \Cref{thm:reinforced-ssc} makes $Y_n$ zero-free on a closed disk about $t$ with probability $1-o_n(1)$. Since $Y_n\equiv0$ on $E_n^c$, this forces $E_n$, on which $Z_G=C^nY_n$ by the Bethe canonical form (\Cref{def:bethe-canonical-form}).
\end{proof}

\subsection{{Proofs of the Main Zero-Free and Counting Results}}\label{subsec:main-zero-free-counting-proofs}
\begingroup
\begin{proof}[Proof of \Cref{thm:hardcore-zero}, \Cref{thm:potts-zero}, and \Cref{cor:ising-zero}]
Set $J=[0,\alpha_{\mathrm{RSB}}(d)-\delta]$ for the hardcore model, $J=[\beta_{\mathrm{RSB}}(d,q)+\delta,1]$ for positive-weight Potts, and $J=[0,1]$ for the coloring case $q\ge3$ and $d<d_{\mathrm{RSB}}(q)$, with $\delta$ as in the respective theorem.

For the hardcore and positive-weight Potts intervals, the phase characterizations and \Cref{lem:model-variational-conditions} give the two strict conditions at every nontrivial $t\in J$. In the coloring case,~\eqref{eq:coloring-real-degree-nonpositivity} provides a real $r>d$ with $\mathcal B_{r,0}\le0$, so \Cref{lem:coloring-variational-conditions} gives both strict conditions at $\beta=0$. Degree monotonicity also gives $\mathcal B_{d,0}\le0$; the Potts contraction argument in the proof of \Cref{lem:model-variational-conditions}, applied with $\beta'=0$, then gives $\mathcal B_{d,\beta}\le0$ for every $\beta\in(0,1)$, so $\beta_{\mathrm{RSB}}(d,q)=0$ by~\eqref{eq:variational-thresholds-def}. Thus \Cref{lem:model-variational-conditions} gives the two strict conditions throughout $\beta\in(0,1)$. The independent-spin endpoints $\alpha=0$ and $\beta=1$ are covered directly by \Cref{lem:endpoint-zero-free}.

By \Cref{thm:variational-zero-free}, each $t\in J$ has a fixed open zero-free neighborhood with probability $1-o_n(1)$. A finite collection of these neighborhoods covers the compact set $J$ and contains $D(J,\eta)$ for some $\eta>0$. A union bound proves the hardcore and Potts theorems, and $q=2$ with~\eqref{eq:ising-rsb-threshold} gives the Ising corollary.
\end{proof}
\endgroup

\begingroup
\begin{lemma}[Taylor polynomial interpolation~{\cite{barvinok2016combinatorics}}]\label{lem:barvinok-interpolation}
  Let $P$ be a degree-$N$ polynomial with $P(0)>0$ that is zero-free on $D([0,z_0],\eta)$ ($z_0\ge0$, $\eta>0$). For every $\varepsilon\in(0,1)$, $P(z_0)$ can be deterministically approximated within relative error $\varepsilon$ in time $\mathrm{poly}(m)$ from the degree-$m$ Taylor polynomial of $\log P$ at $0$, for some $m=O_{z_0,\eta}(\log((N+1)/\varepsilon))$.
\end{lemma}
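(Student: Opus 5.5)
Since $P$ has no zeros on the simply connected open set $D([0,z_0],\eta)$, a holomorphic branch of $f(z)=\log P(z)$ exists there with $f(0)$ real. The plan is to turn the zero-free neighborhood of the segment into zero-freeness of a polynomial on a disk. First, I would construct a polynomial map $\phi$ of degree $k=O_{z_0,\eta}(1)$ with $\phi(0)=0$ and $\phi(1)=z_0$, mapping the closed disk $\ol{D(0,\rho)}$, for some $\rho=\rho(z_0,\eta)>1$, into $D([0,z_0],\eta)$. The standard choice is a scaled truncation of Barvinok's $\frac{1-e^{-\gamma u}}{1-e^{-\gamma}}$ construction, which sends a slightly enlarged unit disk into a thin strip around $[0,1]$; rescale it by $z_0$. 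The composite $g(u)=P(\phi(u))$ is then a polynomial of degree at most $kN$. It is zero-free on $D(0,\rho)$, satisfies $g(0)=P(0)>0$, and has $g(1)=P(z_0)$.

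Second, I would bound the truncation error. Write $g(u)=g(0)\prod_{j}(1-u/\zeta_j)$, where all roots satisfy $|\zeta_j|\ge\rho$. Then $\log g(u)-\log g(0)=-\sum_j\sum_{r\ge1}\frac{(u/\zeta_j)^r}{r}$. The tail of the degree-$m$ Taylor expansion at $u=1$ is therefore at most $kN\sum_{r>m}\rho^{-r}/r\le\frac{kN\rho^{-m}}{(m+1)(1-1/\rho)}$. Choosing $m=O_{z_0,\eta}(\log((N+1)/\varepsilon))$ makes this error at most $\varepsilon/3$. Exponentiating then gives an estimate of $g(1)$ within relative error $\varepsilon$, since an error below $\varepsilon/3$ in the exponent yields a multiplicative error within $[1-\varepsilon,1+\varepsilon]$; the imaginary part contributes only to the phase, and $P(z_0)>0$ in our use, so we take the real part.

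Third, I would relate the Taylor coefficients. The degree-$m$ Taylor coefficients of $\log g$ at $0$ depend only on the degree-$m$ Taylor polynomial of $\log P$ at $0$, because $\phi(0)=0$ means each power $\phi^j$ has order at least $j$. Composing truncated power series of length $m$ costs $\mathrm{poly}(m)$ arithmetic operations, with the coefficients of $\phi$ being constants depending only on $(z_0,\eta)$. Evaluating the resulting polynomial at $u=1$ is immediate. The main obstacle is the first step, namely exhibiting $\phi$ with $\rho>1$ uniformly in terms of $(z_0,\eta)$ and with the image staying inside the $\eta$-neighborhood. I would take this from Barvinok's explicit construction, checking numerically that the truncation degree $k$ and the margin $\rho-1$ depend only on $\eta/z_0$ and $z_0$. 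The case $z_0=0$ is trivial.
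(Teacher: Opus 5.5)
Your route is the standard Barvinok argument that the paper cites and sketches in the proof of the counting corollaries. A fixed polynomial change of variables $\phi$ sends a disk of radius $\rho>1$ into the zero-free neighborhood, so all roots of $g=P\circ\phi$ lie outside $\overline{D(0,\rho)}$. The Taylor tail of $\log g$ at $u=1$ is then at most $kN\rho^{-m}/((m+1)(1-\rho^{-1}))$, and $m=O_{z_0,\eta}(\log((N+1)/\varepsilon))$ suffices. Your second and third steps are correct as written: the root-by-root tail bound, and the observation that $\phi(0)=0$ makes the first $m$ coefficients of $(\log P)\circ\phi$ depend only on the degree-$m$ Taylor polynomial of $\log P$.

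The one concrete error is in the step you yourself flag as the crux. The function $u\mapsto(1-\mathrm{e}^{-\gamma u})/(1-\mathrm{e}^{-\gamma})$ does not map a disk of radius greater than one into a thin neighborhood of $[0,1]$. It sends $u=-1$ to $-\mathrm{e}^{\gamma}$, and truncating its Taylor series cannot repair that. It is in fact the \emph{inverse} of the map you need. Barvinok's construction instead uses
\[
  \psi(u)=\frac{\log(1-\alpha u)}{\log(1-\alpha)}=\frac1\gamma\sum_{r\ge1}\frac{(\alpha u)^r}{r},\qquad \alpha=1-\mathrm{e}^{-\gamma},
\]
which satisfies $\psi(0)=0$ and $\psi(1)=1$. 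Take $|u|\le\rho=(1-\mathrm{e}^{-1-\gamma})/(1-\mathrm{e}^{-\gamma})>1$. Then $1-\alpha u$ lies in the disk of radius $\alpha\rho=1-\mathrm{e}^{-1-\gamma}<1$ about $1$, which gives $-\log 2/\gamma\le\Re\psi\le 1+1/\gamma$ and $|\Im\psi|\le\pi/(2\gamma)$. Truncating the series at degree $k=\lceil\mathrm{e}^{1+\gamma}\rceil$ changes $\psi$ by less than $1/\gamma$ on $\overline{D(0,\rho)}$. Now renormalize the truncation to take the value $1$ at $u=1$, multiply by $z_0$, and choose $\gamma$ a large constant multiple of $\max\{1,z_0/\eta\}$. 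This gives the required $\phi$ with $k=\mathrm{e}^{O(1+z_0/\eta)}$ and $\log\rho\asymp\mathrm{e}^{-\gamma}$, all explicit, so no numerical checking is needed. If you want exact rational arithmetic, as the paper's remark about a ``fixed rational polynomial change of variable'' suggests, replace $\alpha$ by a nearby rational. All the containments are strict, so they survive the perturbation. With this correction your proof is complete.
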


\begin{lemma}[Connected-subgraph enumeration~\cite{patel2017deterministic}]\label{lem:patel-regts}
  On any $n$-vertex graph $G$ of maximum degree at most $d$, the exact rational degree-$m$ Taylor polynomials of $\log\bigl((1-2z)^{dn}Z_G(z)\bigr)$ and {$\log\bigl(q^{-n}Z_{G,q}(1-z)\bigr)$} at $0$ can be computed deterministically in time ${n^{O(1)}\e^{O_{d,q}(m)}}$.
\end{lemma}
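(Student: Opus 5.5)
The plan is to reduce both claims to the Patel--Regts framework for \emph{bounded induced graph counting polynomials}, together with power-series composition for the hardcore reparametrization. Recall the relevant theorem of~\cite{patel2017deterministic}. Suppose $P_G(z)=\sum_k e_k(G)z^k$ satisfies three hypotheses:
\begin{enumerate}
\item it is multiplicative over disjoint unions;
\item $P_G(0)=1$;
\item each coefficient $e_k(G)$ is a fixed finite linear combination of induced-subgraph (or edge-subgraph) counts of graphs with $O(k)$ vertices.
\end{enumerate}
Then, on graphs of maximum degree at most $d$, the power sums $p_k$ of the inverse roots, for $k\le m$, are linear combinations of counts of \emph{connected} subgraphs of size $O(k)$. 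These can be enumerated in time $n\,\e^{O_d(m)}$. Since $\log P_G(z)=-\sum_{k\ge1}p_kz^k/k$, this yields the degree-$m$ Taylor polynomial of $\log P_G$ in time $n^{O(1)}\e^{O_d(m)}$.

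\textbf{Hardcore.} I would not apply the framework directly to $(1-2z)^{dn}Z_G(z)$, whose $z^k$ coefficients mix all independent-set sizes $j\le k$ with $n$-dependent weights. Instead I split
\[
\log\bigl((1-2z)^{dn}Z_G(z)\bigr)=dn\log(1-2z)+\log Z_{G,\mathrm{fug}}\bigl(\lambda(z)\bigr).
\]
The first term has explicit rational Taylor coefficients. The independence polynomial $Z_{G,\mathrm{fug}}(\lambda)=\sum_j i_j(G)\lambda^j$ is multiplicative, equals $1$ at $\lambda=0$, and its $\lambda^j$ coefficient counts induced edgeless $j$-vertex subgraphs. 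This is exactly the prototypical example in~\cite{patel2017deterministic}, so the Taylor polynomial of $\log Z_{G,\mathrm{fug}}(\lambda)$ to order $m$ in $\lambda$ is computable in time $n^{O(1)}\e^{O_d(m)}$.

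Next, $\lambda(z)=z(1-z)^{d-1}(1-2z)^{-d}=z+O(z^2)$ has rational coefficients computable in $\mathrm{poly}(m)$. Because $\lambda(0)=0$, substituting its degree-$m$ truncation into the degree-$m$ log-series and truncating at degree $m$ is exact. This composition costs $\mathrm{poly}(m)$ arithmetic operations on rationals of polynomial bit-length.

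\textbf{Potts.} Expanding $(1-z)^{m_G(\sigma)}=\sum_{A\subseteq\text{mono}(\sigma)}(-z)^{|A|}$ and summing over $\sigma$ gives the Fortuin--Kasteleyn form
\[
q^{-n}Z_{G,q}(1-z)=\sum_{A\subseteq E(G)}(-z)^{|A|}q^{c(A)-n},
\]
where $c(A)$ is the number of connected components of $(V,A)$. This polynomial is multiplicative over disjoint unions and equals $1$ at $z=0$. Its $z^k$ coefficient is a sum over $k$-edge subgraphs of a weight $q^{c(A)-n}$. That weight depends only on the isomorphism type of the non-isolated part of $(V,A)$, because isolated vertices contribute $q^{1-1}=1$. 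Hence the coefficient is a linear combination of subgraph counts of graphs with at most $2k$ vertices, which is the edge-subgraph version of the Patel--Regts hypothesis. The same theorem then gives the degree-$m$ Taylor polynomial of the logarithm in time $n^{O(1)}\e^{O_{d,q}(m)}$.

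\textbf{Main obstacle.} The step I expect to need the most care is checking the Patel--Regts hypotheses precisely. Specifically, I must confirm that the power sums $p_k$ (obtained from the coefficients by Newton's identities) reduce to counts of connected subgraphs only. This follows from multiplicativity: the disconnected contributions cancel in the logarithm. I would cite~\cite[Theorem~3.1 and Section~6]{patel2017deterministic} for this cancellation and for the $n\,\e^{O(m)}$ enumeration of connected induced subgraphs of size at most $m$ in bounded-degree graphs.

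The remaining routine check is that all arithmetic is exact over $\mathbb Q$, with bit-lengths polynomial in $n$ and $m$. This holds because the subgraph counts are at most $n\,\e^{O(m)}$ and the composition and Newton steps involve only $\mathrm{poly}(m)$ rational operations.
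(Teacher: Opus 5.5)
Your proposal is correct and follows the same route as the paper, which proves this lemma simply by appeal to Patel--Regts~\cite{patel2017deterministic}: the independence polynomial and the random-cluster expansion $q^{-n}Z_{G,q}(1-z)=\sum_{A\subseteq E(G)}(-z)^{|A|}q^{c(A)-n}$ are both bounded induced graph counting polynomials treated there. Your splitting $\log\bigl((1-2z)^{dn}Z_G(z)\bigr)=dn\log(1-2z)+\log Z_{G,\mathrm{fug}}(\lambda(z))$, followed by exact truncated composition (valid since $\lambda(z)=z+O(z^2)$), makes explicit the reparametrization step the paper leaves implicit. Applying the framework directly to $(1-2z)^{dn}Z_G(z)$ would also work, since its $n$-dependent coefficients are still linear combinations of induced-subgraph counts on at most $k$ vertices, but your route is cleaner.
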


\begin{proof}[Proof of \Cref{cor:hardcore-counting}, \Cref{cor:potts-counting}, and \Cref{cor:ising-counting}]
For the hardcore model, the target $\alpha$ lies in the interval $[0,\alpha_{\mathrm{RSB}}(d)-\delta]\subset[0,1/2)$. Shrink the zero-free neighborhood from \Cref{thm:hardcore-zero}, if necessary, so that it excludes $1/2$; then $(1-2z)^{dn}Z_G(z)$ is a zero-free polynomial there. For the Potts model, use $q^{-n}Z_{G,q}(1-z)$ at the target $1-\beta$.

Applying \Cref{lem:barvinok-interpolation} and \Cref{lem:patel-regts} and multiplying by $(1-2\alpha)^{-dn}$ or $q^n$, respectively, gives \Cref{cor:hardcore-counting} and \Cref{cor:potts-counting} from \Cref{thm:hardcore-zero} and \Cref{thm:potts-zero}. {The normalized polynomials have degrees at most $dn$ and $dn/2$, respectively. A fixed rational polynomial change of variable in the interpolation method maps a disk of radius greater than one into the zero-free neighborhood; its truncation error is $O(n\theta^m)$ for some $\theta<1$. Taking $m$ to be a sufficiently large multiple of $\log((n+1)/\varepsilon)$ makes this error $O(\varepsilon)$. The change of variable and its constants can be chosen uniformly over each parameter interval by scaling one fixed map to its target point.} Specializing to $q=2$ via~\eqref{eq:ising-rsb-threshold} gives \Cref{cor:ising-counting}.
\end{proof}
\endgroup

% END INCLUDED FILE: chapters/04-zero-freeness.tex
% BEGIN INCLUDED FILE: chapters/06-conclusions.tex
\section{Conclusions and Open Problems}\label{sec:conclusions}

We establish complex zero-free regions and, for fixed rational model parameters, deterministic polynomial-time approximation algorithms for the partition functions of the hardcore model, the antiferromagnetic Potts model, and proper $q$-colorings on random $d$-regular graphs throughout the interior of the replica-symmetric regime, with high probability over the graph. Our approach combines reinforced small subgraph conditioning on the configuration model with a variational second-moment truncation based on exponential planted overlap concentration, lifting cavity-method characterizations of replica symmetry to the complex plane even where the unrestricted second moment diverges exponentially.

Our work raises several open questions:
\begin{itemize}
  \item First, can complex zero-freeness and deterministic approximate counting be extended up to the replica symmetry breaking threshold for Erd\H{o}s--R\'enyi random graphs $G(n,d/n)$ and other diluted mean-field models? Unlike random regular graphs, $G(n,d/n)$ exhibits $O(\sqrt{n})$ fluctuations of $\log Z_G$ from degree variations as well as vertices of unbounded degree, both of which obstruct a direct application of the current method in this work.

  \item Second, what is the threshold for efficient approximate sampling on random regular graphs, and is there a computational separation between counting and sampling in diluted spin glasses where vertex pinning leaves the random regular graph ensemble, so our average-case counting guarantee does not immediately support the standard self-reduction~\cite{jerrum1986random}? While local algorithms cannot find independent sets of density $(1+\varepsilon)\log d/d$ for fixed $\varepsilon>0$ and sufficiently large $d$~\cite{gamarnik2017limits,rahman2017local}, this does not rule out non-local global algorithms; for instance, in the Lov\'asz Local Lemma {(LLL)} regime, recent non-local algorithms match the counting LLL threshold~\cite{liu2026counting} for sampling despite the failure of vertex-pinning self-reduction~\cite{liu2026counting,achlioptas2026sampling,jain2026general}.

  \item Finally, our deterministic counting algorithms run in time  $(n/\varepsilon)^{O_{d,q,\delta}(1)}$, where the exponent of $n$ depends on the model parameters and the gap to criticality. For the Sherrington--Kirkpatrick model, color coding recently improved the quasipolynomial Barvinok interpolation runtime~\cite{bencs2026zeros} to polynomial time with a degree-independent exponent on $n$~\cite{harbuzova2026color}; a separate polynomial-time algorithm uses the cavity method~\cite{schmidhuber2026polynomial}. Can a universal exponent on $n$ be achieved up to the RSB threshold on random regular graphs, or does requiring a parameter-independent exponent induce a distinct computational phase transition?
\end{itemize}

\section*{Concurrent Work and Acknowledgements}

\paragraph{Concurrent Work.} 
In concurrent and independent work, Ferenc Bencs, Brice Huang, {Daniel Z.\ Lee}, Kuikui Liu, and Guus Regts~\cite{bencs2026ising} established Fisher zero-free neighborhoods and deterministic polynomial-time approximate counting for the zero-field antiferromagnetic Ising model on random $d$-regular graphs throughout the interior of the replica-symmetric regime, up to the Kesten--Stigum threshold. Their results cover the same parameter range as \Cref{cor:ising-zero} and \Cref{cor:ising-counting}.

At a technical level, both approaches combine the second moment method with small subgraph conditioning, applied to a suitable truncation of the partition function. To preclude zeros, their proof constructs a random analytic reweighting and applies Jensen's formula, whereas our reinforced small subgraph conditioning framework uses tightness and real-line small subgraph conditioning to identify a nonvanishing limiting random holomorphic function.

Their Jensen argument additionally proves a stronger conclusion of an $o_n(1)$ expected zero count than our high-probability zero-free result for the zero-field antiferromagnetic Ising model. Our result additionally derives a Bethe variational condition from replica symmetry and uses this condition to obtain exponential
planted overlap concentration and construct the required graph
truncation, allowing for broader applications to the hardcore model and the antiferromagnetic Potts model (including proper $q$-colorings).

\paragraph{Acknowledgements.} We thank Ferenc Bencs, Brice Huang, {Daniel Z.\ Lee}, Kuikui Liu, and Guus Regts for early discussion and coordination of the concurrent work. We thank Sidhanth Mohanty and Amit Rajaraman for helpful discussions and pointing out related references.

\section*{AI Disclosure}
During the preparation of this manuscript, the authors used ChatGPT (5.6 Sol, 6 Sol, 6 Astra, 6.1 Sol, and 6.1 Terra) and Gemini 4 Argon as AI assistants. The work started in the first week of August 2026. Early in this work, the authors obtained an initial draft for the hardcore model extending beyond the tree reconstruction threshold $\alpha_{\mathrm{rec}}(d)\sim\frac{\log d}{d}$ using ChatGPT 5.6 Sol.

Through extensive iterative interactions with ChatGPT 5.6 Sol and subsequent versions (ChatGPT 6 Sol/Astra and 6.1 Sol), during which the authors contributed many key ideas and conceptual inputs, this approach was progressively strengthened. Specifically, the analysis first matched the second-moment regime by restricting to typical configurations (which controls real moments up to the second-moment threshold, such as $d\le(2q-2)\log q+\varepsilon_q$ for balanced $q$-colorings for some $\varepsilon_q=o_q(1)$), and ultimately reached the replica symmetry breaking threshold across both the hardcore and antiferromagnetic Potts models (including proper $q$-colorings) by instead truncating to typical graphs $Y_n(z)=\mathcal Z_G(z)\one{E_n}$ via exponential planted overlap concentration. AI tools also assisted in searching the literature and adapting published cavity-method bounds to our unified setting.

The authors devoted substantial effort to understanding, digesting, and reorganizing the manuscript to make the ideas and proofs clear and accessible. In particular, the authors independently wrote the Introduction and Technical Overview, unified the separate model-specific drafts into the modular architecture, wrote parts of the technical arguments, and rewrote the rest. Gemini 4 Argon was used to assist in digesting the technical proofs, streamlining arguments, and polishing the exposition. The authors {take} full responsibility for the correctness, originality, and integrity of the work, including its proofs and references.

% Bibliography generated from the current references.bib using BibTeX (alpha).
\newcommand{\etalchar}[1]{$^{#1}$}

% END INCLUDED FILE: chapters/06-conclusions.tex
\appendix
% BEGIN INCLUDED FILE: appendices/app-sec4-cavity.tex
\section{Cavity and Variational Proofs for the Replica-Symmetric Phase}\label{app:potts-fractional-proof}

In this appendix, we give the deferred measure, cavity, and variational proofs from \Cref{sec:rs-identification-local}.

\Cref{subsec:app-rs-measures} collects the auxiliary properties of replica-symmetric measures (\Cref{lem:rs-measure-properties}) and proves the covariance and overlap characterization (\Cref{lem:rs-overlap-characterization}).

In~\Cref{subsec:app-hc-cavity}, we establish the graph-law comparison (\Cref{lem:cavity-increment-comparison}) and the three auxiliary facts (\Cref{fact:cavity-marginal-estimates}, \Cref{fact:cavity-prob}, and \Cref{fact:fixed-point}) used in the proof of~\Cref{lem:local-hc-rs-cavity-lower}.

\Cref{subsec:app-potts-rs} proves the pointwise replica-symmetric free-energy lemma for the Potts model (\Cref{lem:potts-pointwise-rs-free-energy}) and \Cref{subsec:app-variational-bridges} connects the fractional-moment and bipartite factor-graph functionals to the canonical Bethe functional (\Cref{prop:nu-eta-fractional-correspondence} and \Cref{lem:local-potts-free-energy-criterion}).

\subsection{Properties of Replica-Symmetric Measures}\label{subsec:app-rs-measures}\label{app:rs-measures}

{
We record two consequences of pairwise decorrelation~\eqref{eq:measure-pairwise-decorrelation} used in the cavity proofs and prove them together with \Cref{lem:rs-overlap-characterization}.
\begin{lemma}[{\cite[Corollary~2.2]{coja2016belief}, \cite[Proposition~3.11, Lemmas~3.12 and~3.17, and Corollary~3.18]{coja2019spin}}]\label{lem:rs-measure-properties}
Let $\nu_n$ be random probability measures on $\Omega^n$ for a fixed finite spin space $\Omega$.
\begin{enumerate}[label=(\roman*),font=\normalfont,leftmargin=2.5em,labelsep=0.6em,topsep=0.25em,itemsep=0.2em,parsep=0pt]
  \item \emph{Multi-point factorization:} If $\nu_n$ satisfies~\eqref{eq:measure-pairwise-decorrelation}, then for any fixed $k\ge1$ and independent uniform vertices $v_1,\ldots,v_k\in [n]$, $\E{\DTV{\nu_{n,v_1,\ldots,v_k}}{\bigotimes_{i=1}^k\nu_{n,v_i}}}=o(1)$.
  \item \emph{Stability under bounded-density tilts:} If $\nu_n$ satisfies~\eqref{eq:measure-pairwise-decorrelation} and $\nu'_n$ is a random probability measure on $\Omega^n$ with $\dif\nu'_n/\dif\nu_n\le C$ for a constant $C<\infty$, then $n^{-1}\sum_{v\in[n]}\DTV{\nu'_{n,v}}{\nu_{n,v}}\xrightarrow{\mathrm p}0$.
\end{enumerate}
\end{lemma}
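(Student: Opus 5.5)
We prove the two parts separately. Both rest on a single reduction: pairwise decorrelation~\eqref{eq:measure-pairwise-decorrelation} is equivalent to $\E{\|\Sigma_{\nu_n}\|_{\mathrm F}^2}=o(n^2)$, which is \Cref{lem:rs-overlap-characterization}. We prove that lemma first, since its argument is elementary. The $(u,a),(v,b)$ entry of $\Sigma_{\nu_n}$ is $\nu_{n,u,v}(a,b)-\nu_{n,u}(a)\nu_{n,v}(b)$. Hence $\sum_{a,b}|\Sigma_{(u,a),(v,b)}|=2\DTV{\nu_{n,u,v}}{\nu_{n,u}\otimes\nu_{n,v}}$. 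Since each entry is bounded by $1$ and there are $|\Omega|^2$ of them, Cauchy--Schwarz in both directions gives $\DTV{\cdot}{\cdot}^2\lesssim\sum_{a,b}\Sigma^2\lesssim\DTV{\cdot}{\cdot}$. Averaging over pairs, and noting that the diagonal $u=v$ contributes only $O(n)$, yields the equivalence. For the operator-norm consequence we use $\|\Sigma\|_{\mathrm{op}}\le\|\Sigma\|_{\mathrm F}$ together with Jensen. For the overlap criterion we use the identity
\[
\|\Sigma\|_{\mathrm F}^2=n^2\,\E[\sigma,\sigma'\sim\nu_n]{\|\*\rho(\sigma,\sigma')-\*\rho^{(1)}\|^2}-(\text{centering terms}).
\]
Concretely, $\sum_{u,v,a,b}(\mathrm{Cov})^2$ equals $n^2$ times the variance, over independent replica pairs, of the overlap with $\*\rho(\sigma,\sigma')$ computed coordinatewise. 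This can be checked by expanding $\sum_{u,v}\Sigma_{(u,a)(v,b)}^2$ with four replicas. Concentration of the overlap around a deterministic matrix then forces this variance to be $o(1)$ by bounded convergence.

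\emph{Part (i).} We induct on $k$, with the cases $k\le2$ given by hypothesis. For the inductive step we use the standard Coja-Oghlan--Perkins argument: decorrelation implies that for most vertex sets the measure is $\varepsilon$-extremal, in the sense that it is close to a product measure on most $k$-tuples. To carry this out we apply the pinning lemma with the covariance bound. Writing $\nu_{n,v_1,\ldots,v_k}$ as $\nu_{n,v_1}$ times the conditional law of the remaining $k-1$ coordinates given $\sigma_{v_1}$, it suffices to show that conditioning on a single typical spin $\sigma_{v_1}=a$ perturbs the marginals of $k-1$ uniform vertices only by $o(1)$ on average, and that the conditioned measure remains pairwise decorrelated. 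The first claim follows from the covariance bound: the averaged conditional marginal shift is $\frac1n\sum_v|\Sigma_{(v_1,a),(v,\cdot)}|/\nu_{n,v_1}(a)$. The second claim, decorrelation of the conditioned measure, is the genuine obstacle, because conditioning on one spin can in principle create correlations. Our first attempt is to control third-order covariances through the $\ell^2$ Frobenius bound combined with averaging over $v_1$. If that fails, we fall back on the cited extremality characterization in \cite[Proposition~3.11]{coja2019spin}, which shows that $o(1)$ pairwise decorrelation is equivalent to $k$-wise decorrelation via a regularity/partition argument.

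\emph{Part (ii).} For a test function $f_v=\one{\sigma_v=a}$, we have $\nu'_{n,v}(a)-\nu_{n,v}(a)=\mathrm{Cov}_{\nu_n}(g,f_v)$ with $g=\dif\nu'_n/\dif\nu_n$ and $0\le g\le C$. Averaging over $v$ with signs $s_v\in\{\pm1\}$ chosen to match the sign of each difference gives
\[
\frac1n\sum_v\bigl|\nu'_{n,v}(a)-\nu_{n,v}(a)\bigr|=\mathrm{Cov}_{\nu_n}\Bigl(g,\frac1n\sum_v s_vf_v\Bigr)\le C\sqrt{\operatorname{Var}_{\nu_n}\Bigl(\frac1n\sum_v s_vf_v\Bigr)}.
\]
The variance on the right is at most $n^{-2}\,\allone^\trans|\Sigma|\allone\le n^{-1}\|\Sigma\|_{\mathrm{op}}$, which tends to $0$ in probability by the lemma. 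Summing over $a\in\Omega$ completes part (ii). Of the two parts, part (i) is the main obstacle.
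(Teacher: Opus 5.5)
Your part (ii) is correct and is essentially the paper's argument. One intermediate inequality is misstated: $\allone^\trans|\Sigma|\allone\le n\|\Sigma\|_{\mathrm{op}}$ fails in general for the entrywise absolute value. You only need $s^\trans\Sigma s\le\|s\|_2^2\|\Sigma\|_{\mathrm{op}}=n\|\Sigma\|_{\mathrm{op}}$ for $s\in\{\pm1\}^n$, which holds directly.

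Part (i), however, is not proved. You peel off $v_1$ and condition on $\sigma_{v_1}=a$, so your induction needs the conditioned measure $\nu_n(\cdot\mid\sigma_{v_1}=a)$ to remain pairwise decorrelated on average. That is itself a three-point statement, so the step is circular unless you supply a separate argument. The ``third-order covariance'' attempt is not such an argument, and the fallback is a citation rather than a proof.

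The obstacle is avoidable. Your ``first claim'' is the right ingredient, but apply it with the roles reversed: condition on the spins at $v_1,\ldots,v_{k-1}$ and look only at the marginal of the fresh vertex $v_k$. Write $\nu=\nu_n$ and $X=(\one{\sigma_v=b})_{v,b}$. Let $A_1,\ldots,A_m$, with $m\le|\Omega|^{k-1}$, be the partition of $\Omega^n$ according to $(\sigma_{v_1},\ldots,\sigma_{v_{k-1}})$. Taking the test vector $w=\operatorname{Cov}_\nu(f,X)$ in your Cauchy--Schwarz step from (ii) gives $\|\operatorname{Cov}_\nu(f,X)\|_2^2\le\operatorname{Var}_\nu(f)\|\Sigma\|_{\mathrm{op}}$. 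Apply this to $f=\one{A_j}$, using
\[
\operatorname{Cov}_\nu(\one{A_j},\one{\sigma_v=b})=\nu(A_j)\bigl(\nu_v(b\mid A_j)-\nu_v(b)\bigr)
\quad\text{and}\quad
\operatorname{Var}_\nu(\one{A_j})\le\nu(A_j).
\]
Summing over $j$ with one more Cauchy--Schwarz yields, for every fixed $v_1,\ldots,v_{k-1}$,
\[
\frac1n\sum_{v\in[n]}\sum_{j=1}^m\nu(A_j)\,\DTV{\nu_v(\cdot\mid A_j)}{\nu_v}\le\frac12\sqrt{|\Omega|^k\,\|\Sigma\|_{\mathrm{op}}/n}.
\]
Next, $\DTV{\nu_{v_1,\ldots,v_k}}{\nu_{v_1,\ldots,v_{k-1}}\otimes\nu_{v_k}}=\sum_j\nu(A_j)\DTV{\nu_{v_k}(\cdot\mid A_j)}{\nu_{v_k}}$. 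Average over the independent uniform $v_k$ and apply the triangle inequality. This bounds the $k$-point factorization error by the $(k-1)$-point error of the \emph{original} measure plus $\frac12|\Omega|^{k/2}\E{\sqrt{\|\Sigma\|_{\mathrm{op}}/n}}=o(1)$. Induction on $k$ then closes without ever requiring decorrelation of a conditioned measure; this is exactly the paper's proof.
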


\begin{proof}[Proof of \Cref{lem:rs-overlap-characterization} and \Cref{lem:rs-measure-properties}]
Write $s=|\Omega|$ and, conditional on $\nu=\nu_n$, let $X=(\one{\sigma_v=a})_{v,a}$, $p=\E[\nu]{X}$, and $\Sigma=\operatorname{Cov}_\nu(X)$.
Each entry of $\Sigma$ is $\nu_{u,v}(a,b)-\nu_u(a)\nu_v(b)$. Comparing the entrywise $\ell_1$ and $\ell_2$ norms, using that the entries are bounded and that the terms with $u=v$ contribute only $O(n)$, proves
\[
  \eqref{eq:measure-pairwise-decorrelation}\quad\Longleftrightarrow\quad
  \frac1{n^2}\E{\|\Sigma\|_{\mathrm F}^2}=o(1).
\]
In particular, under~\eqref{eq:measure-pairwise-decorrelation}, $r_n\defeq\|\Sigma\|_{\mathrm{op}}/n$ satisfies $\E{r_n}=o(1)$.

For the overlap assertion, take two independent replicas $X,X'$ corresponding to $(\sigma,\sigma')\sim\nu^{\otimes2}$ conditional on $\nu$, and put
\[
  T\defeq\frac1n X^\trans X'=\operatorname{tr}\*\rho(\sigma,\sigma'),
  \qquad D\defeq\frac1n(X-p)^\trans(X'-p).
\]
The variable $D=T-\E{T\mid\nu,X}-\E{T\mid\nu,X'}+\E{T\mid\nu}$ is the orthogonal double-centered component of $T$. Hence, for every constant $c$,
\[
  \frac{\|\Sigma\|_{\mathrm F}^2}{n^2}
  =\E{D^2\mid\nu}\le\E{(T-c)^2\mid\nu}.
\]
If $\*\rho(\sigma,\sigma')\xrightarrow{\mathrm p}R$ for a deterministic matrix $R$, take $c=\operatorname{tr}R$. Since $T$ is bounded, averaging the last inequality over $\nu$ proves \Cref{lem:rs-overlap-characterization}.

To prove the two auxiliary properties, Cauchy--Schwarz gives, for any square-integrable $f$,
\[
  \|\operatorname{Cov}_\nu(f,X)\|_2^2
  \le\operatorname{Var}_\nu(f)\|\Sigma\|_{\mathrm{op}}.
\]
For a partition $A_1,\ldots,A_m$ of $\Omega^n$, apply this bound to $f=\one{A_j}$ and sum over $j$. With terms of zero probability omitted, this yields
\begin{equation}\label{eq:state-partition-marginals}
  \frac1n\sum_{v=1}^n\sum_{j=1}^m\nu(A_j)
  \DTV{\nu_v(\cdot\mid A_j)}{\nu_v}
  \le\frac12\sqrt{s m r_n}.
\end{equation}
For \Cref{lem:rs-measure-properties}(i), take the partition according to the spins at $v_1,\ldots,v_{k-1}$, which has at most $s^{k-1}$ classes, and average over the independent uniform vertex $v_k$. The triangle inequality for total variation, followed by induction on $k$, bounds the expected $k$-point factorization error by $\frac12\sum_{j=2}^k s^{j/2}\E{\sqrt{r_n}}=o(1)$.

For \Cref{lem:rs-measure-properties}(ii), take $f=\dif\nu'_n/\dif\nu_n$. Since $0\le f\le C$ and $\E[\nu]{f}=1$, we have $\operatorname{Var}_\nu(f)\le C-1$, and the same covariance bound gives
\[
  \frac1n\sum_{v=1}^n\DTV{\nu'_{n,v}}{\nu_{n,v}}
  \le\frac12\sqrt{s(C-1)r_n}.
\]
The right-hand side tends to zero in expectation, and hence also in probability.
\end{proof}
}

\subsection{Two-Root Cavity Coupling for the Hardcore Model}\label{subsec:app-hc-cavity}\label{app:hc-cavity}

Throughout this subsection, write $\mu_H={\mu_{H,\mathrm{fug},\lambda}}$ for the ordinary hardcore law on any graph $H$, including graphs obtained by deleting vertices. On $d$-regular graphs it agrees with $\mu^{\mathrm{can}}_{H,\alpha}$. We use the graph laws $\mathbb M_n$ and $\mathbb R_{n-2}$ from \Cref{lem:cavity-increment-comparison}.

\begin{proof}[Proof of~\Cref{lem:cavity-increment-comparison}]
Let $\mathfrak G_{m,d}$ denote the set of simple $d$-regular graphs on $[m]$.
{If $(G',\*e)$ is generated by~\eqref{eq:cavity-graph} from a simple $d$-regular graph $G$ with $\operatorname{dist}_G(a,a')\ge4$, then the $2d$ neighbors $b_1,\ldots,b_d$ and $c_1,\ldots,c_d$ are pairwise distinct and no cross-edge $b_ic_j$ ($i\ne j$) lies in $G$ (otherwise $\operatorname{dist}_G(a,a')\le3$), so $(G',\*e)\in A_{n-2}$ after relabeling. Conversely, each $(H,\*e)\in A_{n-2}$ is obtained from uniquely determined preimages once the labels of $a,a'\in[n]$ and the bijection $[n]\setminus\{a,a'\}\to[n-2]$ are chosen, giving exactly $n(n-1)(n-2)! = n!$ preimages $(G,a,a',B,C,\text{relabeling})$. Because a simple $d$-regular graph $G$ has $n(n-1)\bigl(1-O(n^{-1})\bigr)$ ordered pairs $(a,a')$ at distance at least~$4$ and $(d!)^2$ neighbor orderings $(B,C)$, each preimage has probability $\bigl(|\mathfrak G_{n,d}|\,n!\,(d!)^2\bigr)^{-1}\bigl(1+O(n^{-1})\bigr)$, so $\mathbb M_n(H,\*e)=\bigl(|A_{n-2}|\bigr)^{-1}\bigl(1+O(n^{-1})\bigr)$ uniformly on $A_{n-2}$. Under $\mathbb R_{n-2}$, $H$ is uniform on $\mathfrak G_{n-2,d}$ and $\*e$ is uniform on the $(dn-2d)^d$ $d$-tuples of directed edges of $H$; in any simple $d$-regular $H$, the fraction of such tuples with $2d$ distinct endpoints and no cross-edges $b_ic_j$ is $1-O(n^{-1})$, so $\mathbb R_{n-2}(A_{n-2})=1-O(n^{-1})$ and $\mathbb R_{n-2}(\cdot\mid A_{n-2})$ is uniform on $A_{n-2}$. Dividing the two point masses on $A_{n-2}$ gives~\eqref{eq:marked-cavity-comparison}.}
\end{proof}

{Next, we show that under replica symmetry of $\mu_G$, the two-root cavity switch perturbs the single-site occupation probabilities and the empirical BP cavity message distribution by $o(1)$ on average. Let
\[
  d_{\mathrm{BL}}(\mu,\nu)\defeq\sup\left\{\left|\int f\dif\mu-\int f\dif\nu\right|:\sup_{\mathcal I_\lambda}|f|+\sup_{x\ne y}\frac{|f(x)-f(y)|}{|x-y|}\le1\right\}
\]
denote the bounded-Lipschitz metric on $\mathcal P(\mathcal I_\lambda)$, where $\eta_H\in\mathcal P(\mathcal I_\lambda)$ is the empirical BP cavity message distribution of a regular graph $H$.}

\begin{fact}\label{fact:cavity-switch-stability}
Under replica symmetry, the coupled graphs satisfy
\[
  \E{\frac1n\sum_{u\in V(H_0)}|p_{u,G}-p_{u,G'}|}=o(1),
  \qquad
  \E{d_{\mathrm{BL}}(\eta_G,\eta_{G'})}=o(1).
\]
\end{fact}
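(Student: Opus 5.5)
The plan is to realise the passage from $G$ to $G'$ as a bounded-density tilt of a Gibbs measure on a common vertex set, and then invoke \Cref{lem:rs-measure-properties}(ii). Put $H_0=G-\{a,a'\}$. Restricted to $V(H_0)$, the measure $\mu_G$ is the law of $\sigma_{V(H_0)}$ obtained by marginalizing out $a,a'$. That law is $\mu_{H_0}$ reweighted by $w_G(\sigma)=(1+\lambda\one{\sigma_B=0})(1+\lambda\one{\sigma_C=0})$, which lies in $[1,(1+\lambda)^2]$. Likewise, $\mu_{G'}$ is $\mu_{H_0}$ reweighted by $\prod_i(1-\sigma_{b_i}\sigma_{c_i})\in\{0,1\}$. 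The ratio of this second weight to the first is no longer bounded above. The key observation is that $\mu_{G'}$ still has bounded density with respect to $\mu_G|_{V(H_0)}$ once we normalize properly. The density equals
\[
\frac{\prod_i(1-\sigma_{b_i}\sigma_{c_i})}{w_G(\sigma)}\cdot\frac{Z_G}{\lambda^{0}Z_{G'}}\cdot(\text{const}),
\]
and $Z_G/Z_{G'}$ is controlled by \eqref{eq:diff-energy-2}. Since $\mu_{G'}(\sigma_C=0)\ge(1+\lambda)^{-d}$ (each vertex is unoccupied with probability at least $1/(1+\lambda)$ conditionally on the rest), and similarly $\mu_G(\sigma_a=0,\sigma_{a'}=1)$ is bounded below by a constant depending on $d,\lambda$, the ratio $Z_{G'}/Z_{G}$ is bounded above and below by constants. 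It follows that $\dif\mu_{G'}/\dif\mu_G|_{V(H_0)}\le C(d,\lambda)$.

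Next, $\mu_G|_{V(H_0)}$ inherits pairwise decorrelation from $\mu_G$, because deleting two coordinates changes the average in \eqref{eq:measure-pairwise-decorrelation} by $O(1/n)$. Applying \Cref{lem:rs-measure-properties}(ii) then gives $n^{-1}\sum_u|p_{u,G}-p_{u,G'}|\xrightarrow{\mathrm p}0$. This quantity is bounded, so convergence in probability upgrades to convergence in expectation. One point needs care: the statement averages over $G\sim\+G_{n,d}$ together with the random roots, so we need replica symmetry of $\mu_G$ on the original graph and not on $G'$. That is exactly what is available.

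For the messages, $m_{u\to v,G}=\mu_{G-v}(\sigma_u=1)$, which is a ratio of marginals under $\mu_G$: explicitly $m_{u\to v}=\mu_G(\sigma_u=1,\sigma_v=0)/\mu_G(\sigma_v=0)$. The same formula holds in $G'$. Away from the $O(1)$ modified edges, which carry weight $O(1/n)$ in $\eta$, the tilt argument must therefore control pair marginals $\mu_{u,v}$ along edges $uv$, and not just single sites. To handle this I would apply the partition bound \eqref{eq:state-partition-marginals}, conditioning on $\sigma_v$, to the tilted measure. A simpler route is the general fact that a bounded-density tilt $f$ satisfies $\sum_{v}\sum_{u\in N(v)}\|\mu'_{u,v}-\mu_{u,v}\|$ small. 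This can be obtained from the covariance bound $\|\operatorname{Cov}(f,X_uX_v)\|$ combined with the fact that $\mu$ conditioned on $\sigma_v$ remains a bounded tilt, with density at most $(1+\lambda)$, since $\mu_G(\sigma_v=0)\ge1/(1+\lambda)$. Concretely, for each $v$, the measures $\mu_G(\cdot\mid\sigma_v=0)$ and $\mu_{G'}(\cdot\mid\sigma_v=0)$ are related by density at most $C(1+\lambda)$. Averaging single-site differences over $u\in N(v)$ is the delicate point, since $N(v)$ is not uniform over the vertex set. I would handle it by exchangeability: $(G,v,u)$ with $u$ a uniform neighbour of $v$ has the same law, relative to the random roots, as a uniform directed edge. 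I would then use replica symmetry in its local form, as in \cite{coja2019spin} Lemma~3.12: for a uniform vertex $v$, the measure $\mu_G(\cdot\mid\sigma_v)$ still satisfies pairwise decorrelation in $L^1$ on average. With that in hand, the message bound is $\E{\frac1{dn}\sum_{u\to v}|m_{u\to v,G}-m_{u\to v,G'}|}=o(1)$, and $d_{\mathrm{BL}}$ is dominated by this average because the test functions are $1$-Lipschitz. This local-neighbour averaging is the step I expect to be the main obstacle, and it is where I would lean on the cited pinning lemmas of \cite{coja2019spin}.
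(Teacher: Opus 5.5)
Your proof of the single-site assertion is correct and is essentially the paper's. You show that $\dif\mu_{G'}/\dif(\mu_G|_{V(H_0)})$ is bounded, apply \Cref{lem:rs-measure-properties}(ii), and pass to expectations by boundedness. The only difference is how you get the density bound. You bound it by $Z_G/Z_{G'}$ and use \eqref{eq:diff-energy-2} with $\mu_G(\sigma_a=0,\sigma_{a'}=1)\ge\lambda(1+\lambda)^{-(d+2)}$. The paper instead chains $\dif\mu_{G'}/\dif\mu_{H_0}\le(1+\lambda)^d$ (from $\mu_{H_0}(\sigma_B=0)\ge(1+\lambda)^{-d}$) with $\dif\mu_{H_0}/\dif\nu\le(1+\lambda)^2$. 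Both routes give $(1+\lambda)^{d+2}$.

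The message part has a genuine gap. You write $m_{u\to v}=\mu_G(\sigma_u=1,\sigma_v=0)/\mu_G(\sigma_v=0)$ and conclude that edge-pair marginals must be controlled. You then leave this step open, deferring to a ``general fact'' about bounded tilts and a local form of replica symmetry from \cite{coja2019spin}. Neither follows from what is available here:
\begin{enumerate}
\item Pairwise decorrelation \eqref{eq:measure-pairwise-decorrelation} averages over all $\binom n2$ pairs, so it says nothing about the $O(n)$ edge pairs.
\item The covariance bound behind \Cref{lem:rs-measure-properties}(ii) controls uniform site averages for a fixed pair of measures. Your conditioned measures $\mu_G(\cdot\mid\sigma_v=0)$ change with $v$, and the relevant site $u$ is a neighbour of $v$, not a uniform vertex.
\end{enumerate}

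The missing observation makes all of this unnecessary. For an edge $uv$ of a simple graph $H$, the hardcore constraint gives $\{\sigma_u=1\}\subseteq\{\sigma_v=0\}$. So the numerator is just $p_{u,H}$, and
\[
 m_{u\to v,H}=\frac{p_{u,H}}{1-p_{v,H}}.
\]
All occupation probabilities lie in $[0,\lambda/(1+\lambda)]$, where $(p,q)\mapsto p/(1-q)$ is $(1+\lambda)^2$-Lipschitz. Also, $G$ and $G'$ share all but $O(d)$ of their directed edges, namely those of $H_0$. Testing against the $d_{\mathrm{BL}}$ unit ball therefore gives
\[
 d_{\mathrm{BL}}(\eta_G,\eta_{G'})\le O\Bigl(\frac1n\Bigr)+\frac{2(1+\lambda)^2}{n-2}\sum_{u\in V(H_0)}|p_{u,G}-p_{u,G'}|.
\]
Here the $O(1/n)$ term absorbs the unshared directed edges and the change of normalization from $dn$ to $d(n-2)$. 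Taking expectations, the second assertion follows directly from the first, with no pair-marginal or pinning argument.
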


\begin{proof}[Proof of~\Cref{fact:cavity-switch-stability}]
Let $\nu$ be the restriction of $\mu_G$ to $V(H_0)$, and for a vertex subset or tuple $S$, write $\sigma_S=0$ for the event $\{\sigma_u=0\ (\forall u\in S)\}$. Summing over
the two nonadjacent roots gives
\[
  \frac{\dif\nu}{\dif\mu_{H_0}}(\sigma)
  =\frac{(1+\lambda\one{\sigma_B=0})(1+\lambda\one{\sigma_C=0})}
         {\E[\mu_{H_0}]{(1+\lambda\one{\sigma_B=0})(1+\lambda\one{\sigma_C=0})}}.
\]
Thus $\mu_{H_0}$ has density at most $(1+\lambda)^2$ with respect
to $\nu$. Moreover, $\mu_{G'}$ is $\mu_{H_0}$ conditioned on the
absence of occupied matched pairs. That event contains $\{\sigma_B=0\}$,
whose probability is at least $(1+\lambda)^{-d}$ by successively
conditioning on the spins outside each vertex in $B$. Hence
$\mu_{G'}$ has density at most $(1+\lambda)^d$ with respect to
$\mu_{H_0}$.

{Restricting $\mu_G$ to the coordinates in $V(H_0)$ changes its averaged pairwise total-variation correlation by only $O(n^{-1})$, so the pairwise-decorrelation condition~\eqref{eq:measure-pairwise-decorrelation} passes to $\nu$. The preceding density bounds give $\dif\mu_{G'}/\dif\nu\le(1+\lambda)^{d+2}$, so \Cref{lem:rs-measure-properties}(ii) implies that the mean absolute difference of the single-site marginals of $\nu$ and $\mu_{G'}$ is $o(1)$ in probability and hence in expectation.} This proves the first assertion.

For every {directed} edge $u\to v$ in a simple graph $H$, the hardcore
constraint gives
\[
 {m_{u\to v,H}}=\mu_H(\sigma_u=1\mid\sigma_v=0)
 =\frac{p_{u,H}}{1-p_{v,H}}.
\]
{All single-site occupation probabilities lie in $[0,\lambda/(1+\lambda)]$, on which $(p,q)\mapsto p/(1-q)$ is $(1+\lambda)^2$-Lipschitz. The directed-edge sets of $G$ (total $dn$) and $G'$ (total $d(n-2)$) share all $d(n-2)-2d$ directed edges of $H_0$, so for any test function $f$ with $\sup_{\mathcal I_\lambda}|f|+\sup_{x\ne y}\frac{|f(x)-f(y)|}{|x-y|}\le1$,
\[
\begin{aligned}
  \Bigl|\int f\dif\eta_G-\int f\dif\eta_{G'}\Bigr|
  &\le \frac{4d}{dn}+\frac{(1+\lambda)^2}{d(n-2)}\sum_{u\to v\text{ in }H_0}\bigl(|p_{u,G}-p_{u,G'}|+|p_{v,G}-p_{v,G'}|\bigr)\\
  &\le \frac{4}{n}+\frac{2(1+\lambda)^2}{n-2}\sum_{u\in V(H_0)}|p_{u,G}-p_{u,G'}|.
\end{aligned}
\]
Taking the supremum over $f$ and then the expectation gives the second assertion from the first.}
\end{proof}

We first combine the graph-law comparison with replica symmetry to evaluate the logarithmic marginal terms and prove \Cref{fact:cavity-marginal-estimates}.

\begin{proof}[Proof of~\Cref{fact:cavity-marginal-estimates}]
{Conditioning on $\sigma_{V(G)\setminus\{a,a'\}}$ shows that $\mu_G(\sigma_a=0)\ge(1+\lambda)^{-1}$ and $\mu_G(\sigma_a=0,\sigma_{a'}=1)\ge\lambda(1+\lambda)^{-(d+2)}$, while conditioning successively on the spins outside each vertex of $C$ gives $\mu_{G'}(\sigma_C=0)\ge(1+\lambda)^{-d}$. Likewise $p_u\in[\lambda(1+\lambda)^{-(d+1)},\lambda/(1+\lambda)]$ for every vertex $u$, so all logarithms below are uniformly bounded and Lipschitz in the underlying probabilities.}
Let $(u,w)$ be a uniformly chosen ordered pair of distinct vertices of $G$. Conditioning on $\operatorname{dist}(u,w)\ge4$ changes its law by $O(n^{-1})$ in total variation. Hence, by replica symmetry,
\begin{align*}
  \E{\log\mu_G(\sigma_a=0,\sigma_{a'}=1)}
  &=\E[G,u,w]{\log\mu_G(\sigma_u=0,\sigma_w=1)}+o(1)\\
  &=\E[G,u,w]{\log(1-p_u)+\log p_w}+o(1)\\
  &=\E{\log(1-p_v)}+\E{\log p_v}+o(1).
\end{align*}

For the first identity, let $x$ be a uniformly chosen vertex of $H$. Under $\mathbb R_{n-2}$, the endpoints $c_1,\ldots,c_d$ are independent uniform vertices. Furthermore, by~{\Cref{lem:rs-measure-properties}(i)}, replica symmetry implies asymptotic factorization at any fixed number of uniformly sampled vertices.
{Thus~\Cref{lem:cavity-increment-comparison} (changing the measure from $\mathbb M_n$ to $\mathbb R_{n-2}$ at cost $O(n^{-1})$ since $\log\mu_H(\sigma_C=0)$ is bounded) and replica symmetry under $H\sim\+G_{n-2,d}$ give the first two equalities below. Rewriting the third line back under $\mathbb M_n$ via~\Cref{lem:cavity-increment-comparison} and then applying the first bound of~\Cref{fact:cavity-switch-stability} (together with the uniform Lipschitz continuity of $\log(1-p)$ on $[0,\lambda/(1+\lambda)]$) gives the last equality:}
\begin{align*}
  \E[\mathbb M_n]{\log\mu_{G'}(\sigma_C=0)}
  &=\E[\mathbb R_{n-2}]{\log\mu_H(\sigma_C=0)}+o(1)\\
  &=d\E[H\sim\+G_{n-2,d},x]{\log(1-p_{x,H})}+o(1)\\
  &=d\E[G\sim\+G_{n,d},v]{\log(1-p_v)}+o(1). \qedhere
\end{align*}
\end{proof}

We next identify the joint limiting law of the cavity message distribution and the single-site occupation probability by establishing asymptotic factorization of the neighbor-spin marginal in the graph obtained by deleting a typical root.

\begin{proof}[Proof of~\Cref{fact:cavity-prob}]
Let $\sigma \sim \mu_{G'}$, ${\mu_{\*e}}$ be the law of $(\sigma_{b_i},\sigma_{c_i})_{i \le d}$, and ${\mu_{\*e}^{\mathrm{prod}}}$ be the product distribution $\bigotimes_{i=1}^d {\mu_{G',b_i,c_i}}$.
{Under $\mathbb R_{n-2}$, the marked {directed} edges are sampled independently and uniformly. For an edge $bc$, the hardcore constraint gives $\one{\sigma_b=1,\sigma_c=0}=\sigma_b$, $\one{\sigma_b=0,\sigma_c=1}=\sigma_c$, $\one{\sigma_b=\sigma_c=0}=1-\sigma_b-\sigma_c$, and $\one{\sigma_b=\sigma_c=1}=0$. Expand the indicators of the finitely many joint configurations on the edges $b_ic_i$ into products involving at most one endpoint of each edge. The selected endpoints are independent uniform vertices, so {multi-point factorization (\Cref{lem:rs-measure-properties}(i))} yields the estimate below under $\mathbb R_{n-2}$. The marked comparison in \Cref{lem:cavity-increment-comparison} transfers it to $\mathbb M_n$:}
\[
  \E[\mathbb{M}_n]{\DTV{{\mu_{\*e}}}{{\mu_{\*e}^{\mathrm{prod}}}}} = o(1).
\]
Define $Q_i\defeq\mu_{G'-c_i}(\sigma_{b_i}=1)
=\mu_{G'}(\sigma_{b_i}=1\mid\sigma_{c_i}=0)${ $=m_{b_i\to c_i,G'}$}.
Conditioning on $\sigma_C=0$ removes the matching edges, and its probability
is uniformly bounded below {by $(1+\lambda)^{-d}$}. Thus the preceding factorization across the edges $b_ic_i$ gives
\[
  \E{\DTV{\mu_{H_0}(\sigma_B\in\cdot\mid\sigma_C=0)}
       {\bigotimes_{i=1}^d\operatorname{Bernoulli}(Q_i)}}=o(1).
\]
{We claim that the same product approximation holds without conditioning on $\sigma_C=0$ after adding $a'$ back:
\begin{equation}\label{eq:fact-hc-product}
  \E{\DTV{{\mu_{G-a}(\sigma_B\in\cdot)}}
       {\bigotimes_{i=1}^d\operatorname{Bernoulli}(Q_i)}}=o(1).
\end{equation}
Indeed, let $\Sigma_G$ be the covariance matrix of the occupation indicators under $\mu_G$; by~\Cref{lem:rs-overlap-characterization}, replica symmetry implies $\E{\|\Sigma_G\|_{\mathrm{op}}}/n=o(1)$. Fix $G,a$, and write $\nu=\mu_{G-a}$ for the restriction of $\mu_G(\cdot\mid\sigma_a=0)$ to the remaining vertices. Since $\mu_G(\sigma_a=0)\ge(1+\lambda)^{-1}$, the variational characterization of variance gives $\|\Sigma_\nu\|_{\mathrm{op}}\le(1+\lambda)\|\Sigma_G\|_{\mathrm{op}}$. For any ${g_B}\in[0,1]$ depending only on $\sigma_B$, the joint covariance matrix gives
\[
  \sum_{v\ne a}\operatorname{Cov}_\nu({g_B},\sigma_v)^2
  \le\operatorname{Var}_\nu({g_B})\|\Sigma_\nu\|_{\mathrm{op}}.
\]
Given $G,a$, the second root $a'$ is uniform outside $O_d(1)$ vertices; thus its covariance with ${g_B}$ is $o(1)$ on average, uniformly in ${g_B}$. Moreover, $\nu(\sigma_{a'}=1)\ge\lambda(1+\lambda)^{-(d+1)}$. Applying the covariance bound to each of the finitely many configurations on $B$ shows
\[
 \E{\DTV{{\nu(\sigma_B\in\cdot)}}{\nu(\sigma_B\in\cdot\mid\sigma_{a'}=1)}}=o(1).
\]
Finally, occupying $a'$ forces $C$ vacant, so $\nu(\sigma_B\in\cdot\mid\sigma_{a'}=1)=\mu_{H_0}(\sigma_B\in\cdot\mid\sigma_C=0)$. Together with the preceding conditional product estimate, this proves~\eqref{eq:fact-hc-product}.}

{With $\mathcal U(\*m)=\lambda\prod_i(1-m_i)/\bigl(1+\lambda\prod_i(1-m_i)\bigr)$ as in \Cref{fact:cavity-prob}, since conditional on $G$ every vertex has only $O_d(1)$ vertices within distance~$3$, the marginal law of the root $a$ is within $O_d(n^{-1})$ in total variation of the uniform vertex $v$; together with $p_a=\lambda\mu_{G-a}(\sigma_B=0)/\bigl(1+\lambda\mu_{G-a}(\sigma_B=0)\bigr)$, the approximation~\eqref{eq:fact-hc-product}, and the second bound of~\Cref{fact:cavity-switch-stability}, this gives the first equality below for every $f\in C(\mathcal P(\mathcal I_\lambda)\times[0,1])$. Changing the measure of $(G',\*e)$ from $\mathbb M_n$ to $\mathbb R_{n-2}$ via~\Cref{lem:cavity-increment-comparison} (under which $Q_i=m_{b_i\to c_i,H}$ are i.i.d.\ draws from $\eta_H$) gives the second equality, and changing back from $\mathbb R_{n-2}$ to $\mathbb M_n$ via~\Cref{lem:cavity-increment-comparison} followed by~\Cref{fact:cavity-switch-stability} gives the third:}
\begin{align*}
  \E{f(\eta_G,p_v)}
  &=\E{f\bigl(\eta_{G'},\mathcal U(Q_1,\ldots,Q_d)\bigr)}+o(1)\\
  &={\E[H\sim\+G_{n-2,d}]{\int f\bigl(\eta_H,\mathcal U(\*m)\bigr)
                   \,\eta_H^{\otimes d}(\dif\*m)}+o(1)}\\
  &={\E{\int f\bigl(\eta_G,\mathcal U(\*m)\bigr)
                   \,\eta_G^{\otimes d}(\dif\*m)}+o(1).}
\end{align*}
Along $n_k$, the last expression converges to
{$\int\!\int f(\eta,\mathcal U(\*m))\,\eta^{\otimes d}(\dif\*m)\,\Lambda(\dif\eta)$}, proving the claim.
\end{proof}

Finally, we apply the factorization established in the proof of~\Cref{fact:cavity-prob} and prove~\Cref{fact:fixed-point}.

\begin{proof}[Proof of~\Cref{fact:fixed-point}]
Let $A=\mu_{G-a}(\sigma_B=0)$ and $r=\mu_{G-a}(\sigma_{b_1}=1)$. Adding $a$ back gives
\[
  p_a=\frac{\lambda A}{1+\lambda A},\qquad
  p_{b_1}=\frac{r}{1+\lambda A},\qquad
  {m_{a\to b_1,G}}=\frac{p_a}{1-p_{b_1}}
                =\frac{\lambda A}{1+\lambda A-r}.
\]
By~\eqref{eq:fact-hc-product}, we may replace $A$ by $\prod_i(1-Q_i)$ and $r$ by $Q_1$ with $o(1)$ expected absolute error. All denominators are bounded away from zero; canceling $1-Q_1$ yields
\[
  \E{\bigl|{m_{a\to b_1,G}}-\mathcal T(Q_2,\ldots,Q_d)\bigr|}=o(1).
\]

For $h\in C(\mathcal I_\lambda)$, define the continuous functional
\[
  D_h(\nu)=\int h\dif\nu-\int h\dif(T\nu).
\]
{Conditional on $G$, the approximate uniformity of $a$ established above and the uniform neighbor ordering make the directed edge $a\to b_1$ uniform up to $O(n^{-1})$ total-variation error, which gives the first equality below. The second equality replaces $\eta_G$ by $\eta_{G'}$ via~\Cref{fact:cavity-switch-stability} and $m_{a\to b_1,G}$ by $\mathcal T(Q_2,\ldots,Q_d)$ via the preceding $L^1$ bound. The third equality changes the law of $(G',\*e)$ from $\mathbb M_n$ to $\mathbb R_{n-2}$ via~\Cref{lem:cavity-increment-comparison} (under which $Q_2,\ldots,Q_d$ are i.i.d.\ draws from $\eta_H$, so the conditional expectation of $h(\mathcal T(Q_2,\ldots,Q_d))$ is $\int h\dif(T\eta_H)$ by definition of $T$) and then returns from $\mathbb R_{n-2}$ to $\mathbb M_n$. The fourth equality replaces $\eta_{G'}$ back by $\eta_G$ via~\Cref{fact:cavity-switch-stability}:}
\begin{align*}
  \E{D_h(\eta_G)\int h\dif\eta_G}
  &=\E{D_h(\eta_G)h({m_{a\to b_1,G}})}+o(1)\\
  &=\E{D_h(\eta_{G'})h(\mathcal T(Q_2,\ldots,Q_d))}+o(1)\\
  &=\E{D_h(\eta_{G'})\int h\dif(T\eta_{G'})}+o(1)\\
  &=\E{D_h(\eta_G)\int h\dif(T\eta_G)}+o(1).
\end{align*}
Hence $\E{D_h(\eta_G)^2}=o(1)$, and $\Lambda_{n_k}\xrightarrow{d}\Lambda$ implies
\[
  \int D_h(\eta)^2\,\Lambda(\dif\eta)=0.
\]
Taking $h$ in a countable dense subset of $C(\mathcal I_\lambda)$ shows that $T\eta=\eta$ for $\Lambda$-almost every $\eta$.
\end{proof}

\subsection{Replica-Symmetric Free-Energy Proofs for the Potts Model}\label{subsec:app-potts-rs}\label{app:potts-rs}
We now prove the replica-symmetric free-energy formula for the Potts model.

\begin{proof}[Proof of~\Cref{lem:potts-pointwise-rs-free-energy}]
At $\beta=1$, both conclusions are immediate from $Z_{G,q}(1)=q^n$; assume $0<\beta<1$.{

\paragraph{Step 1: State-weighted directed-edge message convergence.}
Write $\mu_G=\mu_{G,q,\beta}$ and $\psi(s,t)=\beta^{\one{s=t}}$. For probability vectors $\*x,\*y\in\Delta_{[q]}$, put
\[
 R_{\*x,\*y}(s,t)=\frac{\psi(s,t)x_sy_t}{\sum_{i,j}\psi(i,j)x_i y_j}.
\]
The unique pair $(\*x,\*y)$ for which $R_{\*x,\*y}$ has two uniform marginals is $(\*U_q,\*U_q)$: indeed, uniform marginals imply $x_i(1-(1-\beta)y_i)=y_i(1-(1-\beta)x_i)$ for all $i$, so $x_i=y_i$ and $x_i(1-(1-\beta)x_i)$ is independent of $i$; since $1-(1-\beta)(x_i+x_j)>0$, all $x_i$ are equal.

By color symmetry, $\mu_{G,v}=\*U_q$ for every vertex. Let $\Sigma_G$ be the covariance matrix under $\mu_G$ of the indicators $\one{\sigma_v=a}$, for $v\in[n]$ and $a\in[q]$, and put $r_n(G)=\|\Sigma_G\|_{\mathrm{op}}/n$. Replica symmetry gives $\E{r_n(G)}=o(1)$, as in the proof of \Cref{lem:rs-measure-properties}. Choose $L_n\to\infty$ sufficiently slowly that $(L_n+1)\E{r_n(G)}\to0$. By the integer-degree Potts Bethe-state decomposition~\cite[Theorems~2.5 and~7.4 and eq.~(2.4)]{coja2019spin}, $[q]^n$ is partitioned into an exceptional class $S_0$ with $\E{\mu_G(S_0)}=o(1)$ and at most $L_n$ Bethe states $S_1,\ldots,S_\ell$ ($\ell\le L_n$), each equipped with conditional single-site marginals $\mu_{G,v}(\cdot\mid S_h)$ and conditional cavity messages $\mu_{G,u\to v}(\cdot\mid S_h)\in\Delta_{[q]}$ defined on the directed edges $u\to v$ of $G$, such that the conditional two-point marginals on edges $uv\in E(G)$ satisfy the state-weighted edge approximation
\[
  \E{\sum_{h=1}^\ell\mu_G(S_h)\frac1{|E(G)|}\sum_{uv\in E(G)}\DTV{\mu_{G,u,v}(\cdot\mid S_h)}{R_{\mu_{G,u\to v}(\cdot\mid S_h),\,\mu_{G,v\to u}(\cdot\mid S_h)}}}=o(1).
\]
Applying~\eqref{eq:state-partition-marginals} to the partition $\{S_0,S_1,\ldots,S_\ell\}$, using $\ell+1\le L_n+1$ and Jensen's inequality, gives
\[
  \E{\sum_{h=1}^\ell\mu_G(S_h)\frac1n\sum_{v\in[n]}\DTV{\mu_{G,v}(\cdot\mid S_h)}{\*U_q}}
 \le\frac12\sqrt{q(L_n+1)\E{r_n(G)}}=o(1).
\]
Since $G$ is $d$-regular, the endpoints of a uniform directed edge are uniform vertices, so the same bound holds when averaged over the endpoints of a uniform directed edge. By the triangle inequality, the two single-site marginals of $R_{\mu_{G,u\to v}(\cdot\mid S_h),\,\mu_{G,v\to u}(\cdot\mid S_h)}$ are within $o(1)$ of $\*U_q$ in the state-and-edge average. Because $(\*x,\*y)=(\*U_q,\*U_q)$ is the unique pair in the compact space $\Delta_{[q]}^2$ for which $R_{\*x,\*y}$ has two uniform marginals, compactness implies that the conditional cavity messages converge to $\*U_q$ in the state-weighted directed-edge average:
\begin{equation}\label{eq:potts-directed-edge-message-avg}
  \E{\sum_{h=1}^\ell\mu_G(S_h)\frac1{dn}\sum_{u\to v}\norm{\mu_{G,u\to v}(\cdot\mid S_h)-\*U_q}_1}=o(1).
\end{equation}

\paragraph{Step 2: Transfer to the cavity boundary.}
We now apply the Potts Bethe-state free-energy formula of~\cite[Theorems~2.6 and~7.5 and eqs.~(2.4) and~(2.10)--(2.12)]{coja2019spin} (for which condition \textup{\textbf{POS}} holds by~\cite[Lemma~7.3]{coja2019spin}). In that construction, given $G$ and $\omega>0$, one draws independent $X,Y\sim\mathrm{Poisson}(\omega)$ and, independently, uniformly random vertices $u_1,\ldots,u_X\in V(G)$ and uniformly random undirected edges $e_1=v_1w_1,\ldots,e_Y=v_Yw_Y\in E(G)$. Let $\mathcal C(G)$ be the cavity set consisting of all vertices of degree strictly less than $d$ in the subgraph obtained from $G$ by deleting $u_1,\ldots,u_X$ and $e_1,\ldots,e_Y$. Fix a truncation level $K_0\ge1$ and work on the event $\{1\le X+Y\le K_0\}$. Define the overlap event $\mathcal O_{X,Y}$ on which {two deleted objects with distinct draw indices}, among $U_i\defeq\{u_i\}$ ($1\le i\le X$) and $E_j\defeq\{v_j,w_j\}$ ($1\le j\le Y$), have graph distance at most~$2$ in $G$. Because $G$ is $d$-regular, the distance-$2$ neighborhood of any vertex or edge has size $O_d(1)$, so a union bound over the at most $\binom{K_0}{2}$ pairs of distinct deleted objects gives $\Pr{\mathcal O_{X,Y}\mid G,X,Y}=O_{d,K_0}(n^{-1})$ whenever $1\le X+Y\le K_0$. Outside $\mathcal O_{X,Y}$, the $dX+2Y$ vertices in $\bigcup_{i=1}^X N_G(u_i)\cup\bigcup_{j=1}^Y\{v_j,w_j\}$ are all distinct in $V(G)\setminus\{u_1,\ldots,u_X\}$ and each loses a single incident edge of $G$, so every $c\in\mathcal C(G)$ has degree $d-1$ with a unique missing $G$-neighbor $c'$, $|\mathcal C(G)|=dX+2Y$, and the $dX+2Y$ directed edges $(c\to c')_{c\in\mathcal C(G)}$ are precisely the $dX$ directed edges $v\to u_i$ ($1\le i\le X$, $v\in N_G(u_i)$) together with both orientations $v_j\to w_j$ and $w_j\to v_j$ of each deleted edge $e_j$ ($1\le j\le Y$). For any non-negative function $f$ on the $dn$ directed edges of the $d$-regular graph $G$, a uniform vertex $u_i\in V(G)$ and a uniform undirected edge $e_j=v_jw_j\in E(G)$ satisfy
\[
\begin{aligned}
  \E{\sum_{v\in N_G(u_i)}f(v\to u_i)\;\middle|\;G}&=\frac1n\sum_{u\to w}f(u\to w),\\
  \E{f(v_j\to w_j)+f(w_j\to v_j)\;\middle|\;G}&=\frac2{dn}\sum_{u\to w}f(u\to w).
\end{aligned}
\]
Summing over $1\le i\le X$ and $1\le j\le Y$ on $\{1\le X+Y\le K_0\}\cap\mathcal O_{X,Y}^c$ and applying~\eqref{eq:potts-directed-edge-message-avg} therefore transfers the directed-edge estimate to all cavity boundary pairs $(c,c')$:
\begin{equation}\label{eq:potts-cavity-boundary-message-avg}
\begin{aligned}
  &\E{\one{1\le X+Y\le K_0,\,\mathcal O_{X,Y}^c}\sum_{h=1}^\ell\mu_G(S_h)\sum_{c\in\mathcal C(G)}\norm{\mu_{G,c\to c'}(\cdot\mid S_h)-\*U_q}_1}\\
  &\qquad\le K_0\,\E{\sum_{h=1}^\ell\mu_G(S_h)\frac1n\sum_{u\to w}\norm{\mu_{G,u\to w}(\cdot\mid S_h)-\*U_q}_1}
  =o(1).
\end{aligned}
\end{equation}

\paragraph{Step 3: Reweighted Bethe functional and order of limits.}
In that free-energy formula, the exceptional class $S_0$ is omitted and the non-exceptional states $h\in\{1,\ldots,\ell\}$ are reweighted by $\*z_{G,h}/\*z_G$, where $\*z_G\defeq\sum_{h=1}^\ell\*z_{G,h}$ and
\[
\begin{aligned}
  \*z_{G,h}
  \defeq\mu_G(S_h)
    &\prod_{i=1}^X\left(\sum_{\sigma\in[q]}\prod_{v\in N_G(u_i)}\bigl(1-(1-\beta)\mu_{G,v\to u_i}(\sigma\mid S_h)\bigr)\right)^{-1}\\
    &\times\prod_{j=1}^Y\left(1-(1-\beta)\sum_{\sigma\in[q]}\mu_{G,v_j\to w_j}(\sigma\mid S_h)\mu_{G,w_j\to v_j}(\sigma\mid S_h)\right)^{-1}.
\end{aligned}
\]
For $0<\beta<1$ and any messages in $\Delta_{[q]}$, each vertex factor lies in $[q\beta^d,q]$ and each edge factor lies in $[\beta,1]$. Thus, on $\{X+Y\le K_0\}$, every $h\in\{1,\ldots,\ell\}$ satisfies $q^{-K_0}\mu_G(S_h)\le\*z_{G,h}\le\beta^{-dK_0}\mu_G(S_h)$. On the high-probability event
\[
  \mathcal M_G\defeq\Bigl\{\sum_{h=1}^\ell\mu_G(S_h)=1-\mu_G(S_0)\ge1/2\Bigr\}\qquad\bigl(\Pr{\mathcal M_G^c}\le2\E{\mu_G(S_0)}=o(1)\bigr),
\]
summing the lower bound gives $\*z_G\ge\frac12 q^{-K_0}$, and hence $\*z_{G,h}/\*z_G\le2(q\beta^{-d})^{K_0}\mu_G(S_h)$ for all $1\le h\le\ell$. Together with~\eqref{eq:potts-cavity-boundary-message-avg}, this implies that on $\mathcal E_{K_0}\defeq\{1\le X+Y\le K_0\}\cap\mathcal O_{X,Y}^c\cap\mathcal M_G$,
\[
  \E{\one{\mathcal E_{K_0}}\sum_{h=1}^\ell\frac{\*z_{G,h}}{\*z_G}\sum_{c\in\mathcal C(G)}\norm{\mu_{G,c\to c'}(\cdot\mid S_h)-\*U_q}_1}=o(1).
\]
{Call the construction admissible if $\mathcal C(G)\ne\varnothing$, every $c\in\mathcal C(G)$ has a unique missing $G$-neighbor $c'$, and $\*z_G>0$. On the admissible event, sample $c_1,c_2,\ldots$ uniformly and independently from $\mathcal C(G)$ and use the corresponding missing neighbors $c'_i$. On the complement, define the integrand below to be $0$. Any uniformly bounded completion on this event gives the same outer $\omega\to\infty$ limit: for each fixed $\omega$, the probability of inadmissibility is at most $\e^{-2\omega}+o_n(1)$, as follows from the overlap and $\mathcal M_G$ bounds above by first truncating $X+Y$ and then removing the truncation.}

The free-energy formula of~\cite[Theorems~2.6 and~7.5]{coja2019spin} (after restoring our $q^n$ prior normalization) states that $\lim_{n\to\infty}n^{-1}\E{\log Z_{G,q}(\beta)}=\liminf_{\omega\to\infty}\liminf_{n\to\infty}\E{\mathcal B_{\mathrm{CP}}(G)}$ (and in fact the outer limit exists), where $\mathcal B_{\mathrm{CP}}(G)$ is the conditional expectation given $G$ of {the following expression on the admissible event, with the fallback just specified otherwise:}
\[
\begin{aligned}
  &\log\sum_{h=1}^\ell\frac{\*z_{G,h}}{\*z_G}\sum_{\sigma\in[q]}\prod_{i=1}^d\bigl(1-(1-\beta)\mu_{G,c_i\to c'_i}(\sigma\mid S_h)\bigr)\\
  &\qquad-\frac d2\log\sum_{h=1}^\ell\frac{\*z_{G,h}}{\*z_G}\left(1-(1-\beta)\sum_{\sigma\in[q]}\mu_{G,c_1\to c'_1}(\sigma\mid S_h)\mu_{G,c_2\to c'_2}(\sigma\mid S_h)\right).
\end{aligned}
\]
Since the vertex and edge expressions inside the logarithms lie in $[q\beta^d,q]$ and $[\beta,1]$ and are Lipschitz on $\Delta_{[q]}^d$ and $\Delta_{[q]}^2$, the integrand is uniformly bounded by a constant $M_{d,q,\beta}<\infty$ and converges in $L^1$ on $\mathcal E_{K_0}$ as $n\to\infty$ (for fixed $\omega,K_0$) to $\log(qs^d)-\frac d2\log s=\phi_{d,q}(\beta)$ with $s=(q-1+\beta)/q$. Because $\Pr{1\le X+Y\le K_0,\,\mathcal O_{X,Y}}=O_{d,K_0}(n^{-1})=o(1)$ and $\Pr{\mathcal M_G^c}=o(1)$, we have
\[
  \Pr{\mathcal E_{K_0}^c}\le\Pr{X+Y=0}+\Pr{X+Y>K_0}+o(1)=\e^{-2\omega}+\Pr{\mathrm{Poisson}(2\omega)>K_0}+o(1),
\]
and therefore, for every fixed $\omega>0$ and $K_0\ge1$,
\[
  \limsup_{n\to\infty}\bigl|\E{\mathcal B_{\mathrm{CP}}(G)}-\phi_{d,q}(\beta)\bigr|
  \le \bigl(M_{d,q,\beta}+|\phi_{d,q}(\beta)|\bigr)\bigl(\e^{-2\omega}+\Pr{\mathrm{Poisson}(2\omega)>K_0}\bigr);
\]
letting $K_0\to\infty$ at fixed $\omega$ and then $\omega\to\infty$ yields}
\[
  \lim_{n\to\infty}\frac1n\E{\log Z_{G,q}(\beta)}
  =\log(qs^d)-\frac d2\log s
  =\phi_{d,q}(\beta).
\]
{Finally, concentration~\eqref{eq:regular-logz-concentration}, after conditioning on simplicity as explained there, gives $n^{-1}(\log Z_{G,q}(\beta)-\E{\log Z_{G,q}(\beta)})\xrightarrow{\mathrm p}0$. Combining this with the expectation limit yields $n^{-1}\log Z_{G,q}(\beta)\xrightarrow{\mathrm p}\phi_{d,q}(\beta)$.}
\end{proof}

\subsection{Variational Bridges for the Hardcore and Potts Bethe Functionals}\label{subsec:app-variational-bridges}\label{app:variational-bridges}
{
In this final subsection, we prove \Cref{prop:nu-eta-fractional-correspondence}, which relates the hardcore fractional-moment functional $\mathcal F_m$ to the canonical Bethe functional $\mathcal B_{d,\alpha}$. We then establish the Potts free-energy criterion and use it to prove \Cref{lem:strict-free-energy-gaps}(ii).

\begin{proof}[Proof of~\Cref{prop:nu-eta-fractional-correspondence}]
For independent $Q_1,\ldots,Q_d\sim\eta_m$ written as $Q_i=u_i/a(u_i)$ with $u_i\sim\nu_m$, substituting into $V^{\mathrm{BP}}=1+\lambda\prod_{i=1}^d(1-Q_i)$ and $E^{\mathrm{BP}}=1-Q_1Q_2$ yields
\[
  V^{\mathrm{BP}}=\frac{\prod_{i=1}^d a(u_i)+\theta\prod_{i=1}^d(1-u_i)}
          {\prod_{i=1}^d a(u_i)},\qquad
  E^{\mathrm{BP}}=\frac{a(u_1)a(u_2)-u_1u_2}{a(u_1)a(u_2)}.
\]
For $\*\mu_i=(1-u_i,u_i)^\trans$, the normalized hardcore kernel satisfies $(M_\alpha\*\mu_i)_0=a(u_i)/(1-\alpha)$ and $(M_\alpha\*\mu_i)_1=(1-u_i)/(1-\alpha)$. Substitution into~\eqref{eq:marked-functional} gives
\[
  V_{d,\alpha}
  =\frac{\prod_{i=1}^d a(u_i)+\theta\prod_{i=1}^d(1-u_i)}
         {(1-\alpha)^{d-1}},\qquad
  E_\alpha=\frac{a(u_1)a(u_2)-u_1u_2}{1-2\alpha}.
\]
Consequently, the two pairs of factors satisfy the pointwise identities
\[
  V^{\mathrm{BP}}=(1-\alpha)^{d-1}
       \frac{V_{d,\alpha}}{\prod_{i=1}^d a(u_i)},
  \qquad
  E^{\mathrm{BP}}=(1-2\alpha)\frac{E_\alpha}{a(u_1)a(u_2)}.
\]
Taking $m$th moments under $\eta_m$ cancels the denominators against the tilt densities, giving
\[
  \E[\eta_m]{(V^{\mathrm{BP}})^m}=\frac{(1-\alpha)^{m(d-1)}}{\bigl(\E[\nu]{a(u)^m}\bigr)^d} \cdot \E[\nu]{V_{d,\alpha}^m} \qquad \text{and} \qquad \E[\eta_m]{(E^{\mathrm{BP}})^m}=\frac{(1-2\alpha)^m}{\bigl(\E[\nu]{a(u)^m}\bigr)^2} \cdot \E[\nu]{E_\alpha^m},
\]
where each expectation uses independent inputs with the indicated law. Substitution into $\mathcal F_m(\eta_m)=\frac{1}{m} \log\E[\eta_m]{(V^{\mathrm{BP}})^m}-\frac{d}{2m}\log\E[\eta_m]{(E^{\mathrm{BP}})^m}$ cancels the normalizing factors against $\phi_d(\alpha)=(d-1)\log(1-\alpha)-\frac d2\log(1-2\alpha)$ and proves~\eqref{eq:Phi-normalized-fractional-correspondence};~\eqref{eq:fractional-moment-derivative} then identifies the left derivative at $m=1$ with $\mathcal B_{d,\alpha}(\nu)$.
\end{proof}

For the Potts model at positive interaction parameter, we use the following strict-gap consequence of the general free-energy criterion in~\cite[Theorem~17.2 and Proposition~12.2]{coja2020inference}.
\begin{lemma}[Potts free-energy criterion]\label{lem:local-potts-free-energy-criterion}
Fix integers $d\ge3$ and $q\ge2$, and let $\beta\in(0,1)$. If $\mathcal B_{d,\beta}(\nu)>0$ for some $\nu\in\mathcal P_{\*U_q}$, then
\[
 \limsup_{n\to\infty}\frac1n
        \E[G\sim\+G_{n,d}]{\log Z_{G,q}(\beta)}
        <\phi_{d,q}(\beta).
\]
\end{lemma}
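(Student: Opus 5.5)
The natural route mirrors the hardcore proof of \Cref{lem:strict-free-energy-gaps}(i): a positive Bethe witness is turned into a fractional-moment upper bound strictly below $\phi_{d,q}(\beta)$. There are two candidate sources for that bound. The first is the bipartite factor-graph free-energy criterion of \cite[Theorem~17.2 and Proposition~12.2]{coja2020inference}, which applies because the Potts weights are strictly positive when $\beta\in(0,1)$. The second is the fractional-moment bound \Cref{lem:local-potts-fractional-upper}, which is already available at $(d,q,\beta)$. I would use the second, since it is the more self-contained. It gives, for every $m\in(0,1)$,
\[
 \limsup_{n\to\infty}\frac1n\E[G\sim\+G_{n,d}]{\log Z_{G,q}(\beta)}\le H_m(\beta)\defeq\phi_{d,q}(\beta)+\frac1m\log\E{V_{d,\beta}^{\,m}}-\frac d{2m}\log\E{E_\beta^{\,m}},
\]
where $V_{d,\beta}$ and $E_\beta$ are the cavity factors evaluated at independent messages drawn from the witness $\nu$.

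The first step is to check that these factors are uniformly bounded and have mean $1$. Since $M_\beta\*\mu=\allone+q\kappa_\beta(\*\mu-\*U_q)$, we have $(M_\beta\*\mu)_a\in[q\beta/(q-1+\beta),\,q/(q-1+\beta)]$. Hence $V_{d,\beta}$ and $E_\beta$ are bounded above, and they are also bounded below by positive constants because $\beta>0$. Because the messages have mean $\*U_q$ and $M_\beta\*U_q=\allone$, independence gives $\E{V_{d,\beta}}=\E{E_\beta}=1$. Both factors therefore satisfy the hypotheses of \eqref{eq:fractional-moment-derivative}. Consequently $H_1(\beta)=\phi_{d,q}(\beta)$, and the left derivative is
\[
 \left.\partial_mH_m(\beta)\right|_{m=1^-}=\E{V_{d,\beta}\log V_{d,\beta}}-\frac d2\E{E_\beta\log E_\beta}=\mathcal B_{d,\beta}(\nu)>0 .
\]

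The second step is to pick a fixed $m_0<1$ close to $1$, so that $H_{m_0}(\beta)\le\phi_{d,q}(\beta)-\frac12(1-m_0)\mathcal B_{d,\beta}(\nu)$. This yields the strict gap $\limsup_n n^{-1}\E{\log Z_{G,q}(\beta)}<\phi_{d,q}(\beta)$ that the lemma asserts.

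The main obstacle is the validity of \Cref{lem:local-potts-fractional-upper} itself for an arbitrary witness $\nu$, possibly with messages on the boundary of the simplex. If the cited interpolation bound of \cite{ayre2022lower} requires messages in the interior of the simplex, I would first mix the witness toward uniform, replacing $\*\mu$ by $(1-\delta)\*\mu+\delta\*U_q$ exactly as in the hardcore proof. Continuity of $\mathcal B_{d,\beta}$ in $\delta$ preserves $\mathcal B>0$ for small $\delta$, and the mean constraint is unaffected.

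Finally, the passage between the configuration model and $\+G_{n,d}$ costs only $O(\sqrt n)$ in $\E{\log Z}$, by \eqref{eq:regular-logz-concentration} and \Cref{lem:simplicity-prob}. It therefore does not affect the strict inequality.
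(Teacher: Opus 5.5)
Your argument is correct, but it is not the route the paper takes for this lemma. The paper specializes the bipartite factor-graph criterion of \cite{coja2020inference}. It checks the identity $\mathcal B_{\mathrm{bip}}(\nu)=\phi_{d,q}(\beta)+\mathcal B_{d,\beta}(\nu)$, so a positive witness triggers the strict-gap alternative of \Cref{lem:ckpz-potts-free-energy} for the factor-graph ensemble. It then transfers the gap to $\+G_{n,d}$ using $\operatorname{Var}(\log Z_{G,q}(\beta))=O(n)$ and the fact that the loop-free and simple events have probability bounded away from zero.

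You instead apply the fractional-moment bound at $\beta$ itself and differentiate at $m=1^-$ via \eqref{eq:fractional-moment-derivative}. This is exactly the mechanism the paper uses for the hardcore gap (\Cref{lem:strict-free-energy-gaps}(i)). It is also the mechanism for the coloring gap (\Cref{lem:coloring-free-energy-gap}), after moving by continuity to a small $\beta_0>0$ and using $Z_{G,q}(0)\le Z_{G,q}(\beta_0)$. Your route is shorter and treats all three models uniformly. It also needs no ensemble comparison, since the bound is already stated on $\+G_{n,d}$. The paper's route rests on a result that additionally gives the converse (the Bethe identity holds iff $\max_\nu\mathcal B_{\mathrm{bip}}=\phi_{d,q}(\beta)$), though only the gap direction is used here. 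Your boundedness and mean-one checks are right. Since $\beta>0$ keeps all cavity factors bounded away from $0$, the mixing toward $\*U_q$ is not needed.

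One thing to patch: \Cref{lem:local-potts-fractional-upper} is stated in the paper only for $d,q\ge3$, because it sits in the coloring subsection. The present statement, however, includes the Ising case $q=2$. The restriction is an artifact of context---the interpolation method is equally standard for the Ising antiferromagnet, cf.\ \cite{coja2022ising}. Still, as written you are invoking the lemma outside its stated range, so for $q=2$ you should cite an interpolation bound that explicitly covers it.
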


To prove \Cref{lem:local-potts-free-energy-criterion}, we specialize the factor-graph free-energy bounds of~\cite{coja2020inference}. The antiferromagnetic Potts interaction $\psi_\beta(a,b)\defeq\beta^{\one{a=b}}$ satisfies the hypotheses of that bound by~\cite[Lemma~16.1 and Section~16.3]{coja2020inference}.

Fix an integer $d\ge3$ and let $\+G_{\mathrm{bip}}(n,d)$ denote the bipartite factor-graph model of~\cite[Section~2.1]{coja2020inference} with $n$ variable nodes of degree $d$ and $m_{\mathrm{fac}}\sim\mathrm{Poisson}(dn/2)$ binary factor nodes, conditioned on $dn=2m_{\mathrm{fac}}$. The factor graph is uniform conditional on these degrees. We take $n$ along the admissible values for which $dn$ is even. With $\ell(u)\defeq u\log u$ and $\zeta_\beta\defeq(q-1+\beta)/q$, the functional of~\cite[Section~2.3]{coja2020inference} specialized to $\psi_\beta$ and $\nu\in\mathcal P_{\*U_q}$ is
\begin{equation}\label{eq:ckpz-bethe-functional}
  \mathcal B_{\mathrm{bip}}(\nu)
  \defeq\frac1q\E{\zeta_\beta^{-d}\ell\bigl(q\zeta_\beta^d V_{d,\beta}\bigr)}
        -\frac d{2\zeta_\beta}\E{\ell(\zeta_\beta E_\beta)},
\end{equation}
where $V_{d,\beta}$ and $E_\beta$ are the Potts cavity factors from~\eqref{eq:marked-functional} evaluated at independent draws from $\nu$.

\begin{lemma}[Bipartite factor-graph Potts free energy~{\cite[Lemma~16.1 and Theorem~17.2]{coja2020inference}}]\label{lem:ckpz-potts-free-energy}
Fix integers $d\ge3$ and $q\ge2$, and let $\beta\in(0,1)$. Along the admissible values of $n$ defined above,
\[
  \lim_{n\to\infty}\frac1n\E[G\sim\+G_{\mathrm{bip}}(n,d)]{\log Z_{G,q}(\beta)}=\phi_{d,q}(\beta)
  \quad\Longleftrightarrow\quad
  \max_{\nu\in\mathcal P_{\*U_q}}\mathcal B_{\mathrm{bip}}(\nu)=\phi_{d,q}(\beta),
\]
and if $\max_{\nu\in\mathcal P_{\*U_q}}\mathcal B_{\mathrm{bip}}(\nu)>\phi_{d,q}(\beta)$, then $\limsup_{n\to\infty}n^{-1}\E[G\sim\+G_{\mathrm{bip}}(n,d)]{\log Z_{G,q}(\beta)}<\phi_{d,q}(\beta)$.
\end{lemma}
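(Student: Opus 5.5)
The lemma is a specialization of the general factor-graph free-energy formula of~\cite[Theorem~17.2]{coja2020inference}, so the plan is to verify its hypotheses and translate its variational functional into the form~\eqref{eq:ckpz-bethe-functional}.

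\emph{Step 1: hypotheses.} Identify $\+G_{\mathrm{bip}}(n,d)$ with the random factor-graph model of~\cite[Section~2.1]{coja2020inference}. The variable degree is fixed at $d$, the factor nodes are binary with weight function $\psi_\beta(a,b)=\beta^{\one{a=b}}$, and the number of factors is Poisson, conditioned on degree consistency. Then $Z_{G,q}(\beta)$ is exactly their partition function with the uniform prior on $[q]$, and multiplicities of factors reproduce the multigraph convention for $m_G(\sigma)$. By~\cite[Lemma~16.1 and Section~16.3]{coja2020inference}, $\psi_\beta$ satisfies their conditions for $\beta\in(0,1)$:
\begin{itemize}
\item[(a)] \textbf{SYM}/\textbf{BAL}: the uniform distribution is the relevant balanced marginal, by color symmetry and because $\psi_\beta$ is doubly stochastic up to the constant $\zeta_\beta q$.
\item[(b)] \textbf{POS}: the positivity/convexity condition that drives the interpolation upper bound.
\end{itemize}
With these in hand, Theorem~17.2 gives the existence of the limit
\[
\lim_{n\to\infty} n^{-1}\E[G\sim\+G_{\mathrm{bip}}(n,d)]{\log Z_{G,q}(\beta)}=\sup_{\pi}\mathcal B^{\mathrm{CKPZ}}(\pi),
\]
where the supremum ranges over message distributions on $\Delta_{[q]}$ with mean $\*U_q$. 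This set is exactly $\mathcal P_{\*U_q}$.

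\emph{Step 2: translating the functional.} I will rewrite their Bethe functional for $\psi_\beta$ as~\eqref{eq:ckpz-bethe-functional}, using that their functional is built from $\xi=q^{-2}\sum_{a,b}\psi_\beta(a,b)=\zeta_\beta/q$ together with the vertex sum $\sum_\sigma\prod_i\sum_\tau\psi_\beta(\sigma,\tau)\mu_i(\tau)$ and the edge sum $\sum_{\sigma,\tau}\psi_\beta(\sigma,\tau)\mu_1(\sigma)\mu_2(\tau)$.
\begin{itemize}
\item[(a)] The identity $\sum_\tau\psi_\beta(\sigma,\tau)\mu(\tau)=1-(1-\beta)\mu(\sigma)=\zeta_\beta(M_\beta\*\mu)_\sigma$ turns the vertex sum into $q\zeta_\beta^dV_{d,\beta}$.
\item[(b)] The same identity turns the edge sum into $\zeta_\beta E_\beta$.
\end{itemize}
Their weighting by $\xi^{-d}$ and $\xi^{-1}$ then yields the prefactors $q^{-1}\zeta_\beta^{-d}$ and $d/(2\zeta_\beta)$. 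Compactness of $\mathcal P_{\*U_q}$ and weak continuity of $\mathcal B_{\mathrm{bip}}$ (the arguments of $\ell$ are bounded and bounded away from $0$ for $\beta>0$) turn the supremum into a maximum.

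\emph{Step 3: the baseline value.} At $\nu=\delta_{\*U_q}$ we have $V_{d,\beta}=E_\beta=1$, so
\[
\mathcal B_{\mathrm{bip}}(\delta_{\*U_q})=\log(q\zeta_\beta^d)-\tfrac d2\log\zeta_\beta=\phi_{d,q}(\beta).
\]
Hence $\max_\nu\mathcal B_{\mathrm{bip}}\ge\phi_{d,q}(\beta)$ always.

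\emph{Conclusion and main obstacle.} Combining the identity $\lim n^{-1}\E{\log Z}=\max_\nu\mathcal B_{\mathrm{bip}}$ with Step 3 gives the equivalence. The identity as stated would actually give the reverse of the strict inequality claimed in the lemma, which signals a sign convention to be checked. The correct reading is that the Bethe formula in their theorem is an extremum whose RS value is $\phi_{d,q}(\beta)$, and any positive excess of the $1$RSB-type functional over $\phi_{d,q}(\beta)$ forces the free energy strictly below $\phi_{d,q}(\beta)$, as in their Proposition~12.2. I will therefore invoke the strict-gap statement~\cite[Proposition~12.2]{coja2020inference} directly for the final assertion, rather than reading it off the identity. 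The main obstacle is precisely this bookkeeping: matching their normalization and sign conventions (prior normalization $q^n$, the $\xi$-weighting, and which side of $\phi_{d,q}(\beta)$ the functional lives on) exactly to~\eqref{eq:ckpz-bethe-functional}, so that the baseline equals $\phi_{d,q}(\beta)$ and the strict-gap direction is correct.
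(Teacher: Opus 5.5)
Your hypothesis check, the translation in Step~2, and the baseline value in Step~3 are fine. There is one slip: $\xi=q^{-2}\sum_{a,b}\psi_\beta(a,b)=(q-1+\beta)/q=\zeta_\beta$, not $\zeta_\beta/q$; with $\xi=\zeta_\beta$ the weight $1/(q\xi^d)$ gives exactly your prefactor $q^{-1}\zeta_\beta^{-d}$. The paper itself does no more than this: it imports the lemma from \cite{coja2020inference} after checking the hypotheses.

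The genuine gap is the identity in Step~1. The Bethe-functional formula ``free energy $=\sup_\nu\mathcal B(\nu)$'' of \cite{coja2017information,coja2020inference} is the free energy (equivalently, the mutual information) of the \emph{planted} teacher--student model. Its graph law $\hat{\mathbb P}$ has density $Z_{G,q}(\beta)/\E[\mathbb P]{Z_{G,q}(\beta)}$ with respect to the null law $\mathbb P=\+G_{\mathrm{bip}}(n,d)$. It is not a statement about $\mathbb P$, and for $\mathbb P$ it is false whenever $\max\mathcal B_{\mathrm{bip}}>\phi_{d,q}(\beta)$, because Jensen gives $n^{-1}\E[\mathbb P]{\log Z}\le n^{-1}\log\E[\mathbb P]{Z}=\phi_{d,q}(\beta)+o(1)$. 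So the contradiction you ran into is not a sign or normalization convention. The null limit is also not asserted to exist in the condensed regime, which is why the lemma only claims a $\limsup$ there. Consequently ``combining the identity with Step~3'' proves nothing. Invoking a strict-gap proposition repairs only the direction ($\Rightarrow$); your ($\Leftarrow$) direction still rests on the false identity.

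The two directions need different inputs.
For ($\Leftarrow$), use $\E[\hat{\mathbb P}]{\log Z}=\log\E[\mathbb P]{Z}+\DKL(\hat{\mathbb P}\Vert\mathbb P)$. The planted formula turns $\max\mathcal B_{\mathrm{bip}}=\phi_{d,q}(\beta)$ into $\DKL(\hat{\mathbb P}\Vert\mathbb P)=o(n)$. Then bounded-difference concentration of $\log Z$ under $\mathbb P$ (an edge edit moves it by at most $|\log\beta|$), together with the log-sum inequality, rules out $n^{-1}\E[\mathbb P]{\log Z}\le\phi_{d,q}(\beta)-\varepsilon$ along any subsequence. This is exactly the argument in the proof of \Cref{lem:variational-rs-converse}.

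For ($\Rightarrow$), i.e.\ the strict gap, the KL identity and concentration alone do not suffice. You need a 1RSB interpolation (fractional-moment) upper bound with Parisi parameter $m<1$. Its value at $m=1$ is $\phi_{d,q}(\beta)$, and its left derivative at $m=1$ is $\mathcal B_{\mathrm{bip}}(\nu)-\phi_{d,q}(\beta)=\mathcal B_{d,\beta}(\nu)$. A positive witness therefore pushes the bound strictly below $\phi_{d,q}(\beta)$ for $m$ slightly below $1$. This is the same mechanism as in \Cref{lem:strict-free-energy-gaps}(i) and \Cref{lem:coloring-free-energy-gap}.
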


\begin{proof}[Proof of \Cref{lem:local-potts-free-energy-criterion}]
For independent messages with law $\nu\in\mathcal P_{\*U_q}$, using $\ell(u)=u\log u$ and $\E{V_{d,\beta}}=\E{E_\beta}=1$ in~\eqref{eq:ckpz-bethe-functional} gives
\begin{align*}
 \mathcal B_{\mathrm{bip}}(\nu)
 &=\frac1q\E{\zeta_\beta^{-d}\ell(q\zeta_\beta^d V_{d,\beta})}
       -\frac d{2\zeta_\beta}\E{\ell(\zeta_\beta E_\beta)}\\
 &=\log q+\frac d2\log\zeta_\beta+\mathcal B_{d,\beta}(\nu)
       =\phi_{d,q}(\beta)+\mathcal B_{d,\beta}(\nu).
\end{align*}
Thus a positive witness for $\mathcal B_{d,\beta}$ gives $\max_\nu\mathcal B_{\mathrm{bip}}(\nu)>\phi_{d,q}(\beta)$, so the strict-gap alternative in \Cref{lem:ckpz-potts-free-energy} applies.

It remains to transfer the free energy from $\+G_{\mathrm{bip}}(n,d)$ to $\+G_{n,d}$. Write $\mathbb P$ for $\mathrm{CM}_{n,d}$ and $F=\log Z_{G,q}(\beta)$. The pairing-exposure martingale gives $\operatorname{Var}_{\mathbb P}(F)=O(n)$, since one edge edit changes $F$ by at most $|\log\beta|$; see~\eqref{eq:regular-logz-concentration}. The conditioning in the factor-graph model fixes $m_{\mathrm{fac}}=dn/2$, and its projection is a uniform $d$-regular pairing conditioned on having no loops. Both the loop-free and simple events have probability bounded away from zero under $\mathbb P$ by \Cref{lem:simplicity-prob}. Cauchy--Schwarz and the variance bound therefore show that the expectations of $F$ under both the projected factor-graph law and $\+G_{n,d}$ differ from $\E[\mathbb P]{F}$ by $O(\sqrt n)$. This transfers the strict free-energy gap and proves \Cref{lem:local-potts-free-energy-criterion}.
\end{proof}

\begin{proof}[Proof of \Cref{lem:strict-free-energy-gaps}(ii)]
Fix $\beta\in(0,1)$ and a witness $\nu\in\mathcal P_{\*U_q}$ with $\mathcal B_{d,\beta}(\nu)>0$. By \Cref{lem:local-potts-free-energy-criterion}, we may choose $\varepsilon>0$ such that
\[
  \limsup_{n\to\infty}\frac1n\E[G\sim\+G_{n,d}]{\log Z_{G,q}(\beta)}
  \le\phi_{d,q}(\beta)-2\varepsilon.
\]
Concentration~\eqref{eq:regular-logz-concentration} then gives
\[
  \Pr[G\sim\+G_{n,d}]{\frac1n\log Z_{G,q}(\beta)\le\phi_{d,q}(\beta)-\varepsilon}\longrightarrow1,
\]
proving both claimed gaps.
\end{proof}
}

% END INCLUDED FILE: appendices/app-sec4-cavity.tex
% BEGIN INCLUDED FILE: appendices/d-hard-coloring-entropy.tex
\section{Hard-Coloring Entropy under Geometric Replica Symmetry}\label{app:hard-coloring-entropy}
\begingroup

In this appendix, we prove \Cref{lem:coloring-pointwise-rs-free-energy}, the implication from geometric to thermodynamic replica symmetry for proper $q$-colorings in \Cref{subsec:local-coloring-phase}. The argument compares the quenched proper-coloring law $\mathbb P_{\mathrm{col}}$ and a pairing-level lift $\widehat{\mathbb Q}_n$ of the configuration-model planted law $\mathbb Q_n$ (\Cref{def:planted-overlap} at $\beta=0$) in three steps:
\begin{enumerate}[label=(\arabic*),font=\normalfont,leftmargin=2.2em,labelsep=0.5em,topsep=0.2em,itemsep=0.15em,parsep=0pt]
  \item \Cref{lem:global-entropy-coercivity} establishes an entropy-coercivity bound on a product of symmetric groups $\prod_{a=1}^b\mathfrak S_{N_a}$ using the hypercontractive inequality for global functions on $\mathfrak S_N$~\cite{keevash2023sharp,keller2025sharp};
  \item \Cref{lem:hard-coloring-edge-type} shows via color-preserving half-edge switches that under replica symmetry (\Cref{def:replica-symmetry}), the vertex- and edge-color counts under $\mathbb P_{\mathrm{col}}$ concentrate around the uniform profile; and
  \item The proof of \Cref{lem:coloring-pointwise-rs-free-energy} lifts the conditional law given a typical color-count profile to $\prod_{a=1}^q\mathfrak S_{dn_a}$ and applies \Cref{lem:global-entropy-coercivity} to obtain $\DKL(\mathbb P_{\mathrm{col}}\Vert\widehat{\mathbb Q}_n)=o(n)$, identifying the conditional quenched free energy with $\phi_{d,q}(0)$.
\end{enumerate}

\paragraph{Global entropy coercivity on products of symmetric groups.}
For an integer $N\ge1$, let $\mathfrak S_N$ denote the symmetric group of permutations of $[N]$. Fix an integer $b\ge1$ and let $\mathbb U_n$ be the uniform probability measure on the product group $\prod_{a=1}^b\mathfrak S_{N_a}$. A \emph{consistent restriction} $R$ on $\prod_{a=1}^b\mathfrak S_{N_a}$ of size $|R|=k$ is an event of the form $\{\omega=(\omega_1,\ldots,\omega_b):\omega_a(i)=j\text{ for all }(a,i,j)\in S_R\}$, where $S_R$ is a set of $k$ triples assigning distinct inputs $i\in[N_a]$ to distinct outputs $j\in[N_a]$ within each factor $a\in[b]$ {(including the trivial restriction with $S_R=\varnothing$ and $R=\prod_{a=1}^b\mathfrak S_{N_a}$ of size $0$)}. For a function $g:\prod_{a=1}^b\mathfrak S_{N_a}\to\=R$, we write $\|g_R\|_2\defeq\E[\omega\sim\mathbb U_n]{g(\omega)^2\mid\omega\in R}^{1/2}$ for the $L^2$ norm of $g$ under the conditional uniform measure given $\omega\in R$, and define the \emph{transposition Dirichlet form} by
\[
 \mathcal E_n(g,g)\defeq
 \frac12\sum_{a=1}^b
 \E[\omega\sim\mathbb U_n,\,\tau_a]{\bigl(g(\omega\tau_a)-g(\omega)\bigr)^2},
\]
where $\tau_a$ is a uniformly random transposition in the $a$th factor $\mathfrak S_{N_a}$ (acting on the $a$th coordinate of $\omega$). For a probability density $f\ge0$ with $\E[\mathbb U_n]{f}=1$, its relative entropy with respect to $\mathbb U_n$ is $\operatorname{Ent}_{\mathbb U_n}(f)\defeq\E[\mathbb U_n]{f\log f}=\DKL(f\mathbb U_n\Vert\mathbb U_n)$.

{In the following lemma and its proof, $\|\cdot\|_p$ denotes the $L^p$ norm under the uniform probability measure on the relevant finite domain; the conditional norm $\|g_R\|_2$ is as defined above.}

\begin{lemma}[Entropy coercivity for global permutation densities]\label{lem:global-entropy-coercivity}
Fix an integer $b\ge1$ and constants $c_0,C_0,L>0$, $A\ge1$, and $K\ge0$. Let $\mathbb U_n$ be the uniform measure on $\prod_{a=1}^b\mathfrak S_{N_a}$ with $c_0n\le N_a\le C_0n$ for all $a\in[b]$. Suppose $f:\prod_{a=1}^b\mathfrak S_{N_a}\to\=R_{\ge0}$ and $g\defeq\sqrt f$ satisfy
\[
 \E[\mathbb U_n]{f}=1,\qquad 0\le f\le \exp(Ln),\qquad
 \|g_R\|_2\le n^{K/2}A^{|R|}
 \quad\text{for every consistent restriction }R.
\]
Then $\mathcal E_n(g,g)=o(1)$ implies $\operatorname{Ent}_{\mathbb U_n}(f)=o(n)$, uniformly in the sense that for every $\delta>0$ there exist $\eta>0$ and $n_0$ (depending only on $b,c_0,C_0,L,A,K,\delta$) such that $\operatorname{Ent}_{\mathbb U_n}(f)\ge\delta n$ and $n\ge n_0$ imply $\mathcal E_n(g,g)\ge\eta$.
\end{lemma}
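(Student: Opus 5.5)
We argue by contraposition, with $g=\sqrt f$. Suppose $\operatorname{Ent}_{\mathbb U_n}(f)\ge\delta n$. We aim to show that $\mathcal E_n(g,g)\ge\eta$.

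\textbf{Step 1: many low levels.} Work in the Fourier/level decomposition of $L^2(\prod_a\mathfrak S_{N_a})$. For $\prod_a\mathfrak S_{N_a}$ with $N_a\asymp n$, the transposition Dirichlet form acts on each factor as $I-\frac{1}{\binom{N_a}{2}}\sum_{\tau}\rho(\tau)$. On the $a$-th factor, the level-$k$ isotypic component $V_{=k}$ (irreducibles whose first row has length $N_a-k$) has eigenvalue $\ge ck/N_a\asymp k/n$ for $k\le N_a/2$, and eigenvalue bounded below by a constant beyond that. Hence
\[
\mathcal E_n(g,g)\gtrsim\sum_{k\ge1}\min(k/n,1)\,\|g^{=k}\|_2^2 ,
\]
where $g^{=k}$ is the total projection onto joint level $k$. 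Therefore $\mathcal E_n(g,g)\le\eta$ forces the following two bounds, with $\|g\|_2^2=\E{f}=1$:
\[
\|g^{>\varepsilon n}\|_2^2\le C\eta/\varepsilon,
\qquad
\sum_{1\le k\le\varepsilon n}k\,\|g^{=k}\|_2^2\le C\eta n .
\]

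\textbf{Step 2: hypercontractivity for global functions.} The restriction hypothesis $\|g_R\|_2\le n^{K/2}A^{|R|}$ is exactly the globalness condition of Keevash--Lifshitz--Long--Minzer / Keller--Lifshitz--Marcus for symmetric groups. Their theorem gives, for the degree-$\le D$ part $h=g^{\le D}$ and $q$ a fixed even integer ($q=4$, say),
\[
\|h\|_q\le (C_q A)^{D}\,n^{K/2}\,\|h\|_2 ,
\]
with the noise operator handled at the level of restrictions. To apply it to a product of boundedly many symmetric groups, tensorize factor by factor. Since $b$ is fixed, this only changes the constants.

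\textbf{Step 3: entropy bound.} We bound $\operatorname{Ent}(f)=\E{g^2\log g^2}$ by splitting $g=g^{\le D}+g^{>D}$ with $D=\varepsilon n$.
\begin{itemize}
\item[(a)] \emph{High part.} Use $f\le e^{Ln}$, so $\log f\le Ln$. Then $\E{g^2\log g^2\,\one{\text{high part dominates}}}\lesssim Ln\,\|g^{>D}\|_2^2\le LnC\eta/\varepsilon$, after a standard splitting of $g^2\log g^2$ via $(x+y)^2\le2x^2+2y^2$ and the bound $\log g^2\le Ln$.
\item[(b)] \emph{Low part.} Write $h=g^{\le D}$. We want $\E{h^2\log h^2}\le(\text{something})\cdot n\cdot(\text{small})$. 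Use the Rothaus/log-Sobolev-type chain: the variance of $h$ beyond constants is $\|h-\E h\|_2^2\le\sum_{k\ge1}\|g^{=k}\|_2^2$, which is small by Step 1. The entropy of $h^2$ is controlled through the $L^4$ bound via $\operatorname{Ent}(h^2)\le\|h\|_2^2\log(\|h\|_4^2/\|h\|_2^2)+O(\operatorname{Var}h)\cdot\log(\cdots)$. Alternatively, stratify levels dyadically: levels in $[2^j,2^{j+1})$ cost a factor $(CA)^{2^{j+1}}n^K$ in $L^4$, but their $L^2$ mass is $\le C\eta n/2^j$. Summing gives an entropy contribution $\lesssim\sum_j 2^j\cdot(\eta n/2^j)+K\log n$. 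This is $O(\eta n\log(1/\varepsilon)+K\log n)$.
\end{itemize}

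\textbf{Step 4: conclusion.} Choosing $\varepsilon$ and then $\eta$ small relative to $\delta$ and $L$ makes the total entropy $<\delta n$ for $n\ge n_0$, which gives the contrapositive. The constants depend only on $b,c_0,C_0,L,A,K,\delta$.

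The main obstacle is Step 3(b). It requires converting the global hypercontractive inequality, which is polynomial in $n$ and exponential in the degree, into an entropy bound linear in $n$ with a small constant. The key point to check is that the $(CA)^{D}$ loss at degree $D$ is exactly balanced by the $1/D$ decay in the weight from the Dirichlet form. This forces the dyadic level-by-level argument rather than a single application of hypercontractivity. If a direct log-Sobolev comparison fails, I would instead run the Keevash et al. level-$d$ inequality on the events $\{f\ge e^{\delta n/2}\}$, treating them as small global sets.
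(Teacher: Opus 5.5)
\textbf{Verdict: genuine gap.} Your Step~1 is correct: it is the same character bound the paper uses, eigenvalue $\ge\sum_a(N_a-\lambda^{(a)}_1)/N_a$. Step~3(a) is also fine. The gap is in Steps~2--3(b), which is where all the difficulty lies.

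\emph{Globalness of the projection.} The restriction hypothesis controls $g$, but you apply global hypercontractivity to $g^{\le D}$, and in the dyadic version to the blocks $g^{[2^j,2^{j+1})}$. Global hypercontractive inequalities need restriction bounds for the function they are applied to. Those bounds do not pass from $g$ to its level projections: projecting onto $\mathcal V_{\le D}$ is a nonlocal operation and does not commute with restriction to cosets $\{\omega_a(i)=j\}$.

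\emph{Why this matters quantitatively.} At $D=\varepsilon n$ you need a loss of order $C^{D}$; a $D^{O(D)}$ loss already costs $\varepsilon n\log n$ in entropy. The sharp tool available in this linear-degree regime (Keevash--Lifshitz combined with Keller--Lifshitz--Marcus) is phrased through a positive central noise operator $T$. It gives $\|Tu\|_4^4\le n^K\|u\|_2^2$ for global $u$, and $\langle u,Tu\rangle\ge b_0^{s}\|u\|_2^2$ on $\mathcal V_{\le s}$. It gives no $L^4$ bound on $g^{\le D}$, because $T^{-1}g^{\le D}$ is neither nonnegative nor dominated by $g$.

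\emph{The bookkeeping in 3(b) also fails.} Step~1 does not make $\|h-\mathbb E h\|_2^2$ small. It only gives $\sum_k k\|g^{=k}\|_2^2\le C\eta n$, which allows order-one mass at level $1$; for instance, a function of $\omega_a(1)$ has $\mathcal E_n=O(1/n)$. Moreover, entropy is not additive over orthogonal level blocks. If you did have $\|g^{\le D}\|_4^4\le(CA)^{4D}n^K\|g^{\le D}\|_2^2$, a single application of $\operatorname{Ent}(h^2)\le\|h\|_2^2\log(\|h\|_4^4/\|h\|_2^4)$ at $D=\varepsilon n$ would suffice, with no dyadic argument. But that inequality is exactly what is missing.

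\emph{The missing idea.} Feed $T$ a pointwise truncation of $g$ rather than a spectral piece of it. A truncation inherits the restriction bounds by monotonicity, and its Dirichlet form is dominated by that of $g$. This is the paper's route:
\begin{itemize}
\item From $\operatorname{Ent}(f)\ge\delta n$ and $\log f\le Ln$ one gets $\mathbb E[f\,\mathbf 1\{f>\mathrm e^{\delta n/2}\}]\ge\delta/(2L)$.
\item Put $h=(g-\mathrm e^{\delta n/8})_+$. Then $0\le h\le g$, $\|h\|_1\le\mathrm e^{-\delta n/8}$, and $\|h\|_2^2\ge\delta/(4L)$.
\item H\"older gives $\langle h,Th\rangle\le\|h\|_1^{1/2}\|h\|_2\,n^{K/4}\le n^{K/4}\mathrm e^{-\delta n/16}$.
\item Hence $\|h^{\le s}\|_2^2\le b_0^{-s}n^{K/4}\mathrm e^{-\delta n/16}\to0$ for $s=\lfloor\varepsilon n\rfloor$ with $\varepsilon$ small enough.
\item Your Step~1 applied to $h$ then gives $\mathcal E_n(h,h)\ge\frac{s}{C_0n}\|h-h^{\le s}\|_2^2\gtrsim\varepsilon\delta/(C_0L)$.
\item Finally, $\mathcal E_n(g,g)\ge\mathcal E_n(h,h)$ because $x\mapsto(x-c)_+$ is $1$-Lipschitz.
\end{itemize}
Your closing fallback points in this direction, but as stated it uses the indicator of $\{f\ge\mathrm e^{\delta n/2}\}$. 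A Dirichlet-form lower bound for an indicator does not transfer back to $g$; you need a Lipschitz function of $g$, such as $(g-c)_+$.
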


\begin{proof}
\emph{Step 1: Product hypercontractive operator.}
For each factor $\mathfrak S_{N_a}$, the orthogonal decomposition of $L^2(\mathfrak S_{N_a})$ under the regular representation into isotypic subspaces $V_{\lambda^{(a)}}$ (the sums of copies of each irreducible representation) is indexed by integer partitions $\lambda^{(a)}=(\lambda^{(a)}_1\ge\lambda^{(a)}_2\ge\cdots)\vdash N_a$ ($\sum_i\lambda^{(a)}_i=N_a$). For $s\ge0$, let
\[
 \mathcal V_{\le s}\defeq\bigoplus_{\sum_{a=1}^b(N_a-\lambda^{(a)}_1)\le s}\;\bigotimes_{a=1}^b V_{\lambda^{(a)}}\subseteq L^2(\mathbb U_n)
\]
denote the direct sum of product isotypic subspaces of total degree $\sum_{a=1}^b(N_a-\lambda^{(a)}_1)\le s$, and write $u^{\le s}$ for the orthogonal projection of $u\in L^2(\mathbb U_n)$ onto $\mathcal V_{\le s}$. We first show, extending~\cite[Theorem~1.10]{keevash2023sharp} to the product group, that there exist constants $\rho\in(0,\log(4)/(128A)]$ and $b_0=\rho/72\in(0,1)$ (depending only on $A$) and a positive self-adjoint operator $T$ on $L^2(\mathbb U_n)$ preserving each product isotypic subspace such that, for every function $0\le h\le g$,
\begin{equation}\label{eq:hard-global-noise-bounds}
 \|Th\|_4^4\le n^K\|h\|_2^2,
 \qquad
 \langle u,Tu\rangle\ge b_0^s\|u\|_2^2
 \quad\text{for all }u\in\mathcal V_{\le s}\text{ and }s\le10^{-5}\min_{a\in[b]} N_a.
\end{equation}
Indeed, let $C_a:L^2(\mathfrak S_{N_a})\to L^2([N_a]^{N_a})$ be the greedy coupling Markov operator of~\cite[Definition~3.1]{keevash2023sharp}, and put $C\defeq\bigotimes_{a=1}^b C_a$. By~\cite[Lemma~3.6]{keevash2023sharp} applied independently in each factor, conditioning on $k$ coordinates in $\prod_{a=1}^b[N_a]^{N_a}$ induces a mixture of consistent permutation restrictions of total size at most $k$; since $0\le h\le g$ implies $\|h_R\|_2\le\|g_R\|_2\le n^{K/2}A^{|R|}$, Jensen's inequality shows that every $k$-coordinate restriction of $Ch$ has $L^2$ norm at most $n^{K/2}A^k$. Let $T_{\rho,a}$ be the standard Beckner--Bonami noise operator on $[N_a]^{N_a}$ (which independently keeps each coordinate with probability $\rho$ and resamples it uniformly from $[N_a]$ with probability $1-\rho$), and put $T_{\mathrm{prod}}\defeq\bigotimes_{a=1}^b T_{\rho,a}$. Embedding each coordinate into the common finite alphabet $\prod_{a=1}^b[N_a]$ (so that each coordinate depends only on its designated factor and restrictions of unused factors have no effect) and applying the global hypercontractive inequality on product spaces~\cite[Theorem~4.10]{keller2025sharp} with exponent $4$ and $\rho=\log(4)/(128A)$ gives $\|T_{\mathrm{prod}}Ch\|_4^4\le n^K\|Ch\|_2^2$. Since the adjoint $C^*$ is a Markov operator (hence an $L^4$ contraction) and $C$ is an $L^2$ contraction, the operator $C^*T_{\mathrm{prod}}C$ satisfies the first bound in~\eqref{eq:hard-global-noise-bounds}.

For $\omega=(\omega_1,\ldots,\omega_b)\in\prod_{a=1}^b\mathfrak S_{N_a}$, let $R_\omega$ denote right translation by $\omega$, and define the right-translation average
\[
 T\defeq\E[\omega\sim\mathbb U_n]{R_\omega^*C^*T_{\mathrm{prod}}CR_\omega}
  =\bigotimes_{a=1}^b
   \E[\omega_a\sim\mathrm{Unif}(\mathfrak S_{N_a})]{R_{\omega_a}^*C_a^*T_{\rho,a}C_aR_{\omega_a}}.
\]
Because right translation maps consistent restrictions of size $k$ to consistent restrictions of size $k$, Minkowski's inequality preserves the first bound in~\eqref{eq:hard-global-noise-bounds} for $T$. Each tensor factor of $T$ is the positive self-adjoint central operator of~\cite[Theorem~1.10]{keevash2023sharp} (positivity follows from $T_{\rho,a}\ge0$, and centrality---commutation with left and right translations of $\mathfrak S_{N_a}$---follows from the left invariance of $C_a$ and the average over right translations $R_{\omega_a}$). Consequently, $T$ preserves each product isotypic component $\bigotimes_{a=1}^b V_{\lambda^{(a)}}$, and multiplying the single-factor spectral lower bounds $(\rho/72)^{s_a}$ of~\cite[Theorem~1.10]{keevash2023sharp} with $s_a\defeq N_a-\lambda^{(a)}_1$ yields the second bound in~\eqref{eq:hard-global-noise-bounds}.

\emph{Step 2: High-entropy truncation and spectral gap.}
Suppose $\operatorname{Ent}_{\mathbb U_n}(f)\ge\delta n$. Since $\log f\le Ln$ and $\E[\mathbb U_n]{f}=1$ (with $u\log u\ge -1/\e$), we have
\[
 \delta n\le\E[\mathbb U_n]{f\log f\one{f\le\e^{\delta n/2}}}+Ln\,\E[\mathbb U_n]{f\one{f>\e^{\delta n/2}}}
 \le\frac{\delta n}{2}+Ln\,\E[\mathbb U_n]{f\one{f>\e^{\delta n/2}}},
\]
so $\E[\mathbb U_n]{f\one{f>\e^{\delta n/2}}}\ge\delta/(2L)$. With $x_+\defeq\max\{x,0\}$, set $h\defeq(g-\e^{\delta n/8})_+$. Then $0\le h\le g\cdot\one{g>\e^{\delta n/8}}$, so Markov's inequality and $\E[\mathbb U_n]{g^2}=1$ give, for all sufficiently large $n$,
\[
 \|h\|_2\le1,\qquad \|h\|_1\le\e^{-\delta n/8}\E[\mathbb U_n]{g^2}=\e^{-\delta n/8},
 \qquad \|h\|_2^2\ge\bigl(1-\e^{-\delta n/8}\bigr)^2\E[\mathbb U_n]{f\one{f>\e^{\delta n/2}}}\ge\frac{\delta}{4L}.
\]
By H\"older's inequality ($\|h\|_{4/3}\le\|h\|_1^{1/2}\|h\|_2^{1/2}$) and~\eqref{eq:hard-global-noise-bounds},
\[
 \langle h,Th\rangle
 \le\|h\|_{4/3}\|Th\|_4
 \le\|h\|_1^{1/2}\|h\|_2\,\e^{(K\log n)/4}
 \le n^{K/4}\e^{-\delta n/16}.
\]
Since $T$ is positive and preserves the orthogonal decomposition into product isotypic components, the second bound in~\eqref{eq:hard-global-noise-bounds} implies
\[
 \|h^{\le s}\|_2^2
 \le b_0^{-s}\langle h^{\le s},Th^{\le s}\rangle
 \le b_0^{-s}\langle h,Th\rangle
 \le b_0^{-s}n^{K/4}\e^{-\delta n/16}.
\]
Choose $s=\lfloor\varepsilon n\rfloor$ with $\varepsilon\in(0,10^{-5}c_0)$ small enough that $b_0^{-\varepsilon}<\e^{\delta/32}$; then $\|h^{\le s}\|_2^2\to0$, so $\|h-h^{\le s}\|_2^2\ge\delta/(8L)$ for all sufficiently large $n$.

On a single factor $L^2(\mathfrak S_N)$, the operator $u(\omega)\mapsto u(\omega)-\E[\tau]{u(\omega\tau)}$ has Dirichlet form
\[
  \frac12\E{(u(\omega\tau)-u(\omega))^2}.
\]
Let $\chi_\lambda$ be the character of the irreducible representation indexed by $\lambda\vdash N$. The operator acts on each isotypic component $V_\lambda$ by the scalar $1-\chi_\lambda(\tau)/\chi_\lambda(\mathrm{id})$, where Frobenius's character formula and $\lambda_i-2i+1\le\lambda_1-1$ ($i\ge1$) give
\[
 \frac{\chi_\lambda(\tau)}{\chi_\lambda(\mathrm{id})}
 =\frac{\sum_i\lambda_i(\lambda_i-2i+1)}{N(N-1)}
 \le\frac{(\lambda_1-1)\sum_i\lambda_i}{N(N-1)}
 =\frac{\lambda_1-1}{N-1}.
\]
Thus its eigenvalue on $V_\lambda$ is at least $(N-\lambda_1)/(N-1)\ge(N-\lambda_1)/N$. Summing over the $b$ factors shows that on every product isotypic component $\bigotimes_{a=1}^b V_{\lambda^{(a)}}$ outside $\mathcal V_{\le s}$, the transposition Dirichlet form has eigenvalue at least $\sum_{a=1}^b(N_a-\lambda^{(a)}_1)/N_a\ge s/(C_0n)$. Since $x\mapsto(x-\e^{\delta n/8})_+$ is $1$-Lipschitz, we conclude that for all sufficiently large $n$,
\[
 \mathcal E_n(g,g)\ge\mathcal E_n(h,h)\ge\frac{s}{C_0n}\|h-h^{\le s}\|_2^2\ge\frac{\varepsilon\delta}{16C_0L}>0,
\]
proving the uniform assertion.
\end{proof}

\paragraph{Planted pairing setup and switch identity.}
For the remainder of this appendix, fix integers $d,q\ge3$, put $S\defeq dn$ and $\xi\defeq 1-1/q$, and assume that $G\sim\+G_{n,d}$ is $q$-colorable with probability $1-o(1)$ and that its uniform proper-coloring measure $\mu_G=\mu_{G,q,0}$ satisfies \Cref{def:replica-symmetry} (all limits below are along admissible $n$). Let $\mathcal M$ be a uniformly random pairing of the $S$ labeled half-edges, so $\Pr{\mathcal M=\mathfrak m}=1/(S-1)!!$ for every pairing $\mathfrak m$, and let $G(\mathcal M)$ be the induced multigraph. Then $G(\mathcal M)\sim\mathrm{CM}_{n,d}$ (\Cref{def:config-model}). Throughout, expectations $\E[\mathcal M]{\cdot}$ are over this uniform pairing; for graph-measurable functions they equal expectations under $\mathrm{CM}_{n,d}$. Write $Z=Z(\mathcal M)\defeq Z_{G(\mathcal M),q}(0)$ and $\mu_{\mathcal M}\defeq\mu_{G(\mathcal M),q,0}$. By \Cref{lem:simplicity-prob} and the colorability hypothesis, $c\defeq\Pr{\mathcal M\text{ is simple and }Z(\mathcal M)>0}$ satisfies $c\ge c_*>0$ for all large admissible $n$. Define the quenched coloring law $\mathbb P_{\mathrm{col}}$ and the planted pairing law $\widehat{\mathbb Q}_n$ on pairs $(\mathcal M,\sigma)$ by
\[
 \mathbb P_{\mathrm{col}}(\mathfrak m,\sigma)\defeq\frac{\Pr{\mathcal M=\mathfrak m}\one{\mathfrak m\text{ simple}}\one{\sigma\text{ proper for }\mathfrak m}}{cZ(\mathfrak m)},\qquad
 \widehat{\mathbb Q}_n(\mathfrak m,\sigma)\defeq\frac{\Pr{\mathcal M=\mathfrak m}\one{\sigma\text{ proper for }\mathfrak m}}{\E[\mathcal M]{Z}},
\]
where the first fraction is defined to be zero when $Z(\mathfrak m)=0$. The pushforward of $\widehat{\mathbb Q}_n$ under $(\mathcal M,\sigma)\mapsto(G(\mathcal M),\sigma)$ is $\mathbb Q_n$ from \Cref{def:planted-overlap} at $\beta=0$. The likelihood ratio $f\defeq\dif\mathbb P_{\mathrm{col}}/\dif\widehat{\mathbb Q}_n$ and relative entropy are, with the first identity holding $\widehat{\mathbb Q}_n$-almost surely,
\begin{equation}\label{eq:hard-coloring-likelihood}
 f(\mathcal M,\sigma)=\frac{\E[\mathcal M]{Z}}{cZ(\mathcal M)}\one{\mathcal M\text{ simple}},\qquad
 \DKL(\mathbb P_{\mathrm{col}}\Vert\widehat{\mathbb Q}_n)=\log\E[\mathcal M]{Z}-\log c-\E[\mathbb P_{\mathrm{col}}]{\log Z}.
\end{equation}
Recall that an \emph{uncolored half-edge switch} chooses two distinct half-edges $i,j$ with tail vertices $u,v$ that are paired to $i'\ne j$ and $j'\ne i$ in $\mathcal M$, and replaces the two edges $\{i,i'\},\{j,j'\}$ by $\{i,j'\},\{j,i'\}$ to form $\mathcal M'$. For two simple colorable pairings $\mathcal M,\mathcal M'$ related by such a switch at tail vertices $u,v$ and any color $a\in[q]$, put $p_a\defeq\mu_{\mathcal M}(\sigma_u=\sigma_v=a)$ and $p'_a\defeq\mu_{\mathcal M'}(\sigma_u=\sigma_v=a)$. On the event $\{\sigma_u=\sigma_v=a\}$, the two old edge constraints and two new edge constraints coincide, so the proper colorings with $\sigma_u=\sigma_v=a$ are identical for $\mathcal M$ and $\mathcal M'$. Consequently,
\begin{equation}\label{eq:hard-coloring-hellinger-switch}
 W_a\defeq Z(\mathcal M)p_a=Z(\mathcal M')p'_a,\qquad
 W_a\left(\frac1{\sqrt{Z(\mathcal M)}}-\frac1{\sqrt{Z(\mathcal M')}}\right)^2
 =\bigl(\sqrt{p_a}-\sqrt{p'_a}\bigr)^2.
\end{equation}
From any simple $d$-regular pairing $\mathcal M$, when an ordered pair of distinct half-edges $(i,j)$ is chosen uniformly at random from the $S(S-1)$ ordered pairs (or with conditional joint weights bounded by $O(n^{-2})$), only $O_d(n)$ pairs have $u=v$, satisfy $\{i,j\}\in\mathcal M$, or create a self-loop or parallel edge (total probability $O_d(n^{-1})$), while each ordered pair of distinct tail vertices $(u,v)$ has conditional probability $O_d(n^{-2})$. Consequently, for $\mathcal M$ conditioned on simplicity (whose graph $G(\mathcal M)\sim\+G_{n,d}$ is colorable with probability $1-o(1)$), color symmetry on colorable graphs ($\mu_{\mathcal M}(\sigma_u=a)=1/q$) and the pair-averaged replica-symmetry estimate (\Cref{def:replica-symmetry}) imply $p_a\to q^{-2}$ in $L^1$ for every $a\in[q]$; moreover, since the simple switch chain (which proposes a uniform ordered pair of distinct half-edges and rejects moves with $\{i,j\}\in\mathcal M$ or creating loops or parallel edges) is reversible for the uniform simple-pairing law and rejects with probability $O_d(n^{-1})$, reversibility transfers the same $L^1$ convergence $p'_a\to q^{-2}$ to the switched graph $\mathcal M'$ on valid simple switches.

\begin{lemma}[Edge-type concentration]\label{lem:hard-coloring-edge-type}
In the setting above, for $(\mathcal M,\sigma)\sim\mathbb P_{\mathrm{col}}$, let $n_a=n\rho_a(\sigma)$ be the number of vertices of color $a\in[q]$ and $m_{ab}=dnQ_{ab}$ be the number of edges joining colors $1\le a<b\le q$ (so $(\*\rho,Q)$ is the profile of $(\mathcal M,\sigma)$ as in \Cref{def:profiles-bethe-rates}, with $Q_{aa}=0$). Then $(\*\rho,Q)\xrightarrow{\mathrm p}(\*U_q,P_0)$:
\[
 n_a=\frac nq+o_{\mathrm p}(n)\quad(a\in[q]),\qquad
 m_{ab}=\frac{dn}{q(q-1)}+o_{\mathrm p}(n)\quad(1\le a<b\le q).
\]
\end{lemma}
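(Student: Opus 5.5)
For the vertex counts I would use replica symmetry directly: color symmetry gives $\mu_G(\sigma_v=a)=1/q$ on colorable graphs. Then
\[
 \operatorname{Var}_{\mu_G}(n_a)=\sum_{u,v}\bigl(\mu_{G,u,v}(a,a)-q^{-2}\bigr),
\]
so $n^{-2}\E{\operatorname{Var}_{\mu_G}(n_a)}=o(1)$ by \Cref{def:replica-symmetry}. Chebyshev's inequality under $\mu_G$ then gives $n_a/n\to1/q$ in probability under $\mathbb P_{\mathrm{col}}$. Passing from $\mathrm{CM}$ conditioned on simplicity and colorability to $\+G_{n,d}$ is harmless, since $c\ge c_*>0$.

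The edge counts are harder because they involve adjacent pairs, which \Cref{def:replica-symmetry} does not directly control. I would handle them with the switch identity~\eqref{eq:hard-coloring-hellinger-switch}. The plan is to show that for every $a\ne b$, the edge-type fraction $m_{ab}/(dn/2)$ concentrates around $2/(q(q-1))$.

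The idea is to compare an edge $\{i,i'\}$ with tails $u,u'$ and an edge $\{j,j'\}$ with tails $v,v'$ before and after a uniform switch. After switching, the new edges are $\{u,v'\}$ and $\{v,u'\}$.

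Expanding $\E{m_{ab}}$ directly is awkward, so I would work with an averaged quantity instead. Consider the expected number of ordered pairs of half-edges $(i,j)$ for which the tail colors are both $a$ and the partners $i',j'$ have specified colors. Under the uniform switch chain, which is reversible for simple pairings, the identity $W_a=Z(\mathcal M)p_a=Z(\mathcal M')p'_a$ says that proper colorings with $\sigma_u=\sigma_v=a$ are shared by $\mathcal M$ and $\mathcal M'$. Summed over switches, this gives a double-counting identity relating
\[
\sum_{\mathcal M}\Pr{\mathcal M}\,Z(\mathcal M)^{-1}\#\{\sigma\text{ proper},\ (i,j)\text{ with }\sigma_u=\sigma_v=a\}
\]
before and after the switch. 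The imbalance is controlled by
\[
\E{W_a\bigl(Z(\mathcal M)^{-1/2}-Z(\mathcal M')^{-1/2}\bigr)^2}=\E{(\sqrt{p_a}-\sqrt{p'_a})^2}.
\]
This quantity is $o(1)$, because $p_a,p'_a\to q^{-2}$ in $L^1$, as recorded just before the lemma.

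Concretely, let $e_{ab}$ be the $\mathbb P_{\mathrm{col}}$-fraction of oriented edges from color $a$ to color $b$. The switch identity with Cauchy--Schwarz should yield the approximate stationarity relation
\[
\rho_a^2\,Q_{ab}Q_{ac}\approx\rho_a^2\,Q_{ab}Q_{ac}
\]
at the level of pair statistics, which is not informative. I expect the real content to come from comparing the count of pairs $(u,v)$ colored $(a,a)$ that are endpoints of edges to $b$ and $c$ with the same count after swapping partners. Combined with the constraint $Q_{aa}=0$ and the marginal identities $\sum_b Q_{ab}=\rho_a=1/q$, the second-moment version of this comparison should force $Q_{ab}$ to be approximately independent of $b\ne a$. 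That is, $Q_{ab}\approx\rho_a/(q-1)=1/(q(q-1))$, which equals $(P_0)_{ab}$.

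The step I expect to be the main obstacle is turning~\eqref{eq:hard-coloring-hellinger-switch} into concentration of $m_{ab}$, not merely of its mean. My first attempt would be to fix $a$ and a pair of colors $b\ne c$, both different from $a$. I would then exploit the fact that a switch at two half-edges whose tails are colored $a$ exchanges a $b$-partner and a $c$-partner. Symmetry of the switch chain then equates the $\mathbb P_{\mathrm{col}}$-weight of configurations with $m_{ab}=k$ against those with $m_{ab}=k\pm1$, up to the Hellinger error. A local-CLT/drift argument on $m_{ab}-m_{ac}$ would then show $|m_{ab}-m_{ac}|=o_{\mathrm p}(n)$. Summing over $b\ne a$ and using the vertex-count result $\sum_{b}m_{ab}=dn_a=dn/q+o_{\mathrm p}(n)$ gives the claimed values of $m_{ab}$.
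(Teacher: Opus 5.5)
\paragraph{Vertex counts.} Your argument for the vertex counts is fine; it is how the paper obtains $\rho_a\to1/q$. For $q=3$ it already finishes the proof, because the three row sums determine the three off-diagonal entries of $Q$: $m_{ab}=\tfrac d2(n_a+n_b-n_c)$.

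\paragraph{Edge counts: the wrong move.} The gap is in the edge part for $q\ge4$. The move you propose is a switch at two half-edges whose tails both have color $a$, exchanging a $b$-partner with a $c$-partner. This replaces edge types $\{a,b\},\{a,c\}$ by $\{a,c\},\{a,b\}$, so it leaves every $m_{ab}$ unchanged. It therefore cannot relate $\{m_{ab}=k\}$ to $\{m_{ab}=k\pm1\}$, which is exactly why your first attempt returned the tautology $\rho_a^2Q_{ab}Q_{ac}\approx\rho_a^2Q_{ab}Q_{ac}$.

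The only switches that change the edge-color profile are those whose four endpoint colors $a,b,c,e$ are pairwise distinct. These have tails of \emph{different} colors and turn types $ab,ce$ into $ae,cb$. The identity \eqref{eq:hard-coloring-hellinger-switch} says nothing directly about such moves.

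The missing observation is that $f=\dif\mathbb P_{\mathrm{col}}/\dif\widehat{\mathbb Q}_n\propto\one{\mathcal M\text{ simple}}/Z(\mathcal M)$ depends on $\mathcal M$ alone. So \eqref{eq:hard-coloring-hellinger-switch}, summed over $a$, is needed only to show $Z(\mathcal M')/Z(\mathcal M)\to1$ in probability for a uniform switch. This makes the Dirichlet form of $\sqrt f$ equal to $o(1)$. Here the Dirichlet form is taken for the color-preserving switch chain $\mathcal K$ on pairs $(\mathcal M,\sigma)$, with arbitrary tail colors and reversible for $\widehat{\mathbb Q}_n$. Consequently $\bigl|\E[\mathbb P_{\mathrm{col}}]{\mathcal K\Psi-\Psi}\bigr|=o(1)$ for every observable $\Psi$ with bounded increments.

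\paragraph{Edge counts: the wrong mechanism.} Even with the right moves, approximate stationarity holds only in this averaged Dirichlet sense. It gives no pointwise comparison of level sets of $m_{ab}$, so a ``local-CLT/drift argument on $m_{ab}-m_{ac}$'' has nothing to act on. You need an observable whose drift has a sign.

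The paper uses $\Psi_\delta=S\sum_{a<b}(Q_{ab}+\delta)\log(Q_{ab}+\delta)$. Its drift is
\[
-\tfrac12\sum_{a,b,c,e\text{ distinct}}(Q_{ab}Q_{ce}-Q_{ae}Q_{cb})\log\frac{(Q_{ab}+\delta)(Q_{ce}+\delta)}{(Q_{ae}+\delta)(Q_{cb}+\delta)}+O(S^{-1}).
\]
The inequality $(u-v)\log(u/v)\ge4(\sqrt u-\sqrt v)^2$ then gives $\E[\mathbb P_{\mathrm{col}}]{R(Q)}\to0$, where $R(Q)=\sum_{a,b,c,e\text{ distinct}}(\sqrt{Q_{ab}Q_{ce}}-\sqrt{Q_{ae}Q_{cb}})^2$.

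You also need an algebraic step, because zero diagonal plus uniform row sums does not suffice for $q\ge4$. For example, with $q=4$, $Q_{12}=Q_{34}=x$, $Q_{13}=Q_{24}=y$, $Q_{14}=Q_{23}=z$ and $x+y+z=1/4$ has uniform rows. Adding the relations $Q_{ab}Q_{ce}=Q_{ae}Q_{cb}$ for all distinct $a,b,c,e$ forces all off-diagonal entries to be equal. Compactness then converts $\E[\mathbb P_{\mathrm{col}}]{R(Q)}\to0$ into $Q\xrightarrow{\mathrm p}P_0$.
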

\begin{proof}
\emph{Step 1: Vanishing Dirichlet energy under color-preserving switches.}
Consider the transition kernel $\mathcal K$ on pairs $(\mathcal M,\sigma)$ that picks an ordered pair of distinct half-edges $(i,j)$ uniformly at random, proposes the switch $\mathcal M\leftrightarrow\mathcal M'$ when $\{i,j\}$ is not an edge of $\mathcal M$, accepts the switch if $\sigma$ remains a proper coloring of $\mathcal M'$, and otherwise stays at $(\mathcal M,\sigma)$. Because each accepted switch is an involution for fixed $(i,j,\sigma)$ and $\widehat{\mathbb Q}_n(\cdot,\sigma)$ is uniform over pairings compatible with $\sigma$, $\mathcal K$ is reversible with respect to $\widehat{\mathbb Q}_n$. Write $\mathcal E_{\widehat{\mathbb Q}_n}(\sqrt f,\sqrt f)\defeq\tfrac12\E[X\sim\widehat{\mathbb Q}_n,\,X'\sim\mathcal K(X,\cdot)]{(\sqrt{f(X)}-\sqrt{f(X')})^2}$ for the Dirichlet form of $\sqrt f$. For a proposed switch $\mathcal M\leftrightarrow\mathcal M'$ with tail vertices $u,v$, let $W(\mathcal M,\mathcal M')$ be the number of common proper colorings of $\mathcal M$ and $\mathcal M'$, and on simple colorable endpoints put $a_{\mathcal M}\defeq W(\mathcal M,\mathcal M')/Z(\mathcal M)$ and $a_{\mathcal M'}\defeq W(\mathcal M,\mathcal M')/Z(\mathcal M')$. Expanding $\mathcal E_{\widehat{\mathbb Q}_n}(\sqrt f,\sqrt f)$, canceling $\E[\mathcal M]{Z}$, and using the $O_d(n^{-1})$ bound on switches that create or destroy loops or parallel edges gives
\[
 \mathcal E_{\widehat{\mathbb Q}_n}(\sqrt f,\sqrt f)
 =\frac1{2c}\E[\mathcal M,\,(i,j)]{\one{\mathcal M,\mathcal M'\text{ simple and colorable}}\bigl(\sqrt{a_{\mathcal M}}-\sqrt{a_{\mathcal M'}}\bigr)^2}+O_d(n^{-1}).
\]
Here the expectation is over the uniform pairing and the uniform ordered pair of distinct half-edges, with invalid switch proposals contributing zero.
Summing~\eqref{eq:hard-coloring-hellinger-switch} over $a\in[q]$ gives $Z(\mathcal M)\mu_{\mathcal M}(\sigma_u=\sigma_v)=Z(\mathcal M')\mu_{\mathcal M'}(\sigma_u=\sigma_v)$ with $\mu_{\mathcal M}(\sigma_u=\sigma_v),\mu_{\mathcal M'}(\sigma_u=\sigma_v)\xrightarrow{\mathrm p}1/q$. Since $a_{\mathcal M}Z(\mathcal M)=a_{\mathcal M'}Z(\mathcal M')$ and $0\le a_{\mathcal M},a_{\mathcal M'}\le1$, we obtain $Z(\mathcal M')/Z(\mathcal M)\xrightarrow{\mathrm p}1$ and hence $\mathcal E_{\widehat{\mathbb Q}_n}(\sqrt f,\sqrt f)=o(1)$. By reversibility of $\mathcal K$, factoring $f(X')-f(X)=(\sqrt{f(X')}-\sqrt{f(X)})(\sqrt{f(X')}+\sqrt{f(X)})$, and applying Cauchy--Schwarz with $\E[\widehat{\mathbb Q}_n]{f}=1$, any observable $\Psi$ whose increment under a single accepted switch is bounded by $L$ satisfies
\begin{equation}\label{eq:hard-coloring-ibp}
 \bigl|\E[\mathbb P_{\mathrm{col}}]{\mathcal K\Psi-\Psi}\bigr|
 \le 2L\sqrt{2\mathcal E_{\widehat{\mathbb Q}_n}(\sqrt f,\sqrt f)}=o(1).
\end{equation}

\emph{Step 2: Stationarity of the edge-color potential.}
For a proper coloring $\sigma$, the edge marginal $Q$ has entries $Q_{ab}=Q_{ba}=m_{ab}/S$ ($a\ne b$) and $Q_{aa}=0$, with $\sum_{b\ne a}Q_{ab}=n_a/n=\rho_a(\sigma)$. Fix $\delta\in(0,1)$, write $h_\delta(x)\defeq(x+\delta)\log(x+\delta)$, and define $\Psi_\delta(\mathcal M,\sigma)\defeq S\sum_{1\le a<b\le q}h_\delta(Q_{ab})$, whose increment under a single switch is $O_{q,\delta}(1)$. Under $\mathcal K$, the directed colors of the two chosen half-edges and their partners are $(a,b)$ and $(c,e)$ with probability $Q_{ab}Q_{ce}+O(S^{-1})$ uniformly over proper pairings; an accepted switch changes $(Q_{ab})$ only when $a,b,c,e$ are pairwise distinct (switches with $a=c$ or $b=e$ preserve all $Q_{ab}$, while $a=e$ or $b=c$ is rejected as improper), in which case it replaces edge colors $ab,ce$ by $ae,cb$. Pairing each ordered quadruple $(a,b,c,e)$ of distinct colors with $(a,e,c,b)$ and Taylor expanding $h_\delta$ gives, uniformly in $(\mathcal M,\sigma)$ for fixed $\delta$,
\[
 \mathcal K\Psi_\delta-\Psi_\delta
 =-\frac12\sum_{a,b,c,e\,\mathrm{distinct}}
 \bigl(Q_{ab}Q_{ce}-Q_{ae}Q_{cb}\bigr)
 \log\frac{(Q_{ab}+\delta)(Q_{ce}+\delta)}{(Q_{ae}+\delta)(Q_{cb}+\delta)}
 +O_{q,\delta}(S^{-1}).
\]
For $x=Q_{ab}Q_{ce}$, $y=Q_{ae}Q_{cb}$, $u=(Q_{ab}+\delta)(Q_{ce}+\delta)$, and $v=(Q_{ae}+\delta)(Q_{cb}+\delta)$, we have $|u-x|,|v-y|\le3\delta$ and $|\log(u/v)|\le 2\log(2/\delta)$, so $(u-v)\log(u/v)\ge4(\sqrt u-\sqrt v)^2$ yields $\frac12(x-y)\log(u/v)\ge(\sqrt x-\sqrt y)^2-C\delta\log(2/\delta)$. Setting $R(Q)\defeq\sum_{a,b,c,e\,\mathrm{distinct}}(\sqrt{Q_{ab}Q_{ce}}-\sqrt{Q_{ae}Q_{cb}})^2$, combining with~\eqref{eq:hard-coloring-ibp}, and letting first $n\to\infty$ and then $\delta\downarrow0$ gives $\E[\mathbb P_{\mathrm{col}}]{R(Q)}\to0$.

Replica symmetry and color symmetry give $\rho_a(\sigma)=n_a/n\xrightarrow{\mathrm p}1/q$ for each $a\in[q]$. For $q=3$, the three row sums $\rho_a$ uniquely determine the three off-diagonal entries $Q_{ab}$. For $q\ge4$, any symmetric nonnegative matrix $Q$ with zero diagonal, constant row sums $r>0$, and $Q_{ab}Q_{ce}=Q_{ae}Q_{cb}$ for all distinct $a,b,c,e$ satisfies $Q_{ab}=r/(q-1)$ for all $a\ne b$: choosing a maximum entry $Q_{12}=m_*>0$, if $Q_{1c}=0$ for some $c\ge3$, then $Q_{ce}=0$ for all $e\notin\{1,2,c\}$ and hence $Q_{2c}=r$, contradicting $Q_{21}=m_*>0$ and the row sum at $2$; similarly $Q_{2c}>0$, so $Q_{ce}>0$ for all distinct $c,e\ge3$, the ratio $Q_{1c}/Q_{2c}$ is a constant $\lambda$ forced to equal $1$ by the equal row sums at $1,2$, and setting $b_c\defeq Q_{1c}=Q_{2c}$ and $B\defeq\sum_{c\ge3}b_c$ gives $Q_{ce}=b_cb_e/m_*$ and $b_c(2+(B-b_c)/m_*)=r$, whose strict monotonicity in $b_c\in[0,m_*]$ forces $b_c\equiv b$ and $(m_*-b)(m_*+(q-3)b)=0$, hence $m_*=b$. By compactness of the simplex, $\E[\mathbb P_{\mathrm{col}}]{R(Q)}\to0$ therefore implies $Q_{ab}\xrightarrow{\mathrm p}1/[q(q-1)]$ for all $a\ne b$.
\end{proof}

\begin{proof}[Proof of~\Cref{lem:coloring-pointwise-rs-free-energy}]
\emph{Step 1: Color-profile entropy.}
Denote the vertex- and edge-color count profile under $\mathbb P_{\mathrm{col}}$ and $\widehat{\mathbb Q}_n$ by $\mathcal T\defeq((n_a)_{a\in[q]},\allowbreak(m_{ab})_{1\le a<b\le q})$, with marginal laws $\mathbb P_{\mathrm{col},\mathcal T}$ and $\widehat{\mathbb Q}_{n,\mathcal T}$. For a feasible count profile $\tau=((n_a),(m_{ab}))$ with $S_a\defeq dn_a$, $\rho_a\defeq n_a/n$, and $Q_{ab}=Q_{ba}\defeq m_{ab}/S$ ($a\ne b$, $Q_{aa}=0$), the number of vertex colorings with color counts $(n_a)$ is $A_\tau\defeq n!/\prod_a n_a!$ and the number of compatible pairings for each such coloring is $N_\tau\defeq(\prod_a S_a!)/\prod_{a<b}m_{ab}!$ (since choosing the $m_{ab}$ half-edges in each color class $a$ to pair with color $b$ and then choosing the $m_{ab}!$ bijections between them yields $(\prod_a S_a!/\prod_{b\ne a}m_{ab}!)\prod_{a<b}m_{ab}! = N_\tau$ pairings), so
\begin{equation}\label{eq:hard-coloring-type-law}
\begin{aligned}
 \widehat{\mathbb Q}_n(\mathcal T=\tau)&=\frac{A_\tau N_\tau}{(S-1)!!\E[\mathcal M]{Z}},\\
 \log\E[\mathcal M]{Z}&=n\phi_{d,q}(0)+O(\log n)=n\log q+\frac S2\log\xi+O(\log n).
\end{aligned}
\end{equation}
Indeed, at $\beta=0$ the nonconstant eigenvalues are $-1/(q-1)\le0$. The normalization $\mathcal Z_G(0)=\e^{-n\phi_{d,q}(0)}Z_G(0)$ and \Cref{prop:profile-moments}(1)--(2) therefore give, uniformly over feasible profiles,
\[
 \frac{A_\tau N_\tau}{(S-1)!!}
 =\exp\bigl\{n\phi_{d,q}(0)+n\Phi_0(\*\rho,Q)+O(\log n)\bigr\},
 \qquad \Phi_0(\*\rho,Q)\le0.
\]
At the uniform profile $(\*\rho,Q)=(\*U_q,P_0)$, $\Phi_0=0$. Feasible integer counts with $n_a=n/q+O(1)$ and $m_{ab}=S/[q(q-1)]+O(1)$ have $\Phi_0=O(n^{-2})$. To construct them, choose $n_a=n/q+O(1)$ with $\sum_a n_a=n$, and put $b=\lfloor S/[q(q-1)]\rfloor-K$ for a sufficiently large constant $K$. The residual degrees $dn_a-(q-1)b$ form a bounded nonnegative even-sum sequence whose largest entry is at most the sum of the others, so they are realized by a loopless multigraph on $[q]$; adding its edge counts to $b$ gives the required $m_{ab}$. Summing over the polynomially many profiles and using this one profile gives the first-moment estimate in~\eqref{eq:hard-coloring-type-law}. The same estimate yields $-\log\widehat{\mathbb Q}_n(\mathcal T=\tau)=-n\Phi_0(\*\rho,Q)+O(\log n)$ uniformly over feasible $\tau$. The rate $-\Phi_0$ is bounded and continuous on these profiles and vanishes at $(\*U_q,P_0)$, so \Cref{lem:hard-coloring-edge-type} gives
\begin{equation}\label{eq:hard-coloring-type-entropy}
 \DKL(\mathbb P_{\mathrm{col},\mathcal T}\Vert\widehat{\mathbb Q}_{n,\mathcal T})\le\E[\mathbb P_{\mathrm{col}}]{-\log\widehat{\mathbb Q}_n(\mathcal T)}=o(n).
\end{equation}

\emph{Step 2: Permutation lift and restriction bounds.}
Choose $c_0>0$ below the limiting uniform proportions $1/q$ and $d/[q(q-1)]$, and fix $K$ larger by two than the polynomial exponent of the number of feasible count profiles $\tau$. Retain the profiles $\tau$ with $n_a,m_{ab}\ge c_0n$ and $\mathbb P_{\mathrm{col}}(\mathcal T=\tau)\ge n^{-K}$, which have total $\mathbb P_{\mathrm{col}}$-probability $1-o(1)$ by \Cref{lem:hard-coloring-edge-type}. Conditional on $\mathcal T=\tau$, the vertex coloring $\sigma$ is uniform under both $\mathbb P_{\mathrm{col}}$ and $\widehat{\mathbb Q}_n$ by vertex-label symmetry; fix one such coloring $\sigma$ and a canonical pairing matching $m_{ab}$ half-edges between colors $a$ and $b$ for each $a<b$, and independently permute the $S_a$ half-edges of color $a$ via $\omega=(\omega_1,\ldots,\omega_q)\in\prod_{a=1}^q\mathfrak S_{S_a}$ to induce a pairing $\mathcal M=\mathcal M(\omega)$. The uniform measure $\mathbb U_\tau$ on $\prod_{a=1}^q\mathfrak S_{S_a}$ pushes forward to $\widehat{\mathbb Q}_n(\,\cdot\mid\tau,\sigma)$ with constant fiber size $\prod_{a<b}m_{ab}!$. Lifting $\mathbb P_{\mathrm{col}}(\,\cdot\mid\tau,\sigma)$ uniformly over these fibers gives a probability measure $\mathbb P_\tau$ on $\prod_{a=1}^q\mathfrak S_{S_a}$ whose density $f_\tau\defeq\dif\mathbb P_\tau/\dif\mathbb U_\tau$ satisfies $\DKL(\mathbb P_{\mathrm{col}}(\,\cdot\mid\tau)\Vert\widehat{\mathbb Q}_n(\,\cdot\mid\tau))=\operatorname{Ent}_{\mathbb U_\tau}(f_\tau)$. {Since $\{\mathcal T=\tau,\sigma\}$ has probability $\mathbb P_{\mathrm{col}}(\mathcal T=\tau)/A_\tau$ under $\mathbb P_{\mathrm{col}}$ and $\widehat{\mathbb Q}_n(\mathcal T=\tau)/A_\tau$ under $\widehat{\mathbb Q}_n$,~\eqref{eq:hard-coloring-likelihood} and~\eqref{eq:hard-coloring-type-law} give, for every $\omega\in\prod_{a=1}^q\mathfrak S_{S_a}$ with $\mathcal M=\mathcal M(\omega)$ and retained $\tau$,}
\begin{equation}\label{eq:hard-coloring-density-range}
 {f_\tau(\omega)
 =\frac{\widehat{\mathbb Q}_n(\mathcal T=\tau)\E[\mathcal M]{Z}}{c\mathbb P_{\mathrm{col}}(\mathcal T=\tau)}\,\frac{\one{\mathcal M\text{ simple}}}{Z(\mathcal M)}}
 \le\frac{q^n}{c\mathbb P_{\mathrm{col}}(\mathcal T=\tau)}\le\exp(Cn).
\end{equation}
For $g_\tau\defeq\sqrt{f_\tau}$ and a consistent restriction $R$ on $\prod_{a=1}^q\mathfrak S_{S_a}$ fixing $|R|=k=\sum_a k_a$ permutation coordinates ($k_a$ in factor $a$), let $e$ canonical edges have both endpoints fixed by $R$ (prescribing an actual half-edge-pair pattern $F$ on $v\le 2e$ vertices with color counts $v_a$) and $s=k-e$ canonical edges be touched by $R$ (with $s_{ab}$ touched color-$ab$ edges). Given the pairing $\mathcal M$, the fiber probability of $R$ is either zero or $\prod_{a<b}(m_{ab})_{s_{ab}}^{-1}\le(C/n)^{k-e}$, where $(m)_s\defeq m(m-1)\cdots(m-s+1)\ge(m/\e)^s$ on retained profiles. Averaging $F$ under vertex permutations within each color class gives denominator $\prod_a(n_a)_{v_a}$, while the uncolored ordered embedding count of $F$ is vertex-permutation-invariant; dropping the conditioning on $\{\mathcal T=\tau,\sigma\}$ and on simplicity/colorability costs at most $(c\mathbb P_{\mathrm{col}}(\mathcal T=\tau))^{-1}$, and $e$ prescribed disjoint half-edge pairs in the uniform pairing $\mathcal M$ have probability $\prod_{i=0}^{e-1}(S-2i-1)^{-1}\le(C/n)^e$ uniformly for $e\le S/2$. Thus $\mathbb P_{\mathrm{col}}(F\subseteq\mathcal M\mid\tau,\sigma)\le C\mathbb P_{\mathrm{col}}(\mathcal T=\tau)^{-1}n^v(C/n)^e/\prod_a(n_a)_{v_a}\le n^K(C/n)^e$. Since $R$ has probability $\mathbb U_\tau(R)=\prod_a(S_a)_{k_a}^{-1}\ge(C'n)^{-k}$ under $\mathbb U_\tau$, every consistent restriction $R$ of size $|R|=k$ satisfies
\begin{equation}\label{eq:hard-coloring-global-restrictions}
 {\E[\mathbb U_\tau]{f_\tau\mid R}
 =\frac{\mathbb P_\tau(R)}{\mathbb U_\tau(R)}
 \le\frac{\mathbb P_{\mathrm{col}}(F\subseteq\mathcal M\mid\tau,\sigma)\prod_{a<b}(m_{ab})_{s_{ab}}^{-1}}{\prod_{a=1}^q(S_a)_{k_a}^{-1}}}
 \le n^K C^k,
 \qquad \|(g_\tau)_R\|_2\le n^{K/2}C^{k/2}.
\end{equation}

\emph{Step 3: Dirichlet energy and relative entropy.}
A transposition $\tau_a\in\mathfrak S_{S_a}$ in the color-$a$ permutation factor performs an uncolored half-edge switch at two tail vertices $u,v$ of color $a$, replacing $\mathcal M=\mathcal M(\omega)$ by $\mathcal M'\defeq\mathcal M(\omega\tau_a)$ while preserving both $\sigma$ and $\mathcal T=\tau$. {Substituting the explicit formula for $f_\tau$ from~\eqref{eq:hard-coloring-density-range} into $\mathbb P_{\mathrm{col}}(\mathcal T=\tau)\mathcal E_\tau(g_\tau,g_\tau)$ cancels the conditional normalization $\mathbb P_{\mathrm{col}}(\mathcal T=\tau)$:}
\begin{equation}\label{eq:hard-coloring-energy-cancellation}
{\begin{aligned}
 &\sum_{\tau\text{ retained}}\mathbb P_{\mathrm{col}}(\mathcal T=\tau)\mathcal E_\tau(g_\tau,g_\tau)\\
 &\qquad=\sum_{\tau\text{ retained}}\frac{\widehat{\mathbb Q}_n(\mathcal T=\tau)\E[\mathcal M]{Z}}{2c}\sum_{a=1}^q\E[\omega\sim\mathbb U_\tau,\,\tau_a]{\left(\frac{\one{\mathcal M'\text{ simple}}}{\sqrt{Z(\mathcal M')}}-\frac{\one{\mathcal M\text{ simple}}}{\sqrt{Z(\mathcal M)}}\right)^2}.
\end{aligned}}
\end{equation}
{Since $\widehat{\mathbb Q}_n(\mathcal T=\tau)\E[\mathcal M]{Z}$ is the expected number under the uniform pairing $\mathcal M$ of proper colorings of profile $\tau$ and $S_a\ge c_0n$ on retained profiles, the factor transposition rates $\binom{S_a}{2}^{-1}$ are bounded by $O(1)$ times the uniform uncolored half-edge-switch rate $\binom{S}{2}^{-1}$.} Summing~\eqref{eq:hard-coloring-energy-cancellation} over common proper colorings and applying~\eqref{eq:hard-coloring-hellinger-switch}, the $L^1$ convergence $p_a,p'_a\to q^{-2}$, and the $O_d(n^{-1})$ nonsimple switch boundary bound established above gives
\begin{equation}\label{eq:hard-coloring-vanishing-energy}
 \sum_{\tau\text{ retained}}\mathbb P_{\mathrm{col}}(\mathcal T=\tau)\mathcal E_\tau(g_\tau,g_\tau)=o(1).
\end{equation}
By~\eqref{eq:hard-coloring-density-range},~\eqref{eq:hard-coloring-global-restrictions},~\eqref{eq:hard-coloring-vanishing-energy}, and the uniform assertion of \Cref{lem:global-entropy-coercivity},
\[
 \sum_{\tau\text{ retained}}\mathbb P_{\mathrm{col}}(\mathcal T=\tau)\DKL\bigl(\mathbb P_{\mathrm{col}}(\,\cdot\mid\tau)\Vert\widehat{\mathbb Q}_n(\,\cdot\mid\tau)\bigr)
 =\sum_{\tau\text{ retained}}\mathbb P_{\mathrm{col}}(\mathcal T=\tau)\operatorname{Ent}_{\mathbb U_\tau}(f_\tau)=o(n).
\]
Since discarded profiles have total mass $\mathbb P_{\mathrm{col}}(\mathcal T\text{ discarded})=o(1)$ and profile entropy $O(\log n)$, their contribution to $\DKL(\mathbb P_{\mathrm{col}}\Vert\widehat{\mathbb Q}_n)$ is at most 
$$
  (n\log q-\log c)\mathbb P_{\mathrm{col}}(\mathcal T\text{ discarded})+\allowbreak\sum_{\tau\text{ discarded}}\mathbb P_{\mathrm{col}}(\mathcal T=\tau)\log(1/\mathbb P_{\mathrm{col}}(\mathcal T=\tau))=o(n).
  $$
  Combining these bounds with~\eqref{eq:hard-coloring-type-entropy} and the relative-entropy chain rule yields $\DKL(\mathbb P_{\mathrm{col}}\Vert\widehat{\mathbb Q}_n)=o(n)$, so~\eqref{eq:hard-coloring-likelihood} and~\eqref{eq:hard-coloring-type-law} give $\frac1n\E[\mathbb P_{\mathrm{col}}]{\log Z}=\phi_{d,q}(0)+o(1)$. Finally, Markov's inequality and~\eqref{eq:hard-coloring-type-law} give $\Pr[\mathbb P_{\mathrm{col}}]{n^{-1}\log Z>\phi_{d,q}(0)+\varepsilon}=o(1)$ for every $\varepsilon>0$, which together with $0\le n^{-1}\log Z\le\log q$ on $\{Z>0\}$ and $\Pr[\+G_{n,d}]{Z>0}=1-o(1)$ yields $\frac1n\log Z_{G,q}(0)\xrightarrow{\mathrm p}\phi_{d,q}(0)$.
\end{proof}
\endgroup

% END INCLUDED FILE: appendices/d-hard-coloring-entropy.tex
% BEGIN INCLUDED FILE: appendices/b-analytic-tools.tex
\section{Analytic Tools}\label{sec:tools}\label{app:tools}
In this appendix, we collect the asymptotic and combinatorial tools used in our moment calculations.

Because our analysis combines lattice sums over empirical profiles with contour-integral representations of fixed-density weights, we first establish a complex-parameter version of the discrete Laplace method before turning to pairing enumeration and coefficient-extraction identities on the configuration model.

\begin{lemma}[Complex discrete Laplace method]
\label{lem:complex-discrete-laplace}
Let $k,k_{\mathrm{par}}\ge1$ be integers. Let $D\subseteq\=C^k$ and $U\subseteq\=C^{k_{\mathrm{par}}}$
be neighborhoods of $\*x_0\in\=R^k$ and $t\in\=R^{k_{\mathrm{par}}}$,
respectively.
Let $f,g_n$ be holomorphic on $D\times U$, with
$g_n\to g$ locally uniformly.
Suppose that $\*p:U\to D$ is holomorphic and satisfies
\[
  \*p(t)=\*x_0,\qquad
  f(\*p(z),z)=0,\qquad
  \nabla_{\*x} f(\*p(z),z)=0.
\]
Assume that $f(\*x,t)$ is real for real $\*x$ near $\*x_0$ and that
\[
  H(t)\defeq-D_{\*x}^2f(\*x_0,t)
\]
is positive definite.

For every sufficiently small fixed open box
$B\subseteq\=R^k$ containing $\*x_0$, there exists a
neighborhood $U_B\subseteq U$ of $t$ such that
\[
  \frac{1}{(2\pi n)^{k/2}}
  \sum_{\*x\in B\cap n^{-1}\=Z^k}
  g_n(\*x,z)\e^{nf(\*x,z)}
  =
  \frac{g(\*p(z),z)}{\sqrt{\det H(z)}}+o(1),
\]
locally uniformly for $z\in U_B$, where
\[
  H(z)=-D_{\*x}^2f(\*p(z),z).
\]
The square-root branch is chosen to be positive at $z=t$.
\end{lemma}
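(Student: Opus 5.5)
The plan is to prove the complex discrete Laplace method by shifting to the critical point $\*p(z)$, controlling the tail by the real-part decay and the center by a Gaussian computation, and then replacing the lattice sum by an integral.

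\emph{Step 1: quadratic expansion.} Write $\*y\defeq\*x-\*p(z)$. Since $f(\*p(z),z)=0$ and $\nabla_{\*x}f(\*p(z),z)=0$, Taylor's theorem in $\*x$ gives, uniformly for $z$ near $t$ and $\*x$ in a small polydisk around $\*x_0$,
\[
  f(\*x,z)=-\tfrac12\*y^\trans H(z)\*y+R(\*x,z),\qquad |R(\*x,z)|\le C\|\*y\|^3 .
\]
The constant $C$ is uniform by Cauchy estimates on a compact subset of $D\times U$. Because $H(t)$ is real positive definite, continuity of $H$ gives $\Re\bigl(\*w^\trans H(z)\*w\bigr)\ge 2c\|\*w\|^2$ for real $\*w$ and all $z$ in a neighborhood of $t$.

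\emph{Step 2: choice of $B$ and $U_B$.} First choose the box $B$ so small that $C\|\*y\|\le c/2$ on $\bar B$ for $z$ near $t$. Then, for real $\*x\in\bar B$,
\[
  \Re f(\*x,z)\le -\tfrac c2\|\*x-\Re\*p(z)\|^2+O(|\Im\*p(z)|\cdot\|\*y\|+|\Im\*p(z)|^2).
\]
The point $\*p(z)$ is complex, and this is the main subtlety. It is resolved by shrinking $U_B$, so that $|\Im\*p(z)|$ is small and $\Re\*p(z)$ stays well inside $B$.

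\emph{Step 3: tail region.} Split the sum into the region $\|\*x-\Re\*p(z)\|\le n^{-2/5}$ and its complement. On the complement, $\Re f\le -c'\|\*x-\Re\*p\|^2$ up to lower-order terms. The sum is therefore $O\bigl(n^{k/2}e^{-c''n^{1/5}}\bigr)$ times $\sup|g_n|$. After dividing by $(2\pi n)^{k/2}$ this contribution is negligible; the factor $n^{k/2}$ counts at most polynomially many lattice points.

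\emph{Step 4: central region.} On the central region, $nR=O(n^{-1/5})$ and $g_n(\*x,z)=g(\*p(z),z)+o(1)$ uniformly, by local uniform convergence and continuity of $g$. The sum is then a Riemann sum with mesh $n^{-1}$ of $e^{-\frac n2(\*x-\*p)^\trans H(\*x-\*p)}$ over $n^{-1}\=Z^k$. Rescaling by $\*x=\*p+\*s/\sqrt n$ turns it into a Riemann sum with mesh $n^{-1/2}$ of a Gaussian with complex shift $\sqrt n\,\Im\*p$. I plan to compare it to the integral using the smoothness of the integrand; Poisson summation shows the lattice error is exponentially small.

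\emph{Step 5: evaluating the Gaussian.} The Gaussian integral with a complex shift and complex symmetric $H$ with positive definite real part equals $(2\pi)^{k/2}\det H(z)^{-1/2}$. The branch is the one positive at $z=t$, obtained by analytic continuation; the contour shift is justified by Cauchy's theorem together with the Gaussian decay.

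The main obstacle is the complex shift $\Im\*p(z)$. When $\sqrt n\,\Im\*p$ is not small, the Gaussian is centered off the real lattice, and the Step 3 bound degrades to $e^{n|\Im\*p|^2}$-type growth. The honest fix is the one I would try first: for fixed $z\ne t$, write the summand via holomorphy and deform each real coordinate slice into $\*x+i\Im\*p(z)$. The lattice sum itself cannot be deformed, however. I would therefore instead use Poisson summation on the smooth cutoff function $\chi(\*x)g_n e^{nf}$, where the Fourier coefficients at nonzero frequencies are exponentially small because of analyticity in a strip. This reduces the problem to the integral, where the contour shift is legitimate; the boundary of $B$ contributes negligibly because $\Re f<0$ there.
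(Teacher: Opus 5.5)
Your Steps 2--4 cannot give the result locally uniformly in $z$. Whenever $\Im\*p(z)\neq0$, at the real point $\*x=\Re\*p(z)$ one has $\Re f(\*x,z)=\tfrac12(\Im\*p)^\trans\,\Re H(z)\,\Im\*p+O(\|\Im\*p\|^3)>0$. So $|g_n\e^{nf}|$ is exponentially large on the real lattice, and the $O(1)$ answer comes only from cancellation, which no real-part tail bound detects. You acknowledge this, so the proof rests on your last paragraph. Its architecture is the paper's: lattice sum $\to n^k\int g_n\e^{nf}$ via holomorphy in a strip, contour shift to $B+\mathrm i\Im\*p(z)$, Gaussian evaluation. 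But the step carrying the whole argument is justified incorrectly, in two ways.

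\emph{Growth in the strip.} Holomorphy in a strip does not by itself make the nonzero Poisson modes small, because the function depends on $n$. Moving the Fourier integral at frequency $2\pi n\*\xi$ into $T=\overline B+\mathrm i[-s,s]^k$ gains $\e^{-2\pi ns\|\*\xi\|_1}$, but the integrand grows to $\e^{n\sup_T\Re f(\cdot,z)}$. Moreover $\sup_T\Re f>0$ always, already at $z=t$, since $\Re f(\*x_0+\mathrm i\*y,t)\approx\tfrac12\*y^\trans H(t)\*y$. The missing ingredient is the balance $\sup_T\Re f(\cdot,z)<2\pi s$. You obtain it by choosing $s$ first: since $f(\cdot,t)\le0$ on real points of $\overline B$, $\sup_T\Re f(\cdot,t)=O(s^2)\le\pi s$ for small $s$. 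Only then do you shrink $U_B$, depending on $s$, so that this persists and $\|\Im\*p(z)\|<s$. This is exactly how the paper fixes $s$ and $U_B$.

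\emph{The cutoff.} $\chi g_n\e^{nf}$ is not holomorphic for a smooth compactly supported cutoff $\chi$, so the strip argument does not apply to it as stated. The non-analytic layer near $\partial B$ has to be treated separately, using $\Re f\le-\gamma$ on the complexified boundary $\partial B+\mathrm i[-s,s]^k$, not merely on real $\partial B$. The same complexified bound is what controls the side faces when you shift $\int_B$ to $B+\mathrm i\Im\*p(z)$.

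The paper avoids both issues with the residue identity $\sum_{j/n\in(a,b)}h(j/n)=\frac n{2\mathrm i}\int_{\partial R}h(u)\cot(\pi nu)\dif u$ on $R=[a,b]+\mathrm i[-s,s]$ with $na,nb\in\mathbb Z+\tfrac12$, applied coordinatewise to the sharp box. Since $\cot(\pi n(x\pm\mathrm is))=\mp\mathrm i+O(\e^{-2\pi ns})$, the horizontal sides give $n\int_a^bh$ up to an error $O(n\e^{-2\pi ns}\sup_R|h|)=O(n\e^{-\pi ns})$, thanks to $\sup_T\Re f\le\pi s$. The vertical sides are exponentially small because $\Re f\le-\gamma$ there. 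Once the two points above are supplied (e.g.\ via an almost-analytic extension of $\chi$ whose $\bar\partial$-derivative is supported in the boundary layer), your Poisson-summation route also works. As written, however, its key estimate is unjustified.
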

\begin{proof}
Write $B=\prod_{j=1}^k(a_j,b_j)$.
The assumptions at $(\*x_0,t)$ and Taylor expansion give
$$
f(\*x,t)\le-c\|\*x-\*x_0\|_2^2
$$
on a sufficiently small real neighborhood of $\*x_0$.
Choose $s>0$ sufficiently small and then shrink the parameter
neighborhood so that $T=\overline B+\mathrm i[-s,s]^k$
lies compactly in $D$ and
$$
\begin{aligned}
\Re f(\*x+\mathrm i \*y,z)&\le-\gamma
&& (\*x\in\partial B,\ \*y\in[-s,s]^k),\\
\sup_{\*u\in T}\Re f(\*u,z)&\le\pi s
\end{aligned}
$$
for some $\gamma>0$.
These choices are possible because $\partial B$ stays away
from $\*x_0$ and, at $z=t$, Taylor expansion in the imaginary
directions gives $\sup_T\Re f=O(s^2)$.
The bounds remain valid, with smaller $\gamma$, under small
perturbations of the endpoints of $B$.
We also ensure that $|\Im p_j(z)|<s$ and that $\Re \*p(z)$
stays a positive distance from $\partial B$.

We first compare the lattice sum with an integral.
For $R=[a,b]+\mathrm i[-s,s]$, where
$na,nb\in\=Z+\frac12$, the residue theorem gives
$$
\sum_{j/n\in(a,b)}h(j/n)
=\frac{n}{2\mathrm i}
\int_{\partial R}h(u)\cot(\pi nu)\dif u
$$
whenever $h$ is holomorphic near $R$.
On the horizontal sides,
$\cot(\pi n(x\pm\mathrm i s))
=\mp\mathrm i+O(\e^{-2\pi ns})$,
and the cotangent is bounded on the vertical sides.
Cauchy's theorem therefore yields
$$
\left|
\sum_{j/n\in(a,b)}h(j/n)-n\int_a^b h(x)\dif x
\right|
\le Cn\left(
\sup_{\text{vertical sides}}|h|
+\e^{-2\pi ns}\sup_R|h|
\right).
$$

Move each endpoint of $B$ by $O(n^{-1})$ to a point in
$n^{-1}(\=Z+1/2)$, obtaining a box $B_n$ with
$$
B_n\cap n^{-1}\=Z^k=B\cap n^{-1}\=Z^k.
$$
Since $g_n\to g$ locally uniformly, the functions $g_n$ are uniformly
bounded on the compact sets under consideration.
Applying the preceding estimate coordinatewise to
$h_n(\*x,z)=g_n(\*x,z)\e^{nf(\*x,z)}$ gives
$$
\sum_{\*x\in B\cap n^{-1}\=Z^k}h_n(\*x,z)
=n^k\int_{B_n}h_n(\*x,z)\dif \*x+O(n^k\e^{-cn}).
$$
By Cauchy's theorem and the boundary bound above, shifting
the contour to $B_n+\mathrm i\Im \*p(z)$ introduces another
exponentially small error.

On the shifted contour, write $\*x=\*p(z)+\*v$ with $\*v$ real.
Taylor expansion gives, uniformly in nearby parameters,
$$
f(\*p(z)+\*v,z)
=-\frac12\*v^\trans H(z)\*v+O(\|\*v\|_2^3),
\qquad
\Re f(\*p(z)+\*v,z)\le-c_1\|\*v\|_2^2.
$$
Set $\*v=\*u/\sqrt n$.
The cubic remainder in the exponent is
$O(n^{-1/2}\|\*u\|_2^3)$ and tends to zero for each fixed $\*u$.
Moreover, $\sqrt n(B_n-\Re \*p(z))$ expands to $\=R^k$.
Gaussian domination and local uniform convergence of $g_n$
therefore give
$$
\begin{aligned}
\int_{B_n+\mathrm i\Im \*p(z)}
g_n(\*x,z)\e^{nf(\*x,z)}\dif \*x
&=n^{-k/2}\left(
g(\*p(z),z)\int_{\=R^k}
\e^{-\frac12\*u^\trans H(z)\*u}\dif \*u+o(1)
\right)\\
&=\left(\frac{2\pi}{n}\right)^{k/2}
\left(
\frac{g(\*p(z),z)}{\sqrt{\det H(z)}}+o(1)
\right).
\end{aligned}
$$
All estimates are uniform on compact subsets of a sufficiently
small $U_B$.
Combining the integral asymptotic with the lattice-sum estimate
and dividing by $(2\pi n)^{k/2}$ proves the claim.
\end{proof}

We next record the exact combinatorial count of pairings realizing a given empirical profile $(\*\rho,Q)$, together with the conditional inclusion probabilities of prescribed half-edge pairs.

\begin{lemma}[Pairing enumeration and edge-inclusion probability~{\cite[Section~2]{kemkes2010chromatic}, \cite[(2.7)]{coja2016chromatic}}]\label{lem:pairing-enumeration}
  With $m!!\defeq m(m-2)\cdots 1$ for odd $m\ge1$, $m!!\defeq m(m-2)\cdots 2$ for even $m\ge2$, and $0!=0!!=(-1)!!=1$, the expected number of assignments with feasible profile $(\*\rho,Q)$ (\Cref{def:profiles-bethe-rates}) under $\mathrm{CM}=\mathrm{CM}_{n,d}$ (\Cref{def:config-model}) is
  \begin{equation}\label{eq:pairing-type-count}
    \frac{n!\prod_x(dn\rho_x)!}
    {(dn-1)!!\prod_x(n\rho_x)!\prod_{x<y}(dnQ_{xy})!
      \prod_x(dnQ_{xx})!!}.
  \end{equation}
  For the underlying uniform labeled pairing $\mathcal M$ generating $G\sim\mathrm{CM}_{n,d}$, conditional on a spin assignment and feasible profile $(\*\rho,Q)$, the pairing is uniform among those realizing $(\*\rho,Q)$; for each fixed $e\ge1$, the probability that it contains $e$ prescribed disjoint half-edge pairs of types $(a_r,b_r)_{r=1}^e$ is
  \[
    (dn)^{-e}\left(\prod_{r=1}^e\frac{Q_{a_rb_r}}{\rho_{a_r}\rho_{b_r}}+O(n^{-1})\right)
  \]
  uniformly whenever all $\rho_{a_r}$ and $\rho_{b_r}$ are bounded away from zero (including when some entries of $Q$ vanish).
\end{lemma}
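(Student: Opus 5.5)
The plan is to prove \eqref{eq:pairing-type-count} by direct counting, and then obtain the inclusion probability as a ratio of two such counts.

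\emph{Counting.} Under $\mathrm{CM}_{n,d}$ the pairing is uniform over the $(dn-1)!!$ perfect matchings of the $dn$ labeled half-edges. So the expected number of assignments with profile $(\*\rho,Q)$ equals
\[
  \frac{\#\{(\sigma,\mathfrak m):\ (\mathfrak m,\sigma)\text{ has profile }(\*\rho,Q)\}}{(dn-1)!!}.
\]
The numerator is counted in three stages.
\begin{enumerate}
  \item Choose the vertex coloring $\sigma$ with class sizes $n\rho_x$. This gives the multinomial factor $n!/\prod_x(n\rho_x)!$. Class $x$ then owns $dn\rho_x$ half-edges.
  \item Within each class $x$, split its half-edges into labeled groups of sizes $dnQ_{xy}$ for $y\neq x$ and $dnQ_{xx}$. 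This gives the multinomial factor $(dn\rho_x)!/\prod_y(dnQ_{xy})!$. The identity $\sum_y dnQ_{xy}=dn\rho_x$ is exactly the vertex-marginal condition.
  \item Match the groups. For each $x<y$, pair the $dnQ_{xy}$ half-edges of class $x$ bijectively with those of class $y$, giving $(dnQ_{xy})!$ choices. For each $x$, perfectly match the $dnQ_{xx}$ internal half-edges, giving $(dnQ_{xx}-1)!!$ choices; this is where the parity condition in feasibility is used.
\end{enumerate}
For $x<y$, the factor $(dnQ_{xy})!$ appears twice in the denominators of stage 2 (once from class $x$, once from class $y$) and once in the numerator of stage 3, so only $(dnQ_{xy})!$ survives in the denominator. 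For the diagonal, $(dnQ_{xx}-1)!!/(dnQ_{xx})!=1/(dnQ_{xx})!!$. Multiplying the three stages and dividing by $(dn-1)!!$ gives exactly \eqref{eq:pairing-type-count}. The matching and coloring steps are independent choices, so no pairing is counted twice.

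\emph{Conditional uniformity.} Conditional on $\sigma$ and the profile, every realizing pairing carries the same uniform weight, so the conditional law of the pairing is uniform over realizing pairings.

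\emph{Inclusion probability.} Prescribing $e$ disjoint half-edge pairs of types $(a_r,b_r)$ forces those pairs. The remaining half-edges then form a pairing of a configuration with modified counts:
\[
  dn\rho_x\ \text{is reduced by}\ \#\{r:a_r=x\}+\#\{r:b_r=x\},\qquad dnQ_{ab}\ \text{is reduced by the number of forced pairs of type }\{a,b\}.
\]
Since the pairs are disjoint, the remaining half-edges are unconstrained. The probability is therefore the ratio of the stage-2/stage-3 counts with the reduced parameters to the counts with the original parameters; the vertex-coloring factor is identical and cancels.

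\emph{Evaluating the ratio.} Removing one forced pair of type $a\neq b$ changes the count in three ways.
\begin{itemize}
  \item The cross-term $(dnQ_{ab})!$ is replaced by $(dnQ_{ab}-1)!$, multiplying the count by $dnQ_{ab}$.
  \item The factors $(dn\rho_a)!$ and $(dn\rho_b)!$ each lose one term, dividing the count by $dn\rho_a\cdot dn\rho_b$.
  \item Because the removed half-edges are specific labeled ones rather than arbitrary ones, the group assignment in stage 2 is conditioned. This exact bookkeeping is the point to check most carefully.
\end{itemize}
The net factor per pair is $dnQ_{ab}/(dn\rho_a\cdot dn\rho_b)=(dn)^{-1}Q_{ab}/(\rho_a\rho_b)$. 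For a diagonal pair $a=b$, $(dnQ_{aa})!!$ drops to $(dnQ_{aa}-2)!!$ and $(dn\rho_a)!$ loses two terms, giving the same expression. With $e$ fixed, iterating these $e$ falling-factorial ratios changes each factor by a relative $1+O(1/n)$, provided $\rho_{a_r},\rho_{b_r}$ are bounded below.

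\emph{Main obstacle.} The expected difficulty is keeping the error additive rather than multiplicative when $Q_{ab}$ is small or zero. The plan is to write the exact product $\prod_r (dnQ_{a_rb_r}-j_r)/\bigl((dn\rho_{a_r}-i_r)(dn\rho_{b_r}-i'_r)\bigr)$, with $j_r,i_r,i'_r=O(1)$. Then:
\begin{itemize}
  \item each numerator equals $dnQ_{a_rb_r}+O(1)$;
  \item each denominator equals $(dn)^2\rho_{a_r}\rho_{b_r}\bigl(1+O(n^{-1})\bigr)$, using that $\rho_{a_r},\rho_{b_r}$ are bounded below.
\end{itemize}
Expanding the product gives $(dn)^{-e}\bigl(\prod_r Q_{a_rb_r}/(\rho_{a_r}\rho_{b_r})+O(n^{-1})\bigr)$, including the vanishing case (where the exact count gives probability $0$, consistent with an $O(n^{-1})$ additive error). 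This matches the statement and \cite[(2.7)]{coja2016chromatic}.
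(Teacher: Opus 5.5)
Your proposal is correct and is essentially the paper's argument: the inclusion probability is the ratio of compatible-pairing counts with reduced class sizes and edge-type counts. This ratio is exactly the falling-factorial product you write (with decrements by $2$ in the diagonal case), and it is then expanded using that $\rho_{a_r},\rho_{b_r}$ are bounded below. The differences are that you also derive \eqref{eq:pairing-type-count} by the three-stage count, which the paper takes from the cited references, and that your third bullet about the ``conditioned'' group assignment is superfluous: the reduced-parameter count already accounts exactly for the forced labeled half-edges, so no further bookkeeping needs checking.
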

\begin{proof}
  Let $e_{xy}=e_{yx}$ be the number of prescribed pairs of unordered type $\{x,y\}$ and $k_x=\sum_y(1+\one{x=y})e_{xy}$ be the number of prescribed half-edges of spin $x$. By~\eqref{eq:pairing-type-count}, the ratio of the number of compatible pairings containing the $e$ prescribed pairs to the total number of compatible pairings is
  \[
    \frac{\prod_{x<y}(dnQ_{xy})_{e_{xy}}\prod_x\prod_{j=0}^{e_{xx}-1}(dnQ_{xx}-2j)}{\prod_x(dn\rho_x)_{k_x}}
    =(dn)^{-e}\left(\prod_{r=1}^e\frac{Q_{a_rb_r}}{\rho_{a_r}\rho_{b_r}}+O(n^{-1})\right)
  \]
  uniformly whenever all $\rho_{a_r}$ and $\rho_{b_r}$ are bounded away from zero, since expanding the numerator as a polynomial in $dnQ_{xy}$ (with $0\le Q_{xy}\le1$) gives $\prod_{r=1}^e(dnQ_{a_rb_r})+O(n^{e-1})$ and dividing by $\prod_x(dn\rho_x)_{k_x}=(dn)^{2e}\prod_{r=1}^e(\rho_{a_r}\rho_{b_r})(1+O(n^{-1}))$ yields leading term $(dn)^{-e}\prod_{r=1}^e\frac{Q_{a_rb_r}}{\rho_{a_r}\rho_{b_r}}$ and lower-order terms of order $O(n^{-e-1})$ even when some $Q_{xy}$ vanish.
\end{proof}

We recall the standard iterated Cauchy coefficient formula.

\begin{theorem}[Iterated Cauchy coefficient formula]\label{thm:cauchy-coeff}
For a polynomial $F(X_1,\ldots,X_k)$, multi-index $\*\alpha\in\=Z_{\ge0}^k$, and $u_1,\ldots,u_k\ne0$, writing $[\*X^{\*\alpha}]F$ for the coefficient of $\*X^{\*\alpha}\defeq\prod_{j=1}^k X_j^{\alpha_j}$ in $F$ and $\*u^{-\*\alpha}\defeq\prod_{j=1}^k u_j^{-\alpha_j}$,
$$
 [\*X^{\*\alpha}]F=
 \frac{\*u^{-\*\alpha}}{(2\pi)^k}
 \int_{[-\pi,\pi]^k}F(u_1\e^{\mathrm i\theta_1},\ldots,u_k\e^{\mathrm i\theta_k})
                  \e^{-\mathrm i\*\alpha^\trans\*\theta}\dif\*\theta.
$$
\end{theorem}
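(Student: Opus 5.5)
The plan is to reduce the formula to the orthogonality of the characters $\theta\mapsto \e^{\mathrm i m\theta}$ on $[-\pi,\pi]$ and to use linearity, since $F$ is a polynomial and hence a finite sum of monomials. No analytic input (such as contour deformation or the residue theorem) is needed, because the integrand is a trigonometric polynomial in $\*\theta$.

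\textbf{Step 1: expand $F$.} Write $F(\*X)=\sum_{\*\beta\in\mathcal S}c_{\*\beta}\*X^{\*\beta}$ for a finite index set $\mathcal S\subset\=Z_{\ge0}^k$, so that $[\*X^{\*\alpha}]F=c_{\*\alpha}$, with the convention $c_{\*\alpha}=0$ if $\*\alpha\notin\mathcal S$. Substituting $X_j=u_j\e^{\mathrm i\theta_j}$ gives
\[
F(u_1\e^{\mathrm i\theta_1},\ldots,u_k\e^{\mathrm i\theta_k})\,\e^{-\mathrm i\*\alpha^\trans\*\theta}
=\sum_{\*\beta\in\mathcal S}c_{\*\beta}\,\*u^{\*\beta}\,\e^{\mathrm i(\*\beta-\*\alpha)^\trans\*\theta}.
\]

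\textbf{Step 2: integrate termwise.} The sum is finite, so we may interchange it with the integral. By Fubini, each term factorizes over the coordinates, and the one-dimensional identity
\[
\frac1{2\pi}\int_{-\pi}^{\pi}\e^{\mathrm i m\theta}\dif\theta=\one{m=0}\qquad(m\in\=Z)
\]
gives
\[
\frac{1}{(2\pi)^k}\int_{[-\pi,\pi]^k}\e^{\mathrm i(\*\beta-\*\alpha)^\trans\*\theta}\dif\*\theta=\one{\*\beta=\*\alpha}.
\]
Hence the integral equals $(2\pi)^k c_{\*\alpha}\*u^{\*\alpha}$. Multiplying by $\*u^{-\*\alpha}/(2\pi)^k$, which is well defined because every $u_j\ne0$, recovers $c_{\*\alpha}=[\*X^{\*\alpha}]F$.

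There is no substantive obstacle. The only points to check are that the $u_j$ may be arbitrary nonzero complex numbers, not just positive reals, which is fine because the $u_j$ enter only through the scalar factor $\*u^{\*\beta}$, and that the exponents $\*\beta-\*\alpha$ lie in $\=Z^k$, so the orthogonality relation applies.
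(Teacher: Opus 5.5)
Your proof is correct: expanding $F$ into finitely many monomials and applying $\frac{1}{2\pi}\int_{-\pi}^{\pi}e^{\mathrm{i}m\theta}\,\mathrm{d}\theta=\mathbf{1}\{m=0\}$ coordinatewise is the standard argument, and the paper does not prove this statement but simply recalls it as a standard fact. Keep your remark that the $u_j$ may be arbitrary nonzero complex numbers, because the paper applies the formula with $u_i=\mathrm{e}^{s_i}$ for complex multipliers $s_i$ in the proof of the local moment asymptotics.
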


Finally, the following lemma provides the explicit polynomial generating function to which \Cref{thm:cauchy-coeff} will be applied, expressing the fixed-density weight $W_n(\*\rho,\zeta)$ in terms of a single multivariate coefficient.

\begin{lemma}[Polynomial representation of the fixed-density weight]
\label{lem:polynomial-representation}
Let ${S}=dn$ be even, and let $\*\rho\in\Delta_{[m]}$ be a vertex marginal distribution satisfying
$n\*\rho\in\=Z_{\ge0}^m$.
Given $\*\pi=\*\pi(\zeta)\in(\=C\setminus\{0\})^m$, a symmetric matrix $P=P(\zeta)\in\=C^{m\times m}$, and $M=\diag(\*\pi)^{-1}P\diag(\*\pi)^{-1}$, the fixed-density weight under $\mathrm{CM}=\mathrm{CM}_{n,d}$ (\Cref{def:config-model}) satisfies
$$
W_n(\*\rho,\zeta)
\defeq\E[\mathrm{CM}]{
      \sum_{\sigma:\,\*\rho(\sigma)=\*\rho}\prod_{v\in[n]}\pi_{\sigma_v}
                        \prod_{uv\in E}M_{\sigma_u\sigma_v}}
=\frac{n!}{{S}!}
\prod_{i=1}^m
\frac{({S}\rho_i)!}{(n\rho_i)!}\pi_i^{-({S}-n)\rho_i}
[\*X^{{S}\*\rho}](\*X^\trans P\*X)^{{S}/2},
$$
where $[\*X^{{S}\*\rho}]$ extracts the coefficient of
$\prod_{i=1}^m X_i^{{S}\rho_i}$.
\end{lemma}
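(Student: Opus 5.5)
The identity follows by expanding the expectation as a sum over assignments and counting pairings by color, then recognizing the resulting product of pairing counts as a single polynomial coefficient.

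\emph{Step 1: factor out the vertex weights.} Fix an assignment $\sigma$ with $\*\rho(\sigma)=\*\rho$. The vertex factor $\prod_v\pi_{\sigma_v}=\prod_i\pi_i^{n\rho_i}$ depends only on $\*\rho$. By symmetry over vertex labels there are $n!/\prod_i(n\rho_i)!$ such assignments, and each contributes the same expectation. So
\[
W_n(\*\rho,\zeta)=\frac{n!}{\prod_i(n\rho_i)!}\prod_i\pi_i^{n\rho_i}\,\E[\mathrm{CM}]{\prod_{uv\in E}M_{\sigma_u\sigma_v}}.
\]
Here every vertex of color $i$ carries $d$ half-edges, so color class $i$ has $S\rho_i$ half-edges.

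\emph{Step 2: rewrite the edge weights.} Substitute $M_{ij}=P_{ij}/(\pi_i\pi_j)$. Each half-edge of color $i$ contributes exactly one factor $\pi_i^{-1}$, so
\[
\prod_{uv}M_{\sigma_u\sigma_v}=\prod_i\pi_i^{-S\rho_i}\prod_{uv}P_{\sigma_u\sigma_v}.
\]
Combined with Step 1 this gives the prefactor $\pi_i^{-(S-n)\rho_i}$. It remains to show
\[
\sum_{\text{pairings}}\prod_{\text{pairs}}P_{c(h)c(h')}=\frac{\prod_i(S\rho_i)!}{(S/2)!\,2^{S/2}}\,[\*X^{S\*\rho}](\*X^\trans P\*X)^{S/2}.
\]
Dividing this by the number of pairings $(S-1)!!=S!/((S/2)!\,2^{S/2})$ then produces the factor $\prod_i(S\rho_i)!/S!$.

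\emph{Step 3: the coefficient identity.} Expand $(\*X^\trans P\*X)^{S/2}=\bigl(\sum_{i,j}P_{ij}X_iX_j\bigr)^{S/2}$ as a sum over ordered sequences of $S/2$ ordered color pairs $(i_r,j_r)$. The coefficient of $\*X^{S\*\rho}$ collects exactly those sequences in which color $i$ appears $S\rho_i$ times.

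Map each such sequence to a labeled structure as follows. For each $i$, choose a bijection of the $S\rho_i$ labeled half-edges of color $i$ onto the occurrences of color $i$ in the sequence; there are $\prod_i(S\rho_i)!$ choices. The resulting pairing arises from each of its edges being ordered in $2$ ways and the $S/2$ edges being ordered in $(S/2)!$ ways, so every pairing is counted exactly $(S/2)!\,2^{S/2}$ times. Hence
\[
\prod_i(S\rho_i)!\,[\*X^{S\*\rho}](\*X^\trans P\*X)^{S/2}=(S/2)!\,2^{S/2}\sum_{\mathfrak m}\prod_{\text{pairs}}P,
\]
which is the required identity. Loops and diagonal entries are handled automatically: $P_{ii}X_i^2$ is a single term, and the ordering factor $2$ still applies to the two half-edge slots.

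The only delicate point is this multiplicity count: the overcount factor $(S/2)!\,2^{S/2}$ must be the same for every pairing, including pairings that match half-edges within the same color class. Everything else is bookkeeping of the exponents of $\pi_i$.
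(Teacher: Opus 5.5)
Your proposal is correct and follows essentially the same route as the paper: count the $n!/\prod_i(n\rho_i)!$ assignments, absorb the $\pi_i$ normalizations into $\pi_i^{-(S-n)\rho_i}$, identify the pairing sum with the coefficient via the $2^{S/2}(S/2)!$ ordered-and-oriented representations of each pairing together with the $\prod_i(S\rho_i)!$ color-preserving half-edge assignments, and then divide by $(S-1)!!$. Your explicit bijection between (color sequence, half-edge assignment) pairs and ordered, oriented pairings, including the check that same-color pairs are still counted twice, just spells out the step the paper states in one line.
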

\begin{proof}
Fix an assignment with vertex density $\*\rho$, so that there
are ${S}\rho_i$ half-edges of spin $i$.
Each pairing has $2^{{S}/2}({S}/2)!$ representations obtained
by ordering its pairs and orienting each pair.
Expanding $(\*X^\trans P\*X)^{{S}/2}$ and assigning the labeled
half-edges to their spin positions therefore gives
$$
\sum_{\mathcal M}
\prod_{\{u,v\}\in\mathcal M}P_{\sigma_u\sigma_v}
=
\frac{\prod_{i=1}^m({S}\rho_i)!}{2^{{S}/2}({S}/2)!}
[\*X^{{S}\*\rho}](\*X^\trans P\*X)^{{S}/2},
$$
where $\mathcal M$ ranges over all pairings and $\sigma_u$
denotes the spin inherited by half-edge $u$.
Divide by
$({S}-1)!!={S}!/[2^{{S}/2}({S}/2)!]$,
multiply by the number $n!/\prod_i(n\rho_i)!$ of assignments,
and include the normalization
$\prod_i\pi_i^{-({S}-n)\rho_i}$.
This yields the asserted identity.
\end{proof}

% END INCLUDED FILE: appendices/b-analytic-tools.tex
% BEGIN INCLUDED FILE: appendices/a-moment-bounds.tex
\section{Proofs of the Moment Bounds}\label{sec:moment_bounds}\label{app:appendix-moments}
{We prove \Cref{prop:profile-moments} and \Cref{lem:local-profile-moments} by reducing the Bethe rate functions to the vertex density and applying coefficient extraction and Laplace summation.}

\subsection{{Profile Rate Function} and Derivatives}\label{subsec:common-local-calc}
We treat the hardcore and Potts models and their tensor products
at two possibly different parameter values in a common framework.
For the first-moment model, the parameter is $\zeta=z$ and the vertex and edge marginals
are $(\*\pi(z),P(z))$; for the paired (second-moment) model, $\zeta=(z,w)$ and they are $\bigl(\*\pi(z)\otimes\*\pi(w),\,P(z)\otimes P(w)\bigr)$.
In either case, we denote these marginals by
$\*\pi(\zeta)$ and $P(\zeta)$.

Following the notation in \Cref{def:bethe-canonical}, \Cref{def:profiles-bethe-rates}, and \Cref{def:local-second-moment}, assume that
$\*\pi(\zeta)$ and $P(\zeta)$ are holomorphic near a real base point
$t_0$, take real values at real parameters, and satisfy
\[
\allone^\trans\*\pi=1,\qquad P=P^\trans,\qquad P\allone=\*\pi,
\qquad \*\pi(t_0)>0,\quad P(t_0)\ge0,
\]
where $\zeta\in\=C^{k_\zeta}$ and $k_\zeta=1$ for the first-moment model and $k_\zeta=2$ for the paired model.
Let $m$ denote the dimension of $\*\pi$, and write
\[
A=\diag(\*\pi),\qquad
K=A^{-1}P,\qquad
R=K-\allone\*\pi^\trans,\qquad
M=A^{-1}PA^{-1}.
\]
For the local calculations, assume that the nonconstant eigenvalues
$\kappa_a$ of $K(t_0)$ satisfy
\begin{equation}\label{eq:local-spectrum}
-1<\kappa_a<\frac1{d-1}\qquad(a\ge2).
\end{equation}

For the hardcore and Potts models considered here,
\eqref{eq:local-spectrum} holds for both $K_t$
at $t$ and the paired kernel $K_t\otimes K_t$ at $(t,t)$ under
\Cref{cond:strict-KS}.
Indeed, the nonconstant eigenvalues of $K_t$ are nonpositive
by \eqref{eq:hardcore_kappa} and \eqref{eq:Potts_kappa},
while the paired nonconstant eigenvalues are
\[
\kappa_a(t),\qquad \kappa_b(t),\qquad
\kappa_a(t)\kappa_b(t),\qquad a,b\ge2.
\]
Nonpositivity and \Cref{cond:strict-KS} place all these
eigenvalues in $(-1,1/(d-1))$.

\paragraph{Decomposition of $\Phi$.}
At a real parameter $\zeta$ with $P(\zeta)\ge0$, fix the vertex marginal distribution $\*\rho$. Maximizing
\[
 \Phi_\zeta(\*\rho,Q)
 =(d-1)\DKL(\*\rho\Vert\*\pi)-\frac d2\DKL(Q\Vert P)
\]
over symmetric nonnegative $Q$ with $Q\allone=\*\rho$ amounts to
minimizing $\DKL(Q\Vert P)$. Near $\*\pi(t_0)$, define
\begin{equation}\label{eq:def-of-Q-star}
 L=\sum_{i,j}P_{ij}\e^{s_i+s_j},\qquad
 Q^*_{ij}=\frac{P_{ij}\e^{s_i+s_j}}L,\qquad
 Q^*\allone=\*\rho,\quad s_m=0.
\end{equation}
The row equation $Q^*\allone = \*\rho$ determines holomorphic multipliers locally, as
justified below.

At real parameters with $P\ge0$, direct substitution gives
\begin{equation}\label{eq:decomposition}
 \Phi_\zeta(\*\rho,Q)=f(\*\rho,\zeta)-\frac d2\DKL(Q\Vert Q^*),
 \qquad
 f\defeq(d-1)\DKL(\*\rho\Vert\*\pi)-d\*\rho^\trans\*s+\frac d2\log L.
\end{equation}
Thus $Q^*$ is the unique optimizer at fixed $\*\rho$, and the remaining
dependence on $\*\rho$ is contained in $f$.
Since \(L=1\) at \((\*\rho,\zeta)=(\*\pi(t_0),t_0)\), we may choose local holomorphic logarithms so that \(f\) is holomorphic near \((\*\pi(t_0),t_0)\).

\begin{lemma}[Derivatives of {the profile rate function} $f$]\label{lem:gradient}
The multipliers $\*s(\*\rho,\zeta)$ in \eqref{eq:def-of-Q-star} exist and depend holomorphically on $(\*\rho,\zeta)$
near $(\*\pi(t_0),t_0)$.
For $\*h\in\allone^\perp\defeq\{\*h\in\=R^m:\allone^\trans\*h=0\}$ (with $\log(\*\rho/\*\pi)\defeq(\log(\rho_i/\pi_i))_{i=1}^m$ taken entrywise),
\begin{align}
 D_{\*\rho} f(\*\rho,\zeta)[\*h]
 &=\left((d-1)\log\frac{\*\rho}{\*\pi}
                  -d\*s\right)^\trans \*h,
 \label{eq:grad_f}\\
 D_{\*\rho}^2f(\*\pi,\zeta)[\*h,\*h]
 &=-\*h^\trans H\*h,\qquad
 H=d(A+P)^{-1}-(d-1)A^{-1}. \label{eq:hessian}
\end{align}
Moreover, $f(\*\pi(\zeta),\zeta)=0$, its constrained first derivative vanishes
there, and $H$ is positive definite on $\allone^\perp$ at $\zeta=t_0$.
\end{lemma}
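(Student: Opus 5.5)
The plan is to treat $\*s$ as defined implicitly by the row equations $Q^*\allone=\*\rho$ and apply the holomorphic implicit function theorem at the base point, then differentiate the explicit formula for $f$ in \eqref{eq:decomposition}.

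\emph{Existence of multipliers.} Write $G(\*s,\*\rho,\zeta)\defeq\bigl(\sum_j P_{ij}\e^{s_i+s_j}/L\bigr)_{i}-\*\rho$, with $s_m=0$ and one redundant coordinate dropped, since both sides sum to $1$. At $(\*s,\*\rho,\zeta)=(\*0,\*\pi(t_0),t_0)$ we have $L=1$ and $P\allone=\*\pi$, so $G=0$. The Jacobian in $\*s$ there is
\[
  D_{\*s}\bigl(Q^*\allone\bigr)[\*u]=A\*u+P\*u-2\*\pi\,\*\pi^\trans\*u .
\]
Restrict it to the gauge $s_m=0$, projected to $\allone^\perp$. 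Using $A+P=A(I+K)$, invertibility on this subspace amounts to $-1$ not being a nonconstant eigenvalue of $K$, which holds by \eqref{eq:local-spectrum}. On the constant direction the map is $2\*\pi-2\*\pi=0$, and this is exactly the degree of freedom that the gauge removes. The holomorphic IFT then gives $\*s(\*\rho,\zeta)$ holomorphic near the base point.

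\emph{Gradient.} Differentiate $f=(d-1)\DKL(\*\rho\Vert\*\pi)-d\*\rho^\trans\*s+\frac d2\log L$ in $\*\rho$ along $\*h\in\allone^\perp$. The $\DKL$ term contributes $(d-1)\log(\*\rho/\*\pi)^\trans\*h$, because the $+1$ from $\rho\log\rho$ integrates against $\allone^\trans\*h=0$. The envelope identity $\partial_{\*s}\bigl(\frac d2\log L\bigr)=d\,Q^*\allone=d\*\rho$ cancels the terms coming from $-d\*\rho^\trans D\*s[\*h]$, which leaves $-d\*s^\trans\*h$ and proves \eqref{eq:grad_f}.

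\emph{Value and first derivative at $\*\pi(\zeta)$.} At $\*\rho=\*\pi(\zeta)$, the choice $\*s=0$ solves the row equation because $P\allone=\*\pi$ and $L=\allone^\trans\*\pi=1$. By uniqueness in the IFT, $\*s=0$ there, so $f=0$ and \eqref{eq:grad_f} vanishes.

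\emph{Hessian.} Differentiate \eqref{eq:grad_f} once more at $\*\rho=\*\pi$. This gives $(d-1)\*h^\trans A^{-1}\*h-d\*h^\trans D\*s[\*h]$. Differentiating $Q^*\allone=\*\rho$ gives $(A+P)\*u-2\*\pi\*\pi^\trans\*u=\*h$ with $\*u=D\*s[\*h]$. On $\allone^\perp$-pairings the rank-one term drops out of $\*h^\trans\*u$, up to the gauge shift by a multiple of $\allone$, which is invisible against $\*h$. Hence $\*h^\trans\*u=\*h^\trans(A+P)^{-1}\*h$, and $D^2f=-\*h^\trans H\*h$ with $H$ as in \eqref{eq:hessian}.

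\emph{Positive definiteness.} This is the step that needs care. Substitute $\*h=A^{1/2}\*g$ with $\*g\perp\sqrt{\*\pi}$, and let $S=A^{-1/2}PA^{-1/2}$, a symmetric matrix similar to $K$. Then
\[
  \*h^\trans H\*h=\*g^\trans\bigl(d(I+S)^{-1}-(d-1)I\bigr)\*g .
\]
On the invariant subspace $\sqrt{\*\pi}^\perp$, the matrix $S$ has eigenvalues $\kappa_a$. The coefficient
\[
  \frac{d}{1+\kappa_a}-(d-1)
\]
is positive exactly when $\kappa_a<1/(d-1)$, and the denominator is positive since $\kappa_a>-1$. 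Both hold by \eqref{eq:local-spectrum}, so $H$ is positive definite on $\allone^\perp$.
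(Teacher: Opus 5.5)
Your proof is correct and follows essentially the same route as the paper. The paper applies the implicit function theorem to the row equations with Jacobian $A+P-2\*\pi\*\pi^\trans$; it states nondegeneracy as positive definiteness of the leading $(m-1)\times(m-1)$ minor, which is the same fact as your kernel computation. It then uses the envelope identity $\partial_{\*s}\log L=2\*\rho$ for the gradient, $D_{\*\rho}\*s[\*h]=(A+P)^{-1}\*h+c_{\*h}\allone$ for the Hessian, and the eigenvalues $[1-(d-1)\kappa_a]/(1+\kappa_a)$ of $A^{1/2}HA^{1/2}$ on $\sqrt{\*\pi}^{\,\perp}$ for positivity, exactly as you do.
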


\begin{proof}
The Jacobian of the first $m-1$ row equations in
\eqref{eq:def-of-Q-star} is the leading principal minor $V'$ of
\[
 V=\diag(\*\rho)+Q^*-2\*\rho\*\rho^\trans.
\]
At $(\*\rho,\zeta) = (\*\pi(t_0),t_0)$, $V$ has kernel
$\operatorname{span}\{\allone\}$ and is positive definite on
$\allone^\perp$, since $\kappa_a>-1$. Hence $V'\succ0$, and the
implicit function theorem gives the existence of multipliers.

At fixed $\zeta$, differentiation along $\*h\in\allone^\perp$ gives $D_{\*\rho}\log L[\*h]=2\*\rho^\trans D_{\*\rho}\*s[\*h]$, which yields \eqref{eq:grad_f}. At $\*\rho=\*\pi(\zeta)$, we have
$\*s=\*0$, $Q^*=P$, and differentiation of the row equations gives
\[
 D_{\*\rho}\*s(\*\pi,\zeta)[\*h]=(A+P)^{-1}\*h+c_{\*h}\allone.
\]
This proves \eqref{eq:hessian}.
At $(\*\rho,\zeta) = (\*\pi(t_0),t_0)$, with $\sqrt{\*\pi}\defeq(\sqrt{\pi_i})_{i=1}^m$, the eigenvalues of $A^{1/2}HA^{1/2}$ on $\sqrt{\*\pi}^{\,\perp}$ are $[1-(d-1)\kappa_a]/(1+\kappa_a)>0$ by~\eqref{eq:local-spectrum}.
Since $A^{-1/2}\*h\perp\sqrt{\*\pi}$ whenever $\*h\perp\allone$, the claimed positivity follows.
\end{proof}

\subsection{{Local Quadratic Gap and Laplace Asymptotics}}

\begin{lemma}[Local quadratic gap]\label{lem:local-gap}
Under \eqref{eq:local-spectrum}, there exist neighborhoods $B_0$ of
$\*\pi(t_0)$ and $U$ of $t_0$, and $c>0$, such that for real
$\zeta\in U$ with $P(\zeta)\ge0$, $\*\rho\in B_0$, and every edge marginal $Q$ with vertex marginal $\*\rho$,
\[
 \Phi_\zeta(\*\rho,Q)\le-c\left(
 \|\*\rho-\*\pi(\zeta)\|_2^2+\|Q-P(\zeta)\|_{\mathrm F}^2\right).
\]
\end{lemma}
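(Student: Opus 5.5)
The plan is to start from the decomposition~\eqref{eq:decomposition}, $\Phi_\zeta(\*\rho,Q)=f(\*\rho,\zeta)-\frac d2\DKL(Q\Vert Q^*)$. This splits the gap into two parts: a vertex-density part controlled by $f$, and an edge part controlled by the relative entropy to the fixed-density optimizer $Q^*=Q^*(\*\rho,\zeta)$. I will bound each part separately by a negative quadratic and then recombine them with the triangle inequality.

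\emph{Vertex part.} By \Cref{lem:gradient}, at each real $\zeta$ near $t_0$ we have $f(\*\pi(\zeta),\zeta)=0$, the constrained gradient vanishes there, and $D^2_{\*\rho}f(\*\pi(\zeta),\zeta)=-H(\zeta)$ on $\allone^\perp$. Moreover $H(t_0)$ is positive definite on $\allone^\perp$. Since $f$ is holomorphic, hence $C^3$, near $(\*\pi(t_0),t_0)$, continuity gives $H(\zeta)\succeq 2c_1 I$ on $\allone^\perp$ for $\zeta\in U$. A second-order Taylor expansion with uniformly bounded third derivatives then gives $f(\*\rho,\zeta)\le -c_1\|\*\rho-\*\pi(\zeta)\|_2^2$ for $\*\rho\in B_0$, after shrinking $B_0$ and $U$. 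Because $\*\pi(\zeta)\to\*\pi(t_0)$, all points $\*\rho\in B_0$ stay in the Taylor neighborhood.

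\emph{Edge part.} Pinsker's inequality gives $\DKL(Q\Vert Q^*)\ge\frac12\|Q-Q^*\|_1^2\ge\frac12\|Q-Q^*\|_{\mathrm F}^2$. This holds for every edge marginal $Q$ with vertex marginal $\*\rho$; if $Q$ charges a zero entry of $P$ (hence of $Q^*$), the divergence is $+\infty$ and the bound is trivial. Next, by \Cref{lem:gradient} the multipliers $\*s(\*\rho,\zeta)$ are holomorphic with $\*s(\*\pi(\zeta),\zeta)=\*0$. Hence $Q^*$ is Lipschitz in $\*\rho$ near $\*\pi(t_0)$, uniformly in $\zeta\in U$, and $Q^*(\*\pi(\zeta),\zeta)=P(\zeta)$. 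This yields $\|Q^*-P(\zeta)\|_{\mathrm F}\le C_2\|\*\rho-\*\pi(\zeta)\|_2$.

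\emph{Combining.} Using $\|Q-P\|_{\mathrm F}^2\le 2\|Q-Q^*\|_{\mathrm F}^2+2C_2^2\|\*\rho-\*\pi\|_2^2$, we get
\[
\Phi_\zeta\le -c_1\|\*\rho-\*\pi\|_2^2-\tfrac d4\|Q-Q^*\|_{\mathrm F}^2\le -c_1\|\*\rho-\*\pi\|_2^2-\tfrac d8\|Q-P\|_{\mathrm F}^2+\tfrac{d}{4}C_2^2\|\*\rho-\*\pi\|_2^2 .
\]
This naive combination may lose the vertex term when $C_2$ is large. To fix it, I take a convex combination instead: write $\Phi\le -c_1 x-\frac d4 y$ with $x=\|\*\rho-\*\pi\|_2^2$ and $y=\|Q-Q^*\|_{\mathrm F}^2$. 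Since $\|Q-P\|_{\mathrm F}^2\le 2y+2C_2^2x$, setting $\epsilon=\min\{d/8,\,c_1/(4C_2^2)\}$ gives $\epsilon\|Q-P\|_{\mathrm F}^2\le \frac d4 y+\frac{c_1}2x$. Therefore $\Phi\le-\frac{c_1}{2}x-\epsilon\|Q-P\|_{\mathrm F}^2$, and the claim follows with $c=\min\{c_1/2,\epsilon\}$.

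The main obstacle is uniformity of the vertex Taylor bound in $\zeta$. It needs the real-parameter decomposition~\eqref{eq:decomposition} to remain valid when $P(\zeta)$ has zero entries (as for hardcore or colorings). In that case $Q^*$ inherits the zero pattern of $P$, and every feasible $Q$ with finite divergence is supported there. I will check that the implicit-function construction of $\*s$ is unaffected by such zeros, which should hold because only the Jacobian $V'$ at the base point matters.
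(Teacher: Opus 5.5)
Your proposal is correct and follows essentially the same route as the paper: the decomposition~\eqref{eq:decomposition}, a uniform Taylor bound $f(\*\rho,\zeta)\le-c_1\|\*\rho-\*\pi(\zeta)\|_2^2$ from the positive-definite Hessian in \Cref{lem:gradient}, Pinsker's inequality for $\DKL(Q\Vert Q^*)$, and the Lipschitz bound $\|Q^*-P(\zeta)\|_{\mathrm F}\le C\|\*\rho-\*\pi(\zeta)\|_2$ combined via the triangle inequality. Your explicit choice $\epsilon=\min\{d/8,\,c_1/(4C_2^2)\}$ simply spells out the final combination step that the paper leaves implicit.
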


\begin{proof}
Writing $\*h=\*\rho-\*\pi(\zeta)$, \Cref{lem:gradient},
Taylor expansion, and positive definiteness {of $H$ on $\allone^\perp$} and continuity of $H$ give that on sufficiently small
real neighborhoods, the following holds
\begin{align*}
  f(\*\rho,\zeta) &=f(\*\pi(\zeta),\zeta)+D_{\*\rho} f(\*\pi(\zeta),\zeta)[\*h]
  +\frac12D_{\*\rho}^2f(\*\pi(\zeta),\zeta)[\*h,\*h]
  +O(\|\*h\|_2^3)\\
  &=-\frac12\*h^\trans H \*h+O(\|\*h\|_2^3) \le-c_1\|\*\rho-\*\pi(\zeta)\|_2^2.
\end{align*}
Since $Q^*(\*\pi(\zeta),\zeta)=P(\zeta)$ and $Q^*$ is smooth
in $\*\rho$, Pinsker's inequality and the triangle inequality give
\begin{align*}
\DKL(Q\Vert Q^*)
&\ge \frac12\left(\sum_{i,j}|Q_{ij}-Q^*_{ij}|\right)^2
 \ge \frac12\|Q-Q^*\|_{\mathrm F}^2,\\
\|Q-P\|_{\mathrm F}
&\le \|Q-Q^*\|_{\mathrm F} 
 +\|Q^*(\*\rho,\zeta)-Q^*(\*\pi(\zeta),\zeta)\|_{\mathrm F}\\
&\le \|Q-Q^*\|_{\mathrm F}
 +C\|\*\rho-\*\pi(\zeta)\|_2,
\end{align*}
where $C$ is the uniform local Lipschitz constant.
Combining these inequalities with \eqref{eq:decomposition}, we complete the proof of the desired bound.
\end{proof}

Write $\mathrm{CM}=\mathrm{CM}_{n,d}$ for the degree-$d$ configuration model (\Cref{def:config-model}). For $G\sim\mathrm{CM}$ and empirical vertex marginal $\*\rho(\sigma)$ (\Cref{def:profiles-bethe-rates}), recall from \Cref{lem:polynomial-representation} in \Cref{app:tools} the fixed-density weight
\[
 W_n(\*\rho,\zeta)=\E[\mathrm{CM}]{
       \sum_{\sigma:\*\rho(\sigma)=\*\rho}\prod_v\pi_{\sigma_v}
                         \prod_{uv\in E}M_{\sigma_u\sigma_v}}.
\]
Use independent coordinates $\*\rho'=(\rho_1,\ldots,\rho_{m-1})^\trans$,
with $\rho_m=1-\sum_{i<m}\rho_i$, and call the preimage in $\Delta_{[m]}$ of an open axis-aligned box in these $m-1$ coordinates (or its image in $\=R^{m-1}$) a \emph{density box}. Let $V=\diag(\*\rho)+Q^*(\*\rho,\zeta)-2\*\rho\*\rho^\trans$, and let $V'$ be the submatrix formed by its first $m-1$ rows and columns.

\begin{lemma}[Local moment asymptotics]\label{lem:local-moments}
Under \eqref{eq:local-spectrum}, for every sufficiently small fixed
density box $B$ containing $\*\pi(t_0)$, there is a complex neighborhood
$U$ of $t_0$ such that
\begin{align}
 W_n(\*\rho,\zeta)
 &=\frac{\e^{nf(\*\rho,\zeta)}}
 {(2\pi n)^{(m-1)/2}\sqrt{\det V'(\*\rho,\zeta)}}(1+o(1)),
 \label{eq:asym-W-rho}\\
 \sum_{\*\rho'\in B\cap n^{-1}\=Z^{m-1}}W_n(\*\rho,\zeta)
 &\longrightarrow\det(I-(d-1)R_\zeta)^{-1/2}.
 \label{eq:asym-W-sum}
\end{align}
The first estimate is uniform over density lattice points in $B$;
both are locally uniform in $\zeta$. Square roots are positive at
the real base point.
\end{lemma}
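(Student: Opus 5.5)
We will obtain \eqref{eq:asym-W-rho} from the coefficient representation in \Cref{lem:polynomial-representation} by a saddle-point evaluation, and then obtain \eqref{eq:asym-W-sum} by summing over the lattice with \Cref{lem:complex-discrete-laplace}.

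\textbf{Step 1: saddle-point evaluation of the coefficient.} Write $S=dn$ and apply \Cref{thm:cauchy-coeff} to $[\*X^{S\*\rho}](\*X^\trans P\*X)^{S/2}$. We take radii $u_i=\e^{s_i}$, where $\*s=\*s(\*\rho,\zeta)$ are the holomorphic multipliers of \eqref{eq:def-of-Q-star} from \Cref{lem:gradient}, normalized by $s_m=0$. The integrand becomes $L^{S/2}\e^{-S\*\rho^\trans\*s}\bigl(\sum_{ij}Q^*_{ij}\e^{\mathrm i(\theta_i+\theta_j)}\bigr)^{S/2}\e^{-\mathrm iS\*\rho^\trans\*\theta}$.

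Because $Q^*\allone=\*\rho$, the phase is stationary at $\*\theta=\*0$. It is also stationary along the line $\*\theta\in\pi\allone\,\=Z$, and by symmetry this doubles the contribution. In the $m-1$ free directions with $\theta_m=0$, the phase has Hessian $\tfrac S2\cdot 2V'$, with $V=\diag(\*\rho)+Q^*-2\*\rho\*\rho^\trans$ as in the proof of \Cref{lem:gradient}.

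At the real base point, $V'\succ0$. Moreover, $|\sum Q^*_{ij}\e^{\mathrm i(\theta_i+\theta_j)}|<1$ away from the stationary set, at least when $P$ is connected. For hardcore, $P_{11}=0$, but the $00$ and $01$ entries still force this. So the contributions away from the stationary set are exponentially small, and this bound persists for complex $\zeta$ near $t_0$ by continuity.

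A Gaussian integral then gives the coefficient as
\[
2\,L^{S/2}\e^{-S\*\rho^\trans\*s}\,(2\pi)^{-(m-1)}\,(2\pi/S)^{(m-1)/2}\det(V')^{-1/2}(1+o(1)).
\]
We combine this with Stirling's formula applied to the prefactor $\frac{n!}{S!}\prod_i\frac{(S\rho_i)!}{(n\rho_i)!}\pi_i^{-(S-n)\rho_i}$. The exponential parts assemble into $\e^{nf}$ with $f$ as in \eqref{eq:decomposition}. The $\sqrt{2\pi}$ and $\sqrt{\rho_i}$ factors should cancel down to the normalizer $(2\pi n)^{-(m-1)/2}$. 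Uniformity over $\*\rho\in B$ and over $\zeta$ in a complex neighbourhood comes from holomorphic dependence of $\*s$ and $V'$ together with uniform Gaussian tail bounds.

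\textbf{Step 2: summation over the density box.} We apply \Cref{lem:complex-discrete-laplace} with $k=m-1$, the choices $g_n=g=\det(V')^{-1/2}$ and $f$, and the holomorphic curve $\*p(\zeta)=\*\pi(\zeta)'$. By \Cref{lem:gradient}, $f(\*p(\zeta),\zeta)=0$ and $\nabla f(\*p(\zeta),\zeta)=0$. The Hessian in the free coordinates is the restriction of $H$ to $\allone^\perp$, which is positive definite at $t_0$. The $(2\pi n)^{(m-1)/2}$ normalizations cancel, so the sum converges to
\[
\det(V'(\*\pi,\zeta))^{-1/2}\det(H')^{-1/2}.
\]

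\textbf{Step 3: the determinant identity (main obstacle).} It remains to show $\det V'\det H'=\det(I-(d-1)R)$, and this bookkeeping is where I expect the difficulty. At $\*\rho=\*\pi$ we have $V=A+P-2\*\pi\*\pi^\trans$, which acts on $\allone^\perp$ like $A(I+K)$ in the nonconstant directions. Also $H=d(A+P)^{-1}-(d-1)A^{-1}$. On the $K$-eigenbasis their product therefore has eigenvalues $(1+\kappa_a)\cdot\frac{1-(d-1)\kappa_a}{1+\kappa_a}=1-(d-1)\kappa_a$.

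The eigenvalues of $R=K-\allone\*\pi^\trans$ are $0$ on $\allone$ and $\kappa_a$ on the rest. Hence $\det(I-(d-1)R)=\prod_{a\ge2}(1-(d-1)\kappa_a)$, as required.

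The delicate points are two. First, the reduction to the $m-1$ coordinates must use a consistent basis, so that the Jacobian factors cancel between $V'$ and $H'$ (the coordinate changes contribute reciprocal $\det$ factors via the constraint $\allone^\trans\*h=0$). Second, the factor $2$ from the second stationary point must be matched against the parity constraint $S$ even in the Stirling constants. Both are identities at real points, and they extend to complex $\zeta$ by analytic continuation, with the branch fixed positive at $t_0$.
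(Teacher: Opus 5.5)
Your architecture matches the paper's: Cauchy integral at radii $\e^{s_i}$, Gaussian evaluation at $\*\vartheta=\*0$, Stirling, then \Cref{lem:complex-discrete-laplace} and a determinant identity. Two steps fail as written, however.

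\textbf{The factor $2$ in Step~1.} This factor is spurious, and nothing compensates it. The integrand is invariant under $\*\theta\mapsto\*\theta+c\allone$ for \emph{every} real $c$, since the polynomial is homogeneous of degree $S$ and $\allone^\trans(S\*\rho)=S$. That invariance is exactly what lets you fix $\theta_m=0$ and obtain the prefactor $(2\pi)^{-(m-1)}$, so on the reduced torus $\*\theta=\pi\allone$ is not a second point. On the reduced torus, $|\chi_{Q^*(t_0)}(\*\vartheta)|=1$ forces $\vartheta_i+\vartheta_j$ to be constant mod $2\pi$ over the support of $P(t_0)$. Because that support graph is connected and non-bipartite, this forces $\*\vartheta=\*0$. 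You should derive both properties from \eqref{eq:local-spectrum} ($\kappa_a<1$ and $\kappa_a>-1$). Connectivity alone does not suffice, and your hardcore remark does not cover Potts at $\beta=0$ (where $P_{aa}=0$) or the paired kernels. Stirling contributes no parity factor either. The polynomial part of $\frac{n!}{S!}\prod_i\frac{(S\rho_i)!}{(n\rho_i)!}$ is $(S/n)^{(m-1)/2}$, and together with $(2\pi)^{-(m-1)}(2\pi/S)^{(m-1)/2}$ this gives exactly $(2\pi n)^{-(m-1)/2}$. As a check, take $P=\*\pi\*\pi^\trans$. Then $R=0$ and $W_n(\*\rho)=\frac{n!}{\prod_i(n\rho_i)!}\prod_i\pi_i^{n\rho_i}$, whose sum over the box tends to $1$, not $2$.

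\textbf{Step~2 at non-real $\zeta$.} Step~2 does not justify \eqref{eq:asym-W-sum} off the real axis, which is the case the lemma is needed for. You apply \Cref{lem:complex-discrete-laplace} with $g_n=g=\det(V')^{-1/2}$. But the summand is $W_n$, and Step~1 only gives $W_n=(\text{main term})(1+o(1))$ at real lattice points $\*\rho$. Off the real axis the sum involves exponential cancellation. In the same independent example, $\sum_{\*\rho'\in B}|W_n|$ is essentially $(\sum_i|\pi_i(\zeta)|)^n$, which is exponentially large whenever $\*\pi(\zeta)\notin\=R^m$, while $\sum_{\*\rho'\in B}W_n\to1$. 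A termwise relative error $o(1)$ therefore controls nothing. The missing step has three parts:
\begin{enumerate}
\item Extend $W_n(\*\rho,\zeta)$ holomorphically in $\*\rho'$ to a fixed complex neighbourhood, using Gamma functions for the factorials and the Cauchy integral, whose factor $\e^{-\mathrm iS\*\rho'^\trans\*\vartheta}$ is entire in $\*\rho'$.
\item Redo the Step~1 estimates there, replacing your off-peak bound by $|\chi_{Q^*}(\*\vartheta)|\e^{2\Im(\*\rho')^\trans\*\vartheta}\le q_\varepsilon<1$ for $\|\*\vartheta\|_2\ge\varepsilon$.
\item Conclude that $g_n=(2\pi n)^{(m-1)/2}\e^{-nf}W_n\to\det(V')^{-1/2}$ locally uniformly on that neighbourhood.
\end{enumerate}
Only then can the lemma shift the contour through the complex saddle $\*\pi(\zeta)'$.

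Your Step~3, by contrast, is sound. With $E$ the coordinate embedding and $J\colon\*h'\mapsto(\*h',-\allone^\trans\*h')$, one has $J^\trans E=I_{m-1}$. Hence $\det V'\det H'$ is the determinant of the map induced on $\=R^m/\operatorname{span}(\allone)$ by $HV=I-(d-1)K+(d-2)\allone\*\pi^\trans$, namely $\prod_{a\ge2}(1-(d-1)\kappa_a)$. The paper reaches the same identity via $\det(A+P)=2\det V'$ and $\det H=\frac{2-d}{2}\det H'$.
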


\begin{proof}
Put ${S}=dn$ and use the coefficient representation of $W_n(\*\rho,\zeta)$ in \Cref{lem:polynomial-representation}. The coefficient notation $[\*X^{{S}\*\rho}]$ is defined there.
Set $\vartheta_m=0$ and $\chi_{Q^*}(\*\vartheta)=\sum_{i,j}Q^*_{ij}\e^{\mathrm i(\vartheta_i+\vartheta_j)}$.
Applying the iterated Cauchy coefficient formula (\Cref{thm:cauchy-coeff} in \Cref{app:tools}) with $u_i=\e^{s_i}$ gives
\begin{equation}\label{eq:Cauchy_coeff_formula}
[\*X^{{S}\*\rho}](\*X^\trans P\*X)^{{S}/2}
=\frac{L^{{S}/2}\e^{-{S}\*\rho^\trans\*s}}{(2\pi)^{m-1}}
\int_{[-\pi,\pi]^{m-1}}
\chi_{Q^*}(\*\vartheta)^{{S}/2}
\e^{-\mathrm i{S}\*\rho'^\trans\*\vartheta}\dif\*\vartheta.
\end{equation}
$Q^*\allone=\*\rho$ implies $\chi_{Q^*}(\*0)=1$ and
$\nabla\chi_{Q^*}(\*0)=2\mathrm i\*\rho'$. Therefore,
\begin{equation}\label{eq:taylor_chi}
\frac12\log\chi_{Q^*}(\*\vartheta)
-\mathrm i\*\rho'^\trans\*\vartheta
=-\frac12\*\vartheta^\trans V'\*\vartheta
+O(\|\*\vartheta\|_2^3).
\end{equation}
The graph with vertex set $[m]$ and an edge between $i$ and $j$ whenever $P_{ij}(t_0)>0$ (including a loop when $i=j$) is connected and nonbipartite by~\eqref{eq:local-spectrum}.
$|\chi_{P(t_0)}(\*\vartheta)|=1$ then forces the phases
$\vartheta_i+\vartheta_j$ for all pairs with $P_{ij}(t_0)>0$ to agree modulo $2\pi$; an odd cycle and connectedness then force $\*\vartheta=\*0$ modulo $2\pi$.
Therefore, $|\chi_{P(t_0)}(\*\vartheta)|\le1$, with equality
if and only if $\*\vartheta=\*0$ in $[-\pi,\pi]^{m-1}$.

Fix a sufficiently small $\varepsilon>0$.
By compactness and continuity, after shrinking the complex
neighborhood of $(\*\pi(t_0),t_0)$, there exists
$q_\varepsilon\in(0,1)$ such that
\[
|\chi_{Q^*}(\*\vartheta)|
\e^{2\Im(\*\rho')^\trans\*\vartheta}
\le q_\varepsilon
\qquad
\bigl(\*\vartheta\in[-\pi,\pi]^{m-1},\
\|\*\vartheta\|_2\ge\varepsilon\bigr).
\]
Therefore, uniformly on this neighborhood,
\[
\int_{\substack{
\*\vartheta\in[-\pi,\pi]^{m-1}\\
\|\*\vartheta\|_2\ge\varepsilon}}
|\chi_{Q^*}(\*\vartheta)|^{{S}/2}
\e^{{S}\Im(\*\rho')^\trans\*\vartheta}\dif\*\vartheta
\le(2\pi)^{m-1}q_\varepsilon^{{S}/2}
=o({S}^{-(m-1)/2}).
\]

Set $\*\vartheta=\*u/\sqrt {S}$.
For sufficiently small fixed $\varepsilon>0$, $\Re V'\succ0$
gives Gaussian domination on $\|\*u\|_2\le\varepsilon\sqrt {S}$.
Then by \eqref{eq:taylor_chi}, dominated convergence
and the negligible contribution away from zero give
\begin{align*}
\int_{[-\pi,\pi]^{m-1}}
\chi_{Q^*}(\*\vartheta)^{{S}/2}
\e^{-\mathrm i{S}\*\rho'^\trans\*\vartheta}\dif\*\vartheta
&={S}^{-(m-1)/2}\left(
(1+o(1))
\int_{\=R^{m-1}}\e^{-\frac12\*u^\trans V'\*u}\dif\*u
+o(1)
\right)\\
&=\frac{(2\pi/{S})^{(m-1)/2}}{\sqrt{\det V'}}(1+o(1)).
\end{align*}
Stirling's formula then yields \eqref{eq:asym-W-rho}.

For summation, extend $W_n$ holomorphically in $\*\rho$ using the
fixed-domain integral above and Gamma functions in place of factorials.
The same estimates hold on a fixed complex neighborhood, so
\[
g_n(\*\rho',\zeta)
\defeq(2\pi n)^{(m-1)/2}\e^{-nf(\*\rho,\zeta)}W_n(\*\rho,\zeta)
\longrightarrow(\det V')^{-1/2}
\]
locally uniformly there.
\Cref{lem:gradient} supplies a stationary point
$\*\pi(\zeta)$ with value zero and constrained negative Hessian
$H'(\zeta)=-D_{\*\rho'}^2f(\*\pi(\zeta),\zeta)$,
with $H'(t_0)\succ0$.
The complex discrete Laplace lemma
(\Cref{lem:complex-discrete-laplace} in \Cref{app:tools}) gives the limit
$(\det V'(\*\pi(\zeta),\zeta)\det H'(\zeta))^{-1/2}$.
To evaluate this product at $\*\rho=\*\pi(\zeta)$, let $E=\begin{pmatrix}I_{m-1}\\ \*0^\trans\end{pmatrix}$, $J=\begin{pmatrix}I_{m-1}\\ -\allone_{m-1}^\trans\end{pmatrix}$, $C=(E\;\;\allone)$, and $L=(J\;\;\*\pi)$, so $\det C=\det L=1$.
Since $V=A+P-2\*\pi\*\pi^\trans$ satisfies $V\allone=\*0$ and $V^\trans=V$, we have
\[
  C^\trans(A+P)C=C^\trans(V+2\*\pi\*\pi^\trans)C
  =\begin{pmatrix}V'+2\*\pi'\*\pi'^\trans & 2\*\pi'\\ 2\*\pi'^\trans & 2\end{pmatrix},
\]
whence $\det(A+P)=2\det V'$.

Meanwhile, $(A+P)\allone=2\*\pi$ and $A\allone=\*\pi$ give $H\*\pi=\frac{2-d}{2}\allone$; since $J^\trans\allone=\*0$, $\*\pi^\trans\allone=1$, and $H'=J^\trans HJ$, we have $L^\trans HL=\diag(H',\frac{2-d}{2})$ and $\det H=\frac{2-d}{2}\det H'$.
Finally, $H(A+P)=dI-(d-1)A^{-1}(A+P)=I-(d-1)K_\zeta=(I-(d-1)\allone\*\pi^\trans)(I-(d-1)R_\zeta)$ (as $\*\pi^\trans R_\zeta=\*0^\trans$), so taking determinants gives $(2-d)\det V'\det H'=\det(A+P)\det H=(2-d)\det(I-(d-1)R_\zeta)$ and hence $\det V'\det H'=\det(I-(d-1)R_\zeta)$.
\end{proof}

\subsection{Proofs of \Cref{prop:profile-moments} and \Cref{lem:local-profile-moments}}\label{subsec:common-moment-proof}

\begin{proof}[Proof of~\Cref{prop:profile-moments}]
  Each assignment with feasible profile $(\*\rho,Q)$ has weight $\prod_i\pi_i^{n\rho_i}\prod_{i,j}M_{ij}^{dnQ_{ij}/2}$.
  Therefore, multiplying the weight by~\eqref{eq:pairing-type-count} of \Cref{lem:pairing-enumeration} in \Cref{app:tools} and applying uniform Stirling bounds (including when entries vanish) yields $\Phi_s(\*\rho,Q)$ in~\eqref{eq:planted-profile-gap}.

  {For result~(2), for every real $s$ near $t$ with $\*\pi(s)>0$ and $P(s)\ge0$, the nonconstant eigenvalues of $K_s$ remain nonpositive (by~\eqref{eq:hardcore_kappa} and~\eqref{eq:Potts_kappa}, or by continuity when $\kappa_a(t)<0$), which gives $\*\rho^\trans M(s)\*\rho\le1$ for all $\*\rho\in\Delta_\Omega$. If $Q$ charges a zero entry of $P(s)$, then $\Phi_s(\*\rho,Q)=-\infty$; otherwise, since $Q$ has row and column marginals $\*\rho$, applying the entropy variational inequality to $(Q_{ij})_{i,j}$ and $(\pi_i(s)^{-1}\rho_i\,\pi_j(s)^{-1}\rho_j)_{i,j}$ gives
  \[
    \DKL(Q\Vert P(s))
    \ge 2\DKL(\*\rho\Vert\*\pi(s))-\log\bigl(\*\rho^\trans M(s)\*\rho\bigr)
    \ge 2\DKL(\*\rho\Vert\*\pi(s)).
  \]
  Since $d\ge3$, substituting this lower bound into $\Phi_s(\*\rho,Q)=(d-1)\DKL(\*\rho\Vert\*\pi(s))-\frac d2\DKL(Q\Vert P(s))$ yields both
  \[
    \Phi_s(\*\rho,Q)\le-\DKL(\*\rho\Vert\*\pi(s))
    \qquad\text{and}\qquad
    \Phi_s(\*\rho,Q)\le-\frac12\DKL(Q\Vert P(s)).
  \]
  Averaging the two bounds and applying Pinsker's inequality gives
  \[
    \Phi_s(\*\rho,Q)
    \le-\frac12\DKL(\*\rho\Vert\*\pi(s))-\frac14\DKL(Q\Vert P(s))
    \le-\frac18\left(
      \|\*\rho-\*\pi(s)\|_2^2+
      \|Q-P(s)\|_{\mathrm F}^2
    \right),
  \]
  establishing result~(2) with $c=1/8$ (including at the endpoint $q=2,\beta=0,\kappa_2=-1$).}
  Summing over polynomially many profiles then gives
  $\log\E{\mathcal Z_G(s)}\le O(\log n)$.
\end{proof}

\begin{proof}[Proof of~\Cref{lem:local-profile-moments}]
Under~\Cref{cond:strict-KS}, the local spectrum condition~\eqref{eq:local-spectrum} holds.
Take the intersection of the neighborhoods $U$ in~\Cref{lem:local-gap} and~\Cref{lem:local-moments}.
Now the two common lemmas apply on a common disk $D_0$
and product neighborhood $D_0\times D_0$.

\paragraph{Proof of result (2).}
Apply \Cref{lem:local-gap} to the paired marginals with $\*\rho\in B$.

\paragraph{Proof of result (1).}
Choose a small density box $B_1$ around $\*\pi(t)$.
At complex $z$, each assignment and compatible pairing with feasible profile $(\*\rho,Q)$ has complex weight of modulus $\bigl(\prod_i|\pi_i(z)|^{n\rho_i}\bigr)\allowbreak\bigl(\prod_{i,j}|M_{ij}(z)|^{dnQ_{ij}/2}\bigr)$. Multiplying this modulus by the pairing enumeration formula~\eqref{eq:pairing-type-count} of \Cref{lem:pairing-enumeration} and applying uniform Stirling bounds (including when entries of $\*\rho$ or $Q$ vanish) shows that the expected absolute contribution of each feasible profile $(\*\rho,Q)$ is at most $\exp\{n\Psi_z(\*\rho,Q)+O(\log n)\}$ uniformly in $(\*\rho,Q)$ and $z$, where
\[
 \Psi_z(\*\rho,Q)
 =(d-1)\sum_i\rho_i\log\rho_i
 -\frac d2\sum_{ij}Q_{ij}\log Q_{ij}
 +\sum_i\rho_i\log|\pi_i(z)|
 +\frac d2\sum_{ij}Q_{ij}\log|M_{ij}(z)|.
\]
With $0\log 0=0$ and $x\log 0=-\infty$ for $x>0$, each term $Q_{ij}\log|M_{ij}(z)|$ is jointly upper semicontinuous on $Q_{ij}\ge0$ (being continuous where $M_{ij}(z)\ne0$ and nonpositive near zeros of $M_{ij}$), so $\Psi_z(\*\rho,Q)$ is jointly upper semicontinuous in $(\*\rho,Q,z)$ and equals $\Phi_t(\*\rho,Q)$ at $z=t$.
By \Cref{prop:profile-moments}(2), $\Phi_t(\*\rho,Q)\le-2\eta<0$ on the compact set of profiles with $\*\rho\notin B_1$; upper semicontinuity and compactness therefore give $\Psi_z(\*\rho,Q)\le-\eta<0$
outside $B_1$, uniformly on a smaller disk $D_0$ around $t$.
Since there are only polynomially many feasible profiles $(\*\rho,Q)$, summing $\exp\{n\Psi_z(\*\rho,Q)+O(\log n)\}$ over all feasible profiles $(\*\rho,Q)$ with $\*\rho\notin B_1$ bounds the omitted contribution to $\sum_{\*\rho} W_n(\*\rho,z)$ uniformly by $\operatorname{poly}(n)\cdot\e^{-\eta n} = o(1)$.
\Cref{lem:local-moments} therefore yields
\[
 a_n(z) = \sum_{\*\rho^\trans\allone = 1} W_n(\*\rho,z) = o(1) + \sum_{\*\rho \in B_1\cap n^{-1}\=Z^{m},\,\*\rho^\trans\allone = 1} W_n(\*\rho,z)\longrightarrow
 a(z)=\det(I-(d-1)R_z)^{-1/2}.
\]
For the paired model the sum is already restricted to $B$.
Its centered kernel is
\[
 R^{(2)}
 =R_z\otimes(\allone\*\pi(w)^\trans)+(\allone\*\pi(z)^\trans)\otimes R_w+R_z\otimes R_w.
\]
The determinant factors over the subspaces spanned by $\allone$ and consisting of vectors $\*v$ with $\*\pi(z)^\trans\*v=0$, respectively, for each tensor factor (using $\*\pi(w)$ for the second factor).
The same lemma consequently gives
\begin{equation}\label{eq:paired-limit}
 \frac{\E{\mathcal Z^{(2)}_{G,B}(z,w)}}{a_n(z)a_n(w)}
 \longrightarrow
 \det(I-(d-1)(R_z\otimes R_w))^{-1/2}.
\end{equation}
All convergence is locally uniform. After shrinking $D_0$, $a$ is
nonvanishing, and $a_n$ is nonvanishing there for sufficiently large $n$.

\paragraph{Proof of result (3).}
Recalling the short-cycle counts $X_{\ell,n}$, Poisson means $\lambda_\ell$, and short-cycle factors $\delta_\ell(z)$ from \Cref{def:short-cycle-factors} and the trace identity $\delta_\ell(z)=\operatorname{tr}K_z^\ell-1=\operatorname{tr}R_z^\ell$ from~\eqref{eq:cycle-factor-trace}, for real $s\in D_0$ with $P(s)\ge0$, the log-determinant series in
\eqref{eq:paired-limit} gives
\[
 -\tfrac12\log\det(I-(d-1)(R_s\otimes R_s))
 =\sum_{\ell\ge1}\frac{(d-1)^\ell}{2\ell}
                  (\operatorname{tr}R_s^\ell)^2
 =\sum_{\ell\ge1}\lambda_\ell\delta_\ell(s)^2.
\]
For the cycle statements, fix $j_1,\ldots,j_L$ and put
$C_{L,n}=\prod_{\ell\le L}(X_{\ell,n})_{j_\ell}$.
This counts ordered tuples of distinct cycles, which may intersect.
A union of intersecting distinct cycles has $e>v$ and contributes
$O(n^vn^{-e})=O(n^{-1})$, where $e$ and $v$ are the number of edges and vertices of the union of cycles, respectively.
Use the underlying uniform labeled pairing $\mathcal M$ that generates $G\sim\mathrm{CM}_{n,d}$. A cycle candidate is specified by an ordered list of distinct vertices and an ordered pair of distinct labeled half-edges at each vertex, with two specifications identified if they differ only by a change of starting vertex or a reversal of direction (with the order of each half-edge pair reversed). It occurs when the first half-edge at each vertex is paired to the second half-edge at the next vertex in the cyclic order. For lengths $1$ and $2$, these candidates describe loops and pairs of parallel edges, respectively. Put $T=\sum_{\ell=1}^L\ell j_\ell$. Summing over ordered families of cycle candidates gives
\begin{align*}
  \E{C_{L,n}} &= \sum_{\gamma_{1,1},\ldots,\gamma_{L,j_L}}\Pr{\forall i\in[L], k\in [j_i], \gamma_{i,k}\text{ is a cycle of length }i}
  \\&= O(n^{-1}) + \sum_{\text{disjoint }\gamma_{1,1},\ldots,\gamma_{L,j_L}}\Pr{\forall i\in[L], k\in [j_i], \gamma_{i,k}\text{ is a cycle of length }i}
  \\ &=O(n^{-1}) + \underbrace{\frac{(n)_T[d(d-1)]^T} {\prod_{\ell=1}^L(2\ell)^{j_\ell}}}_{\substack{\text{number of vertex-disjoint}\\
             \text{cycle candidates}}}
 \times \underbrace{\prod_{r=0}^{T-1}\frac{1}{dn-2r-1}}_{\substack{\text{probability that all}\\
             \text{prescribed pairs occur}}}\\
&=\prod_{\ell=1}^L
  \left(\frac{(d-1)^\ell}{2\ell}\right)^{j_\ell}
  +O(n^{-1})
=\prod_{\ell=1}^L\lambda_\ell^{j_\ell}+O(n^{-1}).
\end{align*}
The joint factorial-moment criterion gives the Poisson limits.

For weighted moments, work under the planted law $\mathbb Q_n$ (\Cref{def:planted-overlap}) at a real parameter $s$, retaining the underlying pairing used to generate the graph. Its weight depends on the pairing only through the induced graph, so conditional on a spin assignment and its profile $(\*\rho,Q)$, compatible pairings are uniform.
For a cycle with spins $a_1,\ldots,a_\ell$, the preceding count
gives
$$
\lambda_\ell\prod_{r=1}^{\ell}
\frac{Q_{a_ra_{r+1}}}{\rho_{a_r}}+O(n^{-1}),
\qquad a_{\ell+1}=a_1.
$$
Summing over the spins yields
$\lambda_\ell\operatorname{tr}(K_{\*\rho,Q}^{\ell})+O(n^{-1})$,
where $K_{\*\rho,Q}=\diag(\*\rho)^{-1}Q$.
Counting vertex-disjoint cycles jointly and bounding overlapping
configurations as above therefore gives
$$
  \E[\mathbb Q_n]{C_{L,n}\mid\*\rho,Q}=
\prod_{\ell=1}^L
\left[\lambda_\ell
\operatorname{tr}(K_{\*\rho,Q}^{\ell})\right]^{j_\ell}
+O(n^{-1}),
$$
uniformly near $(\*\pi(s),P(s))$.
The pairing enumeration gives additive error bounds, so this estimate remains valid when some entries of $Q$ vanish.

\Cref{prop:profile-moments} and $a_n(s)\to a(s)>0$ imply exponential
concentration of $(\*\rho,Q)$ at $(\*\pi(s),P(s))$ under the
planted law. Since $C_{L,n}$ is polynomially bounded, profiles
outside a fixed neighborhood contribute $o(1)$.
Averaging the conditional estimate and using continuity,
$K_{\*\pi(s),P(s)}=K_s$, and
$\operatorname{tr}(K_s^\ell)=1+\delta_\ell(s)$, we obtain
$$
\frac{\E{\+Z_G(s)C_{L,n}}}{a_n(s)}
=\E[\mathbb Q_n]{C_{L,n}}
\longrightarrow
\prod_{\ell=1}^L
\left[\lambda_\ell(1+\delta_\ell(s))\right]^{j_\ell}.
$$

\paragraph{Proof of result (4).}
Nonnegativity of $K_t$ and nonpositivity of its nonconstant eigenvalues give
$$
  0\leq \operatorname{tr}(K_t) = 1 + \sum_{a\ge2}\kappa_a(t)\implies \sum_{a\ge2}|\kappa_a(t)|\le1 .
$$
Thus, with $b_0=\max_{a\ge2}|\kappa_a(t)|<(d-1)^{-1/2}<1$ by \Cref{cond:strict-KS},
\begin{equation}\label{eq:1+delta_nonzero}
  1+\delta_\ell(t)
 \ge1-\sum_{a\ge2}|\kappa_a(t)|^\ell
 \ge1-b_0^{\ell-1}>0\qquad(\ell\ge2).
\end{equation}
\Cref{cond:strict-KS} and continuity give
$|\delta_\ell(z)|\le(|\Omega|-1)b^\ell$ uniformly on a small $D_0$,
for some $b<(d-1)^{-1/2}$.
This proves
$$
  \sum_{\ell\ge1}\lambda_\ell\sup_{D_0}|\delta_\ell|^2 \leq \frac {(|\Omega|-1)^2}2\sum_{\ell \ge 1}\frac{\tp{b^2(d-1)}^\ell}{\ell}<\infty
$$
and $|\delta_\ell|<1$ for sufficiently large $\ell$.
By continuity and~\eqref{eq:1+delta_nonzero}, shrinking $D_0$ handles $1 + \delta_\ell(z)\neq 0$ for the remaining finitely many lengths $\ell\ge3$. \qedhere
\end{proof}

% END INCLUDED FILE: appendices/a-moment-bounds.tex
% BEGIN INCLUDED FILE: appendices/c-sec5-deduplicated.tex
\section{Planted Concentration, Pinning, and Truncation Proofs}\label{app:planted-converse-proofs}

{We establish the planted concentration, pinning, and truncation bounds used in \Cref{subsec:local-variational-geometric} and \Cref{sec:variational-rs}, including the quantitative error exponent near the coloring endpoint: \Cref{app:profile-conc} proves concentration of edit-Lipschitz observables on a planted profile (\Cref{lem:profile-concentration}); \Cref{app:partial-pinning} establishes planted partial-pairing profile concentration (\Cref{lem:rsb-planted-partial-profiles}) and finite-sample pinning (\Cref{lem:rsb-finite-pinning}); \Cref{app:revealed-entropy} proves the pinned log-likelihood and planted free-energy bound (\Cref{lem:revealed-entropy-bound}); and \Cref{app:coloring-endpoint} constructs the typical-graph truncation event (\Cref{lem:graph-truncation}).}

\subsection{Concentration of Edit-Lipschitz Observables on a Planted Profile}\label{subsec:app-profile-conc}\label{app:profile-conc}
{
Fix ${d}\ge1$ and a finite spin set ${\Omega}$. Consider a random triple $(G,\*\tau,\*\xi)$ where $\*\tau=(\tau_v)_{v\in[n]}\in{\Omega}^n$ assigns each vertex its spin, $G=G(\mathcal M)$ is the multigraph induced by a partial pairing $\mathcal M$ of the $dn$ labeled half-edges, and $\*\xi\in\{0,1\}^n$ is a vector of vertex \emph{marks} (or reveal indicators). Define the \emph{profile} $p=(\*\rho,W,\*u)$ of $(G,\*\tau)$ by
\begin{align*}
  \rho_\tau&\defeq\frac1n\sum_{v=1}^n\one{\tau_v=\tau},\\
  W_{\tau\tau'}&\defeq\frac1n\bigl(\text{number of directed paired edges of type }(\tau,\tau')\bigr),\\
  u_\tau&\defeq\frac1n\bigl(\text{number of unmatched half-edges at vertices of type }\tau\bigr),
\end{align*}
with loops contributing twice to $W_{\tau\tau}$. Thus $p$ determines the vertex-type counts, the paired-edge type counts, and the unmatched-half-edge counts; for a full pairing, $\*u=\*0$ and $W=dQ$, where $Q$ is the empirical distribution of the two endpoint types of a uniform directed edge. Assume that, conditional on $p$, the vertex-type assignment $\*\tau$ and compatible partial pairing are jointly uniform, while $\*\xi\in\{0,1\}^n$ has a product law independent of $(G,\*\tau)$. Assume also that there exist deterministic profiles $p_n^*$ and constants $C,c,A>0$ such that
\begin{equation}\label{eq:bounded-degree-profile-tail}
  \Pr{\|p-p_n^*\|_1>r}\le Cn^A\e^{-cnr^2}\qquad(r>0).
\end{equation}
For two realizations $x=(G,\*\tau,\*\xi)$ and $x'=(G',\*\tau',\*\xi')$, define their \emph{edit distance} by
\[
  d_{\mathrm{edit}}(x,x')
  \defeq |E(G)\triangle E(G')| + d_{\mathrm{Ham}}(\*\tau,\*\tau') + d_{\mathrm{Ham}}(\*\xi,\*\xi'),
\]
where $|E(G)\triangle E(G')|\defeq\sum_{\{u,v\}}\bigl|\operatorname{mult}_G(\{u,v\})-\operatorname{mult}_{G'}(\{u,v\})\bigr|$ counts edge-multiset discrepancies with multiplicity and $d_{\mathrm{Ham}}$ denotes Hamming distance. Call $H_n(G,\*\tau,\*\xi)$ \emph{$L$-edit-Lipschitz} if
\begin{equation}\label{eq:edit-lipschitz-def}
  \bigl|H_n(x)-H_n(x')\bigr|\le L\,d_{\mathrm{edit}}(x,x')
\end{equation}
for all supported realizations $x,x'$ of $(G,\*\tau,\*\xi)$ (i.e., realizations with positive probability).

\begin{lemma}[Concentration {of edit-Lipschitz observables}]\label{lem:profile-concentration}
Under the assumptions above, there exist constants $C',c',K>0$, depending only on ${d},|\Omega|,C,c,A$, such that every real-valued $L$-edit-Lipschitz observable $H_n$ with $L>0$ satisfies $\operatorname{Var}(H_n)=O(L^2n\log(n+1))$ and, whenever $a>0$ and $na^2/L^2\ge K\log(n+1)$,
\begin{equation}\label{eq:bounded-degree-functional-tail}
  \Pr{|H_n-\E{H_n}|>an}\le C'\exp\!\left(-c'\frac{na^2}{L^2}\right).
\end{equation}
\end{lemma}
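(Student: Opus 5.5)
We condition on the profile $p$ and apply a martingale argument to the conditional uniform law, then remove the conditioning using the tail bound~\eqref{eq:bounded-degree-profile-tail}. Write
\[
H_n-\E{H_n}=\bigl(H_n-\E{H_n\mid p}\bigr)+\bigl(\E{H_n\mid p}-\E{H_n}\bigr).
\]
We bound the two terms separately.

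\emph{Step 1: conditional concentration.} Fix a supported profile $p$. Conditional on $p$, the pair $(\*\tau,\text{partial pairing})$ is uniform on the compatible set, and $\*\xi$ is an independent product.

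For the marks, the $n$ independent coordinates of $\*\xi$ each change $H_n$ by at most $L$, so McDiarmid's inequality gives a sub-Gaussian bound with variance proxy $nL^2$.

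For the spins, $\*\tau$ is uniform over assignments with fixed type counts. We expose it through a uniform permutation of vertices. The standard switching coupling for uniform permutations applies: a transposition of two vertex labels changes the spins at two vertices. To keep the realization inside the compatible set, we also swap those two vertices' half-edge attachments, which alters $O(d)$ edges. Hence each martingale increment is bounded by $O(dL)$.

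For the partial pairing, it is uniform given the type-resolved counts. Within each pair of types, we expose it as a uniform bijection between half-edge classes. A transposition within a class is a half-edge switch changing at most $4$ edges, so the increments are again $O(L)$.

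There are $O(dn)$ exposure steps in total. Azuma--Hoeffding, using the bounded-difference property of uniform permutations, gives
\[
\Pr{|H_n-\E{H_n\mid p}|>an/2\mid p}\le 2\e^{-c_1na^2/L^2}.
\]

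\emph{Step 2: Lipschitz dependence on the profile.} We claim that
\[
|\E{H_n\mid p}-\E{H_n\mid p'}|\le C_2Ln\|p-p'\|_1
\]
for supported profiles $p,p'$. To prove this, we construct a coupling of the two conditional uniform laws in which the realizations differ in $O(n\|p-p'\|_1)$ edits. Starting from a sample at $p$, we recolor $n|\rho_\tau-\rho'_\tau|$ uniformly chosen vertices between types. We then rewire a corresponding number of pairs of half-edges so that the type-resolved edge and unmatched counts match $p'$. Symmetry of the uniform law shows that the output is uniform at $p'$, provided the modified vertices and half-edges are chosen uniformly among those of the relevant types.

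\emph{Step 3: combining the bounds.} With Step 2 in hand, the second term is at most $C_2Ln\bigl(\|p-p_n^*\|_1+\E{\|p-p_n^*\|_1}\bigr)$. By~\eqref{eq:bounded-degree-profile-tail}, $\E{\|p-p_n^*\|_1}=O(\sqrt{\log(n+1)/n})$. This is below $a/(4C_2L)$ once $na^2/L^2\ge K\log(n+1)$ for a large constant $K$. The remaining deviation event $\{\|p-p_n^*\|_1>a/(4C_2L)\}$ has probability at most $Cn^A\e^{-cna^2/(16C_2^2L^2)}$. The factor $n^A$ is absorbed into the exponential, again using $na^2/L^2\ge K\log(n+1)$. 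This proves~\eqref{eq:bounded-degree-functional-tail}.

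For the variance, the conditional part contributes $O(L^2n)$. The profile part contributes
\[
L^2n^2\,\E{\|p-p_n^*\|_1^2}=O\bigl(L^2n\log(n+1)\bigr),
\]
obtained by integrating the tail bound.

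The main obstacle is the coupling in Step 2. It must move between profiles, including changes in unmatched-half-edge counts and loop parity, so we must verify that the recolor-and-rewire map pushes the uniform law forward to the uniform law at $p'$ up to a bounded number of extra edits. If exact uniformity fails, the fallback is to compare the two conditional laws by a transport argument through a chain of adjacent profiles differing by $O(1)$ counts, bounding each step by $O(L)$ edits.
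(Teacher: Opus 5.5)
Your architecture is the paper's: conditional Azuma--Hoeffding over a vertex permutation, the type-preserving half-edge bijections and the marks; then a Lipschitz comparison of the conditional means $m_n(p)=\mathbb{E}[H_n\mid p]$; then combining with the profile tail by averaging over an independent copy of $p$. Your Steps~1 and~3 are fine, and your law-of-total-variance route to the variance bound also works.

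The gap is Step~2, which is the real content of the lemma: the reason you give for the coupling being exact is not valid. Recoloring uniformly chosen vertices and rewiring uniformly chosen half-edges yields a law that is invariant under vertex and half-edge relabelings. But the compatible set at $p'$ is not a single orbit of those relabelings (the orbits are isomorphism classes of type-decorated multigraphs), so invariance does not give uniformity. An exact random-map coupling needs every realization at $p'$ to be reached with the same total probability, and here that probability depends on local structure. Already the contribution of moves needing no rewiring is proportional to the number of output vertices that carry the new type and exactly the neighbour-type pattern needed to undo the recoloring. The profile does not determine this count, and nothing in your construction compensates for it.

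The fallback does not help. For adjacent profiles you still need an exact coupling of two uniform laws on different compatible sets. A chain also requires every intermediate profile to be supported, since both the edit-Lipschitz hypothesis and $m_n$ are only available on supported realizations. In the application the edge count is fixed and, for example, edges joining two occupied vertices have zero hardcore weight.

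The missing idea is to run the coupling inside the representation you already use in Step~1, where uniformity is automatic. Encode a profile by a fixed vertex-type list and fixed labeled slots: $nW_{\tau\tau'}$ binary factors of type $(\tau,\tau')$ for $\tau<\tau'$, $\frac12 nW_{\tau\tau}$ of type $(\tau,\tau)$, and $nu_\tau$ unary factors of type $\tau$. Consider the map that pairs the half-edges sent to each binary factor and leaves those sent to unary factors unmatched. Every compatible partial pairing has the same number of preimages under it, so a uniform type-preserving bijection from vertex half-edges to slots pushes forward exactly to the conditional law.

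Choose the data for $p$ and $p'$ to share $n(1-\frac12\|\rho-\rho'\|_1)$ vertices and a maximal common set of factors. For each type, the two uniform bijections can then be coupled to agree off $O(n\|p-p'\|_1)$ half-edges using three elementary moves:
\begin{enumerate}
\item deleting a half-edge together with a slot, and re-matching the two orphaned partners;
\item inserting a new half-edge and slot, matched to each other with probability $1/(N+1)$ when $N$ is the current number of that type, and otherwise spliced into a uniformly chosen existing assignment;
\item relabeling a single element.
\end{enumerate}
Each move maps a uniform bijection to a uniform bijection and alters $O(1)$ assignments; this is the colored-bijection switching of \cite[Section~12.7]{coja2020inference} that the paper uses. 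A common vertex permutation and common marks then give two realizations distributed exactly as the conditional laws at $p$ and $p'$, hence supported, at edit distance $O(n\|p-p'\|_1)$. This yields $|m_n(p)-m_n(p')|\le C_1Ln\|p-p'\|_1$ directly, without any intermediate profiles.
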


\begin{proof}[Proof of~\Cref{lem:profile-concentration}]
Put $m_n(p)\defeq\E{H_n\mid p}$.
\emph{Conditional concentration.}
There exists $c_1>0$, depending only on ${d}$ and $|\Omega|$,
such that, for every supported profile $p$ and every $r>0$,
\begin{equation}\label{eq:rsb-profile-conditional-concentration}
  \Pr{|H_n-m_n(p)|>r\mid p}
  \le 2\exp\!\left(-\frac{c_1r^2}{nL^2}\right).
\end{equation}
Fix a vertex-type list and an abstract partial matching realizing $p$.
The conditional law is generated by a uniform vertex permutation,
independent uniform type-preserving bijections from vertex half-edges
to abstract half-edges, and independent marks on the final labels.
Indeed, each compatible pairing has the same number of representations.

Reveal these permutations, bijections, and marks one image at a time;
there are at most $({d}+2)n$ revelations. Two possible values at a
permutation revelation can be coupled by transposing two images in the
completion, producing two supported realizations of profile $p$. A vertex transposition changes at most $4{d}$ edge
occurrences and two types; a half-edge transposition changes at most
four edge occurrences; changing a mark changes one coordinate. Thus the two coupled completions have edit distance $d_{\mathrm{edit}}\le 4{d}+2$, so by~\eqref{eq:edit-lipschitz-def}
each difference in the Doob martingale is at most $C_{d} L$ with $C_{d}\defeq 4{d}+2$.
Azuma--Hoeffding gives~\eqref{eq:rsb-profile-conditional-concentration}
with $c_1=[2({d}+2)C_{d}^2]^{-1}$.

\emph{Comparison of conditional means.}
We use the type-preserving bijection coupling of~\cite[Section~12.7]{coja2020inference}.
There exists $C_1>0$, depending only on ${d}$ and $|\Omega|$, such that, for any two supported profiles $p,p'$,
\begin{equation}\label{eq:rsb-profile-mean-comparison}
  |m_n(p)-m_n(p')|
  \le C_1Ln\|p-p'\|_1.
\end{equation}
Write $p=(\*\rho,W,\*u)$ and $p'=(\*\rho',W',\*u')$. {Fix a total order on ${\Omega}$. For a profile $p$, fix a canonical vertex-type list of length $n$ with empirical distribution $\*\rho$ ($dn\rho_\tau$ labeled half-edges of type $\tau$), and create $nW_{\tau\tau'}$ labeled binary factors of type $(\tau,\tau')$ for each $\tau<\tau'$ (with one slot of type $\tau$ and one of type $\tau'$), $\frac12nW_{\tau\tau}$ labeled binary factors of type $(\tau,\tau)$ (with two ordered slots of type $\tau$), and $nu_\tau$ labeled unary factors of type $\tau$ (with one slot of type $\tau$). The total number of factor slots of type $\tau$ is $\sum_{\tau'}nW_{\tau\tau'}+nu_\tau=dn\rho_\tau$. Under the map that pairs the two half-edges assigned to each binary factor, leaves the half-edges assigned to unary factors unmatched, and forgets factor labels and internal slot orderings, every compatible partial pairing (including loops and parallel edges) has the same fiber size
\[
  \Bigl(\prod_{\tau<\tau'}(nW_{\tau\tau'})!\Bigr)\Bigl(\prod_{\tau\in{\Omega}}\bigl(\tfrac12nW_{\tau\tau}\bigr)!\,2^{nW_{\tau\tau}/2}(nu_\tau)!\Bigr).
\]
Consequently, a uniform type-preserving bijection from vertex half-edges to factor slots pushes forward to the uniform compatible partial pairing.

Choose the canonical vertex-type lists for $p$ and $p'$ to agree on $n\sum_\tau(\rho_\tau\wedge\rho'_\tau)=n(1-\frac12\|\*\rho-\*\rho'\|_1)$ vertices, and identify a maximum common subset of binary factors of each type $(\tau,\tau')$ and unary factors of each type $\tau$. This leaves $H\defeq dn\|\*\rho-\*\rho'\|_1$ half-edges at uncoupled vertices and
\[
  B\defeq\sum_{\tau<\tau'}n|W_{\tau\tau'}-W'_{\tau\tau'}|+\sum_{\tau\in{\Omega}}\tfrac12n|W_{\tau\tau}-W'_{\tau\tau}|+\sum_{\tau\in{\Omega}}n|u_\tau-u'_\tau|
  \le\frac n2\|W-W'\|_1+n\|\*u-\*u'\|_1
\]
uncoupled binary and unary factors (at most $2B$ uncoupled factor slots). The colored-bijection switching construction of~\cite[Section~12.7]{coja2020inference} couples the two uniform type-preserving bijections across each type $\tau\in{\Omega}$ so that they agree on all common half-edges except at most $H+2B$ displaced half-edges, and hence at most $H+2B$ binary factors disagree between the two realizations. Each disagreeing binary factor changes at most two paired-edge occurrences. Applying the same random vertex permutation and common independent marks $\*\xi$ to both canonical realizations produces coupled supported realizations $x=(G,\*\tau,\*\xi)$ and $x'=(G',\*\tau',\*\xi')$ whose vertex types differ at $\frac12n\|\*\rho-\*\rho'\|_1$ vertices and whose edge multisets satisfy $|E(G)\triangle E(G')|\le 2(H+2B)$. Thus $d_{\mathrm{edit}}(x,x')\le 2(H+2B)+\frac12n\|\*\rho-\*\rho'\|_1=O(n\|p-p'\|_1)$, so the $L$-edit-Lipschitz bound~\eqref{eq:edit-lipschitz-def} on supported realizations proves~\eqref{eq:rsb-profile-mean-comparison}.}

\emph{Combining the bounds.}
Put $D=\|p-p_n^*\|_1$.
Integrating the assumed profile tail~\eqref{eq:bounded-degree-profile-tail} gives
\[
  \E{D}\le C_0\sqrt{\frac{\log(n+1)}{n}},
\]
where $C_0$ depends only on $C,c,A$.
By~\eqref{eq:rsb-profile-mean-comparison}, averaging over an
independent copy $p'$ of the profile yields
\[
  |m_n(p)-\E{H_n}|
  \le C_1Ln\,\E[p']{\|p-p'\|_1}
  \le C_1Ln\bigl(D+\E{D}\bigr).
\]
Let $q=na^2/L^2$ and choose
$K\ge\max\bigl\{1,\,16C_1^2C_0^2,\,32AC_1^2/c\bigr\}$.
If $q\ge K\log(n+1)$, then
$\E{D}\le a/(4C_1L)$, so on $\{D\le a/(4C_1L)\}$ we have
$|m_n(p)-\E{H_n}|\le an/2$.
\eqref{eq:rsb-profile-conditional-concentration} and~\eqref{eq:bounded-degree-profile-tail} therefore give
\[
\begin{aligned}
  \Pr{|H_n-\E{H_n}|>an}
  &\le \Pr{|H_n-m_n(p)|>an/2}+\Pr{D>a/(4C_1L)}\\
  &\le 2\exp\!\left(-\frac{c_1q}{4}\right)+Cn^A\exp\!\left(-\frac{cq}{16C_1^2}\right)
   \le (2+C)\exp(-c'q),
\end{aligned}
\]
where $c'=\min\{c_1/4,c/(32C_1^2)\}$ and the last inequality uses $A\log n\le cq/(32C_1^2)$.
Taking $C'=2+C$ proves~\eqref{eq:bounded-degree-functional-tail}; integrating the tail yields $\operatorname{Var}(H_n)=O\bigl(L^2n\log(n+1)\bigr)$.
\end{proof}
}

\subsection{Planted Partial-Pairing Profiles and Finite-Sample Pinning}\label{subsec:app-partial-pinning}\label{app:partial-pinning}
{Fix an integer $d\ge3$ and one of the two models at a positive parameter $t$ ($\alpha\in(0,1/2)$ for hardcore or $\beta\in(0,1)$ for Potts), and suppress $t$ in $M=M(t)$ and $\*\pi=\*\pi(t)$. Put $S=dn$. For $0\le j\le\lfloor S/2\rfloor$, let $\mathrm{CM}_{n,d,j}$ be the law of a uniform partial pairing $\mathcal M$ with $j$ edges, obtained by selecting $2j$ of the $S$ labeled half-edges uniformly and pairing them uniformly. The subscript $j$ distinguishes this pairing law from the graph law $\mathrm{CM}_{n,d}$. Write $G=G(\mathcal M)$ for its induced multigraph and define the first moment $a_{n,j}$ and the planted law $\mathbb Q_{n,j}$ on $(\mathcal M,\sigma)$ by
\[
 a_{n,j}\defeq\E[\mathrm{CM}_{n,d,j}]{\mathcal Z_G},\qquad
 \frac{\dif\mathbb Q_{n,j}}{\dif(\mathrm{CM}_{n,d,j}\otimes\*\pi^{\otimes n})}(\mathcal M,\sigma)
 \defeq a_{n,j}^{-1}\prod_{uv\in E(G)}M_{\sigma_u,\sigma_v}.
\]
The conditional law of $\sigma$ given $\mathcal M$ or $G$ is the canonical law $\mu^{\mathrm{can}}_{G,t}$. At maximum pairing $j=\lfloor S/2\rfloor$, the graph pushforward of $\mathrm{CM}_{n,d,j}$ is $\mathrm{CM}_{n,d}$, $a_{n,j}=a_n$, and the pushforward of $\mathbb Q_{n,j}$ under $(\mathcal M,\sigma)\mapsto(G(\mathcal M),\sigma)$ is $\mathbb Q_n$. When $S$ is odd, one half-edge remains unmatched.

Put $\theta_j=2j/S$. Use the spin types $a\in\Omega$ from \Cref{lem:profile-concentration} and write $p=(\*\rho,W,\*u)$ for the resulting profile. For $j>0$, let $Q$ be the empirical type distribution of the endpoints of a uniformly directed paired edge; for $j=0$, set $Q=0$. When $2j<S$, let $\*\chi$ be the empirical spin distribution of the unmatched half-edges; when $2j=S$, set $\*\chi=\*\rho$. Then
\[
 \*\rho=\theta_j Q\allone+(1-\theta_j)\*\chi,\qquad
 W=d\theta_j Q,\qquad \*u=d(1-\theta_j)\*\chi.
\]
Under $\mathbb Q_{n,j}$, conditional on $p$, the spin assignment and compatible pairing are jointly uniform. With $P_{ab}=\pi_a\pi_bM_{ab}$, the profile center is
\[
 p^*=\bigl(\*\pi,\,d\theta_j P,\,d(1-\theta_j)\*\pi\bigr).
\]
At maximum pairing with odd $S$, the unmatched-half-edge coordinate has total mass $1/n$.

To analyze partial pairings, let $S\ge1$ labeled half-edges $u\in[S]$ have types $\tau_u$ in a finite set $\mathcal T_0$, with type counts $T=(T_\tau)_{\tau\in\mathcal T_0}$ and empirical law $\widehat{\*\rho}_0=T/S$. Let $\*p_0\in\Delta_{\mathcal T_0}$ be positive on the types present, let $M_0\in\=R_{\ge0}^{\mathcal T_0\times\mathcal T_0}$ be symmetric with $\gamma_0\defeq\*p_0^\trans M_0\*p_0>0$, and for $0\le j\le\lfloor S/2\rfloor$, let $\mathcal M$ be a uniform partial pairing with $j$ pairs. In the corresponding \emph{independent-pair experiment}, sample independently $j$ ordered type pairs with law $\widetilde Q_{\tau\tau'}\defeq(\*p_0)_\tau(\*p_0)_{\tau'}(M_0)_{\tau\tau'}/\gamma_0$ and $S-2j$ single types with law $\*p_0$, and let $\widetilde T$ be their total type counts.

\begin{lemma}[Independent-pair representation]\label{lem:independent-pair-representation}
For every event $A$ determined by the unordered pair-type and single-type counts,
\[
 \E{\one A\prod_{uv\in\mathcal M}(M_0)_{\tau_u,\tau_v}}
 =\gamma_0^j\frac{\Pr{A,\,\widetilde T=T}}
 {\Pr{\operatorname{Mult}(S,\*p_0)=T}}.
\]
If $\*p_0=\widehat{\*\rho}_0$, this is at most $(S+1)^{|\mathcal T_0|}\gamma_0^j\Pr{A}$.
\end{lemma}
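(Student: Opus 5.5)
The plan is to compute both sides as sums over the same combinatorial objects and to compare them term by term. On the left, the uniform partial pairing with $j$ pairs has $\binom{S}{2j}(2j-1)!!$ equally likely outcomes. We group these outcomes by their unordered pair-type counts $k=(k_{\{\tau,\tau'\}})$ and single-type counts $s=(s_\tau)$. Any such data consistent with $T$ must satisfy $\sum_{\tau'}(1+\one{\tau=\tau'})k_{\{\tau,\tau'\}}+s_\tau=T_\tau$. For each admissible $(k,s)$, the weight $\prod_{uv\in\mathcal M}(M_0)_{\tau_u\tau_v}=\prod_{\{\tau,\tau'\}}(M_0)_{\tau\tau'}^{k_{\{\tau,\tau'\}}}$ is constant. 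We therefore need only the count $N(k,s)$ of partial pairings with these counts. We choose, within each type $\tau$, which half-edges go to which pair-type slot or remain single; this contributes $\prod_\tau T_\tau!\big/\bigl(\prod_{\tau'}(\text{slots})!\,s_\tau!\bigr)$. We then match the slots. This gives
\[
N(k,s)=\frac{\prod_\tau T_\tau!}{\prod_\tau s_\tau!\prod_{\tau<\tau'}k_{\{\tau,\tau'\}}!\prod_\tau k_{\{\tau,\tau\}}!\,2^{k_{\{\tau,\tau\}}}},
\]
which is the same fiber count as the one used in the proof of \Cref{lem:profile-concentration}.

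On the right, we compute $\Pr{A,\widetilde T=T}$ by summing the independent-pair experiment over the same $(k,s)$. The ordered pair-type multiset has multinomial probability
\[
\frac{j!}{\prod_{\text{ordered}}\ell_{\tau\tau'}!}\prod\widetilde Q_{\tau\tau'}^{\ell_{\tau\tau'}}.
\]
Summing over orientations, with $\ell_{\tau\tau'}+\ell_{\tau'\tau}=k_{\{\tau,\tau'\}}$ for $\tau\ne\tau'$, produces the factor $2^{k}/k!$ for unordered off-diagonal pairs and $1/k!$ for diagonal pairs. The singles contribute $\frac{(S-2j)!}{\prod s_\tau!}\prod(\*p_0)_\tau^{s_\tau}$. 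In the product, the $\*p_0$-powers combine to $\prod_\tau(\*p_0)_\tau^{T_\tau}$, the $M_0$-powers match the left side, and there is a factor $\gamma_0^{-j}$. Dividing by $\Pr{\operatorname{Mult}(S,\*p_0)=T}=\frac{S!}{\prod T_\tau!}\prod(\*p_0)_\tau^{T_\tau}$ cancels the $\*p_0$-powers and leaves
\[
\gamma_0^{-j}\,\frac{j!\,(S-2j)!\,2^{j}}{S!}\cdot\frac{\prod T_\tau!}{\prod s_\tau!\prod k!\,2^{k_{\mathrm{diag}}}}\prod M_0^{k}.
\]
The prefactor here is $1/\bigl(\binom{S}{2j}(2j-1)!!\bigr)$, since $(2j-1)!!=(2j)!/(2^j j!)$. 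This shows that each $(k,s)$-term equals $\gamma_0^{-j}$ times $N(k,s)\prod M_0^k$ divided by the total number of pairings. Summing over the $(k,s)$ in $A$ and multiplying by $\gamma_0^j$ gives the identity.

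The main bookkeeping risk is the powers of $2$. Off-diagonal pairs gain $2$ per pair from orientations, and diagonal pairs gain $2$ per pair because $\widetilde Q_{\tau\tau}$ appears in ordered sampling without an orientation duplicate. Together these give exactly the total $2^j$ in $(2j-1)!!$, after accounting for the $2^{k_{\mathrm{diag}}}$ in $N$. I would check this against the case $|\mathcal T_0|=1$ first.

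For the corollary with $\*p_0=\widehat{\*\rho}_0$, we bound $\Pr{A,\widetilde T=T}\le\Pr{A}$, where $\Pr{A}$ is understood in the independent-pair experiment. We also use the standard bound that the multinomial mode satisfies $\Pr{\operatorname{Mult}(S,T/S)=T}\ge(S+1)^{-|\mathcal T_0|}$. This holds because the type $T$ maximizes the probability among at most $(S+1)^{|\mathcal T_0|}$ possible count vectors.
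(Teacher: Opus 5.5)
Your proof is correct, but it reaches the identity by a different route from the paper's. You group the uniform partial pairing by its count profile $(k,s)$, compute the fiber count $N(k,s)$ explicitly, and match it term by term against the multinomial probability of $(k,s)$ in the independent-pair experiment. The match uses the orientation sum, which relies on the symmetry of $\widetilde Q$ inherited from $M_0$, and the identity $\binom{S}{2j}(2j-1)!!=S!/\bigl((S-2j)!\,2^jj!\bigr)$; your powers of $2$ balance correctly.

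The paper instead argues at the level of type sequences. It takes a uniform ordering of the $S$ half-edges and pairs positions $(2i-1,2i)$ for $i\le j$. This induces the uniform partial pairing. The resulting type sequence is uniform over sequences with counts $T$, which is the law of i.i.d.\ $\*p_0$-types conditioned on $\widetilde T=T$, since every such sequence has i.i.d.\ probability $\prod_\tau(\*p_0)_\tau^{T_\tau}$. Multiplying by the pair weights replaces $(\*p_0)_a(\*p_0)_b$ by $(\*p_0)_a(\*p_0)_b(M_0)_{ab}=\gamma_0\widetilde Q_{ab}$ on each ordered position pair, and the identity follows in one line.

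The sequence-level argument needs neither $N(k,s)$ nor the diagonal/off-diagonal bookkeeping you flagged, because ordered positions correspond directly to the ordered pairs drawn from $\widetilde Q$. It also proves the identity for events depending on the whole ordered type sequence, not just unordered counts. Your enumeration is more explicit and makes visible the link to the fiber count in the proof of \Cref{lem:profile-concentration}.

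For the inequality, your argument is the same as the paper's: use $\Pr{A,\widetilde T=T}\le\Pr{A}$ and the fact that $T$ is a mode of $\operatorname{Mult}(S,T/S)$.
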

\begin{proof}
Take a uniform ordering of the half-edges and pair the first $2j$
consecutively. Its type sequence has the law of independent $\*p_0$-types
conditioned on counts $T$. Multiplication by the pair weights changes
each pair law from $\*p_0\otimes\*p_0$ to $\widetilde Q$, with factor $\gamma_0$.
This proves the identity. When $\*p_0=\widehat{\*\rho}_0$, the denominator is at least
$(S+1)^{-|\mathcal T_0|}$ because $T=S\widehat{\*\rho}_0$ is a multinomial mode.
\end{proof}
Applying \Cref{lem:independent-pair-representation} conditional on the vertex spins yields both a polynomial bound on the partial-pairing first moment $a_{n,j}$ and a Gaussian tail for the empirical profile $p$ around $p^*$.
}

\begin{lemma}[Planted partial-pairing profiles]\label{lem:rsb-planted-partial-profiles}
{Fix $\theta_0\in(0,1)$. There exist $C,c,K>0$ such that for all sufficiently large $n$, uniformly over all $j$ with $\theta_j=2j/S\in[\theta_0,1]$,
\[
 |\log a_{n,j}|\le C\log(n+1),\qquad
 \Pr[\mathbb Q_{n,j}]{\|p-p^*\|_1>v}\le C(n+1)^K\exp(-cnv^2)\quad(v>0).
\]
}
\end{lemma}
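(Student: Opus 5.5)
We plan to apply \Cref{lem:independent-pair-representation} conditionally on the vertex spin assignment $\sigma$. We take types $\mathcal T_0=\Omega$, give each half-edge the spin of its tail vertex, and set $M_0=M$. Conditional on $\sigma$ with empirical vertex marginal $\*\rho(\sigma)$, the half-edge type law is $\widehat{\*\rho}_0=\*\rho(\sigma)$. Summing over $\sigma$ against $\*\pi^{\otimes n}$ then writes $a_{n,j}$, and the $\mathbb Q_{n,j}$-weight of any profile event $A$, as
\[
\sum_{\*\rho}\Pr[\*\pi^{\otimes n}]{\*\rho(\sigma)=\*\rho}\,\gamma(\*\rho)^j\,\frac{\Pr{A,\widetilde T=T}}{\Pr{\operatorname{Mult}(S,\*p_0)=T}},\qquad \gamma(\*\rho)=\*p_0^\trans M\*p_0 .
\]
Here $\*p_0$ is chosen freely (positive on the present types).

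\emph{Lower bound on $a_{n,j}$.} We restrict to a single $\sigma$ class with $\*\rho$ within $O(1/n)$ of $\*\pi$, which has probability $n^{-O(1)}$. We then take $\*p_0=\*\pi$. Then $\gamma_0=\*\pi^\trans M\*\pi=1$ and $\widetilde Q=P$. The ratio $\Pr{\widetilde T=T}/\Pr{\operatorname{Mult}(S,\*\pi)=T}$ is a local-limit ratio of two lattice laws with the same mean $S\*\pi$ (since $P\allone=\*\pi$) and nondegenerate covariance, so it is $n^{-O(1)}$. This gives $\log a_{n,j}\ge -C\log(n+1)$.

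\emph{Upper bound on $a_{n,j}$.} We take $\*p_0=\widehat{\*\rho}_0$ and use the bound $(S+1)^{|\Omega|}\gamma(\*\rho)^j$. The claim follows if
\[
\Pr[\*\pi^{\otimes n}]{\*\rho(\sigma)=\*\rho}\,\bigl(\*\rho^\trans M\*\rho\bigr)^{j}\le \exp\{-cn\|\*\rho-\*\pi\|^2\}.
\]
In the proof of \Cref{prop:profile-moments}(2), nonpositivity of the nonconstant eigenvalues gives $\*\rho^\trans M\*\rho\le 1$. This yields the bound above, together with a quadratic gain $e^{-n\DKL(\*\rho\Vert\*\pi)}$, uniformly in $j$. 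Summing over polynomially many $\*\rho$ gives $\log a_{n,j}\le C\log(n+1)$.

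\emph{Profile tail.} The profile $p$ is determined by $\*\rho(\sigma)$, the pair-type counts (giving $W=d\theta_jQ$), and the single-type counts (giving $\*u$). For the vertex part, the display above already gives a Gaussian tail in $\|\*\rho-\*\pi\|$ after dividing by $a_{n,j}\ge n^{-C}$.

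On the event $\|\*\rho-\*\pi\|\le v/4$ we control $Q$ and $\*\chi$ with the representation, again taking $\*p_0=\widehat{\*\rho}_0$. This bounds the weight by $(S+1)^{|\Omega|}\Pr{A}$, where $\Pr{A}$ is computed in the independent-pair experiment. There the $j\ge\theta_0S/2$ pairs are i.i.d.\ with law $\widetilde Q(\*\rho)$ and the $S-2j$ singles are i.i.d.\ with law $\*\rho$. Hoeffding then gives deviation probability $\exp(-cnv^2)$ for the empirical pair law from $\widetilde Q(\*\rho)$ and for the single law from $\*\rho$. The lower bound $\theta_j\ge\theta_0$ makes the pair count linear in $n$. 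The single-type coordinate carries weight $(1-\theta_j)$, so its deviation matters only proportionally.

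It remains to compare with $p^*$. Since $\widetilde Q(\*\rho)$ is Lipschitz in $\*\rho$ near $\*\pi$ and $\widetilde Q(\*\pi)=P$, we get $\|p-p^*\|_1\le C(\|\*\rho-\*\pi\|+\text{deviations})$. The prefactors $\gamma(\*\rho)^j\le1$ and the polynomial factors are absorbed into $C(n+1)^K$.

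The main obstacle is the uniformity over $j$ and the polynomial (rather than $O(1)$) lower bound on $a_{n,j}$ near $\theta_j=1$. At $\theta_j=1$ with $S$ even there are no single types, so the local-limit comparison must also handle the lattice parity constraints (loops contribute even counts). We would handle this by the multinomial-mode bound on one side and an explicit local CLT on the pair-type sum on the other, accepting polynomial losses.
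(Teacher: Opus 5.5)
Your proposal is correct and follows the paper's proof. Both apply the independent-pair representation conditional on $\sigma$: with $\*p_0=\widehat{\*\rho}$ and $\gamma\le1$ this gives the polynomial upper bound on $a_{n,j}$ and the Hoeffding profile tail, and a single near-mode spin class with $\*p_0=\*\pi$ gives the polynomial lower bound. The one difference is in that lower bound: the paper avoids a local CLT by taking multinomial modes of the pair-type and single-type counts and repairing the resulting $O(1)$ discrepancy with $O(1)$ explicit pair-type switches to bound $\Pr{U^{\mathrm{sp}}=T^{\mathrm{sp}}}$ from below. This sidesteps the uniformity-in-$j$ and lattice issues you flag; in fact the fixed total count is the only lattice constraint here.
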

\begin{proof}
{
Fix the vertex spins $\sigma$, and write $T_\tau$ for the number of
half-edges of spin $\tau$ and $\widehat{\*\rho}=T/S=\*\rho$. Set
$\gamma=\widehat{\*\rho}^{\trans}M\widehat{\*\rho}$.
Since $M\*\pi=\allone$ and the nonconstant eigenvalues of
$M\diag(\*\pi)$ are nonpositive, $0\le\gamma\le1$. If $\gamma=0$, every
partial pairing has zero weight. Otherwise let
\[
 \widetilde Q_{\tau\tau'}
 =\frac{\widehat\rho_\tau\widehat\rho_{\tau'}M_{\tau\tau'}}{\gamma},
\]
and consider the independent-pair experiment of
\Cref{lem:independent-pair-representation} with $\*p_0=\widehat{\*\rho}$.
For any event $A$
determined by the profile, that lemma bounds its weighted probability
by a polynomial factor times its probability in the independent-pair
experiment. The pair contribution to $Q$ counts both orientations.
Taking $A$ to be the whole space also gives
$a_{n,j}\le(n+1)^{O(1)}$.

Under independent vertex spins with law $\*\pi$, Hoeffding's inequality
gives a Gaussian tail on scale $n$ for $\|\*\rho-\*\pi\|_1$.
If this distance is small, then
$(\widehat{\*\rho},\gamma,\widetilde Q)$ is within $O(\|\*\rho-\*\pi\|_1)$
of $(\*\pi,1,P)$. The empirical pair law in the independent-pair
experiment has mean $\widetilde Q$ and, since $j\ge\theta_0S/2$,
also has Gaussian tails on scale $n$. Together with
$W=d\theta_j Q$ and
$u_a=d\rho_a-\sum_bW_{ab}$, this proves the
stated profile tail up to a polynomial factor divided by $a_{n,j}$.
The estimates are uniform over the stated pairing fractions, including maximum pairing with odd $S$.

For the lower bound on $a_{n,j}$, it is enough to count spin totals.
Fix an assignment $\sigma$ and let $T^{\mathrm{sp}}_a$ be the number of its half-edges
of spin $a$. Put $P_{ab}=\pi_aM_{ab}\pi_b$, and let $U^{\mathrm{sp}}$
be the vector of spin counts in $j$ independent pairs with law $P$ and
$S-2j$ independent single spins with law $\*\pi$. Since
$\*\pi^\trans M\*\pi=1$, \Cref{lem:independent-pair-representation} gives
\[
 \E[\mathrm{CM}_{n,d,j}]{\prod_{uv\in E(G)}M_{\sigma_u,\sigma_v}}
 =\frac{\Pr{U^{\mathrm{sp}}=T^{\mathrm{sp}}}}{\Pr{\operatorname{Mult}(S,\*\pi)=T^{\mathrm{sp}}}}.
\]

For $\sigma\sim\*\pi^{\otimes n}$, let
$N(\sigma)=(|\sigma^{-1}(a)|)_{a\in\Omega}$ be its vertex-spin counts,
and choose a mode $N^{\mathrm{sp}}$ of
$\operatorname{Mult}(n,\*\pi)$. Then
\[
  \Pr[\sigma\sim\*\pi^{\otimes n}]{N(\sigma)=N^{\mathrm{sp}}}
  \ge (n+1)^{-|\Omega|}.
\]
Set $T^{\mathrm{sp}}=dN^{\mathrm{sp}}=S\*\pi+O(1)$ coordinatewise.
These half-edge counts are realizable by vertex assignments.
Moreover, the denominator in the displayed ratio is at most $1$,
so restricting the outer expectation to
$N(\sigma)=N^{\mathrm{sp}}$ gives
\[
  a_{n,j}\ge (n+1)^{-|\Omega|}
  \Pr{U^{\mathrm{sp}}=T^{\mathrm{sp}}}.
\]
Next choose modes of the multinomial counts of unordered
pair types in $U^{\mathrm{sp}}$ and of the single-spin counts. Their joint probability
is at least $(n+1)^{-O(1)}$, and their spin totals differ from
$S\*\pi$ by $O(1)$. Here a multinomial mode differs from its mean by
$O(1)$ in every coordinate, as follows by comparing the probabilities
of counts differing by one exchange.

We correct the resulting $O(1)$ discrepancy from $T^{\mathrm{sp}}$ by changing
$O(1)$ pair types: for hardcore, change $00$ to $01$ or conversely; for Potts with $q=2$ and $\beta>0$, change $22$ to $12$ or conversely; and for Potts with $q\ge3$, switch an unordered off-diagonal pair $\{a,b\}$ to $\{a,c\}$ for $a\in[q]\setminus\{b,c\}$ (which adjusts the counts of $b,c$ by $\mp1$ while using only off-diagonal probabilities $2P_{ab}(\beta)=2/[q(q-1+\beta)]\ge 2/[q^2]$, uniformly for $\beta\in[0,1]$). Since each required pair type has probability bounded away from zero and $j=\Omega(n)$, these switches yield $\Pr{U^{\mathrm{sp}}=T^{\mathrm{sp}}}\ge C(n+1)^{-K}$ and hence the same polynomial lower bound for $a_{n,j}$ (uniformly in $\beta\in(0,1]$ when $q\ge3$), completing the tail estimate.
}
\end{proof}

{Recall from \Cref{def:replica-symmetry} and \Cref{subsec:preliminaries} that for a probability measure $\mu$ on $\Omega^{[n]}$ and indices $v_1,\ldots,v_k\in[n]$, we write $\mu_{v_1,\ldots,v_k}\in\Delta_{\Omega^k}$ and $\mu_{v_i}\in\Delta_\Omega$ for the joint and single-site marginals of $\mu$. Independently of $\sigma\sim\mu$, choose a random integer $J\in\{0,\ldots,n\}$ and then a uniform $J$-subset $U\subseteq[n]$, and set $p_*\defeq\max_j\Pr{J=j}$ and $\mu^U\defeq\mu(\cdot\mid\sigma_U)$. We next record a finite-sample version of the pinning lemma, adapted from~\cite[Lemma~3.15]{coja2019spin}.

\begin{lemma}[Finite-sample pinning]\label{lem:rsb-finite-pinning}
For any probability law $\mu$ on $\Omega^{[n]}$ and independent uniform indices $v_1,\ldots,v_k\in[n]$ (independent of $(\sigma,J,U)$),
\[
 \E{\DTV{\mu^U_{v_1,\ldots,v_k}}{\bigotimes_{i=1}^k\mu^U_{v_i}}}
 =O_{k,|\Omega|}\bigl(\sqrt{p_*}+n^{-1}\bigr).
\]
\end{lemma}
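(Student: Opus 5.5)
This follows the classical pinning argument (Montanari, Raghavendra--Tan, Coja-Oghlan--Perkins), adapted to count the finite-sample error explicitly. We reduce first to the pairwise case $k=2$ and then lift to general $k$ by induction.

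\emph{Pairwise reduction.} Let $U_j$ denote a uniform $j$-subset. Pinning on an additional uniform vertex gives $U_{j+1}$ in law (up to the labeling of the new vertex). Hence the chain rule for mutual information gives, for each $j$,
\[
 \E{I_{\mu}(\sigma_x;\sigma_y\mid\sigma_{U_j})}\le \frac{1}{?}
\]
To make this precise, set $h(j)\defeq\E{H(\sigma_{U_j})}$ for the entropy of the pinned configuration. For fixed $U_j$ and uniform $x\notin U_j$, subadditivity and the chain rule show that $\Delta_j\defeq h(j+1)-h(j)=\E{H(\sigma_x\mid\sigma_{U_j})}$ is nonincreasing in $j$. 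Moreover,
\[
 \Delta_j-\Delta_{j+1}=\E{I(\sigma_x;\sigma_y\mid\sigma_{U_j})}
\]
for distinct uniform $x,y\notin U_j$. This is the standard exchangeability computation: it is exact when $x,y$ are drawn outside $U_j$, and it is $O(1/n)$-close to drawing them uniformly from $[n]$. Since $0\le\Delta_j\le\log|\Omega|$ and the sequence telescopes,
\[
 \sum_{j=0}^{n-1}\E{I(\sigma_x;\sigma_y\mid\sigma_{U_j})}\le\log|\Omega|.
\]

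\emph{Averaging over $J$.} Writing $\iota_j\defeq\E{I(\sigma_{v_1};\sigma_{v_2}\mid\sigma_{U_j})}\ge0$, we obtain
\[
 \E{I(\sigma_{v_1};\sigma_{v_2}\mid\sigma_U)}=\sum_j\Pr{J=j}\iota_j\le p_*\sum_j\iota_j+O(n^{-1})\le p_*\log|\Omega|+O(n^{-1}).
\]
The $O(n^{-1})$ term covers the events that $v_1=v_2$ or that some $v_i\in U$; in the latter case the marginal is a point mass and the pair factorizes trivially. Pinsker's inequality converts mutual information into total variation, $\DTV{\mu^U_{v_1,v_2}}{\mu^U_{v_1}\otimes\mu^U_{v_2}}\le\sqrt{I/2}$, and Jensen's inequality then gives the bound $O(\sqrt{p_*}+n^{-1/2})$.

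\emph{Main obstacle.} The crude Pinsker step yields only $n^{-1/2}$, while the statement asks for $n^{-1}$. I expect handling this error carefully to be the main obstacle. The fix is to keep the collision and pinned-vertex events outside the square root. On those events, which have probability $O(1/n)$, the total variation distance is bounded by $1$ directly, and mutual information is used only on the complementary event. With this bookkeeping the error becomes $\sqrt{p_*\log|\Omega|}+O(n^{-1})$.

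\emph{From $k=2$ to general $k$.} Rather than working with mutual information for each $k$, we use the covariance criterion of \Cref{lem:rs-overlap-characterization} and the partition bound \eqref{eq:state-partition-marginals} applied to $\mu^U$. Pairwise decorrelation of size $\epsilon$ controls $\E{\|\Sigma_{\mu^U}\|_{\mathrm F}^2}/n^2$ by $O(\epsilon)$. The induction in the proof of \Cref{lem:rs-measure-properties}(i) then bounds the $k$-point error by $O_{k,|\Omega|}(\sqrt{\epsilon})$. However, this square root would weaken $\sqrt{p_*}$ to $p_*^{1/4}$. To avoid that loss, we instead induct directly with conditional mutual information: apply the telescoping argument to $\E{I(\sigma_{v_k};\sigma_{v_1,\dots,v_{k-1}}\mid\sigma_U)}$, using that pinning $k-1$ further uniform vertices is again a uniform subset. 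The $(k-1)$-point analogue of $\Delta_j-\Delta_{j+1}$ is then bounded by a sum of $k-1$ consecutive increments, so the same telescoping gives $O_k(p_*\log|\Omega|)$. Combining these terms via the chain rule, a single application of Pinsker's inequality gives $O_{k,|\Omega|}(\sqrt{p_*}+n^{-1})$.
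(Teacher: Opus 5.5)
Your argument is correct and is essentially the paper's proof. Your $\Delta_j$ is the paper's $h_j$, and the chain rule writes the expected relative entropy of $r$ distinct unpinned coordinates from the product of their marginals as $\sum_{i<r}(\Delta_j-\Delta_{j+i})$. Telescoping over $j$ gives at most $\binom{k}{2}\log|\Omega|$, and averaging over $J$ with point masses at most $p_*$, followed by Pinsker and Jensen, gives the $\sqrt{p_*}$ term; collisions $v_a=v_b$, which have probability $O_k(n^{-1})$, are bounded by $1$ outside the square root.

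One small correction: only collisions belong in the $O(n^{-1})$ exceptional event. Pinned-vertex events are not $O(1/n)$-rare, but pinned coordinates are constant under $\mu^U$ and simply drop out. For the same reason, drawing $x,y$ from $[n]$ yields an upper bound by $\Delta_j-\Delta_{j+1}$, not an $O(1/n)$-approximation of it.
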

\begin{proof}
Let $h_j$ be the expected Shannon entropy of a uniform unpinned coordinate after $j$ distinct uniform coordinates and their spin values have been revealed, and set $h_j=0$ for $j\ge n$. The exchangeability of the random order and reduction of entropy by conditioning give $\log|\Omega|\ge h_j\ge h_{j+1}\ge0$. For $r\le n-j$ distinct uniform unpinned coordinates, the chain rule expresses the expected relative entropy of their joint law from the product of its marginals as
\[
 \sum_{i=0}^{r-1}(h_j-h_{j+i}).
\]
For distinct $v_1,\ldots,v_k$, revealed coordinates are constant under $\mu^U$ and the remaining coordinates are uniform distinct unpinned coordinates. Their expected relative entropy is therefore at most $\sum_{i=0}^{k-1}(h_j-h_{j+i})$ when $J=j$. Telescoping yields
\[
 \sum_{j=0}^{n}\sum_{i=0}^{k-1}(h_j-h_{j+i})
 \le\frac{k(k-1)}2\log|\Omega|.
\]
Thus Pinsker's and Jensen's inequalities give an $O_{k,|\Omega|}(\sqrt{p_*})$ bound on the expected distance for distinct indices. Repetitions have probability $O_k(n^{-1})$ and contribute at most one to the distance.
\end{proof}
Since the maximum point probability of $\operatorname{Bin}(n,s)$ is $O_s(n^{-1/2})$ and that of a uniform $J\in\{0,\ldots,T-1\}$ ($T\le n+1$) is $1/T$, \Cref{lem:rsb-finite-pinning} gives a bound of $O_{k,|\Omega|,s}(n^{-1/4})$ when $U$ is a Bernoulli-$s$ subset for fixed $0<s<1$, and $O_{k,|\Omega|}(T^{-1/2}+n^{-1})$ when $J$ is uniform on $\{0,\ldots,T-1\}$. The estimate holds uniformly in $\mu$, including for a conditional law provided $J$ and $U$ are sampled independently of the spin configuration after conditioning.
}

\subsection{Pinned Log-Likelihood and Planted Free-Energy Bound}\label{subsec:app-revealed-entropy}\label{app:revealed-entropy}
\begin{proof}[Proof of~\Cref{lem:revealed-entropy-bound}]
Fix the parameter $t$ {and reveal density $s\in[0,1]$}. All implicit constants below may depend on $t,s$ and $d$.

\paragraph{Step 1: Concentration and edit-Lipschitz bounds.}
{For a supported realization $(G,\*\tau,\*\xi)$ of the planted graph with vertex types $\tau_v=\sigma_v$ and reveal indicators $\xi_v=\one{v\in U}$, write}
\[
 F=\log\mathcal Z_G+\ell_s
   =\log\left(\frac{\mathcal Z_G(\sigma_U)}
                         {\prod_{v\in U}\pi_{\sigma_v}}\right),
\]
where the pinned partition function is
\[
 \mathcal Z_G(\sigma_U)\defeq\mathcal Z_G\mu^{\mathrm{can}}_{G,t}(\sigma_U)
 =\sum_{\substack{{\varsigma}\in\Omega^{V(G)}\\{\varsigma}_U=\sigma_U}}
  \prod_{v\in V(G)}\pi_{{\varsigma}_v}\prod_{uv\in E(G)}M_{{\varsigma}_u,{\varsigma}_v}.
\]
{We verify that $\log\mathcal Z_G$, $F$, and $\ell_s=F-\log\mathcal Z_G$ satisfy the global edit-Lipschitz bound~\eqref{eq:edit-lipschitz-def} of \Cref{lem:profile-concentration} across all supported realizations of maximum degree at most ${d}$. Given any two such supported realizations $x=(G,\*\tau,\*\xi)$ and $x'=(G',\*\tau',\*\xi')$, write $U'=\{v:\xi'_v=1\}$ and {put $\sigma'_v=\tau'_v$}. If they have edit distance
\[
  k \defeq d_{\mathrm{edit}}(x,x') = |E(G)\triangle E(G')| + d_{\mathrm{Ham}}(\*\tau,\*\tau') + d_{\mathrm{Ham}}(\*\xi,\*\xi'),
\]
let $S_{\mathrm{diff}}\subseteq V(G)$ denote the set of discrepancy vertices,
\[
  S_{\mathrm{diff}} \defeq \bigl\{v\in V(G):\tau_v\ne\tau'_v\text{ or }\xi_v\ne\xi'_v\bigr\} \cup \bigl\{u,v:\operatorname{mult}_G(\{u,v\})\ne\operatorname{mult}_{G'}(\{u,v\})\bigr\},
\]
so that $|S_{\mathrm{diff}}|\le 2k$.

Both models have a spin $a_0$ with $M_{a_0a}>0$ for every $a\in\Omega$:
take $a_0=0$ for hardcore and any spin for Potts. On a graph of maximum
degree ${d}$, changing an unpinned spin to $a_0$ preserves positive
weight. This map has at most $|\Omega|$ preimages and changes the weight
by at most a constant factor depending on ${d}$ and $t$. Consequently,
forcing any set of unpinned vertices to $a_0$ decreases a pinned partition
function by at most $\e^{O_{{d},t}(1)}$ per vertex.

For any supported pinned configuration on $G$ and any vertex set $S$,
compare its pinned partition function with that on $G-S$, retaining the
pins outside $S$. For the lower bound, first force the unpinned neighbors
of $S$ outside $S$ to $a_0$, then extend each configuration by the
prescribed spins on $S\cap U$ and $a_0$ on $S\setminus U$. Every new edge
has positive weight: edges with both endpoints pinned are allowed by
supportedness, and every other new edge touches $a_0$. The forcing and
extension change the weight by at most $\e^{O_{{d},t}(|S|)}$. For the
upper bound, there are at most $|\Omega|^{|S|}$ extensions and only
$O_{d}(|S|)$ new vertex and edge factors. Hence
\[
  \bigl|\log\mathcal Z_G(\sigma_U)
       -\log\mathcal Z_{G-S}(\sigma_{U\setminus S})\bigr|
  \le O_{{d},t}(|S|).
\]
The argument includes loops and parallel edges, counted with multiplicity.
Apply it to $S=S_{\mathrm{diff}}$ in both realizations. The resulting
graphs and pins coincide, so
\[
  \bigl|\log\mathcal Z_G(\sigma_U)-\log\mathcal Z_{G'}(\sigma'_{U'})\bigr|
  \le O_{{d},t}(k).
\]
Taking $U=U'=\varnothing$ gives $\bigl|\log\mathcal Z_G-\log\mathcal Z_{G'}\bigr|\le O_{{d},t}(|E(G)\triangle E(G')|)\le O_{{d},t}(k)$, and
the pin normalization $\sum_{v\in U}\log\pi_{\sigma_v}$ in $F$ also changes by at most $O_t\bigl(d_{\mathrm{Ham}}(\*\tau,\*\tau')+d_{\mathrm{Ham}}(\*\xi,\*\xi')\bigr)=O_t(k)$. Thus $F$,
$\log\mathcal Z_G$, and $\ell_s=F-\log\mathcal Z_G$ satisfy~\eqref{eq:edit-lipschitz-def} with $L=O_{{d},t}(1)$ across all supported realizations of maximum degree at most ${d}$,
as required by \Cref{lem:profile-concentration}. For Potts, since $\pi_a=1/q$ and $\min_{a,b}M_{ab}(\beta)=q\beta/(q-1+\beta)$, the constant
is $O_{{d},q}(1+|\log\beta|)$ as $\beta\downarrow0$.
}

\smallskip
{At maximum pairing, \Cref{lem:rsb-planted-partial-profiles} supplies the profile-tail hypothesis of \Cref{lem:profile-concentration} with a deterministic center. Applying that lemma to $\ell_s$ with independent Bernoulli-$s$ reveal indicators $\xi_v=\one{v\in U}$ proves the asserted exponential upper tail.}

{
\paragraph{Step 2: Cavity increment under partial pairing.}
Assume first that $0<s<1$, fix $\theta\in(0,1)$, set $\theta_0\defeq\min\{\theta,1-\theta\}/2>0$, and recall $F=\log\mathcal Z_G+\ell_s$ from Step~1.
For any integer $m\ge1$, define the binomial probabilities and central window
\[
 b_m(j)\defeq\Pr{\operatorname{Bin}(m,\theta/2)=j},\qquad
 \mathcal M_m\defeq\bigl\{j\in\=Z:|j-\theta m/2|\le\sqrt n\log n\bigr\}.
\]
Since $S=dn$ with $d\ge3$, for all sufficiently large $n$, every $m\in{\{S,\ldots,S+d\}}$, every $j\in\mathcal M_m$, and every integer shift $i$ with $|i|\le 2{d}$ satisfy
\[
 \frac{\theta_0}{2}S \le j+i \le \frac{1-\theta_0}{2}S < \lfloor S/2\rfloor \le \lfloor m/2\rfloor.
\]
Hence, for all sufficiently large $n$, the partial-pairing laws $\mathrm{CM}_{n,d,j+i}$ and $\mathbb Q_{n,j+i}$ are well-defined with pairing fraction in $[\theta_0,1-\theta_0]\subset(0,1)$ whenever $j\in\mathcal M_S$ (or $j\in\mathcal M_{S+d}$) and $|i|\le 2{d}$.

For each fixed graph $G$ and such $j$, when $k\le{d}$ distinct unmatched half-edges $e_1,\ldots,e_k$ are sampled uniformly without replacement, their endpoint tuple $(u_1,\ldots,u_k)\in[n]^k$ has point probability bounded by
\[
  \prod_{i=0}^{k-1}\frac{{d}}{S-2j-i} \le \left(\frac{C_\theta}{n}\right)^k,
  \qquad
  C_\theta \asymp \frac{1}{1-\theta},
\]
where $C_\theta$ is slightly enlarged to absorb the without-replacement denominators $S-2j-i$. Comparing $(u_1,\ldots,u_k)$ with $k$ i.i.d.\ uniform vertices amplifies the estimate of \Cref{lem:rsb-finite-pinning} by at most $C_\theta^k$. Thus, uniformly over these graphs and pairing counts,
\[
  \E{\DTV{\mu^{\mathrm{can}}_{G,t}(\sigma_{u_1}=\cdot,\ldots,\sigma_{u_k}=\cdot\mid\sigma_U)}
  {\bigotimes_{i=1}^k\mu^{\mathrm{can}}_{G,t}(\sigma_{u_i}=\cdot\mid\sigma_U)}}
  =o(1).
\]

For admissible $j$, let $g_{n,j}\defeq\E[\mathbb Q_{n,j}]{F}$ (averaging over the planted graph, $\sigma$, and $U$), and define the binomially smoothed average
\[
 f_n\defeq\sum_{j\in\mathcal M_S}b_S(j)g_{n,j}.
\]
By Step~1 and \Cref{lem:rsb-planted-partial-profiles} (with parameter $\theta_0$), $|g_{n,j+i}|=O(n)$ uniformly over $j\in\mathcal M_S\cup\mathcal M_{S+d}$ and $|i|\le 2{d}$. Moreover, since $b_S(j-i)/b_S(j)=1+O_\theta(n^{-1/2}\log n)$ on $\mathcal M_S$ and $\Pr{\operatorname{Bin}(S,\theta/2)\notin\mathcal M_S}\le\exp\{-\Omega((\log n)^2)\}$ by Hoeffding's inequality, every integer shift $|i|\le 2{d}$ and uniformly bounded family $H(n,j)$ on the admissible range satisfy
\begin{equation}\label{eq:rsb-binomial-reindex}
 \sum_{j\in\mathcal M_S}b_S(j)H(n,j+i)
 =\sum_{j\in\mathcal M_S}b_S(j)H(n,j)+o(1)
\end{equation}
uniformly over $n$.

\smallskip
\emph{(a) Elementary reweighting and edge-insertion increment ($g_{n,j+1}-g_{n,j}$).}
Under $\mathbb Q_{n,j}$ for $j\in\mathcal M_S$ (or within $2{d}$ of $\mathcal M_S$), consider either inserting one new edge between two uniformly chosen distinct unmatched half-edges with endpoint vertices $u_1,u_2$, or adding a vertex $n+1$ with spin $a\sim\*\pi$ and $l$ edges connected to uniformly chosen distinct unmatched half-edges with endpoint vertices $u_1,\ldots,u_l$. Write $G'$ for the graph after the chosen operation. When inserting an edge, set $\sigma'=\sigma$ and $U'=U$. When adding vertex $n+1$, extend $\sigma$ by setting $\sigma'_{n+1}=a$, independently reveal that vertex with probability $s$, and let $U'$ be the resulting set of revealed vertices. This gives independent Bernoulli-$s$ marks on $[n+1]$. The respective reweighting factors relative to $\mathbb Q_{n,j}$ are
\[
 \Psi_e\defeq M_{\sigma_{u_1},\sigma_{u_2}},\qquad
 \Psi_l\defeq\prod_{i=1}^lM_{a,\sigma_{u_i}}\quad(a\sim\*\pi).
\]
Let $p$ and $p^*$ be the partial-pairing profile and its center in the coordinates of \Cref{lem:profile-concentration}, and put $D=\|p-p^*\|_1$. Integrating the tail in \Cref{lem:rsb-planted-partial-profiles} gives
\[
 \E{D^k}=O\bigl((\log n/n)^{k/2}\bigr)\qquad(k=1,2,4).
\]
Recall $\*\chi$ for the spin marginal of the unmatched half-edge law. Since $\*u=d(1-2j/S)\*\chi$ and $2j/S=\theta+O(n^{-1/2}\log n)$, the weighted profile bound gives $\|\*\chi-\*\pi\|_1\le C_\theta D$ uniformly over the stated pairing counts. Conditional on all old data $(G,\*\tau,U)$, averaging the fresh endpoint choices and the new spin gives a function $H_\Psi(p)$ of the profile alone. Replacing sampling without replacement by independent sampling costs $O(n^{-1})$, so
\[
 H_{\Psi_e}(p)=\*\chi^\trans M\*\chi+O(n^{-1}),\qquad
 H_{\Psi_l}(p)=\sum_a\pi_a(M\*\chi)_a^l+O(n^{-1}).
\]
Both polynomials have value $1$ and zero tangent derivative at $\*\chi=\*\pi$, by $M\*\pi=\allone$. Thus $|H_\Psi(p)-1|\le C(D^2+n^{-1})$ for $\Psi\in\{\Psi_e,\Psi_l\}$. \Cref{lem:profile-concentration} and Step~1 give $\operatorname{Var}(F)=O(n\log n)$, uniformly over $j\in\mathcal M_S$ and their bounded shifts. Since $\E{\Psi\mid G,\*\tau,U}=H_\Psi(p)$, Cauchy--Schwarz gives
\begin{equation}\label{eq:rsb-reweighting-covariance}
 \begin{aligned}
 \E{\Psi}&=1+O(\log n/n),\\
 |\operatorname{Cov}(\Psi,F)|
 &\le C\sqrt{n\log n}\,\sqrt{\E{(D^2+n^{-1})^2}}
 =O(\log^{3/2}n/\sqrt n)=o(1).
 \end{aligned}
\end{equation}

\smallskip
Let $\*u_{\mathrm{end}}=(u_1,\ldots,u_k)$ be the sampled endpoint tuple, with $k=2$ for edge insertion and $k=l$ for adding vertex $n+1$. Take a symbol $\dagger\notin\Omega$, and let $\omega_{\mathrm{new}}=a$ if vertex $n+1$ is added and revealed with spin $a$, and $\omega_{\mathrm{new}}=\dagger$ otherwise. Set $x\defeq\E{\Psi\mid G,\sigma_U,\*u_{\mathrm{end}},\omega_{\mathrm{new}}}$. Conditional on $(G,\sigma_U)$, the planted spin assignment $\sigma$ has law $\mu^{\mathrm{can}}_{G,t}(\cdot\mid\sigma_U)$. Thus $x=\mathcal Z_{G'}(\sigma'_{U'})/\mathcal Z_G(\sigma_U)$ when $\omega_{\mathrm{new}}=\dagger$, and $x=\mathcal Z_{G'}(\sigma'_{U'})/(\pi_a\mathcal Z_G(\sigma_U))$ when $\omega_{\mathrm{new}}=a$. Hence the observable on $G'$ equals $F+\log x$. If $x=0$, then $\Psi=0$ almost surely under this conditioning. Using $\E{\Psi F}=\operatorname{Cov}(\Psi,F)+\E{\Psi}\E{F}$ and $\E{\Psi\log x}=\E{x\log x}$ (by conditioning on $G,\sigma_U,\*u_{\mathrm{end}},\omega_{\mathrm{new}}$, with $0\log0=0$), we obtain for every constant $c\in\=R$ the identity
\begin{equation}\label{eq:rsb-sizebiased-increment}
 \E{\Psi(F+\log x-c)}
 =\operatorname{Cov}(\Psi,F)+(\E{F}-c)\E{\Psi}+\E{x\log x}.
\end{equation}
Write $A_e(n,j)$ and $A_l(n,j)$ for $\E[\mathbb Q_{n,j}]{x\log x}$ when $\Psi=\Psi_e$ and $\Psi=\Psi_l$, respectively; both are uniformly bounded since $x\ge0$ is bounded. Because adding a uniform edge to $\mathrm{CM}_{n,d,j}$ produces $\mathrm{CM}_{n,d,j+1}$, reweighting by $\Psi_e$ gives $a_{n,j+1}/a_{n,j}=\E[\mathbb Q_{n,j}]{\Psi_e}$ and, by~\eqref{eq:rsb-sizebiased-increment} with $c=g_{n,j}$ and~\eqref{eq:rsb-reweighting-covariance},
\[
 \frac{a_{n,j+1}}{a_{n,j}}=\E[\mathbb Q_{n,j}]{\Psi_e}=1+O(\log n/n),
 \qquad g_{n,j+1}-g_{n,j}=A_e(n,j)+o(1).
\]

\smallskip
\emph{(b) Vertex-addition increment ($g_{n+1,K}-g_{n,K}$).}
To pass from $n$ to $n+1$ vertices, fix any $K\in\mathcal M_{S+d}$. For all sufficiently large $n$, every pair $b=(k,l)$ of nonnegative integers with $2k+l\le {d}$ has $j_b\defeq K-k-l$ satisfying $\theta_0 S/2\le j_b\le K\le(1-\theta_0)S/2<\lfloor S/2\rfloor$, so $\mathrm{CM}_{n+1,d,K}$, $\mathbb Q_{n+1,K}$, $\mathrm{CM}_{n,d,K}$, $\mathbb Q_{n,K}$, and $\mathbb Q_{n,j_b}$ are all well-defined with pairing fractions in $[\theta_0,1-\theta_0]$. For a partial pairing drawn from $\mathrm{CM}_{n+1,d,K}$, let $K_{\mathrm{loop}}$ and $L_{\mathrm{ext}}$ be the numbers of self-loops and edges to $[n]$, respectively, incident to vertex $n+1$. Writing $q_b\defeq\Pr[\mathrm{CM}_{n+1,d,K}]{K_{\mathrm{loop}}=k,\ L_{\mathrm{ext}}=l}$ and $\Psi_b\defeq M_{aa}^k\Psi_l$, the probability of $b=(k,l)$ under $\mathbb Q_{n+1,K}$ is $q_b(a_{n,j_b}/a_{n+1,K})\E[\mathbb Q_{n,j_b}]{\Psi_b}$. Since $a_{n,j_b}/a_{n,K}=1+O(\log n/n)$ by the edge ratio above, realizations with self-loops ($k\ge1$) have total unplanted probability $\sum_{\substack{k\ge1,\ l\ge0\\2k+l\le d}}q_{(k,l)}=O(1/n)$, and loop-free realizations ($k=0$) satisfy $\E[\mathbb Q_{n,j_b}]{\Psi_{(0,l)}}=1+O(\log n/n)$ by~\eqref{eq:rsb-reweighting-covariance}, summing over $b=(k,l)$ yields $a_{n+1,K}/a_{n,K}=1+O(\log n/n)$ uniformly for $K\in\mathcal M_{S+d}$.

For this conditional coupling, the reverse construction starts with a uniform old partial pairing with $j_b$ edges, chooses uniformly the $2k+l$ half-edges of the new vertex used by its incident edges, pairs $2k$ of these into $k$ loops, and matches the other $l$ to a uniformly selected ordered $l$-tuple of unmatched old half-edges. The choices are uniform among constructions giving the same $(k,l)$, and each full partial pairing has the same number of such representations; hence the resulting law is the unplanted conditional law. Relative to this reverse construction, the planted density is precisely $(a_{n,j_b}/a_{n+1,K})\Psi_b$: the old planted density supplies the old edge weights, and $\Psi_b$ supplies exactly the new loop and external-edge weights.

\smallskip
Applying~\eqref{eq:rsb-sizebiased-increment} with $c=g_{n,K}$ for each possible pair $b=(k,l)$ (with expectations under $\mathbb Q_{n,j_b}$ and the independent spin $a\sim\*\pi$ of vertex $n+1$ and sampled unmatched half-edges, and $x_b$ the corresponding conditional mean of $\Psi_b$) gives
\begin{align*}
 &\frac{a_{n+1,K}}{a_{n,K}}
       \bigl(g_{n+1,K}-g_{n,K}\bigr)\\
       ={}&\sum_b q_b\frac{a_{n,j_b}}{a_{n,K}}
       \left\{\operatorname{Cov}(\Psi_b,F)
       +(g_{n,j_b}-g_{n,K})\E{\Psi_b}
       +\E{x_b\log x_b}\right\}.
\end{align*}
For realizations with self-loops ($k\ge1$), the corresponding profile function satisfies $H_{\Psi_b}(p)=c_b+O(D+n^{-1})$, where $c_b=\sum_a\pi_aM_{aa}^k$. The same Cauchy--Schwarz bound gives $|\operatorname{Cov}(\Psi_b,F)|=O(\log n)$ under each old law $\mathbb Q_{n,j_b}$. Since their total unplanted probability is $O(n^{-1})$, $\Psi_b$ and $x_b\log x_b$ are bounded, and $|g_{n,j_b}-g_{n,K}|=O(1)$ (a sum of $k+l$ edge increments), their total contribution is $O(\log n/n)=o(1)$. For loop-free realizations ($k=0$), $j_{(0,l)}=K-l$, so the edge increment estimate gives $g_{n,K-l}-g_{n,K}=-\sum_{i=1}^l A_e(n,K-i)+o(1)$. The number $l$ of external edges of vertex $n+1$ under $\mathrm{CM}_{n+1,d,K}$ has law $\operatorname{Bin}(d,\theta)+o(1)$ in total variation uniformly for $K\in\mathcal M_{S+d}$. Hence, uniformly over $K\in\mathcal M_{S+d}$,
\[
 g_{n+1,K}-g_{n,K}
 =\E[L_d\sim\operatorname{Bin}(d,\theta)]{A_{L_d}(n,K-L_d)-\sum_{i=1}^{L_d}A_e(n,K-i)}+o(1).
\]

\emph{(c) Binomial smoothing and $f_{n+1}-f_n$.}
To compare $f_{n+1}$ with $f_n$, couple $K=J_{\mathrm{edges}}+B_d$ with independent $J_{\mathrm{edges}}\sim\operatorname{Bin}(S,\theta/2)$ and $B_d\sim\operatorname{Bin}(d,\theta/2)$, so that $K\sim\operatorname{Bin}(S+d,\theta/2)$ marginally. Let $\mathcal W_{S,d}\defeq\{J_{\mathrm{edges}}\in\mathcal M_S,\ K\in\mathcal M_{S+d}\}$ be the joint central-window event. Since $0\le B_d\le {d}$, the symmetric difference $\{J_{\mathrm{edges}}\in\mathcal M_S\}\triangle\{K\in\mathcal M_{S+d}\}$ is contained in $\{|J_{\mathrm{edges}}-\theta S/2|>\sqrt n\log n-{d}\}$, which has probability $\exp\{-\Omega((\log n)^2)\}$ by Hoeffding's inequality. The same tail bound applies to replacing the window of width $\sqrt{n+1}\log(n+1)$ in $f_{n+1}$ by $\mathcal M_{S+d}$. Because $|g_{n,j}|=O(n)$ on $\mathcal M_S$ and $|g_{n+1,K}|=O(n)$ on $\mathcal M_{S+d}$, replacing both window indicators by $\one{\mathcal W_{S,d}}$ changes each expectation by $o(1)$. On $\mathcal W_{S,d}$, every intermediate edge count between $J_{\mathrm{edges}}$ and $K=J_{\mathrm{edges}}+B_d$ is admissible, so
\[
 g_{n,K}-g_{n,J_{\mathrm{edges}}}
 =\sum_{i=0}^{B_d-1}A_e(n,J_{\mathrm{edges}}+i)+o(1).
\]
All $A_l$ and $A_e$ are uniformly bounded. In averaging the preceding display, take the auxiliary $L_d\sim\operatorname{Bin}(d,\theta)$ independently of $(J_{\mathrm{edges}},B_d)$. Applying~\eqref{eq:rsb-binomial-reindex} to each bounded shift in the two displays, and then using $\E{B_d-L_d}=\theta d/2-\theta d=-\theta d/2$, gives
\begin{equation}\label{eq:rsb-smoothed-cavity-increment}
 f_{n+1}-f_n
 =\sum_{j\in\mathcal M_S}b_S(j)
       \left[\E[L_d\sim\operatorname{Bin}(d,\theta)]{A_{L_d}(n,j)}
                    -\frac{\theta d}{2}A_e(n,j)\right]+o(1).
\end{equation}

\paragraph{Step 3: Evaluation via a common cavity law.}
For a fixed old graph $G$ and pinned spin configuration $\sigma_U$, write $\*\mu_e\defeq\mu^{\mathrm{can}}_{G,t}(\sigma_{v(e)}=\cdot\mid\sigma_U)\in\Delta_\Omega$ for the pinned single-site marginal at the endpoint $v(e)$ of an unmatched half-edge $e$, and let $\nu_0\in\mathcal P(\Delta_\Omega)$ be the empirical distribution of $\*\mu_e$ over the $S-2j$ unmatched half-edges $e$. Since conditional on $(G,\sigma_U)$ the planted configuration $\sigma$ has law $\mu^{\mathrm{can}}_{G,t}(\cdot\mid\sigma_U)$, the mean of $\nu_0$ is
\[
 \int\*\mu\,\nu_0(\dif\*\mu)
 =\E[\mathbb Q_{n,j}]{\*\chi\mid G,\sigma_U},
\]
which converges to $\*\pi$ in $L^1(\mathbb Q_{n,j})$ uniformly in $j\in\mathcal M_S$, by Jensen's inequality and the bound on $\*\chi$ in Step~2. For $k\in\{2,l\}$ uniformly sampled distinct unmatched half-edges $e_1,\ldots,e_k$ with endpoint vertices $u_1,\ldots,u_k$, the pinning estimate in Step~2 gives
\[
 \E{\DTV{\mu^{\mathrm{can}}_{G,t}(\sigma_{u_1}=\cdot,\ldots,\sigma_{u_k}=\cdot\mid\sigma_U)}
 {\bigotimes_{i=1}^k\*\mu_{e_i}}}=o(1).
\]
Moreover, the tuple $(\*\mu_{e_1},\ldots,\*\mu_{e_k})$ is within total variation $O(1/n)$ of $k$ independent draws from $\nu_0$. The conditional mean $x$ of $\Psi$ is therefore within $o(1)$ in $L^1$ of the corresponding multilinear functional of these marginal distributions. Uniform continuity of $x\log x$ on bounded intervals allows us to evaluate $A_{L_d}(n,j)$ and $A_e(n,j)$ using independent draws from $\nu_0$, at cost $o(1)$.

Let $H(\nu,s)$ be the resulting bounded continuous functional of the cavity message distribution and reveal density, and put $b(\nu)=\int\*\mu\,\nu(\dif\*\mu)$. The profile tail makes the $L^1$ convergence of $b(\nu_0)$ to $\*\pi$ uniform over pairing counts. Markov's inequality, boundedness, and compactness therefore give
\[
 \limsup_{n\to\infty}\E{H(\nu_0,s)}
 \le\max_{\nu\in\mathcal P_{\*\pi}}H(\nu,s).
\]

For i.i.d.\ $\*\mu_i\sim\nu\in\mathcal P_{\*\pi}$, put $T_a\defeq\prod_{i=1}^l(M\*\mu_i)_a$ and $V_l=\sum_{a\in\Omega}\pi_aT_a$. When the new vertex $n+1$ is unrevealed ($n+1\notin U'$, probability $1-s$), summing over its spin gives $x=V_l$; when $n+1$ is revealed with spin $a$ ($n+1\in U'$, probability $s\pi_a$), its vertex weight $\pi_a$ in $\mathcal Z_{G'}(\sigma'_{U'})$ cancels the factor $\pi_a$ in the denominator $\prod_{v\in U'}\pi_{\sigma'_v}$ of $F$, leaving $x=T_a$. Since $\E{V_l}=1$ gives $\E{V_l\log V_l}\ge0$ and independence with $\E{(M\*\mu)_a}=1$ and $u\log u\le u-1+(u-1)^2$ ($u\ge0$) gives
\[
 \E{T_a\log T_a}
 =l\E{(M\*\mu)_a\log(M\*\mu)_a}
 \le l\E{((M\*\mu)_a-1)^2},
\]
the contribution from adding vertex $n+1$, namely $(1-s)\E{V_l\log V_l}+s\sum_{a\in\Omega}\pi_a\E{T_a\log T_a}$, is at most
\[
 \E{V_l\log V_l}
 +sl\E[\nu]{\sum_{a\in\Omega}\pi_a((M\*\mu)_a-1)^2},
\]
while the edge term is $\E{E\log E}$ for $E=\sum_{a,b\in\Omega}(\*\mu_1)_aM_{ab}(\*\mu_2)_b$. Substituting these into~\eqref{eq:rsb-smoothed-cavity-increment} and telescoping over $n$ yields
\[
 \limsup_{n\to\infty}\frac{f_n}{n}
 \le\max_{\nu\in\mathcal P_{\*\pi}}
 \E[L_d\sim\operatorname{Bin}(d,\theta)]{
    \mathcal B_{L_d,t}(\nu)
    + sL_d\E[\nu]{\sum_{a\in\Omega}\pi_a((M\*\mu)_a-1)^2}}
.
\]

\paragraph{Step 4: Completing the pairing.}
For each $j\in\mathcal M_S$, \Cref{lem:rsb-planted-partial-profiles} gives the centers $p_j^*=(\*\pi,d\theta_jP,d(1-\theta_j)\*\pi)$ and $p_{\max,S}^*=(\*\pi,d\theta_{\lfloor S/2\rfloor}P,d(1-\theta_{\lfloor S/2\rfloor})\*\pi)$ for $\mathbb Q_{n,j}$ and $\mathbb Q_{n,\lfloor S/2\rfloor}$, respectively. Their $\ell_1$ distance is $2d(1-\theta)+O(n^{-1/2}\log n)$ uniformly in $j\in\mathcal M_S$. For the two laws, define the conditional mean of $F$ on the union of their supported profiles using the uniform conditional spin-assignment and pairing law and the common independent marks. Step~1 gives the required edit-Lipschitz bound for $F$ across this union, so the coupling in~\eqref{eq:rsb-profile-mean-comparison} compares these means even though the two laws have different numbers of edges. This comparison, combined with the uniform profile concentration of \Cref{lem:rsb-planted-partial-profiles} and the $O(1)$-edit-Lipschitz bound for $F$, gives
\[
 \left|f_n-\E[\mathbb Q_n]{\log\mathcal Z_G+\ell_s}\right|
 \le C(1-\theta)n+o(n).
\]
Conditional on $G$, $\E{\ell_u\mid G}$ is the expected relative entropy of the revealed marginal from $\*\pi^{\otimes U}$. Data processing and the chain rule give, for $0\le u\le v\le1$,
\[
 0\le\E{\ell_v}-\E{\ell_u}
 \le(v-u)n\log(1/\min_a\pi_a).
\]
Letting $n\to\infty$ and then $\theta\uparrow1$ proves the second inequality in~\eqref{eq:planted-free-energy-bound} for $0<s<1$: $L_d\sim\operatorname{Bin}(d,\theta)\to d$, and the functionals converge uniformly on $\mathcal P_{\*\pi}$. The displayed continuity bound, together with continuity of the right-hand side in $s$, extends the inequality to $s=0,1$.
For the first assertion of~\Cref{lem:revealed-entropy-bound} and the first inequality in~\eqref{eq:planted-free-energy-bound}, \Cref{lem:rsb-planted-partial-profiles} at maximum pairing gives $|\log a_n|=O(\log(n+1))$. Let $\mathbb Q_n^G$ be the graph marginal of $\mathbb Q_n$. By the definition of $\mathbb Q_n$,
\[
 \E[\mathbb Q_n]{\log\mathcal Z_G}
 =\DKL(\mathbb Q_n^G\Vert\mathrm{CM})+\log a_n\ge-o(n),
\]
completing the proof.}
\end{proof}

\subsection{{Typical-Graph Truncation and Proof of \Cref{lem:graph-truncation}}}\label{subsec:app-coloring-endpoint}\label{app:coloring-endpoint}
\begin{proof}[Proof of~\Cref{lem:graph-truncation}]
  \leavevmode\par
\begingroup
\paragraph{Step 1: Construction of $E_n$ and real-parameter truncation errors at $x>0$.}
We first work at a positive parameter $x$ satisfying
\Cref{cond:unique-variational-maximizer} and \Cref{cond:strict-KS}.
Write $\mathcal Z=\mathcal Z_G(x)$, $a_n=a_n(x)$, and
$\mathcal Z_B^{(2)}=\mathcal Z_{G,B}^{(2)}(x,x)$. We construct $E_n$ such that, for some constant $c>0$,
\begin{equation}\label{eq:three-truncation-errors}
 \frac{\E{\mathcal Z\one{E_n^c}}}{a_n}
 +\frac{\E{\mathcal Z_B^{(2)}\one{E_n^c}}}{a_n^2}
 +\frac{\E{(\mathcal Z^2-\mathcal Z_B^{(2)})\one{E_n}}}{a_n^2}
 \le\e^{-cn}.
\end{equation}
Choose $\eta>0$ such that $\{\*\rho\in\Delta_{\Omega\times\Omega}:\|\*\rho-\*\pi(x)\otimes\*\pi(x)\|_\infty\le\eta\}\subseteq B$, so $\{\*\rho\notin B\}\subseteq\{\|\*\rho-\*\pi(x)\otimes\*\pi(x)\|_\infty>\eta\}$. By \Cref{lem:exponential-overlap}, after choosing a constant $\gamma_0>0$,
\[
 \eta_n\defeq\E[\mathbb Q_n]{\mu_G^{\otimes2}(\*\rho\notin B)}
 \le\e^{-2\gamma_0 n}.
\]
Put $C_n=\E{\mathcal Z_B^{(2)}}/a_n^2$. By
\Cref{lem:local-profile-moments}, $C_n$ converges to a positive finite
constant. Let $\mathbb Q_B$ have density
$\mathcal Z_B^{(2)}/\E{\mathcal Z_B^{(2)}}$ relative to $\mathrm{CM}$.
Since $\mathcal Z>0$ at $x>0$, the random variable $V=a_n/\mathcal Z$
under $\mathbb Q_B$ satisfies
\[
 \E[\mathbb Q_B]{V}=\frac{1-\eta_n}{C_n},\qquad
 \E[\mathbb Q_B]{V^2}
 =\frac{\E{\mu_G^{\otimes2}(\*\rho\in B)}}{C_n}
 \le\frac1{C_n}.
\]
The Paley--Zygmund inequality therefore gives fixed constants $M,c_x>0$
such that, for all large $n$,
\begin{equation}\label{eq:central-pair-cutoff-mass}
 \Pr[\mathbb Q_B]{\log(\mathcal Z/a_n)\le\log M}\ge c_x>0.
\end{equation}

\paragraph{Step 2: A partition-function cutoff.}
Here $n$ is admissible, so $dn$ is even and the maximum pairing is a full pairing.
Extend $\mathbb Q_B$ by sampling a uniform labeled pairing $\mathcal M$ inducing $G$ and, independently given $G$, sampling the spin pair
proportionally to its paired weight restricted to
$\{\*\rho(\sigma,\sigma')\in B\}$. Its paired profile is
$p=(\*\rho,Q)$, with center
$p^*=(\*\pi(x)\otimes\*\pi(x),P(x)\otimes P(x))$. The profile count in
\Cref{prop:profile-moments}(1), applied to the paired model, and
\Cref{lem:local-profile-moments}(1)--(2) give
\[
 \Pr[\mathbb Q_B]{\|p-p^*\|_1>v}
 \le C_x'n^K\exp(-\gamma_xnv^2).
\]
Conditional on the profile, the spin-pair assignment and compatible
pairing are uniform. Apply the profile concentration bound (\Cref{lem:profile-concentration}) on the alphabet
$\Omega\times\Omega$, with full-pairing profile $(\*\rho,dQ,\*0)$, to the observable
$\log(\mathcal Z/a_n)$, which satisfies the global edit-Lipschitz bound~\eqref{eq:edit-lipschitz-def} with $L=O_{d,x}(1)$ across all $d$-regular multigraphs $G$ (independently of the paired spins) by Step~1 of the proof of \Cref{lem:revealed-entropy-bound}.
For every fixed $v>0$, its deviation by
$vn$ from its $\mathbb Q_B$-mean thus has probability at most
$\e^{-c'_xv^2n}$ for all large $n$ and some constant $c'_x>0$. Together with
\eqref{eq:central-pair-cutoff-mass}, this shows that
\[
 \limsup_{n\to\infty}\frac1n\E[\mathbb Q_B]{\log(\mathcal Z/a_n)}\le0.
\]
Applying the same concentration bound again with $c''_x\defeq c'_x/4>0$, for every fixed
$\varepsilon>0$,
\[
 \Pr[\mathbb Q_B]{\log(\mathcal Z/a_n)>\varepsilon n}
 \le\exp(-c''_x\varepsilon^2n)
\]
for large $n$. Choose $\varepsilon=\gamma_0/2$ and put
$E_n=\{\mathcal Z\le a_n\e^{\varepsilon n}\}$. The displayed bound controls
the second term of \eqref{eq:three-truncation-errors}. Splitting the first term
according to whether the replica overlap lies in
$B$, and using $\mathcal Z/a_n>\e^{\varepsilon n}$ on $E_n^c$, gives
\[
 \Pr[\mathbb Q_n]{E_n^c}\le\eta_n+C_n\e^{-\varepsilon n}.
\]
The third term is at most $\e^{\varepsilon n}\eta_n\le\e^{-(2\gamma_0-\varepsilon)n}$.
Since $C_n$ is bounded, the sum in~\eqref{eq:three-truncation-errors} is at most
$$
  C_x\exp\{-n\min(\varepsilon,2\gamma_0-\varepsilon,c''_x\varepsilon^2)\}\qquad \text{for some $C_x<\infty$.}
$$
Thus, after absorbing this prefactor into the exponential, the bound holds for all sufficiently large $n$ with
\[
 c(x)\defeq\tfrac12\min\bigl\{\varepsilon,\,2\gamma_0-\varepsilon,\,c''_x\varepsilon^2\bigr\}>0.
\]

\paragraph{Step 3: Quantitative uniformity near the coloring endpoint.}
When $t=0$ (proper $q$-colorings, $q\ge3$), the strict endpoint assumptions imply $\mathcal B_{d,0}\le0$; the Potts contraction argument in \Cref{lem:model-variational-conditions} therefore gives unique maximization at $x>0$, while $|\kappa_x|<|\kappa_0|$ preserves strict Kesten--Stigum stability. We construct $E_n$ by applying Steps~1 and~2 at a sufficiently small positive Potts parameter $x\in(0,x_*]$, where $[0,x_*]\subset D_0$ for the complex neighborhood $D_0$ of $t=0$ supplied by \Cref{lem:local-profile-moments}. For $q\ge3$, the proof of \Cref{lem:rsb-planted-partial-profiles} gives $|\log a_{n,j}(x)|\le C\log(n+1)$ and the Gaussian profile tail~\eqref{eq:bounded-degree-profile-tail} under $\mathbb Q_n$ with constants $C,c,K>0$ independent of $x\in(0,x_*]$. Likewise, under $\mathbb Q_B$, \Cref{lem:local-profile-moments}(1)--(2) and \Cref{prop:profile-moments}(1) give $0<c_*\le a_n(x),C_n(x)\le C_*<\infty$ and a single quadratic-gap constant on $B$ across $x\in[0,x_*]$ for all $n\ge n_1$, so the paired profile tail in Step~2 holds with constants $C'_x=C'$, $K$, and $\gamma_x=\gamma>0$ independent of $x\in(0,x_*]$, and Paley--Zygmund in Step~1 gives~\eqref{eq:central-pair-cutoff-mass} with $M=4C_*$ and $c_x=1/(16C_*)$ whenever $\eta_n\le 1/2$.

The only $x$-dependent degradation in the concentration bounds therefore comes from the edit-Lipschitz constant of $\ell_s$ and $\log(\mathcal Z/a_n)$ in Step~1 of the proof of \Cref{lem:revealed-entropy-bound}, which is $O_{d,q}(L_x)$ with $L_x\defeq 1+\log(1/x)$. Moreover, the functional inside the maximum in~\eqref{eq:planted-free-energy-bound} converges uniformly on $\mathcal P_{\*U_q}$ as $x\downarrow0$ to its $\beta=0$ limit. By compactness and unique maximization of $\mathcal B_{d,0}$ at $\delta_{\*U_q}$, the maximum of this limiting functional over $\mathcal P_{\*U_q}$ is $o(s)$ as $s\downarrow0$. Fix $\eta>0$ so that $\{\*\rho:\|\*\rho-\*U_q\otimes\*U_q\|_\infty\le\eta\}\subseteq B$, then fix $s\in(0,1]$ and $b=s\eta^2/12$ in the proof of \Cref{lem:exponential-overlap} so that the $\beta=0$ limit of the right-hand side of~\eqref{eq:planted-free-energy-bound} is strictly less than $b/16$; by uniform convergence in $x$, the right-hand side of~\eqref{eq:planted-free-energy-bound} is at most $b/12$ for all $x\in(0,x_*]$ after shrinking $x_*>0$ if needed. Applying \Cref{lem:profile-concentration} with $L=O_{d,q}(L_x)$ and these $x$-uniform profile-tail constants to $\ell_s$ under $\mathbb Q_n$ and to $\log(\mathcal Z/a_n)$ under $\mathbb Q_B$ gives constants that can be chosen as $c_0\asymp L_x^{-2}$ in the proof of \Cref{lem:exponential-overlap} (hence $\gamma_0\asymp L_x^{-2}$ in Step~1) and $c''_x\asymp L_x^{-2}$ in Step~2. With $\varepsilon=\gamma_0/2\asymp L_x^{-2}$, the exponent $c(x)$ chosen in Step~2 satisfies
\[
 c(x)=\tfrac12\min\bigl\{\varepsilon,\,2\gamma_0-\varepsilon,\,c''_x\varepsilon^2\bigr\}
 \ge c_2\bigl(1+\log(1/x)\bigr)^{-6},
\]
where $c_2>0$ is independent of $x$. As $x\downarrow0$, we have $c_2(1+\log(1/x))^{-6}\gg x$, which allows us to choose a fixed sufficiently small $x\in(0,x_*]$ in Step~4 below so that the truncation decay at $x$ dominates the $O(xn)$ weight-perturbation cost on $|z|,|w|\le x$.
\par\endgroup

  \paragraph{Step 4: Analytic continuation to a complex disk $D_0$.}
  Take $x=t$ when $t>0$, and a fixed $x\in(0,x_*]$ when $t=0$. For each of the three restricted configuration sums in~\eqref{eq:three-truncation-errors}, comparing each configuration weight $\prod_{v\in V(G)}\pi_{\sigma_v}(z)\prod_{uv\in E(G)}M_{\sigma_u,\sigma_v}(z)$ to its positive weight at $x$ bounds the modulus at $(z,w)$ by its value at $(x,x)$ times $\exp\{O_x(n(|z-x|+|w-x|))\}$ when $t=x>0$, and times $C_0\e^{C_0xn}$ when $t=0$ and $|z|,|w|\le x$ {(since for Potts, $\pi_a(z)\equiv 1/q$, $|M_{ab}(z)|=q/|q-1+z|\le M_{ab}(x)\e^{O(x)}$ for $a\ne b$, and $|M_{aa}(z)|=q|z|/|q-1+z|\le qx/|q-1+z|\le M_{aa}(x)\e^{O(x)}$ for $|z|\le x$, with $C_0>0$ independent of $x\in(0,x_*]$)}. Since $a_n$ is bounded away from zero on a smaller disk by \Cref{lem:local-profile-moments}(1),~\eqref{eq:three-truncation-errors} implies~\eqref{eq:holomorphic-truncation} {and the stated bound on $\E{(\mathcal Z_G(t)^2-\mathcal Z_{G,B}^{(2)}(t,t))\one{E_n}}$} after shrinking $D_0$ (fixing a single $x\in(0,x_*]$ satisfying {$c_2(1+\log(1/x))^{-6}>4C_0x$} when $t=0$, and taking $n\ge n_0(x)$).
\end{proof}

% END INCLUDED FILE: appendices/c-sec5-deduplicated.tex

\begin{thebibliography}{KMRT{\etalchar{+}}07}

\bibitem[AASV21]{alimohammadi2021fraction}
Yeganeh Alimohammadi, Nima Anari, Kirankumar Shiragur, and Thuy-Duong Vuong.
\newblock Fractionally log-concave and sector-stable polynomials: counting planar matchings and more.
\newblock In {\em STOC}, pages 433--446. ACM, 2021.

\bibitem[AC08]{achlioptas2008algorithmic}
Dimitris Achlioptas and Amin {Coja-Oghlan}.
\newblock Algorithmic barriers from phase transitions.
\newblock In {\em FOCS}, pages 793--802, 2008.

\bibitem[Ach26]{achlioptas2026sampling}
Dimitris Achlioptas.
\newblock A sampling {L}ov{\'a}sz local lemma.
\newblock {\em arXiv preprint arXiv:2609.18712}, 2026.

\bibitem[ACOG22]{ayre2022lower}
Peter Ayre, Amin Coja-Oghlan, and Catherine Greenhill.
\newblock Lower bounds on the chromatic number of random graphs.
\newblock {\em Combinatorica}, 42(5):617--658, 2022.

\bibitem[AIS19]{AIS19}
Dimitris Achlioptas, Fotis Iliopoulos, and Alistair Sinclair.
\newblock Beyond the {L}ovász local lemma: Point to set correlations and their algorithmic applications.
\newblock In {\em FOCS}, pages 725--744, 2019.

\bibitem[AJK{\etalchar{+}}22]{anari2022entropic}
Nima Anari, Vishesh Jain, Frederic Koehler, Huy~Tuan Pham, and Thuy-Duong Vuong.
\newblock Entropic independence: Optimal mixing of down-up random walks.
\newblock In {\em STOC}, pages 1418--1430. ACM, 2022.

\bibitem[AKV24]{anari2024trickle}
Nima Anari, Frederic Koehler, and Thuy-Duong Vuong.
\newblock Trickle-down in localization schemes and applications.
\newblock In {\em STOC}, pages 1094--1105. ACM, 2024.

\bibitem[AM97]{achlioptas1997analysis}
Dimitris Achlioptas and Michael Molloy.
\newblock The analysis of a list-coloring algorithm on a random graph.
\newblock In {\em FOCS}, pages 204--212, 1997.

\bibitem[AM04]{achlioptas2004chromatic}
Dimitris Achlioptas and Cristopher Moore.
\newblock The chromatic number of random regular graphs.
\newblock In {\em RANDOM}, pages 219--228. Springer, 2004.

\bibitem[AN05]{achlioptas2005two}
Dimitris Achlioptas and Assaf Naor.
\newblock The two possible values of the chromatic number of a random graph.
\newblock {\em Ann. Math.}, 162(3):1335--1351, 2005.
\newblock (Conference version in \emph{STOC}'04).

\bibitem[ANP05]{achlioptas2005rigorous}
Dimitris Achlioptas, Assaf Naor, and Yuval Peres.
\newblock Rigorous location of phase transitions in hard optimization problems.
\newblock {\em Nature}, 435:759--764, 2005.

\bibitem[Asa70]{asano1970lee}
Taro Asano.
\newblock {Lee-Yang} theorem and the {Griffiths} inequality for the anisotropic {Heisenberg} ferromagnet.
\newblock {\em Phys. Rev. Lett.}, 24(25):1409--1411, 1970.

\bibitem[Bar16]{barvinok2016combinatorics}
Alexander Barvinok.
\newblock {\em Combinatorics and complexity of partition functions}, volume~30 of {\em Algorithms and Combinatorics}.
\newblock Springer, 2016.

\bibitem[BB19]{bauerschmidt2019simple}
Roland Bauerschmidt and Thierry Bodineau.
\newblock A very simple proof of the {LSI} for high temperature spin systems.
\newblock {\em J. Funct. Anal.}, 276(8):2582--2588, 2019.

\bibitem[BBR26]{bencs2026deterministic}
Ferenc Bencs, Khallil Berrekkal, and Guus Regts.
\newblock Deterministic approximate counting of colorings with fewer than {$2\Delta$} colors via absence of zeros.
\newblock {\em TheoretiCS}, 5:1--41, 2026.
\newblock Article 1.

\bibitem[BCG22]{bordenave2022convergence}
Charles Bordenave, Djalil Chafa{\"\i}, and David {Garc{\'\i}a-Zelada}.
\newblock Convergence of the spectral radius of a random matrix through its characteristic polynomial.
\newblock {\em Probab. Theory Related Fields}, 182(3--4):1163--1181, 2022.

\bibitem[BCH{\etalchar{+}}16]{bapst2014condensation}
Victor Bapst, Amin {Coja-Oghlan}, Samuel Hetterich, Felicia Ra{\ss}mann, and Dan Vilenchik.
\newblock The condensation phase transition in random graph coloring.
\newblock {\em Commun. Math. Phys.}, 341(2):543--606, 2016.
\newblock (Conference version in \emph{RANDOM}'14).

\bibitem[Bet35]{bethe1935statistical}
Hans~A. Bethe.
\newblock Statistical theory of superlattices.
\newblock {\em Proc. R. Soc. Lond. A}, 150(871):552--575, 1935.

\bibitem[BGG{\etalchar{+}}20]{blanca2020sampling}
Antonio Blanca, Andreas Galanis, Leslie~Ann Goldberg, Daniel {\v{S}tefankovi\v{c}}, Eric Vigoda, and Kuan Yang.
\newblock Sampling in uniqueness from the {Potts} and random-cluster models on random regular graphs.
\newblock {\em SIAM J. Discrete Math.}, 34(1):742--793, 2020.
\newblock (Conference version in \emph{RANDOM}'18).

\bibitem[BHL{\etalchar{+}}26a]{bencs2026ising}
Ferenc Bencs, Brice Huang, Daniel~Z. Lee, Kuikui Liu, and Guus Regts.
\newblock On zeros and algorithms for disordered systems {II}: The antiferromagnetic {Ising} model up to {Kesten--Stigum}.
\newblock {\em arXiv preprint arXiv:2610.40598}, 2026.

\bibitem[BHL{\etalchar{+}}26b]{bencs2026zeros}
Ferenc Bencs, Brice Huang, Daniel~Z. Lee, Kuikui Liu, and Guus Regts.
\newblock On zeros and algorithms for disordered systems: Mean-field spin glasses.
\newblock In {\em STOC}, pages 508--517. ACM, 2026.

\bibitem[Bil99]{billingsley2013convergence}
Patrick Billingsley.
\newblock {\em Convergence of Probability Measures}.
\newblock John Wiley \& Sons, 2nd edition, 1999.

\bibitem[BKZZ13]{barbier2013hardcore}
Jean Barbier, Florent Krzakala, Lenka Zdeborov{\'a}, and Pan Zhang.
\newblock The hard-core model on random graphs revisited.
\newblock {\em J. Phys. Conf. Ser.}, 473(1):012021, 2013.

\bibitem[BLO26]{boban2026rank}
Mathews Boban, Anqi Li, and Shayan {Oveis Gharan}.
\newblock Rank-1-perturbed trickledown theorems: Mixing time of {Glauber} dynamics for the {Sherrington--Kirkpatrick} model up to $\beta\leq\frac{1}{2}+\varepsilon$.
\newblock {\em arXiv preprint arXiv:2609.13138}, 2026.

\bibitem[Bol80]{bollobas1980probabilistic}
B{\'e}la Bollob{\'a}s.
\newblock A probabilistic proof of an asymptotic formula for the number of labelled regular graphs.
\newblock {\em European J. Combin.}, 1(4):311--316, 1980.

\bibitem[BS17]{barvinok2017computing}
Alexander Barvinok and Pablo Sober{\'{o}}n.
\newblock Computing the partition function for graph homomorphisms.
\newblock {\em Combinatorica}, 37(4):633--650, 2017.

\bibitem[BST16]{Bhatnagar2016decay}
Nayantara Bhatnagar, Allan Sly, and Prasad Tetali.
\newblock Decay of correlations for the hardcore model on the $d$-regular random graph.
\newblock {\em Electron. J. Probab.}, 21:Paper No. 9, 42 pp., 2016.

\bibitem[Bur60]{burley1960lattice}
David~M. Burley.
\newblock A lattice model of a classical hard sphere gas.
\newblock {\em Proc. Phys. Soc.}, 75(2):262--274, 1960.

\bibitem[CCC{\etalchar{+}}25]{chen2025rapidrandom}
Xiaoyu Chen, Zejia Chen, Zongchen Chen, Yitong Yin, and Xinyuan Zhang.
\newblock Rapid mixing on random regular graphs beyond uniqueness.
\newblock In {\em FOCS}, pages 2170--2193. IEEE, 2025.

\bibitem[CDM{\etalchar{+}}19]{chen2019improved}
Sitan Chen, Michelle Delcourt, Ankur Moitra, Guillem Perarnau, and Luke Postle.
\newblock Improved bounds for randomly sampling colorings via linear programming.
\newblock In {\em SODA}, pages 2216--2234. SIAM, 2019.

\bibitem[CE25]{CE22}
Yuansi Chen and Ronen Eldan.
\newblock Localization schemes: A framework for proving mixing bounds for {Markov} chains.
\newblock {\em Duke Math. J.}, 174(8):1431--1510, 2025.
\newblock (Conference version in \emph{FOCS}'22).

\bibitem[CEH16]{coja2016chromatic}
Amin {Coja{-}Oghlan}, Charilaos Efthymiou, and Samuel Hetterich.
\newblock On the chromatic number of random regular graphs.
\newblock {\em J. Combin. Theory Ser. B}, 116:367--439, 2016.

\bibitem[CEJ{\etalchar{+}}18]{coja2018charting}
Amin {Coja-Oghlan}, Charilaos Efthymiou, Nor Jaafari, Mihyun Kang, and Tobias Kapetanopoulos.
\newblock Charting the replica symmetric phase.
\newblock {\em Commun. Math. Phys.}, 359(2):603--698, 2018.
\newblock (Conference version in \emph{RANDOM}'17).

\bibitem[Cel24]{celetano2024sudakov}
Michael Celentano.
\newblock {Sudakov–Fernique post-AMP, and a new proof of the local convexity of the TAP free energy}.
\newblock {\em Ann. Probab.}, 52(3):923--954, 2024.

\bibitem[CGG{\etalchar{+}}16]{cai2016hardness}
Jin-Yi Cai, Andreas Galanis, Leslie~Ann Goldberg, Heng Guo, Mark Jerrum, Daniel {\v{S}}tefankovi\v{c}, and Eric Vigoda.
\newblock {\#BIS}-hardness for 2-spin systems on bipartite bounded degree graphs in the tree non-uniqueness region.
\newblock {\em J. Comput. Syst. Sci.}, 82(5):690--711, 2016.
\newblock (Conference version in \emph{RANDOM}'14).

\bibitem[CG{\v{S}}V22]{chen2022sampling}
Zongchen Chen, Andreas Galanis, Daniel {\v{S}}tefankovi{\v{c}}, and Eric Vigoda.
\newblock Sampling colorings and independent sets of random regular bipartite graphs in the non-uniqueness region.
\newblock In {\em SODA}, pages 2198--2207. SIAM, 2022.

\bibitem[CHKL{\etalchar{+}}20]{coja2020inference}
Amin {Coja-Oghlan}, Max Hahn-Klimroth, Philipp Loick, No{\"e}la M{\"u}ller, Konstantinos Panagiotou, and Matija Pasch.
\newblock Inference and mutual information on random factor graphs.
\newblock {\em arXiv preprint arXiv:2007.07494}, 2020.
\newblock (Conference version in \emph{STACS}'21).

\bibitem[CKM20]{coja2020replica}
Amin {Coja-Oghlan}, Tobias Kapetanopoulos, and No{\"e}la M{\"u}ller.
\newblock The replica symmetric phase of random constraint satisfaction problems.
\newblock {\em Combin. Probab. Comput.}, 29(3):346--422, 2020.

\bibitem[CKPZ18]{coja2017information}
Amin {Coja-Oghlan}, Florent Krzakala, Will Perkins, and Lenka Zdeborov{\'a}.
\newblock Information-theoretic thresholds from the cavity method.
\newblock {\em Adv. Math.}, 333:694--795, 2018.
\newblock (Conference version in \emph{STOC}'17).

\bibitem[CLMS22]{coja2022ising}
Amin {Coja-Oghlan}, Philipp Loick, Bal{\'a}zs~F. Mezei, and Gregory~B. Sorkin.
\newblock The {Ising} antiferromagnet and {MAX CUT} on random regular graphs.
\newblock {\em SIAM J. Discrete Math.}, 36(2):1306--1342, 2022.

\bibitem[CLV24]{chen2022spectral}
Zongchen Chen, Kuikui Liu, and Eric Vigoda.
\newblock Spectral independence via stability and applications to {Holant}-type problems.
\newblock {\em TheoretiCS}, 3:Article 16, 1--49, 2024.
\newblock (Conference version in \emph{FOCS}'21).

\bibitem[CMR22]{coja2022belief}
Amin {Coja-Oghlan}, No{\"e}la M{\"u}ller, and Jean~B. Ravelomanana.
\newblock Belief propagation on the random $k$-{SAT} model.
\newblock {\em Ann. Appl. Probab.}, 32(5):3718--3796, 2022.

\bibitem[CP16]{coja2014asymptotic}
Amin {Coja-Oghlan} and Konstantinos Panagiotou.
\newblock The asymptotic $k$-{SAT} threshold.
\newblock {\em Adv. Math.}, 288:985--1068, 2016.
\newblock (Conference versions in \emph{STOC}'13 and \emph{STOC}'14, under different titles).

\bibitem[CP18]{coja2016belief}
Amin {Coja-Oghlan} and Will Perkins.
\newblock Belief propagation on replica symmetric random factor graph models.
\newblock {\em Ann. Inst. Henri Poincar\'e Comb. Phys. Interact.}, 5(2):211--249, 2018.
\newblock (Conference version in \emph{RANDOM}'16).

\bibitem[CP19]{coja2019spin}
Amin {Coja-Oghlan} and Will Perkins.
\newblock Spin systems on {Bethe} lattices.
\newblock {\em Commun. Math. Phys.}, 372(2):441--523, 2019.

\bibitem[CS92]{crisanti1992spherical}
Andrea Crisanti and Hans-J{\"u}rgen Sommers.
\newblock The spherical $p$-spin interaction spin glass model: the statics.
\newblock {\em Z. Phys. B: Condens. Matter}, 87(3):341--354, 1992.

\bibitem[CSS26]{chen2026connective}
Yuan Chen, Shuai Shao, and Ke~Shi.
\newblock Zero-freeness of the hard-core model with bounded connective constant.
\newblock {\em arXiv preprint arXiv:2604.02746}, 2026.

\bibitem[CV16]{coja2013chasing}
Amin {Coja-Oghlan} and Dan Vilenchik.
\newblock The chromatic number of random graphs for most average degrees.
\newblock {\em Int. Math. Res. Not.}, 2016(19):5801--5859, 2016.
\newblock (Conference version, ``Chasing the $k$-Colorability Threshold,'' in \emph{FOCS}'13).

\bibitem[CV25]{carlson2025flip}
Charlie Carlson and Eric Vigoda.
\newblock Flip dynamics for sampling colorings: Improving $(11/6-\varepsilon)$ using a simple metric.
\newblock In {\em SODA}, pages 2194--2212. SIAM, 2025.

\bibitem[CW26]{chen2026high}
Zongchen Chen and Sihan Wang.
\newblock High-dimensional ultra-log-concave distributions.
\newblock {\em arXiv preprint arXiv:2609.23994}, 2026.

\bibitem[DFJ02]{dyer2002counting}
Martin Dyer, Alan Frieze, and Mark Jerrum.
\newblock On counting independent sets in sparse graphs.
\newblock {\em SIAM J. Comput.}, 31(5):1527--1541, 2002.
\newblock (Conference version in \emph{FOCS}'99).

\bibitem[DHMR26]{du2026sampling}
Hang Du, Brice Huang, Sidhanth Mohanty, and Amit Rajaraman.
\newblock Sampling from the {Sherrington-Kirkpatrick} model up to criticality.
\newblock {\em arXiv preprint arXiv:2610.17717}, 2026.

\bibitem[dJKT78]{deAlmeida1978infinite}
Jairo R.~L. {de Almeida}, Raymund~C. Jones, John~M. Kosterlitz, and David~J. Thouless.
\newblock The infinite-ranged spin glass with $m$-component spins.
\newblock {\em J. Phys. C: Solid State Phys.}, 11(21):L871--L875, 1978.

\bibitem[DKMZ11]{decelle2011asymptotic}
Aur{\'e}lien Decelle, Florent Krzakala, Cristopher Moore, and Lenka Zdeborov{\'a}.
\newblock Asymptotic analysis of the stochastic block model for modular networks and its algorithmic applications.
\newblock {\em Phys. Rev. E}, 84(6):066106, 2011.

\bibitem[DLSS26a]{davies2026potential}
Ewan Davies, Holden Lee, Juspreet~Singh Sandhu, and Jonathan Shi.
\newblock {Potential Hessian Ascent} {III}: Sampling the {Sherrington--Kirkpatrick} model at $\beta<1/2$.
\newblock {\em arXiv preprint arXiv:2605.03718}, 2026.

\bibitem[DLSS26b]{davies2026weak}
Ewan Davies, Holden Lee, Juspreet~Singh Sandhu, and Jonathan Shi.
\newblock Weak {Poincar\'e} inequalities via approximate stochastic localization: Application to sampling the {Sherrington--Kirkpatrick} model.
\newblock {\em arXiv preprint arXiv:2607.08160}, 2026.

\bibitem[DM10]{dembo2010gibbs}
Amir Dembo and Andrea Montanari.
\newblock Gibbs measures and phase transitions on sparse random graphs.
\newblock {\em Braz. J. Probab. Stat.}, 24(2):137--211, 2010.

\bibitem[DMS13]{dembo2013factor}
Amir Dembo, Andrea Montanari, and Nike Sun.
\newblock Factor models on locally tree-like graphs.
\newblock {\em Ann. Probab.}, 41(6):4162--4213, 2013.

\bibitem[DMSS14]{dembo2014replica}
Amir Dembo, Andrea Montanari, Allan Sly, and Nike Sun.
\newblock The replica symmetric solution for {Potts} models on $d$-regular graphs.
\newblock {\em Commun. Math. Phys.}, 327(2):551--575, 2014.

\bibitem[Dom58]{domb1958some}
Cyril Domb.
\newblock Some theoretical aspects of melting.
\newblock {\em Nuovo Cim.}, 9(Suppl. 1):9--26, 1958.

\bibitem[DSS22]{ding2022satisfiability}
Jian Ding, Allan Sly, and Nike Sun.
\newblock Proof of the satisfiability conjecture for large $k$.
\newblock {\em Ann. Math.}, 196(1):1--388, 2022.
\newblock (Conference version in \emph{STOC}'15).

\bibitem[EF23]{efthymiou2023mixing}
Charilaos Efthymiou and Weiming Feng.
\newblock On the mixing time of {Glauber} dynamics for the hard-core and related models on {$G(n,d/n)$}.
\newblock In {\em ICALP}, pages 54:1--54:17. LZI, 2023.

\bibitem[Eft22]{efthymiou2022sampling}
Charilaos Efthymiou.
\newblock On sampling symmetric {Gibbs} distributions on sparse random graphs and hypergraphs.
\newblock In {\em ICALP}, volume 229, pages 57:1--57:16. LZI, 2022.

\bibitem[Ehr33]{ehrenfest1933phasenumwandlungen}
Paul Ehrenfest.
\newblock Phasenumwandlungen im {\"u}blichen und erweiterten {Sinn}, classifiziert nach den entsprechenden {Singularit{\"a}ten} des thermodynamischen {Potentiales}.
\newblock {\em Proc. K. Ned. Akad. Wet.}, 36(2):153--157, 1933.

\bibitem[EKZ22]{eldan2022spectral}
Ronen Eldan, Frederic Koehler, and Ofer Zeitouni.
\newblock A spectral condition for spectral gap: fast mixing in high-temperature {Ising} models.
\newblock {\em Probab. Theory Related Fields}, 182(3--4):1035--1051, 2022.

\bibitem[EMS25]{alaoui2025sampling}
Ahmed {El Alaoui}, Andrea Montanari, and Mark Sellke.
\newblock Sampling from mean-field {Gibbs} measures via diffusion processes.
\newblock {\em Probab. Math. Phys.}, 6(3):961--1022, 07 2025.
\newblock (Preliminary conference version for $\beta<1/2$ appeared as ``Sampling from the {Sherrington-Kirkpatrick} {Gibbs} measure via algorithmic stochastic localization'' in \emph{FOCS}'22).

\bibitem[Fis65]{fisher1965nature}
Michael~E. Fisher.
\newblock The nature of critical points.
\newblock In {\em Lectures in Theoretical Physics}, volume VII C, pages 1--159. University of Colorado Press, Boulder, CO, 1965.

\bibitem[F{\L}92]{frieze1992independence}
Alan~M. Frieze and Tomasz {\L}uczak.
\newblock On the independence and chromatic numbers of random regular graphs.
\newblock {\em J. Combin. Theory Ser. B}, 54(1):123--132, 1992.

\bibitem[FL21]{fabian2021ising}
Christian Fabian and Philipp Loick.
\newblock The {Ising} antiferromagnet in the replica symmetric phase.
\newblock In {\em EUROCOMB}, pages 299--306. Birkh{\"a}user, 2021.

\bibitem[Fri90]{frieze1990independence}
Alan~M. Frieze.
\newblock On the independence number of random graphs.
\newblock {\em Discrete Math.}, 81(2):171--175, 1990.

\bibitem[Gam23]{gamarnik2023correlation}
David Gamarnik.
\newblock Correlation decay and the absence of zeros property of partition functions.
\newblock {\em Random Struct. Algorithms}, 62(1):155--180, 2023.

\bibitem[Gar85]{gardner1985spin}
Elizabeth Gardner.
\newblock Spin glasses with $p$-spin interactions.
\newblock {\em Nucl. Phys. B}, 257:747--765, 1985.

\bibitem[GGS21]{galanis2021fast}
Andreas Galanis, Leslie~Ann Goldberg, and James Stewart.
\newblock Fast algorithms for general spin systems on bipartite expanders.
\newblock {\em ACM Trans. Comput. Theory}, 13(4):25:1--25:18, 2021.
\newblock (Conference version in \emph{MFCS}'20).

\bibitem[GH25]{gerencser2025boosted}
Bal{\'a}zs Gerencs{\'e}r and Viktor Harangi.
\newblock Boosted second moment method in random regular graphs.
\newblock {\em arXiv preprint arXiv:2510.12600}, 2025.
\newblock Version 3, revised March 12, 2026.

\bibitem[GJM{\etalchar{+}}26]{gobel2026simple}
Andreas G{\"o}bel, Matthew Jenssen, Marcus Michelen, Marcus Pappik, Will Perkins, and Leon Schiller.
\newblock A simple proof of rapid mixing on random regular graphs beyond uniqueness.
\newblock {\em arXiv preprint arXiv:2606.27545}, 2026.

\bibitem[GKS85]{gross1985mean}
David~J. Gross, Ido Kanter, and Haim Sompolinsky.
\newblock Mean-field theory of the {Potts} glass.
\newblock {\em Phys. Rev. Lett.}, 55(3):304--307, 1985.

\bibitem[GM84]{gross1984simplest}
David~J. Gross and Marc M{\'e}zard.
\newblock The simplest spin glass.
\newblock {\em Nucl. Phys. B}, 240(4):431--452, 1984.

\bibitem[GPY25]{gheissari2025fixed}
Reza Gheissari, Will Perkins, and Corrine Yap.
\newblock Fixed-magnetization {Ising} on random graphs up to reconstruction.
\newblock {\em arXiv preprint arXiv:2511.16620}, 2025.

\bibitem[GS17]{gamarnik2017limits}
David Gamarnik and Madhu Sudan.
\newblock Limits of local algorithms over sparse random graphs.
\newblock {\em Ann. Probab.}, 45(4):2353--2376, 2017.
\newblock (Conference version in \emph{ITCS}'14).

\bibitem[G{\v{S}}V15]{galanis2015inapproximability}
Andreas Galanis, Daniel {\v{S}}tefankovi\v{c}, and Eric Vigoda.
\newblock Inapproximability for antiferromagnetic spin systems in the tree nonuniqueness region.
\newblock {\em J. ACM}, 62(6):50:1--50:60, 2015.
\newblock (Conference version in \emph{STOC}'14).

\bibitem[GT81]{gabay1981coexistence}
Marc Gabay and G{\'e}rard Toulouse.
\newblock Coexistence of spin-glass and ferromagnetic orderings.
\newblock {\em Phys. Rev. Lett.}, 47(3):201--204, Jul 1981.

\bibitem[HL72]{heilmann1972theory}
Ole~J. Heilmann and Elliott~H. Lieb.
\newblock Theory of monomer-dimer systems.
\newblock {\em Commun. Math. Phys.}, 25(3):190--232, 1972.

\bibitem[HLMM26]{harbuzova2026color}
Alina Harbuzova, Saba Lepsveridze, Mahbod Majid, and Ankur Moitra.
\newblock Color coding for the {Sherrington--Kirkpatrick} model.
\newblock {\em arXiv preprint arXiv:2610.10443}, 2026.

\bibitem[HLY26]{hou2026simulated}
Zhe Hou, Jingcheng Liu, and Yixiao Yu.
\newblock Simulated annealing and weak {Poincar\'e} inequalities with inverse-polynomial accuracy for the {Sherrington--Kirkpatrick} model.
\newblock {\em arXiv preprint arXiv:2610.09882}, 2026.

\bibitem[Jan95]{janson1995random}
Svante Janson.
\newblock Random regular graphs: Asymptotic distributions and contiguity.
\newblock {\em Combin. Probab. Comput.}, 4(4):369--405, 1995.

\bibitem[Jan14]{janson2014simple}
Svante Janson.
\newblock The probability that a random multigraph is simple, {II}.
\newblock {\em J. Appl. Probab.}, 51A:123--137, 2014.

\bibitem[Jer95]{jerrum1995simple}
Mark Jerrum.
\newblock A very simple algorithm for estimating the number of {$k$}-colorings of a low-degree graph.
\newblock {\em Random Struct. Algorithms}, 7(2):157--165, 1995.

\bibitem[JKP20]{jenssen2020algorithms}
Matthew Jenssen, Peter Keevash, and Will Perkins.
\newblock Algorithms for \#{BIS}-hard problems on expander graphs.
\newblock {\em SIAM J. Comput.}, 49(4):681--710, 2020.
\newblock (Conference version in \emph{SODA}'19).

\bibitem[JMP26]{jain2026general}
Vishesh Jain, Clayton Mizgerd, and Huy~Tuan Pham.
\newblock A general counting and sampling {L}ov{\'a}sz local lemma.
\newblock {\em arXiv preprint arXiv:2609.23671}, 2026.

\bibitem[JMPV23]{vishesh2023optimal}
Vishesh Jain, Marcus Michelen, Huy~Tuan Pham, and Thuy-Duong Vuong.
\newblock {Optimal mixing of the down-up walk on independent sets of a given size}.
\newblock In {\em FOCS}, pages 1665--1681. IEEE, 2023.

\bibitem[JPPS26]{jenssen2026replica}
Matthew Jenssen, Will Perkins, Aditya Potukuchi, and Michael Simkin.
\newblock The replica symmetric solution for hypergraph independent sets in the critical regime.
\newblock {\em arXiv preprint arXiv:2610.10433}, 2026.

\bibitem[JPSS22]{vishesh2022approximate}
Vishesh Jain, Will Perkins, Ashwin Sah, and Mehtaab Sawhney.
\newblock Approximate counting and sampling via local central limit theorems.
\newblock In {\em {STOC}}, pages 1473--1486. ACM, 2022.

\bibitem[JVV86]{jerrum1986random}
Mark~R. Jerrum, Leslie~G. Valiant, and Vijay~V. Vazirani.
\newblock Random generation of combinatorial structures from a uniform distribution.
\newblock {\em Theoret. Comput. Sci.}, 43(2-3):169--188, 1986.
\newblock (Earlier extended abstract by Jerrum in \emph{ICALP}'85).

\bibitem[Kah00]{kahn2000normal}
Jeff Kahn.
\newblock A normal law for matchings.
\newblock {\em Combinatorica}, 20(3):339--391, 2000.

\bibitem[Kau48]{kauzmann1948nature}
Walter Kauzmann.
\newblock The nature of the glassy state and the behavior of liquids at low temperatures.
\newblock {\em Chem. Rev.}, 43(2):219--256, 10 1948.

\bibitem[KKO24]{karlin2021improved}
Anna~R. Karlin, Nathan Klein, and Shayan {Oveis Gharan}.
\newblock A (slightly) improved approximation algorithm for metric {TSP}.
\newblock {\em Oper. Res.}, 72(6):2543--2594, 2024.
\newblock (Conference version in \emph{STOC}'21).

\bibitem[KL23]{keevash2023sharp}
Peter Keevash and Noam Lifshitz.
\newblock Sharp hypercontractivity for symmetric groups and its applications.
\newblock {\em arXiv preprint arXiv:2307.15030}, 2023.
\newblock Version 1.

\bibitem[KLM26]{keller2025sharp}
Nathan Keller, Noam Lifshitz, and Omri Marcus.
\newblock Sharp hypercontractivity for global functions.
\newblock {\em J. Eur. Math. Soc.}, 2026.
\newblock Published online first; theorem numbering refers to arXiv:2307.01356v2.

\bibitem[KLR22]{koehler2022sampling}
Frederic Koehler, Holden Lee, and Andrej Risteski.
\newblock Sampling approximately low-rank {Ising} models: {MCMC} meets variational methods.
\newblock In {\em COLT}, volume 178, pages 4945--4988. PMLR, 2022.

\bibitem[KMRT{\etalchar{+}}07]{florent2007gibbs}
Florent Krzakala, Andrea Montanari, Federico Ricci-Tersenghi, Guilhem Semerjian, and Lenka Zdeborov{\'a}.
\newblock Gibbs states and the set of solutions of random constraint satisfaction problems.
\newblock {\em Proc. Natl. Acad. Sci. U.S.A.}, 104(25):10318--10323, 2007.

\bibitem[KOT26]{kocurek2026sampling}
Nicholas Kocurek, Shayan {Oveis Gharan}, and Dante Tjowasi.
\newblock Sampling from the hardcore model on random regular bipartite graphs above the uniqueness threshold.
\newblock {\em arXiv preprint arXiv:2604.21847}, 2026.

\bibitem[KPPY25]{aiya2025fast}
Aiya Kuchukova, Marcus Pappik, Will Perkins, and Corrine Yap.
\newblock Fast and slow mixing of the {Kawasaki} dynamics on bounded-degree graphs.
\newblock {\em Random Struct. Algorithms}, 67(4):e70038, 2025.
\newblock (Conference version in \emph{RANDOM}'24).

\bibitem[KPW10]{kemkes2010chromatic}
Graeme Kemkes, Xavier {P{\'e}rez-Gim{\'e}nez}, and Nicholas Wormald.
\newblock On the chromatic number of random $d$-regular graphs.
\newblock {\em Adv. Math.}, 223(1):300--328, 2010.

\bibitem[KS66]{kesten1966additional}
Harry Kesten and Bernt~P. Stigum.
\newblock Additional limit theorems for indecomposable multidimensional {Galton-Watson} processes.
\newblock {\em Ann. Math. Statist.}, 37(6):1463--1481, 1966.

\bibitem[KS78]{kirpatrick1978infinite}
Scott Kirkpatrick and David Sherrington.
\newblock Infinite-ranged models of spin-glasses.
\newblock {\em Phys. Rev. B}, 17(11):4384--4403, 1978.

\bibitem[KS18]{kyng2018matrix}
Rasmus Kyng and Zhao Song.
\newblock A matrix {Chernoff} bound for strongly {Rayleigh} distributions and spectral sparsifiers from a few random spanning trees.
\newblock In {\em FOCS}, pages 373--384. IEEE, 2018.

\bibitem[KZ08]{florent2008Potts}
Florent Krzakala and Lenka Zdeborov\'a.
\newblock {Potts} glass on random graphs.
\newblock {\em Europhys. Lett.}, 81(5):57005, 2008.

\bibitem[LLLM22]{liao2019counting}
Chao Liao, Jiabao Lin, Pinyan Lu, and Zhenyu Mao.
\newblock An {FPTAS} for the hardcore model on random regular bipartite graphs.
\newblock {\em Theoret. Comput. Sci.}, 929:174--190, 2022.

\bibitem[LLY26]{li2026counting}
Zhidan Li, Siyu Liu, and Kuan Yang.
\newblock Counting and sampling anti-ferromagnetic {Potts} models on random regular bipartite graphs in the non-uniqueness regime.
\newblock {\em arXiv preprint arXiv:2606.21250}, 2026.

\bibitem[LO18]{lelarge2018replica}
Marc Lelarge and Mendes Oulamara.
\newblock Replica bounds by combinatorial interpolation for diluted spin systems.
\newblock {\em J. Stat. Phys.}, 173(3--4):917--940, 2018.

\bibitem[LPRS16]{lebowitz2016central}
Joel~L. Lebowitz, Boris Pittel, David Ruelle, and Eugene~R. Speer.
\newblock Central limit theorems, {Lee--Yang} zeros, and graph-counting polynomials.
\newblock {\em J. Combin. Theory Ser. A}, 141:147--183, 2016.

\bibitem[LS81]{lieb1981general}
Elliott~H. Lieb and Alan~D. Sokal.
\newblock A general {Lee-Yang} theorem for one-component and multicomponent ferromagnets.
\newblock {\em Commun. Math. Phys.}, 80(2):153--179, 1981.

\bibitem[LSS08]{laumann2008griffiths}
Christopher~R. Laumann, Antonello Scardicchio, and Shivaji~L. Sondhi.
\newblock {Griffiths-McCoy} singularities, {Lee-Yang} zeros, and the cavity method in a solvable diluted ferromagnet.
\newblock {\em Phys. Rev. E}, 77(6):061139, 2008.

\bibitem[LSS25]{liu2025correlation}
Jingcheng Liu, Alistair Sinclair, and Piyush Srivastava.
\newblock Correlation decay and partition function zeros: Algorithms and phase transitions.
\newblock {\em SIAM J. Comput.}, 54(4):FOCS19{\char45}200--FOCS19{\char45}252, 2025.
\newblock (Conference version in \emph{FOCS}'19).

\bibitem[LSS26]{lee2026potentialhessianascentiv}
Holden Lee, Juspreet~Singh Sandhu, and Jonathan Shi.
\newblock {Potential Hessian Ascent} {IV}: Sampling the {Sherrington--Kirkpatrick} model at $\beta<1$.
\newblock {\em arXiv preprint arXiv:2609.30590}, 2026.

\bibitem[LWY{\etalchar{+}}26]{liu2026counting}
Hongyang Liu, Chunyang Wang, Yitong Yin, Yiyao Zhang, and Can Zhou.
\newblock A counting and sampling {L}ov{\'a}sz local lemma.
\newblock {\em arXiv preprint arXiv:2608.08616}, 2026.

\bibitem[LWYY25]{liu2025phase}
Jingcheng Liu, Chunyang Wang, Yitong Yin, and Yixiao Yu.
\newblock Phase transitions via complex extensions of {Markov} chains.
\newblock In {\em STOC}, pages 903--914. ACM, 2025.

\bibitem[LY52]{lee1952lattice}
Tsung-Dao Lee and Chen-Ning Yang.
\newblock Statistical theory of equations of state and phase transitions. {II}. {L}attice gas and {Ising} model.
\newblock {\em Phys. Rev.}, 87(3):410--419, 1952.

\bibitem[LY26a]{li2026fpras}
Zhidan Li and Kuan Yang.
\newblock An {FPRAS} for antiferromagnetic {Ising} models on random regular bipartite graphs.
\newblock {\em arXiv preprint arXiv:2608.18612}, 2026.

\bibitem[LY26b]{li2026rapid}
Zhidan Li and Kuan Yang.
\newblock Rapid mixing for sampling colorings of random graphs up to tree uniqueness threshold.
\newblock {\em arXiv preprint arXiv:2610.25391}, 2026.

\bibitem[LY26c]{liu2026zero}
Jingcheng Liu and Yixiao Yu.
\newblock Zero-free regions and concentration inequalities for hypergraph colorings in the local lemma regime.
\newblock In {\em STOC}, pages 991--1002. ACM, 2026.

\bibitem[McD84]{mcdiarmid1984colouring}
Colin McDiarmid.
\newblock Colouring random graphs.
\newblock {\em Ann. Oper. Res.}, 1(3):183--200, 1984.

\bibitem[MM06]{mezard2006reconstruction}
Marc M{\'e}zard and Andrea Montanari.
\newblock Reconstruction on trees and spin glass transition.
\newblock {\em J. Stat. Phys.}, 124(6):1317--1350, 2006.

\bibitem[MM09]{mezard2009information}
Marc M{\'e}zard and Andrea Montanari.
\newblock {\em Information, Physics, and Computation}.
\newblock Oxford Graduate Texts. Oxford University Press, 2009.

\bibitem[MMN{\etalchar{+}}10]{matsuda2010distribution}
Yoshiki Matsuda, Markus M{\"u}ller, Hidetoshi Nishimori, Tomoyuki Obuchi, and Antonello Scardicchio.
\newblock Distribution of partition function zeros of the {$\pm J$} model on the {Bethe} lattice.
\newblock {\em J. Phys. A: Math. Theor.}, 43(28):285002, 2010.

\bibitem[Mol19]{molloy2019list}
Michael Molloy.
\newblock The list chromatic number of graphs with small clique number.
\newblock {\em J. Combin. Theory Ser. B}, 134:264--284, 2019.

\bibitem[Mon98]{monasson1998otpimization}
R{\'e}mi Monasson.
\newblock Optimization problems and replica symmetry breaking in finite connectivity spin glasses.
\newblock {\em J. Phys. A: Math. Gen.}, 31(2):513--529, 1998.

\bibitem[Mon01]{montanari2001glassy}
Andrea Montanari.
\newblock The glassy phase of {Gallager} codes.
\newblock {\em Eur. Phys. J. B}, 23(1):121--136, 2001.

\bibitem[MP01]{mezard2001bethe}
Marc M{\'e}zard and Giorgio Parisi.
\newblock The {Bethe} lattice spin glass revisited.
\newblock {\em Eur. Phys. J. B}, 20(2):217--233, 2001.

\bibitem[MP03a]{mezard2003cavity}
Marc M{\'e}zard and Giorgio Parisi.
\newblock {The cavity method at zero temperature}.
\newblock {\em {J. Stat. Phys.}}, 111(1--2):1--34, 2003.

\bibitem[MP03b]{mossel2003information}
Elchanan Mossel and Yuval Peres.
\newblock Information flow on trees.
\newblock {\em Ann. Appl. Probab.}, 13(3):817--844, 2003.

\bibitem[MPV86]{mezard1986SK}
Marc M{\'e}zard, Giorgio Parisi, and Miguel~A. Virasoro.
\newblock {SK} model: The replica solution without replicas.
\newblock {\em Europhys. Lett.}, 1(2):77--82, 1986.

\bibitem[MPV87]{mezard1987spin}
Marc M{\'e}zard, Giorgio Parisi, and Miguel~Angel Virasoro.
\newblock {\em Spin Glass Theory and Beyond: An Introduction to the Replica Method and Its Applications}, volume~9 of {\em World Scientific Lecture Notes in Physics}.
\newblock World Scientific Publishing Co., 1987.

\bibitem[MPZ02]{mezard2002analytic}
Marc M{\'e}zard, Giorgio Parisi, and Riccardo Zecchina.
\newblock Analytic and algorithmic solution of random satisfiability problems.
\newblock {\em Science}, 297(5582):812--815, 2002.

\bibitem[MR25]{mohanty2025eigenvalue}
Sidhanth Mohanty and Amit Rajaraman.
\newblock Eigenvalue bounds for random matrices via zerofreeness.
\newblock {\em arXiv preprint arXiv:2509.25471}, 2025.

\bibitem[MS26]{marcus2024central}
Marcus Michelen and Julian Sahasrabudhe.
\newblock Central limit theorems and the geometry of polynomials.
\newblock {\em J. Eur. Math. Soc.}, 28(5):2261--2305, 2026.

\bibitem[MSW04]{martinelli2004glauber}
Fabio Martinelli, Alistair Sinclair, and Dror Weitz.
\newblock Glauber dynamics on trees: Boundary conditions and mixing time.
\newblock {\em Commun. Math. Phys.}, 250:301--334, 09 2004.
\newblock (Conference version, ``The Ising Model on Trees: Boundary Conditions and Mixing Time,'' in \emph{FOCS}'03).

\bibitem[MSW07]{martinelli2007fast}
Fabio Martinelli, Alistair Sinclair, and Dror Weitz.
\newblock Fast mixing for independent sets, colorings, and other models on trees.
\newblock {\em Random Struct. Algorithms}, 31(2):134--172, 09 2007.
\newblock (Conference version in \emph{SODA}'04).

\bibitem[MWW09]{mossel2007hardness}
Elchanan Mossel, Dror Weitz, and Nicholas~C. Wormald.
\newblock On the hardness of sampling independent sets beyond the tree threshold.
\newblock {\em Probab. Theory Related Fields}, 143(3--4):401--439, 2009.
\newblock arXiv:math/0701471.

\bibitem[New74]{newman1974zeros}
Charles~M. Newman.
\newblock Zeros of the partition function for generalized {Ising} systems.
\newblock {\em Commun. Pure Appl. Math.}, 27(2):143--159, 1974.

\bibitem[{Ope}26a]{OpenAISKCriticalMixing26}
{OpenAI}.
\newblock Critical mixing in the {Sherrington--Kirkpatrick} model.
\newblock \href{https://github.com/openai/math/blob/1581d4cbdd69a0faf34b2089614fe2a814316a6a/preprints/Critical-mixing-in-the-Sherrington-Kirkpatrick-model-September-25-2026/paper.pdf}{OpenAI Math Release preprint}, 2026.

\bibitem[{Ope}26b]{OpenAISKCriticalSlowing26}
{OpenAI}.
\newblock Critical slowing down in the {Sherrington--Kirkpatrick} model.
\newblock \href{https://github.com/openai/math/blob/1581d4cbdd69a0faf34b2089614fe2a814316a6a/preprints/Critical-slowing-down-in-the-Sherrington-Kirkpatrick-model-September-24-2026/paper.pdf}{OpenAI Math Release preprint}, 2026.

\bibitem[{Ope}26c]{OpenAISKCutoff26}
{OpenAI}.
\newblock Cutoff throughout the high-temperature {Sherrington--Kirkpatrick} phase.
\newblock \href{https://github.com/openai/math/blob/1581d4cbdd69a0faf34b2089614fe2a814316a6a/preprints/Cutoff-throughout-the-high-temperature-Sherrington-Kirkpatrick-phase-September-24-2026/paper.pdf}{OpenAI Math Release preprint}, 2026.

\bibitem[{Ope}26d]{OpenAISKSpectralGap26}
{OpenAI}.
\newblock A spectral gap throughout the high-temperature {Sherrington--Kirkpatrick} phase.
\newblock \href{https://github.com/openai/math/blob/1581d4cbdd69a0faf34b2089614fe2a814316a6a/preprints/A-spectral-gap-throughout-the-high-temperature-Sherrington-Kirkpatrick-phase-September-24-2026/main.pdf}{OpenAI Math Release preprint}, 2026.

\bibitem[OSS11]{shayan2011randomized}
Shayan {Oveis Gharan}, Amin Saberi, and Mohit Singh.
\newblock A randomized rounding approach to the {Traveling Salesman Problem}.
\newblock In {\em FOCS}, pages 550--559. IEEE, 2011.

\bibitem[Pan13]{panchenko2013Sherrington}
Dmitry Panchenko.
\newblock {\em The Sherrington-Kirkpatrick Model}.
\newblock Springer, 2013.

\bibitem[Par79]{parisi1979infinite}
Giorgio Parisi.
\newblock Infinite number of order parameters for spin-glasses.
\newblock {\em Phys. Rev. Lett.}, 43(23):1754--1756, 1979.

\bibitem[Par83]{parisi1983order}
Giorgio Parisi.
\newblock Order parameter for spin-glasses.
\newblock {\em Phys. Rev. Lett.}, 50:1946--1948, 1983.

\bibitem[Pot52]{Potts1952generalized}
Renfrey~B. Potts.
\newblock Some generalized order-disorder transformations.
\newblock {\em Math. Proc. Camb. Philos. Soc.}, 48(1):106--109, 1952.

\bibitem[PP14]{pemantle2014concentration}
Robin Pemantle and Yuval Peres.
\newblock Concentration of {Lipschitz} functionals of determinantal and other strong {Rayleigh} measures.
\newblock {\em Combin. Probab. Comput.}, 23(1):140--160, 2014.

\bibitem[PR17]{patel2017deterministic}
Viresh Patel and Guus Regts.
\newblock Deterministic polynomial-time approximation algorithms for partition functions and graph polynomials.
\newblock {\em SIAM J. Comput.}, 46(6):1893--1919, 2017.
\newblock (Conference version in \emph{EuroComb}'17).

\bibitem[PS26]{prodromidis2026distribution}
Kyprianos-Iason Prodromidis and Allan Sly.
\newblock Distribution of the magnetization of the critical {Ising} model on sparse random graphs.
\newblock {\em arXiv preprint arXiv:2603.28702}, 2026.

\bibitem[Reg23]{regts2023absence}
Guus Regts.
\newblock Absence of zeros implies strong spatial mixing.
\newblock {\em Probab. Theory Related Fields}, 186(1--2):621--641, 2023.

\bibitem[RU08]{richardson2008modern}
Tom Richardson and R{\"u}diger Urbanke.
\newblock {\em Modern Coding Theory}.
\newblock Cambridge University Press, Cambridge, UK, 2008.

\bibitem[Rue71]{ruelle1971extension}
David Ruelle.
\newblock Extension of the {Lee-Yang} circle theorem.
\newblock {\em Phys. Rev. Lett.}, 26(6):303--304, 1971.

\bibitem[RV17]{rahman2017local}
Mustazee Rahman and B{\'a}lint Vir{\'a}g.
\newblock Local algorithms for independent sets are half-optimal.
\newblock {\em Ann. Probab.}, 45(3):1543--1577, 2017.

\bibitem[RW92]{robinson1992almost}
Robert~W. Robinson and Nicholas~C. Wormald.
\newblock Almost all cubic graphs are {Hamiltonian}.
\newblock {\em Random Struct. Algorithms}, 3(2):117--125, 1992.

\bibitem[RW94]{robinson1994almost}
Robert~W. Robinson and Nicholas~C. Wormald.
\newblock Almost all regular graphs are {Hamiltonian}.
\newblock {\em Random Struct. Algorithms}, 5(2):363--374, 1994.

\bibitem[She85]{shearer85}
James~B. Shearer.
\newblock On a problem of {S}pencer.
\newblock {\em Combinatorica}, 5(3):241--245, 1985.

\bibitem[SK75]{kirkpatrick1975solvable}
David Sherrington and Scott Kirkpatrick.
\newblock Solvable model of a spin-glass.
\newblock {\em Phys. Rev. Lett.}, 35(26):1792--1796, 1975.

\bibitem[SS05]{scott2005repulsive}
Alexander~D. Scott and Alan~D. Sokal.
\newblock The repulsive lattice gas, the independent-set polynomial, and the {L}ov{\'a}sz local lemma.
\newblock {\em J. Stat. Phys.}, 118(5--6):1151--1261, 2005.

\bibitem[SS14]{sly2014computational}
Allan Sly and Nike Sun.
\newblock Counting in two-spin models on $d$-regular graphs.
\newblock {\em Ann. Probab.}, 42(6):2383--2416, 2014.
\newblock (Conference version in \emph{FOCS}'12).

\bibitem[SS21]{shao2020contraction}
Shuai Shao and Yuxin Sun.
\newblock Contraction: A unified perspective of correlation decay and zero-freeness of 2-spin systems.
\newblock {\em J. Stat. Phys.}, 185(2):12, 2021.
\newblock (Conference version in \emph{ICALP}'20).

\bibitem[SS26]{shao2026coupling}
Shuai Shao and Ke~Shi.
\newblock Coupling independence implies zero-freeness.
\newblock {\em arXiv preprint arXiv:2610.12456}, 2026.

\bibitem[SS{\v{S}}Y17]{sinclair2017spatial}
Alistair Sinclair, Piyush Srivastava, Daniel {\v{S}}tefankovi\v{c}, and Yitong Yin.
\newblock Spatial mixing and the connective constant: Optimal bounds.
\newblock {\em Probab. Theory Related Fields}, 168(1--2):153--197, 2017.
\newblock (Conference version in \emph{SODA}'15).

\bibitem[SST14]{sinclair2014approximation}
Alistair {Sinclair}, Piyush {Srivastava}, and Marc {Thurley}.
\newblock Approximation algorithms for two-state anti-ferromagnetic spin systems on bounded degree graphs.
\newblock {\em J. Stat. Phys.}, 155(4):666--686, 2014.
\newblock (Conference version in \emph{SODA}'12).

\bibitem[SZ17]{sly2017glauber}
Allan Sly and Yumeng Zhang.
\newblock The {Glauber} dynamics of colorings on trees is rapidly mixing throughout the nonreconstruction regime.
\newblock {\em Ann. Appl. Probab.}, 27(5):2646--2674, 2017.

\bibitem[SZ26]{schmidhuber2026polynomial}
Alexander Schmidhuber and Alexander Zlokapa.
\newblock Polynomial-time classical algorithms for mean-field models up to the glass transition.
\newblock {\em arXiv preprint arXiv:2610.06807}, 2026.

\bibitem[Tal10]{talagrand2010mean}
Michel Talagrand.
\newblock {\em Mean Field Models for Spin Glasses: Volume {I}: Basic Examples}.
\newblock Springer, 2010.

\bibitem[VB85]{viana1985phase}
Laura Viana and Alan~J. Bray.
\newblock Phase diagrams for dilute spin glasses.
\newblock {\em J. Phys. C: Solid State Phys.}, 18(15):3037--3051, 1985.

\bibitem[Vig00]{vigoda1999improved}
Eric Vigoda.
\newblock Improved bounds for sampling colorings.
\newblock {\em J. Math. Phys.}, 41(3):1555--1569, 2000.
\newblock (Conference version in \emph{FOCS}'99).

\bibitem[Wag09]{wagner2009weighted}
David~G. Wagner.
\newblock Weighted enumeration of spanning subgraphs with degree constraints.
\newblock {\em J. Combin. Theory Ser. B}, 99(2):347--357, 2009.

\bibitem[Wan26]{wang2026optimal}
Sihan Wang.
\newblock Optimal mixing of {Glauber} dynamics for the {Sherrington--Kirkpatrick} model at $\beta<1/2$.
\newblock {\em arXiv preprint arXiv:2608.22159}, 2026.

\bibitem[Wei06]{weitz06counting}
Dror Weitz.
\newblock Counting independent sets up to the tree threshold.
\newblock In {\em {STOC}}, pages 140--149. ACM, 2006.

\bibitem[Wor99]{wormald1999models}
Nicholas~C. Wormald.
\newblock Models of random regular graphs.
\newblock In J.~D. Lamb and D.~A. Preece, editors, {\em Surveys in Combinatorics, 1999}, volume 267 of {\em London Mathematical Society Lecture Note Series}, pages 239--298. Cambridge University Press, 1999.

\bibitem[YL52]{lee1952statistical}
Chen-Ning Yang and Tsung-Dao Lee.
\newblock Statistical theory of equations of state and phase transitions. {I}. {T}heory of condensation.
\newblock {\em Phys. Rev.}, 87(3):404--409, 1952.

\bibitem[YZ16]{yin2016sampling}
Yitong Yin and Chihao Zhang.
\newblock Sampling in {Potts} model on sparse random graphs.
\newblock In {\em RANDOM}, pages 47:1--47:22. LZI, 2016.

\bibitem[ZK07]{zdeborova2007phase}
Lenka Zdeborová and Florent Krzakala.
\newblock Phase transitions in the coloring of random graphs.
\newblock {\em Phys. Rev. E}, 76(3):031131, 2007.

\bibitem[ZK16]{Zdeborova2016statistical}
Lenka Zdeborov{\'a} and Florent Krzakala.
\newblock Statistical physics of inference: thresholds and algorithms.
\newblock {\em Adv. Phys.}, 65(5):453--552, 2016.

\bibitem[ZZZ09]{zhang2009stability}
Pan Zhang, Ying Zeng, and Haijun Zhou.
\newblock Stability analysis on the finite-temperature replica-symmetric and first-step replica-symmetry-broken cavity solutions of the random vertex cover problem.
\newblock {\em Phys. Rev. E}, 80:021122, 2009.

\end{thebibliography}
\end{document}